\newcolumntype{?}{!{\vrule width 2pt}}
\newcommand{\myyear}{\@dtm@year\xspace}
\newcommand{\mydate}{\DTMenglishmonthname{\@dtm@month} \@dtm@year\xspace}
\newcommand{\mytitle}{ON THE USE OF QUASIORDERS IN FORMAL LANGUAGE THEORY}
\newcommand{\myauthor}{Pedro Valero Mejía\xspace}
\newcommand{\myadvisor}{Dr.~Pierre Ganty\xspace}
\title{\mytitle}
\author{\myauthor}
\date{}
\newcommand{\RemoveAlgoNumber}{\renewcommand{\fnum@algocf}{\AlCapSty{\AlCapFnt\algorithmcfname}}}
\newcommand*{\AlgRegularW}{\ensuremath{\hyperref[alg:RegIncW]{\textsc{FAIncW}}}\xspace}
\newcommand*{\AlgRegularWr}{\ensuremath{\hyperref[alg:RegIncWr]{\textsc{FAIncWr}}}\xspace}
\newcommand*{\AlgRegularA}{\ensuremath{\hyperref[alg:RegIncA]{\textsc{FAIncS}}}\xspace}
\newcommand*{\AlgGrammarW}{\ensuremath{\hyperref[alg:CFGIncW]{\textsc{CFGIncW}}}\xspace}
\newcommand*{\AlgGrammarA}{\ensuremath{\hyperref[alg:CFGIncA]{\textsc{CFGIncS}}}\xspace}
\newcommand*{\AlgRegularGfp}{\ensuremath{\hyperref[alg:RegIncGfp]{\textsc{FAIncGfp}}}\xspace}
\newcommand*{\AlgCountLines}{\ensuremath{\hyperref[alg:CountLines]{\textsc{CountLines}}}\xspace}
\newcommand*{\AlgSLPIncS}{\ensuremath{\hyperref[alg:SLPIncS]{\textsc{SLPIncS}}}\xspace}
\newtheorem{theorem}{\textsc{Theorem}}[section]
\newtheorem{remark}[theorem]{Remark}
\newtheorem*{conjecture*}{Conjecture}
\newtheorem{example}[theorem]{Example}
\newtheorem{definitionNI}[theorem]{Definition}
\newtheorem*{definitionNI*}{Definition}
\newtheorem*{example*}{Example}
\newtheorem{exampleC}{Example}[chapter]
\newtheorem{corollary}[theorem]{\textsc{Corollary}}
\newtheorem{lemma}[theorem]{\textsc{Lemma}}
\newtheorem{lemmaC}{\textsc{Lemma}}[chapter]
\renewenvironment{proof}{\begin{mdframed}[
backgroundcolor=verylightgray,
skipabove=0,
skipbelow=0,
usetwoside=true,
outermargin=0pt,
innermargin=0pt,
innertopmargin=3pt,
innerleftmargin=2pt,
innerrightmargin=3pt,
align=center,
leftline=false,
rightline=false,
topline=false,
bottomline=true,
linewidth=2pt,
linecolor=lipicsdarkgray]{\noindent\bfseries\color{lipicsdarkgray}{Proof.}}}{\end{mdframed}\medskip}
\newenvironment{myAlign}[2]{%
\@ifnotmtarg{#1}{\setlength{\abovedisplayskip}{#1}}%
\@ifnotmtarg{#2}{\setlength{\belowdisplayskip}{#2}}%
\csname align*\endcsname}{\csname endalign*\endcsname}
\newenvironment{myAlignEP}{\setlength{\belowdisplayskip}{0pt}\csname align*\endcsname}{\csname endalign*\endcsname}    
\newtheoremstyle{mydef}
{10pt}%
{10pt}%
{\itshape}%
{0pt}%
{}%
{.}%
{ }%
{\textbf{\thmname{#1}\thmnumber{ #2}} (\thmnote{#3})\index{#3}}%
\theoremstyle{mydef}
\newtheorem{definition}[theorem]{Definition}
\newtheorem*{definition*}{Definition}
\newcommand{\demph}[1]{\emph{#1}\index{#1}}
\newcommand{\mindex}[1]{#1\index{\(#1\)}}
\newcommand{\vect}[1]{\vv{\bm{#1}}}
\newcommand{\vectarg}[2]{\vv{\bm{#1}}^{{\!{\scriptstyle {#2}}}}}
\newcommand{\minor}[1]{\lfloor {#1}\rfloor}
\newcommand{\tuple}[1]{\langle {#1}\rangle}
\newcommand{\lang}[1]{{\mathcal{L}(#1)}} %
\DeclareMathOperator{\Pre}{{Pre}}
\DeclareMathOperator{\ctx}{{ctx}}
\DeclareMathOperator{\Post}{{Post}}
\DeclareMathOperator{\CPre}{{CPre}}
\DeclareMathOperator{\cpre}{{cpre}}
\DeclareMathOperator{\cpost}{{cpost}}
\DeclareMathOperator{\Suf}{\cS}
\DeclareMathOperator{\Pref}{\cP}
\DeclareMathOperator{\absincl}{{Incl^{\sharp}}}
\DeclareMathOperator{\abseq}{{AbsEq}}
\DeclareMathOperator{\base}{{\textit{b}}}
\DeclareMathOperator{\minim}{{min}}
\DeclareMathOperator{\AC}{{AC}}
\DeclareMathOperator{\Kleene}{{\normalfont \textsc{Kleene}}}
\DeclareMathOperator{\KleeneQO}{{\normalfont \widehat{\textsc{Kleene}}}}
\DeclareMathOperator{\wsqsubseteq}{{\widetilde{\sqsubseteq}}}
\DeclareMathOperator{\wsqcup}{{\widetilde{\sqcup}}}
\DeclareMathOperator{\wsqcap}{{\widetilde{\sqcap}}}
\DeclareMathOperator{\wbigsqcup}{{\widetilde{\bigsqcup}}}
\newcommand{\nullable}[3]{{\ensuremath{\psi^{#2}_{#3}(#1)}}}
\newcommand{\fp}{\mathcal{FP}}
\newcommand{\bN}{\mathbb{N}}
\newcommand{\wt}{\widetilde}
\newcommand{\cA}{\mathcal{A}}
\newcommand{\cB}{\mathcal{B}}
\newcommand{\cC}{\mathcal{C}}
\newcommand{\cD}{\mathcal{D}}
\newcommand{\cGr}{\mathcal{G}}
\newcommand{\cH}{\mathsf{H}}
\newcommand{\cK}{\mathcal{K}}
\newcommand{\cM}{\mathcal{M}}
\newcommand{\cN}{\mathcal{N}}
\newcommand{\cO}{\mathcal{O}}
\newcommand{\cT}{\mathcal{T}}
\newcommand{\cV}{\mathcal{V}}
\newcommand{\cP}{\mathcal{P}}
\newcommand{\cS}{\mathcal{S}}
\newcommand{\cR}{\mathsf{R}}
\newcommand{\ud}{\stackrel{\rm\scriptscriptstyle{\fontfamily{ptm}\selectfont def}}{=}}
\newcommand{\Lra}{\Leftrightarrow}
\newcommand{\Ra}{\Rightarrow}
\newcommand{\La}{\Leftarrow}
\newcommand{\ra}{\rightarrow}
\DeclareMathOperator{\uco}{uco}
\DeclareMathOperator{\pre}{pre}
\DeclareMathOperator{\post}{post}
\DeclareMathOperator{\lfp}{lfp}
\DeclareMathOperator{\gfp}{gfp}
\DeclareMathOperator{\Eqn}{{Eqn}}
\DeclareMathOperator{\Eqnr}{{Eqn}^{r}}
\DeclareMathOperator{\Fn}{{Fn}}
\newcommand{\eox}{\hfill{\ensuremath{\Diamond}}}
\newcommand{\eod}{\hfill\rule{0.5em}{0.5em}}
\DeclareFontFamily{U}{mathx}{\hyphenchar\font45}
\DeclareFontShape{U}{mathx}{m}{n}{<-> mathx10}{}
\newcommand{\udr}{\stackrel{\rm\scriptscriptstyle{\fontfamily{ptm}\selectfont def}}{\Longleftrightarrow}}
\newcommand{\udrshort}{\stackrel{\rm\scriptscriptstyle{\fontfamily{ptm}\selectfont def}}{\Leftrightarrow}}
\newcommand{\udiff}{\stackrel{\rm\scriptscriptstyle{\fontfamily{ptm}\selectfont def}}{\Leftrightarrow}}
\newcommand{\goes}[1]{\stackrel{#1}{\leadsto}}
\newcommand{\ggoes}[1]{\stackrel{#1}{\rightarrow}}
\newcommand{\verbatimfont}[1]{\def\verbatim@font{#1}}%
\newcommand{\len}[1]{{\vert{#1}\vert}}
\newcommand{\pr}{P}
\newcommand{\produces}{\Rightarrow}
\def\tool#1{{{\asciifamily #1}}}
\newcommand{\winner}[1]{\textcolor{blue}{\textbf{#1}}}
\newcommand{\second}[1]{\textcolor{black}{\textbf{#1}}}
\newcommand{\loser}[1]{\textcolor{red}{\textbf{#1}}}
\newcommand{\True}{\text{\emph{true}}}
\newcommand{\False}{\text{\emph{false}}}
\newcommand*{\NL}{\return} %
\newcommand*{\varleft}[1]{{\text{{\asciifamily L}}_{#1}}}
\newcommand*{\varright}[1]{{\text{{\asciifamily R}}_{#1}}}
\newcommand*{\varnl}[1]{{\text{{\asciifamily N}}_{#1}}}
\newcommand*{\varcount}[1]{{\text{{\asciifamily M}}_{#1}}}
\newcommand*{\counting}[1]{{\mathcal{C}_{#1}}}
\newcommand*{\edges}[1]{{\text{{\asciifamily E}}_{#1}}}
\newcommand*{\algvarleft}[1]{{\text{{\asciifamily L}}_{#1}}}
\newcommand*{\algvarright}[1]{{\text{{\asciifamily R}}_{#1}}}
\newcommand*{\algvarnl}[1]{{\text{{\asciifamily N}}_{#1}}}
\newcommand*{\algvarcount}[1]{{\text{{\asciifamily M}}_{#1}}}
\tikzstyle{ipe stylesheet} = [
\definecolor{red}{rgb}{1,0,0}
\definecolor{green}{rgb}{0,1,0}
\definecolor{blue}{rgb}{0,0,1}
\definecolor{yellow}{rgb}{1,1,0}
\definecolor{orange}{rgb}{1,0.647,0}
\definecolor{gold}{rgb}{1,0.843,0}
\definecolor{purple}{rgb}{0.627,0.125,0.941}
\definecolor{gray}{rgb}{0.745,0.745,0.745}
\definecolor{brown}{rgb}{0.647,0.165,0.165}
\definecolor{navy}{rgb}{0,0,0.502}
\definecolor{pink}{rgb}{1,0.753,0.796}
\definecolor{seagreen}{rgb}{0.18,0.545,0.341}
\definecolor{turquoise}{rgb}{0.251,0.878,0.816}
\definecolor{violet}{rgb}{0.933,0.51,0.933}
\definecolor{darkblue}{rgb}{0,0,0.545}
\definecolor{darkcyan}{rgb}{0,0.545,0.545}
\definecolor{darkgray}{rgb}{0.663,0.663,0.663}
\definecolor{verydarkgray}{gray}{0.4}
\definecolor{lipicsdarkgray}{rgb}{0.31,0.31,0.33}
\definecolor{darkgreen}{rgb}{0,0.392,0}
\definecolor{darkmagenta}{rgb}{0.545,0,0.545}
\definecolor{darkorange}{rgb}{1,0.549,0}
\definecolor{darkred}{rgb}{0.545,0,0}
\definecolor{lightblue}{rgb}{0.678,0.847,0.902}
\definecolor{lightcyan}{rgb}{0.878,1,1}
\definecolor{lightgray}{rgb}{0.827,0.827,0.827}
\definecolor{verylightgray}{rgb}{0.95,0.95,0.95}
\definecolor{lightgreen}{rgb}{0.565,0.933,0.565}
\definecolor{lightyellow}{rgb}{1,1,0.878}
\definecolor{black}{rgb}{0,0,0}
\definecolor{white}{rgb}{1,1,1}
\newcommand*\pct{\scalebox{.8}{\%}}
\newenvironment{nscenter}
 {\parskip=0pt\par\nopagebreak\centering}
 {\par\noindent\ignorespacesafterend}
\newcommand{\rr}{\sim^{r}}
\newcommand{\qo}{\leqslant}
\newcommand{\qon}{<}
\newcommand{\ql}{\leqslant^{\ell}}
\newcommand{\qr}{\leqslant^{r}}
\newcommand{\qln}{<^{\ell}}
\newcommand{\qrn}{<^{r}}
\newcommand{\rrN}{\rr_{\cN}}
\newcommand{\rrL}{\rr_{L}}
\newcommand{\qlN}{\ql_{\cN}}
\newcommand{\qrN}{\qr_{\cN}}
\newcommand{\qlL}{\ql_{L}}
\newcommand{\qrL}{\qr_{L}}
\newcommand{\qA}{\qr_{L_{\Suf}}}
\newcommand{\qAn}{\qrn_{L_{\Suf}}}
\newcommand{\cF}[1]{\mathsf{Can}^{#1}}
\newcommand{\cG}[1]{\mathsf{Res}^{#1}}
\newcommand{\cL}{\mathsf{L}}
\DeclareMathOperator{\row}{r}
\DeclareMathOperator{\Rows}{Rows}
\newenvironment{myEnum}{
\smallskip
  \begin{compactenum}
}{
  \end{compactenum}
 \medskip
}
\newenvironment{myEnumA}{
  \begin{enumerate}[label=(\alph*), topsep=0pt, parsep=1pt, itemsep=1pt, font=\normalfont\bfseries\color{lipicsdarkgray}]
}{
  \end{enumerate}
}
\newenvironment{myEnumI}{
  \begin{enumerate}[label=(\roman*), topsep=0pt, parsep=1pt, itemsep=1pt, font=\normalfont\bfseries\color{lipicsdarkgray}]
}{
  \end{enumerate}
}
\newenvironment{myEnumAL}{
  \begin{enumerate*}[label=(\alph*), topsep=0pt, parsep=0pt, itemsep=0pt, partopsep=0pt, font=\normalfont\bfseries\color{lipicsdarkgray}]
}{
  \end{enumerate*}
}
\newenvironment{myEnumIL}{
  \begin{enumerate*}[label=(\roman*), topsep=0pt, parsep=0pt, itemsep=0pt, partopsep=0pt, font=\normalfont\bfseries\color{lipicsdarkgray}]
}{
  \end{enumerate*}
}
\newenvironment{myItem}{
  \begin{itemize}[label=\textendash, topsep=0pt, parsep=1pt, itemsep=1pt, font=\normalfont\bfseries\color{lipicsdarkgray}]
}{
  \end{itemize}
}
\normalfont\fontsize{14}{10}\selectfont\bfseries}{\thesection}{1em}{}
\normalfont\fontsize{13}{10}\selectfont\bfseries}{\thesubsection}{1em}{}
\titleformat{\paragraph}[hang]{\normalfont\large\bfseries\color{lipicsdarkgray}}{\theparagraph}{1em}{}
\titlespacing*{\paragraph}{0pt}{3.25ex plus 1ex minus .2ex}{0.5em}
\newcommand{\specialcell}[1]{\ifmeasuring@#1\else\omit$\displaystyle#1$\ignorespaces\fi}
\newcommand*{\citen}[2][]{{%
\ifthenelse{\equal{#1}{}}{[{\citeyear{#2}}]}{[{\citeyear[#1]{#2}}]}%
}}
\renewcommand{\cite}[2][]{{%
\ifthenelse{\equal{#1}{}}{\citep{#2}}{\citep[#1]{#2}}%
}}
\newcommand{\eval}[1]{\ensuremath{[\![#1]\!]}}
\newcommand{\cBp}{\cB^{+}}
\begin{document}
\verbatimfont{\asciifamily}

\frontmatter
\maketitleen

\pagenumbering{roman}
{
  \pagebreak
  \thispagestyle{empty}
  \setcounter{page}{1}
  \hspace{0pt}\vfill
  \begin{center}
  \end{center}
}

{
\let\cleardoublepage\clearpage

  \begin{titlepage}
    
    \par\centering\fontencoding{OT1}\fontfamily{ppl} 
        \large{\textsc{Departamentamento de Lenguajes y Sistemas Inform\'aticos e Ingenieria de Software}
      \par}\vskip 1em
    {\par\centering{\fontencoding{OT1}\fontfamily{ppl} \large{\textsc{Escuela T\'ecnica Superior de Ingenieros Inform\'aticos}}}\par}
    {\par\centering{\fontencoding{OT1}\fontfamily{ppl} {
          \vskip 1em}}\par}
    
    \vspace{2.08cm} 

    {\par\centering{\fontencoding{OT1}\fontfamily{ppl} 
        \begin{center}
          \LARGE{\textbf{\mytitle}}
        \end{center}
      }}
    {\par\centering{\fontencoding{OT1}\fontfamily{ppl} \small{\textsc{Submitted in partial fulfillment of the requirements for the degree of:}}}\par}
    {\par\centering{\fontencoding{OT1}\fontfamily{ppl} \small{\textsc{\textit{Doctor of Philosophy in Software, Systems and Computing}}}}\par}

    \vspace{6em}
    {\fontencoding{OT1}\fontfamily{ppl}
    \begin{tabular}{ll}
    Author: \quad &\textbf{\myauthor} \\[5pt] 
    & \textit{Double Degree in Computer Science and Mathematics} \\[0.7cm]
    Advisor: \quad & \textbf{\myadvisor}
    \\[5pt] 
    & \textit{Ph.D. in Computer Science}
    \end{tabular}}
    \vspace{1.5cm}

    {\par\centering{\fontencoding{OT1}\fontfamily{ppl} \Large{\mydate}}\par}
    \vspace{5cm}

    \begin{flushleft}
     {\par 
       \fontencoding{OT1}\fontfamily{ppl} \large{\emph{Thesis Committee:}}\par}
    {
      \vspace{5pt}

  \hspace{10pt} Prof. Javier Esparza,
  \emph{Technische Universität München}, Germany\\\vspace{2pt}
  \hspace{10pt} Prof. Manuel Hermenegildo, \emph{Instituto IMDEA Software}, Spain\\\vspace{2pt}
  \hspace{10pt} Prof. Ricardo Peña,
  \emph{Universidad Complutense de Madrid}, Spain\\\vspace{2pt}
  \hspace{10pt} Prof. Samir Genaim, \emph{Universidad Complutense de Madrid}, Spain\\\vspace{2pt}
  \hspace{10pt} Prof. Parosh Aziz Abdulla, \emph{Uppsala Universitet}, Sweden\\
%

    }

    \end{flushleft}

  \end{titlepage}

}

\newpage
\chapter*{Abstract of the Dissertation}

In this thesis we use \emph{quasiorders} on words to offer a new perspective on two well-studied problems from \emph{Formal Language Theory}: deciding language inclusion and manipulating the finite automata representations of regular languages.

First, we present a generic quasiorder-based framework that, when instantiated with different quasiorders, yields different algorithms (some of them new) for deciding \emph{language inclusion}.
We then instantiate this framework to devise an efficient algorithm for \emph{searching with regular expressions on grammar-compressed text}.
Finally, we define a framework of quasiorder-based automata constructions to offer a new perspective on \emph{residual automata}.

\paragraph*{The Language Inclusion Problem}
First, we study the \emph{language inclusion problem} \(L_1 \subseteq L_2\) where \(L_1\) is regular or context-free and \(L_2\) is regular.
Our approach relies on checking whether an over-approximation  of \(L_1\), obtained by successively over-approximating the Kleene iterates of its least fixpoint characterization, is included in \(L_2\). 
We show that a language inclusion problem is decidable whenever the over-approximating function satisfies a completeness condition (i.e.\ its loss of precision causes no false alarm) and prevents infinite ascending chains (i.e.\ it guarantees termination of least fixpoint computations).

Such over-approximation of \(L_1\) can be defined using \emph{quasiorder} relations on words where the over-approximation gives the language of all words ``greater than or equal to'' a given input word for that quasiorder. 
We put forward a range of quasiorders that allow us to systematically design decision procedures for different language inclusion problems such as regular languages into regular languages or into trace sets of one-counter nets and  context-free languages into regular languages.

Some of the obtained inclusion checking procedures correspond to well-known algorithms like the so-called \emph{antichains} algorithms. 
On the other hand, our quasiorder-based framework allows us to derive an equivalent greatest fixpoint language inclusion check which relies on quotients of languages and which, to the best of our knowledge, was \emph{not previously known}.  

\paragraph*{Searching on Compressed Text}
Secondly, we instantiate our quasiorder-based framework for the scenario in which \(L_1\) consists on a single word generated by a context-free grammar and \(L_2\) is the regular language generated by an automaton. 
The resulting algorithm can be used for deciding whether a grammar-compressed text contains a match for a regular expression.

We then extend this algorithm in order to count the number of lines in the uncompressed text that contain a match for the regular expression.
We show that this extension runs in time \emph{linear} in the size of the \emph{compressed} data, which might be exponentially smaller than the uncompressed text.

Furthermore, we propose efficient data structures that yield \emph{optimal} complexity bounds and an implementation --\tool{zearch}-- that outperforms the state of the art, offering up to $40\pct$ speedup with respect to \emph{highly optimized} implementations of the decompress and search approach.

\paragraph*{Residual Finite-State Automata}
Finally, we present a framework of finite-state automata constructions based on quasiorders over words to provide new insights on residual finite-state automata (RFA for short).

We present a new residualization operation and show that the residual equivalent of the double-reversal method holds, i.e.\ our residualization operation applied to a co-residual automaton generating the language \(L\) yields the canonical RFA for \(L\).
We then present a generalization of the double-reversal method for RFAs along the lines of the one for deterministic automata.

Moreover, we use our quasiorder-based framework to offer a new perspective on NL\(^*\), an on-line learning algorithm for RFAs.

We conclude that \emph{quasiorders} are fundamental to \emph{residual automata} in the same way \emph{congruences} are fundamental for \emph{deterministic automata}.

\newpage
\chapter*{Resumen de la Tesis Doctoral}

En esta tesis, usamos \emph{preórdenes} para dar un nuevo enfoque a dos problemas fundamentales en \emph{Teoría de Lenguajes Formales}: decidir la inclusión entre lenguajes y manipular la representación de lenguajes regulares como autómatas finitos.

En primer lugar, presentamos un esquema que, dado un preorden que satisface ciertos requisitos, nos permite derivar de manera sistemática algoritmos de decisión para la inclusión entre diferentes tipos de lenguajes.
Partiendo de este esquema desarrollamos un algoritmo de búsqueda con expresiones regulares en textos comprimidos mediante gramáticas.
Por último, presentamos una serie de autómatas, cuya definición depende de un preorden, que nos permite ofrecer un nuevo enfoque sobre la clase de autómatas residuales. 

\paragraph*{El Problema de la Inclusión de Lenguajes}
En primer lugar, estudiamos el problema de decidir \(L_1 \subseteq L_2\), donde \(L_1\) es un lenguaje independiente de contexto y \(L_2\) es un lenguaje regular.
Para resolver este problema, sobre-aproximamos los sucesivos pasos de la iteración de punto fijo que define el lenguaje \(L_1\).
Con ello, obtenemos una sobre-aproximación de \(L_1\) y comprobamos si está incluida en el lenguaje \(L_2\). 
Esta técnica funciona siempre y cuando la sobre-aproximación sea completa (es decir, la imprecisión de la aproximación no produzca falsas alarmas) y evite cadenas infinitas ascendentes (es decir, garantice que la iteración de punto fijo termina).

Para definir una sobre-aproximación que cumple estas condiciones, usamos un preorden.
De este modo, la aproximación del lenguaje \(L_1\) contiene todas las palabras ``mayores o iguales que'' alguna palabra de \(L_1\).
En concreto, definimos una serie de preórdenes que nos permiten derivar, de manera sistemática, algoritmos de decisión para diferentes problemas de inclusión de lenguajes como la inclusión entre lenguajes regulares o la inclusión de lenguajes independientes de contexto en lenguajes regulares.

Algunos de los algoritmos obtenidos mediante esta técnica coinciden con algoritmos bien conocidos como los llamados \emph{antichains algorithms}.
Por otro lado, nuestra técnica también nos permite derivar algoritmos de punto fijo que, hasta donde sabemos, \emph{no han sido descritos anteriormente}.  

\paragraph*{Búsqueda en textos comprimidos}
En segundo lugar, aplicamos nuestro algoritmo de decisión de inclusión entre lenguajes al problema \(L_1 \subseteq L_2\), donde \(L_1\) es un lenguaje descrito por una gramática que genera una única palabra y \(L_2\) es un lenguaje regular definido por un autómata o expresión regular.
De esta manera, obtenemos un algoritmo que nos permite decidir si un texto comprimido mediante una gramática contiene, o no, una coincidencia de una expresión regular dada.

Posteriormente, modificamos este algoritmo para contar las líneas del texto comprimido que contienen coincidencias de la expresión regular.
De este modo, obtenemos un algoritmo que opera en tiempo \emph{linear} respecto del tamaño del texto \emph{comprimido} el cual, por definición, puede ser exponencialmente más peque-ño que el texto original.

Además, describimos las estructuras de datos necesarias para que nuestro algoritmo opere en tiempo \emph{óptimo} y presentamos una implementación --\tool{zearch}-- que resulta hasta un $40\pct$ más rápida que las mejores implementaciones del método estándar de descompresión y búsqueda.

\paragraph*{Autómatas Residuales}
Finalmente presentamos una serie de autómatas parametrizados por preórdenes que nos permiten mejorar nuestra compresión de la clase de autómatas residuales (abreviados como RFA).

Estos autómatas parametrizados nos permiten definir una nueva operación de residualization y demostrar que el método de \emph{double-reversal} funciona para RFAs, es decir, residualizar un autómata cuyo reverso es residual da lugar al canonical RFA (un RFA de tamaño mínimo).
Tras esto, generalizamos este método de forma similar a su generalización para el caso de autómatas deterministas. 
Por último, damos un nuevo enfoque a NL\(^*\), un algoritmo de aprendizaje de RFAs.

Como conclusión, encontramos que los \emph{preórdenes} juegan el mismo papel para los \emph{autómatas residuales} que las \emph{congruencias} para los \emph{deterministas}.

{
\newpage
\thispagestyle{empty}
\vspace*{10em}
\begin{flushright}
  \textit{To my parents and my wife, for their endless love and support}
\end{flushright}
}

\chapter*{Acknowledgments}

Tras un proyecto tan largo e intenso como un doctorado, la lista de personas a las que quiero dar las gracias es muy extensa.
En general, quiero dar las gracias a todas aquellas personas que, de un modo u otro, han formado parte de mi vida durante estos últimos años.
En las siguientes lineas trataré de nombrarlos a todos, aunque seguramente me deje nombres en el tintero.

En primer lugar, quiero dar las gracias a Pierre quien comenzó siendo mi director de tesis y a quien a día de hoy considero un amigo. 
Pierre, gracias por darme la oportunidad de realizar mis primeras prácticas en IMDEA y por ayudarme a realizar mi primera estancia fuera de casa.
Aquella experiencia me hizo descubrir que quería hacer un doctorado y fue tu interés y confianza en mi lo que me llevó a hacerlo en IMDEA.
Gracias por guiarme con paciencia y apoyarme en mis decisiones durante estos 4 años, especialmente en mi interés por realizar estancias para conocer gente y lugares.
Gracias a eso tuve el placer de trabajar con Rupak en Kaiserslatuern, con Javier en Munich y con Yann en San Francisco. 
Gracias también a ellos tres, y a los compañeros que tuve en esos viajes, en especial a Isa, Harry, Filip, Dmitry, Rayna, Marko, Bimba, Nick y Felix, por hacer de mis visitas grandes experiencias llenas de buenos recuerdos.

Quiero dar las gracias, también, a todo el personal del Instituto IMDEA Software.
Ha sido un placer llevar a cabo mi trabajo rodeado de grandes profesionales en todos los ámbitos.
Gracias Paloma, Álvaro, Felipe, Miguel, Isabel, Kyveli, Joaquín, Germán, Platón y Srdjan, entre otros, por ser los artífices de tantos buenos recuerdos. 
Especialmente, quiero agradecer a Ignacio su humor, su ayuda prestada durante estos últimos años y su paciencia al leer múltiples versiones de la introducción de este trabajo.
Gracias por ser ese amigo del despacho de al lado al que ir a molestar siempre que quería comentar alguna idea, por tonta que fuera.

Elena, creo que ha sido una experiencia estupenda haber compartido mis años de universidad y de doctorado con una amiga como tú.
He disfrutado muchísimo de todas las ocasiones en que hemos podido trabajar juntos y creo que hacíamos un equipo estupendo.

A mis profesores de bachillerato Soraya y Mario.
Con vosotros entendí que estudiar era mucho más que aprobar un examen y me hicisteis disfrutar aprendiendo.
Despertasteis en mi la pasión por aprender y por afrontar nuevos retos y fue esa pasión la que me llevó a estudiar el Doble Grado de Matemáticas con Informática y a realizar posteriormente un doctorado.

A mi familia, que recientemente creció en número, por el mero hecho de estar ahí. 
Gracias en especial a mi prima, la Dra. Gámez, por ser la pionera, la primera investigadora y Dra. en la familia, que me ahorró el esfuerzo de explicar a todos cómo funciona el mundo de la investigación en que nos movemos.

A mis amigos de siempre y a los más recientes.
Gracias por tantos buenos momentos, por visitarme cuando estaba fuera y por los viajes y planes que aún quedan por hacer.
Creo firmemente que haber sido feliz en mi vida personal ha sido una pieza clave de mis éxitos profesionales.
Quiero dar las gracias por ello a Alberto, Carlos, David, Rubén, Antonio, Victor, Eduardo, Álvaro, Guillermo, Cristina, Lara e Irene, entre muchos otros.  

A mis padres, gracias por hacerme ser quien soy y por apoyarme siempre aún sin terminar de entender la aventura en la que me embarcaba al iniciar el doctorado. 
Gracias a vosotros he tenido una vida llena de facilidades, que me ha permitido centrarme siempre en mis estudios y mi trabajo.
Cada uno de mis logros es resultado de vuestro esfuerzo.

Por último, quiero dar las gracias mi mujer.
Jimena, gracias por apoyarme durante este tiempo, por acompañarme en mis viajes siempre que fue posible y por soportar la distancia cuando no. Gracias, en definitiva, por estar ahí.

\hypersetup{citecolor=black, urlcolor=black,linkcolor=black}
{
\newpage
\thispagestyle{empty}
\tableofcontents
} 

{
\newpage
\thispagestyle{empty}
\listoffigures 
}

\newpage{
 \thispagestyle{empty}
 \listoftables
}
\hypersetup{citecolor=blue!80, urlcolor=blue!80,linkcolor=blue!80}

\markboth{List of Publications}{List of Publications}
\chapter*{List of Publications}
This thesis comprises the following four papers for which I am the main author.
The first two have been published in top peer-reviewed academic conferences while the last two have recently been submitted and have not been published yet:
\bigskip

\begin{myEnum}
\item \myauthor and \myadvisor\\
      \textbf{Regular Expression Search on Compressed Text}\\
      Published in \emph{Data Compression Conference}, March 2019.
\medskip
\item \myauthor, \myadvisor and Prof. Francesco Ranzato\\
      \textbf{Language Inclusion Algorithms as Complete Abstract Interpretations}\\
      Published in \emph{Static Analysis Symposium}, October 2019.
\medskip
\item \myauthor, \myadvisor and Elena Guti\'errez\\
      \textbf{A Quasiorder-based Perspective on Residual Automata}\\
      Published in \emph{Mathematical Foundations of Computer Science}, August 2020.
\medskip
\item \myauthor, \myadvisor and Prof. Francesco Ranzato\\
      \textbf{Complete Abstractions for Checking Language Inclusion}\\
      Submitted to \emph{Transactions on Computational Logic}, \mydate.
\end{myEnum}
\medskip

Using the techniques presented in first of the above mentioned papers, I developed a tool for searching with regular expressions in compressed text.
The implementation is available on-line at \url{https://github.com/pevalme/zearch}.
\medskip

I have also contributed to the following papers which are not part of this thesis.
\bigskip
\begin{myEnum}
\item Elena Guti\'errez, \myauthor and \myadvisor\\
      \textbf{A Congruence-based Perspective on Automata Minimization Algorithms}\\
      Published in \emph{International Symposium on Mathematical Foundations of Computer Science}, August 2019 
\medskip
\item \myauthor, \myadvisor and Boris Köpf\\
  \textbf{A Language-theoretic View on Network Protocols}\\
  Published in \emph{Automated Technology for Verification and Analysis}, October 2017
\end{myEnum}

\mainmatter
\pagenumbering{arabic}
\setcounter{page}{1}

\clearpage{}%
%
%
%
%
%
%
%
%
%
%
%
%
%
%
\chapter{Introduction}
\label{chap:introduction}

\emph{Formal languages}, i.e. languages for which we have a \emph{finite formal description}, are used to model possibly infinite sets so that their finite descriptions can be used to reason about these sets.
As a consequence, \emph{Formal Language Theory}, i.e. the study of formal languages and the techniques for manipulating their finite representations, finds applications in several domains in computer science.

For example, the possibly infinite set of assignments that satisfy a given formula in some logic can be seen as a formal language whose finite description is the formula itself.
In some logics, the set of values that satisfy any formula is regular and, therefore, it can be described by means of a finite-state automaton (automaton for short).
When this is the case, it is possible to reason in that logic by manipulating automata as shown in Example~\ref{example:DecisionProcedure}.

\begin{figure}[!ht]
\centering
\begin{minipage}[l]{0.6\textwidth}
\begin{tikzpicture}[->,>=stealth',shorten >=1pt,auto,node distance=5mm and 1cm,thick,initial text=]
\tikzstyle{every state}=[scale=0.75,fill=customblue!60,draw=blue!60,text=black]
      
\node[initial,state, accepting] (0) {\(0\)};
\node[state] (1) [right=of 0] {\(1\)};
\node[state] (2) [right=of 1] {\(2\)};
\node (a) at (-1,1.1) {\(A_4:\)};

\path (0) edge [loop above] node {\small\(0\)} (0)
(0) edge [bend left] node {\small\(1\)} (2)
(2) edge [loop above] node {\small\(0{,}1\)} (2)
(2) edge node {\small\(0\)} (1)
(1) edge node {\small\(0\)} (0);
\end{tikzpicture}
\end{minipage}\hfill
\begin{minipage}[r]{0.35\textwidth}
\begin{tikzpicture}[->,>=stealth',shorten >=1pt,auto,node distance=5mm and 1cm,thick,initial text=]
\tikzstyle{every state}=[scale=0.75,fill=customblue!60,draw=blue!60,text=black]
      
\node[initial,state, accepting] (0) {\(0\)};
\node[state] (1) [right=of 0] {\(1\)};
\node (a) at (-1,1.1) {\(A_2:\)};

\path (0) edge [loop above] node {\small\(0\)} (0)
(0) edge [bend left] node {\small\(1\)} (1)
(1) edge [loop above] node {\small \(0{,}1\)} (1)
(1) edge [bend left] node {\small\(0\)} (0);

\end{tikzpicture}
\end{minipage}

\begin{tikzpicture}[->,>=stealth',shorten >=1pt,auto,node distance=5mm and 1cm,thick,initial text=]
\tikzstyle{every state}=[scale=0.75,fill=customblue!60,draw=blue!60,text=black]
      
\node[initial,state, accepting] (0) {\(0\,0\)};
\node[state] (1) [right=of 0] {\(2\,1\)};
\node[state] (2) [right=of 1] {\(2\,0\)};
\node[state] (3) [above=of 2] {\(1\,0\)};
\node[state] (4) [below=of 2] {\(1\,1\)};
\node (a) at (-1,1.1) {\(A_{42}:\)};

\path 
(0) edge [loop above] node {\small\(0\)} (0)
(0) edge node {\small\(1\)} (1)
(1) edge [loop above] node [yshift=-3pt] {\small \(0{,}1\)} (1)
(1) edge [bend left=20] node [yshift=-3pt] {\small\(0\)} (2)
(1) edge node [yshift=-2pt, xshift=2pt] {\small\(0\)} (3)
(1) edge node [below] [xshift=-2pt, yshift=2pt] {\small\(0\)} (4)
(2) edge [bend left=20] node [above, yshift=-3pt] {\small\(1\)} (1)
(2) edge [loop right] node {\small \(0\)} (2)
(2) edge node {\small \(0\)} (3)
(3) edge [bend right] node [above] {\small\(0\)} (0)
(4) edge [bend left] node [xshift=-1pt, yshift=3pt] {\small\(0\)} (0);
\end{tikzpicture}
\caption{Automata accepting the set of binary encodings of numbers divisible by 4 (top left), divisible by two (top right) and the product of these two automata (bottom).}\label{fig:FAdivisible}
\end{figure}
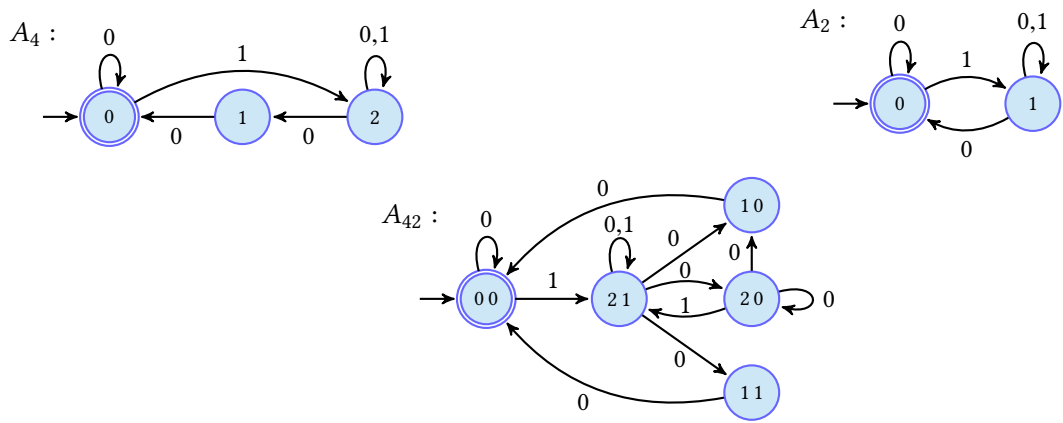

\begin{exampleC}\label{example:DecisionProcedure}
Consider the formulas \(f_{2} : `` x \text{ mod } 2 = 0"\) and \(f_{4} : `` x \text{ mod } 4 = 0 "\).
Next we show how to reason about the formula \(f_{42} : `` f_4 \land f_2"\) by means of automata.

A binary sequence ``\(x\)'' encodes a number divisible by 4 if{}f the last two digits are 0's.
Similarly, ``\(x\)'' encodes a number divisible by 2 if{}f the last digit is 0.
Therefore, the automata \(A_4\) and \(A_2\) from Figure~\ref{fig:FAdivisible} accept the binary encodings of numbers ``\(x\)'' that satisfy the formulas \(f_4: ``x \text{ mod } 4 = 0"\) and \(f_2: `` x \text{ mod } 2 = 0"\), respectively.

Since the numbers satisfying the formula \(f_{42}\) are, by definition, the ones satisfying both \(f_4\) and \(f_2\), the automaton for \(f_{42}\) is \(A_{42} = A_2 \times A_4\), shown in Figure~\ref{fig:FAdivisible}, which recognizes exactly the encodings accepted by both \(A_2\) and \(A_4\).
Thus, there exists a number satisfying \(f_{42}\) if{}f the language accepted by \(A_{42}\) is not empty.

On the other hand, since the automaton \(A_4\) accepts a language that is included in the one of \(A_2\), we conclude that the encodings satisfying \(f_4\) also satisfy \(f_2\).
Thus, the automaton for \(A_4\) is equivalent to, i.e. it accepts the same language as, the automaton \(A_{42}\) and both are automata for \(f_{42}\).\eox
\end{exampleC}

This idea led to the development of \emph{automata-based decision procedures} for logical theories such as Presburger arithmetic~\cite{wolper1995automata} and the Weak Second-order theory of One or Two Successors (WS1S/WS2S)~\cite{Klarlund:mona95,Klarlund:mona99} among others~\cite{allouche2003automatic,schaeffer2013deciding}.

A similar idea is used in \emph{regular model checking}~\cite{bouajjani2000regular,abdulla2012regular,clarke18}, where formal languages are used to describe the possibly infinite sets of states that a system might reach during its execution.

A different use of formal languages in computer science is the \emph{lossless compression of textual data} \cite{Charikar2005Smallest,Hucke2016Smallest}.
In this scenario the data is seen as a language consisting of a single word and its finite formal description as a grammar is seen as a \emph{succinct representation} of the language it generates.
As the following example evidences, the grammar might be exponentially smaller than the data.

\begin{exampleC}\label{example:GrammarCompression}
Let \(k\) be an integer greater than 1 and let \(\cGr\) be the grammar with the set of variables \(\{X_i \mid 0 \leq i \leq k\}\), alphabet \(\{a\}\), axiom \(X_k\) and set of rules \(\{X_i \to X_{i{-}1}X_{i{-}1} \mid 1 \leq i \leq k\} \cup \{X_0 \to a\}\).

Clearly, \(\cGr\) has size \emph{linear} in \(k\) and produces the word \(a^{2^k}\).
Therefore, the grammar is \emph{exponentially smaller} than the word it generates. \eox
\end{exampleC}

The idea of using grammars to compress textual data has led to the development of several grammar-based compression algorithms~\cite{ziv1978compression,nevill1997compression,larsson2000off}.
These algorithms offer some advantages with respect to other classes of compression techniques, such as the ones based on the well-known \text{LZ77} algorithm~\cite{ziv1977compression}, in terms of the structure of the compressed representation of the data (which is a grammar).
In particular, they allow us to analyze the uncompressed text, i.e. the language, by looking at the compressed data, i.e. the grammar~\cite{lohrey2012algorithmics}.
\medskip

\section{The Contributions of This Dissertation}
In this dissertation we focus on three problems from \emph{Formal Language Theory}: deciding language inclusion, searching on grammar-compressed text and building residual automata.
As we describe next, these are well-studied and important problems in computer science for which there are still challenges to overcome.

\paragraph*{The Language Inclusion Problem}

In the first two scenarios described before, i.e. \emph{automata-based decision procedures} and \emph{regular model checking}, the \emph{language inclusion problem}, i.e. deciding whether the language inclusion \(L_1 \subseteq L_2\) holds, is a fundamental operation.

For instance, in Example~\ref{example:DecisionProcedure}, deciding the language inclusion between the languages generated by automata \(A_4\) and \(A_2\) allows us to infer that all values satisfying \(f_4\) also satisfy \(f_2\).
Similarly, in the context of regular model checking, we can define a possibly infinite set of ``good'' states that the system should never leave and solve a language inclusion problem to decide whether the system is confined to the set of good states.

As a consequence, the \emph{language inclusion problem} is a fundamental and classical problem in computer science~\cite[Chapter 11]{HU79}. 
In particular, language inclusion problems of the form \(L_1 \subseteq L_2\), where both \(L_1\) and \(L_2\) are regular languages, appear naturally in different scenarios as the ones previously described.

The standard approach for solving such problems consists on reducing them to \emph{emptiness} problems using the fact that \(L_1 \subseteq L_2\Lra L_1 \cap L_2^c = \varnothing\).
However, algorithms implementing this approach suffer from a worst case exponential blowup when computing \(L_2^c\) since it requires determinizing the automaton for \(L_2\).
The state of the art approach to overcome this limitation is to keep the computation of the automaton for \(L_2^c\) implicit, thus preventing the exponential blowup for many instances of the problem.

For instance, \citet{DBLP:conf/cav/WulfDHR06} developed an algorithm for deciding language inclusion between regular languages that uses \emph{antichains}, i.e. sets of incomparable elements, to reduce the blowup resulting from building the complement of a given automaton.
Their work was later improved by \citet{Abdulla2010} and \citet{DBLP:conf/popl/BonchiP13} who used \emph{simulations} between the states of the automata to further reduce the blowup associated to the complementation step. %
Then, \citet{Holk2015} adapted the use of antichains to decide the inclusion of context-free languages into regular ones.

However, even though these algorithms have a common foundation, i.e. they all reduce the language inclusion problem to an emptiness one through complementation and use antichains to keep the complementation implicit, the relation between them is not well understood.
This is evidenced by the fact that the generalization by \citet{Holk2015} of the antichains algorithm of \citet{DBLP:conf/cav/WulfDHR06} was obtained by rephrasing the inclusion problem as a data flow analysis problem over a relational domain.

\bigskip\noindent\textbf{Our Contribution.}
We use \emph{quasiorders}, i.e. reflexive and transitive relations, to define a framework from which we systematically derive algorithms for deciding language inclusion such as the ones of \citet{DBLP:conf/cav/WulfDHR06} and \citet{Holk2015}.
Indeed, we show that these two algorithms are conceptually equivalent and correspond to two instantiations of our framework using different quasiorders.
Moreover, by using a quasiorder based on simulations between the states of an automata, we derive an improved antichains algorithm that partially matches the one of \citet{Abdulla2010}.

Furthermore, our framework goes beyond inclusion into regular languages and allows us to derive an algorithm for deciding the language inclusion \(L_1 \subseteq L_2\) when \(L_1\) is regular and \(L_2\) is the set of traces of a \emph{one counter net}, i.e. an automaton equipped with a counter that cannot test for 0.
Finally, we also derive a \emph{novel} algorithm for deciding inclusion between regular languages.

\paragraph*{Searching on Compressed Text}
The growing amount of information handled by modern systems demands for efficient techniques both for compression, to reduce the storage cost, and for regular expression searching, to speed up querying.

Therefore, the problem of searching on compressed text is of practical interest as evidenced by the fact that state of the art tools for searching with regular expressions, such as \tool{grep}\footnote{\url{https://www.gnu.org/software/grep/manual/grep.html}.} and \tool{ripgrep}\footnote{\url{https://github.com/BurntSushi/ripgrep}.}, provide a method for searching on compressed files by decompressing them on-the-fly.

Due to the high performance of state of the art compressors such as \tool{zstd}\footnote{\url{https://github.com/facebook/zstd}} and \tool{lz4}\footnote{\url{https://github.com/lz4/lz4}}, the performance of searching on the decompressed data as it is recovered by the decompressor is comparable with that of searching on the uncompressed data.
Therefore, the parallel decompress-and-search approach is the state of the art for searching on compressed text.

However, when using a grammar-based compression technique it is possible to manipulate the compressed data, i.e. the grammar, to analyze the uncompress data, i.e. the language generated by the grammar.
Intuitively, this means that the information about repetitions in the text present in its compressed version can be used to enhance the search.
Therefore, \emph{searching on grammar-compressed text} could be even faster than searching on the uncompressed text. 

This idea is exploited by multiple algorithms that perform certain operations directly on grammar-compressed text, i.e. without having to recover the uncompressed data, such as finding given words \cite{navarro2005lzgrep}, finding words that match a given regular expression~\cite{navarro2003regular,bille2009improved} or finding approximate matches~\cite{navarro2001approximateSurvey}.

Nevertheless, the implementations of \citet{navarro2003regular} and \citet{navarro2005lzgrep} (to the best of our knowledge, the only existing tools for searching on compressed text) are not faster than the state of the art decompress and search approach. Partly, this due to the fact that these algorithms only apply to data compressed with one specific grammar-based compressor, namely \tool{LZ78}~\cite{ziv1978compression}, which, as shown by \citet{Hucke2016Smallest}, cannot achieve exponential compression ratios\footnote{The compression ratio for a file of size \(T\) compressed into size \(t\) is \(T/t\).}.

\bigskip\noindent\textbf{Our Contribution.}
We improve this situation by rephrasing the problem of searching on compressed text as a language inclusion problem between a context-free language (the text) and a regular one (the expression).
Then, we instantiate our quasiorder-based framework for solving language inclusion and adapt it to the specifics of this scenario, where the context-free grammar generates a single word: the uncompressed text.
The resulting algorithm is not restricted to any class of grammar-based compressors and it reports the number of lines in the text containing a match for a given expression in time \emph{linear} with respect to the size of the compressed data.

We implement this algorithm in a tool --\tool{zearch}\footnote{\url{https://github.com/pevalme/zearch}}-- for searching with regular expressions in\linebreak grammar-compressed text. 
The experiments evidence that compression can be used to enhance the search and, therefore, the performance of \tool{zearch} improves with the compression ratio of the data.
Indeed, our tool is as fast as searching on the uncompressed data when the data is well-compressible, i.e. it results in compression ratio above 13, which occurs, for instance, when considering automatically generated \emph{log files}.

\paragraph*{Building Residual Automata}
Clearly, %
the problem of finding a concise representation of a regular language is also a fundamental problem in computer science.

There exists two main classes of automata representations for regular languages, both having the same expressive power: non-deterministic (NFA for short) and deterministic (DFA for short) automata.
While DFAs are simpler to manipulate than NFAs\footnote{For instance, in order to build the complement of a DFA it suffices to switch final and non-final states while complementing an NFA requires determinizing it.} they are, in the worst case, exponentially larger. 

\begin{exampleC}
The minimal DFA for the set of words of length \(2n{+}2\) with two 1's separated by \(n\) symbols has size exponential in \(n\) since any DFA for that language must have one state for each of the \(2^n\) possible prefixes of length \(n\).
Figure~\ref{fig:NFAvsDFA} shows the minimal DFA and an exponentially smaller NFA for \(n=2\).\eox
\end{exampleC}

\begin{figure}[!ht]
\begin{minipage}[l]{0.58\textwidth}
\resizebox{!}{150pt}{
%
%
\begin{tikzpicture}[->,>=stealth',shorten >=1pt,auto,node distance=5mm and 1cm,thick,initial text=]
\tikzstyle{every state}=[fill=customblue!60,draw=blue!60,text=black, inner sep=0pt, minimum size=12pt]

    \node[state,initial] (A0) {};

    \node[state] (B1) [xshift=1.2*1cm, yshift=1*0.5cm] {};
    \node[state] (B2) [xshift=1.2*1cm, yshift=-1*0.5cm] {};
    
    \node[state] (C1) [xshift=1.2*2cm, yshift=1*1.5cm] {};
    \node[state] (C2) [xshift=1.2*2cm, yshift=1*0.5cm] {};
    \node[state] (C3) [xshift=1.2*2cm, yshift=-1*0.5cm] {};
    \node[state] (C4) [xshift=1.2*2cm, yshift=-1*1.5cm] {};

    \node[state] (D1) [xshift=1.2*3cm, yshift=1*3cm] {};
    \node[state] (D2) [xshift=1.2*3cm, yshift=1*2cm] {};
    \node[state] (D3) [xshift=1.2*3cm, yshift=1*1cm] {};
    \node[state] (D4) [xshift=1.2*3cm] {};
    \node[state] (D5) [xshift=1.2*3cm, yshift=-1*1cm] {};
    \node[state] (D6) [xshift=1.2*3cm, yshift=-1*2cm] {};
    \node[state] (D7) [xshift=1.2*3cm, yshift=-1*3cm] {};

    \node[state] (E1) [xshift=1.2*4cm, yshift=1*2.5cm] {};
    \node[state] (E2) [xshift=1.2*4cm, yshift=1*1.5cm] {};
    \node[state] (E3) [xshift=1.2*4cm, yshift=1*0.5cm] {};
    \node[state] (E4) [xshift=1.2*4cm, yshift=-1*0.5cm] {};
    \node[state] (E5) [xshift=1.2*4cm, yshift=-1*1.5cm] {};
    \node[state] (E6) [xshift=1.2*4cm, yshift=-1*2.5cm] {};

    \node[state] (F1) [xshift=1.2*5cm, yshift=1*0.5cm] {};
    \node[state] (F2) [xshift=1.2*5cm, yshift=-1*0.5cm] {};

    \node[state,accepting] (G1) [xshift=1.2*6cm] {};

\path (A0) edge [blue, above] node[yshift=-2pt] {$1$} (B1);
\path (A0) edge [red, above] node[yshift=-2pt] {$0$} (B2);

\path (B1) edge [blue, above] node[yshift=-2pt] {$1$} (C1);
\path (B1) edge [red, above] node[yshift=-2pt] {$0$} (C2);
\path (B2) edge [blue, above] node[yshift=-2pt] {$1$} (C3);
\path (B2) edge [red, above] node[yshift=-2pt] {$0$} (C4);

\path (C1) edge [red, above] node[yshift=-2pt] {$0$} (D1);
\path (C1) edge [blue, above] node[yshift=-2pt] {$1$} (D2);
\path (C2) edge [red, above] node[yshift=-2pt] {$0$} (D3);
\path (C2) edge [blue, below] node[near end, xshift=-3pt, yshift=2pt] {$1$} (D6);
\path (C3) edge [red, above] node[yshift=-2pt] {$0$} (D4);
\path (C3) edge [blue, above] node[yshift=-2pt] {$1$} (D5);
\path (C4) edge [blue, above] node[yshift=-2pt] {$1$} (D7);

\path (D1) edge [blue, above] node[yshift=-2pt] {$1$} (E1);
\path (D1) edge [red, below] node[near start, xshift=-2pt, yshift=2pt] {$0$} (E2);
\path (D2) edge [blue, above] node[yshift=-2pt] {$1$} (E1);
\path (D2) edge [red, above] node[yshift=-2pt] {$0$} (E3);
\path (D3) edge [blue, above] node[yshift=-2pt] {$1$} (E1);
\path (D4) edge [black, above] node[near start, xshift=-3pt,yshift=-3pt] {$0{,}1$} (E2);
\path (D5) edge [blue, above] node[yshift=-1pt] {$1$} (E2);
\path (D5) edge [red, above] node[yshift=-2pt] {$0$} (E3);
\path (D6) edge [red, above] node[yshift=-2pt] {$0$} (E4);
\path (D6) edge [blue, above] node[yshift=-2pt] {$1$} (E5);
\path (D7) edge [blue, above] node[yshift=-2pt] {$1$} (E5);
\path (D7) edge [red, above] node[yshift=-2pt] {$0$} (E6);

\path (E1) edge [black, above] node[xshift=3pt,yshift=-2pt] {$0{,}1$} (F1);
\path (E2) edge [blue, above] node[yshift=-2pt] {$1$} (F1);
\path (E3) edge [blue, above] node[yshift=-2pt] {$1$} (F1);
\path (E3) edge [red, above] node[yshift=-2pt] {$0$} (F2);
\path (E4) edge [blue, above] node[yshift=-2pt] {$1$} (F2);
\path (E5) edge [black, above] node[xshift=-2pt, yshift=-2pt] {$0{,}1$} (F2);
\path (E6) edge [red, above] node[yshift=-2pt] {$0$} (F2);

\path (F1) edge [black, above] node[yshift=-2pt] {$0{,}1$} (G1);
\path (F2) edge [blue, above] node[yshift=-2pt] {$1$} (G1);

\end{tikzpicture} 
 }
\end{minipage}%
\hfill
\begin{minipage}[r]{0.35\textwidth}
\resizebox{!}{160pt}{
%
%
\begin{tikzpicture}[->,>=stealth',shorten >=1pt,auto,node distance=5mm and 1cm,thick,initial text=]
\tikzstyle{every state}=[fill=customblue!60,draw=blue!60,text=black, inner sep=0pt, minimum size=12pt]
    \node[state,initial,initial where=above] (T0) {};
    \node[state] (T1) [below =of T0] {};
    \node[state] (T2) [below =of T1] {};
    \node[state] (T3) [below =of T2] {};
    \node[state] (T4) [below =of T3] {};

  \node[state] (S0) [left =of T1] {};
    \node[state] (S1) [below =of S0] {};
    \node[state] (S2) [below =of S1] {};
    \node[state] (S3) [below =of S2] {};
    \node[state] (S4) [below =of S3] {};

    \node[state] (R0) [left =of S1] {};
    \node[state] (R1) [below =of R0] {};
    \node[state] (R2) [below =of R1] {};
    \node[state] (R3) [below =of R2] {};
    \node[state,accepting] (R4) [below =of R3] {};

    \path (T0) edge [blue, left] node {$1$} (T1);
    \path (T0) edge [above] node {$0{,}1$} (S0);
    \path (S0) edge [above] node {$0{,}1$} (R0);
    \path (S0) edge [blue, left] node {$1$} (S1);
    \path (R0) edge [blue, left] node {$1$} (R1);

    \path (T1) edge [left] node {$0{,}1$} (T2);
    \path (T2) edge [left] node {$0{,}1$} (T3);
    \path (T3) edge [blue, thick, left] node {$1$} (T4);
    \path (T4) edge [above] node {$0{,}1$} (S4);
    \path (S4) edge [above] node {$0{,}1$} (R4);

    \path (S1) edge [left] node {$0{,}1$} (S2);
    \path (S2) edge [left] node {$0{,}1$} (S3);
    \path (S3) edge [blue, thick, left] node {$1$} (S4);

    \path (R1) edge [left] node {$0{,}1$} (R2);
    \path (R2) edge [left] node {$0{,}1$} (R3);
    \path (R3) edge [blue, thick, left] node {$1$} (R4);
\end{tikzpicture} 
 }
\end{minipage}
\caption{Minimal DFA (left) and NFA (right) accepting the words in the alphabet \(\{0,1\}\) of length 6 that contains two 1's separated by two symbols.
For clarity, we use colors red, blue and black for transitions with labels ``0'', ``1'' and ``0,1'', respectively.
}\label{fig:NFAvsDFA}
\end{figure}
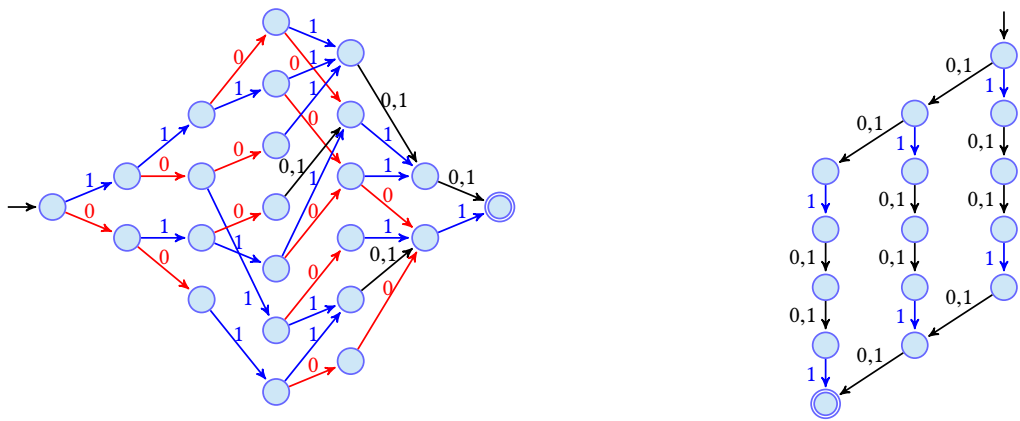

Therefore, algorithms relying on determinized automata, such as the standard algorithm for building the complement of an NFA, do not scale despite the existence of different techniques for reducing the size of DFAs~\cite{hopcroft1971,moore1956} and for building DFAs of minimal size~\cite{Sakarovitch,Adamek2012,Brzozowski2014}.

This has led to the introduction of \emph{residual automata}~\cite{denis2001residual,denis2002residual} (RFA for short) as a generalization of DFAs that breaks determinism in favor of conciseness of the representation.
Therefore, RFAs are easier to manipulate than NFAs (there exists a canonical minimal RFA for every regular language, which makes learning easier) and more concise than DFAs (both automata from Figure~\ref{fig:NFAvsDFA} are RFAs).
These properties make RFAs specially appealing in certain domains such us Grammatical Inference~\cite{denis2004learning,bollig2009angluin}.

There exists a clear relationship between RFAs and DFAs as evidenced by the similarities between the \emph{residualization} and \emph{determinization} operations and the fact that a straightforward modification of the double-reversal method for building minimal DFAs yields a method for building minimal RFAs.
However, the connection between these two formalisms is not fully understood as evidenced by the fact that the relation between the generalization of the double-reversal methods for DFAs~\cite{Brzozowski2014} and RFAs~\cite{tamm2015generalization} is not immediate. 

\bigskip\noindent\textbf{Our Contribution.}
We present a framework of quasiorder-based automata constructions that yield residual and co-residual automata.
We find that one of these constructions defines a residualization operation that produces smaller automata than the one of \citet{denis2002residual} and for which the double-reversal method holds: residualizing a co-residual automaton yields the canonical RFA.
Moreover, we derive a generalization of this double-reversal method for RFAs, along the lines of the one of \citet{Brzozowski2014} for DFAs that is more general than the one of \citet{tamm2015generalization}.

Incidentally, we also evidence the connection between the generalized double-reversal method for RFAs of \citet{tamm2015generalization} and the one of \citet{Brzozowski2014} for DFAs. 
Finally, we offer a new perspective of the NL\(^*\) algorithm of \citet{bollig2009angluin} for learning RFAs as an algorithm that iteratively refines a quasiorder and uses our automata constructions to build RFAs.

\section{Methodology}

The contributions of this thesis, described in the previous section, are the result of using \emph{monotone well-quasiorders}, i.e. quasiorders that satisfy certain properties with respect to concatenation of words and for which there is no infinite decreasing sequence of elements, as building blocks for tackling problems from \emph{Formal Language Theory}.

Monotone well-quasiorders have proven useful for reasoning about formal languages from a theoretical perspective (see the survey of \citet{d2008well}).
For instance, \citet{ehrenfeucht_regularity_1983} showed that a language is regular if{}f it is closed for a monotone well-quasiorder and \citet{deLuca1994} extended this result by showing that a language is regular if{}f it is closed for a left monotone and for a right monotone well-quasiorders.
On the other hand, \citet{kunc2005regular} used well-quasiorders to show that all maximal solutions of certain systems of inequalities on languages are regular.

Our work evidences that monotone well-quasiorders also have practical applications by placing them at the core of some well-known algorithms.

\paragraph*{Monotone Well-Quasiorders}

\emph{Quasiorders} are binary relations that are \emph{reflexive}, i.e. every word is related to itself, and \emph{transitive}, i.e. if a word ``u'' is related to ``v'' which is related to ``w'' then ``u'' is related to ``w''.

Intuitively, we use quasiorders to group words that behave ``similarly'' (in a certain way) with respect to a given regular language.
This naturally leads to the use of \emph{monotone quasiorders} so that ``similarity'' between words is preserved by concatenation, i.e. when concatenating two ``similar'' words with the same letter the resulting words remain ``similar''.

\begin{exampleC}\label{example:lengthQO}
Consider the \emph{length quasiorder}, which says that ``u'' is related to ``\(v\)'' if{}f \(\len{u} \leq \len{v}\) where \(\len{u}\) denotes the length of a word ``\(u\)''.

It is straightforward to check that this is a monotone quasiorder since 
\smallskip
\begin{myEnumI}
\item \(\len{u} \leq \len{u}\) for every word \(u\), hence it is \emph{reflexive};
\item if \(\len{u} \leq \len{v}\) and \(\len{v} \leq \len{w}\) then \(\len{u} \leq \len{w}\), hence it is \emph{transitive};
\item if \(\len{u} \leq \len{v}\) then \(\len{ua} \leq \len{va}\) for every letter \(a\), hence it is \emph{monotone}.\eox
\end{myEnumI}
\end{exampleC}

The most basic sets of words that can be formed by using a quasiorder are the so called \emph{principals}, i.e. sets of words that are related to a single one which we refer to as the \emph{generating word} of the principal.
For example, given the length quasiorder, the principal with generating word ``u'' is the set of all words ``w'' with \(\len{\text{u}} \leq \len{\text{w}}\).

Finally, when considering \emph{well-quasiorders} we find that the union of the principals of any (possibly infinite) set of words coincides with the union of the principals of a \emph{finite} subset of words. For instance, the quasiorder from Example~\ref{example:lengthQO} is a monotone well-quasiorder since the union of the principals of any infinite set of words coincides with the principal of the shortest word in the set. 

Next, we offer a high-level description on how we use \emph{monotone well-quasi-orders} and their induced principals in each of the contributions of this thesis.

\subsection{Quasiorders for Deciding Language Inclusion}

Consider the language inclusion problem \(L_1 \subseteq L_2\) where \(L_1\) is context-free and \(L_2\) is regular.
The principals of a given monotone well-quasiorder can be used to compute an \emph{over-approximation} of \(L_1\) that consists of a \emph{finite} number of elements.
If the quasiorder is such that a principal is included in \(L_2\) if{}f its generating word is in \(L_2\), then we can reduce the language inclusion problem \(L_1 \subseteq L_2\) to the simpler problem of deciding a finite number of membership queries for \(L_2\).
To do that it suffices to compute the over-approximation of \(L_1\) and check membership in \(L_2\) for the generating words of the finitely many principals that form the over-approximation.
This approach is illustrated in Figure~\ref{fig:LangIncl}.

\begin{figure}[!ht]
\begin{minipage}[l]{0.44\textwidth}
\includegraphics[width=\textwidth]{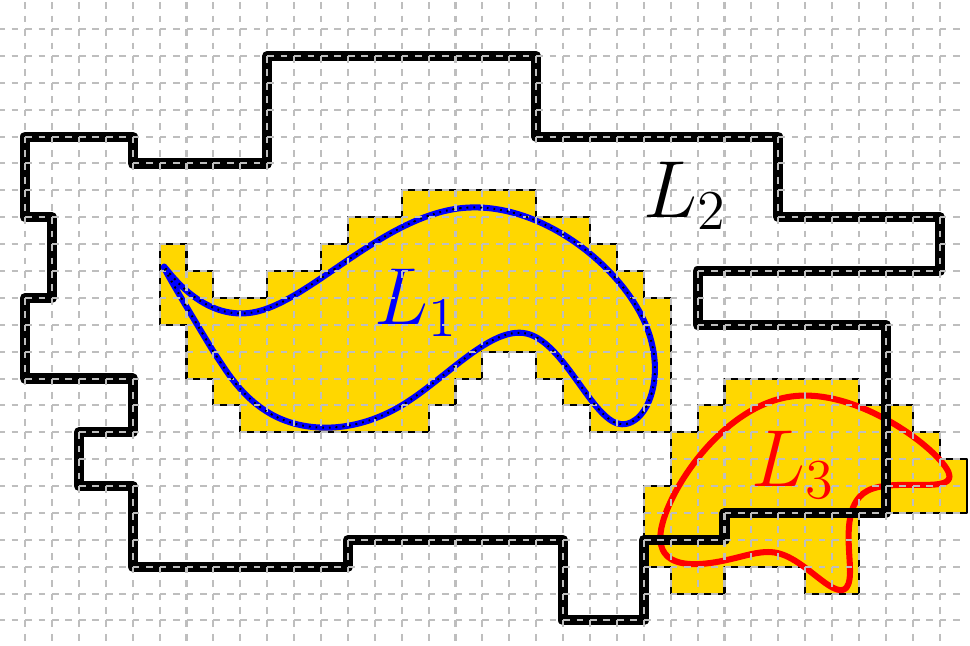}
\end{minipage}%
\hfill
\begin{minipage}[r]{0.53\textwidth}
Given a monotone well-quasiorder whose principals are the dashed squares shown on the image on the left, we compute over-approximations (colored areas) of the languages \(L_1\) and \(L_3\).
Since \(L_2\) is a union of principals, the over-approximation of a language is included in \(L_2\) if{}f the language is included in \(L_2\).
Therefore, we find that \(L_1 \subseteq L_2\) but \(L_3 \nsubseteq L_2\).
\end{minipage}
\caption{Illustration of our quasiorder-based approach for deciding the language inclusion problems \(L_1 \subseteq L_2\) and \(L_3 \subseteq L_2\).}
\label{fig:LangIncl}
\end{figure}

In order to compute the over-approximation of \(L_1\) we successively over-approximate the Kleene iterates of its least fixpoint characterization.
The following example shows the language equations for a context-free language and the first steps of the Kleene iteration, which converges to the least fixpoint of the equations.

\begin{exampleC}\label{example:KleeneIterate}
Consider the language equations \(\{X = aX \cup Ya  \cup bY ,\; Y = a\}\), whose Kleene iterates converge to their least fixpoint:
\[\begin{cases}
    X = \varnothing \\
    Y = \varnothing 
  \end{cases} \hspace{-12pt}\Ra \begin{cases}
    X = \varnothing \\
    Y = \{a\} 
  \end{cases} \hspace{-12pt}\Ra \begin{cases}
    X = \{aa, ba\} \\
    Y = \{a\} 
  \end{cases} \hspace{-12pt}\Ra \ldots \Ra \begin{cases}
    X = a^*(aa\,|\,ba) \\
    Y = \{a\} 
  \end{cases} \enspace \tag*{\eox}\]
\end{exampleC}

This approach for solving language inclusion problems is studied in Chapter~\ref{chap:LangInc}.
In that chapter we present a quasiorder-based framework which, by instantiating it with different monotone well-quasiorders, allows us to systematically derive well-known decision procedures for different language inclusion problems such as the antichains algorithms of \citet{DBLP:conf/cav/WulfDHR06} and \citet{Holk2015}.

Moreover, by switching from least fixpoint equations for computing the over-approximation of \(L_1\) to greatest fixpoint equations, we are able to obtain a \emph{novel} algorithm for deciding language inclusion between regular languages.

\subsection{Quasiorders for Searching on Compressed Text}
Searching with a regular expression in a grammar-compressed text\footnote{By ``searching'' we mean finding subsequences of the uncompressed text that match a regular expression, i.e. that are included in a given regular language.} amounts to deciding whether the language generated by a grammar, which consists of a single word, is included in a regular language.
Therefore, we can apply the quasiorder-based framework described in the previous section, i.e. we can compute an over-approximation of the language generated by the grammar and check inclusion of the over-approximation into the regular language.

However this approach would only indicate whether there is a subsequence in the text that matches the expression and it would not produce enough information to count the matches let alone recover them.

In order to report the exact lines\footnote{We use the standard definition of \emph{line} as a sequence of characters delimited by ``new line'' symbols.} that contain a match (either count them or recover the actual lines), we need to compute some extra information for each variable of the grammar, beyond the over-approximation of the generated language.
Indeed, we need to compute the following information regarding the language generated by each variable, which consists of a single word\footnote{Recall that, in the context of grammar-based compression, the grammar is a compressed representation of a text, hence it generates a single word: the text. As a consequence, each variable of the grammar generates a single word.}, namely \(w\):
\smallskip
\begin{myEnumI}
\item The number of lines that contain a match.%
\item Whether there is a ``new line'' symbol in \(w\).
\item Whether the prefix of \(w\) contains a match.
\item Whether the suffix of \(w\) contains a match. 
\end{myEnumI}
\medskip

This quasiorder-based approach is presented in Chapter~\ref{chap:zearch} where we show that the above mentioned extra information for each variable of the grammar is trivially computed for the terminals and then propagated through all the variables until the axiom.
Furthermore, Chapter~\ref{chap:zearch} includes a detailed description of the implementation and evaluation of the resulting algorithm which, as the experiments show, outperforms the state of the art.

\subsection{Quasiorders for Building Residual Automata}

It is well-known that the construction of the minimal DFA for a language is related to the use of \emph{congruences}, i.e. symmetric monotone quasiorders~\cite{Buchi89,Khoussainov2001}.

Recently, \citet{ganty2019congruence} generalized this idea and offered a congruence-based perspective on minimization algorithms for DFAs.
Intuitively, they build automata by using the principals induced by congruences as states and define the transitions according to inclusions between the principals and the sets obtained by concatenating them with letters.
When the congruence has finite index then it induces a finite number of principals and, therefore, the resulting automata have finitely many states.
Figure~\ref{fig:rho_automata} illustrates this automata construction. 

\begin{figure}[!ht]
\begin{minipage}[l]{0.5\textwidth}
\resizebox{0.8\textwidth}{!}{
%
%
\begin{tikzpicture}[ipe stylesheet]
  \draw[ipe pen fat]
    (394.0097, 540.0653)
     .. controls (404.627, 521.6085) and (407.2196, 488.8923) .. (395.0591, 465.3737)
     .. controls (382.8986, 441.8552) and (355.9849, 427.5341) .. (343.454, 428.8922)
     .. controls (330.9231, 430.2502) and (332.775, 447.2873) .. (337.096, 462.0404)
     .. controls (341.417, 476.7935) and (348.2071, 489.2627) .. (350.1207, 506.3615)
     .. controls (352.0343, 523.4603) and (349.0713, 545.1888) .. (356.9109, 553.9542)
     .. controls (364.7504, 562.7197) and (383.3924, 558.5221) .. cycle;
  \draw[ipe pen fat]
    (173.3333, 480)
     .. controls (165.3333, 442.6667) and (186.6667, 405.3333) .. (214.1168, 384.9429)
     .. controls (241.567, 364.5525) and (275.1339, 361.1049) .. (288.3989, 376.9738)
     .. controls (301.6638, 392.8427) and (294.6267, 428.028) .. (285.8433, 467.0851)
     .. controls (277.0598, 506.1422) and (266.5299, 549.0711) .. (242.5983, 551.8689)
     .. controls (218.6667, 554.6667) and (181.3333, 517.3333) .. cycle;
  \draw[ipe pen fat]
    (199.8822, 505.5973)
     .. controls (218.4311, 488.5279) and (244.8622, 481.0558) .. (265.979, 494.2951)
     .. controls (287.0959, 507.5343) and (302.8984, 541.4851) .. (313.6828, 574.5545)
     .. controls (324.4673, 607.6238) and (330.2336, 639.8119) .. (303.7835, 639.906)
     .. controls (277.3333, 640) and (218.6667, 608) .. (194.6667, 578.6667)
     .. controls (170.6667, 549.3333) and (181.3333, 522.6667) .. cycle;
  \draw[ipe pen fat]
    (189.3971, 580.1623)
     .. controls (158.2806, 537.4317) and (151.8899, 475.5161) .. (173.4553, 428.1731)
     .. controls (195.0206, 380.8302) and (244.5418, 348.0598) .. (292.292, 359.1264)
     .. controls (340.0421, 370.193) and (386.021, 425.0965) .. (403.6132, 473.9229)
     .. controls (421.2053, 522.7492) and (410.4106, 565.4984) .. (393.777, 597.1624)
     .. controls (377.1434, 628.8264) and (354.6709, 649.4051) .. (315.5134, 647.9217)
     .. controls (276.356, 646.4383) and (220.5136, 622.8928) .. cycle;
  \node[ipe node, font=\huge]
     at (256, 592) {\(\rho(a)\)};
  \node[ipe node, font=\huge]
     at (222.916, 416) {\(\rho(b)\)};
  \node[ipe node, font=\huge]
     at (217.489, 515.337) {\(\rho(c)\)};
  \node[ipe node, font=\huge]
     at (360.818, 498.11) {\(\rho(\varepsilon)\)};
  \node[ipe node, font=\huge]
     at (325.183, 583.593) {\(\rho(aa)\)};
  \draw[->, ipe pen fat]
    (309.812, 621.3618)
     .. controls (344.6269, 680.6212) and (401.2937, 645.436) .. (368.701, 600.621);
  \node[ipe node, font=\huge]
     at (343.407, 661.63) {\(a,b,c\)};
  \node[ipe node, font=\huge]
     at (272.666, 325.333) {\(a,b,c\)};
  \draw[->, ipe pen fat]
    (263.1452, 388.3982)
     .. controls (269.8119, 346.5463) and (287.9601, 342.6574) .. (303.1453, 344.787)
     .. controls (318.3305, 346.9166) and (330.5528, 355.0648) .. (326.4787, 391.7316);
  \draw[->, ipe pen fat]
    (375.7381, 462.8429)
     .. controls (441.2938, 473.954) and (447.5901, 526.9171) .. (397.2196, 549.1394);
  \node[ipe node, font=\huge]
     at (438.593, 500.518) {\(a,b,c\)};
  \draw[->, ipe pen fat]
    (354.9973, 494.6948)
     .. controls (326.4787, 474.3244) and (291.6638, 477.6577) .. (260.923, 513.2134);
  \node[ipe node, font=\huge]
     at (309.704, 487.925) {\(c\)};
  \draw[->, ipe pen fat]
    (362.0343, 520.9912)
     .. controls (324.9972, 519.8801) and (302.7749, 538.3986) .. (292.7749, 568.7691);
  \node[ipe node, font=\huge]
     at (308.222, 544.592) {\(a,c\)};
  \node[ipe node, font=\huge]
     at (294.149, 438.294) {\(b,c\)};
  \draw[->, ipe pen fat]
    (340.9232, 446.5465)
     .. controls (326.4787, 423.9539) and (308.3305, 419.8798) .. (274.6267, 438.7687);
  \draw[->, ipe pen fat]
    (203.145, 521.7319)
     .. controls (177.9598, 572.1024) and (159.6264, 570.0654) .. (149.1634, 559.1703)
     .. controls (138.7004, 548.2752) and (136.1078, 528.5221) .. (139.9967, 516.3307)
     .. controls (143.8856, 504.1393) and (154.256, 499.5096) .. (173.5153, 510.2504);
  \node[ipe node, font=\huge]
     at (109.703, 574.592) {\(a,b,c\)};
  \draw[->, ipe pen fat]
    (344.2565, 399.139)
     .. controls (359.8121, 378.3982) and (374.627, 376.9167) .. (383.9788, 387.2871)
     .. controls (393.3307, 397.6575) and (397.2196, 419.8798) .. (351.6639, 422.102);
  \node[ipe node, font=\huge]
     at (393.408, 396.073) {\(a,b,c\)};
\end{tikzpicture}
 }
\end{minipage}\hfill%
\begin{minipage}[r]{0.45\textwidth}
\begin{tikzpicture}[->,>=stealth',shorten >=1pt,auto,node distance=3mm and 7mm,thick,initial text=]
 \tikzstyle{every state}=[scale=0.85,fill=customblue!60,draw=blue!60,text=black,style={draw,ellipse,inner sep=0pt}]
  \node[initial, state] (0) {\(ρ(\varepsilon)\)};
  \node[state] (2) [right=of 0] {\(ρ(c)\)};
  \node[state] (1) [above=of 2] {\(ρ(a)\)};
  \node[state] (3) [below=of 2] {\(ρ(b)\)};
  \node[state] (4) [right=of 2, xshift=0.2cm] {\(ρ(aa)\)};
  
  \path (0) edge [in=60, out=120, distance=70pt] node {\small \(a,b,c\)} (4)  
      (0) edge [bend left] node {\small\(a,c\)} (1)
      (0) edge node {\small\(c\)} (2)
      (0) edge [bend right] node [yshift=-5pt] {\small\(b,c\)} (3);

  \path (1) edge [bend left] node {\small\(a,b,c\)} (4)
      (2) edge node {\small\(a,b,c\)} (4)
      (3) edge [bend right] node [xshift=5pt, yshift=-4pt] {\small\(a,b,c\)} (4)
      (4) edge [loop right] node [above, xshift=-8pt, yshift=3pt] {\small \(a,b,c\)} (4);
  \end{tikzpicture}
\end{minipage}
\caption{The image on the left shows the principals induced by a quasiorder. Each arrow of the form \(ρ(x)\ggoes{a}ρ(y)\) indicates that \(ρ(x) a \subseteq ρ(y)\).
For clarity, we show on the right the automaton resulting from the relation between the principals.}\label{fig:rho_automata}
\end{figure}
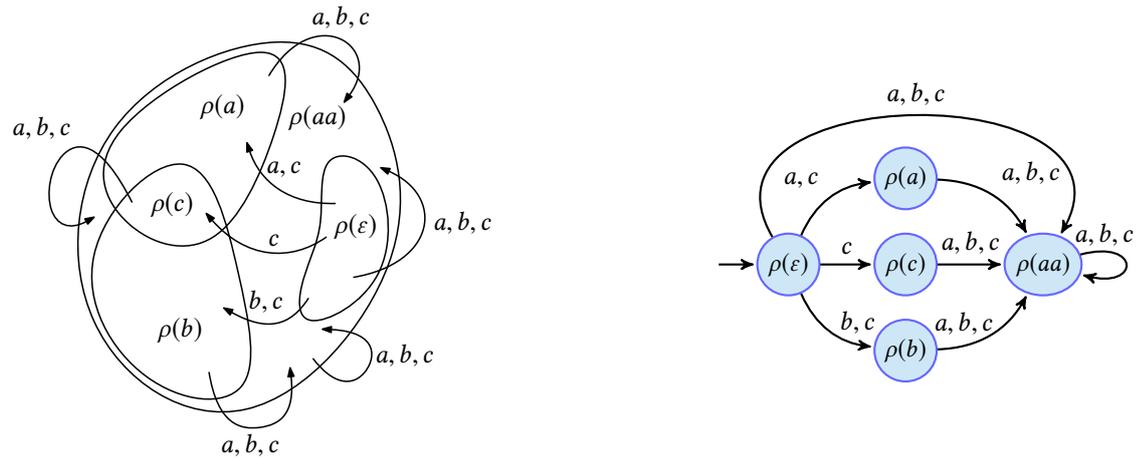

Let \(ρ(u)\) denote the principal for a word \(u\).
The monotonicity of congruences ensures that every set \(ρ(u)a\) is included in a principal \(ρ(v)\) and, since congruences are symmetric, the principals induced by a congruence are disjoint and, therefore, the resulting automata is deterministic.
By switching from congruences to quasiorders we obtain possibly overlapping principals which enables non-determinism and allows us to obtain \emph{residual automata} which, recall, are a generalization of DFAs.
Clearly, the principals shown in Figure~\ref{fig:rho_automata} correspond to a quasiorder rather than a congruence since they are not disjoint.

This quasiorder-based perspective on RFAs is presented in Chapter~\ref{chap:RFA} where we define quasiorder-based automata constructions that yield RFAs or co-RFAs, depending on the properties of the input quasiorder.
Moreover, given two comparable quasiorders, our automata construction instantiated with the coarser quasiorder yields a smaller automaton.
This is to be expected since a coarser quasiorder induces fewer principals which, recall, are the states of the automata.

As a consequence, building the canonical minimal RFA for a given language amounts to instantiating our automata construction with the coarsest quasiorder that satisfies certain requirements.
Interestingly, building the minimal DFA amounts to instantiating the framework of \citet{ganty2019congruence} with the coarsest congruence that satisfies the same requirements.
As we shall see in Chapter~\ref{chap:RFA}, the congruence and the quasiorder used for building the minimal DFA and RFA, respectively, are closely related.

We conclude that \emph{monotone quasiorders} are fundamental for RFAs as \emph{congruences} are fundamental for DFAs, which evidences the relationship between these two
classes of automata.

%

%
%
%
%
%

%
%
%
%
%
%
%
%
%
%
%
%

%
%
%
%
\clearpage{}%
\clearpage{}%
%
\chapter{State of the Art}
\label{chap:related}

In this dissertation, we present two quasiorder-based frameworks that allow us to systematically derive algorithms for solving different language inclusion problems and manipulating residual automata, respectively.
Moreover, we show that our algorithms for deciding language inclusion can be adapted for searching on compressed text.

Our theoretical framework allows us to devise some novel algorithms and offer new insights on existing ones.
Therefore, most of the works related to ours are briefly discussed within the following chapters, when explaining them within our quasiorder-based perspective.
This is the case, specially, in Chapters~\ref{chap:LangInc} and~\ref{chap:RFA}.

However, we present in this chapter a detailed description of some previous works in order to provide an overview of the state of the art for these problems before writing this Ph.D. Thesis.

\section{The Language Inclusion Problem}\label{sec:TheLanguageInclusionProblem}
Consider the language inclusion problem \(L_1 \subseteq L_2\).
When the underlying representations of \(L_1\) and \(L_2\) are regular expressions, one can check language inclusion using some rewriting techniques~\cite{Antimirov1995,keil_et_al:LIPIcs:2014:4841}, thus avoiding the translation of the regular expression into an equivalent automaton.

On the other hand, when the languages are given through finite automata, a well known and standard method to solve the language inclusion problem is to reduce it to a disjointness problem via the construction of the language complement: \(L_1 \subseteq L_2\) if{}f \(L_1 \cap L_2^c = \varnothing\).
The bottleneck of this approach is the language complementation since it involves a determinization step which entails a worst case exponential blowup. 

In order to alleviate this bottleneck, \citet{DBLP:conf/cav/WulfDHR06} put forward a breakthrough result where complementation was sidestepped by a lazy construction of the determinized NFA, which provided a huge performance gain in practice.
Their algorithm, deemed the \emph{antichains} algorithm, was subsequently enhanced with simulation relations by \citet{Abdulla2010}.
The current state of the art for solving the language inclusion problem between regular languages is the bisimulation up-to approach proposed by \citet{DBLP:conf/popl/BonchiP13}, of which the antichains algorithm and their enhancement with simulations can be viewed as particular cases. 

\subsection{Antichains Algorithms}\label{sec:antichainsAlgorithms}
The \emph{antichains} algorithm of \citet{DBLP:conf/cav/WulfDHR06} was originally designed as an algorithm for solving the universality problem for regular languages, i.e. deciding whether \(Σ^* \subseteq L\) holds when \(L\) is regular.

Before the introduction of this algorithm, the standard approach for deciding universality of a regular language given its automaton was to determinize the automaton and check whether all states are final.
The \emph{antichains} algorithm improved this situation by keeping the determinization step implicit.

In their work, \citet{DBLP:conf/cav/WulfDHR06} also adapted their antichains algorithm for solving the language inclusion problem \(L_1 \subseteq L_2\) when both \(L_1\) and \(L_2\) are regular.
Next, we describe this antichains algorithm for solving language inclusion.

Consider the inclusion problem \(L_1 \subseteq L_2\) and let \(\cN_1\) and \(\cN_2\) be finite-state automata generating the languages \(L_1\) and \(L_2\) respectively.
The intuition behind the \emph{antichains} algorithm is to compute, for each state \(q\) of \(\cN_1\), the set \(S_q\) of sets of states of \(\cN_2\) from which no final state of \(\cN_2\) is reachable by reading words generated from \(q\) in \(\cN_1\).\footnote{Note that this is equivalent to finding states of the complement of the determinized version of \(\cN_2\) from which a final state is reachable by reading a word generated from \(q\) in \(\cN_1\).}
Clearly, the inclusion \(L_1 \subseteq L_2\) holds if{}f none of the sets of states computed for the initial states of \(\cN_1\) contain some initial state of \(\cN_2\).

In order to prevent the computation of all possible subsets of \(\cN_2\) from which the final states are non-reachable, which would be equivalent to determinizing \(\cN_2\), the \emph{antichains} algorithm ensures that the set \(S_q\) for each state \(q\) in \(\cN_1\) is an antichain, i.e. \(\forall s,s' \in S_q, \; s \nsubseteq s' \land s' \nsubseteq S\).
The idea behind the use of \emph{antichains} is that, given two sets of states of \(\cN_2\), namely \(s\) and \(s'\), if \(s \subseteq s'\) then if no final state of \(\cN_2\) is reachable from \(s'\) by reading words in a certain set then the same holds for \(s\).
Therefore, discarding the set \(s\) and keeping the set \(s'\) preserves the correctness of the algorithm.
The resulting algorithm is refer to as the \demph{backward antichains algorithm}.

Furthermore, \citet{DBLP:conf/cav/WulfDHR06} also defined a dual of the antichains algorithm described above.
In this case, the algorithm computes the set \(\widetilde{S}_q\) of sets of states of \(\cN_2\) reachable from an initial state by reading a word generated from \(q\) in \(\cN_1\).
In this case, the inclusion \(L_1 \subseteq L_2\) holds if{}f for every initial state \(q\) of \(\cN_1\), all the sets in \(\wt{S}_q\) contain a final state.
Again, by ensuring that \(\widetilde{S}_q\) is an \emph{antichain}, we can reduce the number of sets of states of \(\cN_2\) that need to be computed since, whenever \(s \subseteq s'\), if a final state is reachable from \(s\) by a word in a given language, the same holds for \(s'\) and, therefore, it is possible to discard \(s'\).
The resulting algorithm is referred to as the \demph{forward antichains algorithm}.

The proof of the correctness of the \emph{antichains} algorithm, as presented by \citet{DBLP:conf/cav/WulfDHR06}, heavily depends on the automata representation of the languages.
We believe that our quasiorder-based framework, presented in Chapter~\ref{chap:LangInc}, offers a better understanding on the \emph{antichains} algorithm and its correctness proof by offering a new explanation of the algorithm from a language perspective.

\paragraph*{Improvements on the Antichains Algorithm}
The \emph{antichains} algorithm of \citet{DBLP:conf/cav/WulfDHR06} was later improved by \citet{Abdulla2010}, who used simulations (between states and between sets of states) for reducing the amount of sets of states considered by the algorithm.

In particular, they found that, for the \emph{forward antichains algorithm}, there is no need to add the set \(s\) of states of \(\cN_2\) to the set \(\wt{S}_q\) for a certain state \(q\) of \(\cN_1\) if there exists a state \(q'\) of \(\cN_1\) such that \(q\) simulates \(q'\) and whose associated set \(\wt{S}_{q'}\) contains a set \(s'\) that simulates \(s\).
The idea behind this approach is that simulation is a sufficient condition for language inclusion to hold, i.e. if the set of states \(s'\) simulates the set \(s\) then the language generated from \(s'\) is a subset of the language generated from \(s\).

As we show in Chapter~\ref{chap:LangInc}, this improvement on the \emph{antichains} algorithm can be partially accommodated by our quasiorder-based framework by using simulations in the definition of the quasiorder.
By doing so, the resulting algorithm matches the behavior of the one of ~\citet{Abdulla2010} when \(q = q'\).

On the other hand, \citet{DBLP:conf/popl/BonchiP13} defined a new type of relation between sets of states, denoted \emph{bisimulation up to congruence}, and used it to define a new algorithm for deciding language equivalence between sets of states of a given automaton.

Intuitively, bisimulations up to congruence are enhanced bisimulations (and, therefore, if they relate two sets of states then both sets generate the same language) that might relate sets of states that are not explicitly related by the underlying bisimulation but are related by its implicit congruence closure.
Since \(L_1 \subseteq L_2 \Lra L_1 \cup L_2 = L_2\), the algorithm of \citet{DBLP:conf/popl/BonchiP13} can be used to decide the inclusion \(L_1 \subseteq L_2\) by considering the union automaton \(\cN_1 \cup \cN_2\) and checking whether the bisimulations up to congruence holds between the union of the initial states of \(\cN_1\) and \(\cN_2\), which generate \(L_1 \cup L_2\), and the initial states of \(\cN_2\), which generate \(L_2\).

Finally, \citet{Holk2015} used \emph{antichains} to solve the language inclusion problem \(L_1 \subseteq L_2\) when \(L_1\) is a context-free language and \(L_2\) is regular.
To do that, they reduced the language inclusion problem to a data flow analysis one.
This allowed them to rephrase the language inclusion problem as an inclusion problem between sets of relations on the states of the automaton.
Then, they applied the antichains principle to reduce the number of relations that need to be manipulated.

As we show in Chapter~\ref{chap:LangInc}, our quasiorder-based framework for deciding the language inclusion \(L_1 \subseteq L_2\) also applies to the case in which \(L_1\) is a context-free grammar.
Indeed, when \(L_1\) is regular we instantiate our framework with left or right monotone quasiorders and obtain the antichains algorithm of \citet{DBLP:conf/cav/WulfDHR06} and its variants, among other algorithms.
Similarly, when \(L_1\) is context-free, we use a left and right monotone quasiorders and obtain the antichains algorithm of \citet{Holk2015}, among others.

Therefore, our framework allows us to offer a more direct presentation of the \emph{antichains} algorithm for grammars of \citet{Holk2015} as a straightforward extension of the \emph{antichains} algorithm for regular languages.

\subsection{Solving Language Inclusion through Abstractions}
Our approach draws inspiration from the work of \citet{Hofmann2014}, who considered the language inclusion problem on \emph{infinite words} \(L_1 \subseteq L_2\) where \(L_1\) is  represented by a B{\"u}chi automata and \(L_2\) is regular.

They defined a language inclusion algorithm based on fixpoint computations and a language abstraction based on an equivalence relation between states of the underlying automata representation.   
Although the equivalence relation is folklore (you find it in several textbooks on language theory \cite{Khoussainov2001,Sakarovitch}), \citet{Hofmann2014} were the first, to the best of our knowledge, to use it as an abstraction and, in particular, as a complete domain in abstract interpretation.

As we show in Chapter~\ref{chap:LangInc}, our framework for solving the language inclusion problem also relies on computing the language abstraction of a fixpoint computation. 
However, we focus on languages on finite words and generalize the language abstractions by relaxing their equivalence relations to quasiorders.  
Moreover, by considering quasiorders instead of equivalences, we are able to generalize the fixed point-based approach to check $L_1\subseteq L_2$ when $L_2$ is non-regular.

\section{Searching on Compressed Text}\label{sec:related:Search}

The problem of searching with regular expressions on grammar-compressed text has been extensively studied for the last decades.
Results in this topic can be divided in two main groups:
\begin{myEnumA}
\item[a)] Characterization of the problem's complexity from a theoretical point of view~\cite{plandowski1999complexity,markey2004ptime,Amir2018FineGrained}.
\item[b)] Development of algorithms and data structures to efficiently solve different versions of the problem such as pattern matching~\cite{navarro2005lzgrep,de1998direct,navarro2007compressedIndex}, approximate pattern matching~\cite{bille2009improved,karkkainen2003approximate}, multi-pattern matching~\cite{kida1998multipattern,gawrychowski2014simple}, regular expression matching~\cite{navarro2003regular,bille2009improved} and
subsequence matching~\cite{bille2017compressed}.
\end{myEnumA}
\smallskip

To characterize the complexity of search problems on grammar-compressed text it is common to use \emph{straight line programs} (grammars generating a single string) to represent the output of the compression.
Straight line programs are a natural model for algorithms such as \textsc{LZ78}~\cite{ziv1978compression}, \textsc{LZW}~\cite{welch1984technique}, Recursive Pairing~\cite{larsson2000off} or Sequitur~\cite{nevill1997compression} and, as proven by \citet{Rytter2004Equivalent}, polynomially equivalent to \textsc{LZ77}~\cite{ziv1977compression}.
However, algorithms for searching with regular expressions on grammar-compressed text are typically designed for a specific compression scheme~\cite{navarro2005lzgrep,navarro2003regular,bille2009improved}.

The first algorithm to solve this problem is due to \citet{navarro2003regular} and it is defined for \textsc{LZ78}/\textsc{LZW} compressed text.
His algorithm reports all positions in the uncompressed text at which a substring that matches the expression ends and exhibits $\mathcal{O}(2^s+s\cdot T+\text{occ}\cdot s\cdot \log{s})$  worst case time complexity using $\mathcal{O}(2^s+t\cdot s)$ space, where ``occ'' is the number of occurrences, $s$ is the size of the expression and $T$ is the length of the text compressed to size $t$.
To the best of our knowledge this is the only algorithm for regular expression searching on compressed text that has been implemented and evaluated in practice.

\citet{bille2009improved} improved the result of Navarro by defining a relationship between the time and space required to perform regular expression searching on compressed text.
They defined a data structure of size $o(t)$ to represent \textsc{LZ78} compressed texts and an algorithm that, given a parameter $\tau$, finds all occurrences of a regular expression in a \textsc{LZ78} compressed text in $\mathcal{O}(t\cdot s\cdot (s+\tau)+\text{occ}\cdot s\cdot \log{s})$ time using $\mathcal{O}(t\cdot s^2/\tau + t\cdot s)$ space.
To the best of our knowledge, no implementation of this algorithm was carried out.

We tackle the problem of searching in grammar-compressed text by using our algorithms for deciding language inclusion.
We \emph{adapt} these algorithms to efficiently handle straight line programs and \emph{enhance} them with additional information, that is computed for each variable of the grammar, in order to find the exact matches.

Our approach, presented in Chapter~\ref{chap:zearch}, differs from the previous ones in the generality of its definition since, by working on straight line programs, our algorithm and its complexity analysis apply to any grammar-based compression scheme.
This is a major improvement since, as shown by \citet{Hucke2016Smallest}, the \textsc{LZ78} representation of a text of length $T$ has size $t=\Theta((T/\log(T))^{2/3})$ while its representation as a straight line program has size $t=Ω(\log(T)/(\log\log(T)))$ and $t=\mathcal{O}((T/\log(T))^{2/3})$.
Therefore, our approach allows us to handle much more concise representations of the data.

Moreover, the definition of ``occurrence'' used in previous works, i.e. positions in the uncompressed text from which we can read a match of the expression, is of limited practical interest.
As an evidence, state of the art tools for regular expression searching, such as \tool{grep} or \tool{ripgrep}, define an occurrence as a line of text containing a match of the expression and so do us.

As a consequence, our algorithm reports the number of occurrences of a \emph{fixed} regular expression in a compressed text in $\mathcal{O}(t)$ time while previous algorithms require $\mathcal{O}(T)$ since $\text{occ}=\mathcal{O}(T)$.
Even when there are no matches ($\text{occ}=0$), so previous approaches operate in $\mathcal{O}(t)$ time, the result of \citet{Hucke2016Smallest} shows that our algorithm behaves potentially better than the others.

\paragraph*{Deciding the Existence of a Match}
It is worth to remark that the problem of deciding language inclusion between the languages generated by a straight line program and an automaton has been studied before.
In particular \citet{plandowski1999complexity} reduced this problem to a series of matrix multiplications, showing that it can be solved in $\mathcal{O}(t\cdot s^3)$ time ($\mathcal{O}(t\cdot s)$ for deterministic automata) where $t$ is the size of the grammar and $s$ is the size of the automaton.
Note that this problem corresponds to deciding whether a grammar-compressed text contains a match for a given regular expression.

On the other hand, \citet{esparza00} defined an algorithm to solve a number of decision problems involving automata and context-free grammars which, when restricted to grammars generating a single word, results in a particular implementation of Plandowsky's approach.
Indeed, this implementation coincides with our Algorithm \AlgSLPIncS, presented in Chapter~\ref{chap:zearch} as a straightforward adaptation of the algorithm given in Chapter~\ref{chap:LangInc} for deciding the inclusion of a context-free language into a regular one.

\section{Building Residual Automata}
Residual automata (RFA for short) were first introduced by \citet{denis2000residual,denis2001residual,denis2002residual}.
We deliberatively use the notation RFA for residual automata, instead of the standard RFSA, in order to be consistent with the notation used in this thesis for deterministic (DFA) and non-deterministic (NFA) automata.

When introducing RFAs, \citet{denis2000residual}  defined an algorithm for \emph{residualizing} an automaton, which is an adaptation of the well-known subset construction used for determinization.
Moreover, they showed that there exists a \emph{unique} \emph{canonical} RFA, which is minimal in number of states, for every regular language.
Finally, they showed that the residual-equivalent of the double-reversal method holds, i.e.\ residualizing an automaton \(\cN\) whose reverse is residual yields the canonical RFA for the language generated by \(\cN\).

Later, \citet{tamm2015generalization} generalized the double-reversal method for RFAs by giving a sufficient and necessary condition that guarantees that the residualization operation defined by \citet{denis2002residual} yields the canonical RFA.
This generalization comes in the same lines as that of \citet{Brzozowski2014} for the double-reversal method for DFAs.

In Chapter~\ref{chap:RFA}, we present a quasiorder-based framework of automata constructions inspired by the work of \citet{ganty2019congruence}, who defined a framework of automata constructions based on \emph{equivalences} over words to provide new insights on the relation between well-known methods for computing the minimal \emph{deterministic} automaton of a language.
Intuitively, the shift from equivalences to quasiorders allows us to move from deterministic automata to residual ones.

In their work, \citet{ganty2019congruence} used \emph{congruences}, i.e. monotone equivalences, over words that induce finite partitions over \(\Sigma^*\).
Then, they used well-known automata constructions that yield automata generating a given language \(L\)~\cite{Buchi89,Khoussainov2001} to derive new automata constructions parametrized by a congruence.
As a result, when using the Nerode's congruence for \(L\), their automata construction yields the minimal DFA for \(L\)~\cite{Buchi89,Khoussainov2001} while, when using the so-called \emph{automata-based equivalence} relative to an NFA their construction yields the determinized version of the input NFA.
They also obtained counterpart automata constructions that yield, respectively, the minimal co-deterministic and a co-deterministic automaton for the language.

The relation between the automata constructions resulting from the Nerode's and the automata-based congruences allowed them to relate determinization and minimization operations.
Finally, they re-formulated the generalization of the double-reversal method presented by \citet{Brzozowski2014}, which gives a sufficient and necessary condition that guarantees that determinizing an NFA yields the minimal DFA for the language generated by the NFA.

Our quasiorder-based framework allows us to extend the work of \citet{ganty2019congruence} and devise automata constructions that result in residual automata.
Moreover, we derive a residual-equivalent of the generalized double-reversal method from \citet{Brzozowski2014} that is more general than the one presented by \citet{tamm2015generalization}.
%
%

%

\clearpage{}%
\clearpage{}%
%
\chapter{Background}
\label{chap:prel}

In this section, we introduce all the concepts and notation that will be used throughout the rest of the thesis.

\section{Words and Languages}

Let \(\Sigma\) be a finite nonempty \demph{alphabet} of symbols.
A \demph{string} or \demph{word} \(w\) is a finite sequence of symbols of \(Σ\) where the empty sequence is denoted \(ε\).
We denote \(w^R\) the \demph{reverse} of \(w\) and use \(\len{w}\) to denote the \demph{length} of \(w\) that we abbreviate to \(†\) when \(w\) is clear from the context.
We define \( (w)_i \) as the \(i\)-th symbol of \(w\) if \(1 ≤ i ≤ †\) and \(ε\) otherwise.
Similarly, \((w)_{i,j}\) denotes the substring, also called \demph{factor}, of $w$ between the $i$-th and the $j$-th symbols, both included.
Clearly, \(w = (w)_{1,\dag}\).

We write \(\Sigma^*\) to denote the set of all finite words on $\Sigma$ and write \(\wp(S)\) to denote the set of all subsets of \(S\), i.e. \(\wp(S) \ud \{S' \mid S' \subseteq S\}\).
Given a language \(L \in \wp(\Sigma^*)\), \(L^R \ud \{w^R \mid w \in L\}\) denotes the \demph{reverse} of \(L\) while \(L^c \ud \{w \in Σ^* \mid w \notin L\}\) denotes its \demph{complement}.
Concatenation in \(\Sigma^*\) is simply denoted by juxtaposition, both for concatenating words \(uv\), languages \(L_1L_2\) and words with languages such as \(uLv\).
We sometimes use the symbol \(\cdot\) to refer explicitly to concatenation.

\begin{definition*}[Quotient]
Let \(L \subseteq Σ^*\) and \(u \in Σ^*\).
The \emph{left quotient} of \(L\) by the word \(u\) is the set of suffixes of the word \(u\) in \(L\), i.e. 
\[u^{-1}L \ud \{w \in Σ^* \mid uw \in L\}\enspace .\]
Similarly, the \emph{right quotient} of \(L\) by the word \(u\) is the set of all prefixes of \(u\) in \(L\), i.e. 
\[Lu^{-1} \ud \{w \in Σ^* \mid wu \in L\}\enspace . \]
Finally, we lift the notions of left and right quotients by a word to sets \(S\subseteq Σ^*\) as:
\[S^{-1}L \ud \{w \in Σ^* \mid \forall s\in S,\; sw \in L\} \;\text{ and }\; LS^{-1} \ud \{w \in Σ^* \mid \forall s\in S,\; ws \in L\} \tag*{\hfill\rule{0.5em}{0.5em}}\]
\end{definition*}

Note that the definition of quotient by a set is unconventional as it uses the universal quantifier instead of existential. 
We use this definition since it guarantees that the quotient by a set is the adjoint of concatenation, i.e.
\[XY \subseteq L \Lra Y \subseteq X^{-1}L \Lra X \subseteq LY^{-1}\enspace .\]

\begin{definitionNI*}[Composite and Prime Quotients]\index{Quotient!composite}\index{Quotient!prime}
A left (resp. right) quotient \(u^{-1}L\) is \emph{composite} if{}f it is the union of all the left (resp. right) quotients that it strictly contains, i.e. 
\[u^{-1}L = \hspace{-15pt}\bigcup_{x \in Σ^*, \; x^{-1}L \subsetneq u^{-1}L}\hspace{-15pt} x^{-1}L\qquad\qquad (\text{resp. }Lu^{-1} = \hspace{-15pt}\bigcup_{x \in Σ^*, \; Lx^{-1} \subsetneq Lu^{-1}}\hspace{-15pt} Lx^{-1})\enspace .\]
When a quotient is not composite, we say it is \emph{prime}.\eod
\end{definitionNI*}

\section{Finite-state Automata}\label{sec:FSA}
Throughout this dissertation we consider three different classes of automata: non-deterministic, deterministic and residual.
Next, we define these classes of automata and introduce some basic notions related them.

\paragraph*{Non-Deterministic Finite-State Automata}

\begin{definition*}[NFA]\index{non-deterministic automaton}
A \emph{non-deterministic finite-state automaton} (NFA for short) is a tuple \(\cN=\tuple{Q,Σ,\delta,I,F}\) where \(\Sigma\) is the \emph{alphabet}, \(Q\) is the finite \emph{set of states}, \(I\subseteq Q\) is the subset of \emph{initial states}, \(F\subseteq Q\) is the subset of \emph{final states}, and \(\delta\colon  Q\times \Sigma \ra \wp(Q)\) is the \emph{transition relation}. \eod
\end{definition*}

We sometimes use the notation \(q\ggoes{a} q'\) to denote that \(q'\in \delta(q,a)\). 
If \(u\in \Sigma^*\) and \(q,q'\in Q\) then \(q \stackrel{u}{\leadsto} q'\) means that the state \(q'\) is reachable 
from \(q\) by following the string \(u\). 
Formally, by induction on the length of $u\in \Sigma^*$:
\begin{myEnumI}
\item if $u=\epsilon$ then \(q \goes{\epsilon} q'\) if{}f \(q=q'\);
\item if $u=av$ with $a\in \Sigma,v\in \Sigma^*$ then \(q \goes{av} q'\) if{}f $\exists q''\in \delta(q,a),\; q''\goes{v}q'$.
\end{myEnumI}
\medskip

The \emph{language} generated by an NFA \(\cN\), often referred to as the \demph{language accepted} by \(\cN\) is \(\lang{\cN}\ud\{u \in \Sigma^* \mid \exists q_i\in I, \exists q_f \in F, \; q_i\goes{u}q_f\}\).
We define the successors and the predecessors of a set \(S \subseteq Q\) by a word \(w \in Σ^*\) as: 
\begin{align*}
\mindex{\post_w^{\cN}}(S) & \ud \{q \in Q \mid \exists q' \in S, \; q' \goes{w} q\} &
\mindex{\pre_w^{\cN}}(S) & \ud \{q \in Q \mid \exists q' \in S, \; q \goes{w}q'\} \enspace . 
\end{align*}
In general, we omit the automaton \(\cN\) from the superscript when it is clear from the context.
Figure~\ref{fig:NFA} shows an example of an NFA.

\begin{figure}[!ht]
\centering
\begin{tikzpicture}[->,>=stealth',shorten >=1pt,auto,node distance=5mm and 1cm,thick,initial text=]
\tikzstyle{every state}=[scale=0.75,fill=customblue!60,draw=blue!60,text=black]
      
\node[initial,state] (0) {\(0\)};
\node[state] (1) [right=of 0] {\(1\)};
\node[state] (2) [right=of 1] {\(2\)};
\node[state, accepting] (3) [right=of 2] {\(3\)};

\path (0) edge node {\(a\)} (1)
        (0) edge[loop above] node {\(a,b\)} (0)
      (1) edge node {\(a,b\)} (2)
      (2) edge node {\(a\)} (3)
      (2) edge[bend right=45] node[above] {\(b\)} (0)
      (3) edge[loop above] node {\(a,b\)} (3);
\end{tikzpicture}
\caption{An NFA \(\cN\) with \(Σ = \{a,b\}\) and \(\lang{\cN}= Σ^*aΣaΣ^*\).}
\label{fig:NFA}
\end{figure}

Given \(S,T \subseteq Q\), define 
\[\mindex{W^{\cN}_{S,T}} \ud \{w \in \Sigma^* \mid \exists q \in S, q' \in T, \; q \goes{w} q')\}\enspace .\]

When \(S\) or \(T\) are singletons, we abuse of notation and write \(W_{q,T}^{\cN}\), \(W_{S,q'}^{\cN}\) or even \(W_{q,q'}^{\cN}\).
In particular, when \(S = \{q\}\) and \(T = F\), we say that \(W^{\cN}_{q,F}\) is the \demph{right language} of \(q\).
Likewise, when \(S = I\) and \(T = \{q\}\), we say that \(W^{\cN}_{I,q}\) is the \demph{left language} of \(q\).
We say that a state \(q\) is \demph{unreachable} if{}f \(W^{\cN}_{I,q} = \varnothing\) and we say that \(q\) is \demph{empty} if{}f \(W^{\cN}_{q,F} = \varnothing\).
Finally, note that 
\[\lang{\cN} = \bigcup_{q \in I} W_{q,F}^{\cN} = \bigcup_{q \in F} W_{I,q}^{\cN} = W_{I,F}^{\cN} \enspace . \]

\begin{definition*}[Sub-automaton]
Let \(\cN = \tuple{Q,Σ,δ,I,F}\) be an NFA.
A \emph{sub-automaton} of \(\cN\) is an NFA \(\cN' = \tuple{Q', Σ, δ', I', F'}\) for which \(Q' \subseteq Q\), \(F' \subseteq F\), \(I' \subseteq I\) and for every \(q,q' \in Q\) and \(a \in Σ\) we have that \(q' \in δ'(q,a) \Ra q' \in δ(q,a)\).\eod
\end{definition*}

Clearly, if \(\cN'\) is a sub-automaton of \(\cN\) then \(\lang{\cN'}\subseteq \lang{\cN}\).

\begin{definition*}[Reverse Automaton]
Let \(\cN = \tuple{Q,Σ,δ,I,F}\) be an NFA.
The \emph{reverse} of \(\cN\) is the NFA \(\cN^R \!\ud\! \tuple{Q, \Sigma, \delta^R, F, I}\) where for every \(q,q' \in Q\) and \(a \in Σ\) we have that \(q \!\in\! \delta^R (q',a) \Lra q' \!\in\! \delta(q,a)\).\eod
\end{definition*}
It is straightforward to check that \(\lang{\cN}^R = \lang{\cN^R}\).

\paragraph*{Deterministic Finite-State Automata}

\begin{definitionNI*}[DFA and co-DFA]\index{DFA}\index{co-DFA}\index{deterministic automaton}\index{co-deterministic automaton}
A \emph{deterministic finite-state automaton} (DFA for short) is an NFA such that \(I = \{q_0\}\) and, for every state \(q \in Q\) and every symbol \(a \in \Sigma\), there exists \emph{at most} one state \(q' \in Q\) such that \(\delta(q,a) = q'\).

A \emph{co-deterministic finite-state automaton} (co-DFA for short) is an NFA \(\cN\) such that \(\cN^R\) is a DFA.\eod
\end{definitionNI*}

\begin{definition*}[Subset Construction]\index{determinization}
Let \(\cN = \tuple{Q, \Sigma, \delta, I, F}\) be an NFA.
The \emph{subset construction} builds a DFA \(\cN^D \ud \tuple{Q^D, \Sigma, \delta^D, I^D, F^D}\) where
\begin{align*}
Q^D & \ud \{\post_u^{\cN}(I) \mid u \in Σ^*\} \\ 
I^D & \ud \{I\} \\ 
F^D & \ud \{S \in \wp(Q) \mid S \cap F \neq \varnothing\}\\ 
\delta^D(S,a) & \ud \{q' \mid \exists q \in S, \; q' \in \delta(q,a)\} \text{ for every \(S \in Q\) and \(a \in \Sigma\)} \tag*{\eod}
\end{align*}
\end{definition*}

Given an NFA \(\cN\), we denote by \(\cN^D\) the DFA that results from applying the subset construction to \(\cN\) where only subsets that are reachable from the initial states of \(\cN^D\) are used.
As shown by \citet{Ullman2003}, \(\lang{\cN^D} = \lang{\cN}\) for every automaton \(\cN\).
Figure~\ref{fig:DFA} shows the DFA obtained when applying the subset construction to the NFA from Figure~\ref{fig:NFA}.

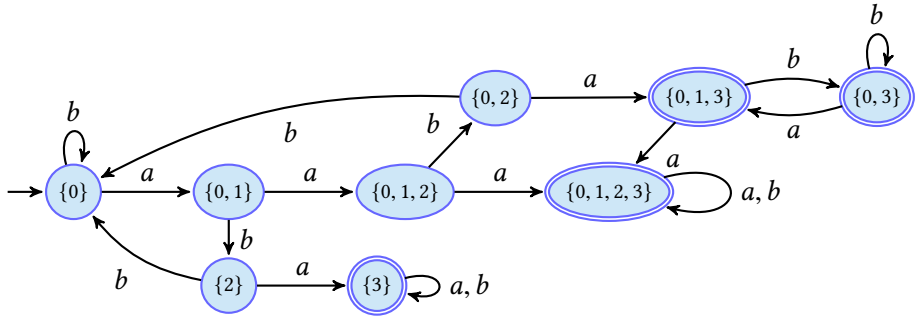
\begin{figure}[!ht]
\centering
\begin{tikzpicture}[->,>=stealth',shorten >=1pt,auto,node distance=5mm and 1.2cm,thick,initial text=]
\tikzstyle{every state}=[scale=0.75,fill=customblue!60,draw=blue!60,text=black,style={draw,ellipse,inner sep=0pt}]
\node[initial,state] (0) {\(\{0\}\)};
\node[state] (1) [right=of 0] {\(\{0,1\}\)};
\node[state] (2) [right=of 1] {\(\{0,1,2\}\)};
\node[state, accepting] (3) [right=of 2] {\(\{0,1,2,3\}\)};
\node[state] (4) [above=of 2, xshift=45pt] {\(\{0,2\}\)};
\node[state, accepting] (5) [above=of 3, xshift=45pt] {\(\{0,1,3\}\)};
\node[state, accepting] (6) [right=of 5] {\(\{0,3\}\)};
\node[state] (7) [below=of 1] {\(\{2\}\)};
\node[state, accepting] (8) [right=of 7] {\(\{3\}\)};

\path (0) edge node {\(a\)} (1)
      (0) edge[loop above] node[above] {\(b\)} (0)
      (1) edge node {\(a\)} (2)
      (1) edge node {\(b\)} (7)
      (2) edge node {\(a\)} (3)
      (2) edge node {\(b\)} (4)
      (3) edge[loop right] node[right] {\(a,b\)} (3)
      (4) edge node {\(a\)} (5)
      (4) edge[bend right=15] node {\(b\)} (0)
      (5) edge node {\(a\)} (3)
      (5) edge[bend left=15] node {\(b\)} (6)
      (6) edge[bend left=15] node {\(a\)} (5)
      (6) edge[loop above] node[above] {\(b\)} (6)
      (7) edge node {\(a\)} (8)
      (7) edge[bend left=20] node {\(b\)} (0)
      (8) edge[loop right] node[right] {\(a,b\)} (8)   
;
\end{tikzpicture}
\caption{DFA \(\cN^D\) obtained by determinizing the NFA from Figure~\ref{fig:NFA}.}
\label{fig:DFA}
\end{figure}

A DFA for the language \(\lang{\cN}\) is \emph{minimal}, denoted by \(\cN^{DM}\), if it has no unreachable states and no two states have the same right language.
For instance, the DFA from Figure~\ref{fig:DFA} is not minimal since the states \(\{0,1,3\},\{0,3\},\{0,1,2,3\}\) and \(\{3\}\) all have the same right language.
The minimal DFA for a regular language is \emph{unique} modulo isomorphism and is determined by the right quotients of the generated language.

\begin{definition*}[Minimal DFA]
Let \(L\) be a regular language.
The \emph{minimal DFA} for \(L\) is the DFA \(\cD \ud \tuple{Q^D, Σ, δ^D, I^D, F^D}\) where 
\begin{align*}
Q^D & \ud \{u^{-1}L \mid u \in Σ^*\}\\
I^D & \ud \{u^{-1}L \in Q \mid u^{-1}L \subseteq L\}\\
F^D & \ud \{u^{-1}L \in Q \mid \varepsilon \in u^{-1}L\}\\
δ^D(u^{-1}L, a) & \ud \{v^{-1}L \in Q \mid v^{-1}L = a^{-1}(u^{-1}L)\} \text{ for every \(u^{-1}L \in Q\) and \(a \in Σ\)} \tag*{\eod}
\end{align*}
\end{definition*}

\paragraph*{Residual Finite-State Automata}

\begin{definitionNI*}[RFA and co-RFA]\index{RFA}\index{co-RFA}\index{residual automaton}\index{co-residual automaton}
A \emph{residual finite-state automaton} (RFA for short) is an NFA such that the right language of each state is a left quotient of the language generated by the automaton.

A \emph{co-residual automaton} (co-RFA for short) is an NFA \(\cN\) such that \(\cN^R\) is residual, i.e. the left language of each state is a right quotient of the language generated by the automaton.\eod
\end{definitionNI*}

Formally, an RFA is an NFA \(\cN = \tuple{Q, Σ, δ, I, F}\) satisfying 
\[\forall q \in Q, \exists u \in Σ^*, \; W_{q,F} = u^{-1}\lang{\cN}\enspace .\]
Similarly, \(\cN\) is a co-RFA if{}f it satisfies 
\[\forall q \in Q, \exists u \in Σ^*, \; W_{I,q} = Lu^{-1}\enspace .\]

The right quotients of the form \(u^{-1}L\), where \(L \subseteq Σ^*\) is a language and \(u \in Σ^*\), are also known as \demph{residuals}, which gives name to RFAs.
We say \(u \in Σ^*\) is a \demph{characterizing word} for \(q \in Q\) if{}f \(W_{q,F}^{\cN} = u^{-1}\lang{\cN}\) and we say \(\cN\) is \emph{consistent}\index{consistent RFA} if{}f each state \(q\) is reachable by a characterizing word for \(q\).
Moreover, \(\cN\) is \emph{strongly consistent}\index{strongly consistent RFA} if{}f every state \(q\) is reachable by every characterizing word of \(q\).

Similarly to the case of DFAs, there exists a \emph{residualization} operation~\cite{denis2002residual} that, given an NFA \(\cN\), builds an RFA \(\mindex{\cN^{\text{res}}}\) such that \(\lang{\cN^{\text{res}}} = \lang{\cN}\).
This construction can be seen as a determinization followed by the removal of coverable states and the addition of new transitions.
We say that the set \(\post_u^{\cN}(I)\) is \demph{coverable} if{}f 
\[\post_u^{\cN}(I) = \hspace{-25pt}\bigcup_{x\in\Sigma^*, \; \post_x^{\cN}(I) \subsetneq \post_u^{\cN}(I)}\hspace{-25pt}\post_{x}^{\cN}(I) \enspace .\]

\begin{definition*}[Residualization]\index{\(\cN^{\text{res}}\)}
Let \(\cN = \tuple{Q, Σ, δ, I, F}\) be an NFA.
Then the \emph{residualization} operation builds the RFA \(\cN^{\text{res}} \ud \tuple{\widetilde{Q}, Σ, \widetilde{δ}, \widetilde{I}, \widetilde{F}}\) with 
\begin{align*}
\widetilde{Q} & \ud \{ \post_u^{\cN}(I) \mid u \in Σ^* \land \post_u^{\cN}(I) \text{ is not coverable}\}\\
\widetilde{I} & \ud \{S \in \widetilde{Q} \mid S \subseteq I\}\\
\widetilde{F} & \ud \{S \in \widetilde{Q} \mid S \cap F \neq \varnothing\}\\
\widetilde{δ}(S, a) & = \{S' \in \widetilde{Q} \mid S' \subseteq δ(S, a)\} \text{ for every \(S \in \widetilde{Q}\) and \(a \in Σ\)}\tag*{\eod}
\end{align*}  
\end{definition*}

Figure~\ref{fig:RFA} shows the RFA obtained by applying the residualization operation to the NFA from Figure~\ref{fig:NFA}.

\begin{figure}[!ht]
\centering
\begin{tikzpicture}[->,>=stealth',shorten >=1pt,auto,node distance=7mm and 1.4cm,thick,initial text=]
\tikzstyle{every state}=[scale=0.75,fill=customblue!60,draw=blue!60,text=black,style={draw,ellipse,inner sep=0pt}]
\node[initial,state] (0) {\(\{0\}\)};
\node[state] (1) [right=of 0] {\(\{0,1\}\)};
\node[state] (7) [below=of 1] {\(\{2\}\)};
\node[state, accepting] (8) [right=of 7] {\(\{3\}\)};

\path (0) edge[bend left=15] node {\(a\)} (1)
      (0) edge[loop above] node[above] {\(b\)} (0)
      (1) edge[bend left=15] node {\(a\)} (0)
      (1) edge[loop above] node[above] {\(a\)} (1)
      (1) edge node {\(a,b\)} (7)
      (7) edge node {\(a\)} (8)
      (7) edge[bend left=20] node {\(b\)} (0)
      (8) edge[loop right] node[right] {\(a,b\)} (8)   
;
\end{tikzpicture}
\caption{RFA \(\cN^{\text{res}}\) obtained when residualizing to the NFA from Figure~\ref{fig:NFA}.}
\label{fig:RFA}
\end{figure}
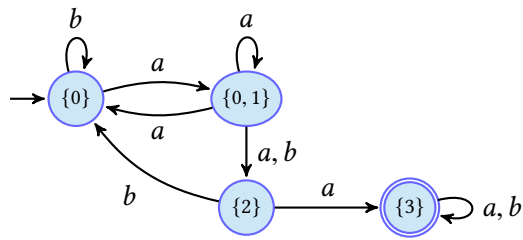

Similarly, to the case of DFAs, there exists an RFA for every regular language that is minimal in the number of states and is \emph{unique} modulo isomorphism: the \emph{canonical RFA}.

\begin{definition*}[Canonical RFA]
Let \(L\) be a regular language.
The \emph{canonical RFA} for \(L\) is the RFA \(\cC \ud \tuple{Q^C, Σ, δ^C, I^C, F^C}\) with 
\begin{align*}
Q^C & \ud \{u^{-1}L \mid u \in Σ^*, \; u^{-1}L \text{ is prime}\}\\
I^C & \ud \{u^{-1}L \in Q \mid u^{-1}L \subseteq L\}\\
F^C & \ud \{u^{-1}L \in Q \mid \varepsilon \in u^{-1}L\}\\
δ^C(u^{-1}L, a) & \ud \{v^{-1}L \in Q \mid v^{-1}L \subseteq a^{-1}(u^{-1}L)\} \text{ for every \(u^{-1}L \in Q\) and \(a \in Σ\)} \tag*{\eod}
\end{align*}
\end{definition*}
The canonical RFA is a strongly consistent RFA and it is the \emph{minimal} (in number of states) RFA such that \(\lang{\cC} = L\)~\cite{denis2002residual}.
Moreover, by definition, the canonical RFA has the \emph{maximal} number of transitions.

Finally, it is straightforward to check that any DFA \(\cD\) is also an RFA since \(W_{q,F}^{\cD} = u^{-1}L\) for all \(u \in W_{I,q}^{\cD}\).
Therefore, we have the following relations between these classes of automata:

\[\text{{\large DFA}} \subsetneq \text{{\large RFA}} \subsetneq \text{{\large NFA}} \enspace .\]

\section{Context-free Grammars}

\begin{definitionNI*}[CFG]\index{grammar}\index{Context-Free Grammar}\index{CFG}
A \emph{context-free grammar} (grammar or CFG for short) is a tuple \(\cGr \ud \tuple{\cV,\Sigma, P}\) where \(\cV=\{X_0,\ldots,X_n\}\) is a finite set of \emph{variables} including the \emph{start symbol} \(X_0\) (also denoted \demph{axiom}), \(\Sigma\) is a \emph{finite alphabet} of terminals and \(P\) is the set of \emph{rules} \(X_i\ra \beta\) where \(\beta\in (\cV\cup\Sigma)^*\)\eod
\end{definitionNI*}

In the following we assume, for simplicity and without loss of generality, that grammars are always given in Chomsky Normal Form (CNF)\index{Chomsky Normal Form}\index{CNF}~\cite{DBLP:journals/iandc/Chomsky59a}, that is, every rule \(X_i \ra \beta\in P\) is such that \(\beta\in (\cV\times \cV) \cup \Sigma \cup \{\epsilon\}\) and if $\beta=\epsilon$ then \(i=0\).
We also assume that for all \(X_i \in \cV\) there exists a rule \(X_i \ra \beta\in P\), otherwise \(X_i\) can be safely removed from \(\cV\).

Given two strings \(w, w' \in (\cV \cup \Sigma)^*\) we write \(w \Ra w'\) if{}f there exists two strings \(u, v \in (\cV \cup \Sigma)^*\) and a grammar rule \(X \to \beta \in P\) such that \(w = u X v\) and \(w' = u\beta v\).
We denote by \(\Ra^*\) the reflexive-transitive closure of \(\Ra\).

The \emph{language} generated by a \(\cGr\) is \(\lang{\cGr} \ud \{w \in \Sigma^* \mid X_0 \Ra^* w\}\).

\paragraph*{Straight-line Programs}
In the context of grammar-based compression we are interested in straight line programs, i.e. grammars generating exactly one word.

\begin{definitionNI*}[SLP]\index{Straight Line Program}\index{SLP}
A \emph{straight line program} (SLP for short), is a CFG $\cP=\tuple{\cV,Σ,P}$ where the set of rules is of the form 
\[P \ud \{X_i → α_i β_i \mid 1 \qo i \qo \len{V}, \; α_i,β_i \in (Σ \cup \{X_1,…,X_{i-1}\}\}\enspace . \]
We refer to $X_{\len{V}} → \alpha_\len{V}β_\len{V}$ as the \emph{axiom rule}.\eod
\end{definitionNI*}

It is straightforward to check that the language generated by an SLP consists of a single string $w ∈ Σ^*$ and, by definition, $\len{w} > 1$.
Since \(\lang{P}=\{w\}\) we identify \(w\) with \(\lang{P}\).

\section{Quasiorders}
Let $f:X\ra Y$ be a function between sets and let $S\in \wp(X)$.
We denote the image of \(f\) on $S$ by $f(S) \ud \{f(x) \in Y \mid x\in S\}$. 
The composition of two functions $f$ and $g$ is denoted by $fg$ or $f\comp g$.

A \demph{quasiordered set} (qoset for short) is a tuple \(\tuple{D,\mathord{\leqslant}}\) such that \(\mathord{\leqslant}\) is a \demph{quasiorder} (qo for short) relation on $D$, i.e.\ a reflexive and transitive binary relation. 
Given a qoset \(\tuple{D,\mathord{\leqslant}}\) we denote by \(\sim_D\) the equivalence relation induced by \(\qo\):
\[d \sim_D d' \udrshort d\leqslant d' \:\wedge\: d' \leqslant d, \quad\text{for all $d,d'\in D$}\enspace .\]
Moreover, given a qo \(\qo\) we denote its strict version by \(<\):
\[u < v \udiff u \qo v \land  v \not\qo u \enspace. \]

We say that a qoset satisfies the \emph{ascending} (resp.\ \emph{descending}) \emph{chain condition} (\demph{ACC}, resp.\ \demph{DCC}) if there is no countably infinite  sequence of distinct elements \(\{x_i\}_{i \in \mathbb{N}}\) such that, for all $i\in\bN$, \(x_i \leqslant x_{i{+}1}\) (resp. \(x_{i{+}1} \leqslant x_{i}\)).
If a qoset satisfies the ACC (resp. DCC) we say it is ACC (resp. DCC).

\begin{definitionNI*}[Closure and Principals]\index{Closure}\index{Principal}
Let \({\qo}\) be a quasiorder on \(Σ^*\) and let \(S \subseteq Σ^*\).
The \emph{closure} of \(S\) is 
\[ρ_{\qo}(S) \ud \{w \in Σ^* \mid \exists x \in S, \; x \qo w\}\enspace .\]
We say \(ρ_{\qo}(S)\) is a \emph{principal} if \(S\) is a singleton.
In that case, we abuse of notation and write \(ρ_{\qo}(u)\) instead of \(ρ_{\qo}(\{u\})\).\eod
\end{definitionNI*}

Given two quasiorders \(\mathord{\qo}\) and \(\mathord{\qo'}\) we say that \(\mathord{\qo}\) is finer than \(\mathord{\qo'}\) (or \(\mathord{\qo'}\) is coarser than \(\mathord{\qo}\)) and write  \(\mathord{\qo} \subseteq \mathord{\qo'}\) if{}f \(ρ_{\qo}(S)\subseteq ρ_{\qo'}(S)\) for every set \(S \subseteq \Sigma^*\).

\begin{definitionNI*}[Left and Right Quasiorders]\index{Left quasiorder}\index{Right quasiorder}
Let \(\qo\) be a quasiorder.
We say \(\qo\) is \emph{right monotone} (or equivalently, \(\qo\) is a \emph{right quasiorder}), and denote it by \(\qr\), if{}f 
\[u \qr v \Ra ua \qr va, \quad \text{ for all \(u,v \in Σ^*\) and \(a \in Σ\)} \enspace .\]

Similarly, we say \(\qo\) is a \emph{left quasiorder}, and denote it by \(\ql\), if{}f 
\[u \ql v \Ra au \ql av, \quad \text{ for all \(u,v \in Σ^*\) and \(a \in Σ\)}\enspace \tag*{\eod}\]
\end{definitionNI*}

A qoset \(\tuple{D,\mathord{\leqslant}}\) is a \demph{partially ordered set} (poset for short) when \(\mathord{\leqslant}\) is antisymmetric.
A subset $X\subseteq D$ of a poset is \demph{directed} if{}f $X$ is nonempty and every pair of elements in $X$ has an upper bound in $X$.

\begin{definition*}[Least Upper Bound]\index{lub}
Let \(\tuple{D,\qo}\) be a partially ordered set and let \(x, y \in D\).
The \emph{least upper bound} of \(x\) and \(y\) is the element \(z \in D\) such that 
\[x \qo z \land  y \qo z \land  \left(\forall d \in D, \; (x \qo d \land y \qo d)\Ra z \qo d \right)\enspace . \tag*{\eod}\]
\end{definition*}

\begin{definition*}[Greatest Lower Bound]\index{glb}
Let \(\tuple{D,\qo}\) be a partially ordered set and let \(x, y \in D\).
The \emph{greatest lower bound} of \(x\) and \(y\) is the element \(z \in D\) such that 
\[z \qo x \land z \qo y \land \left(\forall d \in D, \; (d \qo x \land d \qo y)\Ra d \qo z \right)\enspace . \tag*{\eod}\]
\end{definition*}

A poset \(\tuple{D,\mathord{\leqslant}}\) is a \demph{directed-complete partial order} (CPO for short) if{}f it has the least upper bound (lub for short) of all its directed subsets. 
A poset is a \demph{join-semilattice} if{}f it has the lub of all its nonempty finite subsets (therefore binary lubs are enough). 
A poset is a \demph{complete lattice} if{}f it has 
the lub of all its arbitrary (possibly empty) subsets; in this case, let us recall that 
it also has the greatest lower bound (glb for short) of all its arbitrary subsets. 

\paragraph{Well-quasiorders}

\begin{definition*}[Antichain]
Let \(\tuple{D,\mathord{\leqslant}}\) be a qoset.
A subset $X \subseteq D$ is an \emph{antichain} if{}f any two distinct elements in $X$ are incomparable. \eod
\end{definition*}

We denote the set of antichains of a qoset $\tuple{D,\mathord{\leqslant}}$ by 
\[\AC_{\tuple{D,\mathord{\leqslant}}} \ud \{X\subseteq D \mid X \text{ is an antichain}\}\enspace .\]

\begin{definition*}[Well-quasiorder]
Let \(\tuple{D,\mathord{\leqslant}}\) be a quasiordered set.
We say it is a \emph{well-quasiordered set} (wqoset for short), and $\mathord{\leqslant}$ is a \emph{well-quasiorder} (wqo for short), if{}f for every countably infinite  sequence of elements \(\{x_i\}_{i\in \bN}\) there exist \(i,j\in \bN\) such that \(i<j\) and \(x_i\leqslant x_j\). 

Equivalently, we say \(\tuple{D,\mathord{\leqslant}}\) is a well-quasiordered set if{}f $D$ is DCC and $D$  has no infinite antichain. \eod
\end{definition*}

For every qoset \(\tuple{D,\mathord{\leqslant}}\), we shift the quasiorder \(\qo\) to a binary relation $\sqsubseteq_{\leqslant}$ on the powerset as follows.
Given \(X,Y\in \wp(D)\), 

\[X\mindex{\sqsubseteq_{\leqslant}} Y \udr \forall x\in X, \exists y\in Y,\; y\leqslant x \enspace .
\] 

When the quasiorder is clear from the context, we drop the subindex and write simply \(\sqsubseteq\).
Given a qoset \(\tuple{D, \leqslant}\), we define the set of \demph{minimal elements} of a subset \(X \subseteq D\):
\[\mindex{\minim_{\leqslant}}(X) \ud \{x \in X \mid \forall y \in X, y \leqslant x \Ra y=x\}\enspace .\]

\begin{definition*}[Minor]
Let \(\tuple{D, \leqslant}\) be a qoset.
A \emph{minor} of a subset \(X \subseteq D\), denoted by \(\minor{X}\), is a subset of the minimal elements of \(X\) w.r.t.\ \(\leqslant\), i.e. \(\minor{X}\subseteq\minim_{\leqslant}(X)\), such that 
\(X \sqsubseteq \minor{X}\) holds.\eod
\end{definition*}

Clearly, a minor $\minor{X}$ of some set \(X\) is always an antichain. 

Let us recall that every subset $X$ of a wqoset \(\tuple{D,\leqslant}\) has at least one minor set, all minor sets of $X$ are finite, 
$\minor{\{x\}}=\{x\}$, $\minor{\varnothing}=\varnothing$, and
if \(\tuple{D,\mathord{\leqslant}}\) is additionally a poset then there exists exactly one minor set of $X$.
It turns out that \(\tuple{\AC_{\tuple{D,\mathord{\leqslant}}},\sqsubseteq}\) is a qoset which is ACC if \(\tuple{D,\leqslant}\) is a wqoset and is a poset if \(\tuple{D,\leqslant}\) is a poset.

\paragraph{Nerode Quasiorders}
\begin{definition*}[Nerode's Quasiorders]
Let \(L \subseteq Σ^*\) be a language.
The left and right \emph{Nerode's quasiorders} on \(\Sigma^*\) are, respectively
\begin{align*}\
      u\mindex{\qlL} v &\udrshort\; L u^{-1} \subseteq L v^{-1} \,,&
      u\mindex{\qrL} v &\udrshort\; u^{-1} L \subseteq v^{-1} L \tag*{\eod} 
\end{align*}
\end{definition*}

As shown by \citet{deLuca1994}, \(\mathord{\qlL}\) and \(\mathord{\qrL}\) are, respectively, left and right monotone and, if $L$ is regular then both \(\mathord{\qlL}\) and \(\mathord{\qrL}\) are wqos \citep[Theorem~2.4]{deLuca1994}. 

Furthermore, \citet{deLuca1994} showed that \(\mathord{\qlL}\) is maximum in the set of all left monotone quasiorders \(\ql\) that satisfy \(ρ_{\ql}(L) = L\).
Therefore, for every left quasiorder \(\ql\), if \(ρ_{\ql}(L) = L\) then \(x \ql y \Ra x \qlL y\).
Similarly holds for right quasiorders and the right Nerode quasiorder.

\section{Kleene Iterates}

Let \(\tuple{X,\leqslant}\) be a qoset and \(f:X \ra X\) be a function. 
The function $f$ is \emph{monotone} if{}f $x\qo y$ implies $f(x) \leqslant f(y)$. 
Given \(b\in X\), the trace of values of the variable \(x\in X\) computed by the following iterative procedure: 
\[
\mindex{\Kleene}(f,b) \ud \left\{ \begin{array}{l}
x:=b; \\
\textbf{while~} f(x) \neq x \textbf{~do~} x:=f(x);\\
\textbf{return~} x;
\end{array}
\right.
\]
provides the possibly infinite sequence of so-called 
\demph{Kleene iterates} of the function \(f\) starting from the basis \(b\). 

Whenever \(\tuple{X,\leqslant}\) is an ACC (resp. DCC) CPO, \(b\leqslant f(b)\) (resp. \(f(b)\leqslant b\)) and \(f\) is monotone
then, by Knaster-Tarski-Kleene fixpoint theorem, \(\Kleene(f,b)\) terminates and returns the least (resp.\ greatest) fixpoint of the function \(f\) which is greater (resp.\ lower) than or equal to \(b\). 
In particular,  if $\bot_X$ (resp.\ $\top_X$) is the least (resp. greatest) element of $X$ then 
\(\Kleene(f,\bot_X)\) (resp.\ \(\Kleene(f,\top_X)\)) computes 
the sequence of Kleene iterates that finitely converges to the least (resp.\ greatest) 
fixpoint of $f$, denoted by \(\lfp(f)\) (resp.\ \(\gfp(f)\)).

\begin{theorem}\label{theorem:Kleene}
Let \(\tuple{X, \qo}\) be an ACC CPO and let \(f:X \ra X\) be a monotone function.
Then \(\Kleene(f,\bot_X)\) terminates and returns the least fixpoint of \(f\).
\end{theorem}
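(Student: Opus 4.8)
The plan is to recognise this as the ACC instance of the Knaster--Tarski--Kleene fixpoint theorem and to verify exactly the two assertions: that $\Kleene(f,\bot_X)$ halts, and that the value returned is $\lfp(f)$. Write $x_0 \ud \bot_X$ and $x_{i+1} \ud f(x_i)$ for the sequence of values taken by the program variable $x$ inside the \texttt{while} loop, so that after $k$ iterations the variable holds $x_k$ and the loop exits precisely when some $x_n$ with $f(x_n)=x_n$ is reached.

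First I would show by induction on $i$ that the iterates form an ascending chain $x_0 \leqslant x_1 \leqslant x_2 \leqslant \cdots$. For the base case, $x_0 = \bot_X \leqslant x_1$ since $\bot_X$ is the least element of $X$. For the inductive step, if $x_i \leqslant x_{i+1}$ then monotonicity of $f$ gives $f(x_i) \leqslant f(x_{i+1})$, that is, $x_{i+1} \leqslant x_{i+2}$.

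Next, termination. Suppose for contradiction that the loop never exits; then $f(x_i)\neq x_i$, i.e.\ $x_i \neq x_{i+1}$, for every $i\in\bN$. Because $\langle X,\leqslant\rangle$ is a poset and the chain is ascending, $x_i = x_j$ for some $i<j$ would force $x_i = x_{i+1} = \cdots = x_j$, contradicting $x_i \neq x_{i+1}$; hence the $x_i$ are pairwise distinct. Thus $\{x_i\}_{i\in\bN}$ is an infinite sequence of distinct elements with $x_i \leqslant x_{i+1}$, contradicting the ACC hypothesis. Therefore some $x_n$ satisfies $f(x_n)=x_n$, the loop stops, and it returns $x_n$; note that the directed-completeness of $X$ is not used here, since ACC alone already forces the chain to stabilise.

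Finally I would check that $x_n$ is the \emph{least} fixpoint. It is a fixpoint by construction, as $f(x_n) = x_{n+1} = x_n$. For minimality, let $y\in X$ be any fixpoint, $f(y)=y$. By induction on $i$, $x_i \leqslant y$: the base case is $x_0 = \bot_X \leqslant y$, and if $x_i \leqslant y$ then $x_{i+1} = f(x_i) \leqslant f(y) = y$ by monotonicity. Hence $x_n \leqslant y$, so $x_n$ lies below every fixpoint and $x_n = \lfp(f)$. The only mildly delicate point in the whole argument is the pairwise-distinctness step needed to invoke ACC in the termination part; the rest is routine induction.
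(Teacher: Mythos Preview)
Your proof is correct and follows essentially the same route as the paper's: induction to get an ascending chain of iterates, ACC for termination, and a second induction against an arbitrary fixpoint for minimality. If anything, your termination argument is slightly more careful than the paper's, since you explicitly justify pairwise distinctness of the iterates (using antisymmetry of the poset) before invoking the ACC hypothesis, whereas the paper leaves that implicit.
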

\begin{proof}
To simplify the notation, we use \(\bot\) to denote the least element of \(X\), \(\bot_X\).
Next, we show by induction that \(f^n(\bot) \qo f^{n{+}1}(\bot)\) for all \(n \geq 0\).

\begin{myItem}
\item \emph{Base case:} The relation \(\bot \qo f(\bot)\) holds since \(\bot\) is the least element in \(X\).

\item \emph{Inductive step:} Assume \(f^n(\bot) \qo f^{n{+}1}(\bot)\) for some value \(n\).
Then, since \(f\) is a monotone function, we have that \(f^{n{+}1}(\bot) \qo f^{n{+}2}(\bot)\).
\end{myItem}
\medskip

We conclude that \(f^n(\bot) \qo f^{n{+}1}(\bot)\) holds for all \(n \geq 0\).
Since the qoset \(\tuple{X,\qo}\) is an ACC, there is no infinite sequence of ascending elements and, as a consequence, \(\Kleene(f,\bot)\) terminates and returns a fixpoint of function \(f\).

Next, we show that if \(f^n(\bot) = f^{n{+}1}(\bot)\) for some \(n\) then \(f^n(\bot) = \lfp(f)\). 
To do that, we show that \(f^{i}(\bot) \qo p\) for every \(i \geq 0\) and for every fixpoint \(p\) of \(f\).
Therefore, the fixpoint \(f^{n}(\bot)\) is below (for the quasiorder \(\qo\)) than any other fixpoint, hence \(f^n(\bot)\) is the least fixpoint of \(f\), i.e. \(f^n(\bot) =\lfp(f)\).

Again, we proceed by induction on \(n\).
Let \(p\) be a fixpoint of \(f\), i.e. \(f(p) = p\).

\begin{myItem}
\item \emph{Base case:} The relation \(\bot \qo p\) trivially holds by definition of \(\bot\).

\item \emph{Inductive step:} Assume \(f^n(\bot) \qo p\) for some value \(n\).
Then, since \(f\) is a monotone function, we have that \(f^{n{+}1}(\bot) \qo f(p) = p\), where the last equality follows from the fact that \(p\) is a fixpoint.
\end{myItem}
\medskip

Clearly, \(f^n(\bot) \qo p\) for all \(n \geq 0\) and for all fixpoint \(p\) of \(f\).
Therefore \(\Kleene(f, \bot) = \lfp(f)\).
\end{proof}

For the sake of clarity, we overload the notation and use the same symbol for a function/relation 
and its componentwise (i.e.\ pointwise) extension on product domains.
For instance, if $f:X\ra Y$ then $f$ also denotes the standard product function $f:X^n \ra Y^n$ defined by 
$\lambda\tuple{x_1,...,x_n}\in X^n.\tuple{f(x_1),...,f(x_n)}$. 
A vector \(\vect{Y}\) in some product domain \(D^{|S|}\) is also denoted by \(\tuple{Y_i}_{i \in S}\) and, for some $i\in S$, 
\(\vect{Y}_{\!\! i}\) denotes its component \(Y_i\).

\section{Closures and Galois Connections}
We conclude this chapter by recalling some basic notions on closure operators and Galois Connections commonly used in abstract interpretation (see, e.g., \cite{CC79,mine17}).  

Closure operators and Galois Connections are equivalent notions \cite{Cousot78-1-TheseEtat} and, therefore, they are both used for defining the notion of \emph{approximation} in abstract interpretation, where closure operators allow us to define and reason on abstract domains independently of a specific representation which is required by 
Galois Connections. 

\begin{definition*}[Upper Closure Operator]
Let \(\tuple{C,\mathord{\qo_C},\vee,\wedge}\) be a complete lattice, where $\vee$ and $\wedge$ denote, respectively, the lub and glb. 
An \emph{upper closure operator}, or simply \demph{closure}, on \(\tuple{C,\mathord{\qo_C}}\) is a function \(\rho:C\to C\) which is:
\begin{myEnumI}
\item \emph{monotone}, i.e. \(x \qo_C y \Ra ρ(x) \qo_C ρ(y) \) for all \(x,y \in C\);
\item \emph{idempotent}, i.e. \(\rho(\rho(x)) = \rho(x)\) for all \(x \in C\), and
\item \emph{extensive}, i.e. \(x \qo_C \rho(x)\) for all \(x \in C\). \eod
\end{myEnumI}
\end{definition*}
The set of all upper closed operators on \(C\) is denoted by \(\uco(C)\).
We often write \(c \in \rho(C)\), or simply \(c \in \rho\), to denote that  
there exists \(c' \in C\) such that \(c = \rho(c')\), and 
recall that this happens if{}f $\rho(c) = c$. 
If $\rho\in \uco(C)$ then for all  \(c_1\in C\), \(c_2\in \rho\)  and \(X \subseteq C\), 
it turns out that:
\begin{align}
&c_1 \qo_C c_2 \Lra \rho(c_1)\qo_C \rho(c_2) \Lra \rho(c_1)\qo_C c_2 \label{equation:abstractcheck}\\
&\rho ({\textstyle\vee} X) = \rho({\textstyle\vee}\rho(X)) \quad \text{and}\quad {\textstyle\wedge}\rho (X) = \rho({\textstyle\wedge}\rho(X))\enspace. \label{equation:lubAndGlb}
\end{align}
In abstract interpretation, a closure operator \(\rho\in \uco(C)\) on a concrete domain $C$ plays
the role of abstraction function for objects of $C$. Given two closures \(\rho,\rho' \in \uco(C)\), \(\rho\) is a 
\demph{coarser abstraction} 
than \(\rho'\) (or, equivalently, 
$\rho'$ is a more precise abstraction than $\rho$) if{}f the image of 
\(\rho\) is a subset of the image of \(\rho'\), i.e. \(\rho(C) \subseteq \rho'(C)\), and this happens if{}f for any $x\in C$, 
$\rho'(x) \qo_C \rho(x)$.

\begin{definition*}[Galois Connection]
A \emph{Galois Connection} (GC for short) or \emph{adjunction} between two posets \(\tuple{C,\qo_C}\) (a concrete domain) and \(\tuple{A,\qo_A}\) (an abstract domain) consists of two monotone functions \(\alpha\colon C\ra A\) and \(\gamma \colon A\ra C\) such that 
\[\alpha(c)\qo_A a \:\Lra\: c\qo_C \gamma(a), \quad \text{for all } a\in A, c \in C \enspace .\] 
A Galois Connection is denoted by \( \tuple{C,\qo_C} \galois{\alpha}{\gamma} \tuple{A,\qo_A}\).\eod
\end{definition*}
\pagebreak
\begin{lemma}\label{lemma:propertiesGC}
Let \(\tuple{C,\qo_C}\galois{\alpha}{\gamma}\tuple{A,\qo_A}\) be a GC.
The following properties hold:
\begin{myEnumA}
\item \(x \qo_C \gamma \comp \alpha (x)\) and \(\alpha \comp \gamma(y) \qo_A y\).\label{lemma:propertiesGC:gammaalpha}
\item \(\alpha\) and \(\gamma\) are monotonic functions.\label{lemma:propertiesGC:monotone}
\item \(\alpha = \alpha \comp \gamma \comp \alpha\) and \(\gamma = \gamma \comp \alpha \comp \gamma\).
\end{myEnumA}
\end{lemma}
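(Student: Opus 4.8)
The plan is to derive all three parts from the defining adjunction \(\alpha(c)\qo_A a \Lra c\qo_C\gamma(a)\), using only reflexivity of the two orders and, for part \ref{lemma:propertiesGC}(c), their antisymmetry (which holds since \(C\) and \(A\) are posets).

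First I would prove part \ref{lemma:propertiesGC:gammaalpha}. For the inequality \(x\qo_C\gamma\comp\alpha(x)\), instantiate the adjunction at \(c:=x\) and \(a:=\alpha(x)\): the left-hand side \(\alpha(x)\qo_A\alpha(x)\) holds by reflexivity of \(\qo_A\), so the equivalent right-hand side \(x\qo_C\gamma(\alpha(x))\) holds. Dually, for \(\alpha\comp\gamma(y)\qo_A y\), instantiate at \(a:=y\) and \(c:=\gamma(y)\): the right-hand side \(\gamma(y)\qo_C\gamma(y)\) holds by reflexivity of \(\qo_C\), hence the left-hand side \(\alpha(\gamma(y))\qo_A y\) holds. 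Part \ref{lemma:propertiesGC:monotone} is immediate, since monotonicity of \(\alpha\) and \(\gamma\) is already required in the definition of a Galois Connection; alternatively one may rederive it by chaining \(x\qo_C x'\) with part \ref{lemma:propertiesGC:gammaalpha} to get \(x\qo_C\gamma\alpha(x')\) and then applying the adjunction, and symmetrically for \(\gamma\).

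For part (c), apply the (monotone) map \(\alpha\) to the inequality \(x\qo_C\gamma\alpha(x)\) of part \ref{lemma:propertiesGC:gammaalpha}, obtaining \(\alpha(x)\qo_A\alpha\gamma\alpha(x)\). Conversely, instantiate the second inequality of part \ref{lemma:propertiesGC:gammaalpha} at \(y:=\alpha(x)\) to get \(\alpha\gamma\alpha(x)\qo_A\alpha(x)\). Antisymmetry of \(\qo_A\) then yields \(\alpha\gamma\alpha(x)=\alpha(x)\) for every \(x\), i.e.\ \(\alpha=\alpha\comp\gamma\comp\alpha\). The identity \(\gamma=\gamma\comp\alpha\comp\gamma\) is obtained symmetrically: apply \(\gamma\) to \(\alpha\gamma(y)\qo_A y\) to get \(\gamma\alpha\gamma(y)\qo_C\gamma(y)\), instantiate the first inequality of part \ref{lemma:propertiesGC:gammaalpha} at \(x:=\gamma(y)\) to get \(\gamma(y)\qo_C\gamma\alpha\gamma(y)\), and conclude by antisymmetry of \(\qo_C\).

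I do not expect a genuine obstacle here: the argument is a routine unfolding of the adjunction. The only points requiring care are bookkeeping of which of the two orders each inequality lives in, and noting that the poset (antisymmetry) hypothesis is exactly what is needed to upgrade the two one-sided inequalities in part (c) into the claimed equalities.
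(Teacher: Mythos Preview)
Your proposal is correct and follows essentially the same approach as the paper: both derive part~\ref{lemma:propertiesGC:gammaalpha} by instantiating the adjunction at reflexive inequalities, and both obtain part~(c) by combining monotonicity with part~\ref{lemma:propertiesGC:gammaalpha} and antisymmetry. Your observation that monotonicity is already part of the definition (making part~\ref{lemma:propertiesGC:monotone} formally trivial) is accurate, and your explicit mention of antisymmetry is a point the paper leaves implicit.
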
 
\begin{proof}

\begin{myEnumA}
\item Since \(\qo_A\) is reflexive, we have that for all \(x \in A\) \(\alpha(x) \qo_A \alpha(x)\) holds and, by definition of GC, \(\alpha(x) \qo_A \alpha(x) \Lra x \qo_C \gamma(\alpha(x))\).
Therefore, \(x \qo_C \gamma(\alpha(x))\).

Similarly, since \(\alpha(\gamma(y)) \qo_A y \Lra \gamma(y) \qo_C \gamma(y)\) and \(\gamma(y) \qo_C \gamma(y)\), we conclude that \(\alpha(\gamma(y)) \qo_A y\).

\item Let \(c,c' \in C\) be such that $c\qo_C c'$. Then, by Lemma~\ref{lemma:propertiesGC}~\ref{lemma:propertiesGC:gammaalpha}, we have that $c' \qo_C \gamma(\alpha(c'))$ and, by definition of GC, $c \qo_C  \gamma(\alpha(c')) \Ra \alpha(c) \qo_A \alpha(c')$.

Similarly, let \(a,a' \in A\) be such that $a\qo_A a'$. Then, by Lemma~\ref{lemma:propertiesGC}~\ref{lemma:propertiesGC:gammaalpha}, we have that $α(γ(a)) \qo_A a'$, hence $α(γ(a) \qo_A a' \Ra γ(a) \qo_A γ(a')$.

\item Let \(c \in C\).
By Lemma~\ref{lemma:propertiesGC}~\ref{lemma:propertiesGC:gammaalpha}, we have that \(c\qo_C \gamma(\alpha(c))\) which, by Lemma~\ref{lemma:propertiesGC}~\ref{lemma:propertiesGC:monotone}, implies that $\alpha(c) \qo_A \alpha(\gamma(\alpha(c)))$. 
Moreover, since \(\gamma(\alpha(c))\qo_C \gamma(\alpha(c))\), it follows from the definition of GC that \(\alpha(\gamma(\alpha(c))) \qo_A \alpha(c)\).
Therefore \(\alpha(\gamma(\alpha(c))) = \alpha(c)\).

Similarly, let \(a \in A\).
By Lemma~\ref{lemma:propertiesGC}~\ref{lemma:propertiesGC:gammaalpha} and~\ref{lemma:propertiesGC:monotone}, we have that $γ(α(γ(a))) \qo_A γ(a)$ and, since \(α(γ(a))\qo_C α(γ(a))\), it follows from the definition of GC that \(γ(a) \qo_A γ(α(γ(a)))\).
Therefore \(γ(a) = γ(α(γ(a)))\).
\end{myEnumA}
\end{proof}

The function $\alpha$ is called the \demph{left-adjoint} of $\gamma$, and, dually, $\gamma$ is called the \demph{right-adjoint} of $\alpha$. 
This terminology is justified by the fact that if a function $\alpha:C\ra A$ admits a right-adjoint $\gamma:A\ra C$ then this is unique (and this dually holds for left-adjoints).

It turns out that, in a GC, \(\gamma\) is always \demph{co-additive}, i.e. it preserves arbitrary glb's, while \(\alpha\) is always \demph{additive}, i.e. it preserves arbitrary lub's. 
Moreover, an additive function \(\alpha : C\ra A\) uniquely determines its right-adjoint by 
\[\gamma\ud \lambda a\ldotp \bigvee_C\{c\in C \mid \alpha(c)\qo_A a\}\enspace .\] 
Dually, a co-additive function \(\gamma: A\ra C\) uniquely determines its left-adjoint by 
\[\alpha \ud \lambda c\ldotp \bigwedge_A\{a\in A \mid c\qo_C \gamma(a)\}\enspace .\] 
We conclude this chapter with the following lemma, which is folklore in abstract interpretation yet we provide a proof for the sake of completeness. 
\begin{lemma}\label{lemma:alpharhoequality}
Let \( \tuple{C,\qo_C} \galois{\alpha}{\gamma} \tuple{A,\qo_A}\) be a GC between complete lattices and 
\(f\colon C\rightarrow C\) be a monotone function. Then,
\(
	\gamma( \lfp (\alpha  f \gamma )) = \lfp (\gamma \alpha f)
\).	
\end{lemma}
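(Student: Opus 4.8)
The plan is to compute both least fixpoints with the Knaster--Tarski theorem and then show that $\gamma$ carries one onto the other. Write $h \ud \alpha f \gamma$ and $k \ud \gamma \alpha f$; both are monotone, since $\alpha$ and $\gamma$ are monotone by Lemma~\ref{lemma:propertiesGC} and $f$ is monotone by hypothesis, so in the complete lattices $A$ and $C$ the elements $a^{*} \ud \lfp(h)$ and $c^{*} \ud \lfp(k)$ exist. The goal $\gamma(a^{*}) = c^{*}$ then follows from two opposite inequalities together with antisymmetry of $\qo_C$.

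For $c^{*} \qo_C \gamma(a^{*})$ I would simply verify that $\gamma(a^{*})$ is a fixpoint of $k$: since $h(a^{*}) = a^{*}$,
\[k(\gamma(a^{*})) = \gamma(\alpha f \gamma(a^{*})) = \gamma(h(a^{*})) = \gamma(a^{*})\enspace ,\]
and minimality of $c^{*}$ gives the inequality. For the reverse inequality $\gamma(a^{*}) \qo_C c^{*}$ the key intermediate fact is that $c^{*}$ is \emph{fixed} by $\gamma\alpha$, not merely below it. To obtain this, apply $\alpha$ to the fixpoint identity $c^{*} = \gamma\alpha f(c^{*})$ and use $\alpha \comp \gamma \comp \alpha = \alpha$ (Lemma~\ref{lemma:propertiesGC}), which yields $\alpha(c^{*}) = \alpha f(c^{*})$; applying $\gamma$ to this last equality gives $\gamma\alpha(c^{*}) = \gamma\alpha f(c^{*}) = c^{*}$. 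Now $\alpha(c^{*})$ is a fixpoint of $h$, because
\[h(\alpha(c^{*})) = \alpha f(\gamma\alpha(c^{*})) = \alpha f(c^{*}) = \alpha(c^{*})\enspace ,\]
so $a^{*} \qo_A \alpha(c^{*})$ by minimality of $a^{*}$, and monotonicity of $\gamma$ then gives $\gamma(a^{*}) \qo_C \gamma\alpha(c^{*}) = c^{*}$. Combining the two inequalities yields $\gamma(a^{*}) = c^{*}$.

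The step I expect to present most carefully — the only one that is not pure composition bookkeeping — is the derivation $\gamma\alpha(c^{*}) = c^{*}$. The general inequality $c^{*} \qo_C \gamma\alpha(c^{*})$ from Lemma~\ref{lemma:propertiesGC} only gives one half for free; the equality genuinely hinges on $\alpha\comp\gamma\comp\alpha = \alpha$, and it is exactly what upgrades $\alpha(c^{*})$ from a post-fixpoint of $h$ to an honest fixpoint, making the minimality argument for $a^{*}$ applicable. Everything else — monotonicity of the composites, existence of the least fixpoints, and the final appeal to antisymmetry in the poset $C$ — is routine.
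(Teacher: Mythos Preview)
Your proof is correct and follows essentially the same route as the paper's: both establish the two inequalities separately by exhibiting $\gamma(a^{*})$ as a (pre-)fixpoint of $k$ and $\alpha(c^{*})$ as a (pre-)fixpoint of $h$, with the identity $\alpha\comp\gamma\comp\alpha=\alpha$ doing the real work in the harder direction. The only cosmetic difference is that you isolate and prove the equality $\gamma\alpha(c^{*})=c^{*}$ explicitly and then show \emph{exact} fixpoints, whereas the paper threads the same fact through a chain of inequalities and invokes the pre-fixpoint form of Knaster--Tarski ($g(x)\qo x\Rightarrow\lfp(g)\qo x$); the underlying argument is identical.
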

\begin{proof}
Let us first show that \(\gamma( \lfp (\alpha f\gamma) ) \geqslant_C \lfp (\gamma\alpha f) \):
\begin{align*}
\gamma(\lfp(\alpha f \gamma)) \qo_C \gamma(\lfp(\alpha f\gamma)) &\Lra \quad\text{[Since \(g(\lfp(g))=\lfp(g)\)]}\\
\gamma\alpha f(\gamma(\lfp(\alpha f \gamma))) \qo_C \gamma(\lfp(\alpha f \gamma))&\Rightarrow \quad\text{[Since $g(x)\qo x \Ra \lfp(g)\qo x$]}\\
\lfp(\gamma\alpha f)\qo_C \gamma(\lfp(\alpha f \gamma)) &
\end{align*}
	Then, let us prove that \(\gamma( \lfp (\alpha f\gamma) ) \qo_C \lfp (\gamma\alpha f) \):
\begin{align*}
\lfp(\gamma\alpha f)\qo_C \lfp(\gamma\alpha f) & \Lra \quad\text{[Since $g(\lfp(g))=\lfp(g)$]}\\
\gamma \alpha f(\lfp(\gamma\alpha f)) \qo_C \lfp(\gamma\alpha f) & \Ra \quad\text{[Since $\alpha$ is monotone]}\\
\alpha \gamma \alpha f(\lfp(\gamma\alpha f)) \qo_A \alpha (\lfp(\gamma\alpha f)) & \Lra \quad\text{[Since $\alpha\gamma\alpha=\alpha$ in GCs]}\\
 \alpha f(\lfp(\gamma\alpha f)) \qo_A \alpha (\lfp(\gamma\alpha f)) & \Lra \quad\text{[Since $g(\lfp(g))=\lfp(g)$]} \\
\alpha f \gamma(\alpha(\lfp(\gamma\alpha f)))\qo_A \alpha(\lfp(\gamma\alpha f))&\Ra\quad\text{[Since $g(x)\qo x \Ra \lfp(g)\qo x$]}\\
\lfp(\alpha f\gamma) \qo_A \alpha(\lfp (\gamma\alpha f)) &\Ra\quad\text{[Since $\gamma$ is monotone]}\\
\gamma(\lfp(\alpha f \gamma)) \qo_C \gamma\alpha(\lfp(\gamma\alpha f)) &\Lra\quad\text{[Since $g(\lfp(g))=\lfp(g)$]}\\
\gamma(\lfp (\alpha f\gamma))	\qo_C \lfp(\gamma\alpha f)
\end{align*}
\end{proof}

\section{Complexity Notation}
In this thesis we analyze the time and space complexity of some algorithms and constructions.
To do that, we use the standard small-O, big-O and big-Omega notation to compare functions.
Next, we define these notations for the shake of completeness, where, given a real number \(k\), we write \(\len{k}\) to denote its absolute value.

\begin{definitionNI*}[Small-O, Big-O, Big-Omega]\index{Small-O}\index{Big-O}\index{Big-Omega}
Let \(f\) and \(g\) be two functions on the real numbers.
Then
\begin{align*}
f(n) = o(g(n)) &  \udiff  \forall k > 0, \exists n_0, \; \forall n > n_0, f(n) \leq k\cdot g(n)   \udiff  \lim_{n \to \infty} \frac{f(n)}{g(n)} = 0\\
f(n) = \mathcal{O}(g(n)) &  \udiff  \exists k > 0, \exists n_0, \; \forall n > n_0, f(n) \leq k\cdot g(n)   \udiff  \limsup_{n \to \infty} \frac{f(n)}{g(n)} < \infty \\
f(n) =  Ω(g(n)) &  \udiff  \exists k >0, \exists n_0, \; \forall n > n_0, f(n) \geq k \cdot g(n)   \udiff  \liminf_{n \to \infty} \frac{f(n)}{g(n)} > 0 
\end{align*}

Intuitively, \(f(n) = o(g(n))\) indicates that \(f\) is \emph{asymptotically dominated} by \(g\); \(f(n) = \mathcal{O}(g(n))\) indicates that \(f\) is \emph{asymptotically bounded above} by \(g\) and \(f(n) = Ω(g(n))\) indicates that \(f\) is \emph{asymptotically bounded below} by \(g\).\eod
\end{definitionNI*}

These notations allow us to simplify the complexity analysis by removing all components of low impact in a complexity function. 
For instance, let the number of operations performed by an algorithm on an input of size \(n\) be given by a function \(f(n)\) that satisfies
\[n^2 + n\cdot \log n+k \leq f(n) \leq n^3+n^2+\log n + k'\enspace ,\]
where \(k\) and \(k'\) are constants.
Since, by definition, \(n^2 = o(n^3)\), \(\log(n) = o (n^3)\) and \(k' = o(n^3)\) we find that the components \(n^2\), \(\log n\) and \(k'\) have low impact in the behavior of the upper bound of \(f(n)\) for large values of \(n\).
Similarly, the components \(n \cdot \log(n)\) and \(k\) have low impact in the lower bound of \(f(n)\) for large values of \(n\).
Therefore, we find that \(\mathcal{O}(f(n)) = \mathcal{O}(n^3+n^2+\log n + k') = \mathcal{O}(n^3)\) and \(Ω(f(n)) = Ω(n^2 + n\cdot \log n+k) = Ω(n^2)\).
Intuitively, this means that for large values of the parameter \(n\) the function \(f(n)\) is \emph{below} \(n^3\) and \emph{above} \(n^2\).
\clearpage{}%
\clearpage{}%
%

\chapter{Deciding Language Inclusion}
\label{chap:LangInc}

In this chapter, we present a quasiorder-based framework for deciding language inclusion which is a fundamental and classical problem~\cite[Chapter~11]{HU79} with applications to different areas of computer science.

The basic idea of our approach for solving a language inclusion problem $L_1\subseteq L_2$ is to leverage
Cousot and Cousot's abstract interpretation \cite{CC77,CC79} for checking the inclusion of an over-approximation (i.e.\ a superset) of \(L_1\) into \(L_2\). 
This idea draws inspiration from the work of \citet{Hofmann2014}, who used abstract interpretation to decide language inclusion between languages of infinite words.

Assuming that \(L_1\) is specified as least fixpoint of an equation system on $\wp(\Sigma^*)$, an over-approximation of $L_1$ is obtained by applying an over-approximating abstraction function for sets of words \(\rho:\wp(\Sigma^*)\ra \wp(\Sigma^*)\) at each step of the Kleene iterates converging to the least fixpoint $L_1$. 
This abstraction map \(\rho\) is an upper closure operator which is used in standard abstract interpretation for approximating an input language by adding words (possibly none) to it. 

This abstract interpretation-based approach provides an abstract inclusion check $\rho(L_1) \subseteq L_2$ which is always \emph{sound} by construction because $L_1 \subseteq \rho(L_1)$.
We then give conditions on \(\rho\) which ensure a \emph{complete} abstract inclusion 
check, namely, the answer to $\rho(L_1) \subseteq L_2$ is always exact (no ``false alarms'' in abstract interpretation terminology).
These conditions are:%
\begin{myEnumIL}
\item \(\rho(L_2)=L_2\) and\label{condition:rhoL}
\item \(\rho\) is a complete abstraction for symbol concatenation $\lambda X\in \wp(\Sigma^*).\,aX$, for all $a\in \Sigma$, 
according to the standard notion of completeness in abstract interpretation \cite{CC77,GiacobazziRS00,Ranzato13}.\label{condition:rhobw}%
\end{myEnumIL}%
This approach leads us to design in  Section~\ref{sec:an_algorithmic_framework_for_language_inclusion_based_on_complete_abstractions} 
two general algorithmic frameworks for language inclusion problems which are parameterized by an underlying language abstraction (see Theorems~\ref{theorem:FiniteWordsAlgorithmGeneral} and~\ref{theorem:EffectiveAlgorithm}).
Intuitively, the first of these frameworks allows us to decide the inclusion \(L_1 \subseteq L_2\) by manipulating finite sets of words, even if the languages \(L_1\) and \(L_2\) are infinite.
On the other hand, the second framework allows us to decide the inclusion by working on an abstract domain.

We then focus on over-approximating abstractions $\rho$ which are 
induced by a quasiorder relation $\mathord{\leqslant}$ on words in $\Sigma^*$. Here, a language \(L\) is over-approximated by adding all the words which are ``greater than or equal to'' some word of \(L\) for $\mathord{\leqslant}$. This allows us to 
instantiate the above conditions~\ref{condition:rhoL} and~\ref{condition:rhobw} 
for having a complete abstract inclusion check in terms of the quasiorder $\mathord{\leqslant}$.
Termination, which corresponds to having finitely many Kleene iterates in the fixpoint computations, 
is guaranteed by requiring that 
the relation $\mathord{\leqslant}$ is a well-quasiorder.

We define quasiorders satisfying the above conditions which are directly derived from the standard Nerode equivalence relations on words.
These quasiorders have been first investigated by \citet{ehrenfeucht_regularity_1983} and have been later generalized and extended by \citet{deLuca1994,deluca2011}.
In particular, drawing from a result by \citet{deLuca1994}, we show that the language abstractions induced by the Nerode's quasiorders are the most general ones (thus, intuitively optimal) which fit in our algorithmic framework for checking language inclusion.  

While these quasiorder abstractions do not depend on some language representation (e.g., some class of 
automata), 
we provide quasiorders which instead exploit an underlying language representation given by a finite automaton.
In particular, by selecting suitable well-quasiorders for the class of language inclusion problems at hand, we are able to systematically derive 
decision procedures for the inclusion problem 
$L_1\subseteq L_2$ when:\begin{myEnumIL}
\item both \(L_1\) and \(L_2\) are regular,
\item \(L_1\) is regular and \(L_2\) is the trace language of a one-counter net and
\item \(L_1\) is context-free and \(L_2\) is regular.
\end{myEnumIL}

These decision procedures that we systematically derive here by instantiating our framework 
are then related to existing language inclusion checking algorithms. 
We study in detail the case where both languages $L_1$ and $L_2$ are regular and represented by finite-state automata. 
When our decision procedure for $L_1\subseteq L_2$ is derived from 
a well-quasiorder on $\Sigma^*$ by exploiting the automaton-based representation of \(L_2\), it turns out that 
we obtain the well-known ``antichains algorithm'' by \citet{DBLP:conf/cav/WulfDHR06}. 
Also, by including a simulation relation in the definition of the well-quasiorder we derive a decision procedure that partially matches the language inclusion algorithm by \citet{Abdulla2010}, and in turn also that by \citet{DBLP:conf/popl/BonchiP13}.
For the case in which \(L_1\) is regular and \(L_2\) is the set of traces of a one-counter net we derive an alternative proof for the decidability of the language inclusion problem~\cite{JANCAR1999476}.
Moreover, for the case in which \(L_1\) is context-free and \(L_2\) is regular, we derive a decision procedure that matches the ``antichains algorithm'' for context-free languages presented by \citet{Holk2015}.
\\
\indent
Finally, we leverage a standard duality result~\cite{cou00} and put forward a \emph{greatest} fixpoint approach (instead of the above \emph{least} fixpoint-based procedures) for the case where both \(L_1\) and \(L_2\) are regular languages.
In this case, we exploit the properties of the over-approximating abstraction induced by the quasiorder  in order to show that the Kleene iterates of this greatest fixpoint computation are finitely many.
Interestingly, the Kleene iterates of the greatest fixpoint are finitely many whether you apply the over-approximating abstraction or not, which we show by relying on so-called forward complete abstract interpretations~\cite{gq01}. 

\section{Inclusion Check by Complete Abstractions}%
\label{sec:inclusion_checking_by_complete_abstractions}

The language inclusion problem consists in checking whether \(L_1 \!\subseteq\! L_2\) holds where \(L_1\) and \(L_2\) are two languages over a common alphabet \(\Sigma\).
In this section, we show how complete abstractions $\rho$ of $\wp(\Sigma^*)$ can be used to compute 
an over-approximation \(\rho(L_1)\) of \(L_1\) such that \(\rho(L_1) \!\subseteq\! L_2 \Lra L_1 \!\subseteq\! L_2\).

Closure-based abstract interpretation can be applied to solve a generic inclusion problem by leveraging backward complete abstractions~\cite{CC77,CC79,GiacobazziRS00,Ranzato13}. 
An upper closure \(\rho\in \uco(C)\) is called \demph{backward complete}
for a concrete 
monotone function \(f:C\ra C\) when \( \rho f=\rho f \rho \) holds. Since $\rho f(c) \leq_C \rho f \rho(c)$ always holds for all $c\in C$, 
the intuition is that 
backward completeness models an ideal situation where no loss of precision
is accumulated in the computations of $\rho f$ when 
its concrete input objects $c$ are over-approximated by $\rho(c)$. 
It is well known~\cite{CC79} 
that backward completeness implies completeness of least fixpoints, namely 
\begin{equation} \label{eqn:lfpcompleteness}
\rho f=\rho f \rho \;\Ra\;
\rho(\lfp(f))=\lfp(\rho f) = \lfp(\rho  f \rho)
\end{equation}
provided that these least fixpoints exist (this is the case, for instance, when $C$ is a CPO).  
Theorem~\ref{theorem:inc-check-comp-abs} states how  
a concrete inclusion check \(\lfp(f) \leq_C c_2\) can be equivalently performed 
in a backward complete abstraction \(\rho\) when  \(c_2\in \rho\).

\begin{theorem}\label{theorem:inc-check-comp-abs}
If $C$ is a CPO, \(f: C\ra C\) is monotone, $\rho\in \uco(C)$ is backward complete for \(f\) and \(c_2\in \rho\), then
\[\lfp(f) \leq_C c_2 \Lra \lfp(\rho f) \leq_C c_2 \enspace .\]
In particular, if \(\tuple{C,\leq_C}\) is ACC then the Kleene iterates of \ \(\lfp(\rho f)\) are finitely many. 
\end{theorem}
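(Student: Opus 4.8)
The plan is to derive everything from the backward-completeness transfer principle already recorded as \eqref{eqn:lfpcompleteness}, namely that $\rho f = \rho f \rho$ implies $\rho(\lfp(f)) = \lfp(\rho f) = \lfp(\rho f \rho)$. First I would check that this principle is actually applicable here: all three least fixpoints exist because $C$ is a CPO and the maps $f$, $\rho f$ and $\rho f \rho$ are monotone, being composites of the monotone maps $f$ and $\rho \in \uco(C)$. So from the hypothesis of backward completeness of $\rho$ for $f$ we get the identity $\rho(\lfp(f)) = \lfp(\rho f)$, and the biconditional then follows by a short computation in each direction.

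For the implication $\lfp(f) \leq_C c_2 \Ra \lfp(\rho f) \leq_C c_2$, I would apply the monotone map $\rho$ to both sides of $\lfp(f) \leq_C c_2$, obtaining $\rho(\lfp(f)) \leq_C \rho(c_2)$; since $c_2 \in \rho$ we have $\rho(c_2) = c_2$, and by \eqref{eqn:lfpcompleteness} the left-hand side is $\lfp(\rho f)$, whence $\lfp(\rho f) \leq_C c_2$. For the converse implication I would use that $\rho$ is extensive, so $\lfp(f) \leq_C \rho(\lfp(f))$; combining this with $\rho(\lfp(f)) = \lfp(\rho f)$ and the hypothesis $\lfp(\rho f) \leq_C c_2$ yields $\lfp(f) \leq_C c_2$. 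Note that this last direction is the ``soundness'' part and in fact needs neither backward completeness nor $c_2 \in \rho$: it holds for every upper closure $\rho$, since $\lfp(f) \leq_C \rho(\lfp(f))$ always.

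For the final claim, suppose in addition that $\langle C, \leq_C\rangle$ is ACC. I would observe that $\rho f$ is a monotone self-map of the ACC CPO $C$, so Theorem~\ref{theorem:Kleene} applies verbatim with $\rho f$ in the role of $f$: the Kleene iterates $\{(\rho f)^n(\bot_C)\}_{n \geq 0}$ form an ascending chain which, by the ACC hypothesis, stabilizes after finitely many steps and converges to $\lfp(\rho f)$. Hence the Kleene iterates of $\lfp(\rho f)$ are finitely many.

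I do not expect a genuine obstacle: the substantive ingredient is the cited fact \eqref{eqn:lfpcompleteness} (backward completeness implies completeness of least fixpoints, from \cite{CC79}), which the statement is entitled to use. The only points that require a word of care are (i) verifying that the least fixpoints occurring in the argument actually exist, so that \eqref{eqn:lfpcompleteness} is legitimately invoked — this is exactly what ``$C$ is a CPO'' together with monotonicity of the relevant composites buys us — and (ii), for the ACC part, noting that $\rho f$ is monotone so that Theorem~\ref{theorem:Kleene} can be instantiated with $\rho f$.
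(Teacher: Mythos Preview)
Your proposal is correct and follows essentially the same route as the paper: both arguments hinge on \eqref{eqn:lfpcompleteness} to identify $\rho(\lfp(f))$ with $\lfp(\rho f)$, combined with $c_2\in\rho$ (the paper packages both directions into a single chain of biconditionals via \eqref{equation:abstractcheck}, while you split into two implications, but the substance is identical). One small wrinkle: your parenthetical claim that the converse direction needs neither backward completeness nor $c_2\in\rho$ is true, but the argument you actually wrote for it still invokes $\rho(\lfp(f))=\lfp(\rho f)$, which \emph{does} use backward completeness; the completeness-free justification would instead be that $f\leq_C \rho f$ pointwise (by extensivity) forces $\lfp(f)\leq_C\lfp(\rho f)$.
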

\begin{proof}
First, we show that \(\lfp(f) \leq_C c_2 \Lra \lfp(\rho f) \leq_C c_2\).
\begin{align*}
\lfp(f) \leq_C c_2 &\Lra \quad\text{[Since $c_2\in \rho$]}\\
\lfp(f) \leq_C \rho(c_2) &\Lra \quad\text{[Since $x\leq \rho(y) \Lra \rho(x)\leq \rho(y)$]}\\
\rho(\lfp(f)) \leq_C \rho(c_2) &\Lra \quad\text{[By Equation~\eqref{eqn:lfpcompleteness}]}\\
\lfp(\rho f) \leq_C \rho(c_2) &\Lra \quad\text{[Since $c_2\in \rho$]}\\
\lfp(\rho f) \leq_C c_2 &
\end{align*}
It remains to prove that the Kleene iterates of \(\lfp(ρf)\) are finitely many.
Observe that, since \(ρ\) and \(f\) are monotone and \(\bot \leq_C ρf(\bot)\), we have that 
\[(ρf)^n(\bot) \leq_C (ρf)^{n{+}1}(\bot) \text{ for all \(n \geq 1\)}\enspace .\]
If \(\tuple{C,\leq_C}\) is ACC then, by definition, there are no infinite ascending chains, hence the sequence of Kleene iterates \[\bot \leq_C ρf(\bot) \leq_C (ρf)^2(\bot) \leq_C \ldots \leq_C (ρf)^n(\bot)\]
converges in finitely many steps. 
\end{proof}

In the following, we will apply this general abstraction scheme to a number of
different language inclusion problems, by designing inclusion algorithms which rely on 
several different backward complete abstractions of \(\wp(\Sigma^*)\).

\section{An Algorithmic Framework for Language Inclusion}%
\label{sec:an_algorithmic_framework_for_language_inclusion_based_on_complete_abstractions}

\subsection{Languages as Fixed Points}%
\label{sub:languages_as_fixpoints}

Let \(\cN=\tuple{Q,Σ,\delta,I,F}\) be an NFA.
Recall that the language accepted by \(\cN\) is given by \(\lang{\cN} \ud W^{\cN}_{I,F}\) and, therefore,
\begin{equation}%
\label{eq:unionofrightlg}
\lang{\cN}={\textstyle\bigcup_{q\in I}} W^\cN_{q,F}={\textstyle\bigcup_{q\in F}} W^\cN_{I,q}\enspace
\end{equation}
\noindent
where, as usual, \(\textstyle{\bigcup \varnothing} = \varnothing\).

Let us recall how to define the language accepted by an automaton as a solution of a set of equations~\cite{Schutzenberger63}.
To do that, given a generic boolean predicate \(p(x)\) (typically a membership predicate) on some set and two generic sets $T$ and $F$, we define the following parametric 
choice function:\[
\nullable{p(x)}{T}{F} \ud \begin{cases}
		T & \text{if \(p(x)\) holds} \\
		F & \text{otherwise}
\end{cases} \enspace .\]

The NFA \(\cN\) induces the following set of equations, where the $X_q$'s 
are variables of type $X_q\in \wp(\Sigma^*)$ and are indexed by states $q\in Q$:
\begin{equation}\label{leftEqn}
	\Eqn(\cN) \ud \{ X_q = \nullable{q \in F}{\lbrace\epsilon\rbrace}{\varnothing} \cup {\textstyle \bigcup_{a\in \Sigma, q'\in\delta(q,a)}} a X_{q'} \mid  q\in Q\} \enspace .
\end{equation}

It follows that the functions in the right-hand side of the equations in
\(\Eqn(\cN)\) have
type \(\wp(\Sigma^*)^{|Q|} \ra \wp(\Sigma^*)\).
Since \(\tuple{\wp(\Sigma^*)^{|Q|},\subseteq}\) is a (product) complete lattice (because \(\tuple{\wp(\Sigma^*),\subseteq}\) is a complete lattice) and all the right-hand side functions in \(\Eqn(\cN)\) are clearly monotone, 
the least solution \(\tuple{Y_q}_{q\in Q}\in \wp(\Sigma^*)^{|Q|}\) of \(\Eqn(\cN)\) does exist and it is easy to check  
that for every \(q\in Q\), \(Y_q = W^{\cN}_{q,F}\) holds, hence, by Equation~\eqref{eq:unionofrightlg}, \(\lang{\cN} = {\textstyle\bigcup_{q_i \in I}} Y_{q_i}\). 

It is worth noticing that, by relying on right concatenations rather than left ones 
$aX_{q'}$ used 
in \(\Eqn(\cN)\), one could also define a
set of symmetric equations whose least solution coincides with \(\tuple{W_{I,q}^{\cN}}_{q\in Q}\) instead of \(\tuple{W_{q,F}^{\cN}}_{q\in Q}\).

\begin{figure}[!ht]
\centering
\begin{tikzpicture}[->,>=stealth',shorten >=1pt,auto,node distance=5mm and 1cm,thick,initial text=]
	\tikzstyle{every state}=[scale=0.75,fill=customblue!60,draw=blue!60,text=black]
	
	\node[initial,accepting,state] (1) {\(1\)};
	\node[state] (2) [right=of 1] {\(2\)};
	
	\path (1) edge[bend left] node {\(b\)} (2)
	      (2) edge[bend left] node {\(a\)} (1)
	      (2) edge[loop above] node {\(b\)} (2)
	      (1) edge[loop above] node {\(a\)} (1)
	          ;
	\end{tikzpicture}
	\caption{An NFA \(\cN\) with \(\lang{\cN}= (a + (b^+ a))^*\).}
		\label{fig:A}
\end{figure}

\begin{example}\label{ex-first}
	Let us consider the automaton \(\cN\) in Figure~\ref{fig:A}. 	
The set of equations induced by \(\cN\) are as follows: 
\[
	\Eqn(\cN)=\begin{cases}
		X_1 = \{\epsilon\} \cup aX_1 \cup bX_2\\
		X_2 = \varnothing \cup aX_1 \cup b X_2 
	\end{cases} \enspace . \tag*{\eox}
\]
\end{example}

It is convenient 
to state  the equations in \(\Eqn(\cN)\) by exploiting
an ``initial'' vector \(\vectarg{\epsilon}{F} \in \wp(\Sigma^*)^{|Q|}\) and a predecessor
function \(\Pre_\cN \colon \wp(\Sigma^*)^{|Q|} {\ra} \wp(\Sigma^*)^{|Q|}\) de-fined as follows:
\begin{align*}
\vectarg{\epsilon}{F} &\ud \tuple{\nullable{q \in F}{\{\epsilon\}}{\varnothing}}_{q\in Q}\,, &\qquad
\Pre_\cN(\tuple{X_{q}}_{q\in Q}) &\ud \tuple{ {\textstyle \bigcup_{a\in \Sigma, q'\in\delta(q,a)}} aX_{q'}}_{q\in Q} \enspace .
\end{align*}

The intuition for the function \(\Pre_{\cN}\) is that given the language \(W_{q',F}^{\cN}\) and a transition \(q' \!\in\! \delta(q,a) \), we have that \(aW^{\cN}_{q',F} \subseteq W^{\cN}_{q,F}\) holds, i.e. given a subset $X_q'$ of the language generated by \(\cN\) from some state \(q'\), the function \(\Pre_{\cN}\) computes a subset $X_q$ of the language generated by \(\cN\) for its predecessor state \(q\).

Since \(\epsilon \in W_{q,F}^{\cN}\) for all \(q \in F\), the least fixpoint computation can start from the vector 
\(\vectarg{\epsilon}{F}\) and iteratively apply $\Pre_\cN$.
Therefore
\begin{equation}\label{eq:WqFAequalslfp}
\tuple{W^{\cN}_{q,F}}_{q\in Q} = \lfp(\lambda \vect{X}\ldotp \vectarg{\epsilon}{F} \cup \Pre_\cN(\vect{X})) \enspace .
\end{equation}

Together with Equation~\eqref{eq:unionofrightlg}, it follows that \(\lang{\cN}\) equals the union of the component languages of the vector 
\(\lfp(\lambda \vect{X}\ldotp \vectarg{\epsilon}{F} \cup \Pre_\cN(\vect{X}))\) indexed by the initial states in $I$. 

\begin{example}[Continuation of Example~\ref{ex-first}]
The fixpoint characterization of \(\tuple{W_{q,F}^{\cN}}_{q\in Q}\) is:
\[
	\left( \begin{array}{c}
		 W^{\cN}_{q_1,q_1} \\ W^{\cN}_{q_2,q_1}
	\end{array} \right) =
	\lfp\biggl(\lambda \left( \begin{array}{c}
		X_1 \\ X_2
	\end{array} \right) .
	\left(\begin{array}{c}
			\{\epsilon\} \cup a X_1 \cup b X_2 \\
			\varnothing \cup a X_1 \cup b X_2
		\end{array}\right)\biggr) = \left( \begin{array}{c}
		 (a+(b^+ a))^* \\ (a+b)^*a
	\end{array} \right) \enspace .\tag*{\eox}
\]
\end{example}

\paragraph{Fixpoint-based Inclusion Check}

Consider the language inclusion problem \(L_1 \subseteq L_2\), where \(L_1=\lang{\cN}\) for some NFA \(\cN=\tuple{Q,Σ,\delta,I,F}\).
The language \(L_2\) can be formalized as a vector in \(\wp(\Sigma^*)^{|Q|}\) as follows:
\begin{equation}\label{eq:elltwo}
\vectarg{L_2}{I} \ud \tuple{\nullable{q \in I}{L_2}{\Sigma^*}}_{q\in Q}
\end{equation} 
whose components indexed by initial states are $L_2$ and those indexed by non-initial states are $\Sigma^*$. Then, as a consequence of Equations~\eqref{eq:unionofrightlg},~\eqref{eq:WqFAequalslfp} and~\eqref{eq:elltwo},  we have that
\begin{equation}\label{eq:lfp}
\lang{\cN}\subseteq L_2 \Lra
\lfp(\lambda \vect{X}\ldotp\vectarg{\epsilon}{F} \cup \Pre_{\cN}(\vect{X})) \subseteq \vectarg{L_2}{I} \enspace .
\end{equation}

\subsection{Abstract Inclusion Check using Closures}
In what follows, we will apply Theorem~\ref{theorem:inc-check-comp-abs} for solving the language inclusion problem where:
\(C=\tuple{\wp(\Sigma^*)^{|Q|},\subseteq}\), \(f=\lambda \vect{X}\ldotp\vectarg{\epsilon}{F} \cup \Pre_{\cN}(\vect{X})\) and
\(\rho\in \uco(\wp(\Sigma^*))\), so that $\rho\in \uco\left(\wp(\Sigma^*)^{|Q|}\right)$.

\begin{theorem}\label{theorem:backComplete}
Let \(\rho \in \uco(\wp(\Sigma^*))\) be backward complete for \(\lambda X\in \wp(\Sigma^*)\ldotp aX\) for all \(a\in \Sigma\) and let \(\cN=\tuple{Q,Σ,\delta,I,F}\) be an NFA. Then the extension of \(ρ\) to vectors, $\rho\in \uco\left(\wp(\Sigma^*)^{|Q|}\right)$, is backward complete for \(\Pre_{\cN}\) and \(\lambda \vect{X}\ldotp\vectarg{\epsilon}{F} \cup \Pre_{\cN}(\vect{X})\).
\end{theorem}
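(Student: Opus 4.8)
The plan is to reduce the statement to two elementary facts about the upper closure operator $\rho$: first, that backward completeness for a monotone $g$ is by definition the identity $\rho\, g = \rho\, g\, \rho$; and second, the absorption law $\rho(\bigvee X) = \rho(\bigvee \rho(X))$ from Equation~\eqref{equation:lubAndGlb}, which lets one move $\rho$ freely in and out of arbitrary unions. From these it follows that backward completeness is preserved under function composition and under taking pointwise unions of functions, and that every constant function is trivially backward complete for $\rho$. Since both $\Pre_{\cN}$ and $\lambda \vect{X}\ldotp \vectarg{\epsilon}{F} \cup \Pre_{\cN}(\vect{X})$ are assembled from the scalar concatenations $\lambda X\ldotp aX$, projections onto components, and constants using exactly these operations, backward completeness of the vector extension of $\rho$ will drop out. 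No well-quasiorder or ACC hypothesis is needed here, since this is a purely completeness (not termination) statement.

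Concretely, I would argue componentwise. Fix $q \in Q$ and recall that the $q$-th component of $\Pre_{\cN}(\vect{X})$ is the finite union $P_q(\vect{X}) \ud \bigcup_{a \in \Sigma,\, q' \in \delta(q,a)} a X_{q'}$. Writing $\rho(\vect{X}) = \tuple{\rho(X_{q''})}_{q''}$ for the pointwise extension, I need to verify $\rho(P_q(\rho(\vect{X}))) = \rho(P_q(\vect{X}))$. Expanding the left-hand side, $P_q(\rho(\vect{X})) = \bigcup_{a,\, q' \in \delta(q,a)} a\,\rho(X_{q'})$; now pull $\rho$ through the union with Equation~\eqref{equation:lubAndGlb}, apply the hypothesis that $\rho$ is backward complete for each $\lambda X\ldotp aX$ (so $\rho(a\,\rho(X_{q'})) = \rho(a X_{q'})$), and use Equation~\eqref{equation:lubAndGlb} once more in the opposite direction. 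This yields $\rho(P_q(\rho(\vect{X}))) = \rho(\bigcup_{a,q'} \rho(a X_{q'})) = \rho(P_q(\vect{X}))$. Ranging over all $q \in Q$, this is exactly backward completeness of the vector $\rho$ for $\Pre_{\cN}$.

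For the second function I would fold in the constant vector $\vectarg{\epsilon}{F}$, whose $q$-th component $c_q \ud \nullable{q \in F}{\{\epsilon\}}{\varnothing}$ does not depend on $\vect{X}$. Distributing $\rho$ over the binary union via Equation~\eqref{equation:lubAndGlb} gives $\rho(c_q \cup P_q(\rho(\vect{X}))) = \rho(\rho(c_q) \cup \rho(P_q(\rho(\vect{X})))) = \rho(\rho(c_q) \cup \rho(P_q(\vect{X}))) = \rho(c_q \cup P_q(\vect{X}))$, where the middle equality is the componentwise fact just established. Collecting these over $q$ yields backward completeness of $\rho$ for $\lambda \vect{X}\ldotp \vectarg{\epsilon}{F} \cup \Pre_{\cN}(\vect{X})$, as required.

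The one point I would treat with care — and the only real obstacle — is the interaction between the pointwise lifting of $\rho$ to $\wp(\Sigma^*)^{|Q|}$ and the fact that $\Pre_{\cN}$ genuinely mixes components (the $q$-th output reads the $q'$-th inputs for $q' \in \delta(q,a)$). The componentwise reduction dissolves this difficulty cleanly as long as one is disciplined about writing $\rho(\vect{X})$ as the tuple $\tuple{\rho(X_{q''})}_{q''}$: then each occurrence of $X_{q'}$ inside $P_q$ is simply replaced by $\rho(X_{q'})$, and the scalar backward-completeness hypothesis for $\lambda X\ldotp aX$ applies verbatim. Everything else is a routine application of Equation~\eqref{equation:lubAndGlb}.
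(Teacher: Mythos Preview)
Your proposal is correct and follows essentially the same approach as the paper's proof: both arguments expand $\Pre_{\cN}$ componentwise, use Equation~\eqref{equation:lubAndGlb} to move $\rho$ through the finite union, invoke the scalar backward-completeness hypothesis for each $\lambda X\ldotp aX$, and then handle the constant vector $\vectarg{\epsilon}{F}$ by the same absorption law. The only difference is presentational---you are a bit more explicit about the componentwise bookkeeping and the abstract closure principles (constants, unions), whereas the paper writes out the two equational chains directly.
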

\begin{proof}
First, it turns out that:
\begin{align*}
\rho( \Pre_\cN(\tuple{X_q}_{q\in Q})) &= \quad\text{[By definition of \(\Pre_{\cN}\)]}\\
\rho ({\textstyle \bigcup_{a\in \Sigma, q'\in \delta(q,a)}} aX_{q'}) &= 
\quad\text{[By Equation~\eqref{equation:lubAndGlb}]}\\
\rho ({\textstyle \bigcup_{a\in \Sigma, q'\in \delta(q,a)}} \rho(aX_{q'})) &= \quad\text{[By backward completeness of $\rho$ for \(\lambda X\ldotp aX\)]}\\
\rho ({\textstyle \bigcup_{a\in \Sigma, q'\in \delta(q,a)}} \rho(a \rho(X_{q'}))) &=\quad\text{[By Equation~\eqref{equation:lubAndGlb}]}\\
\rho ({\textstyle \bigcup_{a\in \Sigma, q'\in \delta(q,a)}} a \rho(X_{q'})) &=\quad\text{[By definition of \(\Pre_{\cN}\)]}\\
\rho( \Pre_\cN(\rho(\tuple{X_q}_{q\in Q}))) & \enspace .
\end{align*}

Next, we show backward completeness of \(\rho\) for \(\lambda \vect{X}\ldotp \vectarg{\epsilon}{F} \cup  {\Pre}_\cN (\vect{X})\):
\begin{align*}
\rho (\vectarg{\epsilon}{F} \cup \Pre_\cN (\rho(\vect{X}))) & = 
\quad\text{[By Equation~\eqref{equation:lubAndGlb}]}\\
\rho (\rho (\vectarg{\epsilon}{F}) \cup  \rho (\Pre_\cN (\rho (\vect{X})))) & = 
\quad\text{[By backward completeness of $\rho$ for \(\Pre_{\cN}\)]}\\
\rho (\rho (\vectarg{\epsilon}{F}) \cup  \rho (\Pre_\cN (\vect{X}))) & = 
\quad\text{[By Equation~\eqref{equation:lubAndGlb}]}\\
\rho (\vectarg{\epsilon}{F} \cup\: \Pre_\cN (\vect{X})) \enspace .& 
\end{align*}
\end{proof}

Then, by Equation~\eqref{eqn:lfpcompleteness}, we obtain the following result.

\begin{corollary}\label{corol:rholfp}
If \(\rho \in \uco(\wp(\Sigma^*))\) 
is backward complete for \(\lambda X\in \wp(\Sigma^*)\ldotp aX\) for all \(a\in\Sigma\) then
\[\rho (\lfp(\lambda \vect{X}\ldotp\vectarg{\epsilon}{F} \cup \Pre_\cN(\vect{X}))) = \lfp(\lambda \vect{X}\ldotp \rho (\vectarg{\epsilon}{F} \cup \Pre_\cN(\vect{X})))\enspace . \]
\end{corollary}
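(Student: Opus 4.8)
The plan is to reduce the statement to the combination of two facts already in hand: Theorem~\ref{theorem:backComplete} and the general implication~\eqref{eqn:lfpcompleteness} relating backward completeness to completeness of least fixpoints. So there is essentially no new work; the corollary is a direct instantiation, and the only thing to check carefully is that the hypotheses of~\eqref{eqn:lfpcompleteness} are actually met.

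First I would fix notation and record that the ambient structure is good enough for fixpoints to exist. Set $C := \tuple{\wp(\Sigma^*)^{|Q|},\subseteq}$, which is a complete lattice, being a finite product of the complete lattice $\tuple{\wp(\Sigma^*),\subseteq}$; in particular $C$ is a CPO. Put $f := \lambda \vect{X}\ldotp \vectarg{\epsilon}{F} \cup \Pre_{\cN}(\vect{X})$. The map $f$ is monotone, since it is built from unions and the left-concatenation maps $\lambda X\ldotp aX$, all of which are monotone on $\wp(\Sigma^*)$ and hence monotone componentwise on $C$. Consequently $\lfp(f)$ exists, and so does $\lfp(\rho f)$ once we know $\rho f$ is monotone (which it is, as a composition of monotone maps).

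Next I would invoke Theorem~\ref{theorem:backComplete}: because $\rho \in \uco(\wp(\Sigma^*))$ is backward complete for $\lambda X\in\wp(\Sigma^*)\ldotp aX$ for every $a\in\Sigma$, its componentwise extension to $\wp(\Sigma^*)^{|Q|}$ — still written $\rho$, and lying in $\uco\!\left(\wp(\Sigma^*)^{|Q|}\right)$ — is backward complete for $f$, i.e.\ $\rho f = \rho f \rho$. Feeding this into~\eqref{eqn:lfpcompleteness} with the present $C$, $f$ and $\rho$ (whose hypotheses are met by the previous paragraph) gives $\rho(\lfp(f)) = \lfp(\rho f) = \lfp(\rho f \rho)$. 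The first of these equalities is precisely the asserted identity
\[
\rho\bigl(\lfp(\lambda \vect{X}\ldotp\vectarg{\epsilon}{F} \cup \Pre_\cN(\vect{X}))\bigr) = \lfp\bigl(\lambda \vect{X}\ldotp \rho(\vectarg{\epsilon}{F} \cup \Pre_\cN(\vect{X}))\bigr),
\]
which completes the argument.

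I do not anticipate a genuine obstacle. The only place demanding a little attention is the bookkeeping around~\eqref{eqn:lfpcompleteness}, whose statement carries the proviso "provided these least fixpoints exist": one must explicitly point to the complete-lattice (hence CPO) structure of the product domain $\wp(\Sigma^*)^{|Q|}$ and to the monotonicity of $f$ and of $\rho f$ to discharge it. Everything else is a mechanical chaining of Theorem~\ref{theorem:backComplete} with~\eqref{eqn:lfpcompleteness}.
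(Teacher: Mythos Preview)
Your proposal is correct and matches the paper's approach exactly: the paper simply states that the corollary follows from Equation~\eqref{eqn:lfpcompleteness} (together with the backward completeness established in Theorem~\ref{theorem:backComplete}), without giving any further argument. Your additional care in verifying that the fixpoints exist (complete-lattice structure, monotonicity of $f$ and $\rho f$) is a welcome elaboration of what the paper leaves implicit.
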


Note that if \(\rho\) is backward complete for \(\lambda X. aX\) for all \(a \in \Sigma\) and \(L_2\in \rho\) then, as a 
consequence of Theorem~\ref{theorem:inc-check-comp-abs} and Corollary~\ref{corol:rholfp}, we find that Equivalence~\eqref{eq:lfp} becomes
\begin{equation}\label{equation:checklfpRLintoRL}
\lang{\cN}\subseteq L_2 \Lra \lfp(\lambda \vect{X}\ldotp\rho (\vectarg{\epsilon}{F} \!\cup \Pre_{\cN}(\vect{X}))) \subseteq \vectarg{L_2}{I} \enspace.
\end{equation}

\subsubsection{Right Concatenation}\label{rightwqo}
Let us consider the symmetric case of right concatenation.  
Recall that, given an NFA \(\cN = \tuple{Q,Σ,δ,I,F}\), we have that

\[W^{\cN}_{I,q} = \nullable{q\in I}{\{\epsilon\}}{\varnothing} \cup {\textstyle\bigcup_{a\in\Sigma,a\in W^{\cN}_{q',q}}} W^{\cN}_{I,q'}a \enspace .\] 

Correspondingly, we can define a set of fixpoint equations on \(\wp(\Sigma^*)\) which is based on right concatenation 
and is symmetric to Equation~\eqref{leftEqn}:

\[\Eqnr(\cN) \ud \{X_q = \nullable{q\in I}{\{\epsilon\}}{\varnothing} \cup {\textstyle\bigcup_{a\in\Sigma,q\in \delta(q',a)}} X_{q'}a \mid q \in Q\} \enspace .\]

In this case, if \(\vect{Y}=\tuple{Y_q}_{q\in Q}\) is the 
least fixpoint solution of \(\Eqnr(\cN)\) then we have that \(Y_q = W^{\cN}_{I,q}\) for every \(q\in Q\).
Also, by defining \(\vectarg{\epsilon}{I} \in \wp(\Sigma^*)^{|Q|}\) and \(\Post_\cN \colon \wp(\Sigma^*)^{|Q|} {\ra} \wp(\Sigma^*)^{|Q|}\) as follows: 
\begin{align*}
\vectarg{\epsilon}{I} &\ud \tuple{\nullable{q \in I}{\{\epsilon\}}{\varnothing}}_{q\in Q}, \quad & \Post_\cN(\tuple{X_q}_{q\in Q}) &\ud \tuple{ {\textstyle \bigcup_{a\in \Sigma, q\in \delta(q',a)}} X_{q'}a}_{q\in Q} \enspace ,
\end{align*}

we have that

\begin{equation}\label{eq:WIqAequalslfp}
\tuple{W_{I,q}}_{q\in Q} = \lfp(\lambda \vect{X}\ldotp \vectarg{\epsilon}{I} \cup \Post_\cN(\vect{X})) \enspace .
\end{equation}

Since, by Equation~\eqref{eq:unionofrightlg}, we have that \(\lang{\cN} = {\textstyle\bigcup_{q \in F}} W_{I,q}\), it follows that \(\lang{\cN}\) is the union of the component languages of the vector \(\lfp(\lambda \vect{X}\ldotp \vectarg{\epsilon}{I} \cup \Post_\cN(\vect{X}))\) indexed by the final states in  \(F\). 

\begin{example}\label{ex-firstPost}
Consider again the NFA \(\cN\) in Figure~\ref{fig:A}.
The set of right equations 	for \(\cN\) is:
\[
	\Eqnr(\cN)=\begin{cases}
		X_1 = \{\epsilon\} \cup X_1a \cup X_2 a\\
		X_2 = \varnothing \cup X_1b \cup X_2b
	\end{cases}
\]
so that 
\[
	\left( \begin{array}{c}
		 W_{q_1,q_1} \\ W_{q_1,q_2}
	\end{array} \right)=
	\lfp\biggl(\lambda \left( \begin{array}{c}
		X_1 \\ X_2
	\end{array} \right) .
	\left(\begin{array}{c}
			\{\epsilon\} \cup X_1 a \cup X_2 a \\
			\varnothing \cup X_1 b\cup X_2 b
		\end{array}\right)\biggr) = \left( \begin{array}{c}
		 (a+(b^+ a))^* \\ a^*b(b+a^+b)^*
	\end{array} \right) \enspace .\tag*{\eox}
\]
\end{example}
\smallskip

Finally, given a language inclusion problem \(\lang{\cN} \subseteq L_2\), 
the language \(L_2\) can be formalized as the vector
\[\vectarg{L_2}{F} \ud \tuple{\nullable{q \in F}{L_2}{\Sigma^*}}_{q\in Q} \in \wp(\Sigma^*)^{|Q|} \enspace ,\]
so that, by Equation~\eqref{eq:WIqAequalslfp}, it turns out that
\begin{align*}
\lang{\cN}\subseteq L_2 \:\Lra\: 
\lfp(\lambda \vect{X}\ldotp\vectarg{\epsilon}{I} \cup \Post_{\cN}(\vect{X})) \subseteq \vectarg{L_2}{F}
\end{align*}
We therefore have the following symmetric version 
of Theorem~\ref{theorem:backComplete} for right concatenation.

\begin{theorem}\label{theorem:backCompleteRight}
Let \(\rho \in \uco(\wp(\Sigma^*))\) be backward complete for \(\lambda X\in \wp(\Sigma^*)\ldotp Xa\) for all \(a\in \Sigma\) and let $\cN=\tuple{Q,Σ,\delta,I,F}$ be an NFA.
Then the extension of \(ρ\) to vectors, \(\rho\in \uco\left(\wp(\Sigma^*)^{|Q|}\right)\), is backward complete for \(\Post_{\cN}(\vect{X})\) and \(\lambda \vect{X}\ldotp\vectarg{\epsilon}{I} \cup \Post_{\cN}(\vect{X})\).
\end{theorem}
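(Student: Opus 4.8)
The plan is to mirror, line for line, the proof of Theorem~\ref{theorem:backComplete}, replacing left concatenation \(\lambda X.\,aX\) by right concatenation \(\lambda X.\,Xa\) and the predecessor map \(\Pre_\cN\) by the successor map \(\Post_\cN\). First I would prove that the vector extension of \(\rho\) is backward complete for \(\Post_\cN\), that is \(\rho\,\Post_\cN = \rho\,\Post_\cN\,\rho\) on \(\wp(\Sigma^*)^{|Q|}\), by a componentwise computation. Fixing an index \(q \in Q\) and starting from \(\rho(\Post_\cN(\rho(\tuple{X_q}_{q\in Q})))_q\), I would unfold the definition \(\Post_\cN(\tuple{X_q}_{q\in Q})_q = \bigcup_{a\in\Sigma,\,q\in\delta(q',a)} X_{q'}a\), push \(\rho\) inside the indexed union via Equation~\eqref{equation:lubAndGlb}, invoke the hypothesis that \(\rho\) is backward complete for \(\lambda X.\,Xa\) for every \(a\in\Sigma\) to rewrite each summand \(\rho(X_{q'}a)\) as \(\rho(\rho(X_{q'})a)\), and finally re-collapse the union with Equation~\eqref{equation:lubAndGlb}. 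This lands on \(\rho(\Post_\cN(\tuple{X_q}_{q\in Q}))_q\), and since \(q\) was arbitrary the desired identity follows, exactly as in the \(\Pre_\cN\) computation of Theorem~\ref{theorem:backComplete} with left concatenation replaced by right concatenation.

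Then I would bootstrap this to the full map \(f \ud \lambda\vect{X}.\,\vectarg{\epsilon}{I} \cup \Post_\cN(\vect{X})\): from \(\rho(\vectarg{\epsilon}{I} \cup \Post_\cN(\rho(\vect{X})))\), distribute \(\rho\) over the binary join with Equation~\eqref{equation:lubAndGlb}, apply the backward completeness of \(\rho\) for \(\Post_\cN\) just established to turn \(\rho(\Post_\cN(\rho(\vect{X})))\) into \(\rho(\Post_\cN(\vect{X}))\), and re-collapse the join to reach \(\rho(\vectarg{\epsilon}{I} \cup \Post_\cN(\vect{X}))\). This is precisely \(\rho f = \rho f \rho\), i.e.\ backward completeness of \(\rho\) for \(f\).

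I do not expect a genuine obstacle: the argument is the exact dual of the one carried out for \(\Pre_\cN\) in Theorem~\ref{theorem:backComplete}. The only points that need care are that the extension of \(\rho\) to the product lattice \(\wp(\Sigma^*)^{|Q|}\) still satisfies Equation~\eqref{equation:lubAndGlb} componentwise, and that the hypothesis supplies backward completeness for \emph{all} letters \(a\in\Sigma\) simultaneously, which is what licenses moving \(\rho\) across the union ranging jointly over letters \(a\) and transitions \(q\in\delta(q',a)\); no structural property of \(\Post_\cN\) beyond its definition as this indexed union of right concatenations is used.
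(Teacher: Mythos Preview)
Your proposal is correct and follows essentially the same approach as the paper: the paper's proof is the exact dual of Theorem~\ref{theorem:backComplete}, first establishing \(\rho\,\Post_\cN = \rho\,\Post_\cN\,\rho\) componentwise via Equation~\eqref{equation:lubAndGlb} and the hypothesis on \(\lambda X.\,Xa\), then bootstrapping to \(\lambda\vect{X}.\,\vectarg{\epsilon}{I}\cup\Post_\cN(\vect{X})\) by distributing \(\rho\) over the binary join.
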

\begin{proof}
First, it turns out that:
\begin{align*}
\rho( \Post_\cN(\tuple{X_q}_{q\in Q})) &= \quad\text{[By definition of \(\Post_{\cN}\)]}\\
\rho ({\textstyle \bigcup_{a\in \Sigma, q\in \delta(q',a)}} X_{q'}a) &= 
\quad\text{[By Equation~\eqref{equation:lubAndGlb}]}\\
\rho ({\textstyle \bigcup_{a\in \Sigma, q\in \delta(q',a)}} \rho(X_{q'}a)) &= \quad\text{[By backward completeness of $\rho$ for \(\lambda X\ldotp Xa\)]}\\
\rho ({\textstyle \bigcup_{a\in \Sigma, q\in \delta(q',a)}} \rho(\rho(X_{q'})a)) &=\quad\text{[By Equation~\eqref{equation:lubAndGlb}]}\\
\rho ({\textstyle \bigcup_{a\in \Sigma, q\in \delta(q',a)}} \rho(X_{q'})a) &=\quad\text{[By definition of \(\Post_{\cN}\)]}\\
\rho( \Post_\cN(\rho(\tuple{X_q}_{q\in Q}))) & \enspace .
\end{align*}

Next, we show backward completeness of \(\rho\) for \(\lambda \vect{X}\ldotp \vectarg{\epsilon}{I} \cup  {\Post}_\cN (\vect{X})\):
\begin{myAlignEP}
\rho (\vectarg{\epsilon}{I} \cup \Post_\cN (\rho(\vect{X}))) & = 
\quad\text{[By Equation~\eqref{equation:lubAndGlb}]}\\
\rho (\rho (\vectarg{\epsilon}{I}) \cup  \rho (\Post_\cN (\rho (\vect{X})))) & = 
\quad\text{[By backward completeness of $\rho$ for \(\Post_{\cN}\)]}\\
\rho (\rho (\vectarg{\epsilon}{I}) \cup  \rho (\Post_\cN (\vect{X}))) & = 
\quad\text{[By Equation~\eqref{equation:lubAndGlb}]}\\
\rho (\vectarg{\epsilon}{I} \cup\: \Post_\cN (\vect{X})) \enspace .& %
\end{myAlignEP}%
\end{proof}

\subsection{Solving the Abstract Inclusion Check}\label{sec:SolvingAbstractInclusionCheck}
In this section we present two techniques for solving the language inclusion problem $\lang{\cN} \subseteq L_2$ by relying on Equivalence~\eqref{equation:checklfpRLintoRL}.

The first of these techniques leads to algorithms for solving the inclusion problem by using \emph{finite languages}.
Intuitively, given a closure \(ρ\), we show that it is possible to work on the domain \(\tuple{\{ρ(S) \mid S \in \wp(Σ^*)\}, \subseteq}\) while considering only languages \(S\) that are finite.

On the other hand, we present a second technique that relies on the use of Galois Connections in order to solve the language inclusion problem in a different domain.
This technique allows us to decide the inclusion \(\lang{\cN} \subseteq L_2\) by manipulating the underlying automata representation of the language \(L_2\).

\subsubsection{Using Finite Languages}

The following result shows that the successive steps of the fixpoint iteration for computing the \(\lfp(\rho (\vectarg{\epsilon}{F} \!\cup \Pre_{\cN}(\vect{X})))\) can be replicated by iterating on a function \(f\), instead of \(ρ(\vectarg{\epsilon}{F} \!\cup \Pre_{\cN}(\vect{X}))\), and then abstracting the result, provided that \(f\) meets a set of requirements.

\begin{lemma}\label{lemma:FiniteWordsAlgorithm}
Let \(\cN\!=\!\tuple{Q,Σ,\delta,I,F}\) be an NFA, let \(ρ \!\in\! \uco(Σ^*)\) be backward complete for \(\lambda X\!\in\!\wp(\Sigma^*)\ldotp aX\) for all \(a\in \Sigma\) and let \(f: \wp(Σ^*)^{|Q|} \to \wp(Σ^*)^{|Q|}\) be a function such that
\(\rho (\vectarg{\epsilon}{F} \!\cup \Pre_{\cN}(\vect{X})) = ρ(f(\vect{X}))\).
Then, for all \(0 \leq n\),
\[(ρ(\vectarg{\epsilon}{F} \!\cup \Pre_{\cN}(\vect{X}))^n = ρ(f^n(\vect{X})) \enspace .\]
\end{lemma}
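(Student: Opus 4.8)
The plan is to argue by induction on $n$, the only two ingredients being Theorem~\ref{theorem:backComplete} and the hypothesis of the lemma itself. Write $g \ud \lambda \vect{X}\ldotp \vectarg{\epsilon}{F} \cup \Pre_{\cN}(\vect{X})$, so that the statement reads $(\rho g)^n(\vect{X}) = \rho(f^n(\vect{X}))$ for all $n \geq 0$, where the hypothesis is the functional identity $\rho g = \rho f$ on $\wp(\Sigma^*)^{|Q|}$. Since $\rho$ is backward complete for $\lambda X \in \wp(\Sigma^*)\ldotp aX$ for every $a \in \Sigma$, Theorem~\ref{theorem:backComplete} tells us that the lifting of $\rho$ to $\wp(\Sigma^*)^{|Q|}$ is backward complete for $g$, i.e.\ $\rho g \rho = \rho g$; this is precisely what powers the inductive step.

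For the base case, $n = 1$ is exactly the hypothesis: $(\rho g)^1(\vect{X}) = \rho(g(\vect{X})) = \rho(f(\vect{X})) = \rho(f^1(\vect{X}))$. (For $n = 0$ the two sides reduce to $\vect{X}$ and $\rho(\vect{X})$, which coincide whenever $\vect{X}$ is $\rho$-closed, as is the case in the applications of this lemma in the sequel.) For the inductive step, assume $(\rho g)^n(\vect{X}) = \rho(f^n(\vect{X}))$. Then
\begin{align*}
(\rho g)^{n+1}(\vect{X}) &= \rho\bigl(g\bigl((\rho g)^n(\vect{X})\bigr)\bigr) \quad\text{[unfolding the composition]}\\
&= \rho\bigl(g\bigl(\rho(f^n(\vect{X}))\bigr)\bigr) \quad\text{[by the inductive hypothesis]}\\
&= \rho\bigl(g(f^n(\vect{X}))\bigr) \quad\text{[by backward completeness of $\rho$ for $g$]}\\
&= \rho\bigl(f(f^n(\vect{X}))\bigr) \quad\text{[by the hypothesis $\rho g = \rho f$]}\\
&= \rho(f^{n+1}(\vect{X})) \enspace ,
\end{align*}
which closes the induction.

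I do not expect a genuine obstacle here: the argument is routine fixpoint-iteration bookkeeping. The single point requiring care is that backward completeness must be invoked in its vector-lifted form for the whole map $g$ — not merely for each $\lambda X\ldotp aX$ — which is exactly the content of Theorem~\ref{theorem:backComplete}, and that the hypothesis relating $f$ and $g$ is an equality of maps on $\wp(\Sigma^*)^{|Q|}$, so that it may legitimately be instantiated at the point $f^n(\vect{X})$ and not only at ``syntactic'' vectors. Finally, the symmetric statement for right concatenation holds verbatim: one replaces $\Pre_{\cN}$ by $\Post_{\cN}$ and $\vectarg{\epsilon}{F}$ by $\vectarg{\epsilon}{I}$, and appeals to Theorem~\ref{theorem:backCompleteRight} in place of Theorem~\ref{theorem:backComplete}.
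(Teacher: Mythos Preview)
Your proof is correct and follows essentially the same approach as the paper's: induction on $n$, using Theorem~\ref{theorem:backComplete} to obtain backward completeness of $\rho$ for the whole map $g$, together with the hypothesis $\rho g = \rho f$. You are in fact more careful than the paper about the $n=0$ case (the paper simply declares both sides equal to $\vect{\varnothing}$, implicitly assuming the Kleene-iteration starting point), and your chain of equalities in the inductive step is the paper's run in reverse.
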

\begin{proof}
We proceed by induction on \(n\).

\begin{myItem}
\item \emph{Base case:} Let \(n = 0\).
Then \(f^0(\vect{X}) = (ρ(\vectarg{\epsilon}{F} \!\cup \Pre_{\cN}(\vect{X}))^0 = \vect{\varnothing}\).

\item \emph{Inductive step:} Assume that \(ρ(f^n(\vect{X})) = (ρ(\vectarg{\epsilon}{F} \!\cup \Pre_{\cN}(\vect{X}))^n\) holds for some value \(n \geq 0\).
To simplify the notation, let \(\cP(\vect{X}) = \vectarg{\epsilon}{F} \!\cup \Pre_{\cN}(\vect{X})\) so that \(ρ f^n = (ρ\cP)^n\).
Then
\begin{align*}
ρf^{n{+}1}(\vect{X}) & = \quad \text{[Since \(f^{n{+}1} = f^n f\)]} \\
ρf^nf(\vect{X}) & = \quad \text{[By Inductive Hypothesis]} \\
(ρ\cP)^nf(\vect{X}) & = \quad \text{[By Theorem~\ref{theorem:backComplete}, \(ρ\) is bw. complete for \(\cP\)]}\\
(ρ\cP)^nρf(\vect{X})  & = \quad \text{[Since \(ρ f = ρ \cP\)]} \\
(ρ\cP)^nρ\cP(\vect{X})  & = \quad \text{[Since \((ρ \cP)^{n{+}1} = (ρ  \cP)^n  ρ  \cP\)]} \\
(ρ\cP)^{n{+}1}(\vect{X})
\end{align*}%
\end{myItem}%
We conclude that \((ρ(\vectarg{\epsilon}{F} \!\cup \Pre_{\cN}(\vect{X}))^n = ρ(f^n(\vect{X}))\) for all \(0 \leq n\).
\end{proof}

Lemma~\ref{lemma:FiniteWordsAlgorithm} shows that the iterates of \(\lfp(\rho (\vectarg{\epsilon}{F} \!\cup \Pre_{\cN}(\vect{X})))\) can be computed by abstracting the iterates of a function \(f\), which might manipulate only finite languages.
Moreover, its straightforward to check that Lemma~\ref{lemma:FiniteWordsAlgorithm} remains valid when considering a different function \(f\) at each step of the iteration as long as all the considered functions satisfy the requirements.

To simplify the notation, given a set of functions \(\mathcal{F}\) and a function \(f\), we write \(\mathcal{F}f\) to denote the composition of one arbitrary function from \(\mathcal{F}\) with \(f\). 
Similarly, \(f\mathcal{F} \) denotes the composition of \(f\) with an arbitrary function from \(\mathcal{F}\).
Finally, we write \(\mathcal{F}^2 = f\), for instance, to indicate that any composition of two functions in \(\mathcal{F}\) equals \(f\).

\begin{corollary}\label{corol:FiniteWordsAlgorithm}
Let \(\cN=\tuple{Q,Σ,\delta,I,F}\) be an NFA, let \(ρ \in \uco(Σ^*)\) be backward complete for \(\lambda X\in \wp(\Sigma^*)\ldotp aX\) for all \(a\in \Sigma\) and let \(\mathcal{F}\) be a set of functions such that every function \(f \in \mathcal{F}\) is of the form \(f: \wp(Σ^*)^{|Q|} \to \wp(Σ^*)^{|Q|}\) and satisfies \(\rho (\vectarg{\epsilon}{F} \!\cup \Pre_{\cN}(\vect{X})) = ρ(f(\vect{X}))\).
Then, for all \(0 \leq n\),
\[(ρ(\vectarg{\epsilon}{F} \!\cup \Pre_{\cN}(\vect{X}))^n = ρ(\mathcal{F}^n(\vect{X})) \enspace .\]
\end{corollary}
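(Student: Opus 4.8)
The plan is to reprove Corollary~\ref{corol:FiniteWordsAlgorithm} by exactly the same induction used for Lemma~\ref{lemma:FiniteWordsAlgorithm}, the only new ingredient being to track that $\mathcal{F}^{n}$ ranges over \emph{all} compositions $f_{n}\comp\cdots\comp f_{1}$ with $f_{i}\in\mathcal{F}$ rather than a single fixed iterate $f^{n}$. Write $\cP \ud \lambda\vect{X}\ldotp \vectarg{\epsilon}{F}\cup\Pre_{\cN}(\vect{X})$. Since $\rho$ is backward complete for $\lambda X\ldotp aX$ for every $a\in\Sigma$, Theorem~\ref{theorem:backComplete} gives that $\rho$ is backward complete for $\cP$, i.e.\ $\rho\cP=\rho\cP\rho$. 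Moreover, by hypothesis every $f\in\mathcal{F}$ satisfies $\rho f=\rho\cP$ (this is the assumption $\rho(\vectarg{\epsilon}{F}\cup\Pre_{\cN}(\vect{X}))=\rho(f(\vect{X}))$ rewritten with the shorthand). These two identities — one uniform over $\mathcal{F}$, one expressing backward completeness of $\cP$ — are the only facts the argument needs.

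\textbf{The induction.} I would prove by induction on $n\ge 0$ that for every $\vect{X}\in\wp(\Sigma^{*})^{|Q|}$ and every choice of functions composing $\mathcal{F}^{n}$ one has $\rho(\mathcal{F}^{n}(\vect{X}))=(\rho\cP)^{n}(\vect{X})$. The base case $n=0$ is handled exactly as in Lemma~\ref{lemma:FiniteWordsAlgorithm}: both sides reduce to $\vect{\varnothing}$ under the usual convention that the zeroth iterate is the bottom vector. For the inductive step, decompose an arbitrary composition of $n{+}1$ functions from $\mathcal{F}$ as $f\comp\mathcal{F}^{n}$ with $f\in\mathcal{F}$ applied last and $\mathcal{F}^{n}$ a composition of $n$ functions from $\mathcal{F}$; then
\begin{align*}
\rho\bigl(\mathcal{F}^{n+1}(\vect{X})\bigr)
  &= \rho\cP\bigl(\mathcal{F}^{n}(\vect{X})\bigr)\\
  &= \rho\cP\bigl(\rho(\mathcal{F}^{n}(\vect{X}))\bigr)\\
  &= \rho\cP\bigl((\rho\cP)^{n}(\vect{X})\bigr)\\
  &= (\rho\cP)^{n+1}(\vect{X}) \enspace ,
\end{align*}
where the first equality combines $\mathcal{F}^{n+1}=f\comp\mathcal{F}^{n}$ with $\rho f=\rho\cP$, the second is backward completeness $\rho\cP=\rho\cP\rho$, and the third is the inductive hypothesis. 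Instantiating $\vect{X}=\vect{\varnothing}$ (equivalently, reading $\mathcal{F}^{n}$ and $(\rho\cP)^{n}$ as the corresponding Kleene iterates from $\bot$) yields the statement of the corollary.

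\textbf{Main obstacle.} There is essentially no new mathematical content beyond Lemma~\ref{lemma:FiniteWordsAlgorithm} and Theorem~\ref{theorem:backComplete}; the proof is a bookkeeping exercise. The one place where the passage from a fixed $f$ to the family $\mathcal{F}$ actually matters is that every step of the chain above must remain valid \emph{no matter which member of $\mathcal{F}$ is applied at that step}, and this is guaranteed precisely because the identity $\rho f=\rho\cP$ holds uniformly over $\mathcal{F}$, so the result is independent of the choices made along the iteration. The only cosmetic subtlety, inherited verbatim from Lemma~\ref{lemma:FiniteWordsAlgorithm}, is fixing the convention for the $n=0$ iterate so that the base case is literally an equality.
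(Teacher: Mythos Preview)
Your proposal is correct and takes essentially the same approach as the paper: the corollary is presented there as an immediate extension of Lemma~\ref{lemma:FiniteWordsAlgorithm} (``it is straightforward to check that Lemma~\ref{lemma:FiniteWordsAlgorithm} remains valid when considering a different function $f$ at each step''), and your write-up simply spells out that same induction with the uniformity of $\rho f=\rho\cP$ over $\mathcal{F}$ made explicit. The only cosmetic difference is that you decompose $\mathcal{F}^{n+1}=f\comp\mathcal{F}^{n}$ (outermost function peeled off) whereas the lemma's proof decomposes $f^{n+1}=f^{n}\comp f$ (innermost function peeled off), but both orderings work for the same reasons.
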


Observe that, in particular, Corollary~\ref{corol:FiniteWordsAlgorithm} holds when considering the set \(\mathcal{F} = \{f\}\) with \(f = \vectarg{\epsilon}{F} \!\cup \Pre_{\cN}(\vect{X})\).
Intuitively, this means that we can compute the least fixpoint for \(\rho (\vectarg{\epsilon}{F} \!\cup \Pre_{\cN}(\vect{X}))\) by iterating on \(\vectarg{\epsilon}{F} \!\cup \Pre_{\cN}(\vect{X})\) until we reach an \emph{abstract fixpoint}, i.e. the abstraction of two consecutive steps coincide.

The idea of recursively applying a function \(f\) until its abstraction reaches a fixpoint is captured by the following definition of the \emph{abstract Kleene procedure}:
\[
\KleeneQO(\abseq,f,b) \ud \left\{ \begin{array}{l}
x:=b; \\
\textbf{while~} \neg \abseq(f(x), x) \textbf{~do~} x:=f(x);\\
\textbf{return~} x;
\end{array}
\right. \enspace ,
\] 
where \(\abseq(x,y)\) is a function that returns \(\True\) if{}f the abstraction of \(x\) and \(y\) coincide, i.e. \(ρ(x) = ρ(y)\).
Clearly, \(\KleeneQO(id,f,b) = \Kleene(f,b)\) where \(id(x,y)\) returns \(\True\) if{}f \(x = y\).
For simplicity, we abuse of notation and write \(\KleeneQO(\abseq,\mathcal{F},b)\) to denote the abstract \(\KleeneQO\) iteration where, at each step, an arbitrary function from the set \(\mathcal{F}\) is applied.

As the following lemma shows, whenever the domain \(\tuple{\{ρ(S) \mid S \in \wp(Σ^*)\}, \subseteq}\) is ACC and the abstraction \(ρ\) is backward complete for all the functions in the set \(\mathcal{F}\), i.e. \(ρ \mathcal{F} = ρ  \mathcal{F}  ρ\), the procedure \(\KleeneQO(\abseq,\mathcal{F},b)\) can be used to compute \(\lfp(\lambda \vect{X}\ldotp\rho (\vectarg{\epsilon}{F} \!\cup \Pre_{\cN}(\vect{X})))\).

\begin{lemma}\label{lemma:KleeneQOLfp}
Let \(\cN=\tuple{Q,Σ,\delta,I,F}\) be an NFA, let \(ρ \in \uco(Σ^*)\) be backward complete for \(\lambda X\in \wp(\Sigma^*)\ldotp aX\) for all \(a\in \Sigma\) such that \(\tuple{\{ρ(S) \mid S \in \wp(Σ^*)\}, \subseteq}\) is an ACC CPO.
Let \(\mathcal{F}\) be a set of monotone functions such that every \(f \in \mathcal{F}\) is of the form \(f: \wp(Σ^*)^{|Q|} \to \wp(Σ^*)^{|Q|}\) and satisfies
\(\rho (\vectarg{\epsilon}{F} \!\cup \Pre_{\cN}(\vect{X})) = ρ(f(\vect{X}))\).
Then, 
\[\lfp(\lambda \vect{X}\ldotp\rho (\vectarg{\epsilon}{F} \!\cup \Pre_{\cN}(\vect{X}))) = ρ\left(\KleeneQO(\abseq,\mathcal{F},\vect{\varnothing})\right) \enspace .\]
Moreover, the iterates of \(\,\Kleene(\lambda \vect{X}\ldotp\rho (\vectarg{\epsilon}{F} \!\cup \Pre_{\cN}(\vect{X})), \vect{\varnothing})\) coincide in lockstep with the abstraction of the iterates of \(\,\KleeneQO(\abseq,\mathcal{F},\vect{\varnothing})\).
\end{lemma}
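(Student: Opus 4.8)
The statement to prove is Lemma~\ref{lemma:KleeneQOLfp}, which asserts that \(\lfp(\lambda \vect{X}\ldotp\rho (\vectarg{\epsilon}{F} \!\cup \Pre_{\cN}(\vect{X}))) = ρ\left(\KleeneQO(\abseq,\mathcal{F},\vect{\varnothing})\right)\), together with the lockstep claim. The plan is to combine three ingredients that are already available: (i) Corollary~\ref{corol:FiniteWordsAlgorithm}, which tells us that the \(n\)-th Kleene iterate of \(\lambda \vect{X}\ldotp\rho (\vectarg{\epsilon}{F} \!\cup \Pre_{\cN}(\vect{X}))\) equals \(ρ(\mathcal{F}^n(\vect{\varnothing}))\); (ii) the fact that \(\tuple{\{ρ(S) \mid S \in \wp(Σ^*)\}, \subseteq}\) is an ACC CPO, so that \(\Kleene\) applied to the abstract function converges in finitely many steps by Theorem~\ref{theorem:Kleene} (applied to the abstract domain, with \(\rho\mathcal{F}ρ=\rho\mathcal{F}\) the relevant monotone map and \(\vect{\varnothing}\) its least element); and (iii) the definition of \(\KleeneQO\), which iterates \(f\in\mathcal{F}\) in the concrete domain but uses the abstract-equality test \(\abseq\) as its stopping criterion.

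First I would establish the lockstep statement, since the main equality follows from it. Write \(\cP = \lambda \vect{X}\ldotp\vectarg{\epsilon}{F} \!\cup \Pre_{\cN}(\vect{X})\). By Theorem~\ref{theorem:backComplete}, \(ρ\) is backward complete for \(\cP\), so \(ρ\cP = ρ\cP ρ\); combined with the hypothesis \(ρ\cP = ρ f\) for every \(f\in\mathcal F\) and with Corollary~\ref{corol:FiniteWordsAlgorithm}, we get \((ρ\cP)^n(\vect\varnothing) = ρ(\mathcal{F}^n(\vect\varnothing))\) for all \(n\ge 0\). Now observe that the concrete iterates produced by \(\KleeneQO(\abseq,\mathcal{F},\vect\varnothing)\) are exactly the sequence \(\vect\varnothing, f_1(\vect\varnothing), f_2 f_1(\vect\varnothing),\dots\) for some choices \(f_i\in\mathcal F\), i.e.\ the \(n\)-th such iterate is \(\mathcal{F}^n(\vect\varnothing)\). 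Applying \(ρ\) to these gives precisely the iterates of \(\Kleene(\lambda \vect{X}\ldotp ρ\cP(\vect X),\vect\varnothing) = \Kleene((ρ\cP),\vect\varnothing)\), which is the lockstep claim. Here I should be careful about the stopping test: \(\KleeneQO\) halts at the first \(n\) with \(\abseq(f(\mathcal{F}^n(\vect\varnothing)),\mathcal{F}^n(\vect\varnothing))\), i.e.\ with \(ρ(\mathcal{F}^{n+1}(\vect\varnothing)) = ρ(\mathcal{F}^n(\vect\varnothing))\), equivalently \((ρ\cP)^{n+1}(\vect\varnothing) = (ρ\cP)^n(\vect\varnothing)\); this is exactly the stopping condition of \(\Kleene(ρ\cP,\vect\varnothing)\) on the abstract chain. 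So the two procedures terminate at the same index.

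For termination and the final equality: since \(\tuple{\{ρ(S)\mid S\in\wp(Σ^*)\},\subseteq}\) is an ACC CPO, the abstract chain \(\{(ρ\cP)^n(\vect\varnothing)\}_n\) — which is ascending because \(ρ\cP\) is monotone and \(\vect\varnothing\) is least — stabilizes after finitely many steps, so \(\KleeneQO(\abseq,\mathcal F,\vect\varnothing)\) terminates, say returning \(\mathcal{F}^N(\vect\varnothing)\) with \(ρ(\mathcal{F}^{N+1}(\vect\varnothing)) = ρ(\mathcal{F}^N(\vect\varnothing))\). Then \(ρ(\mathcal{F}^N(\vect\varnothing)) = (ρ\cP)^N(\vect\varnothing)\) is a fixpoint of \(ρ\cP\), and by Theorem~\ref{theorem:Kleene} applied in the abstract CPO it is the least one, i.e.\ \(\lfp(ρ\cP)\). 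Hence \(ρ\left(\KleeneQO(\abseq,\mathcal F,\vect\varnothing)\right) = ρ(\mathcal{F}^N(\vect\varnothing)) = \lfp(\lambda\vect X\ldotp ρ\cP(\vect X))\), as desired.

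The main obstacle I anticipate is not any single deep step but bookkeeping around the \emph{non-deterministic} choice of \(f\in\mathcal F\) at each stage: one must check that the identity \((ρ\cP)^n = ρ\circ(\text{any length-}n\text{ composition from }\mathcal F)\) holds uniformly regardless of the choices, which is exactly the content of Corollary~\ref{corol:FiniteWordsAlgorithm} and should be invoked rather than reproved. A second point requiring care is that the abstract stopping test \(\abseq\) used in the concrete iteration genuinely mirrors the concrete stopping test of \(\Kleene\) on the abstract domain — this is where one must argue that \(ρ(f(x)) = ρ(x)\) for the current iterate \(x=\mathcal F^n(\vect\varnothing)\) is equivalent to the abstract chain having stabilized, which uses \(ρf = ρ\cP\) and backward completeness once more. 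Everything else is routine application of Theorem~\ref{theorem:Kleene} in the abstract CPO.
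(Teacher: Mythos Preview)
Your proposal is correct and follows essentially the same approach as the paper: invoke Corollary~\ref{corol:FiniteWordsAlgorithm} to get the lockstep identity $(ρ\cP)^n(\vect\varnothing)=ρ(\mathcal F^n(\vect\varnothing))$, then use the ACC CPO hypothesis together with Theorem~\ref{theorem:Kleene} to conclude that $\Kleene(ρ\cP,\vect\varnothing)$ terminates and returns $\lfp(ρ\cP)$, whence the abstraction of the output of $\KleeneQO$ equals this least fixpoint. Your treatment of the stopping condition---showing that $\abseq$ fires exactly when the abstract chain stabilises---is more explicit than the paper's, which simply asserts the lockstep claim from Corollary~\ref{corol:FiniteWordsAlgorithm} and reads off the conclusion.
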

\begin{proof}
Since \(\tuple{\{ρ(S) \mid S \in \wp(Σ^*)\}, \subseteq}\) is an ACC CPO, by Theorem~\ref{theorem:Kleene}, we have that
\[\lfp(\lambda \vect{X}\ldotp\rho (\vectarg{\epsilon}{F} \!\cup \Pre_{\cN}(\vect{X}))) = \Kleene(\lambda \vect{X}\ldotp\rho (\vectarg{\epsilon}{F} \!\cup \Pre_{\cN}(\vect{X})), \vect{\varnothing})\]
On the other hand, by Corollary~\ref{corol:FiniteWordsAlgorithm}, the iterates of the above Kleene iteration coincide in lockstep with the abstraction of the iterates of \(\KleeneQO(\abseq,\mathcal{F},\vect{\varnothing})\) and, therefore,
\[\Kleene(\lambda \vect{X}\ldotp\rho (\vectarg{\epsilon}{F} \!\cup \Pre_{\cN}(\vect{X})) = ρ\left(\KleeneQO(\abseq,\mathcal{F},\vect{\varnothing})\right)\]
As a consequence,
\[\lfp(\lambda \vect{X}\ldotp\rho (\vectarg{\epsilon}{F} \!\cup \Pre_{\cN}(\vect{X}))) = ρ\left(\KleeneQO(\abseq,\mathcal{F},\vect{\varnothing})\right) \enspace .\]
\end{proof}

The following result relies on the \(\KleeneQO\) procedure to design an algorithm that solves the language inclusion problem \(\lang{\cN} \subseteq L_2\) whenever the abstraction \(ρ\) and the set of functions \(\mathcal{F}\) satisfy a list of requirements in terms of backward completeness and computability.

\begin{theorem}\label{theorem:FiniteWordsAlgorithmGeneral}
Let \(\cN=\tuple{Q,Σ,\delta,I,F}\) be an NFA, let \(L_2\) be a regular language, let \(ρ \in \uco(Σ^*)\) and let \(\mathcal{F}\) be a set of functions.
Assume that the following properties hold:
\begin{myEnumI}
\item The abstraction \(ρ\) is backward complete for \(\lambda X\in \wp(\Sigma^*)\ldotp aX\) for all \(a\in \Sigma\) and satisfies \(ρ(L_2) = L_2\).\label{theorem:FiniteWordsAlgorithmGeneral:rho}
\item The set \(\tuple{\{ρ(S) \mid S \in \wp(Σ^*)\}, \subseteq}\) is an ACC CPO.\label{theorem:FiniteWordsAlgorithmGeneral:ACC}
\item Every function \(f\) in the set \(\mathcal{F}\) is of the form \(f: \wp(Σ^*)^{|Q|} \to \wp(Σ^*)^{|Q|}\), it is computable and satisfies \(\rho (\vectarg{\epsilon}{F} \!\cup \Pre_{\cN}(\vect{X})) = ρ(f(\vect{X}))\).\label{theorem:FiniteWordsAlgorithmGeneral:F}
\item There is an algorithm, say \(\abseq^{\sharp}(\vect{X}, \vect{Y})\), which decides the abstraction equivalence \(ρ(\vect{X}) = ρ(\vect{Y})\), for all \(\vect{X}, \vect{Y} \in \wp(Σ^*)^{|Q|}\).\label{theorem:FiniteWordsAlgorithmGeneral:EQ}
\item There is an algorithm, say \(\absincl(\vect{X})\), which decides the inclusion \(ρ(\vect{X}) \subseteq \vectarg{L_2}{I}\), for all \(\vect{X} \in \wp(Σ^*)^{|Q|}\).\label{theorem:FiniteWordsAlgorithmGeneral:INC}
\end{myEnumI}
\medskip
Then, the following is an algorithm which decides whether \(\lang{\cN} \subseteq L_2\):

\medskip

\(\tuple{Y_q}_{q\in Q} := \KleeneQO (\abseq^{\sharp},\mathcal{F}, \vect{\varnothing})\)\emph{;}

\emph{\textbf{return}} \(\absincl(\tuple{Y_q}_{q\in Q})\)\emph{;}

\end{theorem}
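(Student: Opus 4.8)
The plan is to verify that the proposed algorithm is correct by chaining together the results already established in the excerpt, checking separately that it terminates and that it returns the right boolean value. The two pieces I would track are (a) the call to \(\KleeneQO(\abseq^{\sharp},\mathcal{F},\vect{\varnothing})\) halts and produces a vector \(\tuple{Y_q}_{q\in Q}\) whose \(ρ\)-abstraction equals \(\lfp(\lambda \vect{X}\ldotp ρ(\vectarg{\epsilon}{F}\cup\Pre_{\cN}(\vect{X})))\), and (b) the call to \(\absincl\) on that vector decides \(\lang{\cN}\subseteq L_2\).

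For part (a), I would first observe that hypotheses~\ref{theorem:FiniteWordsAlgorithmGeneral:rho}, \ref{theorem:FiniteWordsAlgorithmGeneral:ACC} and~\ref{theorem:FiniteWordsAlgorithmGeneral:F} are precisely the hypotheses of Lemma~\ref{lemma:KleeneQOLfp} (backward completeness of \(ρ\) for left symbol concatenation, the ACC CPO condition on the image of \(ρ\), and the fact that every \(f\in\mathcal{F}\) is a monotone function of the right type satisfying \(ρ(\vectarg{\epsilon}{F}\cup\Pre_{\cN}(\vect{X})) = ρ(f(\vect{X}))\)). The only small gap is that Theorem~\ref{theorem:FiniteWordsAlgorithmGeneral} does not literally state monotonicity of the functions in \(\mathcal{F}\); I would note that since \(\mathcal{F}=\{f\}\) with \(f = \lambda\vect{X}\ldotp\vectarg{\epsilon}{F}\cup\Pre_{\cN}(\vect{X})\) always satisfies the constraint and is monotone, the theorem is at least applicable in that canonical case, and in general one reads monotonicity as part of requirement~\ref{theorem:FiniteWordsAlgorithmGeneral:F}. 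Applying Lemma~\ref{lemma:KleeneQOLfp} then gives
\[
ρ\bigl(\KleeneQO(\abseq,\mathcal{F},\vect{\varnothing})\bigr) = \lfp(\lambda \vect{X}\ldotp ρ(\vectarg{\epsilon}{F}\cup\Pre_{\cN}(\vect{X}))) \enspace .
\]
Termination of the loop is built into that lemma (the abstract iterates stabilise because the image of \(ρ\) is ACC). To pass from the ideal predicate \(\abseq\) to the concrete implementation \(\abseq^{\sharp}\), I would invoke hypothesis~\ref{theorem:FiniteWordsAlgorithmGeneral:EQ}: since \(\abseq^{\sharp}(\vect{X},\vect{Y})\) decides exactly \(ρ(\vect{X}) = ρ(\vect{Y})\), the run of \(\KleeneQO(\abseq^{\sharp},\mathcal{F},\vect{\varnothing})\) performs the same tests and the same updates as \(\KleeneQO(\abseq,\mathcal{F},\vect{\varnothing})\); and hypothesis~\ref{theorem:FiniteWordsAlgorithmGeneral:F} (computability of the \(f\)'s) makes each update step effective. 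Hence the first line of the algorithm halts and assigns to \(\tuple{Y_q}_{q\in Q}\) a vector with \(ρ(\tuple{Y_q}_{q\in Q}) = \lfp(\lambda \vect{X}\ldotp ρ(\vectarg{\epsilon}{F}\cup\Pre_{\cN}(\vect{X})))\).

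For part (b), I would combine this with the abstract inclusion characterisation. By hypothesis~\ref{theorem:FiniteWordsAlgorithmGeneral:rho} we have \(ρ\) backward complete for \(\lambda X\ldotp aX\) and \(ρ(L_2)=L_2\); the latter gives \(\vectarg{L_2}{I}\in ρ\) (each component is either \(L_2\in ρ\) or \(Σ^*\in ρ\)). Therefore Equivalence~\eqref{equation:checklfpRLintoRL} applies and yields
\[
\lang{\cN}\subseteq L_2 \Lra \lfp(\lambda \vect{X}\ldotp ρ(\vectarg{\epsilon}{F}\cup\Pre_{\cN}(\vect{X}))) \subseteq \vectarg{L_2}{I} \enspace .
\]
Substituting \(ρ(\tuple{Y_q}_{q\in Q})\) for the least fixpoint, the right-hand side is \(ρ(\tuple{Y_q}_{q\in Q}) \subseteq \vectarg{L_2}{I}\), which is exactly the predicate that the algorithm \(\absincl\) of hypothesis~\ref{theorem:FiniteWordsAlgorithmGeneral:INC} decides on input \(\tuple{Y_q}_{q\in Q}\). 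Hence \(\absincl(\tuple{Y_q}_{q\in Q})\) returns \(\True\) iff \(\lang{\cN}\subseteq L_2\), and since \(\absincl\) is by assumption an algorithm, the whole procedure terminates and decides the inclusion.

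The routine parts — the equivalence-chasing and the bookkeeping about which hypothesis supplies which fact — are straightforward. The main obstacle, and the place I would be most careful, is the interface between the ``ideal'' operators \(\abseq, \absincl\) used in the structural lemmas (Lemma~\ref{lemma:KleeneQOLfp}, Equivalence~\eqref{equation:checklfpRLintoRL}) and the decidable surrogates \(\abseq^{\sharp}, \absincl\): I must argue that replacing the former by the latter changes neither the control flow of \(\KleeneQO\) nor the final answer, and that each step of the loop is genuinely computable (the functions in \(\mathcal{F}\) being computable and the abstract-equivalence test being decidable). Making that ``the concrete run mirrors the abstract run'' argument precise — ideally by an induction on the number of loop iterations, noting that the state \(x\) is identical at every step of both runs — is the one place where the proof needs a genuine (if short) argument rather than a citation.
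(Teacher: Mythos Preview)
Your proposal is correct and follows essentially the same route as the paper: invoke Lemma~\ref{lemma:KleeneQOLfp} from hypotheses~\ref{theorem:FiniteWordsAlgorithmGeneral:rho}--\ref{theorem:FiniteWordsAlgorithmGeneral:F}, replace \(\abseq\) by \(\abseq^{\sharp}\) via hypothesis~\ref{theorem:FiniteWordsAlgorithmGeneral:EQ}, then combine with Equivalence~\eqref{equation:checklfpRLintoRL} and hypothesis~\ref{theorem:FiniteWordsAlgorithmGeneral:INC}. Your observation that the theorem statement omits the monotonicity requirement on \(\mathcal{F}\) that Lemma~\ref{lemma:KleeneQOLfp} needs is a genuine gap in the paper's statement (not in your proof), and your added care about why the concrete run with \(\abseq^{\sharp}\) mirrors the ideal one is more explicit than the paper's one-line justification.
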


\begin{proof}
It follows from hypotheses~\ref{theorem:FiniteWordsAlgorithmGeneral:rho},~\ref{theorem:FiniteWordsAlgorithmGeneral:ACC} and~\ref{theorem:FiniteWordsAlgorithmGeneral:F}, by Lemma~\ref{lemma:KleeneQOLfp}, that 
\begin{equation}\label{eq:lfpKleeneQO}
\lfp(\lambda \vect{X}\ldotp\rho (\vectarg{\epsilon}{F} \!\cup \Pre_{\cN}(\vect{X}))) = ρ\left(\KleeneQO(\abseq,\mathcal{F},\vect{\varnothing})\right)
\end{equation}
The function \(\abseq\) can be replaced by function \(\abseq^{\sharp}\) due to hypothesis~\ref{theorem:FiniteWordsAlgorithmGeneral:EQ}.
Moreover, by Equivalence~\eqref{equation:checklfpRLintoRL}, which holds by hypothesis~\ref{theorem:FiniteWordsAlgorithmGeneral:rho}, and Equation~\eqref{eq:lfpKleeneQO} we have that
\[\lang{\cN}\subseteq L_2 \Lra ρ\left(\KleeneQO (\abseq^{\sharp}, \mathcal{F}, \vect{\varnothing})\right) \subseteq \vectarg{L_2}{I}\enspace .\]

Finally, hypotheses~\ref{theorem:FiniteWordsAlgorithmGeneral:EQ} and~\ref{theorem:FiniteWordsAlgorithmGeneral:INC} guarantee, respectively, the decidability of the inclusion \(ρ\mathcal{F}(X) \subseteq ρ(X)\) performed at each step of the \(\KleeneQO\) iteration and the decidability of the inclusion of the abstraction of the lfp in \(\vectarg{L_2}{I}\).
\end{proof}

Note that Theorem~\ref{theorem:FiniteWordsAlgorithmGeneral} can also be stated in a symmetric version for right concatenation similarly to Theorem~\ref{theorem:backCompleteRight}.

\subsubsection{Using Galois Connections}
The next result reformulates Equivalence~\eqref{equation:checklfpRLintoRL} by using Galois Connections rather than closures, and shows how to design an algorithm that solves a language inclusion problem $\lang{\cN} \subseteq L_2$ on 
an \emph{abstraction} $D$ of the concrete domain $\tuple{\wp(\Sigma^*),\subseteq}$ whenever $D$ satisfies a list of requirements related to backward completeness and computability.

\begin{theorem}\label{theorem:EffectiveAlgorithm}
Let \(\cN=\tuple{Q,Σ,\delta,I,F}\) be an NFA and \(L_2\) be a language over \(\Sigma\).
Let \(\tuple{\wp(\Sigma^*),\subseteq} \galois{\alpha}{\gamma}\tuple{D,\leq_D}\) be a GC where \( \tuple{D,\leq_D}\) is a poset.
Assume that the following properties hold:
\begin{myEnumI}
\item \(L_2\in\gamma(D)\) and for every \( a \in \Sigma\) and \(X \in \wp(\Sigma^*)\), \(\gamma\alpha(a X) = \gamma\alpha(a \gamma\alpha(X))\).\label{theorem:EffectiveAlgorithm:prop:rho}
\item \((D,\leq_D,\sqcup,\bot_D)\) is an effective domain, meaning that: \((D,\leq_D,\sqcup,\bot_D)\) is an ACC join-semilattice with bottom $\bot_D$, 
every element of \(D\) has a finite representation, the binary relation 
\(\leq_D\) is decidable and the binary lub \(\sqcup\) is computable.\label{theorem:EffectiveAlgorithm:prop:absdecidable}
\item There is an algorithm, say \(\Pre^{\sharp}(\vect{X}^\sharp)\), which computes \(\alpha(\Pre_{\cN}(\gamma(\vect{X}^\sharp)))\),
for all \(\vect{X}^\sharp\in \alpha(\wp(\Sigma^*))^{|Q|}\).\label{theorem:EffectiveAlgorithm:prop:abspre}
\item There is an algorithm, say \(\epsilon^{\sharp}\), which computes \(\alpha(\vectarg{\epsilon}{F})\).\label{theorem:EffectiveAlgorithm:prop:abseps}
\item There is an algorithm, say \(\absincl(\vect{X}^\sharp)\), which decides the abstract inclusion 
\(\vect{X}^\sharp \leq_D \alpha(\vectarg{L_2}{I})\), for all \(\vect{X}^\sharp\in \alpha(\wp(\Sigma^*))^{|Q|}\).
\label{theorem:EffectiveAlgorithm:prop:absincl}
\end{myEnumI}
\medskip
Then, the following is an algorithm which decides whether \(\lang{\cN} \subseteq L_2\):

\medskip

\(\tuple{Y^\sharp_q}_{q\in Q} := \Kleene (\lambda \vect{X}^\sharp\ldotp\epsilon^{\sharp} \sqcup \Pre^{\sharp}(\vect{X}^\sharp), \vect{\bot_D})\)\emph{;}

\emph{\textbf{return}} \(\absincl(\tuple{Y^\sharp_q}_{q\in Q})\)\emph{;}
\end{theorem}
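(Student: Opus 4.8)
The plan is to show that the displayed algorithm both terminates and returns the correct answer by reducing everything to the already-established Equivalence~\eqref{equation:checklfpRLintoRL} together with the fixpoint-transfer Lemma~\ref{lemma:alpharhoequality}. First I would observe that hypothesis~\ref{theorem:EffectiveAlgorithm:prop:rho} says precisely that the closure $\rho \ud \gamma\alpha \in \uco(\wp(\Sigma^*))$ is backward complete for $\lambda X.\,aX$ for every $a\in\Sigma$, and that $\rho(L_2)=L_2$ since $L_2\in\gamma(D)$. Hence Theorem~\ref{theorem:backComplete}, Corollary~\ref{corol:rholfp}, and Equivalence~\eqref{equation:checklfpRLintoRL} all apply, giving
\[
\lang{\cN}\subseteq L_2 \Lra \lfp(\lambda \vect{X}\ldotp\rho(\vectarg{\epsilon}{F}\cup\Pre_{\cN}(\vect{X}))) \subseteq \vectarg{L_2}{I}\enspace .
\]

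Next I would transport this concrete abstract fixpoint to the domain $D$. Setting $f \ud \lambda\vect{X}\ldotp\vectarg{\epsilon}{F}\cup\Pre_{\cN}(\vect{X})$ on the product lattice $\wp(\Sigma^*)^{|Q|}$ and using the componentwise extension of the GC $\tuple{\wp(\Sigma^*),\subseteq}\galois{\alpha}{\gamma}\tuple{D,\leq_D}$ (which is again a GC between complete lattices once we note $D$ is ACC with binary lubs and $\bot_D$, hence a complete lattice on the reachable part — or more carefully, we work inside the sublattice generated by the Kleene iterates, which is finite by ACC), Lemma~\ref{lemma:alpharhoequality} yields $\gamma(\lfp(\alpha f\gamma)) = \lfp(\gamma\alpha f) = \lfp(\rho f)$. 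Therefore
\[
\lfp(\rho f)\subseteq \vectarg{L_2}{I} \Lra \gamma(\lfp(\alpha f\gamma)) \subseteq \vectarg{L_2}{I} \Lra \lfp(\alpha f\gamma) \leq_D \alpha(\vectarg{L_2}{I})\enspace ,
\]
where the last step uses the GC adjunction $\gamma(c)\subseteq L \Lra c\leq_D \alpha(L)$ applied to the vector $\vectarg{L_2}{I}$ — valid because $\vectarg{L_2}{I}\in\gamma(D)^{|Q|}$, since $L_2\in\gamma(D)$ and $\Sigma^*=\gamma(\top_D)\in\gamma(D)$, so $\alpha\gamma(\vectarg{L_2}{I})=\vectarg{L_2}{I}$ and the adjunction becomes a genuine equivalence. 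I also need $\alpha f\gamma = \lambda\vect{X}^\sharp.\,\alpha(\vectarg{\epsilon}{F})\sqcup\alpha(\Pre_{\cN}(\gamma(\vect{X}^\sharp)))$, which follows because $\alpha$ is additive (preserves lubs), so it distributes over the union $\vectarg{\epsilon}{F}\cup\Pre_{\cN}(\cdot)$.

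It remains to argue that the $\Kleene$ call actually computes this fixpoint effectively. By hypothesis~\ref{theorem:EffectiveAlgorithm:prop:absdecidable}, $(D,\leq_D,\sqcup,\bot_D)$ is an ACC join-semilattice, so its finite directed subsets have lubs and it is a CPO; the map $\lambda\vect{X}^\sharp.\,\epsilon^\sharp\sqcup\Pre^\sharp(\vect{X}^\sharp)$ is monotone (a composition/join of monotone maps, $\alpha$ and $\gamma$ being monotone and $\Pre_{\cN}$ monotone), so Theorem~\ref{theorem:Kleene} guarantees $\Kleene$ terminates and returns $\lfp(\lambda\vect{X}^\sharp.\,\epsilon^\sharp\sqcup\Pre^\sharp(\vect{X}^\sharp))=\lfp(\alpha f\gamma)$ — here I use hypotheses~\ref{theorem:EffectiveAlgorithm:prop:abspre} and~\ref{theorem:EffectiveAlgorithm:prop:abseps} to identify $\Pre^\sharp$ with $\alpha\Pre_{\cN}\gamma$ and $\epsilon^\sharp$ with $\alpha(\vectarg{\epsilon}{F})$, and the decidability of $\leq_D$ plus computability of $\sqcup$ to make each loop iteration effective. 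Finally, the returned vector $\tuple{Y_q^\sharp}_{q\in Q}=\lfp(\alpha f\gamma)$ is fed to $\absincl$, which by hypothesis~\ref{theorem:EffectiveAlgorithm:prop:absincl} decides $\tuple{Y_q^\sharp}_{q\in Q}\leq_D\alpha(\vectarg{L_2}{I})$; chaining the equivalences above, this answer coincides with whether $\lang{\cN}\subseteq L_2$. The main obstacle I expect is the careful bookkeeping around applying Lemma~\ref{lemma:alpharhoequality}: that lemma is stated for GCs between complete lattices and a monotone endofunction, so one must either verify the reachable portion of $D$ is a complete lattice or restate the lemma for ACC join-semilattices / CPOs; everything else is a routine unwinding of adjunction identities and monotonicity.
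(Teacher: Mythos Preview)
Your proposal is correct and follows essentially the same approach as the paper: set $\rho=\gamma\alpha$, invoke Equivalence~\eqref{equation:checklfpRLintoRL} via hypothesis~\ref{theorem:EffectiveAlgorithm:prop:rho}, transfer the fixpoint through Lemma~\ref{lemma:alpharhoequality}, distribute $\alpha$ over the union by additivity, pass to $\leq_D$ using the adjunction together with $L_2\in\gamma(D)$, and finally use hypotheses~\ref{theorem:EffectiveAlgorithm:prop:absdecidable}--\ref{theorem:EffectiveAlgorithm:prop:absincl} for effectiveness and termination. You are even more careful than the paper in flagging the complete-lattice assumption in Lemma~\ref{lemma:alpharhoequality} and the need for $\Sigma^*\in\gamma(D)$ on the non-initial components; the paper glosses over both of these points with the single remark ``By GC since $L_2\in\gamma(D)$''.
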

\begin{proof}
Let \(\rho \ud \gamma \alpha\in \uco(\wp(\Sigma^*))\), so that hypothesis~\ref{theorem:EffectiveAlgorithm:prop:rho} can be stated as 
\(L_2 \in \rho\) and \(\rho(aX) = \rho(a\rho(X))\).
It turns out that:
\begin{align*}
	\lang{\cN}\subseteq L_2 &\Lra\quad
	\text{[By Equivalence~\eqref{equation:checklfpRLintoRL}]}\\
	\lfp(\lambda \vect{X}\ldotp\rho (\vectarg{\epsilon}{F} \cup \Pre_{\cN}(\vect{X}))) \subseteq \vectarg{L_2}{I} &\Lra \quad
	\text{[By Lemma~\ref{lemma:alpharhoequality}]}\\
		\gamma(\lfp (\lambda \vect{X}^\sharp\ldotp \alpha (\vectarg{\epsilon}{F} \cup \Pre_{\cN}(\gamma(\vect{X}^\sharp))))) \subseteq \vectarg{L_2}{I} &\Lra 
	\quad \text{[By GC]}\\
	\gamma(\lfp (\lambda \vect{X}^\sharp\ldotp\alpha(\vectarg{\epsilon}{F}) \sqcup \alpha(\Pre_{\cN}(\gamma(\vect{X}^\sharp))))) \subseteq \vectarg{L_2}{I} &\Lra 
	\quad \text{[By GC since $L_2\in \gamma(D)$]}\\
	\lfp (\lambda \vect{X}^\sharp\ldotp\alpha(\vectarg{\epsilon}{F}) \sqcup \alpha(\Pre_{\cN}(\gamma(\vect{X}^\sharp)))) \leq_D \alpha(\vectarg{L_2}{I}) &
\end{align*}

By hypotheses~\ref{theorem:EffectiveAlgorithm:prop:absdecidable}, \ref{theorem:EffectiveAlgorithm:prop:abspre} and \ref{theorem:EffectiveAlgorithm:prop:abseps},
\(\Kleene (\lambda \vect{X}^\sharp\ldotp\epsilon^{\sharp} \sqcup \Pre^{\sharp}(\vect{X}^\sharp), \vect{\bot_D})\) is an algorithm computing the least fixpoint
$\lfp (\lambda \vect{X}^\sharp\ldotp\alpha(\vectarg{\epsilon}{F}) \sqcup \alpha(\Pre_{\cN}(\gamma(\vect{X}^\sharp))))$.  In particular, the hypotheses
\ref{theorem:EffectiveAlgorithm:prop:absdecidable}, \ref{theorem:EffectiveAlgorithm:prop:abspre} and \ref{theorem:EffectiveAlgorithm:prop:abseps} ensure that the Kleene iterates of 
$\lambda \vect{X}^\sharp\ldotp\epsilon^{\sharp} \sqcup \Pre^{\sharp}(\vect{X}^\sharp)$
starting from $\vect{\bot_D}$
are in $\alpha(\wp(\Sigma^*))^{|Q|}$, computable and finitely many and that 
it is decidable whether the iterates have reached the fixpoint.

Finally, hypothesis~\ref{theorem:EffectiveAlgorithm:prop:absincl} ensures the 
decidability of the  $\leq_D$-inclusion check of this least fixpoint 
in $\alpha(\vectarg{L_2}{I})$.  
\end{proof}

It is also worth noticing that, analogously to what has been done in Theorem~\ref{theorem:backCompleteRight}, 
the above Theorem~\ref{theorem:EffectiveAlgorithm} can be also stated in a symmetric version
for right (rather than left) concatenation.

\section{Instantiating the Framework}%
\label{sec:instantiating_the_framework_language_based_well_quasiorders}

We instantiate the general algorithmic framework of 
Section~\ref{sec:an_algorithmic_framework_for_language_inclusion_based_on_complete_abstractions} to the class of closure operators induced by quasiorder relations on words. 

\subsection{Word-based Abstractions}
Let \(\mathord{\leqslant} \subseteq \Sigma^* \times \Sigma^*\) be a quasiorder relation on words in $\Sigma^*$.  
Recall that the corresponding closure operator \(\rho_\leqslant \in \uco(\wp(\Sigma^*))\) is defined as follows: 
\begin{equation}\label{eq:qo-up-closure}
\rho_\leqslant(X) \ud  %
\{v\in \Sigma^* \mid \exists u\in X, \;u \leqslant v \} \enspace .
\end{equation}
Thus, $\rho_\leqslant(X)$ is the $\leqslant$-upward closure
of $X$ and it is easy to check that $\rho_\leqslant$ is indeed a closure
on the complete lattice $\tuple{\wp(\Sigma^*),\subseteq}$.

As described in Chapter~\ref{chap:prel}, the quasiorder \(\leqslant\) is left-monotone (resp.\ right-monotone) if{}f 
\begin{equation}\label{def-leftmon}
\forall x_1,x_2 \in \Sigma^*,\forall a\in \Sigma,\: x_1\leqslant x_2 \,\Ra\, a x_1 \leqslant a x_2 \quad \text{(resp.\ $x_1 a \leqslant x_2 a$)}
\end{equation}

In fact, if $x_1\leqslant x_2$ then Equation~\eqref{def-leftmon} implies 
that for all $y\in \Sigma^*$, $y x_1 \leqslant y x_2$ since, by induction 
on the length $|y|\in\bN$, we have that:
\begin{myEnumI}
\item if $y=\epsilon$ then $y x_1 \leqslant y x_2$; 
\item if $y=av$ with $a\in \Sigma,v\in \Sigma^*$ then, by inductive hypothesis, $v x_1 \leqslant v x_2$, so that by \eqref{def-leftmon}, $yx_1=av x_1 \leqslant av x_2=yx_2$
\end{myEnumI} 

\begin{definition}[$L$-Consistent Quasiorder]\label{def:LConsistent}\rm
Let $L\in$ $\wp(\Sigma^*)$. A quasiorder \(\mathord{\leqslant_L}\) on \(\Sigma^*\) is called \emph{left} (resp.\ \emph{right}) \(L\)\emph{-consistent} if{}f   
\begin{myEnumA}
\item \(\mathord{\leqslant}_L \cap (L\times \neg L) = \varnothing \);\label{eq:LConsistentPrecise}
\item \(\mathord{\leqslant}_L\) is left-monotone (resp.\ right-monotone). \label{eq:LConsistentmonotone}
\end{myEnumA} 
\medskip
Also, \(\mathord{\leqslant}_L\) is called \emph{\(L\)-consistent} when it is both left and right \(L\)-consistent.\hfill{\rule{0.5em}{0.5em}}
\end{definition}

As the following lemma shows, it turns out that a quasiorder is $L$-consistent if{}f it induces a closure which includes $L$ in its image and it is 
backward complete for concatenation. 
\begin{lemma}\label{lemma:properties}
Let \(L\in \wp(\Sigma^*)\) and \(\mathord{\leqslant_L}\) be a quasiorder on \(\Sigma^*\).
Then, \(\mathord{\leqslant_L}\) is a 
left (resp.\ right) \(L\)-consistent quasiorder on \(\Sigma^*\) if and only if
\begin{myEnumA}
\item \(\rho_{\leqslant_L}(L) = L\), and \label{lemma:properties:L}
\item \(\rho_{\leqslant_L}\) is backward complete for \(\lambda X\ldotp a X\) (resp.\ \(\lambda X\ldotp Xa\)) for all \(a\in \Sigma\).\label{lemma:properties:bw}
\end{myEnumA}
\end{lemma}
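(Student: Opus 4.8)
The plan is to prove the equivalence by unfolding the definition of the closure $\rho_{\leqslant_L}$ from Equation~\eqref{eq:qo-up-closure} and matching the two clauses of Definition~\ref{def:LConsistent} against conditions~\ref{lemma:properties:L} and~\ref{lemma:properties:bw}. I would treat only the left case, since the right case is its obvious mirror (read $Xa$ for $aX$ and right-monotonicity for left-monotonicity throughout). The structural facts I would lean on repeatedly are reflexivity of $\leqslant_L$, which gives $X \subseteq \rho_{\leqslant_L}(X)$ and monotonicity of $\rho_{\leqslant_L}$, and transitivity of $\leqslant_L$, which gives idempotence of $\rho_{\leqslant_L}$ and lets me compose $\leqslant_L$-steps.

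For the forward direction I would assume $\leqslant_L$ is left $L$-consistent. Condition~\ref{lemma:properties:L} is immediate: $L \subseteq \rho_{\leqslant_L}(L)$ by reflexivity, and if $v \in \rho_{\leqslant_L}(L)$ then $u \leqslant_L v$ for some $u \in L$, so $v \in L$ by Definition~\ref{def:LConsistent}~\ref{eq:LConsistentPrecise}. For condition~\ref{lemma:properties:bw}, the inclusion $\rho_{\leqslant_L}(aX) \subseteq \rho_{\leqslant_L}(a\,\rho_{\leqslant_L}(X))$ holds for any closure by monotonicity and extensivity. For the reverse inclusion I would take $w \in \rho_{\leqslant_L}(a\,\rho_{\leqslant_L}(X))$, so that $av \leqslant_L w$ for some $v \in \rho_{\leqslant_L}(X)$; picking $u \in X$ with $u \leqslant_L v$, left-monotonicity gives $au \leqslant_L av$, and transitivity gives $au \leqslant_L w$, whence $w \in \rho_{\leqslant_L}(aX)$ since $au \in aX$.

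For the converse I would assume~\ref{lemma:properties:L} and~\ref{lemma:properties:bw}. Definition~\ref{def:LConsistent}~\ref{eq:LConsistentPrecise} follows because $u \leqslant_L v$ with $u \in L$ forces $v \in \rho_{\leqslant_L}(\{u\}) \subseteq \rho_{\leqslant_L}(L) = L$. For left-monotonicity, fix $a \in \Sigma$ and $x_1 \leqslant_L x_2$; instantiating $X = \{x_1\}$ we get $x_2 \in \rho_{\leqslant_L}(\{x_1\})$, hence $a x_2 \in \rho_{\leqslant_L}(a\,\rho_{\leqslant_L}(\{x_1\}))$ by extensivity, and backward completeness rewrites the right-hand side as $\rho_{\leqslant_L}(a\{x_1\}) = \rho_{\leqslant_L}(\{a x_1\})$, so $a x_1 \leqslant_L a x_2$.

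I expect no genuine obstacle; the whole proof is bookkeeping on the definitions. The one spot needing a moment of care is the non-trivial inclusion in the backward-completeness step of the forward direction, where left-monotonicity and transitivity must be combined in the right order; it is also worth noting there that only the one-letter monotonicity of Equation~\eqref{def-leftmon} is used (its extension to arbitrary prefixes $y \in \Sigma^*$, already recorded after that equation, is not needed).
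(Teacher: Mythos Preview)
Your proposal is correct and follows essentially the same route as the paper's proof: part~\ref{lemma:properties:L} is matched with Definition~\ref{def:LConsistent}~\ref{eq:LConsistentPrecise} via the definition of $\rho_{\leqslant_L}$, and part~\ref{lemma:properties:bw} is handled in both directions exactly as you describe (monotonicity and extensivity for the easy inclusion, left-monotonicity plus transitivity for the other, and the singleton instantiation $X=\{x_1\}$ for the converse). The paper presents the non-trivial inclusion as a chain of set equalities rather than an element chase, but the content is identical.
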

\begin{proof}
We consider the left case, the right case is symmetric. 

\begin{myEnumA}
\item The inclusion 
\(L\subseteq \rho_{\leqslant_L}(L)\) always 
holds because \(\rho_{\leqslant_L}\) is an upper closure. 
For the reverse inclusion we have that 
\begin{align*}
\rho_{\leqslant_L}(L)\subseteq L & \Lra \quad \text{[By definition of \(ρ_{\qo_L}(L)\)]}\\
\forall v\in \Sigma^*, \; (\exists u\in L,\, u \leqslant_L v) \:\Ra\: v\in L & \Lra \quad \\
\mathord{\leqslant}_L \cap (L\times \neg L) = \varnothing \enspace .
\end{align*}
Thus, $\rho_{\leqslant_L}(L)= L$ iff
condition~\ref{eq:LConsistentPrecise} of Definition~\ref{def:LConsistent} holds. 

\item We first prove that if \(\mathord{\leqslant}_L\) is left-monotone then for all $X\in \wp(\Sigma^*)$ we have that 
\(\rho_{\leqslant_L}(a X) = \rho_{\leqslant_L}(a \rho_{\leqslant_L}(X))\) for all \(a\in\Sigma\). 

Monotonicity of concatenation together with monotonicity and extensivity of 
$\rho_{\leqslant_L}$ imply that the inclusion \(\rho_{\leqslant_L}(a X) \subseteq \rho_{\leqslant_L}(a \rho_{\leqslant_L}(X))\) holds.
For the reverse inclusion, we have that:
\begin{align*}
	\rho_{\leqslant_L}(a \rho_{\leqslant_L}(X)) %
	&= \quad \text{[By definition of \(\rho_{\leqslant_L}\)]}\\
	\rho_{\leqslant_L}\left( \{ a y \mid \exists x\in X, x \leqslant_L y \} \right)
	&= \quad \text{[By definition of \(\rho_{\leqslant_L}\)]}\\
	\{ z \mid \exists x\in X, y\in \Sigma^*,\, x\leqslant_L y \land a y \leqslant_L z \}
	&\subseteq \quad \text{[By monotonicity of \(\leqslant_L\)]}\\
	\{ z \mid \exists x\in X, y\in \Sigma^*,\, ax\leqslant_L ay \land a y \leqslant_L z \}
	&= \quad \text{[By transitivity of \(\leqslant_L\)]}\\
	\{ z \mid \exists x\in X , a x\leqslant_L z\}
	&= \quad \text{[By definition of \(\rho_{\leqslant_L}\)]}\\
	\rho_{\leqslant_L}(a X) &\enspace . 
\end{align*}

Next, we show that if \(\rho_{\leqslant_L}(a X) = \rho_{\leqslant_L}(a \rho_{\leqslant_L}(X))\) for all $X\in \wp(\Sigma^*)$ and \(a\in\Sigma\) then \(\leqslant_L\) is left-monotone.

Let $x_1,x_2\in \Sigma^*$ and $a\in \Sigma$. 
If $x_1 \leqslant_L x_2$ then
$\{x_2\} \subseteq \rho_{\leqslant_L}(\{x_1 \})$, hence
$a\{x_2\} \subseteq  a\rho_{\leqslant_L}(\{x_1 \})$.
Then, by applying the monotone function 
$\rho_{\leqslant_L}$ we have that
$\rho_{\leqslant_L}(a\{x_2\}) \subseteq  \rho_{\leqslant_L}(a\rho_{\leqslant_L}(\{x_1 \}))$, so that, by backward completeness, 
$\rho_{\leqslant_L}(a\{x_2\}) \subseteq  \rho_{\leqslant_L}(a\{x_1 \})$.
Thus, $a\{x_2\} \subseteq \rho_{\leqslant_L}(a\{x_1 \})$, namely, 
$ax_1 \leqslant_L ax_2$. By~\eqref{def-leftmon}, this shows that $\leqslant_L$ is left-monotone. 
\end{myEnumA}
\end{proof}

Since \(ρ_{\leqslant} (\vectarg{\epsilon}{F} \!\cup \Pre_{\cN}(\vect{X})) = ρ_{\leqslant}(\minor{\vectarg{\epsilon}{F} \! \cup \Pre_{\cN}(\vect{X})})\) for every quasiorder then, by Lemma~\ref{lemma:properties}, we can apply Theorem~\ref{theorem:FiniteWordsAlgorithmGeneral} with the abstraction \(ρ_{\leqslant_{L_2}}\) induced by a left \(L_2\)-consistent well-quasiorder and \(\mathcal{F} =\minor{\vectarg{\epsilon}{F} \! \cup \Pre_{\cN}(\vect{X})}\) interpreted as the set of functions of the form \(f_i = \minor{\vectarg{\epsilon}{F} \! \cup \Pre_{\cN}(\vect{X})}_i\) where each \(\minor{\cdot}_i\) is a function mapping each set \(X \in \wp(Σ^*)\) into a minor \(\minor{X}_i\).
Intuitively, this means that we can manipulate \(\qo\)-upward closed sets in \(\wp(Σ^*)\) using their finite minors, as already shown by \citet{ACJT96}.

As a consequence, we obtain Algorithm~\AlgRegularW which, given a left \(L_2\)-consistent well-quasiorder, solves the language inclusion problem \(\lang{\cN} \subseteq L_2\) for any automaton \(\cN\).
The algorithm is called ``word-based'' because the vector \(\tuple{Y_q}_{q \in Q}\) consists of finite sets of words in \(Σ^*\).
We write \(\mathord{\sqsubseteq_{\leqslant^{\ell}_{L_2}}}\hspace{-4pt} \cap \mathord{(\sqsubseteq_{\leqslant^{\ell}_{L_2}}\hspace{-2pt})^{-1}}\) as the first argument of \(\KleeneQO\) to denote the function \(f(X,Y)\) that returns \(\True\) if{}f \(X \sqsubseteq_{\leqslant^{\ell}_{L_2}} Y\) and \(Y \sqsubseteq_{\leqslant^{\ell}_{L_2}} X\).

\begin{figure}[!ht]
\RemoveAlgoNumber
\begin{algorithm}[H]
\SetAlgorithmName{\AlgRegularW}{}
\SetSideCommentRight
\caption{Word-based algorithm for \(\lang{\cN} \subseteq L_2\)}\label{alg:RegIncW}

\KwData{NFA \(\cN=\tuple{Q,Σ,\delta,I,F}\); decision procedure for \(u\mathrel{\in } L_2\); decidable left \(L_2\)-consistent wqo \(\mathord{\leqslant^{\ell}_{L_2}}\).}

\(\tuple{Y_q}_{q\in Q} := \KleeneQO (\mathord{\sqsubseteq_{\leqslant^{\ell}_{L_2}}}\hspace{-4pt} \cap \mathord{(\sqsubseteq_{\leqslant^{\ell}_{L_2}}\hspace{-2pt})^{-1}}, \lambda \vect{X}\ldotp\lfloor\vectarg{\epsilon}{F} \cup \Pre_{\cN}(\vect{X})\rfloor, \vect{\varnothing})\)\;

\ForAll{\(q \in I\)}{
	\ForAll{\(u \in Y_q\)}{
		\lIf{\(u \notin L_2\)}{\Return \textit{false}}
	}
}
\Return \textit{true}\;
\end{algorithm}
\end{figure}

\begin{theorem}\label{theorem:quasiorderAlgorithm}
Let \(\cN=\tuple{Q,Σ,\delta,I,F}\) be an NFA and let \(L_2\in \wp(\Sigma^*)\) be a language such that: 
\begin{myEnumIL}
\item membership in $L_2$ is decidable; \label{theorem:quasiorderAlgorithm:membership}
\item there exists a decidable left \(L_2\)-consistent wqo on $\Sigma^*$.\label{theorem:quasiorderAlgorithm:decidableL}%
\end{myEnumIL}%
Then, Algorithm \AlgRegularW decides the inclusion problem \(\lang{\cN} \subseteq L_2\).
\end{theorem}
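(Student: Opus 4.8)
The plan is to apply Theorem~\ref{theorem:FiniteWordsAlgorithmGeneral} with the specific abstraction $\rho \ud \rho_{\leqslant^{\ell}_{L_2}}$ induced by the given left $L_2$-consistent wqo, and with the family of functions $\mathcal{F}$ obtained by taking minors of $\vectarg{\epsilon}{F} \cup \Pre_{\cN}(\vect{X})$. So the proof amounts to checking that the five hypotheses of Theorem~\ref{theorem:FiniteWordsAlgorithmGeneral} all hold under assumptions~\ref{theorem:quasiorderAlgorithm:membership} and~\ref{theorem:quasiorderAlgorithm:decidableL}, and then observing that Algorithm~\AlgRegularW is exactly the instantiated algorithm from that theorem (after unfolding $\absincl$ and $\abseq^{\sharp}$ into the concrete loops and the $\sqsubseteq$-based equivalence test).

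First I would verify hypothesis~\ref{theorem:FiniteWordsAlgorithmGeneral:rho}: since $\mathord{\leqslant^{\ell}_{L_2}}$ is left $L_2$-consistent, Lemma~\ref{lemma:properties} gives both $\rho_{\leqslant^{\ell}_{L_2}}(L_2) = L_2$ and backward completeness of $\rho_{\leqslant^{\ell}_{L_2}}$ for $\lambda X.\,aX$ for all $a \in \Sigma$. Next, hypothesis~\ref{theorem:FiniteWordsAlgorithmGeneral:ACC}: the image of $\rho_{\leqslant^{\ell}_{L_2}}$ consists of $\leqslant^{\ell}_{L_2}$-upward-closed subsets of $\Sigma^*$, ordered by $\subseteq$; since $\mathord{\leqslant^{\ell}_{L_2}}$ is a wqo, this poset is ACC (any upward-closed set is determined by its finitely many minimal elements, and there is no infinite $\subsetneq$-ascending chain of such sets — this is the standard fact recalled in the background section, cf.\ the discussion around $\tuple{\AC_{\tuple{D,\leqslant}},\sqsubseteq}$), and it is a CPO since arbitrary unions of upward-closed sets are upward closed. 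For hypothesis~\ref{theorem:FiniteWordsAlgorithmGeneral:F}, I would take $\mathcal{F}$ to be the set of functions $f_i = \minor{\vectarg{\epsilon}{F} \cup \Pre_{\cN}(\vect{X})}_i$; each such $f$ is computable (it applies $\Pre_{\cN}$, which is a finite union of left-concatenations by letters, then extracts a minor, which is computable because $\mathord{\leqslant^{\ell}_{L_2}}$ is decidable), and $\rho_{\leqslant^{\ell}_{L_2}}(f(\vect{X})) = \rho_{\leqslant^{\ell}_{L_2}}(\vectarg{\epsilon}{F} \cup \Pre_{\cN}(\vect{X}))$ because taking a minor does not change the upward closure; these functions are also monotone as required by Lemma~\ref{lemma:KleeneQOLfp}. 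Hypotheses~\ref{theorem:FiniteWordsAlgorithmGeneral:EQ} and~\ref{theorem:FiniteWordsAlgorithmGeneral:INC} reduce to concrete decidability: $\rho_{\leqslant^{\ell}_{L_2}}(\vect{X}) = \rho_{\leqslant^{\ell}_{L_2}}(\vect{Y})$ iff $\vect{X} \sqsubseteq_{\leqslant^{\ell}_{L_2}} \vect{Y}$ and $\vect{Y} \sqsubseteq_{\leqslant^{\ell}_{L_2}} \vect{X}$, which is decidable componentwise since $\mathord{\leqslant^{\ell}_{L_2}}$ is decidable and the vectors carry finite sets; and $\rho_{\leqslant^{\ell}_{L_2}}(\vect{X}) \subseteq \vectarg{L_2}{I}$ holds iff for every $q \in I$ and every $u$ in the (finite) $q$-component of $\vect{X}$ we have $u \in L_2$ — here using $\rho_{\leqslant^{\ell}_{L_2}}(L_2) = L_2$ so that $u \in L_2 \Leftrightarrow \rho_{\leqslant^{\ell}_{L_2}}(u) \subseteq L_2$ — which is decidable by assumption~\ref{theorem:quasiorderAlgorithm:membership}. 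For the non-initial components the constraint is $\subseteq \Sigma^*$, trivially true.

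Having checked all five hypotheses, Theorem~\ref{theorem:FiniteWordsAlgorithmGeneral} yields that the procedure ``compute $\tuple{Y_q}_{q\in Q} := \KleeneQO(\abseq^{\sharp}, \mathcal{F}, \vect{\varnothing})$; return $\absincl(\tuple{Y_q}_{q\in Q})$'' decides $\lang{\cN} \subseteq L_2$. It then remains to identify this with Algorithm~\AlgRegularW: the first argument of $\KleeneQO$ in the algorithm, $\mathord{\sqsubseteq_{\leqslant^{\ell}_{L_2}}} \cap \mathord{(\sqsubseteq_{\leqslant^{\ell}_{L_2}})^{-1}}$, is precisely the decision procedure $\abseq^{\sharp}$ for abstraction equivalence identified above; the second argument $\lambda\vect{X}.\lfloor\vectarg{\epsilon}{F}\cup\Pre_{\cN}(\vect{X})\rfloor$ is (a choice from) $\mathcal{F}$; and the final \textbf{for}-loop over $q \in I$ and $u \in Y_q$ returning \textit{false} on the first $u \notin L_2$, \textit{true} otherwise, implements $\absincl$. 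Hence Algorithm~\AlgRegularW decides $\lang{\cN} \subseteq L_2$.

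The main obstacle I anticipate is not any single deep step but rather the bookkeeping around hypothesis~\ref{theorem:FiniteWordsAlgorithmGeneral:ACC} and~\ref{theorem:FiniteWordsAlgorithmGeneral:F}: one must be careful that ``working with finite minors'' is genuinely faithful to ``working in the image of $\rho_{\leqslant^{\ell}_{L_2}}$'', i.e.\ that $\KleeneQO$ operating on minors computes exactly the abstraction of the Kleene iterates of $\lfp(\lambda\vect{X}.\rho_{\leqslant^{\ell}_{L_2}}(\vectarg{\epsilon}{F}\cup\Pre_{\cN}(\vect{X})))$ — this is supplied by Corollary~\ref{corol:FiniteWordsAlgorithm} and Lemma~\ref{lemma:KleeneQOLfp}, but invoking them requires checking their hypotheses (monotonicity of each $f\in\mathcal{F}$, the ACC CPO condition, and the backward-completeness-preserving identity), which is where the real content lies. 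The decidability items~\ref{theorem:quasiorderAlgorithm:membership} and~\ref{theorem:quasiorderAlgorithm:decidableL} then plug in directly to make the abstract operations effective.
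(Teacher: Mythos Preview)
Your proposal is correct and follows essentially the same approach as the paper: instantiate Theorem~\ref{theorem:FiniteWordsAlgorithmGeneral} with $\rho=\rho_{\leqslant^{\ell}_{L_2}}$ and $\mathcal{F}=\lfloor\vectarg{\epsilon}{F}\cup\Pre_{\cN}(\cdot)\rfloor$, verify its five hypotheses via Lemma~\ref{lemma:properties}, the wqo property, the minor-preserves-closure identity, and the decidability assumptions, then unfold the resulting procedure into Algorithm~\AlgRegularW. Your treatment is in fact slightly more careful than the paper's own proof in that you flag the monotonicity requirement on the functions in $\mathcal{F}$ (needed by Lemma~\ref{lemma:KleeneQOLfp}) and give a more explicit justification for the ACC CPO condition.
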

\begin{proof}
Let $\leqslant_{L_2}^{\ell}$ be a decidable left $L_2$-consistent well-quasiorder on $\Sigma^*$.
Then, we check that hypothesis~\ref{theorem:FiniteWordsAlgorithmGeneral:rho}-\ref{theorem:FiniteWordsAlgorithmGeneral:INC} of Theorem~\ref{theorem:FiniteWordsAlgorithmGeneral} are satisfied.
\begin{myEnumA}
\item It follows from hypothesis~\ref{theorem:quasiorderAlgorithm:decidableL} and Lemma~\ref{lemma:properties} that \(\leqslant_{L_2}^{\ell}\) is backward complete for left concatenation and satisfies \(ρ_{\leqslant_{L_2}^{\ell}}(L_2) = L_2\).

\item Since \(\leqslant_{L_2}^{\ell}\) is a wqo, then \(\tuple{\{ρ_{\leqslant_{L_2}^{\ell}}(S) \mid S \in \wp(Σ^*)\}, \subseteq}\) is an ACC CPO.

\item Let \(\lfloor\vectarg{\epsilon}{F} \cup \Pre_{\cN}(\vect{X})\rfloor\) be the set of functions \(f_i\) each of which maps each set \(X \in \wp(Σ^*)\) into a minor of \(\vectarg{\epsilon}{F} \cup \Pre_{\cN}(\vect{X})\).
Since \(\rho_{\leqslant_{L_2}^{\ell}}(X) = ρ_{\leqslant_{L_2}^{\ell}}(\minor{X})\) for all \(X \in \wp(Σ^*)^{|Q|}\), 
we have that all functions \(f_i\) satisfy 
\[\rho_{\leqslant_{L_2}^{\ell}} (\vectarg{\epsilon}{F} \!\cup \Pre_{\cN}(\vect{X})) = ρ_{\leqslant_{L_2}^{\ell}}(f_i(\vect{X}))\enspace .\]

\item The equality \(ρ_{\leqslant_{L_2}^{\ell}}(S_1) = ρ_{\leqslant_{L_2}^{\ell}}(S_2)\) is decidable for every \(S_1, S_2 \in \wp(Σ^*)^{|Q|}\) since 
\[ρ_{\leqslant_{L_2}^{\ell}}(S_1) = ρ_{\leqslant_{L_2}^{\ell}}(S_2) \Lra S_1 \sqsubseteq_{\leqslant_{L_2}^{\ell}} S_2 \land S_2 \sqsubseteq_{\leqslant_{L_2}^{\ell}} S_1\]
and, by hypothesis~\ref{theorem:quasiorderAlgorithm:decidableL}, \(\leqslant_{L_2}^{\ell}\) is decidable.

\item Since \(\vectarg{L_2}{I} = \tuple{\nullable{q \in I}{L_2}{\Sigma^*}}_{q \in Q})\), the inclusion trivially holds for all components \(Y_q\) with \(q \notin I\).
Therefore, it suffices to check whether \(Y_q \subseteq L_2\) holds for \(q \in I\) which, since \(Y_q = \minor{S}\) with \(S \in \wp(Σ^*)\), can be decided by performing finitely many membership tests as done by lines 2-5 of Algorithm~\AlgRegularW.
By hypothesis~\ref{theorem:quasiorderAlgorithm:membership}, this check is decidable.
\end{myEnumA}
\end{proof}

\subsubsection{Right Concatenation}
Following Section~\ref{rightwqo},
a symmetric version, called \AlgRegularWr, of Algorithm \AlgRegularW and of Theorem~\ref{theorem:quasiorderAlgorithm} for \emph{right} \(L_2\)-consistent wqos can be easily derived  as follows. 

\begin{figure}[!ht]
\RemoveAlgoNumber
\begin{algorithm}[H]
\SetAlgorithmName{\AlgRegularWr}{}

\caption{Word-based algorithm for \(\lang{\cN} \subseteq L_2\)} \label{alg:RegIncWr}

\KwData{NFA \(\cN=\tuple{Q,Σ,\delta,I,F}\); decision procedure for \(u\in  L_2\); decidable right \(L_2\)-consistent wqo \(\mathord{\leqslant^r_{L_2}}\).}

\(\tuple{Y_q}_{q\in Q} := \KleeneQO (\mathord{\sqsubseteq_{\leqslant^{r}_{L_2}}}\hspace{-4pt} \cap \mathord{(\sqsubseteq_{\leqslant^{r}_{L_2}}\hspace{-2pt})^{-1}}, \lambda \vect{X}\ldotp\lfloor\vectarg{\epsilon}{I} \cup \Post_{\cN}(\vect{X})\rfloor, \vect{\varnothing})\)\;

\ForAll{\(q \in F\)}{
	\ForAll{\(u \in Y_q\)} {
		\lIf{\(u \notin L_2\)}{\Return \textit{false}}
	}
}
\Return \textit{true}\;
\end{algorithm}
\end{figure}

\begin{theorem}\label{theorem:quasiorderAlgorithmR}
Let \(\cN=\tuple{Q,Σ,\delta,I,F}\) be an NFA and let \(L_2\in \wp(\Sigma^*)\) be a language such that 
\begin{myEnumIL}
\item membership in $L_2$ is decidable;
\item there exists a decidable right \(L_2\)-consistent wqo on $\Sigma^*$.%
\end{myEnumIL}%
Then, Algorithm \AlgRegularWr decides the inclusion problem \(\lang{\cN} \subseteq L_2\).
\end{theorem}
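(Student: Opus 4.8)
The plan is to dualize the proof of Theorem~\ref{theorem:quasiorderAlgorithm} step by step, replacing left concatenation by right concatenation everywhere. First I would fix a decidable right \(L_2\)-consistent well-quasiorder \(\mathord{\leqslant^r_{L_2}}\) on \(\Sigma^*\), which exists by the second hypothesis, and invoke the right-concatenation version of Theorem~\ref{theorem:FiniteWordsAlgorithmGeneral} --- the symmetric statement mentioned right after that theorem, obtained from it by using Theorem~\ref{theorem:backCompleteRight} in place of Theorem~\ref{theorem:backComplete} and the fixpoint characterization \eqref{eq:WIqAequalslfp} of \(\tuple{W^{\cN}_{I,q}}_{q\in Q}\) in place of \eqref{eq:WqFAequalslfp}. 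I would instantiate it with the closure \(\rho \ud \rho_{\leqslant^r_{L_2}}\) and the set of functions \(\mathcal{F} \ud \minor{\vectarg{\epsilon}{I} \cup \Post_{\cN}(\vect{X})}\), read (exactly as in the proof of Theorem~\ref{theorem:quasiorderAlgorithm} for \AlgRegularW) as the collection of maps \(f_i\) that send each language to one of its minors; these are precisely the ingredients appearing in Algorithm \AlgRegularWr.

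Then I would verify the five hypotheses of the symmetric theorem. For the completeness hypothesis: by the right case of Lemma~\ref{lemma:properties}, right \(L_2\)-consistency of \(\leqslant^r_{L_2}\) is equivalent to \(\rho_{\leqslant^r_{L_2}}(L_2) = L_2\) together with backward completeness of \(\rho_{\leqslant^r_{L_2}}\) for \(\lambda X\ldotp Xa\) for every \(a \in \Sigma\). For the chain-condition hypothesis: since \(\leqslant^r_{L_2}\) is a well-quasiorder, \(\tuple{\{\rho_{\leqslant^r_{L_2}}(S) \mid S \in \wp(\Sigma^*)\},\subseteq}\) is an ACC CPO with least element \(\varnothing\). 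For the hypothesis on \(\mathcal{F}\): every \(f_i\) is computable and, because \(\rho_{\leqslant^r_{L_2}}(X) = \rho_{\leqslant^r_{L_2}}(\minor{X})\) for every \(X\), satisfies \(\rho_{\leqslant^r_{L_2}}(\vectarg{\epsilon}{I} \cup \Post_{\cN}(\vect{X})) = \rho_{\leqslant^r_{L_2}}(f_i(\vect{X}))\). For the decidability of abstraction equivalence: \(\rho_{\leqslant^r_{L_2}}(S_1) = \rho_{\leqslant^r_{L_2}}(S_2)\) iff \(S_1 \sqsubseteq_{\leqslant^r_{L_2}} S_2\) and \(S_2 \sqsubseteq_{\leqslant^r_{L_2}} S_1\), which is decidable since \(\leqslant^r_{L_2}\) is --- this is exactly the function used as the first argument of \(\KleeneQO\) in \AlgRegularWr. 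For the decidability of the final abstract inclusion: since \(\vectarg{L_2}{F} = \tuple{\nullable{q \in F}{L_2}{\Sigma^*}}_{q\in Q}\), the inclusion \(\rho_{\leqslant^r_{L_2}}(\tuple{Y_q}_{q\in Q}) \subseteq \vectarg{L_2}{F}\) holds vacuously on the components indexed by \(q \notin F\), so it reduces to checking \(Y_q \subseteq L_2\) for each \(q \in F\); as each \(Y_q\) is a finite minor, this is the finite batch of membership queries performed in lines~2--5 of \AlgRegularWr, decidable by the first hypothesis.

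With the five hypotheses in hand, the symmetric theorem gives that the displayed two-line procedure --- run \(\KleeneQO\) to obtain \(\tuple{Y_q}_{q\in Q}\), then test whether every word in each \(Y_q\) with \(q \in F\) belongs to \(L_2\) --- decides \(\lang{\cN} \subseteq L_2\), which is Algorithm \AlgRegularWr. I do not expect a genuine obstacle: the whole argument is a mechanical transposition of Theorem~\ref{theorem:quasiorderAlgorithm} obtained by reading the \(\Post_{\cN}\)-based fixpoint and the left-language characterization \eqref{eq:WIqAequalslfp} instead of their left-concatenation counterparts. The only things requiring care are bookkeeping: using \(\vectarg{\epsilon}{I}\) rather than \(\vectarg{\epsilon}{F}\), targeting \(\vectarg{L_2}{F}\) rather than \(\vectarg{L_2}{I}\), and accordingly iterating the membership test over the final states \(F\) rather than the initial states \(I\); plus making the symmetric form of Theorem~\ref{theorem:FiniteWordsAlgorithmGeneral} explicit, whose proof is itself a line-by-line dual of the original via Lemma~\ref{lemma:KleeneQOLfp} and Corollary~\ref{corol:FiniteWordsAlgorithm} stated for right concatenation.
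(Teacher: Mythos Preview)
Your proposal is correct and matches the paper's approach exactly: the paper does not give an explicit proof of Theorem~\ref{theorem:quasiorderAlgorithmR} but simply presents it as the symmetric version of Theorem~\ref{theorem:quasiorderAlgorithm} for right \(L_2\)-consistent wqos, to be ``easily derived'' by the dualization you spell out. The only difference is that you make explicit the mechanical transposition (swapping \(\Pre_{\cN}\)/\(\Post_{\cN}\), \(\vectarg{\epsilon}{F}\)/\(\vectarg{\epsilon}{I}\), \(\vectarg{L_2}{I}\)/\(\vectarg{L_2}{F}\), and invoking the right case of Lemma~\ref{lemma:properties} and the symmetric form of Theorem~\ref{theorem:FiniteWordsAlgorithmGeneral}), which the paper leaves to the reader.
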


In the following, we will consider different quasiorders on $\Sigma^*$ and we will show that they fulfill the requirements of Theorem~\ref{theorem:quasiorderAlgorithm}, so that they yield algorithms for solving a language inclusion problem.

\subsection{Nerode Quasiorders}\label{sec:nerode}
\label{sub:the_left_nerode_quasi_order_relative_to_a_language}
Recall from Chapter~\ref{chap:prel} that the \emph{left} and \emph{right} 
\emph{Nerode's quasiorders} on \(\Sigma^*\) are defined in the standard way:
\begin{align*}
	u\qlL v &\udiff\; L u^{-1} \subseteq L v^{-1} \,,&
	u\qrL v &\udiff\; u^{-1} L \subseteq v^{-1} L \enspace . 
\end{align*}
The following result shows that Nerode's quasiorders are the weakest (i.e. greatest w.r.t.\ set inclusion of binary relations) \(L_2\)-consistent quasiorders for which the algorithm \AlgRegularW can be instantiated to decide a language inclusion \(\lang{\cN}\subseteq L_2\).

\begin{lemma}\label{lemma:leftrightnerodegoodqo}
Let $L\in \wp(\Sigma^*)$.  
Then
\begin{myEnumA}
\item \(\mathord{\qlL}\) and \(\mathord{\qrL}\) are, respectively, left and right \(L\)-consistent quasiorders.
If $L$ is regular then, additionally, \(\mathord{\qlL}\) and \(\mathord{\qrL}\) are, respectively, decidable wqos. 	\label{lemma:leftrightnerodegoodqo:Consistent}
\item Let \(\ql\) and \(\qr\) be, respectively, a left and a right \(L\)-consistent quasiorder on $\Sigma^*$.
Then \( \rho_{\mathord{\qlL}} \subseteq \rho_{\ql} \) and \( \rho_{\qrL} \subseteq \rho_{\qr} \).\label{lemma:leftrightnerodegoodqo:Incl}
\end{myEnumA}
\end{lemma}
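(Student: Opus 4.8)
The lemma has two parts. Part (a) asserts that the Nerode quasiorders are left/right $L$-consistent quasiorders, and moreover decidable wqos when $L$ is regular. Part (b) asserts that they are the \emph{weakest} such: any left (resp.\ right) $L$-consistent quasiorder $\ql$ (resp.\ $\qr$) induces a \emph{finer} closure, i.e.\ $\rho_{\qlL}\subseteq\rho_{\ql}$ (resp.\ $\rho_{\qrL}\subseteq\rho_{\qr}$). I would treat the left case throughout; the right case is symmetric. Much of part (a) has essentially already been recorded in the Background chapter (it cites \citet{deLuca1994} for monotonicity, the wqo property for regular $L$, and the maximality of $\qlL$ among left monotone quasiorders $\ql$ with $\rho_\ql(L)=L$), so my job is mostly to assemble these facts and route them through Lemma~\ref{lemma:properties}.

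\textbf{Part (a).} First I would check that $\qlL$ is a quasiorder: reflexivity and transitivity are immediate since it is defined by set inclusion $Lu^{-1}\subseteq Lv^{-1}$, which is reflexive and transitive. Next, left-monotonicity: if $u\qlL v$, i.e.\ $Lu^{-1}\subseteq Lv^{-1}$, then for $a\in\Sigma$ I must show $L(au)^{-1}\subseteq L(av)^{-1}$; this follows from the identity $L(aw)^{-1}=(La^{-1})w^{-1}$ applied with the language $La^{-1}$ in place of $L$ — wait, one needs $L(aw)^{-1}=\{x\mid xaw\in L\}$, so actually the cleaner route is the adjunction: $Lu^{-1}\subseteq Lv^{-1}$ iff for all $X$, $Xu\subseteq L\Rightarrow Xv\subseteq L$; this is exactly what \citet{deLuca1994} establish, so I would simply cite it. For condition~\ref{eq:LConsistentPrecise} of Definition~\ref{def:LConsistent}, i.e.\ $\mathord{\qlL}\cap(L\times\neg L)=\varnothing$: suppose $u\in L$ and $u\qlL v$; then $\varepsilon\in Lu^{-1}$ (since $u\varepsilon=u\in L$), so $\varepsilon\in Lv^{-1}$, i.e.\ $v\in L$; hence $v\notin\neg L$. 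This gives left $L$-consistency. For the wqo claim when $L$ is regular, I would cite \citet[Theorem~2.4]{deLuca1994} as already done in the Background, and note decidability: when $L$ is regular, $Lu^{-1}$ and $Lv^{-1}$ are regular languages effectively computable from an automaton for $L$ (the right quotient shifts the initial/final states), and inclusion of regular languages is decidable, so $\qlL$ is a decidable relation.

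\textbf{Part (b) — the main point.} Let $\ql$ be any left $L$-consistent quasiorder. By Lemma~\ref{lemma:properties}\ref{lemma:properties:L}, $L$-consistency gives $\rho_\ql(L)=L$, and by left-monotonicity $\ql$ is a left monotone quasiorder with this property. Then I invoke the maximality statement recalled at the end of Section~``Nerode Quasiorders'' in the Background: $\qlL$ is maximum among all left monotone quasiorders $\ql$ with $\rho_\ql(L)=L$, hence for all $x,y\in\Sigma^*$, $x\ql y\Rightarrow x\qlL y$. (If one prefers a self-contained argument rather than citing: from $x\ql y$ and left-monotonicity, $ux\ql uy$ for every prefix-free... rather, for every $u$ — by the inductive argument spelled out after Equation~\eqref{def-leftmon} in its left-analogue; then for any $w$ with $wx\in L$ we have $wx\ql wy$, and since $\rho_\ql(L)=L$, also $wy\in L$; thus $Lx^{-1}\subseteq Ly^{-1}$, i.e.\ $x\qlL y$.) From $\mathord{\ql}\supseteq\mathord{\qlL}$ as binary relations it follows directly that $\rho_{\qlL}(X)=\{v\mid\exists u\in X,\ u\qlL v\}\subseteq\{v\mid\exists u\in X,\ u\ql v\}=\rho_\ql(X)$ for every $X\subseteq\Sigma^*$, which is exactly $\rho_{\qlL}\subseteq\rho_\ql$ in the ``finer than'' ordering on closures. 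The right case is obtained by reversing words, using that $L^R$-consistency of the reverse relation corresponds to $L$-consistency of the original and $\qrL$ corresponds to $\qlL[L^R]$ under reversal.

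\textbf{Expected obstacle.} The only mildly delicate point is making the ``self-contained'' version of the maximality argument airtight — in particular verifying carefully that left-monotonicity of $\ql$ propagates from single-letter prefixes to arbitrary prefixes $w$ (a routine induction on $|w|$, mirroring the one already given in the excerpt after Equation~\eqref{def-leftmon}), and that $\rho_\ql(L)=L$ is genuinely the right hypothesis to close the implication $wx\in L\Rightarrow wy\in L$. Since the excerpt already states this maximality result of \citet{deLuca1994} verbatim and permits me to assume earlier results, I would simply cite it and keep the proof short, relegating the elementary reductions (reflexivity/transitivity, the disjointness check, the reversal symmetry) to one or two lines each.
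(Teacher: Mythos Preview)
Your approach for part~(a) matches the paper's: cite de~Luca--Varricchio for monotonicity and the wqo property, check the disjointness condition via $\varepsilon\in Lu^{-1}$, and note decidability from computability of quotients. The overall strategy for part~(b) is also the same as the paper's --- invoke Lemma~\ref{lemma:properties} to get $\rho_{\ql}(L)=L$ and then cite the maximality of $\qlL$ from de~Luca--Varricchio.

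However, your final step in part~(b) has a direction error. You correctly establish $x\ql y\Rightarrow x\qlL y$, which is $\mathord{\ql}\subseteq\mathord{\qlL}$ as binary relations --- but then you write ``$\mathord{\ql}\supseteq\mathord{\qlL}$'' and derive the pointwise inclusion $\rho_{\qlL}(X)\subseteq\rho_{\ql}(X)$. That pointwise inclusion is false in general: take $\ql$ to be equality (trivially left $L$-consistent), so $\rho_{\ql}(X)=X$, while $\rho_{\qlL}(X)$ is typically strictly larger. The correct pointwise inclusion, which is what the paper proves, goes the other way: from $\mathord{\ql}\subseteq\mathord{\qlL}$ one gets $\rho_{\ql}(X)\subseteq\rho_{\qlL}(X)$ for all $X$.

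The notation $\rho_{\qlL}\subseteq\rho_{\ql}$ in the lemma statement refers to the standard ordering on upper closure operators introduced in the Background chapter: $\rho\subseteq\rho'$ means the \emph{image} $\rho(C)$ is contained in $\rho'(C)$, equivalently $\rho'(x)\leq_C\rho(x)$ pointwise. So $\rho_{\qlL}\subseteq\rho_{\ql}$ unpacks to $\rho_{\ql}(X)\subseteq\rho_{\qlL}(X)$ for all $X$ --- the opposite of what you wrote. Once you flip the direction in your last paragraph, the proof is correct and essentially identical to the paper's.
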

\begin{proof}\hfill

\begin{myEnumA}
\item As explained in Chapter~\ref{chap:prel}, \citet[Section 2]{deLuca1994} show that \(\qlL\) and \(\qrL\) are, respectively, left and right monotone quasiorders.

On the other hand, note that given \(u \in L\) and \(v \notin L\) we have that \(\epsilon \in Lu^{-1}\) and \(\epsilon \in u^{-1}L\) while \(\epsilon \notin Lv^{-1}\) and \(\epsilon \notin v^{-1}L\). 
Hence, \(\mathord{\qlL}\) (resp. \(\mathord{\qrL}\)) is a left (resp. right) \(L\)-consistent quasiorder.

Finally, if $L$ is regular then both relations are 
clearly decidable.

\item We consider the left case (the right case is symmetric). 

As shown by \citet[Section~2, point~4]{deLuca1994}, \(\mathord{\qlL}\) is maximum in the set of all the left \(L\)-consistent quasiorders, i.e.\ every left \(L\)-consistent quasiorder \(\ql\) is such that \(x \ql y \Ra x \qlL y \).
As a consequence, \(\rho_{\ql}(X) \subseteq \rho_{\qlL}(X)\) holds for all \(X\in \wp(\Sigma^*)\), namely, 
\(\mathord{\ql} \subseteq \mathord{\qlL}\).\qedhere
\end{myEnumA}
\end{proof}

We then derive a first instantiation of Theorem~\ref{theorem:quasiorderAlgorithm}. %
Because membership is decidable for regular languages $L_2$, Lemma~\ref{lemma:leftrightnerodegoodqo}~\ref{lemma:leftrightnerodegoodqo:Consistent} for \(\ql_{L_2}\) implies that the hypotheses \ref{theorem:quasiorderAlgorithm:membership} and \ref{theorem:quasiorderAlgorithm:decidableL} of Theorem~\ref{theorem:quasiorderAlgorithm} are satisfied, so that Algorithm \AlgRegularW instantiated to \(\ql_{L_2}\)
decides the inclusion \(\lang{\cN} \subseteq L_2\) when $L_2$ is regular. 

Furthermore, under these hypotheses,
Lemma~\ref{lemma:leftrightnerodegoodqo}~\ref{lemma:leftrightnerodegoodqo:Incl} shows that \(\qo_{L_2}^{\ell}\) is the weakest (i.e. greatest for set inclusion) 
left \(L_2\)-consistent quasiorder for which the algorithm \AlgRegularW can be instantiated 
for deciding the inclusion $\lang{\cN}\subseteq L_2$.

\begin{figure}[t]
    \centering
  \begin{minipage}{0.45\textwidth}
  \begin{tikzpicture}[->,>=stealth',shorten >=1pt,auto,node distance=5mm and 1cm,thick,initial text=]
  \tikzstyle{every state}=[scale=0.75,fill=customblue!60,draw=blue!60,text=black]

  \node[initial, state] (1) {\(1\)};
  \node[accepting, state] (2) [right=of 1] {\(2\)};
  \node (A1) [left=of 1] {$\cN_1:$};
  
  \path (1) edge node {\(a, b, c\)} (2)
        (1) edge[loop, in=60, out=120, looseness=5] node {\(a\)} (1)
            ;
  \end{tikzpicture}
  \end{minipage}
  \begin{minipage}{0.45\textwidth}
  \begin{tikzpicture}[->,>=stealth',shorten >=1pt,auto,node distance=6mm and 1cm,thick,initial text=]
  \tikzstyle{every state}=[scale=0.75,fill=customblue!60,draw=blue!60,text=black]
  
  \node[initial,state] (1) {\(1\)};
  \node[state] (2) [right=of 1] {\(2\)};
  \node[accepting, state] (5) [right=of 2] {\(5\)};
  \node[state] (3) [above=of 2] {\(3\)};
  \node[state] (4) [below=of 2] {\(4\)};
  \node (A2) [left=of 1] {$\cN_2:$};
  
  \path (1) edge[loop, in=80, out=140, looseness=5] node[above]  {\(a\)} (1)
        (1) edge node {\(a\)} (2)
        (1) edge[bend left=30] node {\(a\)} (3)
        (1) edge[bend right=30] node[below] {\(a,b\)} (4)
        (3) edge[bend left=30] node {\(a\)} (5)
        (3) edge[loop, in=60, out=120, looseness=5] node {\(a,b\)} (3)
        (2) edge node {\(c\)} (5)
        (2) edge[loop, in=60, out=120, looseness=4] node[yshift=-3pt] {\(a\)} (2)
        (4) edge[bend right=30] node[below] {\(b\)} (5)
          ;
  \end{tikzpicture}
  \end{minipage}
  \caption{Two automata \(\cN_1\) (left) and \(\cN_2\) (right) generating the regular languages \(\lang{\cN_1} = a^*(a+b+c)\) and \(\lang{\cN_2}= a^* (a(a+b)^*a+a^+c+ab+bb)\).}
    \label{fig:B}
  \end{figure}
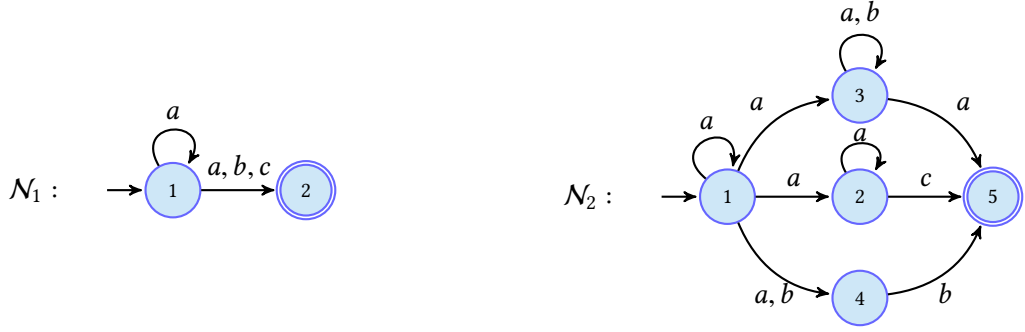

\begin{example}\label{example:Word_Regular_LInc}
We illustrate the use of the left Nerode's quasiorder in the algorithm \AlgRegularW for solving the language inclusion \(\lang{\cN_1} \subseteq \lang{\cN_2}\), where \(\cN_1\) and \(\cN_2\) are the automata shown in Figure~\ref{fig:B}.
The equations for  \(\cN_1\) are as follows:
\[
  \Eqn(\cN_1)=\begin{cases}
    X_1 = \varnothing \cup aX_1 \cup aX_2 \cup bX_2 \cup cX_2\\
    X_2 = \{\epsilon\}
  \end{cases} \enspace .
\]

\noindent
We have the following quotients (among others) for \(L = \lang{\cN_2}\). 
\begin{align*}
L \epsilon^{-1} = \; & a^* (a(a+b)^*a+a^+c+ab+bb) & L b^{-1} = \; & a^* (a + b)  \\
L a^{-1} = \; & a^* a(a+b)^*  &  L c^{-1} = \; & a^* a^+\\
L w^{-1} = \; & a^* \text{ if{}f } w \in (a(a+b)^*a+ac+ab+bb) \span
\end{align*}

\noindent
It is straightforward to check that, among others, the following relations hold between different alphabet symbols: \(b \qlL a\), 
\(c \qlL a\)
 and \(c \qlL b\). 
Then, let us show the computation of the Kleene iterates performed by Algorithm \AlgRegularW.
\begin{align*}
\vect{Y}^{(0)} &= \vect{\varnothing}\\
\vect{Y}^{(1)} &= \vectarg{\epsilon}{F} = \tuple{\varnothing, \{\epsilon\}} \\
\vect{Y}^{(2)} &= \lfloor\vectarg{\epsilon}{F}\rfloor {\sqcup} \lfloor\Pre_{\cN_1}(\vect{Y}^{(1)})\rfloor = \tuple{\varnothing, \{\epsilon\}} {\sqcup} \tuple{\minor{\varnothing \cup a\varnothing \cup a\{\epsilon\} \cup b\{\epsilon\} \cup c\{\epsilon\}}, \minor{\{\epsilon\}}}\\
&= \tuple{\minor{\{a,b,c\}}, \minor{\{\epsilon\}}} = \tuple{\{c\}, \{\epsilon\}}\\
\vect{Y}^{(3)} &= \lfloor\vectarg{\epsilon}{F}\rfloor {\sqcup} \lfloor\Pre_{\cN_1}(\vect{Y}^{(2)})\rfloor = 
\tuple{\varnothing, \{\epsilon\}} {\sqcup} \tuple{\minor{\varnothing {\cup} a\{c\} {\cup} a\{\epsilon\} {\cup} b\{\epsilon\} {\cup} c\{\epsilon\}}, \minor{\{\epsilon\}}}\\
&=
\tuple{\minor{\{ac, a, b, c\}}, \minor{\{\epsilon\}}} = \tuple{\{c\}, \{\epsilon\}} 
\end{align*}
The least fixpoint is thus \(\vect{Y} = \tuple{\{c\}, \{\epsilon\}}\).
Since $c\in \vect{Y}_1$ and \(c \notin \lang{\cN_2}\), Algorithm \AlgRegularW concludes that the language inclusion \(\lang{\cN_1} \subseteq \lang{\cN_2}\) does not hold. \eox
\end{example}

\subsubsection{On the Complexity of Nerode's quasiorders}
For the inclusion problem between languages generated by finite automata, deciding the 
(left or right) Nerode's quasiorder  can be easily shown to be as hard as the language inclusion problem, which is PSPACE-complete.
In fact, given the automata \(\cN_1=\tuple{Q_1,\delta_1,I_1,F_1,\Sigma}\) and \(\cN_2=\tuple{Q_2,\delta_2,I_2,F_2,\Sigma}\), one can define the union automaton \(\cN_3\ud \tuple{Q_1\cup Q_2\cup\{q^{\iota}\}, \delta_3, \{q^{\iota}\}, F_1\cup F_2}\) where \(\delta_3 \) maps \((q^\iota,a)\) to \(I_1\), \( (q^\iota,b) \) to \(I_2\) and behaves like \(\delta_1\) or \(\delta_2\) elsewhere. Then, it turns out that 
\[a \qr_{\lang{\cN_3}} b \Lra a^{-1}\lang{\cN_3} \subseteq b^{-1}\lang{\cN_3} \Lra \lang{\cN_1}\subseteq \lang{\cN_2}\enspace .\]
It follows that deciding the right Nerode's quasiorder \(\qr_{\lang{\cN_3}}\) is as hard as deciding \(\lang{\cN_1}\subseteq \lang{\cN_2}\).

Also, for the inclusion problem of a language generated by an 
automaton within the trace set of a one-counter net (see Section~\ref{sub:containment_in_one_counter_languages}), the right Nerode's quasiorder is a right language-consistent well-quasiorder but it turns out to be undecidable (see Lemma~\ref{lemma:RightNerodeOcnwqo}).

\subsection{State-based Quasiorders}\label{subsec:state-qos}
Consider the inclusion problem \(\lang{\cN_1} \subseteq \lang{\cN_2}\) where \(\cN_1\) and \(\cN_2\) are NFAs.
In the following, we study a class of well-quasiorders based on \(\cN_2\), called state-based quasiorders. 
These quasiorders are strictly stronger (i.e. lower w.r.t.\ set inclusion of binary relations) than the Nerode's quasiorders and sidestep the untractability or undecidability of Nerode's quasiorders yet allowing to define an algorithm solving the language inclusion problem.

\subsubsection{Inclusion in Regular Languages.}
\label{sub:automata_based}

We  define the quasiorders \(\qlN\)  and \(\qrN\) induced by an NFA \(\cN=\tuple{Q,Σ,\delta,I,F}\)
as follows:
\begin{align}\label{eqn:state-qo}
u \mindex{\qlN} v & \udiff \pre^{\cN}_{u}(F) \subseteq \pre^{\cN}_{v}(F)\,,
&
u \mindex{\qrN} v & \udiff \post^{\cN}_{u}(I) \subseteq \post^{\cN}_{v}(I) \,.
\end{align}
The superscripts in $\qlN$ and $\qrN$ stand, respectively, for left/right because they are, respectively, left and right well-quasiorders as the following result shows. 

\begin{lemma}\label{lemma:LAconsistent}
The relations \(\mathord{\qlN}\) and \(\mathord{\qrN}\) are, respectively, decidable left and right 
\(\lang{\cN}\)-consistent wqos.
\end{lemma}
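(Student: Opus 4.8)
The plan is to verify, for each of $\mathord{\qlN}$ and $\mathord{\qrN}$, the three defining conditions: that it is an $\lang{\cN}$-consistent quasiorder (reflexive, transitive, monotone in the appropriate direction, and precise with respect to $\lang{\cN}$), that it is a well-quasiorder, and that it is decidable. By the left/right symmetry I will treat only $\mathord{\qrN}$ explicitly (the argument for $\mathord{\qlN}$ is dual, swapping $\post$ for $\pre$, $I$ for $F$, and left concatenation for right).

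First I would check that $\mathord{\qrN}$ is a quasiorder: reflexivity and transitivity are immediate since $\mathord{\qrN}$ is defined as the pullback of the inclusion order $\subseteq$ on $\wp(Q)$ along the map $u \mapsto \post^{\cN}_u(I)$, and $\subseteq$ is itself a quasiorder (indeed a partial order). Next, right monotonicity: I would use the identity $\post^{\cN}_{ua}(I) = \post^{\cN}_a(\post^{\cN}_u(I))$, which follows from the definition of $\post$ and the inductive characterization of $\goes{w}$. If $u \qrN v$, i.e. $\post^{\cN}_u(I) \subseteq \post^{\cN}_v(I)$, then applying the monotone operator $\post^{\cN}_a(\cdot)$ to both sides yields $\post^{\cN}_{ua}(I) \subseteq \post^{\cN}_{va}(I)$, that is $ua \qrN va$. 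For $\lang{\cN}$-consistency in the sense of Definition~\ref{def:LConsistent}\ref{eq:LConsistentPrecise}, I must show $\mathord{\qrN} \cap (\lang{\cN} \times \neg\lang{\cN}) = \varnothing$: if $u \in \lang{\cN}$ then $\post^{\cN}_u(I) \cap F \neq \varnothing$, so if $u \qrN v$ then $\post^{\cN}_v(I) \supseteq \post^{\cN}_u(I)$ also meets $F$, whence $v \in \lang{\cN}$; thus no pair in $\mathord{\qrN}$ has its first component in $\lang{\cN}$ and its second outside.

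Then I would establish the well-quasiorder property. Since $Q$ is finite, $\wp(Q)$ is finite, so $\langle \wp(Q), \subseteq\rangle$ is trivially a wqo (it is DCC — in fact finite — and has no infinite antichain). A pullback of a wqo along any function is again a wqo: given an infinite sequence $\{x_i\}_{i\in\bN}$ in $\Sigma^*$, the sequence $\{\post^{\cN}_{x_i}(I)\}_{i\in\bN}$ in the finite set $\wp(Q)$ must have two indices $i < j$ with $\post^{\cN}_{x_i}(I) \subseteq \post^{\cN}_{x_j}(I)$ (in fact equality), i.e. $x_i \qrN x_j$. Hence $\mathord{\qrN}$ is a wqo. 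Finally, decidability: $\post^{\cN}_u(I)$ is computable from $u$ and $\cN$ by iterating the transition relation letter by letter, and set inclusion in $\wp(Q)$ is decidable, so the predicate $u \qrN v$ is decidable. Combining, $\mathord{\qrN}$ is a decidable right $\lang{\cN}$-consistent wqo, and dually $\mathord{\qlN}$ is a decidable left $\lang{\cN}$-consistent wqo.

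I do not anticipate a serious obstacle here — every step is a short routine verification — but if anything requires care it is bookkeeping around the composition identity $\post^{\cN}_{ua}(I) = \post^{\cN}_a(\post^{\cN}_u(I))$ used for monotonicity, and the symmetric identity $\pre^{\cN}_{au}(F) = \pre^{\cN}_a(\pre^{\cN}_u(F))$ for the left case, which one should state cleanly (possibly by induction on $|u|$) before invoking it. The precision condition also deserves an explicit word tying $u \in \lang{\cN}$ to $\post^{\cN}_u(I) \cap F \neq \varnothing$ via Equation~\eqref{eq:unionofrightlg}.
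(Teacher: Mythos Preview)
Your proposal is correct and follows essentially the same approach as the paper. The only cosmetic difference is that the paper carries out the explicit argument for $\mathord{\qlN}$ (using $\pre^{\cN}_{uv} = \pre^{\cN}_u \comp \pre^{\cN}_v$) and declares the right case symmetric, whereas you do the reverse; otherwise the verification of monotonicity, precision, decidability, and the wqo property via finiteness of $\wp(Q)$ are identical.
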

\begin{proof}
Since, for every \(u \in \Sigma^*\), \(\pre^{\cN}_u(F)\) is a computable subset of a the finite set of states of \(\cN\), it turns out that \(\mathord{\qlN}\) is a decidable wqo. 
Let us check that \(\mathord{\qlN}\) is left \(\lang{\cN}\)-consistent according to Definition~\ref{def:LConsistent}~\ref{eq:LConsistentPrecise}-\ref{eq:LConsistentmonotone}. 

\begin{myEnumA}
\item Let \(u\in \lang{\cN}\) and \(v\notin \lang{\cN}\).
We have that \(\pre^{\cN}_u(F)\) contains some initial state while \(\pre^{\cN}_v(F)\) does not, hence \(u \nleq^{\ell}_{\cN} v\). 
Therefore, \(\qlN \cap (L\times L^c) = \varnothing\).

\item Let us check that $\qlN$ is left monotone.
Observe that $\pre^\cN_x$ is a monotone function and that 
\begin{eqnarray}
\pre^{\cN}_{uv} = \pre^{\cN}_{u} \comp \pre^{\cN}_v \enspace .\label{eq:prepre}
\end{eqnarray}
Therefore, for all $x_1,x_2\in \Sigma^*$ and $a\in \Sigma$, 
\begin{align*}
x_1 \qlN x_2 & \Ra  \quad\text{[By definition of \(\qlN\)]} \\
\pre^{\cN}_{x_1}(F) \subseteq \pre^{\cN}_{x_2}(F) & \Ra  \quad\text{[Since $\pre^\cN_a$ is monotone]} \\
\pre^{\cN}_{a}(\pre^{\cN}_{x_1}(F)) \subseteq \pre^{\cN}_{a}(\pre^{\cN}_{x_2}(F)) & \Lra  \quad\text{[By Equation~\eqref{eq:prepre}]} \\
\pre^{\cN}_{ax_1}(F) \subseteq \pre^{\cN}_{ax_2}(F) & \Lra  \quad\text{[By definition of \(\qlN\)]} \\
ax_1 \qlN ax_2 & \enspace . 
\end{align*}
\end{myEnumA}

The proof that \(\leq_{\cN}^r\) is a decidable right \(\lang{\cN}\)-consistent quasiorder is symmetric. 
\end{proof}

As a consequence, Theorem~\ref{theorem:quasiorderAlgorithm} applies to the wqo \(\mathord{\ql_{\cN_2}}\) (and 
\(\mathord{\qr_{\cN_2}}\)), so that one can instantiate Algorithm~\AlgRegularW with $\mathord{\ql_{\cN_2}}$ for deciding $\lang{\cN_1}\subseteq \lang{\cN_2}$. 

Turning back to the left Nerode wqo
$\ql_{\lang{\cN_2}}$, it turns out that: 
\begin{align*}
u \ql_{\lang{\cN_2}} v  \Lra \lang{\cN_2}u^{-1} \subseteq \lang{\cN_2} v^{-1} 
\Lra W_{I,\pre^{\cN_2}_u(F)} \subseteq W_{I,\pre^{\cN_2}_v(F)} \enspace .
\end{align*}
Since \(\pre^{\cN_2}_u(F) \subseteq \pre^{\cN_2}_v(F) \Ra W_{I,\pre^{\cN_2}_u(F)} \subseteq W_{I,\pre^{\cN_2}_v(F)}\), it follows that 
\[u \ql_{\cN_2} v \Ra u \ql_{\lang{\cN_2}} v\] 

\begin{example}\label{example:Word_Regular_LInc:states}
We illustrate the left state-based quasiorder by using it to solve the language inclusion \(\lang{\cN_1} \subseteq \lang{\cN_2}\) from Example~\ref{example:Word_Regular_LInc}.
We have, among others, the following set of predecessors of $F_{\cN_2}$:
\begin{align*}
\pre_{\epsilon}^{\cN_2}(F_{\cN_2}) & \,{=}\, \{5\} & \pre_{a}^{\cN_2}(F_{\cN_2}) & \,{=}\, \{3\} & \pre_{b}^{\cN_2}(F_{\cN_2}) & \,{=}\, \{4\} & \pre_{c}^{\cN_2}(F_{\cN_2}) & \,{=}\, \{2\} \\
\pre_{aa}^{\cN_2}(F_{\cN_2}) & \,{=}\, \{1, 3\} & \pre_{ab}^{\cN_2}(F_{\cN_2}) & \,{=}\, \{1\} & \pre_{ac}^{\cN_2}(F_{\cN_2}) & \,{=}\, \{1, 2\} & \pre_{aab}^{\cN_2}(F_{\cN_2}) & \,{=}\, \{1\}
\end{align*}

Recall from Example~\ref{example:Word_Regular_LInc} that, for the Nerode's quasiorder, we have that \(b \ql_{\lang{\cN_2}} a\) and \(c \ql_{\lang{\cN_2}} a\) while none of these relations hold for \(\ql_{\cN_2}\).
Let us next show the Kleene iterates computed by Algorithm \AlgRegularW when using \(\ql_{\cN_2}\).
\begin{align*}
\vect{Y}^{(0)} &=\vect{\varnothing}\\
\vect{Y}^{(1)} &= \vectarg{\epsilon}{F} = \tuple{\varnothing, \{\epsilon\}} \\
\vect{Y}^{(2)} &= \lfloor\vectarg{\epsilon}{F}\rfloor {\sqcup} \lfloor\Pre_{\cN_1}(\vect{Y}^{(1)})\rfloor = \tuple{\minor{\{a, b, c\}}, \minor{\{\epsilon\}}} = \tuple{\{a, b, c\}, \{\epsilon\}}\\
\vect{Y}^{(3)} &= \lfloor\vectarg{\epsilon}{F}\rfloor {\sqcup} \lfloor\Pre_{\cN_1}(\vect{Y}^{(2)})\rfloor = \tuple{\minor{\{aa, ab, ac, a, b, c\}}, \minor{\{\epsilon\}}} = \tuple{\{ab, a, b, c\}, \{\epsilon\}} \\
\vect{Y}^{(4)} &= \lfloor\vectarg{\epsilon}{F}\rfloor {\sqcup} \lfloor\Pre_{\cN_1}(\vect{Y}^{(3)})\rfloor = \tuple{\minor{\{aab, aa, ab, ac, a, b, c\}}, \minor{\{\epsilon\}}} = \tuple{\{ab, a, b, c\}, \{\epsilon\}} 
\end{align*}

The least fixpoint is therefore \(\vect{Y}=\tuple{\{ab, a, b, c\}, \{\epsilon\}}  \).
Since $c\in \vect{Y}_0$ and \(c \notin \lang{\cN_2}\), Algorithm \AlgRegularW concludes that the inclusion \(\lang{\cN_1} \subseteq \lang{\cN_2}\) does not hold. \eox
\end{example}

\subsubsection{Simulation-based Quasiorders.}\label{sec:simulation_basedQO}
Recall that a \demph{simulation} on an NFA $\cN= \tuple{Q,Σ,\delta,I,F}$  is a binary relation on the states of \(\cN\), i.e. \(\mathord{\preceq} \subseteq Q\times Q\), such that for all $p,q\in Q$ if \(p\preceq q\) then the following two conditions hold:
\begin{myEnumI}
\item if \(p\in F\) then \(q\in F\);
\item for every transition \(p \xrightarrow{a} p'\), there exists a transition \(q \xrightarrow{a} q'\) such that \(p'\preceq q'\).
\end{myEnumI}
\medskip 

It is well known that simulation implies language inclusion, i.e. if $\preceq$ is a simulation on $\cN$ then
\[ q \preceq q'  \Ra W^{\cN}_{q,F}\subseteq W^{\cN}_{q',F} \enspace .\]
A relation \(\mathord{\preceq}\subseteq Q\times Q\) can be lifted in the standard universal-existential way to a relation $\preceq^{\forall\exists}\subseteq \wp(Q)\times \wp(Q)$ on sets of states as follows: 
\[ X \preceq^{\forall\exists} Y \:\udiff\: \forall x\in X, \exists y\in Y,\: x\preceq y \enspace.\]
In particular, if $\preceq$ is a qo then $\preceq^{\forall\exists}$ is a qo as well. 
Also, if $\preceq$ is a simulation relation then its lifting $\preceq^{\forall\exists}$ is such 
that \(X \preceq^{\forall\exists} Y \Ra W^{\cN}_{X,F} \subseteq W^{\cN}_{Y,F}\) holds. This suggests us to
define a \emph{right simulation-based quasiorder} \(\preceq^{r}_{\cN}\) on $\Sigma^*$ induced by a simulation $\preceq$ on $\cN$ as follows:
\begin{equation}\label{eq:sim-qo}
u \mindex{\preceq^{r}_{\cN}} v \:\udiff\:  \post^{\cN}_u(I) \preceq^{\forall\exists} \post^{\cN}_v(I) \enspace .
\end{equation}

\begin{lemma}\label{lemma:simulationLConsistent}
Let \(\cN\) be an NFA and let \(\mathord{\preceq}\) be a simulation on $\cN$.
Then the right si-mulation-based quasiorder \(\mathord{\preceq^r_{\cN}}\) is a decidable right \(\lang{\cN}\)-consistent well-quasiorder.
\end{lemma}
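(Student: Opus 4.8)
The plan is to mirror the proof of Lemma~\ref{lemma:LAconsistent} for the plain state-based quasiorder \(\qrN\), which is precisely the instance \(\mathord{\preceq}=\mathord{=}\) (so that \(\preceq^{\forall\exists}\) becomes ordinary inclusion \(\subseteq\) on \(\wp(Q)\)), replacing inclusion of sets of states by the simulation lifting \(\preceq^{\forall\exists}\) and invoking the two defining clauses of a simulation at the two points where a genuine argument is required. Throughout I would use that \(\preceq\) is reflexive and transitive — if the given simulation is not, replace it by its reflexive--transitive closure, which is again a simulation — so that \(\preceq^{\forall\exists}\) is itself a quasiorder on \(\wp(Q)\), and I would record the \(\post\)-counterpart of Equation~\eqref{eq:prepre}, namely \(\post^{\cN}_{ua}=\post^{\cN}_{a}\circ\post^{\cN}_{u}\) for all \(a\in\Sigma\).

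First I would check that \(\preceq^r_{\cN}\) is a quasiorder. It is the \(\preceq^{\forall\exists}\)-preimage of the map \(u\mapsto\post^{\cN}_u(I)\), so reflexivity of \(\preceq^r_{\cN}\) is reflexivity of \(\preceq^{\forall\exists}\) at \(\post^{\cN}_u(I)\), and transitivity of \(\preceq^r_{\cN}\) is transitivity of \(\preceq^{\forall\exists}\). Next, right-monotonicity: I would first prove the auxiliary fact that \(S\preceq^{\forall\exists}T\) implies \(\post^{\cN}_a(S)\preceq^{\forall\exists}\post^{\cN}_a(T)\) for every \(a\in\Sigma\) — given \(x'\in\post^{\cN}_a(S)\), choose \(x\in S\) with \(x\xrightarrow{a}x'\), then \(y\in T\) with \(x\preceq y\), and apply the second clause of the simulation to get \(y\xrightarrow{a}y'\) with \(x'\preceq y'\), so \(y'\in\post^{\cN}_a(T)\) — and then combine it with \(\post^{\cN}_{ua}=\post^{\cN}_{a}\circ\post^{\cN}_{u}\) exactly as in Lemma~\ref{lemma:LAconsistent} to conclude \(u\preceq^r_{\cN}v\Ra ua\preceq^r_{\cN}va\). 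For \(\lang{\cN}\)-consistency in the sense of Definition~\ref{def:LConsistent}~\ref{eq:LConsistentPrecise}, suppose \(u\in\lang{\cN}\), i.e.\ \(\post^{\cN}_u(I)\cap F\neq\varnothing\), and \(u\preceq^r_{\cN}v\); picking \(x\in\post^{\cN}_u(I)\cap F\) and \(y\in\post^{\cN}_v(I)\) with \(x\preceq y\), the first clause of the simulation gives \(y\in F\), hence \(\post^{\cN}_v(I)\cap F\neq\varnothing\), i.e.\ \(v\in\lang{\cN}\); therefore \(\mathord{\preceq^r_{\cN}}\cap(\lang{\cN}\times\lang{\cN}^c)=\varnothing\). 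Together with right-monotonicity this gives right \(\lang{\cN}\)-consistency.

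Decidability is immediate: \(\post^{\cN}_u(I)\) is a computable subset of the finite set \(Q\), and \(\preceq\), being a relation on the finite set \(Q\), is decidable, so \(\post^{\cN}_u(I)\preceq^{\forall\exists}\post^{\cN}_v(I)\) is decidable. Finally, \(\preceq^r_{\cN}\) is a well-quasiorder because \(\wp(Q)\) is finite: any infinite sequence \((u_i)_{i\in\bN}\) admits \(i<j\) with \(\post^{\cN}_{u_i}(I)=\post^{\cN}_{u_j}(I)\) by the pigeonhole principle, whence \(u_i\preceq^r_{\cN}u_j\) by reflexivity of \(\preceq^{\forall\exists}\) (equivalently, \(\tuple{\wp(Q),\preceq^{\forall\exists}}\) is trivially a wqoset and the preimage of a wqo along any map is a wqo). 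The only step requiring care is right-monotonicity, where one must read the simulation clause in the correct ``forward'' direction — from a transition of the \(\preceq\)-smaller state to a matching transition of the \(\preceq\)-larger state — in order to push \(\preceq^{\forall\exists}\) through \(\post^{\cN}_a\); all remaining parts are bookkeeping that specialize, at \(\mathord{\preceq}=\mathord{=}\), to the already-established Lemma~\ref{lemma:LAconsistent}.
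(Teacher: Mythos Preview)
Your proposal is correct and follows essentially the same approach as the paper's proof: decidability and the wqo property come from finiteness of \(Q\), condition~\ref{eq:LConsistentPrecise} of Definition~\ref{def:LConsistent} uses the first simulation clause (finality is preserved upward along \(\preceq\)), and right-monotonicity uses the second clause to push \(\preceq^{\forall\exists}\) through \(\post^{\cN}_a\). Your write-up is in fact slightly more careful than the paper's in two respects: you explicitly address why \(\preceq^r_{\cN}\) is a quasiorder (noting that one may pass to the reflexive--transitive closure of \(\preceq\), which is again a simulation), and you factor out the auxiliary monotonicity lemma \(S\preceq^{\forall\exists}T\Ra\post^{\cN}_a(S)\preceq^{\forall\exists}\post^{\cN}_a(T)\) before combining it with \(\post^{\cN}_{ua}=\post^{\cN}_a\circ\post^{\cN}_u\), whereas the paper unfolds the same computation inline.
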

\begin{proof}
Since, for every \(u \in \Sigma^*\), \(\post^{\cN}_u(F)\) is a computable subset of a the finite set of states of \(\cN\), it turns out that \(\mathord{\preceq^r_{\cN}}\) is a decidable wqo. 
Next, we show that \(\mathord{\preceq^r_{\cN}}\) is right \(\lang{\cN}\)-consistent according to Definition~\ref{def:LConsistent}~\ref{eq:LConsistentPrecise}-\ref{eq:LConsistentmonotone}. 
\begin{myEnumA}
\item Let \(u\in \lang{\cN}\) and \(v\notin \lang{\cN}\).
We have that \(\post^{\cN}_u(I)\) contains some final state while \(\post^{\cN}_v(I)\) does not.
Let \(q \in \post^{\cN}_u(I) \cap F\).
We have that \(q \preceq^r_{\cN} q'\) for no \(q' \in \post^{\cN}_v(I)\) since, by simulation, this would imply \(q' \in \post^{\cN}_v(I) \cap F\), which contradicts the fact that \(F \cap \post^{\cN}_v(I) = \varnothing\).
We conclude that \(u \npreceq^r_{\cN} v\), hence \(\preceq^r_{\cN} \cap (L\times L^c) = \varnothing\).

\item Next we show that \(\preceq^r_{\cN}\) is right monotone. By Equation~\eqref{def-leftmon}, we check that for all $u,v\in \Sigma^*$ and $a\in \Sigma$,  
\(u\preceq^r_{\cN} v \Ra ua \preceq^r_{\cN} va\):  %
\begin{adjustwidth}{-0.8cm}{}%
\begin{myAlign}{-10pt}{-5pt}
u \preceq^r_{\cN} v & \Lra \; \text{[By def. of \(\preceq^r_{\cN}\)]}\\
\post^{\cN}_u(I) \preceq^{\forall\exists} \post^{\cN}_v(I) & \Lra \; \text{[By def.\ of \(\preceq^{\forall\exists}\)]} \\
\forall x \in \post^{\cN}_u(I), \exists y \in \post^{\cN}_v(I), x \preceq y & \Ra \; \text{[By def.\ of \(\preceq\)]} \\
\forall x \ggoes{a} x' ,\; x \in \post^{\cN}_u(I),\; \exists y \ggoes{a} y' ,\; y\in \post^{\cN}_v(u), x' \preceq y' & \Lra \\
\span\specialcell{\hfill\text{[Since \(\post^{\cN}_{a}\circ \post^{\cN}_{u} = \post^{\cN}_{ua}(I)\)]}}\\
\forall x' \in \post^{\cN}_{ua}(I), \exists y' \in \post^{\cN}_{va}(I), \; x' \preceq y' & \Lra \;\text{[By def.\ of \(\preceq^{\forall\exists}\)]}\\
\post^{\cN}_{ua}(I) \preceq^{\forall\exists} \post^{\cN}_{va}(I) & \Lra \;\text{[By def.\ of \(\preceq_{\cN}^r\)]}\\
ua \preceq_{\cN}^r va & \enspace .
\end{myAlign}
\end{adjustwidth}
\end{myEnumA}
\end{proof}

Thus, once again, Theorem~\ref{theorem:quasiorderAlgorithmR} applies to 
\(\mathord{\preceq^r_{\cN_2}}\) and this allows us to instantiate the 
algorithm \AlgRegularWr to the quasiorder 
$\mathord{\preceq^r_{\cN_2}}$ for deciding the inclusion $\lang{\cN_1}\subseteq \lang{\cN_2}$. 

On the other hand, note that it is possible to define a left simulation \(\preceq^{\forall\exists}_{R}\) on an automaton \(\cN\) by applying \(\preceq^{\forall\exists}\) on the reverse of \(\cN\).
This left simulation induces a \emph{left simulation-based quasiorder} on \(\Sigma^*\) as follows:

\begin{equation}\label{eq:sim-qo:left}
u \mindex{\preceq^{l}_{\cN}} v \:\udiff\:  \pre^{\cN}_u(F) \preceq^{\forall\exists}_R \pre^{\cN}_v(F) \enspace .
\end{equation}

It is straightforward to check that Theorem~\ref{theorem:quasiorderAlgorithm} applies to \(\mathord{\preceq^{\ell}_{\cN_2}}\) and, therefore, we can instantiate Algorithm~\AlgRegularW for deciding \(\lang{\cN_1} \subseteq \lang{\cN_2}\). 

\begin{example}\label{example:Word_Regular_LInc:sim}
Finally, let us illustrate the use of the left simulation-based quasiorder to solve the language inclusion \(\lang{\cN_1} \subseteq \lang{\cN_2}\) of Example~\ref{example:Word_Regular_LInc}.
For the set $F_{\cN_2}$ of final states of \(\cN_2\) 
we have the same set of predecessors computed in Example~\ref{example:Word_Regular_LInc:states} and, among others, the following left simulations between these sets (For clarity, we omit the argument of the function \(\pre\), which is always \(F_{\cN_2}\)):
\begin{align*}
\pre_{c}^{\cN_2}() = \{2\}  & \preceq^{\forall\exists}_R \{3\} = \pre_{a}^{\cN_2}() & \pre_{b}^{\cN_2}() = \{4\} & \npreceq^{\forall\exists}_R \{3\} = \pre_{a}^{\cN_2}() \\
\pre_{ac}^{\cN_2}() = \{1\} & \preceq^{\forall\exists}_R \{4\} = \pre_{b}^{\cN_2}() & \pre_{ac}^{\cN_2}() = \{1\} & \npreceq^{\forall\exists}_R \{2\} = \pre_{c}^{\cN_2}()
\end{align*}

As expected, the simulation-based quasiorder lies in between the Nerode and the state-based quasiorders.
As shown in Examples~\ref{example:Word_Regular_LInc} and~\ref{example:Word_Regular_LInc:states}, we have \(b \qo_{\lang{\cN_2}}^{\ell} a\), \(c \qo_{\lang{\cN_2}}^{\ell} a\), \(b \not\qo^{\ell}_{\cN_2} a\) and \(c \not\qo^{\ell}_{\cN_2} a\) while \(c \preceq_{\cN_2}^{\ell} a\), but \(b \npreceq_{\cN_2}^{\ell} a\).

Let us show the computation of the Kleene iterates performed by Algorithm \AlgRegularW when using the quasiorder \(\mathord{\preceq_{\cN_2}^{\ell}}\).
\begin{align*}
\vect{Y}^{(0)} &= \vect{\varnothing}\\
\vect{Y}^{(1)} &= \vectarg{\epsilon}{F} = \tuple{\varnothing, \{\epsilon\}} \\
\vect{Y}^{(2)} &= \lfloor\vectarg{\epsilon}{F}\rfloor \sqcup \lfloor\Pre_{\cN_1}(\vect{Y}^{(1)})\rfloor = \tuple{\minor{\{a, b, c\}}, \minor{\{\varepsilon\}}} = \tuple{\{c\}, \{\varepsilon\}}\\
\vect{Y}^{(3)} &= \lfloor\vectarg{\epsilon}{F}\rfloor \sqcup \lfloor\Pre_{\cN_1}(\vect{Y}^{(2)})\rfloor = \tuple{\minor{\{ac, a, b, c\}}, \minor{\{\varepsilon\}}} = \tuple{\{c\}, \{\varepsilon\}} %
\end{align*}
The least fixpoint is therefore \(\vect{Y} = \tuple{\{c\}, \{\varepsilon\}}\).
Since $c\in \vect{Y}_0$ and \(c \notin \lang{\cN_2}\), Algorithm~\AlgRegularW concludes that the inclusion \(\lang{\cN_1} \subseteq \lang{\cN_2}\) does not hold. \eox
\end{example}

Let us observe that \(u \preceq^{r}_{\cN_2} v\) implies \(W_{\post^{\cN_2}_u(I),F} \subseteq W_{\post^{\cN_2}_v(I),F}\), which is equivalent to the right Nerode's quasiorder \(u\qr_{\lang{\cN_2}} v\) for $\lang{\cN_2}$. %
Furthermore, for the state-based quasiorder defined in
\eqref{eqn:state-qo}, we have that
\(u \qr_{\cN_2} v \Ra u\preceq^r_{\cN_2} v\) trivially holds.

Summing up, given an NFA \(\cN\) with \(\lang{\cN} = L\), the following containments relate the state-based, 
simulation-based and Nerode's quasiorders:
\[\mathord{\qr_{\cN}} \,\subseteq\, \mathord{\preceq^r_{\cN}} \,\subseteq\, \mathord{\qr_{L}}, \qquad \mathord{\ql_{\cN}} \,\subseteq\, \mathord{\preceq^{\ell}_{\cN}} \,\subseteq\, \mathord{\ql_{L}}\enspace .\]

Recall that these are decidable \(\lang{\cN_2}\)-consistent well-quasiorders so that Algorithm \AlgRegularW can be instantiated for each of them for deciding an inclusion $\lang{\cN_1}\subseteq \lang{\cN_2}$.
Examples~\ref{example:Word_Regular_LInc}, \ref{example:Word_Regular_LInc:states} and~\ref{example:Word_Regular_LInc:sim} show how the algorithm behaves for each of the three quasiorders considered in this section.
Despite their simplicity, the examples evidence the differences in the behavior of the algorithm when considering the different quasiorders.
In particular, we observe that the fixpoint computation for \(\mathord{\qo^r_{\lang{\cN_2}}}\) coincides with the one for \(\mathord{\preceq^r_{\cN_2}}\) which, as expected, converge faster than the one for \(\mathord{\qo^r_{\cN_2}}\).

As shown by \citet{deLuca1994}, \(\mathord{\qo^r_{\lang{\cN_2}}}\) is the coarsest well-quasiorder for which Algorithm~\AlgRegularW works (i.e. Theorem~\ref{theorem:quasiorderAlgorithm} holds), hence its corresponding fixpoint computation exhibits optimal behavior in terms of the number of closed sets considered.
However, Nerode's quasiorder is not practical since it requires checking language inclusion, which is the PSPACE-complete problem we are trying to solve, in order to decide whether two words are related.
Therefore, the coincidence of the fixpoint computations for \(\mathord{\qo^r_{\lang{\cN_2}}}\) and \(\mathord{\preceq^r_{\cN_2}}\) is of special interest since it evidences that Algorithm~\AlgRegularW might exhibit optimal behavior while using a ``simpler'' well-quasiorder such as \(\mathord{\preceq^r_{\cN_2}}\), which is a polynomial under-approximation of \(\mathord{\qo^r_{\lang{\cN_2}}}\).

\subsection{Inclusion in Traces of One-Counter Nets.}%
\label{sub:containment_in_one_counter_languages}
We show that our framework can be instantiated to systematically derive an algorithm for deciding the inclusion \(\lang{\cN} \subseteq L_2\) where \(L_2\) is the trace set of a one-counter net.
This is accomplished by defining a decidable \(L_2\)-consistent quasiorder so that Theorem~\ref{theorem:quasiorderAlgorithm} can be applied.

Intuitively, a \emph{one-counter net} is an NFA endowed with a nonnegative integer counter which 
can be incremented, decremented or
left unchanged by a transition.

\begin{definition*}[One-Counter Net]
A One-Counter Net (OCN)~\cite{hofman_trace_2018} is a tuple $\cO=\tuple{Q,\Sigma,\delta}$ where $Q$ is a finite set of states, $\Sigma$ is an alphabet and $\delta\subseteq Q\times \Sigma\times \{-1,0,1\}\times Q$ is a set of transitions.\eod
\end{definition*}

A \demph{configuration of an OCN} \(\cO = \tuple{Q,\Sigma,\delta}\) is a pair $qn$ consisting of a state \(q\in Q\) and a value \(n\in\bN\) for the counter. 
Given two configurations of an OCN, \(qn, q'n'\in Q\times \bN\), we write \(qn \xrightarrow{a} q'n'\) and call it an \(a\)-step (or simply step) if there exists a transition \( (q,a,d,q')\in\delta \) such that \(n'=n+d\).
Given \(qn\in Q\times\bN\), the \demph{trace set} \(T(qn)\subseteq \Sigma^*\) of an OCN is defined as follows:
\begin{align*}
	T(qn) & \ud \{u \in \Sigma^* \mid Z_u^{qn} \neq \varnothing\} \quad \text{ where } \\
	Z_u^{qn} & \ud \{ q_k n_k \in Q\times \bN \mid qn=q_0n_0 \xrightarrow{a_1} q_1n_1\xrightarrow{a_2}\cdots \xrightarrow{a_k} q_kn_k,\: a_1\cdots a_k=u \}\enspace .
\end{align*}
Observe that \(Z_{\epsilon}^{qn}= \{ qn \}\) and \(Z_u^{qn}\) is a finite set for every word \(u\in\Sigma^*\).

Let us consider the poset $\tuple{\bN_{\bot}\ud \bN\cup\{\bot\},\leq_{\bN_{\bot}}}$ where \(\bot\leq_{\bN_{\bot}} n\) holds for all \(n\in\bN_{\bot}\), while for all $n,n'\in
\bN$, $n\leq_{\bN_{\bot}}n'$ is the standard ordering relation between numbers.  
For a finite set of states \(S \subseteq Q\times\bN\) define the so-called macro state \(M_S \colon Q \ra \bN_{\bot}\) as follows:

\[M_S( q ) \ud \max \{ n\in \bN \mid q n \in S \}\,,\]

\noindent 
where $\max\varnothing\ud\bot$. %
Define the following quasiorder on $\Sigma^*$:

\begin{equation}\label{eq:ocnleq}
	u \leq_{{qn}}^r v \:\udiff\:\forall q\in Q,\, M_{Z_u^{qn}}(q) \leq_{\bN_{\bot}} M_{Z_v^{qn}}(q) \enspace .
\end{equation}

\begin{lemma}\label{lemma:ocnwqo}
	Let \(\cO\) be an OCN. For any configuration $q n$ of $\cO$, \(\mathord{\leq_{{qn}}^r}\) is a right \(T(qn)\)-consistent decidable well-quasiorder.
\end{lemma}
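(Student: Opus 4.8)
The plan is to establish, in turn, the four properties bundled into the statement: that $\leq_{qn}^r$ is a quasiorder, that it is decidable, that it is a well-quasiorder, and that it is right $T(qn)$-consistent in the sense of Definition~\ref{def:LConsistent}. Write $\mu\colon\Sigma^*\to\bN_{\bot}^{Q}$ for the map $\mu(u)\ud\lambda q.\,M_{Z_u^{qn}}(q)$, so that by definition $u\leq_{qn}^r v$ holds exactly when $\mu(u)\leq\mu(v)$ in the componentwise order of $\bN_{\bot}^{Q}$. Since $\leq_{\bN_{\bot}}$ is a partial order, its componentwise lifting to $\bN_{\bot}^{Q}$ is reflexive and transitive, and the pullback of a reflexive transitive relation along $\mu$ is again reflexive and transitive; hence $\leq_{qn}^r$ is a quasiorder. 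Decidability is equally quick: by the observation recorded right after the definition of $T(qn)$, each $Z_u^{qn}$ is finite and effectively computable (iteratively, following the steps of $\cO$), so $\mu(u)\in\bN_{\bot}^{Q}$ is computable and checking $u\leq_{qn}^r v$ reduces to $|Q|$ comparisons in $\bN_{\bot}$.

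For the well-quasiorder property I would again argue by pullback along $\mu$. The poset $\tuple{\bN_{\bot},\leq_{\bN_{\bot}}}$ is a wqo (adding a bottom element to the wqo $\tuple{\bN,\leq}$ preserves the property), so, $Q$ being finite, Dickson's lemma gives that $\bN_{\bot}^{Q}$ with the componentwise order is a wqo. Thus any infinite sequence $(u_i)_{i\in\bN}$ in $\Sigma^*$ yields, via $\mu$, indices $i<j$ with $\mu(u_i)\leq\mu(u_j)$, i.e.\ $u_i\leq_{qn}^r u_j$, which is exactly what a wqo requires.

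It remains to check right $T(qn)$-consistency. Condition~\ref{eq:LConsistentPrecise} is immediate: if $u\in T(qn)$ and $v\notin T(qn)$ then $Z_u^{qn}\neq\varnothing$ while $Z_v^{qn}=\varnothing$, so picking any $q'm\in Z_u^{qn}$ gives $M_{Z_u^{qn}}(q')\neq\bot=M_{Z_v^{qn}}(q')$, hence $u\not\leq_{qn}^r v$ and $\leq_{qn}^r\cap\,(T(qn)\times\neg T(qn))=\varnothing$. Condition~\ref{eq:LConsistentmonotone}, right monotonicity, is where the work lies, and it rests on the recursive description
\[ Z_{ua}^{qn}=\{\,q''(m+d)\mid\exists\,q'm\in Z_u^{qn},\ (q',a,d,q'')\in\delta,\ m+d\geq 0\,\}, \]
so that $M_{Z_{ua}^{qn}}(q'')=\max\{\,m+d\mid\exists\,q'm\in Z_u^{qn},\ (q',a,d,q'')\in\delta,\ m+d\geq 0\,\}$. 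Assuming $u\leq_{qn}^r v$, fix a state $q''$ and suppose $M_{Z_{ua}^{qn}}(q'')=k\neq\bot$, witnessed by some $q'm\in Z_u^{qn}$ and a transition $(q',a,d,q'')\in\delta$ with $m+d=k\geq 0$; then $m\leq M_{Z_u^{qn}}(q')\leq M_{Z_v^{qn}}(q')$, and writing $m'$ for $M_{Z_v^{qn}}(q')$ we get $q'm'\in Z_v^{qn}$ with $m'\geq m$ and $m'+d\geq m+d=k\geq 0$, so firing the same transition yields $q''(m'+d)\in Z_{va}^{qn}$ and hence $M_{Z_{va}^{qn}}(q'')\geq m'+d\geq k$. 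Since the case $M_{Z_{ua}^{qn}}(q'')=\bot$ is trivial, we conclude $ua\leq_{qn}^r va$, and right monotonicity follows by the remark around Equation~\eqref{def-leftmon}.

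The main obstacle is precisely this last step. A step of the OCN is legal only when the resulting counter value stays nonnegative, so the argument cannot just reuse the witness for $ua$: the witnessing configuration must be lifted from $Z_u^{qn}$ to $Z_v^{qn}$ by replacing the counter value $m$ with the larger maximum $M_{Z_v^{qn}}(q')$ before firing the transition (which is what keeps $m'+d\geq 0$), and the $\bot$ cases have to be peeled off separately. Modulo this bookkeeping, everything reduces to the wqo structure of $\bN_{\bot}$ and the finiteness and computability of the sets $Z_u^{qn}$.
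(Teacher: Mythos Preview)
Your proof is correct and follows essentially the same approach as the paper's: Dickson's lemma applied to $\bN_{\bot}^{Q}$ for the wqo property, computability of the finite sets $Z_u^{qn}$ for decidability, and the $\bot$/non-$\bot$ split for condition~\ref{eq:LConsistentPrecise}. The only difference is that you prove right monotonicity directly---picking a witness for $M_{Z_{ua}^{qn}}(q'')$, lifting the counter value via the hypothesis, and re-firing the same transition---whereas the paper argues by contradiction; your direct version is arguably cleaner and makes the role of the nonnegativity constraint $m'+d\geq 0$ more transparent.
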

\begin{proof}
It follows from Dickson's Lemma \citep[Section~II.7.1.2]{Sakarovitch} that \(\mathord{\leq_{{qn}}^r}\) is a wqo.
Next, we show that \(\mathord{\leq_{{qn}}^r}\) is \(T(qn)\)-consistent according to Definition~\ref{def:LConsistent}~\ref{eq:LConsistentPrecise}-\ref{eq:LConsistentmonotone}.

\begin{myEnumA}
\item Since \(Z_u^{qn}\) and \(Z_v^{qn}\) are finite sets, we have that the macro state functions 
\(M_{Z_u^{qn}}\) and \(M_{Z_v^{qn}}\) are computable, hence the relation \(\mathord{\leq_{{qn}}^r}\) is decidable.
Let \(u\in T(qn)\) and \(v\notin T(qn)\).
Then \(M_{Z_u^{qn}}(q')\neq \bot\) for some \(q'\in Q\) and \(M_{Z_v^{qn}}(q') = \bot\) since \(Z_v^{qn} = \varnothing\).
It follows that  \(u  \not\leq_{{qn}}^r v \) and, therefore, \(\mathord{\leq_{{qn}}^r} \cap (T(qn) \times (T(qn))^c) = \varnothing\).

\item Next we show that 
\(u \leq_{{qn}}^r v\) implies \(ua \leq_{{qn}}^r va\)
for all \(a\in \Sigma\), since, by Equation~\eqref{def-leftmon}, this is equivalent to the fact that $\leq_{{qn}}^r$ is right monotone. 
We proceed by contradiction.

Assume that \(u \leq_{{qn}}^r v\) and \(\exists q' \in Q\),  \(M_{Z^{qn}_{ua}}(q') \not\leq_{\bN_{\bot}}  M_{Z^{qn}_{va}}(q')\).
Then we have that \(m_1\ud\max\{n \mid pn \in Z^{qn}_{ua}\} \not\leq_{\bN_{\bot}} m_2\ud\max\{n \mid pn \in Z^{qn}_{va}\}\), which implies, since
$m_1\neq \bot$, that
$m_1,m_2\in \bN$ and $m_1 > m_2$. 

On the other hand, for all \( (q,a,d,q') \in \delta\) we have \(q'(m_1-d) \in Z_u^{qn}\) and \(q'(m_2-d) \in Z_v^{qn}\).

Observe that \(\max\{n \mid pn \in Z_u^{qn}\} = m_1-d\) since otherwise we would that have \(\max\{n \mid pn \in Z_u^{qn}\} +d > m_1\) which contradicts the definition of \(m_1\).
Similarly, \(\max\{n \mid pn \in Z_v^{qn}\} = m_2-d\).

Since \(m_1 > m_2\) we have that \(m_1-d > m_2-d\) and, as a consequence, \(\max\{n \mid pn \in Z_u^{qn}\} > \max\{n \mid pn \in Z_v^{qn}\}\), which contradicts \(u \leq_{{qn}}^r v\).
\end{myEnumA}
\end{proof}

Thus, as a consequence of Theorem~\ref{theorem:quasiorderAlgorithm},
Lemma~\ref{lemma:ocnwqo} and the decidability of membership \(u\in  T(qn)\),
the following known decidability result for language inclusion of regular languages into traces of OCNs~\citep[Theorem 3.2]{JANCAR1999476} is systematically derived within our framework.

\begin{corollary}\label{theorem:ocncontainment}
Let \(\cN\) be an NFA and \(\cO\) be an OCN. For any configuration \(qn \) of $\cO$, the language inclusion 
\(\lang{\cN} \subseteq T(qn)\) is decidable.
\end{corollary}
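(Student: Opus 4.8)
The plan is to reduce the claim to the algorithmic framework already developed, specifically to Theorem~\ref{theorem:quasiorderAlgorithmR} (the right-concatenation symmetric version of Theorem~\ref{theorem:quasiorderAlgorithm}), instantiated with $L_2 \ud T(qn)$. That theorem has exactly two hypotheses for an NFA $\cN$ and a language $L_2$: that membership in $L_2$ be decidable, and that there exist a decidable right $L_2$-consistent well-quasiorder on $\Sigma^*$. The second hypothesis is precisely what Lemma~\ref{lemma:ocnwqo} provides, namely the well-quasiorder $\leq^r_{qn}$ of Equation~\eqref{eq:ocnleq}. So the entire argument amounts to discharging the remaining hypothesis --- decidability of membership in $T(qn)$ --- and then quoting Theorem~\ref{theorem:quasiorderAlgorithmR}.

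For the membership check I would observe that, for a fixed word $u = a_1\cdots a_k \in \Sigma^*$, the set $Z_u^{qn}$ is finite and effectively computable: starting from $Z_\epsilon^{qn} = \{qn\}$, the set $Z_{a_1\cdots a_i}^{qn}$ is obtained from $Z_{a_1\cdots a_{i-1}}^{qn}$ by replacing each configuration $q'm$ with all configurations $q''(m+d)$ such that $(q',a_i,d,q'') \in \delta$ and $m+d \geq 0$. Each such set has at most $|Q|\cdot(n+k+1)$ elements and $\delta$ is finite, so this iteration halts after $k$ steps, and $u \in T(qn)$ if{}f $Z_u^{qn} \neq \varnothing$. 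Hence membership in $T(qn)$ is decidable. Combining this with Lemma~\ref{lemma:ocnwqo}, Theorem~\ref{theorem:quasiorderAlgorithmR} applies and Algorithm $\AlgRegularWr$ instantiated with $\leq^r_{qn}$ decides $\lang{\cN}\subseteq T(qn)$; this in particular re-derives the decidability result of \citet[Theorem~3.2]{JANCAR1999476} inside the present framework.

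I do not expect any genuine obstacle: the substantive content --- that $\leq^r_{qn}$ is a well-quasiorder (via Dickson's Lemma applied to the macro-state encoding $M_S$) and that it is right $T(qn)$-consistent --- has already been established in Lemma~\ref{lemma:ocnwqo}, and the membership procedure is a routine bounded reachability computation. The only point I would be careful to state explicitly is the finiteness bound on $Z_u^{qn}$ along the computation, which guarantees termination of the membership procedure; this is immediate because $u$ is fixed and each step perturbs the counter by at most one.
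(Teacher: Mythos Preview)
Your proposal is correct and essentially identical to the paper's own argument: the paper derives the corollary ``as a consequence of Theorem~\ref{theorem:quasiorderAlgorithm}, Lemma~\ref{lemma:ocnwqo} and the decidability of membership \(u\in T(qn)\)'', with no further detail. You cite Theorem~\ref{theorem:quasiorderAlgorithmR} rather than Theorem~\ref{theorem:quasiorderAlgorithm}, which is in fact the more precise reference since \(\leq^r_{qn}\) is a \emph{right} \(T(qn)\)-consistent wqo; and you additionally spell out the routine iterative computation of \(Z_u^{qn}\) that justifies decidability of membership, which the paper simply asserts.
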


The following result closes a conjecture made by \citet[Section 6]{deLuca1994}.

\begin{lemma}\label{lemma:RightNerodeOcnwqo}
Let \(\cO\) be an OCN.
Then the right Nerode's quasiorder \(\mathord{\qr_{T(qn)}}\) is an undecidable well-quasiorder.
\end{lemma}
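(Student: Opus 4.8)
The plan is to treat the two assertions separately: first that $\mathord{\qr_{T(qn)}}$ is a well-quasiorder, and then that it is undecidable. Throughout I write $T_{\cM}$ for the trace set taken in the net $\cM$, to disambiguate the reduction below.

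For the well-quasiorder part I would compare $\mathord{\qr_{T(qn)}}$ with the quasiorder $\mathord{\leq^r_{qn}}$ of Equation~\eqref{eq:ocnleq}. By Lemma~\ref{lemma:ocnwqo}, $\mathord{\leq^r_{qn}}$ is a right $T(qn)$-consistent well-quasiorder. Since, as recalled in Chapter~\ref{chap:prel} following \citet{deLuca1994}, the right Nerode quasiorder $\mathord{\qr_{L}}$ is the coarsest right $L$-consistent quasiorder, applying this with $L = T(qn)$ gives $u \leq^r_{qn} v \Ra u \qr_{T(qn)} v$. Hence, for any infinite sequence $\{x_i\}_{i\in\bN}$ of words there are $i<j$ with $x_i \leq^r_{qn} x_j$, and \emph{a fortiori} $x_i \qr_{T(qn)} x_j$; so $\mathord{\qr_{T(qn)}}$ is a well-quasiorder. (Equivalently: a quasiorder coarser than a well-quasiorder is again a well-quasiorder, and $\mathord{\qr_{T(qn)}}$ is coarser than $\mathord{\leq^r_{qn}}$.)

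For undecidability I would reduce from the trace inclusion problem for configurations of a one-counter net, i.e.\ from deciding $T_{\cO}(p_1 m_1) \subseteq T_{\cO}(p_2 m_2)$ for configurations of a given OCN $\cO = \tuple{Q,\Sigma,\delta}$, which is undecidable \cite{hofman_trace_2018}. Given such an instance I would construct an OCN $\cO'$ extending $\cO$ with a fresh source state $q$ and, for each $p \in Q$, a fresh ``loader'' state $r_p$, together with transitions $q \xrightarrow{\ell_p} r_p$ (counter unchanged), $r_p \xrightarrow{c} r_p$ (increment) and $r_p \xrightarrow{d} p$ (counter unchanged), where the letters $\ell_p$ (for $p\in Q$), $c$ and $d$ are pairwise distinct and do not occur in $\Sigma$. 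The gadget is deterministic on the words $w_{p,m} \ud \ell_p\, c^m\, d$, so $Z^{q0}_{w_{p,m}} = \{p\,m\}$ in $\cO'$, and since none of the fresh letters can be read from a state of $Q$ we get $w_{p,m}^{-1} T_{\cO'}(q0) = T_{\cO}(p\,m)$. Therefore
\[
w_{p_1,m_1} \;\qr_{T_{\cO'}(q0)}\; w_{p_2,m_2}
\quad\Lra\quad
T_{\cO}(p_1 m_1) \subseteq T_{\cO}(p_2 m_2) \enspace ,
\]
so an algorithm deciding the relation $\mathord{\qr_{T(qn)}}$ would decide OCN trace inclusion. As the latter is undecidable already for a fixed net (letting only the configurations vary), the construction moreover exhibits a single net $\cO'$ and configuration $q0$ whose right Nerode quasiorder $\mathord{\qr_{T_{\cO'}(q0)}}$ is non-recursive, which is the statement.

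The main obstacle is the verification that the loader gadget behaves as claimed: one must check that reading $w_{p,m}$ from $q0$ in $\cO'$ admits exactly one run, so that $Z^{q0}_{w_{p,m}}$ is the intended singleton rather than a larger set produced by spurious runs, and that the fresh letters $\ell_p, c, d$ are enabled on no run that has entered a state of $Q$, so that the left quotient $w_{p,m}^{-1}T_{\cO'}(q0)$ collapses to exactly $T_{\cO}(p\,m)$ over the original alphabet. Using the pairwise distinct fresh letters and routing all of them only through the new states makes both checks routine. A secondary point to pin down is precisely which undecidability result for OCN trace inclusion is invoked and whether one wants the uniform formulation or the single non-recursive instance; the reduction above is compatible with either reading.
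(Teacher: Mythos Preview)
Your argument is correct and follows the same two-part structure as the paper. The well-quasiorder half is identical: both you and the paper invoke Lemma~\ref{lemma:ocnwqo} together with the maximality of the right Nerode quasiorder among right $L$-consistent quasiorders (from \citet{deLuca1994}) to conclude that $\mathord{\qr_{T(qn)}}$, being coarser than the wqo $\mathord{\leq^r_{qn}}$, is itself a wqo.

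For undecidability the paper uses a noticeably lighter reduction. Given two OCNs $\cO_1,\cO_2$ over a common alphabet, it forms their disjoint union $\cO_3$ together with one fresh source state $q$ and just two transitions $(q,a,0,q_1)$, $(q,b,0,q_2)$; then $a \qr_{T_3(qn)} b$ holds iff $T_1(q_1 n) \subseteq T_2(q_2 n)$, and the latter is the undecidable trace-inclusion problem of \citep[Theorem~20]{Hofman:2013:DWS:2591370.2591405}. No counter-loading gadget is needed because the source problem already supplies the initial counter value $n$. Your construction is also correct and does buy something extra---an explicit single net $\cO'$ and configuration $q0$ whose Nerode quasiorder is non-recursive, independent of the starting counter value---but at the cost of the loader states $r_p$, the fresh letters $\ell_p,c,d$, and the accompanying verifications you flag as the main obstacle. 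Those verifications are indeed routine, so the approach goes through; it is simply heavier machinery than the paper deploys for the same conclusion.
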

\begin{proof}
Recall that \(\mathord{\qr_{T(qn)}}\) is maximum in the set of all right \(T(qn)\)-consistent quasiorders~\citep[Section~2, point~4]{deLuca1994}.
As a consequence, \(u\qr_{{qn}}v\) $\Ra$ \(u\qr_{T(qn)} v\), for all \(u,v\in\Sigma^*\).
By Lemma~\ref{lemma:ocnwqo}, \(\qr_{{qn}}\) is a wqo, so that  \(\mathord{\qr_{T(qn)}}\) is a wqo as well. 
Undecidability of \(\mathord{\qr_{T(qn)}}\) follows from the undecidability of the trace inclusion problem for nondeterministic OCNs \citep[Theorem 20]{Hofman:2013:DWS:2591370.2591405} since given the OCNs \(\cO_1=(Q_1,Σ,\delta_1)\) and \(\cO_2=(Q_2,Σ,\delta_2)\), we can define the union OCN \(\cO_3\ud (Q_1\cup Q_2\cup\{q\}, Σ, \delta_3)\) where \(\delta_3 \) maps \((q,a,0)\) to \(q_1 \in Q_1\), \( (q,b,0) \) to \(q_2 \in Q_2\) and behaves like \(\delta_1\) or \(\delta_2\) elsewhere. Then, it turns out that 
\[a \qr_{T_3(qn)} b \Lra a^{-1}T_3(q_1n) \subseteq b^{-1}T_3(q_2n) \Lra T_1(q_1n)\subseteq T_2(q_2n)\enspace .\]
Therefore, deciding the right Nerode's quasiorder \(\qr_{T_3(qn)}\) is as hard as deciding \(T_1(q_1n)\subseteq T_2(q_2n)\).
\end{proof}

It is worth to remark that, by Lemma~\ref{lemma:leftrightnerodegoodqo}~\ref{lemma:leftrightnerodegoodqo:Consistent}, the left and right Nerode's quasiorders \(\mathord{\ql_{T(qn)}}\) and \(\mathord{\qr_{T(qn)}}\) are \(T(qn)\)-consistent. 
However, the left Nerode's quasiorder does not need to be a wqo, otherwise \(T(qn)\) would be regular.

We conclude this section by conjecturing that our framework could be instantiated for extending 
Corollary~\ref{theorem:ocncontainment} to traces of Petri Nets, a result 
which is already known to be true~\cite{JANCAR1999476}.

\section{A Novel Perspective on the Antichain Algorithm}%
\label{sec:novel_perspective_AC}

Let \(\cN_1 = \tuple{Q_1,\delta_1,I_1,F_1,\Sigma}\) and \(\cN_2 = \tuple{Q_2,\delta_2,I_2,F_2,\Sigma}\) be two NFAs 
and consider
the state-based left \(\lang{\cN_2}\)-consistent wqo
 \(\mathord{\qo_{\cN_2}^{\ell}}\) defined by Equivalence~\eqref{eqn:state-qo}. 
Theorem~\ref{theorem:quasiorderAlgorithm} shows that Algorithm \AlgRegularW decides the language inclusion \(\lang{\cN_1} \subseteq \lang{\cN_2}\) by manipulating finite sets of words. 

Since \(u \qo_{\cN_2}^{\ell} v \Lra \pre^{\cN_2}_u(F_2) \subseteq \pre^{\cN_2}_v(F_2)\), we could equivalently consider 
the 
set of states \(\pre^{\cN_2}_u(F_2)\in \wp(Q_2)\) rather than
a word $u\in \Sigma^*$. 
This observation suggests the design of an algorithm analogous to \AlgRegularW but computing on the poset
\(\tuple{\AC_{\tuple{\wp(Q_2),\subseteq}},\sqsubseteq}\) of antichains 
of sets of states of the complete lattice $\tuple{\wp(Q_2),\subseteq}$. 

To that end, the poset \(\tuple{\AC_{\tuple{\wp(Q_2),\subseteq}},\sqsubseteq}\) is viewed as an abstraction of the poset \(\tuple{\wp(Σ^*), \subseteq}\) by using the abstraction and concretization functions  \(\alpha\colon \wp(\Sigma^*) \ra \AC_{\tuple{\wp(Q_2),\subseteq}}\) and \(\gamma\colon \AC_{\tuple{\wp(Q_2),\subseteq}}\ra\wp(\Sigma^*)\) and using the abstract function \({\Pre}_{\cN_1}^{\cN_2}:(\AC_{\tuple{\wp(Q_2),\subseteq}})^{|Q_1|}\ra (\AC_{\tuple{\wp(Q_2),\subseteq}})^{|Q_1|}\) defined as follows:
\begin{align}
& \alpha(X) \ud \lfloor \{ \pre_u^{\cN_2}(F_2) \in \wp(Q_2) \mid u\in X\} \rfloor \,,\nonumber\\
& \gamma(Y) \ud \{v \in \Sigma^* \mid \exists u\in \Sigma^*,\, \pre_{u}^{\cN_2}(F_2) \in Y \,\land\, \pre_{u}^{\cN_2}(F_2) \subseteq \pre_{v}^{\cN_2}(F_2)\} \,,\label{def-antichain-state-abs}\\
&\mindex{\Pre_{\cN_1}^{\cN_2}}(\tuple{X_q}_{q\in Q_1}) \ud \langle \lfloor \big\{ \pre_a^{\cN_2}(S)  \in \wp(Q_2) \mid  \exists a\in \Sigma, q'\in Q_1, q'\in\delta_1(q,a) \wedge S \in X_{q'} \big\} \rfloor \rangle_{q\in Q_1} \nonumber .
\end{align}

Observe that the functions $\alpha$ and ${\Pre}_{\cN_1}^{\cN_2}$ are well-defined because minors are antichains. 

\begin{lemma}\label{lemma:rhoisgammaalpha}
The following properties hold:
\begin{myEnumA}
\item \(\tuple{\wp(\Sigma^*),\subseteq}\galois{\alpha}{\gamma}\tuple{\AC_{\tuple{\wp(Q_2),\subseteq}},\sqsubseteq}\) is a GC.
\label{lemma:rhoisgammaalpha:GC}
\item \(\gamma \comp \alpha = \rho_{\qo^{\ell}_{\cN_2}}\).\label{lemma:rhoisgammaalpha:rho}
\item For all \(\vect{X}\in \alpha(\wp(\Sigma^*))^{|Q_1|}\), \(\Pre_{\cN_1}^{\cN_2}(\vect{X}) = {\alpha\comp \Pre_{\cN_1} \comp \gamma(\vect{X})}\). \label{lemma:rhoisgammaalpha:pre}
\end{myEnumA}
\end{lemma}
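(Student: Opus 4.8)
The plan is to prove each of the three items in turn, noting that (a) is a prerequisite for (b) and (c) since those invoke the adjunction structure.

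\textbf{Part (a): the Galois Connection.} I would verify directly that $\alpha(X) \sqsubseteq Y \Lra X \subseteq \gamma(Y)$. The right-to-left direction is the routine one. For the left-to-right direction, suppose $\alpha(X) \sqsubseteq Y$, i.e.\ $\minor{\{\pre_u^{\cN_2}(F_2) \mid u \in X\}} \sqsubseteq Y$. Take $v \in X$; then $\pre_v^{\cN_2}(F_2)$ dominates (in $\subseteq$) some element of the minor, which in turn dominates some $\pre_u^{\cN_2}(F_2) \in Y$; transitivity of $\subseteq$ then places $v$ in $\gamma(Y)$. The key technical facts to invoke are: every subset of a wqoset has a minor, $X \sqsubseteq \minor{X}$ always holds, and $\sqsubseteq$ is transitive (established in Chapter~\ref{chap:prel}). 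An alternative, perhaps cleaner, route is to observe that $\gamma \alpha$ as characterised in part~(b) equals the closure $\rho_{\qo^{\ell}_{\cN_2}}$, and that $\gamma$ is co-additive, so that the adjunction follows from the general correspondence between closures and GCs recalled at the end of Chapter~\ref{chap:prel}; but since part~(b) is logically independent I would prove (a) by the direct argument above.

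\textbf{Part (b): $\gamma \comp \alpha = \rho_{\qo^{\ell}_{\cN_2}}$.} Here I would unfold both sides on an arbitrary $X \in \wp(\Sigma^*)$. By Equation~\eqref{eq:qo-up-closure}, $\rho_{\qo^{\ell}_{\cN_2}}(X) = \{v \mid \exists u \in X,\ u \qo^{\ell}_{\cN_2} v\} = \{v \mid \exists u \in X,\ \pre_u^{\cN_2}(F_2) \subseteq \pre_v^{\cN_2}(F_2)\}$, using the definition \eqref{eqn:state-qo} of the state-based quasiorder. On the other side, $\gamma(\alpha(X)) = \{v \mid \exists u,\ \pre_u^{\cN_2}(F_2) \in \minor{\{\pre_w^{\cN_2}(F_2) \mid w \in X\}} \land \pre_u^{\cN_2}(F_2) \subseteq \pre_v^{\cN_2}(F_2)\}$. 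The containment $\gamma\alpha(X) \subseteq \rho_{\qo^{\ell}_{\cN_2}}(X)$ is immediate since any element of the minor is $\pre_w^{\cN_2}(F_2)$ for some $w \in X$ (a minor is a subset of the set's elements). For the reverse, if $u \in X$ with $\pre_u^{\cN_2}(F_2) \subseteq \pre_v^{\cN_2}(F_2)$, then since $X \sqsubseteq \minor{X}$ (applied to the image set) there is an element $\pre_{u'}^{\cN_2}(F_2)$ of the minor with $\pre_{u'}^{\cN_2}(F_2) \subseteq \pre_u^{\cN_2}(F_2)$; chaining the two inclusions and taking this $u'$ as witness puts $v$ in $\gamma\alpha(X)$.

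\textbf{Part (c): $\Pre_{\cN_1}^{\cN_2} = \alpha \comp \Pre_{\cN_1} \comp \gamma$ on $\alpha(\wp(\Sigma^*))^{|Q_1|}$.} I would compute componentwise. Fix $\vect{X} \in \alpha(\wp(\Sigma^*))^{|Q_1|}$ and a state $q \in Q_1$. On the right-hand side, $(\gamma(\vect{X}))_{q'} = \gamma(X_{q'})$ is a language; $\Pre_{\cN_1}$ forms $\bigcup_{a, q' \in \delta_1(q,a)} a\,\gamma(X_{q'})$; then $\alpha$ turns this into the minor of $\{\pre_w^{\cN_2}(F_2) \mid w \in \bigcup_{a,q'} a\,\gamma(X_{q'})\}$. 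Using $\pre_{aw'}^{\cN_2} = \pre_a^{\cN_2} \comp \pre_{w'}^{\cN_2}$ (Equation~\eqref{eq:prepre}), a word $w = a w'$ with $w' \in \gamma(X_{q'})$ contributes $\pre_a^{\cN_2}(\pre_{w'}^{\cN_2}(F_2))$. Now the crucial observation: because $X_{q'}$ is in the image of $\alpha$ (it is an antichain of sets of the form $\pre_u^{\cN_2}(F_2)$), the set $\{\pre_{w'}^{\cN_2}(F_2) \mid w' \in \gamma(X_{q'})\}$ has the same minor as $X_{q'}$ itself --- indeed by part~(b), $\gamma(X_{q'}) = \gamma\alpha(\gamma(X_{q'})) = \rho_{\qo^{\ell}_{\cN_2}}(\gamma(X_{q'}))$ and $\gamma(X_{q'})$ "represents'' exactly the sets dominated by members of $X_{q'}$. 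Applying the monotone map $\pre_a^{\cN_2}$ and then taking minors, one gets precisely $\minor{\{\pre_a^{\cN_2}(S) \mid \exists a, q' \in \delta_1(q,a),\ S \in X_{q'}\}}$, which is the $q$-component of $\Pre_{\cN_1}^{\cN_2}(\vect{X})$. The main obstacle here is the bookkeeping needed to show that passing through $\gamma$ and back does not change the minor up to $\sqsubseteq$-equivalence --- i.e.\ that $\alpha$ restricted to its image behaves as an isomorphism and commutes appropriately with the monotone concatenation/predecessor operations; once part~(b) is in hand this reduces to monotonicity of $\pre_a^{\cN_2}$ together with the idempotence $\alpha\gamma\alpha = \alpha$ from Lemma~\ref{lemma:propertiesGC}. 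I expect part~(c) to be the most delicate of the three, precisely because it mixes the abstraction machinery with the structure of $\Pre_{\cN_1}$.
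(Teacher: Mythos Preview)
Your proposal is correct and follows essentially the same route as the paper: part~(a) is proved by directly verifying the adjunction equivalence via the minor property, part~(b) by unfolding both sides and collapsing the minor, and part~(c) componentwise using $\pre_{av}^{\cN_2} = \pre_a^{\cN_2}\comp\pre_v^{\cN_2}$, the identity $\alpha\gamma(X_{q'})=X_{q'}$ on the image of $\alpha$, and the fact that $\lfloor \pre_a^{\cN_2}(X)\rfloor = \lfloor \pre_a^{\cN_2}(\lfloor X\rfloor)\rfloor$ for the monotone map $\pre_a^{\cN_2}$. Your diagnosis that part~(c) is the bookkeeping-heavy one is also accurate.
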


\begin{proof}
\begin{myEnumA}
\item 
Let us first observe that $\alpha$ and $\gamma$ are well-defined. 
First, $\alpha(X)$ is an antichain of  $\tuple{\wp(Q_2),\subseteq}$ since it is a minor for the well-quasiorder \(\subseteq\) and, therefore, it is finite.
On the other hand, $\gamma(Y)$ is clearly an element of $\tuple{\wp(Σ^*), \subseteq}$ by definition. 

Then, for all $X\in \wp(Σ^*)$ and 
$Y\in \AC_{\tuple{\wp(Q_2),\subseteq}}$, 
it turns out that:
\begin{adjustwidth}{-0.5cm}{}
\begin{myAlign}{0pt}{}
\alpha(X) \sqsubseteq Y & \Lra \quad\text{[By definition of \(\sqsubseteq\)]} \\
\forall z \in \alpha(X), \exists y \in Y, \; y \subseteq z &\Lra \quad\text{[By definition of  \(\alpha\) and \(\minor{\cdot}\)]} \\
\forall v \in X, \exists y \in Y, \; y \subseteq \pre^{\cN_2}_v(F_2) &\Lra \quad\text{[By definition of \(\gamma\)]} \\
\forall v \in X, \; v \in γ(Y) & \Lra \quad\text{[By definition of  \(\subseteq\)]} \\
X \subseteq \gamma(Y) &\enspace . 
\end{myAlign}
\end{adjustwidth}

\item For all \(X \in \wp(Σ^*)\) we have that
\begin{adjustwidth}{-0.95cm}{}
\begin{myAlign}{0pt}{}
	\gamma(\alpha(X)) &=\quad\text{[By definition of $\alpha,\gamma$]}\\
\{v \,{\in}\, \Sigma^* \mid \exists u\,{\in}\, \Sigma^*, \pre_{u}^{\cN_2}(F_2) \,{\in}\, \lfloor \{ \pre_w^{\cN_2}(F_2) \mid w\in X\} \rfloor  
	\span \land \pre_{u}^{\cN_2}(F_2) \subseteq \pre_{v}^{\cN_2}(F_2)\} \\
	&=\quad\text{[By definition of minor]} \\
  \{v \in \Sigma^* \mid \exists u\in X,\, \pre_{u}^{\cN_2}(F_2) \subseteq \pre_{v}^{\cN_2}(F_2)\} &=\quad\text{[By definition of \(\mathord{\qo^{\ell}_{\cN_2}}\)]}\\
	\{v \in \Sigma^* \mid \exists u \in X ,\,  u \qo^{\ell}_{\cN_2} v\}&=\quad\text{[By definition of\ \(\rho_{\qo^{\ell}_{\cN_2}}\)]}\\
	\rho_{\qo^{\ell}_{\cN_2}}(X) &\enspace .
\end{myAlign}
\end{adjustwidth}

\item For all \(\vect{X}\in \alpha(\wp(\Sigma^*))^{|Q_1|}\) we have that
\begin{adjustwidth}{-0.8cm}{}
\begin{myAlign}{0pt}{0pt}
\alpha(\Pre_{\cN_1}(\gamma(\vect{X}))) &= \quad \text{[By def. of \(\Pre_{\cN_1}\)]} \\
\tuple{\alpha({\textstyle \bigcup_{a \in \Sigma, q\ggoes{a}_{\cN_1} q'}} a\gamma(\vect{X}_{q'}))}_{q \in Q_1} &=\quad \text{[By definition of \(\alpha\)]} \\
\langle\lfloor \{ \pre^{\cN_2}_u(F_2)  \mid u \in {\textstyle \bigcup_{a \in \Sigma, q\ggoes{a}_{\cN_1} q'}} a\gamma(\vect{X}_{q'})\rfloor\rangle_{q\in Q_1} &=\\
\span\specialcell{\hfill\text{[By \(\pre^{\cN_2}_{av} = \pre^{\cN_2}_a\comp \pre^{\cN_2}_v\)]}}\\
\langle\lfloor \{ \pre^{\cN_2}_a(\{ \pre^{\cN_2}_u(F_2) \mid u \in {\textstyle\bigcup_{q\ggoes{a}_{\cN_1} q'}}\gamma(\vect{X}_{q'})\})  \mid a \in \Sigma\}\rfloor\rangle_{q\in Q_1} &= \quad \text{[By rewriting]}\\
\langle\lfloor \{ \pre^{\cN_2}_a(S)  \mid a \in \Sigma, q\ggoes{a}_{\cN_1} q', S\in \{ \pre^{\cN_2}_u(F_2) \mid u \in \gamma(\vect{X}_{q'})\}\}\rfloor\rangle_{q\in Q_1} &=\\
\span\specialcell{\hfill\text{[By \( \minor{\pre^{\cN_2}_a(X)} = \minor{\pre^{\cN_2}_a(\minor{X})}\)]}}\\
\langle\lfloor \{ \pre^{\cN_2}_a(S)  \mid a \in \Sigma, q\ggoes{a}_{\cN_1} q', S\in \minor{\{ \pre^{\cN_2}_u(F_2) \mid u \in \gamma(\vect{X}_{q'})\}}\}\rfloor\rangle_{q\in Q_1} & =\quad \text{[By definition of \(\alpha\)]} \\
\langle\lfloor \{ \pre^{\cN_2}_a(S)  \mid a \in \Sigma, q\ggoes{a}_{\cN_1} q', S\in \alpha(\gamma(\vect{X}_{q'}))\rfloor\rangle_{q\in Q_1} &=\\
\span\specialcell{\hfill\text{[Since \(\vect{X} \in \alpha\), \(\alpha(\gamma(\vect{X}_{q'})) = \vect{X}_{q'}\)]}}\\
\langle \lfloor \{ \pre^{\cN_2}_a(S)  \mid a\in\Sigma, q\ggoes{a}_{\cN_1}q', S \in \vect{X}_{q'} \} \rfloor   \rangle_{q\in Q_1} &=\quad \text{[By def. of ${\Pre}_{\cN_1}^{\cN_2}$]} \\
{\Pre}_{\cN_1}^{\cN_2}(\vect{X}) & \enspace .  
\end{myAlign}%
\end{adjustwidth}%
\end{myEnumA}%
\end{proof}

It follows from Lemma~\ref{lemma:rhoisgammaalpha} that the GC \(\tuple{\wp(\Sigma^*),\subseteq}\galois{\alpha}{\gamma}\tuple{\AC_{\tuple{\wp(Q_2),\subseteq}},\sqsubseteq}\) and the abstract 
function \(\Pre_{\cN_1}^{\cN_2}\) satisfy the hypotheses~\ref{theorem:EffectiveAlgorithm:prop:rho}-\ref{theorem:EffectiveAlgorithm:prop:abseps} of Theorem~\ref{theorem:EffectiveAlgorithm}.
Thus, in order to obtain an algorithm for deciding \(\lang{\cN_1} \subseteq \lang{\cN_2}\) it remains to show that requirement~\ref{theorem:EffectiveAlgorithm:prop:absincl} of Theorem~\ref{theorem:EffectiveAlgorithm} holds, i.e. there is an algorithm to decide whether \(\vect{Y} \sqsubseteq \alpha(\vectarg{L_2}{I_2})\) for every \(\vect{Y} \in \alpha(\wp(\Sigma^*))^{|Q_1|}\).
In order to do that, we first provide some intuitions on how the resulting algorithm works.

First, observe that the Kleene iterates of the function \(\lambda \vect{X}\ldotp\alpha(\vectarg{\epsilon}{F_1}) \sqcup \Pre_{\cN_1}^{\cN_2}(\vect{X})\) of Theorem~\ref{theorem:EffectiveAlgorithm} are vectors of antichains in \(\tuple{\AC_{\tuple{\wp(Q_2),\subseteq}},\sqsubseteq}\), where 
each component is indexed by some \(q\in Q_1\) and represents (through its minor set) a set of sets of states that are predecessors of \(F_2\) in \(\cN_2\) by a word $u$ generated by \(\cN_1\) from that state \(q\), i.e. \(\pre_u^{\cN_2}(F_2)\) with \(u \in W^{\cN_1}_{q,F_1}\).
Since \(\epsilon \in W_{q,F_1}^{\cN_1}\) for all \(q \in F_1\) and \(\pre_\epsilon^{\cN_2}(F_2) = F_2\) the 
iterations of the procedure $\Kleene$ begin with the initial vector \(\alpha(\vectarg{\epsilon}{F_1}) = \tuple{\nullable{q\in F_1}{F_2}{\varnothing}}_{q\in Q_1}\).

On the other hand, note that by taking the minor of each vector component, we are considering smaller sets which still preserve the relation \(\sqsubseteq\) since 
\begin{equation*}
A \sqsubseteq B \Lra \minor{A} \sqsubseteq B \Lra A \sqsubseteq \minor{B} \Lra \minor{A} \sqsubseteq \minor{B}\enspace .
\end{equation*}

Let \(\tuple{Y_q}_{q\in Q_1}\) be the fixpoint computed by the \(\Kleene\) procedure. 
It turns out that, for each component $q\in Q_1$, \(Y_q = \minor{\{\pre_u^{\cN_2}(F_2)\mid u \in W_{q,F_1}^{\cN_1}\}}\) holds.
Whenever the inclusion \(\lang{\cN_1} \subseteq \lang{\cN_2}\) holds, all the sets of states in \(Y_q\) for some initial state \(q \in I_1\) are predecessors of \(F_2\) in \(\cN_2\) by words in \(\lang{\cN_2}\), so that they all contain at least one initial state in \(I_2\).
As a result, we obtain Algorithm \AlgRegularA, that is, 
a ``state-based'' inclusion algorithm for deciding \(\lang{\cN_1} \subseteq \lang{\cN_2}\). 

\begin{figure}[!ht]
\RemoveAlgoNumber
\begin{algorithm}[H]
\SetAlgorithmName{\AlgRegularA}{}

\caption{State-based algorithm for {\(\lang{\cN_1} \subseteq \lang{\cN_2}\)}}\label{alg:RegIncA}

\KwData{NFAs \(\cN_1=\tuple{Q_1,\delta_1,I_1,F_1,\Sigma}\) and \(\cN_2=\tuple{Q_2,\delta_2,I_2,F_2,\Sigma}\).}
\medskip
\(\tuple{Y_q}_{q\in Q_1} := \Kleene (\lambda \vect{X}\ldotp\alpha(\vectarg{\epsilon}{F_1}) \sqcup \Pre_{\cN_1}^{\cN_2}(\vect{X}),\vect{\varnothing})\)\;

\ForAll{\(q \in I_1\)}{
	\ForAll{\(S \in Y_q\)} {
		\lIf{\(S \cap I_2 = \varnothing\)}{\Return \textit{false}}
	}
}
\Return \textit{true}\;
\end{algorithm}
\end{figure}

\begin{theorem}\label{theorem:statesQuasiorderAlgorithm}
	Let \(\cN_1,\cN_2\) be NFAs.
	The algorithm \AlgRegularA decides the inclusion \(\lang{\cN_1} \subseteq \lang{\cN_2}\).
\end{theorem}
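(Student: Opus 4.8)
The plan is to present \AlgRegularA as an instantiation of the Galois-connection-based scheme of Theorem~\ref{theorem:EffectiveAlgorithm}, taking \(\cN = \cN_1\), \(L_2 = \lang{\cN_2}\), abstract domain \(D = \tuple{\AC_{\tuple{\wp(Q_2),\subseteq}},\sqsubseteq}\), the Galois Connection \(\tuple{\wp(\Sigma^*),\subseteq}\galois{\alpha}{\gamma}D\) and abstract predecessor \(\Pre_{\cN_1}^{\cN_2}\) defined in~\eqref{def-antichain-state-abs}, \(\epsilon^{\sharp} \ud \tuple{\nullable{q\in F_1}{\{F_2\}}{\varnothing}}_{q\in Q_1}\), and \(\absincl\) realised by lines~2--5 of \AlgRegularA. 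Lemma~\ref{lemma:rhoisgammaalpha} already supplies the structural ingredients: it is a Galois Connection, \(\gamma\circ\alpha = \rho_{\qo^{\ell}_{\cN_2}}\), and \(\Pre_{\cN_1}^{\cN_2} = \alpha\circ\Pre_{\cN_1}\circ\gamma\) on \(\alpha(\wp(\Sigma^*))^{|Q_1|}\). So the work reduces to checking the numbered hypotheses of Theorem~\ref{theorem:EffectiveAlgorithm}, and the only non-routine one will be clause~\ref{theorem:EffectiveAlgorithm:prop:absincl}.

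First I would discharge hypotheses~\ref{theorem:EffectiveAlgorithm:prop:rho}--\ref{theorem:EffectiveAlgorithm:prop:abseps}. For~\ref{theorem:EffectiveAlgorithm:prop:rho}: by Lemma~\ref{lemma:LAconsistent} the relation \(\mathord{\qo^{\ell}_{\cN_2}}\) is a left \(\lang{\cN_2}\)-consistent quasiorder, so Lemma~\ref{lemma:properties} gives \(\rho_{\qo^{\ell}_{\cN_2}}(\lang{\cN_2}) = \lang{\cN_2}\) (hence \(\lang{\cN_2}\in\gamma(D)\)) together with backward completeness for left concatenation, which combined with Lemma~\ref{lemma:rhoisgammaalpha}\ref{lemma:rhoisgammaalpha:rho} yields \(\gamma\alpha(aX) = \gamma\alpha(a\,\gamma\alpha(X))\). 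For~\ref{theorem:EffectiveAlgorithm:prop:absdecidable}: \(\tuple{\wp(Q_2),\subseteq}\) is a finite poset, hence a wqoset, so, as recalled in Chapter~\ref{chap:prel}, \(\tuple{\AC_{\tuple{\wp(Q_2),\subseteq}},\sqsubseteq}\) is an ACC poset; it is a join-semilattice with bottom the empty antichain (binary lub = union followed by a minor), each element is a finite family of subsets of the finite set \(Q_2\) hence finitely representable, and \(\sqsubseteq\) is decidable, so \(D\) is an effective domain. For~\ref{theorem:EffectiveAlgorithm:prop:abspre} and~\ref{theorem:EffectiveAlgorithm:prop:abseps}: \(\Pre_{\cN_1}^{\cN_2}\) is plainly computable and equals \(\alpha\circ\Pre_{\cN_1}\circ\gamma\) by Lemma~\ref{lemma:rhoisgammaalpha}\ref{lemma:rhoisgammaalpha:pre}, while \(\alpha(\vectarg{\epsilon}{F_1}) = \tuple{\nullable{q\in F_1}{\{F_2\}}{\varnothing}}_{q\in Q_1}\) because \(\pre^{\cN_2}_{\epsilon}(F_2) = F_2\).

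The substantive step is hypothesis~\ref{theorem:EffectiveAlgorithm:prop:absincl}. Writing \(\vect{Y} = \tuple{Y_q}_{q\in Q_1}\) for the fixpoint returned by the \(\Kleene\) call, an easy induction (sketched in the discussion preceding the theorem) gives \(Y_q = \minor{\{\pre^{\cN_2}_u(F_2) \mid u\in W^{\cN_1}_{q,F_1}\}}\). I would then prove: \(\vect{Y}\sqsubseteq\alpha(\vectarg{\lang{\cN_2}}{I_1})\) iff for every \(q\in I_1\) and every \(S\in Y_q\) one has \(S\cap I_2\neq\varnothing\). The non-initial components are vacuous, since for \(q\notin I_1\) the bound is \(\alpha(\Sigma^*)\) and \(W^{\cN_1}_{q,F_1}\subseteq\Sigma^*\) forces \(Y_q\sqsubseteq\alpha(\Sigma^*)\) by monotonicity of minors for \(\sqsubseteq\). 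For \(q\in I_1\) the bound is \(\alpha(\lang{\cN_2})=\minor{\{\pre^{\cN_2}_u(F_2)\mid u\in\lang{\cN_2}\}}\), and since minors preserve \(\sqsubseteq\), the relation \(Y_q\sqsubseteq\alpha(\lang{\cN_2})\) is equivalent to: for every \(S\in Y_q\) there is \(u\in\lang{\cN_2}\) with \(\pre^{\cN_2}_u(F_2)\subseteq S\). The elementary fact bridging the two phrasings is \(u\in\lang{\cN_2}\Leftrightarrow I_2\cap\pre^{\cN_2}_u(F_2)\neq\varnothing\): if \(S\cap I_2\ni q_i\), then, writing \(S = \pre^{\cN_2}_w(F_2)\), that \(w\) lies in \(\lang{\cN_2}\) and witnesses \(\pre^{\cN_2}_w(F_2)\subseteq S\); conversely any \(u\in\lang{\cN_2}\) with \(\pre^{\cN_2}_u(F_2)\subseteq S\) drops an initial state of \(\cN_2\) into \(S\). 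This establishes the equivalence, so lines~2--5 are a correct \(\absincl\), all hypotheses of Theorem~\ref{theorem:EffectiveAlgorithm} are met, and the theorem certifies both termination and correctness of \AlgRegularA.

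I expect the equivalence in clause~\ref{theorem:EffectiveAlgorithm:prop:absincl}, together with the bookkeeping around minors and the non-initial vector components, to be the main obstacle; everything else is a matter of quoting Lemma~\ref{lemma:rhoisgammaalpha}, Lemma~\ref{lemma:LAconsistent}, Lemma~\ref{lemma:properties} and the preliminary facts on antichain posets. A more self-contained alternative would bypass Theorem~\ref{theorem:EffectiveAlgorithm} entirely and argue directly from \(Y_q = \minor{\{\pre^{\cN_2}_u(F_2)\mid u\in W^{\cN_1}_{q,F_1}\}}\): the inclusion \(\lang{\cN_1}\subseteq\lang{\cN_2}\) fails precisely when some \(w\in W^{\cN_1}_{q,F_1}\) with \(q\in I_1\) satisfies \(I_2\cap\pre^{\cN_2}_w(F_2)=\varnothing\), and by the minor property this occurs iff some \(S\in Y_q\) has \(S\cap I_2=\varnothing\) — exactly what lines~2--5 test.
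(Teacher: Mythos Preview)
Your proposal is correct and follows essentially the same approach as the paper: instantiate Theorem~\ref{theorem:EffectiveAlgorithm} with the Galois Connection of Lemma~\ref{lemma:rhoisgammaalpha} and verify the five hypotheses. The only notable difference is in clause~\ref{theorem:EffectiveAlgorithm:prop:absincl}: the paper argues abstractly via the adjunction (\(Y_q=\alpha(U)\), so \(Y_q\sqsubseteq\alpha(L_2)\Leftrightarrow U\subseteq\gamma\alpha(L_2)=L_2\Leftrightarrow\forall u\in U,\ \pre^{\cN_2}_u(F_2)\cap I_2\neq\varnothing\)), whereas you unpack \(\alpha(\lang{\cN_2})\) explicitly and use the characterisation \(Y_q=\minor{\{\pre^{\cN_2}_u(F_2)\mid u\in W^{\cN_1}_{q,F_1}\}}\); both routes are valid, and the paper's is slightly shorter since it never needs that explicit description of the fixpoint.
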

\begin{proof}
We show that all the conditions~\ref{theorem:EffectiveAlgorithm:prop:rho}-\ref{theorem:EffectiveAlgorithm:prop:absincl} of Theorem~\ref{theorem:EffectiveAlgorithm} are satisfied for the abstract domain \(\tuple{D,\qo_D}=\tuple{\AC_{\tuple{\wp(Q_2),\subseteq}},\sqsubseteq}\) as defined by the Galois Connection of Lemma~\ref{lemma:rhoisgammaalpha}~\ref{lemma:rhoisgammaalpha:GC}. 

\begin{myEnumA}
\item Since, by Lemma~\ref{lemma:rhoisgammaalpha}~\ref{lemma:rhoisgammaalpha:rho}, \(\rho_{\qo^{\ell}_{\cN_2}}(X) = \gamma(\alpha(X))\) it follows from Lemmas~\ref{lemma:properties} and~\ref{lemma:LAconsistent} that \(\gamma(\alpha(L_2)) = L_2\).
Moreover, for all \(a\in\Sigma\), \(X\in\wp(\Sigma^*)\) we have that:
\begin{align*}
\gamma\alpha(a X) & = \quad \text{[In GCs \(\gamma = \gamma \alpha \gamma\)]} \\
\gamma\alpha\gamma\alpha(a X) & = \quad \text{[By Lemma~\ref{lemma:properties}~\ref{lemma:properties:bw} with \(\rho_{\leqslant^{\ell}_{\cN_2}} = \gamma\alpha\)]} \\
\gamma\alpha\gamma \alpha(a\gamma \alpha(X)) &= \quad \text{[In GCs 
\(\gamma = \gamma \alpha \gamma\)]} \\
\gamma\alpha(a\gamma\alpha(X)) & \enspace.
\end{align*}

\item \( (\AC_{\tuple{\wp(Q_2),\subseteq}},\sqsubseteq) \) is effective because 
$Q_2$ is finite.
\item  By Lemma~\ref{lemma:rhoisgammaalpha}~\ref{lemma:rhoisgammaalpha:pre} we have that
\(\alpha(\Pre_{\cN_1}(\gamma(\vect{X}))) = {\Pre}_{\cN_1}^{\cN_2}(\vect{X})\) for all \(\vect{X}\in \alpha(\wp(\Sigma^*))^{|Q_1|}\).
\item \(\alpha(\{\epsilon\}) = \{F_2\}\) and \(\alpha({\varnothing})=\varnothing\), hence \(\minor{\alpha(\vectarg{\epsilon}{F_1})}\) is trivial to compute. \label{prop:alphaepsilon}

\item Since \(\alpha(\vectarg{L_2}{I_1})=\tuple{\alpha(\nullable{q\in I_1}{L_2}{\Sigma^*})}_{q\in Q_1}\), for all $\vect{Y}\in\alpha(\wp(\Sigma^*))^{|Q_1|}$ the relation \(\vect{Y} \sqsubseteq \alpha(\vectarg{L_2}{I_1})\) trivially holds for all components \(q \notin I_1\).
For the components $q\in I_1$, it suffices to show that
\(Y_q \sqsubseteq \alpha(L_2) \Lra \forall S \in Y_q, \; S \cap I_2 \neq \varnothing\), which is the check performed by lines 2-5 of algorithm \AlgRegularA.
\begin{align*}
Y_q \sqsubseteq \alpha(L_2) & \Lra \quad \text{[Because \(Y_q = \alpha(U)\) for some \(U \in \wp(\Sigma^*)\)]} \\
\alpha(U) \sqsubseteq \alpha(L_2) & \Lra \quad \text{[By GC]} \\
U \subseteq \gamma(\alpha(L_2)) & \Lra \quad \text{[By L.~\ref{lemma:properties},~\ref{lemma:LAconsistent} and~\ref{lemma:rhoisgammaalpha}, $\gamma(\alpha(L_2))=L_2$]} \\
U \subseteq L_2 & \Lra \quad \text{[By definition of \(\pre_u^{\cN_2}\)]} \\
\forall u \in U, \pre_u^{\cN_2}(F_2) \cap I_2 \neq \varnothing & \Lra \quad \text{[Since \(Y_q =\alpha(U) = \lfloor \{ \pre_u^{\cN_2}(F_2) \mid u\in U\} \rfloor \)]} \\
\forall S \in Y_q, S \cap I_2 \neq \varnothing &\enspace .
\end{align*}

\end{myEnumA}
Thus, by Theorem~\ref{theorem:EffectiveAlgorithm}, Algorithm \AlgRegularA decides \(\lang{\cN_1} \subseteq \lang{\cN_2}\). %
\end{proof}

\subsection{Relationship to the Antichains Algorithm}%
\label{sub:relationship_to_the_antichain_algorithm}
\citet{DBLP:conf/cav/WulfDHR06} introduced two so-called antichains algorithms, denoted 
\emph{forward} and \emph{backward}, for deciding the universality of the language accepted by an NFA, i.e. whether the language is $\Sigma^*$ or not.
Then, they extended the backward algorithm to decide the inclusion between the languages accepted by two NFAs.

In what follows we show that Algorithm \AlgRegularA is equivalent to the corresponding extension of the forward algorithm and, therefore, dual to the backward antichains algorithm for language inclusion by \citet{DBLP:conf/cav/WulfDHR06}[Theorem 6].

To do that, we first define the poset of antichains in which the forward antichains algorithm computes its fixpoint.
Then, we give a formal definition of the forward antichains algorithm for deciding language inclusion and show that this algorithm coincides with \AlgRegularA when applied to the reverse automata.
Since language inclusion between the languages generated by two NFAs holds if{}f inclusion holds between the languages generated by their reverse NFAs, we conclude that the algorithm \AlgRegularA is equivalent to the forward antichains algorithm.

Finally, we show how the different variants of the antichains algorithm, including the original backward antichains algorithm~\cite{DBLP:conf/cav/WulfDHR06}[Theorem 6], can be derived within our framework by considering the adequate quasiorders.

\paragraph*{Forward Antichains Algorithm}

Let \(\cN_1=\tuple{Q_1,\Sigma,\delta_1,I_1,F_1}\) and \(\cN_2=\tuple{Q_2,\Sigma,\delta_2,I_2,F_2}\) be two NFAs and consider the language inclusion problem \(\lang{\cN_1} \subseteq \lang{\cN_2}\).
Let us consider the following poset of antichains 
\( \tuple{\AC_{\tuple{\wp(Q_2),\subseteq}},\wsqsubseteq} \) where
\[X \wsqsubseteq Y \udiff \forall y \in Y, \exists x \in X, \; x \subseteq y\enspace \]
and notice that \(\wsqsubseteq\) coincides with the reverse 
relation \(\sqsubseteq^{-1}\). 
As observed by \citet[Lemma 1]{DBLP:conf/cav/WulfDHR06}, it turns out that \( \tuple{\AC_{\tuple{\wp(Q_2),\subseteq}},\wsqsubseteq, \wsqcup, \wsqcap, \{\varnothing\}, \varnothing} \) is a finite lattice, where \(\wsqcup\) and \(\wsqcap\) denote, resp., lub and glb, and $\{\varnothing\}$ and $\varnothing$ are, resp., the least and greatest elements. 
This lattice \( \tuple{\AC_{\tuple{\wp(Q_2),\subseteq}},\wsqsubseteq} \) is the domain in which the forward antichains algorithm computes on for deciding universality \citep[Theorem~3]{DBLP:conf/cav/WulfDHR06}.
The following result extends this forward algorithm in order to decide language inclusion.
\begin{theorem}[\textbf{{\citep[Theorems~3 and 6]{DBLP:conf/cav/WulfDHR06}}}] \label{theorem:antichainpaper}
Let
\begin{align*}
\vect{\fp} \ud \textstyle{\wbigsqcup}\{\vect{X} \in (\AC_{\tuple{\wp(Q_2),\subseteq}})^{|Q_1|} \mid \vect{X} = \Post_{\cN_1}^{\cN_2}(\vect{X})\;\wsqcap\; \tuple{\nullable{q \in I_1}{\{I_2\}}{\varnothing}}_{q\in Q_1}\}
\end{align*}
where 
\begin{align*}
\mindex{\Post_{\cN_1}^{\cN_2}}(\tuple{X_q}_{q\in Q_1}) \ud  \langle \lfloor\{\post_a^{\cN_2}(x) {\in} \wp(Q_2) \mid \exists a {\in} \Sigma, q'{\in} Q_1, &  q{\in}\delta_1(q',a) \wedge x \in X_{q'}  \}\rfloor \rangle_{q \in Q_1}\enspace .
\end{align*}
Then, \(\lang{\cN_1} \nsubseteq \lang{\cN_2}\) if and only if there exists \(q \in F_1\) such that \(\vect{\fp}_q \,\wsqsubseteq\, \{F_2^c\} \).
\end{theorem}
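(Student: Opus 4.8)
The plan is to piggyback on the correctness of Algorithm~\AlgRegularA, already established in Theorem~\ref{theorem:statesQuasiorderAlgorithm}. Concretely, I would show that $\vect{\fp}$ is exactly the fixpoint computed by Algorithm~\AlgRegularA when run on the reverse automata $\cN_1^R$ and $\cN_2^R$, and that the acceptance test ``$\exists q\in F_1,\;\vect{\fp}_q\wsqsubseteq\{F_2^c\}$'' is precisely the negation of the test in lines~2--5 of Algorithm~\AlgRegularA applied to $\cN_1^R,\cN_2^R$. Since reversal preserves language inclusion, i.e.\ $\lang{\cN_1}\subseteq\lang{\cN_2}\Lra\lang{\cN_1^R}\subseteq\lang{\cN_2^R}$ (using $\lang{\cN}^R=\lang{\cN^R}$), this immediately yields the claim.

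First I would unfold $\vect{\fp}$. Since $\tuple{\AC_{\tuple{\wp(Q_2),\subseteq}},\wsqsubseteq}$ is a finite, hence complete, lattice and $\wsqsubseteq$ is the reverse of $\sqsubseteq$, the meet $\wsqcap$ is the $\sqsubseteq$-join $\sqcup$, so by the Knaster--Tarski theorem the $\wsqsubseteq$-lub of all fixpoints of the monotone map $\lambda\vect{X}\ldotp\Post_{\cN_1}^{\cN_2}(\vect{X})\wsqcap\tuple{\nullable{q\in I_1}{\{I_2\}}{\varnothing}}_{q\in Q_1}$ is its $\wsqsubseteq$-greatest fixpoint, which is the $\sqsubseteq$-least fixpoint of $\lambda\vect{X}\ldotp\Post_{\cN_1}^{\cN_2}(\vect{X})\sqcup\tuple{\nullable{q\in I_1}{\{I_2\}}{\varnothing}}_{q\in Q_1}$. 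Now I would replay the argument of the proof of Theorem~\ref{theorem:statesQuasiorderAlgorithm}, but starting from the right-concatenation equations $\Eqnr(\cN_1)$ (whose least solution is $\tuple{W^{\cN_1}_{I_1,q}}_{q\in Q_1}$ by Equation~\eqref{eq:WIqAequalslfp}) and using the abstraction $\alpha(X)\ud\minor{\{\post_u^{\cN_2}(I_2)\mid u\in X\}}$ induced by the right state-based well-quasiorder $\qr_{\cN_2}$ of Equivalence~\eqref{eqn:state-qo}: the right-concatenation analogues of Lemmas~\ref{lemma:properties},~\ref{lemma:LAconsistent} and~\ref{lemma:rhoisgammaalpha} yield $\gamma\comp\alpha=\rho_{\qr_{\cN_2}}$, backward completeness for $\lambda X\ldotp Xa$, and $\Post_{\cN_1}^{\cN_2}=\alpha\comp\Post_{\cN_1}\comp\gamma$ on the image of $\alpha$; moreover $\alpha(\vectarg{\epsilon}{I_1})=\tuple{\nullable{q\in I_1}{\{I_2\}}{\varnothing}}_{q\in Q_1}$ because $\alpha(\{\epsilon\})=\{I_2\}$. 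Then the right-concatenation version of Corollary~\ref{corol:rholfp} together with Lemma~\ref{lemma:alpharhoequality} gives
\[
\vect{\fp}_q=\alpha(W^{\cN_1}_{I_1,q})=\minor{\{\post_u^{\cN_2}(I_2)\mid u\in W^{\cN_1}_{I_1,q}\}}\qquad\text{for all }q\in Q_1\enspace.
\]

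Next I would match this with Algorithm~\AlgRegularA on $\cN_1^R,\cN_2^R$. By the discussion leading up to Theorem~\ref{theorem:statesQuasiorderAlgorithm}, that algorithm computes $\tuple{Y_q}_{q\in Q_1}$ with $Y_q=\minor{\{\pre_u^{\cN_2^R}(F_{\cN_2^R})\mid u\in W^{\cN_1^R}_{q,F_{\cN_1^R}}\}}$; using $F_{\cN_i^R}=I_i$, the identities $\pre_u^{\cN_2^R}=\post_{u^R}^{\cN_2}$ and $W^{\cN_1^R}_{q,I_1}=(W^{\cN_1}_{I_1,q})^R$, this equals $\minor{\{\post_w^{\cN_2}(I_2)\mid w\in W^{\cN_1}_{I_1,q}\}}=\vect{\fp}_q$. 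The outer loop of lines~2--5 ranges over $I_{\cN_1^R}=F_1$, and the condition $S\cap I_{\cN_2^R}=\varnothing$ reads $S\cap F_2=\varnothing$, i.e.\ $S\subseteq F_2^c$; hence Algorithm~\AlgRegularA on $\cN_1^R,\cN_2^R$ returns \emph{false} iff there is $q\in F_1$ and $S\in\vect{\fp}_q$ with $S\subseteq F_2^c$, which by definition of $\wsqsubseteq=\sqsubseteq^{-1}$ is exactly ``$\vect{\fp}_q\wsqsubseteq\{F_2^c\}$ for some $q\in F_1$''. Since by Theorem~\ref{theorem:statesQuasiorderAlgorithm} the algorithm returns \emph{false} iff $\lang{\cN_1^R}\nsubseteq\lang{\cN_2^R}$, and this is equivalent to $\lang{\cN_1}\nsubseteq\lang{\cN_2}$, the proof is complete. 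The main obstacle is the second paragraph: one must carefully re-establish, for $\post$ and the initial states $I_2$ in place of $\pre$ and the final states $F_2$ and for right rather than left concatenation, the Galois-connection and backward-completeness facts packaged by Lemmas~\ref{lemma:properties},~\ref{lemma:LAconsistent} and~\ref{lemma:rhoisgammaalpha}, and to keep the bookkeeping between $\wsqsubseteq$ and $\sqsubseteq$ straight; all of this is a transcription of arguments already carried out for the backward/left-concatenation case, but it is where the real content sits.
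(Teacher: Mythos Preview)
Your proof is correct, but it takes a genuinely different route from the paper. The paper treats this theorem as essentially a citation: it recalls De~Wulf et~al.'s original formulation in terms of antichains over $Q_1\times\wp(Q_2)$ with the order $\subseteq_\times$, and then shows that this pair-based presentation is equivalent to the vector-indexed-by-$Q_1$ presentation used in the statement; the correctness itself is inherited from the cited result. You, by contrast, derive the theorem from scratch inside the paper's own framework, reducing it to Theorem~\ref{theorem:statesQuasiorderAlgorithm} (correctness of \AlgRegularA) via reversal. Amusingly, the paper carries out essentially your third paragraph in the discussion following Corollary~\ref{theorem:antichainpaperReverse}, but in the opposite logical direction: having already accepted the theorem by citation, it uses the match with \AlgRegularA on $\cN_1^R,\cN_2^R$ to conclude that the two algorithms coincide, rather than to prove the theorem.

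Your approach buys self-containment (no reliance on the external result), at the cost of redoing the right-concatenation analogues of Lemmas~\ref{lemma:properties}, \ref{lemma:LAconsistent} and~\ref{lemma:rhoisgammaalpha}. You can shortcut most of your second paragraph: the paper already records that $\Post_{\cN_1}^{\cN_2}=\Pre_{\cN_1^R}^{\cN_2^R}$, and since $I_1=F_{\cN_1^R}$, $I_2=F_{\cN_2^R}$ one has $\tuple{\nullable{q\in I_1}{\{I_2\}}{\varnothing}}_{q\in Q_1}=\alpha(\vectarg{\epsilon}{F_{\cN_1^R}})$ for the $\alpha$ of~\eqref{def-antichain-state-abs} built from $\cN_2^R$. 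Hence $\vect{\fp}$ is literally the $\sqsubseteq$-lfp that \AlgRegularA computes on $\cN_1^R,\cN_2^R$, and you can jump straight to your third paragraph without ever invoking the right-concatenation machinery.
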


\begin{proof}
Let us first introduce some notation necessary to describe the forward antichains algorithm by \citet{DBLP:conf/cav/WulfDHR06} for deciding \(\lang{\cN_1} \subseteq \lang{\cN_2}\).
In the following, we consider the poset
\(\tuple{Q_1\times \wp(Q_2),\subseteq_\times}\) where 
\[(q_1,x_1) \subseteq_\times
(q_2,x_2) \udiff q_1=q_2 \wedge x_1 \subseteq x_2 \enspace . \]
Then, let
\(\tuple{\AC_{\tuple{Q_1\times \wp(Q_2),\subseteq_\times}},\wsqsubseteq_\times,\wsqcup_\times, \wsqcap_\times}\) be the lattice of antichains over the poset \(\tuple{Q_1\times \wp(Q_2),\subseteq_\times}\) where:
\begin{align*}
X \wsqsubseteq_\times Y & \udiff \forall (q,y) \in Y, \exists (q,x) \in X , x \subseteq y \\
\textstyle{\min_{\times}}(X) &\ud \{(q,x) \in X \mid \forall (q',x') \in X, q=q' \Ra x' \nsubseteq x\} \\
X \wsqcup_\times Y & \ud \textstyle{\min_{\times}}(\{(q,x \cup y) \mid (q,x) \in X,\, (q,y) \in Y\}) \\
X \wsqcap_\times Y & \ud \textstyle{\min_{\times}}(\{(q,z) \mid (q,z) \in X \cup Y \}) \enspace .
\end{align*}
Also, let $\Post: \AC_{\tuple{Q_1\times \wp(Q_2),\subseteq_\times}} \ra \AC_{\tuple{Q_1\times \wp(Q_2),\subseteq_\times}}$ be defined as follows:
\begin{align*}  
\Post(X) \ud \textstyle{\min_{\times}}(\{ (q,\post_a^{\cN_2}(x)) \in Q_1\times \wp(Q_2) \mid \exists a \in \Sigma, q\in Q_1, \hfill (q',x) \in X , q' \ggoes{a}_{\cN_1} q\}) \enspace .
\end{align*}

Then, it turns out that the dual of the backward antichains algorithm of \citet[Theorem~6]{DBLP:conf/cav/WulfDHR06} states that \(\lang{\cN_1} \nsubseteq \lang{\cN_2}\) if{}f there exists \(q \in F_1\) such that \(\fp \mathrel{\wsqsubseteq_\times} \{(q,F_2^c)\}\) where
\[\fp = {\textstyle\wbigsqcup_\times}\{X \in \AC_{\tuple{Q_1\times \wp(Q_2),\subseteq_\times}} \mid X = \Post(X)\;\wsqcap_\times\; (I_1 \times \{I_2\})\}\enspace .\]

\noindent
We observe that for every \(X\in\AC_{\tuple{Q_1\times \wp(Q_2),\subseteq_\times}}\), a pair \((q,x) \in Q_1\times \wp(Q_2)\) such that $(q,x)\in X$ is used by 
\citet[Theorem~6]{DBLP:conf/cav/WulfDHR06} simply as
a way to associate states $q$ of \(\cN_1\) with sets $x$ of states  of \(\cN_2\).
In fact, every antichain 
\(X\in\AC_{\tuple{Q_1\times \wp(Q_2),\subseteq_\times}}\)
can be equivalently formalized 
by a vector 
\[\tuple{\{x \in \wp(Q_2) \mid (q,x)\in X\}}_{q \in Q_1}
\in (\AC_{\tuple{\wp(Q_2),\subseteq}})^{|Q_1|}\]
indexed by states \(q\in Q_1\) and whose components are antichains in $\AC_{\tuple{\wp(Q_2),\subseteq}}$. 

Correspondingly, we consider 
the lattice \(\tuple{\AC_{\tuple{\wp(Q_2),\subseteq}},\wsqsubseteq}\), where for 
every pair of elements $X,Y\in \AC_{\tuple{\wp(Q_2),\subseteq}}$ we have that
\begin{align*}
X \wsqsubseteq Y &\udiff \forall y \in Y, \exists x \in X , x \subseteq y& 
\textstyle{\min}(X) &\ud \{x \in X \mid \forall x' \in X, x' \nsubset x\} \\
X \wsqcup Y & \ud \textstyle{\min}(\{x \cup y \in \wp(Q_2) \mid x \in X, y \in Y\}) &
X \wsqcap Y & \ud \textstyle{\min}(\{z \in \wp(Q_2) \mid z \in X \cup Y\}) \enspace .
\end{align*}

Then, \(\Post\) can be replaced by \(\Post_{\cN_1}^{\cN_2}: (\AC_{\tuple{\wp(Q_2),\subseteq}})^{|Q_1|} \ra (\AC_{\tuple{\wp(Q_2),\subseteq}})^{|Q_1|}
\), its equivalent formulation on vectors defined as follows:
\begin{align*}
\Post_{\cN_1}^{\cN_2}(\tuple{X_q}_{q\in Q_1}) \ud \langle\textstyle{\min}(\{\post_a^{\cN_2}(x) \in \wp(Q_2) \mid \exists a \in \Sigma, q'\in Q_1, \hfill x \in X_{q'} , q' \ggoes{a}_{\cN_1} q \})\rangle_{q \in Q_1}\enspace .
\end{align*}

In turn, \(\fp\in \AC_{\tuple{Q_1\times \wp(Q_2),\subseteq_\times}}\) is replaced by the 
following vector:
\[\vect{\fp} \ud \textstyle{\wbigsqcup}\{\vect{X}\in (\AC_{\tuple{\wp(Q_2),\subseteq}})^{|Q_1|} \mid \vect{X}=
 \Post_{\cN_1}^{\cN_2}(\vect{X})\;\wsqcap\; \tuple{\nullable{q \in I_1}{\{I_2\}}{\varnothing}}_{q\in Q_1}\} \enspace .\]
Finally, the check \(\exists q \in F_1 , \fp \mathrel{\wsqsubseteq_\times} \{(q,F_2^c)\}\) becomes \(\exists q \in F_1 ,  \vect{\fp}_q \mathrel{\wsqsubseteq} \{F_2^c\} \).
\end{proof}

Let \(\cN^R\) denote the reverse automaton of \(\cN\), where arrows are flipped and the initial/final states become final/initial.
Note that language inclusion can be decided by considering the reverse automata since 
\[\lang{\cN_1} \subseteq \lang{\cN_2} \Lra \lang{\cN_1^R} \subseteq \lang{\cN_2^R}\enspace . \]
Furthermore, it is straightforward to check that \(\Post_{\cN_1}^{\cN_2} = \Pre_{\cN_1^R}^{\cN_2^R}\).
We therefore obtain the following result as a consequence of Theorem~\ref{theorem:antichainpaper}. 
\begin{corollary}\label{theorem:antichainpaperReverse}
Let
\begin{align*}
\vect{\fp} \ud \textstyle{\wbigsqcup}\{\vect{X} \in (\AC_{\tuple{\wp(Q_2),\subseteq}})^{|Q_1|} \mid \vect{X} = \Pre_{\cN_1}^{\cN_2}(\vect{X})\;\wsqcap\; \tuple{\nullable{q \in F_1}{\{F_2\}}{\varnothing}}_{q\in Q_1}\} \enspace .
\end{align*}
Then, \(\lang{\cN_1} \nsubseteq \lang{\cN_2}\) if{}f there exists \(q \in I_1\) such that \(\vect{\fp}_q \,\wsqsubseteq\, \{I_2^c\} \).
\end{corollary}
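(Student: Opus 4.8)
The plan is to obtain Corollary~\ref{theorem:antichainpaperReverse} as an immediate consequence of Theorem~\ref{theorem:antichainpaper} via a ``reversal'' argument, using the two facts recorded just after the proof of that theorem: the equivalence $\lang{\cN_1} \subseteq \lang{\cN_2} \Lra \lang{\cN_1^R} \subseteq \lang{\cN_2^R}$ and the identity $\Post_{\cN_1}^{\cN_2} = \Pre_{\cN_1^R}^{\cN_2^R}$. Since reversal swaps initial and final states and leaves the state sets — hence the antichain lattice $(\AC_{\tuple{\wp(Q_2),\subseteq}})^{|Q_1|}$ — unchanged, applying Theorem~\ref{theorem:antichainpaper} to the reversed automata should directly produce the claimed ``backward-style'' characterization.

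Concretely, first I would instantiate Theorem~\ref{theorem:antichainpaper} with $\cN_1^R$ and $\cN_2^R$ in place of $\cN_1$ and $\cN_2$. Because the initial states of $\cN_1^R$ are $F_1$ and those of $\cN_2^R$ are $F_2$, while the final states of $\cN_1^R$ and $\cN_2^R$ are $I_1$ and $I_2$, the instantiated statement reads: with
\[\vect{\fp} \ud \textstyle{\wbigsqcup}\{\vect{X} \in (\AC_{\tuple{\wp(Q_2),\subseteq}})^{|Q_1|} \mid \vect{X} = \Post_{\cN_1^R}^{\cN_2^R}(\vect{X})\;\wsqcap\; \tuple{\nullable{q \in F_1}{\{F_2\}}{\varnothing}}_{q\in Q_1}\} \enspace ,\]
we have $\lang{\cN_1^R} \nsubseteq \lang{\cN_2^R}$ if{}f there is $q \in I_1$ with $\vect{\fp}_q \wsqsubseteq \{I_2^c\}$. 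Next I would rewrite $\Post_{\cN_1^R}^{\cN_2^R}$: applying the identity $\Post_{\cN_1}^{\cN_2} = \Pre_{\cN_1^R}^{\cN_2^R}$ to the reversed automata and using $(\cN^R)^R = \cN$ gives $\Post_{\cN_1^R}^{\cN_2^R} = \Pre_{\cN_1}^{\cN_2}$ (equivalently, this follows from $\post^{\cN^R}_a = \pre^{\cN}_a$ together with the definitional symmetry of the two operators). Substituting this into the displayed vector turns it into exactly the $\vect{\fp}$ of the corollary. Finally, combining the resulting characterization with $\lang{\cN_1} \nsubseteq \lang{\cN_2} \Lra \lang{\cN_1^R} \nsubseteq \lang{\cN_2^R}$ yields the statement.

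I do not expect a genuine obstacle here; the argument is bookkeeping. The one point requiring care is to apply the initial/final swap \emph{consistently} in all four places where the automata data enters Theorem~\ref{theorem:antichainpaper}: the guard $q \in F_1$ in the base vector, the literal $\{F_2\}$, the quantifier $q \in I_1$ in the emptiness-style check, and the complemented set $\{I_2^c\}$. A secondary point is to confirm $\Post_{\cN_1^R}^{\cN_2^R} = \Pre_{\cN_1}^{\cN_2}$ on the nose rather than merely up to reindexing, which is immediate from the definitions once one notes that $\post^{\cN^R}_a(S) = \pre^{\cN}_a(S)$ for every $a \in \Sigma$ and $S \subseteq Q$.
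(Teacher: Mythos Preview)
Your proposal is correct and takes essentially the same approach as the paper: the paper derives the corollary directly from Theorem~\ref{theorem:antichainpaper} by instantiating it with the reversed automata, using the equivalence $\lang{\cN_1} \subseteq \lang{\cN_2} \Lra \lang{\cN_1^R} \subseteq \lang{\cN_2^R}$ and the identity $\Post_{\cN_1}^{\cN_2} = \Pre_{\cN_1^R}^{\cN_2^R}$, exactly as you outline. Your bookkeeping checks (the four places where initial/final data enter, and the on-the-nose equality $\Post_{\cN_1^R}^{\cN_2^R} = \Pre_{\cN_1}^{\cN_2}$) are the right ones.
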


\paragraph*{From the Forward Antichains Algorithm to \textsc{FAIncS}}
Since \(\wsqsubseteq = \mathord{\sqsubseteq^{-1}}\), we have that \(\wsqcap = \sqcup\), \(\wsqcup = \sqcap\) and the greatest element $\varnothing$ for $\wsqsubseteq$ is the least element for $\mathord{\sqsubseteq}$.
Moreover, by~\eqref{def-antichain-state-abs}, $\alpha(\vectarg{\epsilon}{F_1}) = \tuple{\nullable{q \in F_1}{\{F_2\}}{\varnothing}}_{q\in Q_1}$.
Therefore, we can rewrite the vector 
$\vect{\fp}$ of 
Corollary~\ref{theorem:antichainpaperReverse} as 
\[
\vect{\fp} = {\textstyle\bigsqcap}\{\vect{X} \in (\AC_{\tuple{\wp(Q_2),\subseteq}})^{|Q_1|} \mid \vect{X} = \Pre_{\cN_1}^{\cN_2}(\vect{X})\;\sqcup\; \alpha(\vectarg{\epsilon}{F_1})\}
\] 
which is precisely the lfp in $\tuple{(\AC_{\tuple{\wp(Q_2),\subseteq}})^{|Q_1|}, \sqsubseteq}$ of $\Pre_{\cN_1}^{\cN_2}$ above
$\alpha(\vectarg{\epsilon}{F_1})$. 

Hence, it turns out that the Kleene iterates of the least fixpoint computation 
that converge to \(\vect{\fp}\) exactly coincide with the iterates computed by the $\Kleene$ procedure of the state-based algorithm 
\AlgRegularA.
In particular, if  \(\vect{Y}\) is the output vector of the call to $\Kleene$ at line~1 of  
\AlgRegularA then  \(\vect{Y} = \vect{\fp}\).
Furthermore, 
\[\exists q\in I_1, \vect{\fp}_q \:\wsqsubseteq\: \{I_2^c\} \Lra \exists q\in I_1, \exists S \in \vect{\fp}_q, \; S \cap I_2 = \varnothing\enspace .\]
Summing up, the \(\sqsubseteq\)-lfp algorithm \AlgRegularA coincides with the \(\wsqsubseteq\)-gfp antichains algorithm given by Corollary~\ref{theorem:antichainpaperReverse}.

\paragraph*{Backward Antichains Algorithm}

We can also derive an antichains algorithm for deciding language inclusion fully equivalent to the backward one of \citet[Theorem 6]{DBLP:conf/cav/WulfDHR06} by considering the  
lattice \(\tuple{\AC_{\tuple{\wp(Q_2),\supseteq}},\sqsubseteq}\) for the dual lattice $\tuple{\wp(Q_2),\supseteq}$ 
and by replacing the functions \(\alpha\), \(\gamma\) and \(\Pre_{\cN_1}^{\cN_2}\) of Lemma~\ref{lemma:rhoisgammaalpha}, respectively, with:
\begin{align*}
& \alpha^c(X) \ud \lfloor \{ \cpre_u^{\cN_2}(F_2^c) \in \wp(Q_2)\mid u\in X\} \rfloor \, ,\hspace{-13pt} \\
& \gamma^c(Y) \ud \{u \in \Sigma^* \mid \exists y \in Y , y \supseteq \cpre_{u}^{\cN_2}(F^c_2) \}, \\
& {\CPre}_{\cN_1}^{\cN_2}(\tuple{X_q}_{q\in Q_1}) \ud  \langle \lfloor \{ \cpre_a^{\cN_2}(S) \in \wp(Q_2)  \mid  \exists a\in \Sigma, q'\in Q_1, \hfill q'\in\delta_1(q,a) \wedge  S \in X_{q'} \} \rfloor \rangle_{q\in Q_1} \enspace . 
\end{align*}
where \(\cpre_u^{\cN_2}(S) \ud (\pre_u^{\cN_2}(S^c))^c\) for $u\in \Sigma^*$.

When instantiating Theorem~\ref{theorem:EffectiveAlgorithm} using these functions, we obtain an lfp algorithm computing on the lattice \(\tuple{\AC_{\tuple{\wp(Q_2),\supseteq}},\sqsubseteq}\).
Indeed, it turns out that 
\[\lang{\cN_1} \subseteq \lang{\cN_2} \Lra \Kleene\big(\lambda \vect{X} \ldotp \CPre_{\cN_1}^{\cN_2}(\vect{X}) \sqcup \alpha^c(\vectarg{\epsilon}{F_1}),\vect{\varnothing}\big) \sqsubseteq \alpha^c(\vectarg{L_2}{I_1})\enspace .\]

It is easily seen that this algorithm coincides with the backward antichains algorithm defined by \citet[Theorem 6]{DBLP:conf/cav/WulfDHR06} since both compute on the same lattice, \(\minor{X}\) corresponds to the maximal (w.r.t.\ set inclusion) elements of \(X\), \(\alpha^c(\{\epsilon\}) = \{F_2^c\}\) and for all \(X \in \alpha^c(\wp(\Sigma^*))\), we have that \(X \sqsubseteq \alpha^c(L_2) \Lra \forall S \in X, \; I_2 \nsubseteq S\).

\paragraph*{Variants of the Antichains Algorithm}

We have shown that the two forward/backward antichains algorithms introduced by \citet{DBLP:conf/cav/WulfDHR06} can be systematically derived by instantiating our framework and (possibly) considering the reverse automata.
Similarly, we can derive within our framework an algorithm equivalent to the backward antichains algorithm applied to the reverse automata and an algorithm equivalent to the forward antichains algorithm (without reverting the automata).
Table~\ref{table:antichainsAlgorithms} summarizes the relation between our framework and the antichains algorithms given (explicitly or implicitly) by \citet{DBLP:conf/cav/WulfDHR06}.

\begin{table}[!ht]
\centering
\setlength{\tabcolsep}{4pt}
\setlength{\extrarowheight}{1ex}
\begin{tabular}{c?c|c}
& \emph{Backward} & \emph{Forward} \\
\toprule
\(\lang{\cN_1} \subseteq \lang{\cN_2}\) & \(\cpre_u^{\cN_2}(F_2^c) \subseteq \cpre_v^{\cN_2}(F_2^c)\) & \(\post_u^{\cN_2}(I_2) \subseteq \post_v^{\cN_2}(I_2)\)\\
\(\lang{\cN_1^R} \subseteq \lang{\cN_2^R}\) & \(\cpost_u^{\cN_2}(I_2^c) \subseteq \cpost_v^{\cN_2}(I_2^c)\) & \(\pre_u^{\cN_2}(F_2) \subseteq \pre_v^{\cN_2}(F_2)\)
\end{tabular}
\caption{Summary of the quasiorders that should be used within our framework, i.e. using Theorem~\ref{theorem:EffectiveAlgorithm}, to derive the different antichains algorithms that are (explicitly or implicitly) given by \citet{DBLP:conf/cav/WulfDHR06}.
Each cell of the form \(f(u) \subseteq f(v)\) is the definition of the quasiorder \(u \qo v \ud f(u) \subseteq f(v)\) that should be used to derive the antichains algorithm given by the column for solving the language inclusion given by the row.}\label{table:antichainsAlgorithms}
\end{table}

The original antichains algorithms were later improved by \citet{Abdulla2010} and, subsequently, by \citet{DBLP:conf/popl/BonchiP13}. Among their improvements, they showed how to exploit a precomputed binary relation between pairs of states of the input automata such that language inclusion holds for all the pairs in the relation.
When that binary relation is a simulation relation, our framework allows to partially match their results by using the quasiorder \(\preceq^{r}_{\cN}\) defined in Section~\ref{subsec:state-qos}.
However, this quasiorder relation \(\preceq^{r}_{\cN}\) does not consider pairs of states \(Q_1 \times Q_1\) whereas the aforementioned algorithms do.

\section{Inclusion for Context Free Languages}%
\label{sec:context_free_languages}
In Section~\ref{sec:an_algorithmic_framework_for_language_inclusion_based_on_complete_abstractions} we used the general abstraction scheme presented in Section~\ref{sec:inclusion_checking_by_complete_abstractions} to derive two techniques (Theorems~\ref{theorem:FiniteWordsAlgorithmGeneral} and~\ref{theorem:EffectiveAlgorithm}) for defining algorithms for solving language inclusion problems.
Then, in Sections~\ref{sec:instantiating_the_framework_language_based_well_quasiorders} and~\ref{sec:novel_perspective_AC} we applied these techniques on different scenarios and derived algorithms for solving language inclusion problems \(L_1 \subseteq L_2\) where \(L_1\) and \(L_2\) are regular languages.

In this section, we show that the abstraction scheme from Section~\ref{sec:inclusion_checking_by_complete_abstractions} is general enough to cover language inclusion problems \(L_1 \subseteq L_2\) where \(L_1\) is context-free.
In particular, we replicate the developments from Sections~\ref{sec:an_algorithmic_framework_for_language_inclusion_based_on_complete_abstractions},~\ref{sec:instantiating_the_framework_language_based_well_quasiorders} and~\ref{sec:novel_perspective_AC} in order to extend our quasiorder-based framework for deciding the inclusion \(L_1 \subseteq L_2\) where \(L_1\) is a context-free language and \(L_2\) is regular.

\subsection{Extending the Framework to CFGs}
Similarly to the case of automata,  a CFG \(\cGr = (\cV,\Sigma,P)\) in CNF induces the following set of equations:
\[\Eqn(\cGr) \ud \{X_i = {\textstyle \bigcup_{X_i \to \beta_j \in P}} \beta_j \mid i \in [0,n]\} \enspace .\]

Given a subset of variables \(S \subseteq \cV\) of a grammar, the set of words generated from some variable in \(S\) is defined as
\[\mindex{W_{S}^{\cGr}} \ud \{w \in \Sigma^* \mid \exists X \in S, \; X \ra^* w\} \enspace .\]
When \(S = \{X\}\) we slightly abuse the notation and write \(W_{X}^{\cGr}\). 
Also, we drop the superscript \(\cGr\) when the grammar is clear from the context.
The language generated by \(\cGr\) is therefore \(\lang{\cGr} = W^{\cGr}_{X_0}\).

Next, we define the function \(\Fn_{\cGr}: \wp(\Sigma^*)^{|\cV|}\to \wp(\Sigma^*)^{|\cV|}\) and the vector \(\vect{b} \in \wp(\Sigma^*)^{|\cV|}\), which are used to formalize the equations in \(\Eqn(\cGr)\), as follows:
\begin{align*}
\vect{b} & \ud\tuple{b_i}_{i\in[0,n]} \in \wp(\Sigma^*)^{|\cV|} &&\text{with } b_i \ud \{ \beta \mid X_i\ra \beta\in P,\:\beta\in \Sigma\cup \{ \epsilon \}\}, \\
\Fn_{\cGr }(\vect{X}) & \ud \tuple{\beta_1^{(i)}\cup\ldots\cup\beta_{k_i}^{(i)}}_{i\in[0,n]} &&\text{with } \beta_j^{(i)}\in\cV^2 \text{ and } X_i\ra\beta_j^{(i)}\in P \enspace .
\end{align*}

Notice that function \(\lambda \vect{X}\ldotp \vect{b}\mathrel{\cup} \Fn_{\cGr}(\vect{X})\) is a well-defined monotone function in  \(\wp(\Sigma^*)^{|\cV|}\ra \wp(\Sigma^*)^{|\cV|}\), which therefore has the least fixpoint
\begin{equation}\label{eq:CFGFixpoint}
\tuple{Y_i}_{i\in[0,n]} = \lfp (\lambda \vect{X}\ldotp \vect{b}\cup \Fn_{\cGr}(\vect{X}))
\end{equation}
It is known \cite{ginsburg} that the language  accepted by \(\cGr\) is such that \(\lang{\cGr} = Y_{0}\).

\begin{example}\label{example:cfg}
Consider the following grammar in CNF:
\[\cGr = \tuple{\{X_0, X_1\}, \{a,b\}, \{X_0\ra X_0X_1 \mid X_1X_0 \mid b,\: X_1 \ra a\}}\enspace .\]  
The corresponding equation system is
\[\Eqn(\cGr) = \begin{cases}
    X_0 = X_0X_1 \cup X_1X_0 \cup \{b\}\\
    X_1 =\{a\}
  \end{cases}\]
so that
\begin{equation*}
  \left( \begin{array}{c}
     W_{X_0} \\ W_{X_1}
  \end{array} \right)=
  \lfp\biggl(\lambda \left( \begin{array}{c}
    X_0 \\ X_1
  \end{array} \right) .
  \left(\begin{array}{c}
      X_0X_1 \cup X_1X_0 \cup \{b\} \\
      \{a\}
    \end{array}\right)\biggr) = \left( \begin{array}{c}
     a^*ba^* \\ a
  \end{array} \right) \enspace .
\end{equation*}

\noindent
Moreover, we have that \(\vect{b} \in \wp(\Sigma^*)^2\) and \(\Fn_{\cGr }:\wp(\Sigma^*)^2 \ra \wp(\Sigma^*)^2\) are given by
\begin{align*}
\vect{b} & =\tuple{\{b\},\{a\}} & \Fn_{\cGr }(\tuple{X_0,X_1}) &=\tuple{X_0X_1 \cup X_1X_0, \varnothing} \tag*{\eox}
\end{align*}
\end{example}

Thus, it follows from Equation~\eqref{eq:CFGFixpoint} that
\begin{equation}\label{eq:CFGIncLfp}
\lang{\cGr} \subseteq L_2 \:\Lra\:  
\lfp (\lambda\vect{X}\ldotp \vect{b}\cup \Fn_{\cGr}(\vect{X})) \subseteq \vectarg{L_2}{X_0}
\end{equation} 
where \(\vectarg{L_2}{X_0} \ud \tuple{\nullable{i=0}{L_2}{\Sigma^*}}_{i\in[0,n]}\).

As we did for the automata case in Section~\ref{sec:an_algorithmic_framework_for_language_inclusion_based_on_complete_abstractions}, we next apply Theorem~\ref{theorem:inc-check-comp-abs} in order to derive algorithms for solving the language inclusion problem \(\lang{\cGr} \subseteq L_2\) by using backward complete abstractions of \(\wp(Σ^*)\). 

\begin{theorem}\label{theorem:rhoCFG}
Let \(\rho \!\in\! \uco(\wp(\Sigma^*))\) be backward complete for both \(\lambda X. Xa\) and \(\lambda X. aX\), for all \(a \!\in\! \Sigma\) and let \(\cGr=(\cV,\Sigma,P)\) be a CFG in CNF. 
Then \(\rho\) is backward complete for \(\Fn_{\cGr}\) and \(\lambda\vect{X}\ldotp \vect{b}\cup \Fn_{\cGr}(\vect{X})\). 
\end{theorem}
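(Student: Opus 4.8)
The plan is to mimic closely the proof of Theorem~\ref{theorem:backComplete} (the automata case), which showed backward completeness of $\rho$ for $\Pre_\cN$ and for $\lambda\vect{X}\ldotp\vectarg{\epsilon}{F}\cup\Pre_\cN(\vect{X})$ starting from backward completeness for $\lambda X\ldotp aX$. Here the only structural difference is that a CFG in CNF produces, in the right-hand side of $\Fn_\cGr$, concatenations of \emph{two variables} $\beta_j^{(i)} = X_k X_l \in \cV^2$ rather than a single letter prepended to a variable. So the core new ingredient I would isolate first is: if $\rho$ is backward complete for both $\lambda X\ldotp aX$ and $\lambda X\ldotp Xa$ for every $a\in\Sigma$, then $\rho$ is backward complete for $\lambda\tuple{X,Y}\ldotp XY$, i.e. $\rho(XY) = \rho(\rho(X)\rho(Y))$ for all $X,Y\in\wp(\Sigma^*)$. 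This is the analogue of the statement, used implicitly in Lemma~\ref{lemma:properties}, that left-monotonicity gives completeness for left concatenation; here I need completeness for full word concatenation, obtained by combining left and right completeness.

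To prove that binary-concatenation completeness claim I would argue $\rho(XY)\subseteq\rho(\rho(X)\rho(Y))$ by monotonicity and extensivity of $\rho$ together with monotonicity of concatenation, and for the reverse inclusion proceed as in Lemma~\ref{lemma:properties}\ref{lemma:properties:bw}: an element of $\rho(\rho(X)\rho(Y))$ is above (for the underlying quasiorder, or more abstractly: lies in the $\rho$-closure of) some $x'y'$ with $x'\in\rho(X)$, $y'\in\rho(Y)$; first absorb the right factor using backward completeness for $\lambda X\ldotp Xa$ letter-by-letter along $y'$ to reduce to $x'Y$ modulo $\rho$, then absorb the left factor using backward completeness for $\lambda X\ldotp aX$ along $x'$ to reduce to $XY$ modulo $\rho$. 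Concretely, I expect to invoke the already-established facts $\rho(Xa)=\rho(\rho(X)a)$ and $\rho(aX)=\rho(a\rho(X))$ and iterate them, so that $\rho(\rho(X)\rho(Y)) = \rho(XY)$; the induction on word length is exactly the one sketched after Equation~\eqref{def-leftmon}.

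With that lemma in hand, the proof of Theorem~\ref{theorem:rhoCFG} is then a verbatim transcription of the two displayed computations in the proof of Theorem~\ref{theorem:backComplete}. First, for $\Fn_\cGr$: using the definition $\Fn_\cGr(\tuple{X_i}) = \tuple{\bigcup_{X_i\to\beta_j^{(i)}\in P}\beta_j^{(i)}}$ with each $\beta_j^{(i)}\in\cV^2$, apply Equation~\eqref{equation:lubAndGlb} to push $\rho$ inside the finite unions, then apply the binary-concatenation completeness to each term $\beta_j^{(i)}=X_kX_l$, then pull $\rho$ back out via Equation~\eqref{equation:lubAndGlb} again, obtaining $\rho(\Fn_\cGr(\vect{X})) = \rho(\Fn_\cGr(\rho(\vect{X})))$. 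Second, for $\lambda\vect{X}\ldotp\vect{b}\cup\Fn_\cGr(\vect{X})$: since $\vect{b}$ is a constant vector and $\rho(\vect{b})\supseteq\vect{b}$, use Equation~\eqref{equation:lubAndGlb} to write $\rho(\vect{b}\cup\Fn_\cGr(\rho(\vect{X}))) = \rho(\rho(\vect{b})\cup\rho(\Fn_\cGr(\rho(\vect{X}))))$, apply backward completeness of $\rho$ for $\Fn_\cGr$ just established, and collapse again with Equation~\eqref{equation:lubAndGlb} to get $\rho(\vect{b}\cup\Fn_\cGr(\vect{X}))$. The pointwise extension of $\rho$ to $\wp(\Sigma^*)^{|\cV|}$ behaves componentwise, so all of this is legitimate on the product domain exactly as in Theorem~\ref{theorem:backComplete}.

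The only real obstacle is the binary-concatenation completeness lemma $\rho(XY)=\rho(\rho(X)\rho(Y))$; everything downstream is bookkeeping with Equation~\eqref{equation:lubAndGlb}. I would either state and prove that lemma separately just before Theorem~\ref{theorem:rhoCFG}, or inline its one-paragraph proof. A subtlety to check is that backward completeness for all $\lambda X\ldotp aX$ (resp.\ $\lambda X\ldotp Xa$) with $a\in\Sigma$ indeed upgrades to completeness for $\lambda X\ldotp wX$ (resp.\ $\lambda X\ldotp Xw$) for every \emph{word} $w\in\Sigma^*$ — this is the induction on $|w|$ already used in the paper — and then that left- and right-word completeness compose to give full two-sided word/language concatenation completeness; the composition step is where one must be a little careful about the order in which the two factors are absorbed, but since $\rho$ is idempotent and monotone both orders work and agree.
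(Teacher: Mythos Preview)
Your proposal is correct and matches the paper's proof essentially step for step: the paper also first upgrades letter-completeness to word-completeness by induction on $|w|$, then proves $\rho(YZ)=\rho(\rho(Y)\rho(Z))$ by absorbing one factor at a time (via Equation~\eqref{equation:lubAndGlb} and the word-completeness just established), and finally replays the two displayed computations from Theorem~\ref{theorem:backComplete} with $\Fn_\cGr$ in place of $\Pre_\cN$. The only cosmetic difference is that the paper argues the binary-concatenation identity purely at the level of sets and closures (writing $\rho(Y)\rho(Z)=\bigcup_{u\in\rho(Y)}u\rho(Z)$ and pushing $\rho$ through via Equation~\eqref{equation:lubAndGlb}) rather than invoking an underlying quasiorder, which is appropriate since the statement is for an arbitrary $\rho\in\uco(\wp(\Sigma^*))$.
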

\begin{proof}
Let us first show that backward completeness for left and right concatenation can be extended from letter to words.
We give the proof for the concatenation to the left, the case of the concatenation to the right is symmetric.
We prove that \(\rho(w X) = \rho(w \rho(X))\) for every \(w\in\Sigma^*\).
We proceed by induction on the length of \(w\).

The base case is trivial because \(\rho\) is idempotent.
For the inductive case \(|w| > 0\) let \(w = a u\) for some 
\(u\in\Sigma^*\) and \(a\in \Sigma\), so that:
\begin{align*}
	\rho(a u X) &= \quad\text{[By backward completeness for \(\lambda X\ldotp a X\)]}\\
	\rho(a \rho(u X)) &= \quad\text{[By inductive hypothesis]}\\
	\rho(a \rho(u \rho(X))) &= \quad\text{[By backward completeness for \(\lambda X\ldotp a X\)]}\\
	\rho(a u \rho(X)) &\enspace .
\end{align*}

Next we turn to the binary concatenation case, i.e. we prove that \(\rho(Y Z) = \rho(\rho(Y)\rho(Z))\) for all \(Y, Z \in \wp(\Sigma^*)\):
\begin{align*}
	\rho(\rho(Y)\rho(Z)) &=\quad \text{[By definition of concatenation]}\\
	\rho(\textstyle{\bigcup_{u\in\rho(Y)}} u \rho(Z)) &=\quad \text{[By Equation~\eqref{equation:lubAndGlb}]}\\
	\rho(\textstyle{\bigcup_{u\in\rho(Y)}} \rho(u \rho(Z)) ) &=\quad \text{[By backward completeness of \(\lambda X\ldotp w X\)]}\\
	\rho(\textstyle{\bigcup_{u\in\rho(Y)}} \rho(u Z))
	&=\quad\text{[By Equation~\eqref{equation:lubAndGlb}]}\\
	\rho(\textstyle{\bigcup_{u\in\rho(Y)}} u Z) &=\quad\text{[By definition of concatenation]}\\
	\rho(\rho(Y) Z) &=\quad\text{[By definition of concatenation]}\\
	\rho(\textstyle{\bigcup_{v\in Z}} \rho(Y) v)&=\quad
	\text{[By Equation~\eqref{equation:lubAndGlb}]}\\
	\rho(\textstyle{\bigcup_{v\in Z}} \rho(\rho(Y) v))&=\quad
	\text{[By backward completeness of \(\lambda X\ldotp X w\)]}\\
	\rho(\textstyle{\bigcup_{v\in Z}} \rho(Y v))&=\quad
	\text{[By Equation~\eqref{equation:lubAndGlb}]}\\
	\rho(\textstyle{\bigcup_{v\in Z}} Y v)&=\quad\text{[By definition of concatenation]}\\
	\rho(Y Z) & \enspace .
\end{align*}
Then, the proof follows the same lines of the proof of Theorem~\ref{theorem:backComplete}.
Indeed, it follows from the definition of \(\Fn_{\cGr}(\tuple{X_i}_{i\in[0,n]})\)
that: \begin{align*}
	\rho({\textstyle\bigcup_{j=1}^{k_i}}\beta^{(i)}_j)	&=\quad \text{[By definition of \(\beta^{(i)}_j\)]}\\
\rho({\textstyle\bigcup_{j=1}^{k_i}}X^{(i)}_j  Y^{(i)}_j) &=\quad
	\text{[By Equation~\eqref{equation:lubAndGlb}]}\\
\rho({\textstyle\bigcup_{j=1}^{k_i}}\rho(X^{(i)}_j  Y^{(i)}_j)) &=\quad
  \text{[By backward comp. of \(\rho\) for concatenation]}\\
\rho({\textstyle\bigcup_{j=1}^{k_i}}\rho( \rho(X^{(i)}_j)  \rho(Y^{(i)}_j))) &=\quad
	\text{[By Equation~\eqref{equation:lubAndGlb}]}\\
\rho({\textstyle\bigcup_{j=1}^{k_i}}\rho(X^{(i)}_j)  \rho(Y^{(i)}_j)) & \enspace . 
\end{align*}

\noindent
Hence, by a straightforward
componentwise application on vectors in \(\wp(\Sigma^*)^{|\cV|}\), we obtain that \(\rho\) is backward complete for \(\Fn_\cGr\). 
Finally,  \(\rho\) is backward complete for 
\(\lambda\vect{X}\ldotp (\vect{b}\cup \Fn_{\cGr}(\vect{X}))\),
because: 
\begin{myAlignEP}
	\rho(\vect{b}\cup \Fn_{\cGr}(\rho(\vect{X}))) &= 
\quad\text{[By Equation~\eqref{equation:lubAndGlb}]}\\
	\rho(\rho(\vect{b})\cup\rho(\Fn_{\cGr}(\rho (\vect{X})))) &= 
\quad\text{[By backward comp. for \(\Fn_{\cGr}\)]}\\
\rho(\rho(\vect{b})\cup\rho(\Fn_{\cGr}(\vect{X}))) &=
\quad\text{[By Equation~\eqref{equation:lubAndGlb}]}\\
\rho(\vect{b}\cup \Fn_{\cGr}(\vect{X})) &\enspace . 
\end{myAlignEP}
\end{proof}

As a consequence, by backward completeness of $\rho$ for \(\lambda\vect{X}\ldotp (\vect{b}\cup \Fn_{\cGr}(\vect{X}))\), by \eqref{eqn:lfpcompleteness} 
it turns out that:
\[\rho (\lfp(\lambda\vect{X}\ldotp \vect{b}\cup \Fn_{\cGr}(\vect{X}))) = 
\lfp(\lambda\vect{X}\ldotp \rho(\vect{b}\cup \Fn_{\cGr}(\vect{X}))) \enspace.\]

Note that if \(\rho\) is backward complete for both left and right concatenation and \(\rho(L_2)=L_2\) then, as a straightforward consequence of Equivalence~\eqref{eq:CFGIncLfp} and Theorems~\ref{theorem:inc-check-comp-abs} and~\ref{theorem:rhoCFG}, we have that:
\begin{equation}\label{equation:CFGcheck}
	\lang{\cGr}\subseteq L_2 \Lra \lfp(\lambda\vect{X}\ldotp \rho(\vect{b}\cup \Fn_{\cGr}(\vect{X}))) \subseteq \vectarg{L_2}{X_0} \enspace .
\end{equation}

Next, we present two techniques for solving the language inclusion problem \(\lang{\cGr} \subseteq L_2\) by relying on Equivalence~\eqref{equation:CFGcheck}.
As with the two techniques presented in Section~\ref{sec:SolvingAbstractInclusionCheck}, the first of these techniques allows us to define algorithms for deciding the inclusion by working on finite languages while the second one relies on the use of Galois Connections.

\subsection{Solving the Abstract Inclusion Check using Finite Languages}

The following result, which is an adaptation of Corollary~\ref{corol:FiniteWordsAlgorithm} for grammars, shows that the fixpoint iteration for \(\lfp(\rho (\vect{b}\cup \Fn_{\cGr}(\vect{X})))\) can be replicated by iterating on a set of functions \(\mathcal{F}\), and then abstracting the result, provided that all functions in \(\mathcal{F}\) meet a set of requirements.

\begin{lemma}\label{lemma:FiniteWordsAlgorithmCFG}
Let \(\cGr=\tuple{\cV,\Sigma,P}\) be a CFG in CNF, let \(ρ \in \uco(Σ^*)\) be backward complete for \(\lambda X\in \wp(\Sigma^*)\ldotp aX\) and \(\lambda X\in \wp(\Sigma^*)\ldotp Xa\) for all \(a\in \Sigma\) and let \(\mathcal{F}\) be a set of functions such that every \(f \in \mathcal{F}\) is of the form \(f: \wp(Σ^*)^{|\cV|} \to \wp(Σ^*)^{|\cV|}\) and satisfies
\(\rho (\vect{b}\cup \Fn_{\cGr}(\vect{X})) = ρ(f(\vect{X}))\).
Then, for all \(0 \leq n\),
\[(ρ(\vect{b}\cup \Fn_{\cGr}(\vect{X}))^n = ρ(\mathcal{F}^n(\vect{X})) \enspace .\]
\end{lemma}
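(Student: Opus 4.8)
The plan is to mimic the proof of Corollary~\ref{corol:FiniteWordsAlgorithm}, which established the analogous statement for the automata-based operator \(\vectarg{\epsilon}{F} \cup \Pre_{\cN}(\vect{X})\). Indeed, the present lemma is the exact grammar-side counterpart, with \(\vect{b} \cup \Fn_{\cGr}(\vect{X})\) playing the role of \(\vectarg{\epsilon}{F} \cup \Pre_{\cN}(\vect{X})\), with \(\wp(\Sigma^*)^{|\cV|}\) replacing \(\wp(\Sigma^*)^{|Q|}\), and with Theorem~\ref{theorem:rhoCFG} replacing Theorem~\ref{theorem:backComplete} as the source of backward completeness of \(\rho\) for the iterated functional. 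The only genuine inputs we need are: (i) \(\rho\) is backward complete for \(\lambda\vect{X}\ldotp\vect{b}\cup\Fn_{\cGr}(\vect{X})\), which is exactly the conclusion of Theorem~\ref{theorem:rhoCFG} given the hypothesis that \(\rho\) is backward complete for both left and right symbol concatenation; and (ii) the defining property \(\rho(\vect{b}\cup\Fn_{\cGr}(\vect{X})) = \rho(f(\vect{X}))\) assumed of every \(f\in\mathcal{F}\).

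The argument is an induction on \(n\). First I would note the base case: for \(n=0\) both \(\mathcal{F}^0(\vect{X})\) and \((\rho(\vect{b}\cup\Fn_{\cGr}(\vect{X})))^0\) are interpreted as the identity (equivalently, as in Lemma~\ref{lemma:FiniteWordsAlgorithm}, both evaluate to \(\vect{\varnothing}\) when iterating from \(\vect{\varnothing}\)), so the equality \(\rho(\mathcal{F}^0(\vect{X})) = (\rho(\vect{b}\cup\Fn_{\cGr}(\vect{X})))^0\) is immediate. For the inductive step, abbreviate \(\cP(\vect{X}) \ud \vect{b}\cup\Fn_{\cGr}(\vect{X})\), so the hypothesis on \(\mathcal{F}\) reads \(\rho\mathcal{F} = \rho\cP\). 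Assuming \(\rho\mathcal{F}^n = (\rho\cP)^n\), one computes, for an arbitrary choice of function from \(\mathcal{F}\) at each step,
\begin{align*}
\rho\mathcal{F}^{n+1}(\vect{X}) & = \quad\text{[Since \(\mathcal{F}^{n+1} = \mathcal{F}^n\mathcal{F}\)]}\\
\rho\mathcal{F}^n\mathcal{F}(\vect{X}) & = \quad\text{[By inductive hypothesis]}\\
(\rho\cP)^n\mathcal{F}(\vect{X}) & = \quad\text{[By Theorem~\ref{theorem:rhoCFG}, \(\rho\) is bw.\ complete for \(\cP\), so \((\rho\cP)^n = (\rho\cP)^n\rho\)]}\\
(\rho\cP)^n\rho\mathcal{F}(\vect{X}) & = \quad\text{[Since \(\rho\mathcal{F} = \rho\cP\)]}\\
(\rho\cP)^n\rho\cP(\vect{X}) & = \quad\text{[Since \((\rho\cP)^{n+1} = (\rho\cP)^n\rho\cP\)]}\\
(\rho\cP)^{n+1}(\vect{X}) & \enspace .
\end{align*}
This closes the induction and yields \((\rho(\vect{b}\cup\Fn_{\cGr}(\vect{X})))^n = \rho(\mathcal{F}^n(\vect{X}))\) for all \(n\geq 0\).

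The subtlety to get right — the closest thing to an obstacle — is the step \((\rho\cP)^n = (\rho\cP)^n\rho\), i.e.\ that postcomposing the \(n\)-fold iterate of \(\rho\cP\) with \(\rho\) changes nothing. This is where backward completeness of \(\rho\) for \(\cP\) is used: \(\rho\cP = \rho\cP\rho\) (Theorem~\ref{theorem:rhoCFG}) gives \(\rho\cP\rho\cP = \rho\cP\rho\cP\rho\) and, more generally by a trivial sub-induction, \((\rho\cP)^n = (\rho\cP)^n\rho\) for all \(n\geq 1\); since the hypotheses of Theorem~\ref{theorem:rhoCFG} (backward completeness of \(\rho\) for \(\lambda X.Xa\) and \(\lambda X.aX\) for all \(a\in\Sigma\)) are exactly those assumed here, this is available. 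One should also remark, as after Lemma~\ref{lemma:FiniteWordsAlgorithm}, that the computation above is valid even when a different \(f\in\mathcal{F}\) is applied at each step, since every such \(f\) satisfies \(\rho f = \rho\cP\); this is precisely what licenses writing \(\rho\mathcal{F} = \rho\cP\) and treating \(\mathcal{F}^{n+1} = \mathcal{F}^n\mathcal{F}\) as an arbitrary composition. No genuinely new idea beyond the automata case is required; the proof is a transcription with \(\Pre_{\cN}\rightsquigarrow\Fn_{\cGr}\), \(\vectarg{\epsilon}{F}\rightsquigarrow\vect{b}\), \(Q\rightsquigarrow\cV\), and Theorem~\ref{theorem:backComplete}\(\rightsquigarrow\)Theorem~\ref{theorem:rhoCFG}.
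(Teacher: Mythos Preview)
Your proposal is correct and follows essentially the same approach as the paper: induction on \(n\) with the abbreviation \(\cP(\vect{X}) = \vect{b}\cup\Fn_{\cGr}(\vect{X})\), using Theorem~\ref{theorem:rhoCFG} to insert \(\rho\) and the hypothesis \(\rho\mathcal{F}=\rho\cP\) to replace \(\mathcal{F}\) by \(\cP\). Your justification for the step \((\rho\cP)^n\rho\mathcal{F} = (\rho\cP)^n\rho\cP\) (namely ``since \(\rho\mathcal{F}=\rho\cP\)'') is in fact cleaner than the paper's, which cites the inductive hypothesis there when it really means the defining property of \(\mathcal{F}\).
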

\begin{proof}
We proceed by induction on \(n\).
\begin{myItem}
\item \emph{Base case:} Let \(n = 0\).
Then \(\mathcal{F}^0(\vect{X}) = (ρ(\vect{b} \!\cup \Fn_{\cGr}(\vect{X}))^0 = \vect{\varnothing}\).

\item \emph{Inductive step:} Assume that \(ρ(\mathcal{F}^n(\vect{X})) = (ρ(\vect{b} \!\cup \Fn_{\cGr}(\vect{X}))^n\) holds for some value \(n \geq 0\).
To simplify the notation, let \(\cP(\vect{X}) = \vect{b} \!\cup \Fn_{\cGr}(\vect{X})\) so that \(ρ\mathcal{F}^n = (ρ\cP)^n\).
Then
\begin{align*}
ρ\mathcal{F}^{n{+}1}(\vect{X}) & = \quad \text{[Since \(\mathcal{F}^{n{+}1} = \mathcal{F}^n\mathcal{F}\)]} \\
ρ\mathcal{F}^n\mathcal{F}(\vect{X}) & = \quad \text{[By Inductive Hypothesis]} \\
(ρ\cP)^n\mathcal{F}(\vect{X}) & = \quad \text{[By Theorem~\ref{theorem:rhoCFG}, \(ρ\) is bw. complete for \(\cP\)]}\\
(ρ\cP)^nρ\mathcal{F}(\vect{X})  & = \quad \text{[By Inductive Hypothesis]} \\
(ρ\cP)^nρ\cP(\vect{X})  & = \quad \text{[By definition of \((ρ(\cP))^n\)]} \\
(ρ\cP)^{n{+}1}(\vect{X})
\end{align*}
\end{myItem}
We conclude that \((ρ(\vect{b} \!\cup \Fn_{\cGr}(\vect{X}))^n = ρ(\mathcal{F}^n(\vect{X}))\) for all \(0 \leq n\).
\end{proof}

We are now in position to show that the procedure \(\KleeneQO(\abseq,\mathcal{F},b)\) can be used to compute \(\lfp(\lambda\vect{X}\ldotp \rho(\vect{b}\cup \Fn_{\cGr}(\vect{X})))\).

\begin{lemma}\label{lemma:KleeneQOLfp:CFG}
Let \(ρ \in \uco(Σ^*)\) be backward complete for \(\lambda X\in \wp(\Sigma^*)\ldotp aX\) and \(\lambda X\in \wp(\Sigma^*)\ldotp Xa\) for all \(a\in \Sigma\) such that \(\tuple{\{ρ(S) \mid S \in \wp(Σ^*)\}, \subseteq}\) is an ACC CPO and let \(\cGr=\tuple{\cV,\Sigma,P}\) be a CFG in CNF.
Let \(\mathcal{F}\) be a set of functions such that every \(f \in \mathcal{F}\) is of the form \(f: \wp(Σ^*)^{|\cV|} \to \wp(Σ^*)^{|\cV|}\) and satisfies
\(\rho (\vect{b}\cup \Fn_{\cGr}(\vect{X})) = ρ(f(\vect{X}))\).
Then, 
\[\lfp(\lambda \vect{X}\ldotp\rho (\vect{b} \!\cup \Fn_{\cGr}(\vect{X}))) = ρ\left(\KleeneQO(\abseq,\mathcal{F},\vect{\varnothing})\right) \enspace .\]
Moreover, the iterates of \(\,\Kleene(\lambda \vect{X}\ldotp\rho (\vect{b} \!\cup \Fn_{\cGr}(\vect{X})),\vect{\varnothing})\) coincide in lockstep with the abstraction of the iterates of \(\,\KleeneQO(\abseq,\mathcal{F},\vect{\varnothing})\)
\end{lemma}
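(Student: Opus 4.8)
The plan is to mirror exactly the proof of Lemma~\ref{lemma:KleeneQOLfp}, only replacing the automaton-based function $\lambda \vect{X}\ldotp\vectarg{\epsilon}{F} \!\cup \Pre_{\cN}(\vect{X})$ by the grammar-based function $\lambda \vect{X}\ldotp\vect{b}\cup \Fn_{\cGr}(\vect{X})$, and invoking the grammar-side results (Theorem~\ref{theorem:rhoCFG}, Lemma~\ref{lemma:FiniteWordsAlgorithmCFG}) in place of their automaton-side counterparts (Theorem~\ref{theorem:backComplete}, Corollary~\ref{corol:FiniteWordsAlgorithm}). So the proof is essentially a transcription.

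First I would observe that, since $\tuple{\{ρ(S) \mid S \in \wp(Σ^*)\}, \subseteq}$ is an ACC CPO and the product of ACC CPOs is again an ACC CPO, Theorem~\ref{theorem:Kleene} applies to the monotone function $\lambda \vect{X}\ldotp\rho (\vect{b} \!\cup \Fn_{\cGr}(\vect{X}))$ on $\wp(Σ^*)^{|\cV|}$ (monotonicity of $\vect{b}\cup\Fn_{\cGr}$ is noted in the text just before Equation~\eqref{eq:CFGFixpoint}, and $ρ$ is monotone), yielding
\[\lfp(\lambda \vect{X}\ldotp\rho (\vect{b} \!\cup \Fn_{\cGr}(\vect{X}))) = \Kleene(\lambda \vect{X}\ldotp\rho (\vect{b} \!\cup \Fn_{\cGr}(\vect{X})), \vect{\varnothing})\enspace .\]

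Next I would apply Lemma~\ref{lemma:FiniteWordsAlgorithmCFG}, which gives $(ρ(\vect{b}\cup \Fn_{\cGr}(\vect{X}))^n = ρ(\mathcal{F}^n(\vect{X}))$ for all $n\geq 0$; this is precisely the statement that the iterates of the Kleene iteration on $\lambda \vect{X}\ldotp\rho (\vect{b} \!\cup \Fn_{\cGr}(\vect{X}))$ started from $\vect{\varnothing}$ coincide in lockstep with the abstraction $ρ(\cdot)$ of the iterates of $\KleeneQO(\abseq,\mathcal{F},\vect{\varnothing})$ (one checks that the termination test $\abseq(f(x),x)$, i.e. $ρ(f(x))=ρ(x)$, fires at exactly the same step at which the concrete Kleene iteration reaches its fixpoint, using that $ρf(x)=ρf ρ(x)$ by backward completeness and $ρ(\mathcal{F}^n(\vect\varnothing))$ is ascending). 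Hence the two procedures return values whose $ρ$-images agree, so that
\[\Kleene(\lambda \vect{X}\ldotp\rho (\vect{b} \!\cup \Fn_{\cGr}(\vect{X})), \vect{\varnothing}) = ρ\left(\KleeneQO(\abseq,\mathcal{F},\vect{\varnothing})\right)\enspace ,\]
and combining the two displayed equalities gives the claim.

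I do not expect any real obstacle here — the content is already carried by Lemma~\ref{lemma:FiniteWordsAlgorithmCFG} and Theorem~\ref{theorem:Kleene}. The only point requiring a little care is the ``lockstep'' bookkeeping: making precise that $\KleeneQO$'s termination condition, phrased via $\abseq$ (abstract equality $ρ(\cdot)=ρ(\cdot)$), detects the fixpoint of the abstract iteration at the same index as $\Kleene$ on $\lambda \vect X\ldotp ρ(\vect b\cup\Fn_{\cGr}(\vect X))$ detects its fixpoint — this uses backward completeness of $ρ$ for $\vect b\cup\Fn_{\cGr}$ (Theorem~\ref{theorem:rhoCFG}) together with monotonicity, exactly as in the proof of Lemma~\ref{lemma:KleeneQOLfp}, and I would simply refer back to that argument rather than repeat it.
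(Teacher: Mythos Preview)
Your proposal is correct and matches the paper's proof essentially line for line: the paper also invokes Theorem~\ref{theorem:Kleene} (using the ACC CPO hypothesis) to identify the lfp with the $\Kleene$ procedure, then appeals to Lemma~\ref{lemma:FiniteWordsAlgorithmCFG} for the lockstep correspondence, and chains the two equalities. Your additional remarks on the termination bookkeeping via $\abseq$ and backward completeness (Theorem~\ref{theorem:rhoCFG}) are more explicit than what the paper writes, but entirely in the same spirit.
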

\begin{proof}
Since \(\tuple{\{ρ(S) \mid S \in \wp(Σ^*)\}, \subseteq}\) is an ACC CPO, by Theorem~\ref{theorem:Kleene}, we have that
\[\lfp(\lambda \vect{X}\ldotp\rho (\vect{b} \!\cup \Fn_{\cGr}(\vect{X}))) = \Kleene(\lambda \vect{X}\ldotp\rho (\vect{b} \!\cup \Fn_{\cGr}(\vect{X})), \vect{\varnothing})\]
On the other hand, by Lemma~\ref{lemma:FiniteWordsAlgorithmCFG}, the iterates of the above Kleene iteration coincide in lockstep with the abstraction of the iterates of \(\KleeneQO(\abseq,\mathcal{F},\vect{\varnothing})\) and, therefore,
\[\Kleene(\lambda \vect{X}\ldotp\rho (\vect{b} \!\cup \Fn_{\cGr}(\vect{X}))), \vect{\varnothing}) = ρ\left(\KleeneQO(\abseq,\mathcal{F},\vect{\varnothing})\right)\]
As a consequence,
\[\lfp(\lambda \vect{X}\ldotp\rho (\vect{b} \!\cup \Fn_{\cGr}(\vect{X}))) = ρ\left(\KleeneQO(\abseq,\mathcal{F},\vect{\varnothing})\right) \enspace .\]
\end{proof}

We are now in position to introduce the equivalent of Theorem~\ref{theorem:FiniteWordsAlgorithmGeneral} for grammars.

\begin{theorem}\label{theorem:FiniteWordsAlgorithmGeneral:CFG}
Let \(\cGr=\tuple{\cV,\Sigma,P}\) be a CFG in CNF, let \(L_2\) be a regular language, let \(ρ \in \uco(Σ^*)\) and let \(\mathcal{F}\) be a set of functions.
Assume that the following properties hold:
\begin{myEnumI}
\item The abstraction \(ρ\) satisfies \(ρ(L_2) = L_2\) and it is backward complete for both \(\lambda X\in \wp(\Sigma^*)\ldotp aX\) and \(\lambda X\in \wp(\Sigma^*)\ldotp Xa\) for all \(a\in \Sigma\).\label{theorem:FiniteWordsAlgorithmGeneral:CFG:rho}
\item The set \(\tuple{\{ρ(S) \mid S \in \wp(Σ^*)\}, \subseteq}\) is an ACC CPO.\label{theorem:FiniteWordsAlgorithmGeneral:CFG:ACC}
\item Every function \(f_i\) in the set \(\mathcal{F}\) is of the form \(f_i: \wp(Σ^*)^{|\cV|} \to \wp(Σ^*)^{|\cV|}\), it is computable and satisfies \(\rho (\vect{b} \!\cup \Fn_{\cGr}(\vect{X})) = ρ(f_i(\vect{X}))\).\label{theorem:FiniteWordsAlgorithmGeneral:CFG:F}
\item There is an algorithm, say \(\abseq^{\sharp}(\vect{X}, \vect{Y})\), which decides the abstraction equivalence \(ρ(\vect{X}) = ρ(\vect{Y})\), for all \(\vect{X}, \vect{Y} \in \wp(Σ^*)^{|\cV|}\).\label{theorem:FiniteWordsAlgorithmGeneral:CFG:EQ}
\item There is an algorithm, say \(\absincl(\vect{X})\), which decides the inclusion \(ρ(\vect{X}) \subseteq \vectarg{L_2}{X_0}\), for all \(\vect{X} \in \wp(Σ^*)^{|\cV|}\).\label{theorem:FiniteWordsAlgorithmGeneral:CFG:INC}
\end{myEnumI}
\medskip
Then, the following is an algorithm which decides whether \(\lang{\cGr} \subseteq L_2\):

\medskip

\(\tuple{Y_i}_{i \in [0,n]} := \KleeneQO (\abseq^{\sharp},\mathcal{F}, \vect{\varnothing})\)\emph{;}

\emph{\textbf{return}} \(\absincl(\tuple{Y_i}_{i \in [0,n]})\)\emph{;}
\end{theorem}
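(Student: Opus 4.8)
The plan is to mirror exactly the proof of Theorem~\ref{theorem:FiniteWordsAlgorithmGeneral}, which handled the analogous statement for NFAs, replacing the automaton-based fixpoint machinery ($\vectarg{\epsilon}{F}$, $\Pre_{\cN}$, Lemma~\ref{lemma:KleeneQOLfp}) by its grammar-based counterpart ($\vect{b}$, $\Fn_{\cGr}$, Lemma~\ref{lemma:KleeneQOLfp:CFG}). Concretely, I would verify that the five hypotheses of the statement together supply everything needed to conclude correctness and termination of the displayed two-line algorithm.

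First I would invoke Lemma~\ref{lemma:KleeneQOLfp:CFG}: hypotheses~\ref{theorem:FiniteWordsAlgorithmGeneral:CFG:rho} (backward completeness of $\rho$ for left and right letter-concatenation, which by Theorem~\ref{theorem:rhoCFG} lifts to backward completeness for $\lambda\vect{X}\ldotp\vect{b}\cup\Fn_{\cGr}(\vect{X})$), \ref{theorem:FiniteWordsAlgorithmGeneral:CFG:ACC} (the domain $\tuple{\{\rho(S)\mid S\in\wp(\Sigma^*)\},\subseteq}$ is an ACC CPO) and \ref{theorem:FiniteWordsAlgorithmGeneral:CFG:F} (every $f\in\mathcal{F}$ is of the right type and satisfies $\rho(\vect{b}\cup\Fn_{\cGr}(\vect{X})) = \rho(f(\vect{X}))$) are precisely the premises of that lemma, so
\[
\lfp(\lambda \vect{X}\ldotp\rho (\vect{b} \cup \Fn_{\cGr}(\vect{X}))) = \rho\left(\KleeneQO(\abseq,\mathcal{F},\vect{\varnothing})\right)
\]
and moreover $\KleeneQO$ terminates (its iterates track, in lockstep, the finitely-many Kleene iterates of the abstract function over the ACC domain). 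Next, by hypothesis~\ref{theorem:FiniteWordsAlgorithmGeneral:CFG:F} each $f\in\mathcal{F}$ is computable, and by hypothesis~\ref{theorem:FiniteWordsAlgorithmGeneral:CFG:EQ} the test $\abseq$ used inside $\KleeneQO$ can be replaced by the effective procedure $\abseq^{\sharp}$; hence the call $\KleeneQO(\abseq^{\sharp},\mathcal{F},\vect{\varnothing})$ is an actual algorithm producing a vector $\tuple{Y_i}_{i\in[0,n]}$ whose $\rho$-image equals the above least fixpoint.

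Then I would combine this with Equivalence~\eqref{equation:CFGcheck}, which holds by hypothesis~\ref{theorem:FiniteWordsAlgorithmGeneral:CFG:rho} (it only requires $\rho(L_2)=L_2$ plus backward completeness for concatenation, established via Theorem~\ref{theorem:rhoCFG}), to get
\[
\lang{\cGr}\subseteq L_2 \Lra \rho\!\left(\KleeneQO (\abseq^{\sharp}, \mathcal{F}, \vect{\varnothing})\right) \subseteq \vectarg{L_2}{X_0}\enspace .
\]
Finally, hypothesis~\ref{theorem:FiniteWordsAlgorithmGeneral:CFG:INC} gives the decision procedure $\absincl$ for the right-hand inclusion $\rho(\vect{X})\subseteq\vectarg{L_2}{X_0}$, so applying $\absincl$ to the computed $\tuple{Y_i}_{i\in[0,n]}$ decides $\lang{\cGr}\subseteq L_2$. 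I do not expect any genuine obstacle here: the real content was pushed into Theorem~\ref{theorem:rhoCFG} (backward completeness of $\rho$ for $\Fn_{\cGr}$, needing the letter-to-word then word-to-binary-concatenation lifting) and Lemmas~\ref{lemma:FiniteWordsAlgorithmCFG} and~\ref{lemma:KleeneQOLfp:CFG}. The only point requiring a little care is purely bookkeeping — checking that the index set $[0,n]$ over grammar variables plays the role that $Q$ played for automata, that $\vectarg{L_2}{X_0}$ is the correct grammar analogue of $\vectarg{L_2}{I}$ (its non-axiom components are $\Sigma^*$, so $\absincl$ effectively only inspects the $X_0$ component), and that $\mathcal{F} = \lfloor\vect{b}\cup\Fn_{\cGr}(\vect{X})\rfloor$ (the minor-taking family) indeed meets hypothesis~\ref{theorem:FiniteWordsAlgorithmGeneral:CFG:F} whenever a decidable $L_2$-consistent wqo is available — but none of this is deeper than the corresponding argument already carried out in the proof of Theorem~\ref{theorem:quasiorderAlgorithm}.
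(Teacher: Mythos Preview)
Your proposal is correct and follows essentially the same approach as the paper's proof: invoke Lemma~\ref{lemma:KleeneQOLfp:CFG} via hypotheses~\ref{theorem:FiniteWordsAlgorithmGeneral:CFG:rho}--\ref{theorem:FiniteWordsAlgorithmGeneral:CFG:F}, replace $\abseq$ by $\abseq^\sharp$ via hypothesis~\ref{theorem:FiniteWordsAlgorithmGeneral:CFG:EQ}, appeal to Equivalence~\eqref{equation:CFGcheck} via hypothesis~\ref{theorem:FiniteWordsAlgorithmGeneral:CFG:rho}, and conclude with $\absincl$ via hypothesis~\ref{theorem:FiniteWordsAlgorithmGeneral:CFG:INC}. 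Your extra remarks about the minor-taking family $\mathcal{F}=\lfloor\vect{b}\cup\Fn_{\cGr}(\vect{X})\rfloor$ and the role of $\vectarg{L_2}{X_0}$ actually pertain to the downstream Theorem~\ref{theorem:quasiorderAlgorithmGr} rather than to this theorem itself, but they are not wrong.
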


\begin{proof}
It follows from hypotheses~\ref{theorem:FiniteWordsAlgorithmGeneral:CFG:rho},~\ref{theorem:FiniteWordsAlgorithmGeneral:CFG:ACC} and~\ref{theorem:FiniteWordsAlgorithmGeneral:CFG:F}, by Lemma~\ref{lemma:KleeneQOLfp:CFG}, that 
\begin{equation}\label{eq:lfpKleeneQO:CFG}
\lfp(\lambda \vect{X}\ldotp\rho (\vect{b} \!\cup \Fn_{\cGr}(\vect{X}))) = ρ\left(\KleeneQO(\abseq,\mathcal{F},\vect{\varnothing})\right)
\end{equation}
Observe that function \(\abseq\) can be replaced by function \(\abseq^{\sharp}\) due to hypothesis~\ref{theorem:FiniteWordsAlgorithmGeneral:CFG:EQ}.
Moreover, it follows from Equivalence~\eqref{equation:CFGcheck}, which holds by hypothesis~\ref{theorem:FiniteWordsAlgorithmGeneral:CFG:rho}, and Equation~\eqref{eq:lfpKleeneQO:CFG} that
\[\lang{\cGr}\subseteq L_2 \Lra ρ\left(\KleeneQO (\abseq^{\sharp}, \mathcal{F}, \vect{\varnothing})\right) \subseteq \vectarg{L_2}{X_0}\enspace .\]

Finally, hypotheses~\ref{theorem:FiniteWordsAlgorithmGeneral:CFG:EQ} and~\ref{theorem:FiniteWordsAlgorithmGeneral:CFG:INC} guarantee, respectively, the decidability of the inclusion check \(ρ\mathcal{F}(X) \subseteq ρ(X)\) performed at each step of the \(\KleeneQO\) iteration and the decidability of the inclusion of the lfp in \(\vectarg{L_2}{X_0}\).
\end{proof}

\subsection{Solving the Abstract Inclusion Check using Galois Connections}

The following result is the equivalent of Theorem~\ref{theorem:EffectiveAlgorithm} for context-free languages.
It shows that the language inclusion problem \(\lang{\cGr} \subseteq L_2\) can be solved by working on an abstract domain.

\begin{theorem}\label{theorem:EffectiveAlgorithmCFG}
Let \(\cGr=\tuple{\cV,\Sigma,P}\) be a CFG in CNF and let \(L_2\) be a language over \(\Sigma\).
  Let \(\tuple{\wp(\Sigma^*),\subseteq} \galois{\alpha}{\gamma}\tuple{D,\sqsubseteq}\) be a GC where \( \tuple{D,\leq_D}\) is a poset.
Assume that the following properties hold:
\begin{myEnumI}
\item \(L_2\in\gamma(D)\) and for every \( a \in \Sigma\), \(X \in \wp(\Sigma^*)\), \(\gamma\alpha(a X) = \gamma\alpha(a \gamma\alpha(X))\) and \(\gamma\alpha(Xa) = \gamma\alpha\gamma(\alpha(X)a)\).\label{theorem:EffectiveAlgorithmCFG:prop:rho}
\item \((D,\leq_D,\sqcup,\bot_D)\) is an effective domain, meaning that: \((D,\leq_D,\sqcup,\bot_D)\) is an ACC join-semilattice with bottom $\bot_D$, 
every element of \(D\) has a finite representation, the binary relation 
\(\leq_D\) is decidable and the binary lub \(\sqcup\) is computable.\label{theorem:EffectiveAlgorithmCFG:prop:absdecidable}
\item There is an algorithm, say \(\Fn^{\sharp}(\vect{X}^\sharp)\), which computes \(\alpha(\Fn_{\cGr}(\gamma(\vect{X})))\),
for all \(\vect{X}^\sharp\in \alpha(\wp(\Sigma^*))^{|\mathcal{V}|}\).
\label{theorem:EffectiveAlgorithmCFG:prop:abspre}
\item There is an algorithm, say \(\base^\sharp\), which computes \(\alpha(\vect{b})\).\label{theorem:EffectiveAlgorithmCFG:prop:abseps}
\item There is an algorithm, say \(\absincl(\vect{X}^\sharp)\), which decides the abstract inclusion \(\vect{X}^\sharp \leq_D \alpha(\vectarg{L_2}{X_0})\), for all \(\vect{X}^\sharp\in \alpha(\wp(\Sigma^*)^{|\cV|})\).
\label{theorem:EffectiveAlgorithmCFG:prop:absincl}
\end{myEnumI}
\medskip
Then, the following is an algorithm which decides whether \(\lang{\cGr} \subseteq L_2\):

\medskip

\(\tuple{Y_i^\sharp}_{i \in [0,n]} := \Kleene (\lambda \vect{X}^\sharp\ldotp\base^\sharp \sqcup \Fn^{\sharp}(\vect{X}^\sharp), \vect{\bot_D})\)\emph{;}

\emph{\textbf{return}} \(\absincl(\tuple{Y_i^\sharp}_{i \in [0,n]})\)\emph{;}
\end{theorem}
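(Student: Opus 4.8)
The plan is to mirror the proof of Theorem~\ref{theorem:EffectiveAlgorithm} essentially verbatim, replacing the automaton-based fixpoint $\lambda\vect{X}\ldotp\vectarg{\epsilon}{F}\cup\Pre_{\cN}(\vect{X})$ by the grammar-based fixpoint $\lambda\vect{X}\ldotp\vect{b}\cup\Fn_{\cGr}(\vect{X})$ and invoking the CFG analogues of the supporting lemmas (Theorem~\ref{theorem:rhoCFG} in place of Theorem~\ref{theorem:backComplete}, Equivalence~\eqref{equation:CFGcheck} in place of Equivalence~\eqref{equation:checklfpRLintoRL}, and Lemma~\ref{lemma:alpharhoequality} as before). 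First I would set $\rho\ud\gamma\alpha\in\uco(\wp(\Sigma^*))$ and note that hypothesis~\ref{theorem:EffectiveAlgorithmCFG:prop:rho} says exactly that $L_2\in\rho$, that $\rho(aX)=\rho(a\rho(X))$, and that $\rho(Xa)=\rho(\rho(X)a)$ — i.e. $\rho$ is backward complete for both left and right concatenation by a letter, so by Lemma~\ref{lemma:properties}-style reasoning (or directly) $\rho$ satisfies the premises of Theorem~\ref{theorem:rhoCFG}.

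Next I would run the chain of equivalences:
\begin{align*}
\lang{\cGr}\subseteq L_2 &\Lra\quad\text{[By Equivalence~\eqref{equation:CFGcheck}]}\\
\lfp(\lambda\vect{X}\ldotp\rho(\vect{b}\cup\Fn_{\cGr}(\vect{X})))\subseteq\vectarg{L_2}{X_0} &\Lra\quad\text{[Since \(\rho=\gamma\alpha\), by Lemma~\ref{lemma:alpharhoequality}]}\\
\gamma(\lfp(\lambda\vect{X}^\sharp\ldotp\alpha(\vect{b}\cup\Fn_{\cGr}(\gamma(\vect{X}^\sharp)))))\subseteq\vectarg{L_2}{X_0} &\Lra\quad\text{[Since \(\alpha\) is additive]}\\
\gamma(\lfp(\lambda\vect{X}^\sharp\ldotp\alpha(\vect{b})\sqcup\alpha(\Fn_{\cGr}(\gamma(\vect{X}^\sharp)))))\subseteq\vectarg{L_2}{X_0} &\Lra\quad\text{[By GC, since \(L_2\in\gamma(D)\)]}\\
\lfp(\lambda\vect{X}^\sharp\ldotp\alpha(\vect{b})\sqcup\alpha(\Fn_{\cGr}(\gamma(\vect{X}^\sharp))))\leq_D\alpha(\vectarg{L_2}{X_0}) &\enspace .
\end{align*}
For the application of Lemma~\ref{lemma:alpharhoequality} one needs $\rho$ to be backward complete for $\lambda\vect{X}\ldotp\vect{b}\cup\Fn_{\cGr}(\vect{X})$ so that $\rho(\lfp(\vect{b}\cup\Fn_{\cGr}))=\lfp(\rho(\vect{b}\cup\Fn_{\cGr}))=\lfp(\gamma\alpha(\vect{b}\cup\Fn_{\cGr}))$, which is precisely what Theorem~\ref{theorem:rhoCFG} together with \eqref{eqn:lfpcompleteness} gives, and then Lemma~\ref{lemma:alpharhoequality} rewrites this $\gamma\alpha f$-least fixpoint as $\gamma(\lfp(\alpha f\gamma))$.

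Then I would argue termination and effectivity exactly as in Theorem~\ref{theorem:EffectiveAlgorithm}: by hypotheses~\ref{theorem:EffectiveAlgorithmCFG:prop:absdecidable}, \ref{theorem:EffectiveAlgorithmCFG:prop:abspre} and~\ref{theorem:EffectiveAlgorithmCFG:prop:abseps}, the map $\lambda\vect{X}^\sharp\ldotp\base^\sharp\sqcup\Fn^\sharp(\vect{X}^\sharp)$ is a computable monotone function on the effective ACC join-semilattice $(D^{|\cV|},\leq_D,\sqcup,\bot_D)$, whose Kleene iterates from $\vect{\bot_D}$ stay in $\alpha(\wp(\Sigma^*))^{|\cV|}$, are finitely many, and it is decidable when the iteration has stabilized; hence $\Kleene(\lambda\vect{X}^\sharp\ldotp\base^\sharp\sqcup\Fn^\sharp(\vect{X}^\sharp),\vect{\bot_D})$ is an algorithm computing $\lfp(\lambda\vect{X}^\sharp\ldotp\alpha(\vect{b})\sqcup\alpha(\Fn_{\cGr}(\gamma(\vect{X}^\sharp))))$. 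Finally hypothesis~\ref{theorem:EffectiveAlgorithmCFG:prop:absincl} makes the concluding $\leq_D$-check against $\alpha(\vectarg{L_2}{X_0})$ decidable, which yields the stated algorithm.

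I do not expect a genuine obstacle here — the statement is the CFG-shaped twin of Theorem~\ref{theorem:EffectiveAlgorithm}, and all the real work (backward completeness of $\rho$ for $\Fn_{\cGr}$, the fixpoint transfer) has already been isolated in Theorem~\ref{theorem:rhoCFG}, Equivalence~\eqref{equation:CFGcheck} and Lemma~\ref{lemma:alpharhoequality}. The only point demanding a little care is the right-concatenation clause of hypothesis~\ref{theorem:EffectiveAlgorithmCFG:prop:rho}, $\gamma\alpha(Xa)=\gamma\alpha\gamma(\alpha(X)a)$: one must check it is equivalent to backward completeness of $\rho=\gamma\alpha$ for $\lambda X\ldotp Xa$, using the GC identity $\gamma=\gamma\alpha\gamma$ (as done in the proof of Theorem~\ref{theorem:statesQuasiorderAlgorithm}, item (a)), so that Theorem~\ref{theorem:rhoCFG} genuinely applies; this is a one-line unfolding rather than a difficulty.
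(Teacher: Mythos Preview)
Your proposal is correct and follows essentially the same approach as the paper's own proof: set $\rho=\gamma\alpha$, read hypothesis~\ref{theorem:EffectiveAlgorithmCFG:prop:rho} as backward completeness of $\rho$ for left and right letter-concatenation together with $L_2\in\rho$, invoke Equivalence~\eqref{equation:CFGcheck} and Lemma~\ref{lemma:alpharhoequality} to transfer the fixpoint to $D$, and then use hypotheses~\ref{theorem:EffectiveAlgorithmCFG:prop:absdecidable}--\ref{theorem:EffectiveAlgorithmCFG:prop:absincl} for effectivity. Your extra remark about unfolding the right-concatenation clause via $\gamma=\gamma\alpha\gamma$ is a detail the paper simply asserts, so you are if anything slightly more explicit.
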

\begin{proof}
Let \(\rho = \gamma\alpha\in \uco(\wp(\Sigma^*))\).
Then, it follows from property~\ref{theorem:EffectiveAlgorithmCFG:prop:rho} that \(L_2 \in \rho\), \(\rho(aX) = \rho(a\rho(X))\) and \(\rho(Xa) = \rho(\rho(X)a)\).
Therefore
\begin{align*}
	\lang{\cN}\subseteq L_2 &\Lra
	\quad\text{[By~\eqref{equation:CFGcheck}]}\\
	\lfp(\lambda \vect{X}\ldotp\rho (\vect{b} \cup \Fn_{\cGr}(\vect{X}))) \subseteq \vectarg{L_2}{X_0} &\Lra
	\quad\text{[By Lemma~\ref{lemma:alpharhoequality}]}\\
	\gamma(\lfp (\lambda \vect{X}^\sharp\ldotp\alpha(\vect{b}) \sqcup \alpha(\Fn_{\cGr}(\gamma(\vect{X}^\sharp))))) \subseteq \vectarg{L_2}{X_0} &\Lra 
	\quad\text{[By GC and since $L_2\in \rho$]}\\
	\lfp (\lambda \vect{X}^\sharp\ldotp\alpha(\vect{b}) \sqcup \alpha(\Fn_{\cGr}(\gamma(\vect{X}^\sharp)))) \leq_D \alpha(\vectarg{L_2}{X_0}) \enspace .
\end{align*}

\noindent
By hypotheses~\ref{theorem:EffectiveAlgorithmCFG:prop:absdecidable},~\ref{theorem:EffectiveAlgorithmCFG:prop:abspre} and~\ref{theorem:EffectiveAlgorithmCFG:prop:abseps} it turns out that \(\Kleene (\lambda \vect{X}^\sharp\ldotp\base^\sharp \sqcup \Fn^{\sharp}(\vect{X}^\sharp), \vect{\bot_D})\) is an algorithm computing \(\lfp (\lambda \vect{X}^\sharp\ldotp\alpha(\vect{b}) \sqcup \alpha(\Fn_{\cGr}(\gamma(\vect{X}^\sharp))))\). 
In particular, these hypotheses ensure that the Kleene iterates of \(\lfp (\lambda \vect{X}^\sharp\ldotp\alpha(\vect{b}) \sqcup \alpha(\Fn_{\cGr}(\gamma(\vect{X}^\sharp))))\) starting from \(\vect{\bot_D}\) are computable, finitely many and that it is decidable whether the iterates have reached the fixpoint.
The hypothesis~\ref{theorem:EffectiveAlgorithmCFG:prop:absincl} ensures decidability of the required \(\leq_D\)-inclusion check of this least fixpoint in \(\alpha(\wp(\Sigma^*))^{|\cV|}\).
\end{proof}

\subsection{Instantiating the Framework}
Let us instantiate the general algorithmic framework provided by Theorem~\ref{theorem:FiniteWordsAlgorithmGeneral:CFG} to the class of closure operators induced by quasiorder relations on words. 
Recall that a quasiorder \(\leqslant\) on \(\Sigma^*\) is monotone if 
\begin{equation}\label{def-mon}
\forall x_1, x_2 \in \Sigma^*, \forall a,b \in \Sigma, \; x_1 \leqslant x_2 \Ra ax_1 b \leqslant ax_2b \enspace .
\end{equation}

\noindent
It follows that \(x_1 \leqslant x_2 \Ra \forall u,v \in \Sigma^*, \; ux_1 v \leqslant ux_2 v\).
The following result is the equivalent to Lemma~\ref{lemma:properties} for \(L\)-consistent quasiorders and it allows us to characterize \(L\)-consistent quasiorders in terms of the induced closure.

\begin{lemma}\label{lemma:propertiesCFG}
Let \(L\in \wp(\Sigma^*)\) and \(\mathord{\leqslant_L}\) be a quasiorder on \(\Sigma^*\).
Then, \(\mathord{\leqslant_L}\) is an \(L\)-consistent quasiorder on \(\Sigma^*\) if and only if
\begin{myEnumA}
\item \(\rho_{\leqslant_L}(L) = L\), and \label{lemma:propertiesCFG:L}
\item \(\rho_{\leqslant_L}\) is backward complete for \(\lambda X\ldotp a X b\) for all \(a,b\in \Sigma\).\label{lemma:propertiesCFG:bw}
\end{myEnumA}
\end{lemma}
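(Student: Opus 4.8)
The plan is to mirror the proof of Lemma~\ref{lemma:properties} closely, since the present statement is its ``two-sided'' analogue: we replace the one-sided monotonicity condition by the monotonicity condition~\eqref{def-mon} and the one-sided concatenation $\lambda X\ldotp aX$ by the two-sided one $\lambda X\ldotp aXb$. The proof naturally splits into the two claims~\ref{lemma:propertiesCFG:L} and~\ref{lemma:propertiesCFG:bw}, and each direction of the ``if and only if'' is handled within those claims.

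For part~\ref{lemma:propertiesCFG:L}, the argument is identical to the left case of Lemma~\ref{lemma:properties}: the inclusion $L\subseteq\rho_{\leqslant_L}(L)$ holds because $\rho_{\leqslant_L}$ is an upper closure, and the reverse inclusion $\rho_{\leqslant_L}(L)\subseteq L$ is equivalent, by unfolding the definition of $\rho_{\leqslant_L}$, to the statement that $\bigl(\exists u\in L,\ u\leqslant_L v\bigr)\Ra v\in L$ for all $v$, which is precisely condition~\ref{eq:LConsistentPrecise} of Definition~\ref{def:LConsistent}, i.e.\ $\mathord{\leqslant}_L\cap(L\times\neg L)=\varnothing$. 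Note this part of Definition~\ref{def:LConsistent} is unchanged in the two-sided case, so nothing new is needed here.

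For part~\ref{lemma:propertiesCFG:bw} I would prove the two implications separately. First, assuming $\leqslant_L$ is monotone in the sense of~\eqref{def-mon}, I show $\rho_{\leqslant_L}(aXb)=\rho_{\leqslant_L}(a\rho_{\leqslant_L}(X)b)$ for all $a,b\in\Sigma$ and $X\in\wp(\Sigma^*)$. The inclusion ``$\subseteq$'' follows from monotonicity and extensivity of $\rho_{\leqslant_L}$ together with monotonicity of two-sided concatenation. For ``$\supseteq$'' I unfold both closures: an element of $\rho_{\leqslant_L}(a\rho_{\leqslant_L}(X)b)$ is some $z$ with $x\leqslant_L y$ for some $x\in X$ and $ayb\leqslant_L z$; applying~\eqref{def-mon} gives $axb\leqslant_L ayb$, and transitivity of $\leqslant_L$ yields $axb\leqslant_L z$, i.e.\ $z\in\rho_{\leqslant_L}(aXb)$ — exactly the chain-of-equalities computation used in Lemma~\ref{lemma:properties}\ref{lemma:properties:bw}. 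Conversely, assuming backward completeness $\rho_{\leqslant_L}(aXb)=\rho_{\leqslant_L}(a\rho_{\leqslant_L}(X)b)$ for all $a,b\in\Sigma$ and $X\in\wp(\Sigma^*)$, I derive monotonicity: if $x_1\leqslant_L x_2$ then $\{x_2\}\subseteq\rho_{\leqslant_L}(\{x_1\})$, hence $a\{x_2\}b\subseteq a\rho_{\leqslant_L}(\{x_1\})b$; applying the monotone map $\rho_{\leqslant_L}$ and then backward completeness gives $\rho_{\leqslant_L}(a\{x_2\}b)\subseteq\rho_{\leqslant_L}(a\{x_1\}b)$, so $a\{x_2\}b\subseteq\rho_{\leqslant_L}(a\{x_1\}b)$, which means $ax_1b\leqslant_L ax_2b$; by~\eqref{def-mon} this establishes monotonicity of $\leqslant_L$.

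I do not expect a genuine obstacle here — the proof is essentially a routine adaptation of the already-proved Lemma~\ref{lemma:properties}. The one point requiring a little care is the direction ``backward completeness $\Ra$ monotone'': one must apply backward completeness to the singleton $X=\{x_1\}$ and keep track of the fact that $\rho_{\leqslant_L}$ is being applied to a concatenation of a letter, a set, and a letter, rather than to a letter and a set; the quantifier structure of~\eqref{def-mon} (which fixes both $a$ and $b$) makes this straightforward. I would also remark, as the paper does after~\eqref{def-mon}, that once monotonicity holds for single letters on both sides it extends to arbitrary prefixes and suffixes $u,v\in\Sigma^*$ by an easy induction on $|u|+|v|$, which is what makes this closure backward complete for $\lambda X\ldotp uXv$ and ultimately (via Theorem~\ref{theorem:rhoCFG}) for $\Fn_{\cGr}$.
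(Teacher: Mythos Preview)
Your proposal is correct and follows essentially the same approach as the paper's own proof: part~\ref{lemma:propertiesCFG:L} is handled by reference to (or repetition of) Lemma~\ref{lemma:properties}\ref{lemma:properties:L}, and part~\ref{lemma:propertiesCFG:bw} is the two-sided analogue of Lemma~\ref{lemma:properties}\ref{lemma:properties:bw}, with the same unfolding-and-transitivity argument for one direction and the same singleton-instantiation argument for the converse. Your closing remark about extending from letters to words is not needed for this lemma itself (that extension is carried out in the proof of Theorem~\ref{theorem:rhoCFG}), but it does no harm.
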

\begin{proof}
\begin{myEnumA}
\item It follows from Lemma~\ref{lemma:properties}~\ref{lemma:properties:L} since, by Definition~\ref{def:LConsistent}, a quasiorder is \(L\)-consistent if{}f it is left and right \(L\)-consistent. 

\item We first prove that if \(\mathord{\leqslant}_L\) is monotone.
Then for all $X\in \wp(\Sigma^*)$ we have that 
\(\rho_{\leqslant_L}(a X b) = \rho_{\leqslant_L}(a \rho_{\leqslant_L}(X) b)\) for all \(a, b\in\Sigma\). 

Monotonicity of concatenation together with monotonicity and extensivity of the closure $\rho_{\leqslant_L}$ imply that \(\rho_{\leqslant_L}(a X b) \subseteq \rho_{\leqslant_L}(a \rho_{\leqslant_L}(X) b)\) holds.
For the reverse inclusion, we have that:
\begin{align*}
\rho_{\leqslant_L}(a \rho_{\leqslant_L}(X)b) &= \; \text{[By definition of \(\rho_{\leqslant_L}\)]}\\
\rho_{\leqslant_L}\left( \{ a y b\mid \exists x\in X, x \leqslant_L y \} \right)
&= \; \text{[By definition of \(\rho_{\leqslant_L}\)]}\\
\{ z \mid \exists x\in X, y\in \Sigma^*,\, x\leqslant_L y \land a y b\leqslant_L z \}
&\subseteq \; \text{[By monotonicity of \(\leqslant_L\)]}\\
\{ z \mid \exists x\in X, y\in \Sigma^*,\, axb\leqslant_L ayb \land a y b\leqslant_L z \}
&= \; \text{[By transitivity of \(\leqslant_L\)]}\\
\{ z \mid \exists x\in X , a xb\leqslant_L z\}
&= \; \text{[By definition of \(\rho_{\leqslant_L}\)]}\\
\rho_{\leqslant_L}(a X b) &\enspace . 
\end{align*}

Next, we show that if \(\rho_{\leqslant_L}(a X b) = \rho_{\leqslant_L}(a \rho_{\leqslant_L}(X) b)\) for all $X\in \wp(\Sigma^*)$ and \(a,b\in\Sigma\) then \(\leqslant_L\) is monotone.
Let $x_1,x_2\in \Sigma^*$, $a,b\in \Sigma$. 
If $x_1 \leqslant_L x_2$ then
$\{x_2\} \subseteq \rho_{\leqslant_L}(\{x_1 \})$, and in turn 
$a\{x_2\}b \subseteq  a\rho_{\leqslant_L}(\{x_1 \})b$.
Since $\rho_{\leqslant_L}$ is monotone, we have that
$\rho_{\leqslant_L}(a\{x_2\}b) \subseteq  \rho_{\leqslant_L}(a\rho_{\leqslant_L}(\{x_1 \})b)$, so that, by backward completeness, 
$\rho_{\leqslant_L}(a\{x_2\}b) \subseteq  \rho_{\leqslant_L}(a\{x_1 \}b)$.
It follows that, $a\{x_2\}b \subseteq \rho_{\leqslant_L}(a\{x_1\}b)$, namely, 
$ax_1b \leqslant_L ax_2b$. By Equation~\eqref{def-mon}, this shows that $\leqslant_L$ is monotone. 
\end{myEnumA}
\end{proof}

Analogously to the case of regular languages presented in Section~\ref{sec:instantiating_the_framework_language_based_well_quasiorders}, Theorem~\ref{theorem:FiniteWordsAlgorithmGeneral:CFG} induces an algorithm for deciding the language inclusion \(\lang{\cGr} \subseteq L_2\) for any CFG \(\cGr\) and regular language \(L_2\).
Indeed, we can apply Theorem~\ref{theorem:FiniteWordsAlgorithmGeneral:CFG} with \(\minor{\vect{b} \! \cup \Fn_{\cGr}(\vect{X})}\) interpreted as the set of functions \(f_i \ud \minor{\vect{b} \! \cup \Fn_{\cGr}(\vect{X})}_i\) where, again, each \(\minor{\cdot}_i\) is a function mapping each set \(X \in \wp(Σ^*)\) into a minor \(\minor{X}_i\).

As a consequence, we obtain Algorithm~\AlgGrammarW which, given a language \(L_2\) whose membership problem is decidable and a decidable \(L_2\)-consistent well-quasiorder, determines whether \(\lang{\cGr} \subseteq L_2\) holds. 

\begin{figure}[!htp]
\RemoveAlgoNumber
\begin{algorithm}[H]
\SetAlgorithmName{\AlgGrammarW}{}
\SetSideCommentRight
\caption{Word-based algorithm for \(\lang{\cGr} \subseteq L_2\)}\label{alg:CFGIncW}

\KwData{CFG \(\cGr=\tuple{\cV,\Sigma, P}\); decision procedure for \(u\mathrel{\in } L_2\); decidable \(L_2\)-consistent wqo \(\mathord{\leqslant_{L_2}}\).}

\(\tuple{Y_i}_{i \in [0,n]} := \KleeneQO (\mathord{\sqsubseteq_{\leqslant_{L_2}}}\hspace{-4pt} \cap \mathord{(\sqsubseteq_{\leqslant_{L_2}}\hspace{-2pt})^{-1}}, \lambda \vect{X}\ldotp\lfloor\vect{b} \cup \Fn_{\cGr}(\vect{X})\rfloor, \vect{\varnothing})\)\;

\ForAll{\(u \in Y_0\)}{
  \lIf{\(u \notin L_2\)}{\Return \textit{false}}
}
\Return \textit{true}\;
\end{algorithm}
\end{figure}

\begin{theorem}\label{theorem:quasiorderAlgorithmGr}
Let \(\cGr=\tuple{\cV,\Sigma,P}\) be a CFG in CNF and let \(L_2\in \wp(\Sigma^*)\) be a language such that:%
\begin{myEnumIL}
  \item membership $u\in L_2$ is decidable; \label{theorem:quasiorderAlgorithmGr:membership}
  \item there exists a decidable \(L_2\)-consistent well-quasiorder on $\Sigma^*$.\label{theorem:quasiorderAlgorithmGr:decidableL}%
  \end{myEnumIL}%
  Then, Algorithm \AlgGrammarW decides the inclusion \(\lang{\cGr} \subseteq L_2\).
\end{theorem}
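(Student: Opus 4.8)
The plan is to instantiate Theorem~\ref{theorem:FiniteWordsAlgorithmGeneral:CFG} with the closure $\rho_{\leqslant_{L_2}}$ induced by a decidable $L_2$-consistent well-quasiorder $\leqslant_{L_2}$, which exists by hypothesis~\ref{theorem:quasiorderAlgorithmGr:decidableL}, and with the set of functions $\mathcal{F} = \minor{\vect{b} \cup \Fn_{\cGr}(\vect{X})}$ understood as the collection of maps $f_i\colon \wp(\Sigma^*)^{|\cV|} \to \wp(\Sigma^*)^{|\cV|}$ each of which sends a vector to some minor of $\vect{b} \cup \Fn_{\cGr}(\vect{X})$. I would then check the five hypotheses of Theorem~\ref{theorem:FiniteWordsAlgorithmGeneral:CFG} in turn, and verify that the abstract inclusion test $\absincl$ required there is exactly the loop in lines~2--4 of Algorithm~\AlgGrammarW. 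This mirrors the proof of Theorem~\ref{theorem:quasiorderAlgorithm} in the context-free setting.

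First, for hypothesis~\ref{theorem:FiniteWordsAlgorithmGeneral:CFG:rho}: by Definition~\ref{def:LConsistent} an $L_2$-consistent quasiorder is both left and right $L_2$-consistent, so Lemma~\ref{lemma:properties} applied in its left and right versions (equivalently, Lemma~\ref{lemma:propertiesCFG}) yields $\rho_{\leqslant_{L_2}}(L_2) = L_2$ together with backward completeness of $\rho_{\leqslant_{L_2}}$ for $\lambda X \ldotp aX$ and for $\lambda X \ldotp Xa$, for all $a \in \Sigma$. For hypothesis~\ref{theorem:FiniteWordsAlgorithmGeneral:CFG:ACC}, since $\leqslant_{L_2}$ is a well-quasiorder, the family $\tuple{\{\rho_{\leqslant_{L_2}}(S) \mid S \in \wp(\Sigma^*)\}, \subseteq}$ of $\leqslant_{L_2}$-upward closed sets is an ACC CPO, as already used by \citet{ACJT96}. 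For hypothesis~\ref{theorem:FiniteWordsAlgorithmGeneral:CFG:F}, each $f_i \in \mathcal{F}$ is computable because $\leqslant_{L_2}$ is decidable (so minors are effectively computable and $\Fn_{\cGr}$ preserves finiteness), and it satisfies $\rho_{\leqslant_{L_2}}(\vect{b} \cup \Fn_{\cGr}(\vect{X})) = \rho_{\leqslant_{L_2}}(f_i(\vect{X}))$ because $\rho_{\leqslant_{L_2}}(X) = \rho_{\leqslant_{L_2}}(\minor{X})$ for every $X$.

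For hypothesis~\ref{theorem:FiniteWordsAlgorithmGeneral:CFG:EQ}, decidability of $\rho_{\leqslant_{L_2}}(\vect{X}) = \rho_{\leqslant_{L_2}}(\vect{Y})$ follows from the equivalence $\rho_{\leqslant_{L_2}}(\vect{X}) = \rho_{\leqslant_{L_2}}(\vect{Y}) \Lra \vect{X} \sqsubseteq_{\leqslant_{L_2}} \vect{Y} \land \vect{Y} \sqsubseteq_{\leqslant_{L_2}} \vect{X}$ and decidability of $\leqslant_{L_2}$, which is exactly the first argument $\mathord{\sqsubseteq_{\leqslant_{L_2}}} \cap \mathord{(\sqsubseteq_{\leqslant_{L_2}})^{-1}}$ passed to $\KleeneQO$ in line~1 of the algorithm. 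Finally, for hypothesis~\ref{theorem:FiniteWordsAlgorithmGeneral:CFG:INC}: since $\vectarg{L_2}{X_0} = \tuple{\nullable{i=0}{L_2}{\Sigma^*}}_{i \in [0,n]}$, the inclusion $\rho_{\leqslant_{L_2}}(\vect{Y}) \subseteq \vectarg{L_2}{X_0}$ holds trivially in every component $i \neq 0$ and, in the component $i = 0$, reduces by $\rho_{\leqslant_{L_2}}(L_2) = L_2$ and Equation~\eqref{equation:abstractcheck} to $Y_0 \subseteq L_2$; as $Y_0$ is a minor and hence finite, this is decided by finitely many membership queries using hypothesis~\ref{theorem:quasiorderAlgorithmGr:membership}, which is precisely what lines~2--4 of Algorithm~\AlgGrammarW perform. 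With all five hypotheses established, Theorem~\ref{theorem:FiniteWordsAlgorithmGeneral:CFG} yields that Algorithm~\AlgGrammarW decides $\lang{\cGr} \subseteq L_2$.

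I expect no serious obstacle: the argument is essentially a transcription of the proof of Theorem~\ref{theorem:quasiorderAlgorithm} to context-free grammars. The only two points requiring a little care are (a) that $\Fn_{\cGr}$ involves binary rather than one-sided concatenation, so one must appeal to \emph{both} the left and the right $L_2$-consistency of $\leqslant_{L_2}$ — equivalently to Lemma~\ref{lemma:propertiesCFG} and Theorem~\ref{theorem:rhoCFG} — to obtain the backward-completeness part of hypothesis~\ref{theorem:FiniteWordsAlgorithmGeneral:CFG:rho}, and (b) checking that the concrete loop in lines~2--4 of \AlgGrammarW faithfully implements the abstract inclusion test $\absincl$ of Theorem~\ref{theorem:FiniteWordsAlgorithmGeneral:CFG}.
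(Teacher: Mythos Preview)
Your proposal is correct and follows essentially the same approach as the paper: you instantiate Theorem~\ref{theorem:FiniteWordsAlgorithmGeneral:CFG} with $\rho_{\leqslant_{L_2}}$ and $\mathcal{F} = \lfloor\vect{b}\cup\Fn_{\cGr}(\vect{X})\rfloor$, and verify its five hypotheses in the same order and with the same justifications as the paper's own proof. Your additional remarks about invoking Equation~\eqref{equation:abstractcheck} for the reduction to $Y_0\subseteq L_2$, and about needing both left and right $L_2$-consistency for $\Fn_{\cGr}$, are accurate refinements but do not depart from the paper's argument.
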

\begin{proof}
Let $\leqslant_{L_2}$ be a decidable $L_2$-consistent well-quasiorder on $\Sigma^*$.
Then, we check that hypotheses~\ref{theorem:FiniteWordsAlgorithmGeneral:CFG:rho}-\ref{theorem:FiniteWordsAlgorithmGeneral:CFG:INC} of Theorem~\ref{theorem:FiniteWordsAlgorithmGeneral:CFG} are satisfied.

\begin{myEnumA}
\item It follows from hypothesis~\ref{theorem:quasiorderAlgorithmGr:decidableL} and Lemma~\ref{lemma:propertiesCFG} that \(\leqslant_{L_2}\) is backward complete for left and right concatenation and satisfies \(ρ_{\leqslant_{L_2}}(L_2) = L_2\).

\item Since \(\leqslant_{L_2}\) is a well-quasiorder, it follows that \(\tuple{\{ρ_{\leqslant_{L_2}}(S) \mid S \in \wp(Σ^*)\}, \subseteq}\) is an ACC CPO.

\item Let \(\lfloor\vect{b} \cup \Fn_{\cGr}(\vect{X})\rfloor\) be the set of functions \(f_i\) each of which maps each set \(X \in \wp(Σ^*)\) into a minor of \(\vect{b} \cup \Fn_{\cGr}(\vect{X})\).
Since \(\rho_{\leqslant_{L_2}}(X) = ρ_{\leqslant_{L_2}}(\minor{X})\) for all \(X \in \wp(Σ^*)^{|\cV|}\) then all functions \(f_i\) satisfy 
\[\rho (\vect{b} \!\cup \Fn_{\cGr}(\vect{X})) = ρ(f_i(\vect{X}))\enspace . \]

\item The equality \(ρ_{\leqslant_{L_2}}(S_1) = ρ_{\leqslant_{L_2}}(S_2)\) is decidable for every \(S_1, S_2 \in \wp(Σ^*)^{|\cV|}\) since \(ρ_{\leqslant_{L_2}}(S_1) = ρ_{\leqslant_{L_2}}(S_2) \Lra S_1 \sqsubseteq_{\leqslant_{L_2}} S_2 \land S_2 \sqsubseteq_{\leqslant_{L_2}} S_1\) and \(\leqslant_{L_2}\) is decidable.

\item Since \(\vectarg{L_2}{X_0} = \tuple{\nullable{i = 0}{L_2}{\Sigma^*}}_{i \in [0,n]})\), the inclusion trivially holds for all components \(Y_i\) with \(i \neq 0\).
Therefore, it suffices to check whether \(Y_0 \subseteq L_2\) holds.
Since \(Y_0 = \minor{S}\) for some set \(S \in \wp(Σ^*)\), the inclusion \(Y_0 \subseteq L_2\) can be decided by performing finitely many membership tests, which is exactly the check performed by lines 2-4 of Algorithm~\AlgGrammarW.
By hypothesis~\ref{theorem:quasiorderAlgorithmGr:membership}, this check is decidable.%
\end{myEnumA}%
\end{proof}

\subsubsection{Myhill and State-based Quasiorders}
In the following, we will consider two quasiorders on $\Sigma^*$ and we will show that they fulfill the requirements of Theorem~\ref{theorem:quasiorderAlgorithmGr}, so that they yield algorithms for deciding the inclusion \(\lang{\cGr} \subseteq L_2\) for every CFG \(\cGr\) and regular language \(L_2\).

The \emph{context} of a word \(w\in \Sigma^*\) w.r.t a given language \(L \in \wp(\Sigma^*)\) is defined as:  
\begin{align*}
\mindex{\ctx_L}(w) &\ud \{(u, v) \in \Sigma^* \times \Sigma^* \mid  uwv\in L\}\enspace .
\end{align*}
Correspondingly, let us define the following quasiorder relation on \(\Sigma^*\):
\begin{align}\label{def-Myhillqo}
  u\mindex{\qo_L} v &\udiff\; \ctx_L(u) \subseteq \ctx_L(v) \enspace . 
\end{align}
\citet[Section 2]{deLuca1994} call \(\qo_L\) the \emph{Myhill quasiorder relative to \(L\)}. 
The following result is the analogue
of Lemma~\ref{lemma:leftrightnerodegoodqo} for $L$-consistent and Myhill's quasiorders: 
it shows that the Myhill's quasiorder is the weakest (i.e. greatest w.r.t.\ set inclusion between binary relations) \(L\)-consistent quasiorder for which Algorithm \AlgGrammarW can be instantiated to decide the inclusion \(\lang{\cGr}\subseteq L\).

\begin{lemma}\label{lemma:myhillgoodqo}
Let $L\in \wp(\Sigma^*)$.  
\begin{myEnumA}
\item \(\mathord{\qo_L}\) is an \(L\)-consistent quasiorder.
If $L$ is regular then, additionally, \(\mathord{\qo_L}\) is a decidable well-quasiorder. \label{lemma:myhillgoodqo:Consistent}
\item If \(\mathord{\leqslant}\) is an \(L\)-consistent quasiorder on $\Sigma^*$ then \( \rho_{\qo_L} \subseteq \rho_{\leqslant} \).\label{lemma:myhillgoodqo:Incl}
\end{myEnumA}
\end{lemma}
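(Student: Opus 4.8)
The plan is to establish the two items in the style of Lemma~\ref{lemma:leftrightnerodegoodqo}, using only elementary manipulations of contexts.

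For item~\ref{lemma:myhillgoodqo:Consistent}, note first that $\mathord{\qo_L}$ is a quasiorder for free, being the preimage of the inclusion order on $\wp(\Sigma^*\times\Sigma^*)$ under $w\mapsto\ctx_L(w)$, so reflexivity and transitivity of $\subseteq$ transfer. I would then check the two clauses of Definition~\ref{def:LConsistent}. The precision clause $\mathord{\qo_L}\cap(L\times\neg L)=\varnothing$ holds because $u\in L$ gives $(\epsilon,\epsilon)\in\ctx_L(u)$, and $u\qo_L v$ then forces $(\epsilon,\epsilon)\in\ctx_L(v)$, i.e. $v\in L$. Left- and right-monotonicity are checked directly: if $\ctx_L(u)\subseteq\ctx_L(v)$ and $(x,y)\in\ctx_L(au)$ then $(xa)uy\in L$, hence $(xa,y)\in\ctx_L(u)\subseteq\ctx_L(v)$, hence $(x,y)\in\ctx_L(av)$, so $au\qo_L av$; the right case is symmetric. (Alternatively, one can route this through Lemma~\ref{lemma:propertiesCFG} by verifying $\rho_{\qo_L}(L)=L$ and backward completeness of $\rho_{\qo_L}$ for $\lambda X\ldotp aXb$.) For regular $L$, I would argue that the map $w\mapsto\ctx_L(w)$ has \emph{finite} range: the context of $w$ depends only on the class of $w$ in the (finite) syntactic monoid of $L$, since two words have the same context exactly when they are syntactically equivalent. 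A quasiorder obtained as the preimage of $\subseteq$ along a finite-range map is decidable — compute the two finite contexts and compare — and is a well-quasiorder, having neither an infinite descending chain nor an infinite antichain (at most one element per equivalence class). This is exactly the well-quasiorder statement for $\mathord{\qo_L}$ recorded by \citet[Section~2]{deLuca1994}, which may be cited instead.

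For item~\ref{lemma:myhillgoodqo:Incl}, recall from Chapter~\ref{chap:prel} that $\rho_{\qo_L}\subseteq\rho_{\leqslant}$ is equivalent to the binary-relation inclusion $\mathord{\leqslant}\subseteq\mathord{\qo_L}$, so it suffices to prove $u\leqslant v\Ra u\qo_L v$ for every $L$-consistent quasiorder $\mathord{\leqslant}$. Given $u\leqslant v$, take $(x,y)\in\ctx_L(u)$, i.e. $xuy\in L$. Since $\mathord{\leqslant}$ is both left- and right-monotone it is two-sided monotone, so $u\leqslant v$ yields $xuy\leqslant xvy$; by the precision clause of $L$-consistency, $xuy\in L$ together with $xuy\leqslant xvy$ rules out $xvy\notin L$, whence $xvy\in L$ and $(x,y)\in\ctx_L(v)$. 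Thus $\ctx_L(u)\subseteq\ctx_L(v)$, i.e. $u\qo_L v$. This also re-proves that $\mathord{\qo_L}$ is maximum among $L$-consistent quasiorders, as observed by \citet[Section~2, point~4]{deLuca1994}.

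The context computations and the monotonicity bookkeeping are routine. The one point that needs an idea beyond formal manipulation is the well-quasiorder claim for regular $L$, which rests on the finiteness of $\{\ctx_L(w)\mid w\in\Sigma^*\}$ (equivalently, on finiteness of the syntactic monoid of $L$); and it is worth being careful that $\rho_{\qo_L}\subseteq\rho_{\leqslant}$ is to be read as ``$\mathord{\qo_L}$ is the weakest $L$-consistent quasiorder'', i.e. in the direction $\mathord{\leqslant}\subseteq\mathord{\qo_L}$.
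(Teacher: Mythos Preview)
Your proposal is correct and follows essentially the same approach as the paper's proof: both verify the precision clause via $(\epsilon,\epsilon)\in\ctx_L(u)$, and both establish item~\ref{lemma:myhillgoodqo:Incl} by showing $u\leqslant v\Ra u\qo_L v$ for any $L$-consistent $\mathord{\leqslant}$. The only difference is that the paper outsources monotonicity, the wqo property, and the maximality claim to \citet{deLuca1994}, whereas you spell these out directly (via the context calculation and the finite-range/syntactic-monoid argument); your version is thus more self-contained but not a different route.
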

\begin{proof}
The proof follows the same lines of the proof of Lemma~\ref{lemma:leftrightnerodegoodqo}.

\begin{myEnumA}
\item
\citet[Section 3]{deLuca1994} observe that \(\mathord{\qo_L}\) is monotone. 
Moreover, if 
$L$ is regular then \(\mathord{\qo_L}\) is a wqo \citep[Proposition~2.3]{deLuca1994}. 
Let us observe that given \(u \in L\) and \(v \notin L\) we have that \((\epsilon, \epsilon) \in \ctx_L(u)\) while \((\epsilon, \epsilon) \notin \ctx_L(v)\). 
Hence, \(\mathord{\qo_L} \cap (L \times L^c) = \varnothing\) and, therefore, \(\mathord{\qo_L}\) is an \(L\)-consistent quasiorder.

Finally, if $L$ is regular then \(\qo_L\) is clearly decidable.

\item 
As shown by \citet{deLuca1994}, \(\mathord{\qo_L}\) is maximum in the set of all \(L\)-consistent quasiorders, i.e.\ every \(L\)-consistent quasiorder \(\leqslant\) is such that 
  \(x \leqslant y \Ra x \qo_L y \).
As a consequence, \(\rho_{\leqslant}(X) \subseteq \rho_{\qo_L}(X)\) holds for all \(X\in \wp(\Sigma^*)\), namely \(\mathord{\qo} \subseteq \mathord{\qo_L}\).
\end{myEnumA}
\end{proof}

\begin{figure}[t]
    \centering
  \begin{tikzpicture}[->,>=stealth',shorten >=1pt,auto,node distance=5mm and 1cm,thick,initial text=]
  \tikzstyle{every state}=[scale=0.75,fill=customblue!60,draw=blue!60,text=black]
  
  \node[initial,state] (1) {\(1\)};
  \node[state] (2) [right=of 1] {\(2\)};
  \node[state, accepting] (3) [right=of 2] {\(3\)};
  
  \path (1) edge node {\(a\)} (2)
        (2) edge node {\(a\)} (3)
        (2) edge[loop above] node {\(b\)} (2)
        (3) edge[loop above] node {\(a,b\)} (3)
        (1) edge[bend right=50] node {\(b\)} (3)
            ;
  \end{tikzpicture}
  \caption{A finite automaton \(\cN\) with \(\lang{\cN}= (b+ab^*a)(a+b)^*\).}\label{fig:C}
\end{figure}

\begin{example}\label{example:CFGIncL}
Let us illustrate the use of the Myhill quasiorder $\qo_{\lang{\cN}}$ in Algorithm \AlgGrammarW 
for solving the language inclusion \(\lang{\cGr} \subseteq \lang{\cN}\), where \(\cGr\) is the CFG in Example~\ref{example:cfg} and \(\cN\) is the NFA depicted in Figure~\ref{fig:C}.
Recall that the equations for \(\cGr\) are:
\[\Eqn(\cGr) = \begin{cases}
    X_0 = X_0X_1 \cup X_1X_0 \cup \{b\}\\
    X_1 =\{a\}
  \end{cases} \enspace .\]

\noindent
We write \(\{(S,T)\} \cup \{(X,Y)\}\) to denote the set \(\{(u,v) \mid (u,v) \in S\times T \cup X \times Y\}\).
Then, we have the following contexts (among others) for $L=\lang{\cN}=(b+ab^*a)(a+b)^*$:
\begin{align*}
\ctx_L(\epsilon) & = \{(\epsilon, L)\} \cup \{(ab^*, b^*a\Sigma^*)\} \cup \{(L, \Sigma^*)\}& 
\ctx_L(a) & = \{(\epsilon, b^*a\Sigma^*)\} \cup \{ab^*, \Sigma^*\} \cup \{(L,\Sigma^*)\} \\
\ctx_L(b) &= \{(\epsilon, \Sigma^*)\} \cup \{(ab^*, b^*a\Sigma^*)\} \cup \{(L, \Sigma^*)\}& 
\ctx_L(ba) & = \{(\epsilon, \Sigma^*)\} \cup \{(ab^*, \Sigma^*)\} \cup \{(L, \Sigma^*)\}
\end{align*}
Moreover, \(\ctx_L(ab) = \ctx_L(a)\) and \(\ctx_L(ba) = \ctx_L(aa) = \ctx_L(aaa) = \ctx_L(aab) = \ctx_L(aba)\) and, since \(a \qo_{L} ba\) and \(\varepsilon \qo_L b\), it follows that \(\minor{\Sigma^*} = \{\epsilon, a\}\).

Recall that, as shown in Example~\ref{example:cfg}, $\vect{b}=\tuple{\{b\}, \{a\}}$ and $\Fn_{\cGr }(\tuple{X_0,X_1}) =\tuple{X_0X_1 \cup X_1X_0, \varnothing}$.
Next, we show the computation of the Kleene iterates according to Algorithm \AlgGrammarW when using the quasiorder \(\mathord{\qo_L}\).
\begin{align*}
\vect{Y}^{(0)} &= \vect{\varnothing}\\
\vect{Y}^{(1)} &= \lfloor\vect{b}\rfloor = \tuple{\{b\}, \{a\}} \\
\vect{Y}^{(2)} &= \lfloor\vect{b}\rfloor \sqcup \lfloor\Fn_{\cGr}(\vect{Y}^{(1)})\rfloor 
= \tuple{\{b\}, \{a\}} \sqcup \tuple{\minor{\{ba,ab\}},\minor{\varnothing}} = \tuple{\minor{\{ba,ab,b\}}, \minor{\{a\}}} = \tuple{\{ab, b\}, \{a\}}\\
\vect{Y}^{(3)} &= \lfloor\vect{b}\rfloor \sqcup \lfloor\Fn_{\cGr}(\vect{Y}^{(2)})\rfloor = \tuple{\{b\}, \{a\}} \sqcup  \tuple{\minor{\{aba, ba, aab, ab\}},\minor{\varnothing}}\\
&=\tuple{\minor{\{aba, ba, aab, ab, b\}}, \minor{\{a\}}} = \tuple{\{ab, b\}, \{a\}} 
\end{align*}
The least fixpoint is therefore \(\vect{Y} = \tuple{\{ab, b\}, \{a\}}\).
Since $ab\in \vect{Y}_0$ but \(ab \notin \lang{\cN}\) then Algorithm \AlgGrammarW concludes that the inclusion \(\lang{\cGr} \subseteq \lang{\cN}\) does not hold. \eox
\end{example}

Similarly to Section~\ref{sec:instantiating_the_framework_language_based_well_quasiorders}, we also consider a state-based quasiorder that can be used with Algorithm \AlgGrammarW. 
First, given an NFA \(\cN = \tuple{Q, \delta, I, F, \Sigma}\) we define the state-based equivalent of the context of a word \(w \in \Sigma^*\) as follows:
\[\ctx_{\cN}(w) \ud \{(q,q') \in Q \times Q \mid q \stackrel{w}{\leadsto} q' \} \enspace .\]
Then, the quasiorder \(\qo_{\cN}\) on $\Sigma^*$ is defined as follows: for all $u,v\in \Sigma^*$,
\begin{equation}\label{eqn:state-qo:CFG}
u \qo_{\cN} v  \udiff \ctx_{\cN}(u) \subseteq \ctx_{\cN}(v)
\end{equation}
The following result is the analogue of Lemma~\ref{lemma:LAconsistent} and 
shows that \(\mathord{\qo_{\cN}}\) is a \(\lang{\cN}\)-consistent well-quasiorder, hence it can be used with Algorithm~\AlgGrammarW to decide the inclusion \(\lang{\cGr} \subseteq \lang{\cN}\).

\begin{lemma}\label{lemma:LAconsistent:CFG}
The relation \(\mathord{\qo_{\cN}}\) is a decidable \(\lang{\cN}\)-consistent wqo.
\end{lemma}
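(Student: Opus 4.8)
The plan is to mirror the proof of Lemma~\ref{lemma:LAconsistent} almost verbatim, since \(\mathord{\qo_{\cN}}\) is the context-based analogue of the state-based quasiorders \(\mathord{\qlN}\) and \(\mathord{\qrN}\), just as the Myhill quasiorder \(\mathord{\qo_L}\) is the two-sided analogue of the Nerode quasiorders. Concretely I would first observe that for every \(w\in\Sigma^*\) the set \(\ctx_{\cN}(w)\) is a \emph{computable} subset of the finite set \(Q\times Q\): one simply computes, by the inductive definition of \(\goes{w}\), the relation \(\{(q,q')\mid q\goes{w}q'\}\). Since \(\qo_{\cN}\) is defined as inclusion between such finite sets, decidability of \(\qo_{\cN}\) is immediate, and the fact that \(\tuple{\wp(Q\times Q),\subseteq}\) is a finite (hence well-founded, antichain-finite) poset shows that \(\qo_{\cN}\) is a wqo.

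Next I would check the two conditions of Definition~\ref{def:LConsistent} for \(L=\lang{\cN}\). For condition~\ref{eq:LConsistentPrecise}, take \(u\in\lang{\cN}\) and \(v\notin\lang{\cN}\); then there exist \(q_i\in I\) and \(q_f\in F\) with \(q_i\goes{u}q_f\), i.e. \((q_i,q_f)\in\ctx_{\cN}(u)\), whereas no such pair with \(q_i\in I\), \(q_f\in F\) lies in \(\ctx_{\cN}(v)\); hence \(\ctx_{\cN}(u)\not\subseteq\ctx_{\cN}(v)\), so \(u\not\qo_{\cN}v\), giving \(\mathord{\qo_{\cN}}\cap(L\times L^c)=\varnothing\). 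For condition~\ref{eq:LConsistentmonotone}, i.e. monotonicity in the sense of Equation~\eqref{def-mon}, the key identity is a composition law for contexts: for all \(u\in\Sigma^*\) and \(a,b\in\Sigma\),
\[
\ctx_{\cN}(aub)=\{(q,q')\mid \exists p,p',\ q\goes{a}p,\ (p,p')\in\ctx_{\cN}(u),\ p'\goes{b}q'\}\enspace,
\]
which follows directly from the inductive definition of \(\goes{aub}\). Given \(x_1\qo_{\cN}x_2\), i.e. \(\ctx_{\cN}(x_1)\subseteq\ctx_{\cN}(x_2)\), applying the above identity with \(u=x_1\) and \(u=x_2\) and using monotonicity of the right-hand side in the \(\ctx_{\cN}(u)\) argument yields \(\ctx_{\cN}(ax_1b)\subseteq\ctx_{\cN}(ax_2b)\), i.e. \(ax_1b\qo_{\cN}ax_2b\). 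By Equation~\eqref{def-mon} this establishes monotonicity. Combining the three points, \(\mathord{\qo_{\cN}}\) is a decidable \(\lang{\cN}\)-consistent wqo.

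I expect the only mildly delicate step to be stating and justifying the context-composition identity cleanly — everything else is a routine transcription of the pattern used for Lemmas~\ref{lemma:LAconsistent} and~\ref{lemma:myhillgoodqo}. In particular I would phrase the monotonicity argument as the two-sided merge of the left- and right-monotone arguments already carried out in the proof of Lemma~\ref{lemma:LAconsistent} (using \(\ctx_{\cN}(aub)=\ctx_{\cN}(a\cdot ub)\) together with the left and right halves of the composition law), so that no genuinely new computation is needed. The symmetry with the Myhill case also makes it natural to remark, as Lemma~\ref{lemma:myhillgoodqo}~\ref{lemma:myhillgoodqo:Incl} does for \(\qo_L\), that \(u\qo_{\cN}v\Ra u\qo_{\lang{\cN}}v\), although that containment is not required for the statement itself and I would only include it if the surrounding text uses it.
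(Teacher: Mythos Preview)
Your proposal is correct and follows essentially the same approach as the paper: finiteness and computability of \(\ctx_{\cN}(w)\subseteq Q\times Q\) give decidability and the wqo property; the presence of an \((I\times F)\)-pair in \(\ctx_{\cN}(u)\) but not in \(\ctx_{\cN}(v)\) gives condition~\ref{eq:LConsistentPrecise}; and monotonicity follows from the fact that \(\ctx_{\cN}(aub)\) depends on \(\ctx_{\cN}(u)\) monotonically via relation composition. If anything, your version is more explicit than the paper's on the monotonicity step: the paper simply writes ``since \(\ctx_{\cN}\) is monotone'' when passing from \(\ctx_{\cN}(x_1)\subseteq\ctx_{\cN}(x_2)\) to \(\ctx_{\cN}(ax_1b)\subseteq\ctx_{\cN}(ax_2b)\), whereas you spell out the composition identity \(\ctx_{\cN}(aub)=\ctx_{\cN}(a)\comp\ctx_{\cN}(u)\comp\ctx_{\cN}(b)\) (which the paper establishes separately in the proof of Lemma~\ref{lemma:rhoisgammaalphaCFG}) that actually justifies this inference.
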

\begin{proof}
For every \(u \in \Sigma^*\), \(\ctx_{\cN}(u)\) is a finite and computable set, so that \(\mathord{\qo_{\cN}}\) is a decidable wqo. 
Next, we show that \(\mathord{\qo_{\cN}}\) is \(\lang{\cN}\)-consistent according to Definition~\ref{def:LConsistent}~\ref{eq:LConsistentPrecise}-\ref{eq:LConsistentmonotone}. 

\begin{myEnumA}
\item By picking \(u\!\in\! \lang{\cN}\) and \(v\!\notin\! \lang{\cN}\) we have that \(\ctx_{\cN}(u)\) contains a pair \((q_i, q_f)\) with \(q_i \!\in\! I\) and \(q_f \in F\) while \(\ctx_{\cN}(v)\) does not, hence \(u \not\qo_{\cN} v\).
Therefore, \(\qo_{\cN} \cap (\lang{\cN} \times \lang{\cN}^c) = \varnothing\).

\item Let us check that $\qo_{\cN}$ is monotone.
To that end, observe that $\ctx_{\cN}: \tuple{\Sigma^*,\qo_{\cN}} \ra \tuple{\wp(Q^2),\subseteq}$ is a monotone function.
Therefore, for all $x_1,x_2\in \Sigma^*$ and $a,b\in \Sigma$ we have that
\begin{align*}
x_1 \qo_{\cN} x_2 & \Ra  \quad\text{[By def.\ of \(\qo_{\cN}\)]} \\
\ctx_{\cN}(x_1) \subseteq \ctx_{\cN}(x_2) & \Ra  \quad\text{[Since $\ctx_\cN$ is monotone]} \\
\ctx_{\cN}(ax_1 b) \subseteq \ctx_{\cN}(a x_2 b) & \Ra \quad \text{[By def.\ of \(\qo_{\cN}\)]} \\
ax_1b \qo_{\cN} ax_2b & \enspace . \tag*{\qedhere}
\end{align*}
\end{myEnumA}
\end{proof} 

For the Myhill wqo
$\qo_{\lang{\cN}}$, it turns out that for all $u,v\in \Sigma^*$,
\begin{align*}
u \qo_{\lang{\cN}} v  \Lra \begin{array}{c}
\ctx_{\lang{\cN}}(u)\\
\subseteq \\
\ctx_{\lang{\cN}}(v) \end{array} \Lra \begin{array}{c}
\{(x, y) \mid x \in W_{I,q} \land y \in W_{q', F} \land q \stackrel{u}{\leadsto} q'\} \\
\subseteq\\
\{(x, y) \mid x \in W_{I,q} \land y \in W_{q', F} \land q \stackrel{v}{\leadsto} q'\} 
\end{array}
\end{align*}
Therefore, \(u \qo_{\cN} v \Ra u \qo_{\lang{\cN}} v\), hence \(\mathord{\qo_{\cN}} \subseteq \mathord{\qo_{\lang{\cN}}}\) holds.

\begin{example}\label{example:CFGIncA}
Let us illustrate the use of the state-based quasiorder $\qo_{\cN}$ to solve the language inclusion \(\lang{\cGr} \subseteq \lang{\cN}\) of Example~\ref{example:CFGIncL}.
Here, we have the following contexts (among others):
\begin{align*}
\ctx_{\cN}(\epsilon) & = \{(q_1, q_1), (q_2, q_2), (q_3,q_3)\} & \ctx_{\cN}(a) & = \{(q_1, q_2), (q_2, q_3), (q_3,q_3)\} \\
\ctx_{\cN}(b) &= \{(q_1,q_3), (q_2,q_2), (q_3, q_3)\} & \ctx_{\cN}(aa) & = \{(q_1,q_3), (q_2,q_3), (q_3,q_3)\}
\end{align*}
Moreover, \(\ctx_{\cN}(ab) = \ctx_{\cN}(a)\) and \(\ctx_{\cN}(ba) = \ctx_{\cN}(aa) = \ctx_{\cN}(baa) = \ctx_{\cN}(aab) = \ctx_{\cN}(aba)\).
Recall from Example~\ref{example:CFGIncL} that for the Myhill wqo we have that \(a \qo_{\lang{\cN}} ba\), while for the state-based qo \(a \not\qo_{\cN} ba\).
Next, we show the Kleene iterates computed by Algorithm \AlgGrammarW when using the wqo \(\mathord{\qo_{\cN}}\).
\begin{align*}
\vect{Y}^{(0)} &= \vect{\varnothing}\\
\vect{Y}^{(1)} &= \minor{\vect{b}} = \tuple{\{b\}, \{a\}} \\
\vect{Y}^{(2)} &= \lfloor\vect{b}\rfloor \sqcup \lfloor\Fn_{\cGr}(\vect{Y}^{(1)})\rfloor = \tuple{\minor{\{ba,ab,b\}}, \minor{\{a\}}} = \tuple{\{ba, ab, b\}, \{a\}}\\
\vect{Y}^{(3)} &= \lfloor\vect{b}\rfloor \sqcup \lfloor\Fn_{\cGr}(\vect{Y}^{(2)})\rfloor = \tuple{\minor{\{aba, aab, ab, baa, aba, ba, b\}}, \minor{\{a\}}} = \tuple{\{ba, ab, b\}, \{a\}} 
\end{align*}
The least fixpoint is therefore \(\vect{Y} = \tuple{\{ba, ab, b\}, \{a\}}\).
Since  $ab\in \vect{Y}_0$ but \(ab \notin \lang{\cN}\), Algorithm \AlgGrammarW concludes that the inclusion \(\lang{\cGr} \subseteq \lang{\cN}\) does not hold. \eox
\end{example}

\subsection{A Systematic Approach to the Antichain Algorithm}\label{sec:ACGrammar}
Consider a CFG \(\cGr=\tuple{\cV,\Sigma,P}\) and an NFA \(\cN=\tuple{Q,Σ,δ,I,F}\) and let \(\qo_{\cN}\) be the \(\lang{\cN}\)-consistent wqo defined in~\eqref{eqn:state-qo:CFG}.
Theorem~\ref{lemma:FiniteWordsAlgorithmCFG} shows that the algorithm \AlgGrammarW solves the inclusion problem \(\lang{\cGr}\subseteq \lang{\cN}\) by working with finite languages.

Similarly to the case of the quasiorder \(\qo_{\cN}^{\ell}\) (Section~\ref{sec:novel_perspective_AC}) it suffices to keep the sets \(\ctx_{\cN}(u)\) of pairs of states of $Q$ for each word \(u\) instead of the words themselves.
Therefore, we can systematically derive a ``state-based'' algorithm analogous to \AlgGrammarW but working on the antichain poset \(\tuple{\AC_{\tuple{\wp(Q\times Q),\subseteq}},\sqsubseteq}\) viewed as an abstraction of \(\tuple{\wp(Σ^*), \subseteq}\).
Let us define the abstraction and concretization maps 
\(\alpha\colon \wp(\Sigma^*) \ra \AC_{\tuple{\wp(Q\times Q),\subseteq}}\) and
\(\gamma\colon \AC_{\tuple{\wp(Q\times Q),\subseteq}}\ra\wp(\Sigma^*)\) and the abstract function
\({\Fn}_{\cGr}^{\cN}(\tuple{X_i}_{i \in [0,n]}):{\wp(Q\times Q)^{|\cV|}}\ra \wp(Q \times Q)^{|\cV|}\) as follows:
\begin{align*}
	\alpha(X)&\ud \minor{\{\ctx_{\cN}(u) \mid u \in X\}} \\
	\gamma(Y) & \ud \{u \in \Sigma^* \mid \exists y \in Y, y \subseteq \ctx_{\cN}(u)\} \\
  \Fn_{\cGr }^{\cN}(\tuple{X_i}_{i\in[0,n]}) & \ud \langle \minor{\{X_j {\comp} X_k \mid X_i {\to} X_j X_k \in P\}} \rangle_{i \in [0,n]}
\end{align*}
where \(X \comp Y \ud \{(q,q') \mid (q,q'') \in X \land (q'',q') \in Y\}\) is standard composition of  
relations  $X,Y\subseteq Q\times Q$.

\begin{lemma}\label{lemma:rhoisgammaalphaCFG}
The following hold:
\begin{myEnumA}
\item \(\tuple{\wp(\Sigma^*),\subseteq}\galois{\alpha}{\gamma}\tuple{\AC_{\tuple{\wp(Q\times Q),\subseteq}},\sqsubseteq}\) is a GC.\label{lemma:rhoisgammaalpha:GCCFG}
\item \(\gamma \comp \alpha = \rho_{\qo_{\cN}}\)\label{lemma:rhoisgammaalpha:rhoCFG}
\item \(\Fn_{\cGr}^{\cN}(\vect{X}) = \alpha \comp \Fn_{\cGr} \comp \gamma(\vect{X})\) for all \(\vect{X}\in \alpha(\wp(\Sigma^*)^{|\cV|})\)\label{lemma:rhoisgammaalpha:preCFG}
\end{myEnumA}
\end{lemma}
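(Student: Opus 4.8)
\textbf{Proof plan for Lemma~\ref{lemma:rhoisgammaalphaCFG}.}

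The plan is to mirror, almost verbatim, the structure of the proof of Lemma~\ref{lemma:rhoisgammaalpha}, replacing the set of predecessors $\pre_u^{\cN_2}(F_2)$ by the context relation $\ctx_{\cN}(u)$ and the composition $\pre_a^{\cN_2}\circ\pre_u^{\cN_2}$ by the relational composition used in $\ctx$. For part~\ref{lemma:rhoisgammaalpha:GCCFG}, I would first check that $\alpha$ and $\gamma$ are well-defined: $\alpha(X)$ is a minor for the wqo $\subseteq$ on $\wp(Q\times Q)$ (which is a wqo since $Q\times Q$ is finite), hence an antichain; $\gamma(Y)$ is plainly a subset of $\Sigma^*$. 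Then the adjunction $\alpha(X)\sqsubseteq Y \Lra X\subseteq\gamma(Y)$ follows by unfolding: $\alpha(X)\sqsubseteq Y$ means every $z\in\alpha(X)$ dominates some $y\in Y$; since $\alpha(X) = \minor{\{\ctx_{\cN}(u)\mid u\in X\}}$, this is equivalent to asking that for every $u\in X$ there is $y\in Y$ with $y\subseteq\ctx_{\cN}(u)$, which is exactly $u\in\gamma(Y)$ for every $u\in X$, i.e. $X\subseteq\gamma(Y)$. The step where $\alpha(X)$ can be swapped for the full set $\{\ctx_{\cN}(u)\mid u\in X\}$ uses the general fact $A\sqsubseteq B \Lra \minor{A}\sqsubseteq B$ for antichains, already recalled in the paper.

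For part~\ref{lemma:rhoisgammaalpha:rhoCFG}, I would compute $\gamma(\alpha(X))$ directly: by definition it is $\{v\in\Sigma^* \mid \exists y\in\minor{\{\ctx_{\cN}(u)\mid u\in X\}},\; y\subseteq\ctx_{\cN}(v)\}$; by the definition of minor this equals $\{v\in\Sigma^* \mid \exists u\in X,\;\ctx_{\cN}(u)\subseteq\ctx_{\cN}(v)\}$, which by the definition of $\qo_{\cN}$ in~\eqref{eqn:state-qo:CFG} is $\{v \mid \exists u\in X,\; u\qo_{\cN} v\}$, i.e. $\rho_{\qo_{\cN}}(X)$ by~\eqref{eq:qo-up-closure}.

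For part~\ref{lemma:rhoisgammaalpha:preCFG}, the argument is the chain of rewritings from the proof of Lemma~\ref{lemma:rhoisgammaalpha}~\ref{lemma:rhoisgammaalpha:pre}, now expanding $\alpha(\Fn_{\cGr}(\gamma(\vect{X})))$. Componentwise, $\Fn_{\cGr}(\gamma(\vect{X}))_i = \bigcup_{X_i\to X_jX_k\in P}\gamma(\vect{X}_j)\gamma(\vect{X}_k)$; applying $\alpha$ and pushing the minor through a union gives $\minor{\{\ctx_{\cN}(uv)\mid X_i\to X_jX_k\in P,\; u\in\gamma(\vect{X}_j),\; v\in\gamma(\vect{X}_k)\}}$. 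The key combinatorial identity I need here is $\ctx_{\cN}(uv) = \ctx_{\cN}(u)\comp\ctx_{\cN}(v)$, which follows from the definition of $\leadsto$ (a run on $uv$ factors uniquely through a state reached after $u$). Using that, together with the idempotence-type facts $\minor{X\comp Y} = \minor{\minor{X}\comp\minor{Y}}$ when distributing the minor over relational composition and over the union defining $\gamma$, and finally $\alpha(\gamma(\vect{X}_j)) = \vect{X}_j$ since $\vect{X}\in\alpha(\wp(\Sigma^*)^{|\cV|})$, the expression collapses to $\minor{\{\vect{X}_j\comp\vect{X}_k\mid X_i\to X_jX_k\in P\}} = \Fn_{\cGr}^{\cN}(\vect{X})_i$.

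The main obstacle I anticipate is part~\ref{lemma:rhoisgammaalpha:preCFG}: unlike the automata case, where concatenation is unary ($aX$) and $\pre_a$ is a single function, here the grammar productions concatenate two variable-languages, so I must be careful that the minor operation interacts correctly with binary relational composition. Concretely I need the auxiliary fact that $\minor{\{R_1\comp R_2 \mid R_1\in A, R_2\in B\}} = \minor{\{R_1\comp R_2 \mid R_1\in\minor{A}, R_2\in\minor{B}\}}$, which holds because $R_1'\subseteq R_1$ and $R_2'\subseteq R_2$ imply $R_1'\comp R_2'\subseteq R_1\comp R_2$ (composition is monotone in both arguments), so the minimal elements on both sides coincide. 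Once this monotonicity-of-composition observation is in place the remaining rewriting is routine bookkeeping identical in spirit to Lemma~\ref{lemma:rhoisgammaalpha}.
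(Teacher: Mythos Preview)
Your proposal is correct and follows essentially the same approach as the paper's own proof: parts~\ref{lemma:rhoisgammaalpha:GCCFG} and~\ref{lemma:rhoisgammaalpha:rhoCFG} are handled by the identical unfolding arguments, and for part~\ref{lemma:rhoisgammaalpha:preCFG} the paper isolates exactly the two auxiliary facts you identify --- the decomposition $\ctx_{\cN}(uv)=\ctx_{\cN}(u)\comp\ctx_{\cN}(v)$ and the minor-composition compatibility $\minor{X\comp Y}=\minor{\minor{X}\comp\minor{Y}}$ --- before carrying out the chain of rewritings. Your justification of the second fact via monotonicity of relational composition in both arguments is precisely the mechanism the paper uses (it checks both inclusions explicitly, but the content is the same).
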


\begin{proof}
\begin{myEnumA}
\item Let us first observe that $\alpha$ and $\gamma$ are well-defined. 
First, $\alpha(X)$ is an antichain of  $\tuple{\wp(Q\times Q),\subseteq}$ since it is a minor for the well-quasiorder \(\subseteq\) and, therefore, it is finite.
On the other hand, $\gamma(Y)$ is clearly an element of $\tuple{\wp(Σ^*), \subseteq}$ by definition. 

Then, for all $X\in \wp(Σ^*)$ and 
$Y\in \AC_{\tuple{\wp(Q \times Q),\subseteq}}$, 
it turns out that:
\begin{align*}
\alpha(X) \sqsubseteq Y & \Lra \quad\text{[By definition of \(\sqsubseteq\)]} \\
\forall z \in \alpha(X), \exists y \in Y, \; y \subseteq z &\Lra \quad\text{[By definition of  \(\alpha\) and \(\minor{\cdot}\)]} \\
\forall v \in X, \exists y \in Y, \; y \subseteq \ctx_{\cN}(v) &\Lra \quad\text{[By definition of \(\gamma\)]} \\
\forall v \in X, x \in γ(Y) & \Lra \quad\text{[By definition of  \(\subseteq\)]} \\
X \subseteq \gamma(Y) &\enspace . 
\end{align*}

\item For all \(X \in \wp(Σ^*)\) we have that:
\begin{adjustwidth}{-0.5cm}{}
\begin{myAlign}{0pt}{}
\gamma(\alpha(X)) &= \quad\text{[By definition of $\alpha,\gamma$]}\\
\{v \in \Sigma^* \mid \exists u\in \Sigma^*, \ctx_{\cN}(u) \in \lfloor \{ \ctx_{\cN}(w) \mid w\in X\} \rfloor \land \ctx_{\cN}(u) \subseteq \ctx_{\cN}(v)\} \span  \\
&= \quad \text{[By definition of minor]} \\
\{v \in \Sigma^* \mid \exists u\in X,\, \ctx_{\cN}(u) \subseteq \ctx_{\cN}(v)\} &= \quad \text{[By definition of \(\mathord{\qo_{\cN}}\)]}\\
\{v \in \Sigma^* \mid \exists u \in X ,\,  u \qo_{\cN} v\}&= \quad\text{[By definition of\ \(\rho_{\qo_{\cN}}\)]}\\
\rho_{\qo_{\cN}}(X) &\enspace .
\end{myAlign}
\end{adjustwidth}

\item First, we show that \(\ctx_{\cN}(uv) = \ctx_{\cN}(u)\comp \ctx_{\cN}(v)\) for every pair of words \(u,v \in Σ^*\).
\begin{align*}
\ctx_{\cN}(uv)  & = \;\text{[By def. of \(\ctx_{\cN}\)]} \\
\{(q,q') \in Q^2 \mid q\goes{uv}q'\}  & = \\
\span\specialcell{\hfill\text{[Since \(q \goes{uv}q' \Lra \exists q'' \in Q, \; q \goes{u}q'' \land q''\goes{v}q'\)]}} \\
\{(q,q') \in Q^2 \mid \exists q'' \in Q, q\goes{u}q'' \land q''\goes{v}q'\} &= \\
\span\specialcell{\hfill\text{[By definition of \(\comp\) for binary relations]}}  \\
\{(q,q'') \in Q^2 \mid q\goes{u}q''\} \comp \{(q'',q') \in Q^2 \mid q''\goes{v}q'\} &=\; \text{[By definition of \(W_{q,q'}\) and \(\ctx_{\cN}\)]} \\
\ctx_{\cN}(u)\comp \ctx_{\cN}(v)
\end{align*}

Secondly, we show that \(\minor{X \comp Y} = \minor{\minor{X}\comp \minor{Y}}\) for every \(X, Y \in \wp(Q\times Q)\).
It is straightforward to check that \(\minor{X} \comp \minor{Y} \subseteq X \comp Y\) and, therefore, \(\minor{\minor{X}\comp \minor{Y}} \subseteq \minor{X \comp Y}\).
Next, we prove the reverse inclusion by contradiction.

Let \(x\comp y \in \minor{X \comp Y}\) with \(x \in X\) and \(y \in Y\).
Assume \(x \comp y \notin \minor{\minor{X}\comp \minor{Y}}\).
Then, there exists \(\tilde{x} \in \minor{X}\) and \(\tilde{y} \in \minor{Y}\) such that \(\tilde{x} \comp \tilde{y} \in \minor{\minor{X} \comp \minor{Y}}\) and \(\tilde{x} \comp \tilde{y} \subseteq x \comp y\) which contradicts the fact that \(x \comp y \in \minor{X \comp Y}\) unless \(\tilde{x} \comp \tilde{y} = x \comp y\), in which case \(x \comp y \in\minor{\minor{X} \comp \minor{Y}} \).
Therefore, \(\minor{X \comp Y} \subseteq \minor{\minor{X} \comp \minor{Y}}\).

Finally, we show that \(\alpha(\Fn_{\cGr}(\gamma(\vect{X}))) = {\Fn}_{\cGr}^{\cN}(\vect{X})\) for all \(\vect{X}\in \alpha(\wp(\Sigma^*))^{|\cV|}\).
\begin{adjustwidth}{-0.7cm}{}
\begin{myAlign}{0pt}{0pt}
\alpha(\Fn_{\cGr}(\gamma(\vect{X}))) &= \hspace{35pt}\\
\span\specialcell{\hfill\text{[By definition of \(\Fn_{\cGr}\)]}}\\
\tuple{\alpha({\textstyle \bigcup_{X_i \to X_jX_k \in P}} \gamma(\vect{X}_{j})\gamma(\vect{X}_k))}_{i\in [0,n]} &= \\
\span\specialcell{\hfill\text{[By definition of \(\alpha\)]}} \\ 
\langle\lfloor \{ \ctx_{\cN}(w) \mid w \in {\textstyle \bigcup_{X_i \to X_jX_k \in P}} \gamma(\vect{X}_{j})\gamma(\vect{X}_k)\}\rfloor\rangle_{i\in [0,n]} &= \\
\langle\lfloor \{ \ctx_{\cN}(w) \mid \exists X_i \to X_j X_k \in P, \; w\in \gamma(\vect{X}_{j})\gamma(\vect{X}_k)\}\rfloor\rangle_{i\in [0,n]} &= \\
\span\specialcell{\hfill\text{[By definition of concatenation]}}\\
\langle\lfloor \{ \ctx_{\cN}(uv) \mid \exists X_i \to X_j X_k \in P, \; u\in \gamma(\vect{X}_{j}) \land v \in \gamma(\vect{X}_k)\}\rfloor\rangle_{i\in [0,n]} &= \\
\span\specialcell{\hfill\text{[Since \(\ctx_{\cN}(uv) = \ctx_{\cN}(u)\comp \ctx_{\cN}(v)\)]}} \\
\langle\lfloor \{ \ctx_{\cN}(u)\comp \ctx_{\cN}(v) \mid \exists X_i \to X_j X_k \in P, \; u\in \gamma(\vect{X}_{j}) \land v \in \gamma(\vect{X}_k)\}\rfloor\rangle_{i\in [0,n]} & \\
\span\specialcell{\hfill\text{[By definition of \(X \comp Y\)]}}\\
\langle\lfloor \{\ctx_{\cN}(u) \mid u\in \gamma(\vect{X}_{j}),X_i\to X_jX_k\}\comp \{\ctx_{\cN}(v) \mid v \in \gamma(\vect{X}_k),X_i{\ra} X_jX_k\}\rfloor\rangle_{i\in [0,n]} & \\
\span\specialcell{\hfill\text{[Since \(\minor{X{\comp} Y} = \minor{\minor{X}{\comp}\minor{Y}}\)]}}\\
\langle\lfloor \minor{\{\ctx_{\cN}(u) \mid u\in \gamma(\vect{X}_{j}),X_i{\ra} X_jX_k\}}\comp \minor{\{\ctx_{\cN}(v) \mid v \in \gamma(\vect{X}_k),X_i{\ra} X_jX_k\}}\rfloor\rangle_{i\in [0,n]} &= \\
\span\specialcell{\hfill\text{[Since \(\alpha(\gamma(X))=\minor{X}\)]}}\\
\langle\lfloor \minor{\{\vect{X}_j \mid X_i \to X_jX_k\} }\comp \minor{\{\vect{X}_k \mid X_i \to X_jX_k\} } \rfloor\rangle_{i\in [0,n]} & \\
\span\specialcell{\hfill\text{[Since \(\minor{X{\comp} Y} = \minor{\minor{X}{\comp}\minor{Y}}\)]}} \\
\langle\lfloor \{\vect{X}_j \mid X_i \to X_jX_k\} \comp \{\vect{X}_k \mid X_i \to X_jX_k\} \rfloor\rangle_{i\in [0,n]} & \\
\span\specialcell{\hfill\text{[By definition of \(\comp\)]}}\\
\langle\lfloor\{ \vect{X}_j\comp \vect{X}_k \mid X_i \to X_jX_k\}\rfloor\rangle_{i\in [0,n]} &= \\
\span\specialcell{\hfill \text{[By definition of \(\Fn_{\cGr}^{\cN}\)]}} \\
{\Fn}_{\cGr}^{\cN}(\vect{X}) \enspace . 
\end{myAlign}
\end{adjustwidth}
\end{myEnumA}
\end{proof}

\RemoveAlgoNumber
\begin{algorithm}[!ht]
\SetAlgorithmName{\AlgGrammarA}{}

\caption{State-based algorithm for \(L(\cGr) \subseteq L(\cN)\)}\label{alg:CFGIncA}

\KwData{CFG \(\cGr = \tuple{\cV,\Sigma,P}\) and NFA \(\cN = \tuple{Q,Σ,δ,I,F}\)}

\vspace{5pt}
\(\tuple{Y_i}_{i\in[0,n]} := \Kleene (\lambda \vect{X}\ldotp\lfloor\vect{b}\rfloor \sqcup \Fn_{\cGr}^{\cN}(\vect{X}), \vect{\varnothing})\)\;

\ForAll{\(y \in Y_0\)}{
	\lIf{\(y \cap (I \times F) = \varnothing\)}{\Return \textit{false}}
}
\Return \textit{true}\;
\end{algorithm}

\begin{theorem}\label{theorem:statesQuasiorderAlgorithmCFG}
	Let \(\cGr\) be a CFG and $\cN$ be an NFA. 
	The algorithm \AlgGrammarA decides \(L(\cGr) \subseteq L(\cN)\).
\end{theorem}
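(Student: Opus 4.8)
The plan is to deduce the statement from the general effective‑domain framework of Theorem~\ref{theorem:EffectiveAlgorithmCFG}, exactly the way Theorem~\ref{theorem:statesQuasiorderAlgorithm} was deduced from Theorem~\ref{theorem:EffectiveAlgorithm} in the regular case. Concretely, I would instantiate Theorem~\ref{theorem:EffectiveAlgorithmCFG} with the abstract domain $\tuple{D,\leq_D}=\tuple{\AC_{\tuple{\wp(Q\times Q),\subseteq}},\sqsubseteq}$, the maps $\alpha,\gamma$, and the abstract function $\Fn_{\cGr}^{\cN}$ introduced just before Lemma~\ref{lemma:rhoisgammaalphaCFG}, and then verify its five hypotheses \ref{theorem:EffectiveAlgorithmCFG:prop:rho}--\ref{theorem:EffectiveAlgorithmCFG:prop:absincl} one at a time. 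The three structural facts I need are already in hand from Lemma~\ref{lemma:rhoisgammaalphaCFG}: $\tuple{\wp(\Sigma^*),\subseteq}\galois{\alpha}{\gamma}\tuple{D,\sqsubseteq}$ is a Galois connection, $\gamma\circ\alpha=\rho_{\qo_{\cN}}$, and $\Fn_{\cGr}^{\cN}=\alpha\circ\Fn_{\cGr}\circ\gamma$ on $\alpha(\wp(\Sigma^*))^{|\cV|}$; the remaining work is turning these into the exact hypotheses required.

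Hypotheses \ref{theorem:EffectiveAlgorithmCFG:prop:rho}--\ref{theorem:EffectiveAlgorithmCFG:prop:abseps} are then routine. Writing $\rho=\gamma\circ\alpha=\rho_{\qo_{\cN}}$, Lemma~\ref{lemma:LAconsistent:CFG} tells us that $\qo_{\cN}$ is a decidable $\lang{\cN}$-consistent wqo, so Lemma~\ref{lemma:propertiesCFG} gives $\rho(\lang{\cN})=\lang{\cN}$ and backward completeness of $\rho$ for $\lambda X\ldotp aXb$; from the latter the two identities $\gamma\alpha(aX)=\gamma\alpha(a\gamma\alpha(X))$ and $\gamma\alpha(Xa)=\gamma\alpha\gamma(\alpha(X)a)$ follow via the same short $\gamma=\gamma\alpha\gamma$ manipulation carried out in the proof of Theorem~\ref{theorem:statesQuasiorderAlgorithm}, settling \ref{theorem:EffectiveAlgorithmCFG:prop:rho}. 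Hypothesis \ref{theorem:EffectiveAlgorithmCFG:prop:absdecidable} holds because $Q$ is finite, hence $\wp(Q\times Q)$ and $\AC_{\tuple{\wp(Q\times Q),\subseteq}}$ are finite, so $\tuple{D,\sqsubseteq}$ is trivially an effective ACC join-semilattice. Hypothesis \ref{theorem:EffectiveAlgorithmCFG:prop:abspre} is precisely Lemma~\ref{lemma:rhoisgammaalphaCFG}~\ref{lemma:rhoisgammaalpha:preCFG}, noting that $\Fn_{\cGr}^{\cN}$ uses only composition, union and minimisation of finite binary relations and is therefore computable. Hypothesis \ref{theorem:EffectiveAlgorithmCFG:prop:abseps} holds since the components of $\vect{b}$ are finite sets of single symbols (or $\{\epsilon\}$), so $\alpha(b_i)=\lfloor\{\ctx_{\cN}(\beta)\mid\beta\in b_i\}\rfloor$ is computed directly; this is the $\lfloor\vect{b}\rfloor$ appearing in line~1 of \AlgGrammarA.

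The one hypothesis that requires a genuine argument is \ref{theorem:EffectiveAlgorithmCFG:prop:absincl}: that $\vect{Y}\leq_D\alpha(\vectarg{L_2}{X_0})$ is decidable and is exactly the test performed by lines 2--4 of \AlgGrammarA. Since $\vectarg{L_2}{X_0}=\tuple{\nullable{i=0}{L_2}{\Sigma^*}}_{i\in[0,n]}$, the relation $\sqsubseteq$ holds trivially in every component $i\neq 0$, so it suffices to decide $Y_0\sqsubseteq\alpha(\lang{\cN})$. Writing $Y_0=\alpha(U)$ for a suitable $U\in\wp(\Sigma^*)$ — which the Kleene iterates produce, as $\vect{\varnothing}$, $\alpha(\vect{b})$ and the image of $\Fn_{\cGr}^{\cN}$ all lie in $\alpha(\wp(\Sigma^*))^{|\cV|}$ — I would run the chain $Y_0\sqsubseteq\alpha(\lang{\cN})\Lra\alpha(U)\sqsubseteq\alpha(\lang{\cN})\Lra U\subseteq\gamma\alpha(\lang{\cN})\Lra U\subseteq\lang{\cN}\Lra(\forall u\in U,\ \ctx_{\cN}(u)\cap(I\times F)\neq\varnothing)\Lra(\forall y\in Y_0,\ y\cap(I\times F)\neq\varnothing)$, where the second step is the Galois connection, the third uses $\gamma\alpha(\lang{\cN})=\lang{\cN}$ from the verification of \ref{theorem:EffectiveAlgorithmCFG:prop:rho}, the fourth is the definition of acceptance together with that of $\ctx_{\cN}$, and the last uses $Y_0=\lfloor\{\ctx_{\cN}(u)\mid u\in U\}\rfloor$ together with $U\subseteq\gamma(Y_0)$ to pass between $U$ and $Y_0$. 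The last condition is exactly the loop that returns \emph{false} as soon as some $y\in Y_0$ satisfies $y\cap(I\times F)=\varnothing$, so Theorem~\ref{theorem:EffectiveAlgorithmCFG} yields that \AlgGrammarA decides $\lang{\cGr}\subseteq\lang{\cN}$.

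I expect the only delicate point to be this final equivalence: one must check that taking minors loses nothing, i.e.\ that $\forall u\in U,\ \ctx_{\cN}(u)\cap(I\times F)\neq\varnothing$ is genuinely equivalent to the same statement restricted to $Y_0$. The forward direction is immediate since each $y\in Y_0$ equals some $\ctx_{\cN}(u)$ with $u\in U$; the backward direction needs that for every $u\in U$ there is $y\in Y_0$ with $y\subseteq\ctx_{\cN}(u)$, so $y\cap(I\times F)\neq\varnothing$ forces $\ctx_{\cN}(u)\cap(I\times F)\neq\varnothing$. Everything else — the Galois‑connection bookkeeping and the reduction of backward completeness from single letters to the operator $\lambda X\ldotp aXb$ — mirrors the regular‑language development of Theorem~\ref{theorem:statesQuasiorderAlgorithm} and the computations already performed in Lemmas~\ref{lemma:propertiesCFG} and~\ref{lemma:rhoisgammaalphaCFG}, so no new difficulty arises there.
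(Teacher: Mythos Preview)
Your proposal is correct and follows essentially the same route as the paper: instantiate Theorem~\ref{theorem:EffectiveAlgorithmCFG} with the antichain domain $\tuple{\AC_{\tuple{\wp(Q\times Q),\subseteq}},\sqsubseteq}$, verify hypotheses~\ref{theorem:EffectiveAlgorithmCFG:prop:rho}--\ref{theorem:EffectiveAlgorithmCFG:prop:absincl} via Lemmas~\ref{lemma:propertiesCFG}, \ref{lemma:LAconsistent:CFG} and~\ref{lemma:rhoisgammaalphaCFG}, and for~\ref{theorem:EffectiveAlgorithmCFG:prop:absincl} unfold exactly the chain $Y_0\sqsubseteq\alpha(L_2)\Lra\cdots\Lra\forall y\in Y_0,\ y\cap(I\times F)\neq\varnothing$ that you describe. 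Your attention to why passing to minors in the last step is lossless is well placed and matches the paper's reasoning.
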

\begin{proof}
We show that all the hypotheses~\ref{theorem:EffectiveAlgorithmCFG:prop:rho}-\ref{theorem:EffectiveAlgorithmCFG:prop:absincl} of Theorem~\ref{theorem:EffectiveAlgorithmCFG} are satisfied for the abstract domain \(\tuple{D,\leq_D}=\tuple{\AC_{\tuple{\wp(Q \times Q),\subseteq}},\sqsubseteq}\) as defined by the GC of Lemma~\ref{lemma:rhoisgammaalphaCFG}~\ref{lemma:rhoisgammaalpha:GCCFG}. 

\begin{myEnumI}
\item Since, by Lemma~\ref{lemma:rhoisgammaalphaCFG}~\ref{lemma:rhoisgammaalpha:rhoCFG}, we have that \(\rho_{\qo_{\cN}}(X) = \gamma(\alpha(X))\), it follows from Lemmas~\ref{lemma:propertiesCFG}~\ref{lemma:propertiesCFG:L} and~~\ref{lemma:LAconsistent:CFG} that \(\gamma(\alpha(L_2)) = L_2\).
Moreover, for every \(a\in\Sigma\) and \(X\in\wp(\Sigma^*)\) we have \(\gamma\alpha(a X) = \gamma\alpha(a\gamma\alpha(X))\):
\begin{align*}
\gamma\alpha(a X) & = \quad \text{[In GCs \(\gamma = \gamma \alpha \gamma\)]} \\
\gamma\alpha\gamma\alpha(a X) & = \quad \text{[By Lemma~\ref{lemma:propertiesCFG}~\ref{lemma:propertiesCFG:bw} with \(\rho_{\leqslant_{\cN}} = \gamma\alpha\)]} \\
\gamma\alpha\gamma \alpha(a\gamma \alpha(X)) &= \quad \text{[In GCs \(\gamma = \gamma \alpha \gamma\)]} \\
\gamma\alpha(a\gamma\alpha(X)) & \enspace.
\end{align*}

\item \( (\AC_{\tuple{\wp(Q\times Q),\subseteq}},\sqsubseteq) \) is effective because $Q$ is finite.
\item  By Lemma~\ref{lemma:rhoisgammaalphaCFG}~\ref{lemma:rhoisgammaalpha:preCFG} we have that
\(\alpha(\Fn_{\cGr}(\gamma(\vect{X}))) = {\Fn}_{\cGr}^{\cN}(\vect{X})\) for all vectors \(\vect{X}\in \alpha(\wp(\Sigma^*))^{|\cV|}\).
\item \(\alpha(\{b\}) = \{(q,q') \mid q\ggoes{b}q'\}\) and \(\alpha({\varnothing})=\varnothing\), hence \(\minor{\alpha(\vect{b})}\) is trivial to compute.

\item Since \(\alpha(\vectarg{L_2}{X_0})=\tuple{\alpha(\nullable{i = 0}{L_2}{\Sigma^*})}_{i \in [0,n]}\), for all $\vect{Y}\in\alpha(\wp(\Sigma^*))^{|\cV|}$ the relation \(\vect{Y} \sqsubseteq \alpha(\vectarg{L_2}{X_0})\) trivially holds for all components \(Y_i\) with \(i \neq 0\).
For $Y_0$, it suffices to show that
\(Y_0 \sqsubseteq \alpha(L_2) \Lra \forall S \in Y_q, \; S \cap (I \times F) \neq \varnothing\), which is the check performed by lines 2-5 of algorithm \AlgGrammarA.
\begin{adjustwidth}{-0.5cm}{}%
\begin{myAlign}{-\baselineskip}{0pt}
Y_0 \sqsubseteq \alpha(L_2) & \Lra \quad \text{[Since \(Y_0 = \alpha(U)\) for some \(U \in \wp(\Sigma^*)\)]} \\
\alpha(U) \sqsubseteq \alpha(L_2) & \Lra \quad \text{[By GC]} \\
U \subseteq \gamma(\alpha(L_2)) & \Lra \quad \text{[By Lemmas~\ref{lemma:propertiesCFG},~\ref{lemma:LAconsistent:CFG} and~\ref{lemma:rhoisgammaalphaCFG}, $\gamma(\alpha(L_2))=L_2$]} \\
U \subseteq L_2 & \Lra \quad \text{[Since \(Y_0 =\alpha(U) = \lfloor \{ \ctx_{\cN}(u) \mid u\in U\} \rfloor \)]} \\
\forall u \in U, \ctx_{\cN}(u) \cap (I \times F) \neq \varnothing & \Lra \quad \text{[By definition of \(\ctx_{\cN}(u)\)]} \\
\forall S \in Y_0, S \cap I \neq \varnothing &\enspace .
\end{myAlign}
\end{adjustwidth}%
\end{myEnumI}
Thus, by Theorem~\ref{theorem:EffectiveAlgorithmCFG}, Algorithm \AlgGrammarA decides \(\lang{\cGr} \subseteq \lang{\cN}\). 
\end{proof}

The resulting algorithm \AlgGrammarA shares some features with two previous works.
On the one hand, it is related to the work of \citet{Hofmann2014} which defines an abstract interpretation-based language inclusion decision procedure similar to ours.  

Even though Hofmann and Chen's algorithm and ours both manipulate sets of pairs of states of an automaton, 
their abstraction is based on equivalence relations and not quasiorders. 
Since quasiorders are strictly more general than equivalences our framework can be instantiated to 
a larger class of abstractions, most importantly coarser ones. 
Finally, it is worth pointing out 
that the approach of \citet{Hofmann2014} aims at including languages of finite and also infinite words.

A second related work is that of \citet{Holk2015} who define an antichain like algorithm manipulating sets of pairs of states. 
\citet{Holk2015} tackle the language inclusion problem \(\lang{\cGr} \subseteq \lang{\cN}\), where \(\cGr\) is a grammar and \(\cN\) and automaton, by rephrasing the problem as a data flow analysis problem over a relational domain.
In this scenario, the solution of the problem requires the computation of a least fixpoint on the relational domain, followed by an inclusion check between sets of relations.
Then, they use the ``antichains principle'' to improve the performance of the fixpoint computation and, finally, they move from manipulating relations to manipulating pairs of states.
As a consequence, \citet{Holk2015} obtain an antichains algorithm for deciding \(\lang{\cGr} \subseteq \lang{\cN}\).

By contrast, our approach is direct and systematic, since we derive \AlgGrammarA starting from the well-known Myhill quasiorder.
We believe our approach evidences the relationship between the original antichains algorithm of \citet{DBLP:conf/cav/WulfDHR06} for regular languages and the one of \citet{Holk2015} for context-free languages, which is the relation between Algorithms~\AlgRegularA and~\AlgGrammarA.
Specifically, we show that these two algorithms are conceptually identical and differ in the quasiorder used to define the abstraction in which the computation takes place.

\section{An Equivalent Greatest Fixpoint Algorithm}%
\label{sec:greatest_fixpoint_based_algorithm}

Let us recall from \citet[Theorem~4]{cou00} that if \(g \colon C\ra C\) is a monotone function on a complete lattice $\tuple{C,\leq,\vee,\wedge}$ which admits 
its unique \demph{right-adjoint} \(\widetilde{g} \colon C\ra C\), i.e. for every $c,c'\!\in\! C,\linebreak\, g(c)\leq c' \Lra c\leq \widetilde{g}(c')$ holds,
then the following equivalence holds
for all \(c,c'\in C\)
\begin{equation}\label{eqn:duality}
\lfp(\lambda x\ldotp c \vee g(x)) \leq c' \;\Lra\;
c\leq \gfp(\lambda y\ldotp c' \wedge \widetilde{g}(y)) \enspace .
\end{equation}
This property has been used by \citet{cou00} to derive equivalent least/greatest fixpoint-based invariance proof methods for programs. 

In the following, we use Equivalence~\eqref{eqn:duality} to derive an algorithm for deciding the language inclusion \(\lang{\cN_1}\subseteq \lang{\cN_2}\), which relies on the computation of a greatest fixpoint rather than a least fixpoint.
This can be achieved by exploiting the following simple observation, which provides an adjunction between concatenation and quotients
of sets of words.
\begin{lemma}\label{lemma:adjointbinary} 
For all \(X,Y \in \wp(\Sigma^*)\) and \(w\in \Sigma^*\),  \(wY \subseteq Z \Lra Y \subseteq w^{-1}Z\) and \(Yw \subseteq Z \Lra Y \subseteq Zw^{-1}\).
\end{lemma}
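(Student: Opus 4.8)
The statement to prove is Lemma~\ref{lemma:adjointbinary}: for all $X,Y \in \wp(\Sigma^*)$ and $w \in \Sigma^*$, we have $wY \subseteq Z \Lra Y \subseteq w^{-1}Z$ and $Yw \subseteq Z \Lra Y \subseteq Zw^{-1}$.

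The plan is to prove each equivalence by a direct chain of set-membership rewritings, unfolding the definitions of concatenation and of the left (resp.\ right) quotient of a language by a word, both of which were recalled earlier in the Background chapter. I will treat the left case in detail and then observe that the right case is entirely symmetric.

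First I would handle $wY \subseteq Z \Lra Y \subseteq w^{-1}Z$. Recall that $w^{-1}Z \ud \{v \in \Sigma^* \mid wv \in Z\}$. The forward direction: assume $wY \subseteq Z$ and take any $v \in Y$; then $wv \in wY \subseteq Z$, so by definition $v \in w^{-1}Z$. The backward direction: assume $Y \subseteq w^{-1}Z$ and take any word $u \in wY$; then $u = wv$ for some $v \in Y \subseteq w^{-1}Z$, so $wv \in Z$, i.e.\ $u \in Z$. This can be packaged as a short \texttt{align*} chain:
\begin{align*}
wY \subseteq Z &\Lra \quad \text{[By definition of concatenation]}\\
\forall v \in Y,\; wv \in Z &\Lra \quad \text{[By definition of \(w^{-1}Z\)]}\\
Y \subseteq w^{-1}Z &\enspace .
\end{align*}

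For the right case, I would use the symmetric definition $Zw^{-1} \ud \{v \in \Sigma^* \mid vw \in Z\}$ and run the identical argument: $Yw \subseteq Z$ iff $\forall v \in Y,\; vw \in Z$ iff $Y \subseteq Zw^{-1}$. I do not anticipate any real obstacle here — the only thing to be careful about is that this is the quotient \emph{by a word}, not the unconventional universally-quantified quotient by a set introduced earlier, so the definitions line up cleanly as plain "there exists a factorisation" statements with no quantifier mismatch. Note this lemma is the word-level special case of the adjunction $XY \subseteq L \Lra Y \subseteq X^{-1}L \Lra X \subseteq LY^{-1}$ already noted in the Background, but since here $X = \{w\}$ is a singleton it is cleanest to prove it self-containedly rather than invoke the set-level fact. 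The whole proof is a handful of lines.
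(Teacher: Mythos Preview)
Your proposal is correct and follows essentially the same approach as the paper's proof, which also unfolds the definition of the word quotient to obtain the chain $Y \subseteq w^{-1}Z \Lra \forall u \in Y,\; wu \in Z \Lra wY \subseteq Z$ and then remarks that the right case is symmetric. Your version is slightly more verbose but identical in substance.
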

\begin{proof}
By definition, for all \(u\in \Sigma^*\), \(u \in w^{-1}Z\) if{}f \( wu \in Z\). 
Hence, 
\[Y\subseteq w^{-1}Z \Lra \forall u\in Y,\: wu \in Z \Lra wY\subseteq Z \enspace . \] 
 Symmetrically, \(Yw\subseteq Z\) \(\Lra\) \(Y\subseteq Zw^{-1}\) holds.
\end{proof}

Given an NFA \(\cN = \tuple{Q,Σ,\delta,I,F}\), we define $\widetilde{\Pre}_\cN:\wp(\Sigma^*)^{|Q|} \ra \wp(\Sigma^*)^{|Q|}$ as a function on $Q$-indexed vectors of sets of words as follows:
\[
	\widetilde{\Pre}_\cN(\tuple{X_q}_{q\in Q})	\ud \langle {\textstyle\bigcap_{a\in \Sigma, q'\in \delta(q,a)}}\; a^{-1} X_q
\rangle_{q'\in Q} \enspace ,
\]
where, as usual, \(\bigcap \varnothing = \Sigma^*\). It turns out that $\widetilde{\Pre}_\cN$ is the usual weakest liberal precondition which is 
right-adjoint
of $\Pre_\cN$.
\begin{lemma}\label{lemma:FnAdjoint}
For all \(\vect{X},\vect{Y}\in \wp(\Sigma^*)^{|Q|}\), \(\Pre_{\cN}(\vect{X})\subseteq \vect{Y}\Lra \vect{X}\subseteq \widetilde{\Pre}_{\cN}(\vect{Y})\).
\end{lemma}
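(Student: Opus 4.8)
The statement to prove is that $\Pre_{\cN}$ and $\widetilde{\Pre}_{\cN}$ form a Galois connection on the product lattice $\wp(\Sigma^*)^{|Q|}$, i.e. $\Pre_{\cN}(\vect{X}) \subseteq \vect{Y} \Lra \vect{X} \subseteq \widetilde{\Pre}_{\cN}(\vect{Y})$ for all $\vect{X}, \vect{Y}$. The plan is to reduce this vector-level adjunction to the scalar adjunction between left-concatenation $\lambda X.\, aX$ and left-quotient $\lambda Z.\, a^{-1}Z$, which is exactly the content of Lemma~\ref{lemma:adjointbinary} (in its single-letter instance $w = a$). Since inclusion of vectors is defined componentwise, the whole argument is just an unfolding of the definitions of $\Pre_{\cN}$ and $\widetilde{\Pre}_{\cN}$ together with the observation that a componentwise inclusion over a set of constraints can be regrouped.

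First I would spell out $\Pre_{\cN}(\vect{X}) \subseteq \vect{Y}$ componentwise: it says that for every state $q \in Q$, $\bigcup_{a\in\Sigma,\, q'\in\delta(q,a)} a X_{q'} \subseteq Y_q$. A union is included in $Y_q$ iff each member is, so this is equivalent to: for every $q \in Q$, every $a \in \Sigma$ and every $q' \in \delta(q,a)$, $a X_{q'} \subseteq Y_q$. By Lemma~\ref{lemma:adjointbinary} with $w = a$, each such conjunct $a X_{q'} \subseteq Y_q$ is equivalent to $X_{q'} \subseteq a^{-1} Y_q$. Now I would regroup the conjunction by the state $q'$ rather than by $q$: the conjunction of $X_{q'} \subseteq a^{-1}Y_q$ over all triples $(q,a,q')$ with $q' \in \delta(q,a)$ is equivalent to requiring, for each fixed $q' \in Q$, that $X_{q'} \subseteq \bigcap_{a\in\Sigma,\, q\,:\,q'\in\delta(q,a)} a^{-1}Y_q$. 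This intersection is precisely the $q'$-component of $\widetilde{\Pre}_{\cN}(\vect{Y})$ as defined just above the lemma (with the convention $\bigcap\varnothing = \Sigma^*$ handling states $q'$ that are not successors of anything, for which the constraint is vacuous and the component is all of $\Sigma^*$). Hence the whole statement is equivalent to $\vect{X} \subseteq \widetilde{\Pre}_{\cN}(\vect{Y})$ componentwise, which is what we wanted.

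I do not expect any serious obstacle here; the only point requiring a little care is the index bookkeeping when switching from grouping the constraints by the source state $q$ to grouping them by the target state $q'$, and making sure the empty-intersection convention $\bigcap\varnothing = \Sigma^*$ matches the empty-union convention $\bigcup\varnothing = \varnothing$ used in the definition of $\Pre_{\cN}$ so that the adjunction holds even for states with no incoming or no outgoing transitions. Everything else is a direct application of Lemma~\ref{lemma:adjointbinary} distributed over the finite index set $\{(q,a,q') \mid q,q'\in Q,\ a\in\Sigma,\ q'\in\delta(q,a)\}$, using that both $\subseteq$ on $\wp(\Sigma^*)$ and its componentwise lifting are preserved under taking arbitrary intersections on the right-hand side.
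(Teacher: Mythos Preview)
Your proposal is correct and follows essentially the same approach as the paper's proof: unfold $\Pre_{\cN}(\vect{X})\subseteq\vect{Y}$ componentwise, split the union into individual constraints $aX_{q'}\subseteq Y_q$, apply Lemma~\ref{lemma:adjointbinary} to each, regroup by $q'$ using $(\forall i,\ X\subseteq Y_i)\Lra X\subseteq\bigcap_i Y_i$, and recognize the result as the definition of $\widetilde{\Pre}_{\cN}(\vect{Y})$. Your explicit attention to the empty-intersection and empty-union conventions is a small extra that the paper leaves implicit.
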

\begin{proof}
For all \(\vect{X},\vect{Y}\in \wp(\Sigma^*)^{|Q|}\),
	\begin{align*}
		\Pre_{\cN}(\tuple{X_q}_{q\in Q}) \subseteq \tuple{Y_q}_{q\in Q} &\Lra 
		\quad\text{[By definition of $\Pre_{\cN}$]} \\
		\forall q\in Q, \; {\textstyle \bigcup_{q\ggoes{a}{q'}}} a X_{q'} \subseteq Y_q &\Lra
		\\
		\forall q,{q'}\in Q, \; q\ggoes{a} q' \Ra  a X_{q'} \subseteq Y_q &\Lra
		\quad\text{[By Lemma~\ref{lemma:adjointbinary}]}\\
		\forall q,{q'}\in Q, \; q\ggoes{a} q' \Ra X_{q'} \subseteq a^{-1} Y_q &\Lra \quad\text{[\((\forall i \in I, \; X \subseteq Y_i) \Lra X \subseteq {\textstyle\bigcap_{i\in I}} Y_i\)]}
		\\
		\forall {q'}\in Q, X_{q'} \subseteq {\textstyle\bigcap_{q\ggoes{a} q'}} a^{-1} Y_q&\Lra
		\quad\text{[By definition of $\widetilde{\Pre}_{\cN}$]} \\
   \tuple{X_q}_{q\in Q} \subseteq \widetilde{\Pre}_{\cN}(\tuple{Y_q}_{q\in Q}) 
	\end{align*}
\end{proof}

Hence, from Equivalences~\eqref{eq:lfp} and~\eqref{eqn:duality} we obtain: 
\begin{equation}
\lang{\cN_1} \subseteq L_2 \:\Lra\:
\vectarg{\epsilon}{F_1} \subseteq \gfp(\lambda \vect{X}\ldotp \vectarg{L_2}{I_1} \cap \widetilde{\Pre}_{\cN_1}(\vect{X})) \enspace .  \label{eq:inclgfplfp}
\end{equation}

The following algorithm \AlgRegularGfp decides the inclusion \(\lang{\cN_1} \subseteq L_2\) by implementing the greatest fixpoint
computation from Equivalence~\eqref{eq:inclgfplfp}. 

\begin{figure}[!ht]
\RemoveAlgoNumber
\begin{algorithm}[H]
\SetAlgorithmName{\AlgRegularGfp}{}
\SetSideCommentRight
\caption{Greatest fixpoint algorithm for \(\lang{\cN_1}\subseteq L_2\)}\label{alg:RegIncGfp}

\KwData{NFA \(\cN_1=\tuple{Q_1,\delta_1,I_1,F_1,\Sigma}\); regular language \(L_2\).}

\vspace{5pt}

\(\tuple{Y_q}_{q\in Q} := \Kleene (\lambda \vect{X}\ldotp\vectarg{L_2}{I_1} \cap \widetilde{\Pre}_{\cN_1}(\vect{X}), \vect{{\Sigma^*}})\)\;

\ForAll{\(q \in F_1\)}{
	\lIf{\(\epsilon \notin Y_q\)}{\Return \textit{false}}
}
\Return \textit{true}\;
\end{algorithm}
\end{figure}

\noindent
The intuition behind algorithm~\AlgRegularGfp is that
\[\lang{\cN_1} \subseteq L_2 \Lra \epsilon \in {\textstyle\bigcap_{w \in \lang{\cN_1}}} w^{-1}L_2 \enspace .\]
Therefore, \AlgRegularGfp computes the set \({\textstyle\bigcap_{w \in \lang{\cN_1}}} w^{-1}L_2\) by using the automaton \(\cN_1\) and by 
considering prefixes of \(\lang{\cN_1}\) of increasing lengths. This means that
after \(n\) iterations of the \(\Kleene\) procedure, Algorithm \AlgRegularGfp has computed, for every state \(q \in Q_1\), the set 
\[\bigcap_{wu\in \lang{\cN_1}, |w| \leq n, q_0 \in I_1, q_0 \stackrel{w}{\leadsto} q} \hspace{-30pt}w^{-1}L_2 \enspace , \]

The regularity of \(L_2\) together with the property of regular languages of being closed under intersections and quotients show that each iterate computed by $\Kleene (\lambda \vect{X}\ldotp\vectarg{L_2}{I_1} \cap \widetilde{\Pre}_{\cN_1}(\vect{X}), \vect{{\Sigma^*}})$ is a (computable) regular language. 
To the best of our knowledge, this language inclusion algorithm \AlgRegularGfp has never been described in the literature before.
\\
\indent
Next, we discharge the fundamental assumption on which the correctness of Algorithm \AlgRegularGfp depends on: the Kleene iterates computed by \AlgRegularGfp are finitely many.
In order to do that, we consider an abstract version of the greatest fixpoint computation exploiting 
a closure operator which guarantees that the abstract Kleene iterates are finitely many. 
This closure operator $\rho_{\qo_{\cN_2}}$ will be defined by using an ordering relation $\qo_{\cN_2}$ 
induced by an NFA $\cN_2$ such that 
\(L_2=\lang{\cN_2}\) and will be shown to be 
\emph{forward complete} for the function \(\lambda \vect{X}\ldotp \vectarg{L_2}{I_1} \cap \widetilde{\Pre}_{\cN_1}(\vect{X})\) 
used by \AlgRegularGfp.
\\
\indent
Forward completeness of abstract interpretations \cite{gq01}, also called
exactness \citep[Definition~2.15]{mine17},  is different 
from and orthogonal to backward completeness introduced in Section~\ref{sec:inclusion_checking_by_complete_abstractions}
and crucially used in Sections~\ref{sec:an_algorithmic_framework_for_language_inclusion_based_on_complete_abstractions}-\ref{sec:context_free_languages}.
In particular, a remarkable consequence 
of exploiting a forward complete abstraction is that the Kleene iterates of the concrete and abstract greatest fixpoint computations coincide.  

The intuition here is that this forward complete closure $\rho_{\leq_{\cN_2}}$ allows us to establish that all Kleene iterates of \(\gfp(\vect{X}\ldotp \vectarg{L_2}{I_1} \cap \widetilde{\Pre}_{\cN_1}(\vect{X}))\) belong to the image of the closure $\rho_{\qo_{\cN_2}}$.
More precisely, every Kleene iterate is a language which is upward closed for \(\qo_{\cN_2}\).
Interestingly, a similar phenomenon occurs in well-structured transition systems~\cite{ACJT96,Finkel2001}.
\\
\indent
Let us now describe in detail this abstraction. 
A closure \(\rho\in\uco(C)\) on a concrete domain $C$ is forward complete for a monotone function \(f:C\ra C\) if{}f \(\rho f \rho = f \rho\) holds. 
The intuition here is that forward completeness means that no loss of precision
is accumulated when the output of a computation of $f\rho$ is approximated by $\rho$, or, equivalently, $f$ maps abstract elements 
of $\rho$ into abstract elements of $\rho$. 
Dually to the case of backward completeness, forward completeness implies that \(\gfp(f)=\gfp(f\rho) = \gfp(\rho f \rho)\), when these greatest fixpoints exist (this is the case, e.g., when $C$ is a complete lattice).  

It turns out that forward and backward completeness are related by the following duality on function $f$.

\begin{lemma}[\textbf{\citep[Corollary~1]{gq01}}]\label{lemma:forwardbackwardtransfer}
Let $\tuple{C,\leq}$ be a complete lattice and assume that \(f\colon C\ra C\) admits the right-adjoint \(\widetilde{f}\colon C\ra C\), i.e.
$f(c) \leq c' \Lra c \leq \widetilde{f}(c')$ holds. 
Then, \(\rho\) is backward complete for \(f\) if{}f \(\rho\) is forward complete for \(\widetilde{f}\).
\end{lemma}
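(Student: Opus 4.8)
\textbf{Proof plan for Lemma~\ref{lemma:forwardbackwardtransfer}.}
The statement is purely order-theoretic: given a complete lattice $\tuple{C,\leq}$, a monotone $f\colon C\ra C$ with right-adjoint $\widetilde{f}$ (so $f(c)\leq c' \Lra c\leq \widetilde{f}(c')$), and a closure $\rho\in\uco(C)$, we must show $\rho f = \rho f \rho$ if and only if $\rho\widetilde{f}\rho = \widetilde{f}\rho$. The plan is to prove one implication and note that the other follows by the same argument with the roles of $f$ and $\widetilde{f}$ interchanged — but we cannot literally do that, since $f$ being a left-adjoint is not symmetric to $\widetilde{f}$ being a right-adjoint; instead I would prove both directions directly, each time unfolding the adjunction and using that for a closure $\rho(x)=x \Lra x\in\rho(C)$ together with the basic facts $x\leq\rho(x)$, $\rho$ monotone, $\rho$ idempotent, and the characterization $x\leq\rho(y)\Lra\rho(x)\leq\rho(y)$ from Equation~\eqref{equation:abstractcheck}.

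First I would establish the forward direction: assume $\rho$ is backward complete for $f$, i.e.\ $\rho f = \rho f \rho$, and show $\widetilde{f}\rho = \rho\widetilde{f}\rho$. Since $\rho$ is extensive, $\widetilde{f}\rho(y)\leq\rho(\widetilde{f}\rho(y))$ always holds, so only the reverse inequality $\rho(\widetilde{f}(\rho(y)))\leq\widetilde{f}(\rho(y))$ needs proof. By the adjunction this is equivalent to $f(\rho(\widetilde{f}(\rho(y))))\leq\rho(y)$. Now apply $\rho$ to both sides of the goal and use Equation~\eqref{equation:abstractcheck}: it suffices to show $\rho(f(\rho(\widetilde{f}(\rho(y)))))\leq\rho(y)$, and by backward completeness $\rho f \rho = \rho f$, so the left-hand side equals $\rho(f(\widetilde{f}(\rho(y))))$. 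Finally $f\widetilde{f}(z)\leq z$ (the counit of the adjunction, obtained by instantiating $c'=\widetilde{f}(z)$ and reflexivity), hence $\rho(f(\widetilde{f}(\rho(y))))\leq\rho(\rho(y))=\rho(y)$, which closes the argument.

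Then I would prove the converse: assume $\rho$ is forward complete for $\widetilde{f}$, i.e.\ $\rho\widetilde{f}\rho = \widetilde{f}\rho$, and show $\rho f = \rho f \rho$. Again $\rho f(x)\leq\rho f\rho(x)$ is automatic by monotonicity of $f$ and $\rho$ and extensivity of $\rho$, so we need $\rho(f(\rho(x)))\leq\rho(f(x))$. By Equation~\eqref{equation:abstractcheck} this reduces to $f(\rho(x))\leq\rho(f(x))$, which by the adjunction is equivalent to $\rho(x)\leq\widetilde{f}(\rho(f(x)))$. Using forward completeness of $\rho$ for $\widetilde{f}$, $\widetilde{f}(\rho(f(x))) = \rho(\widetilde{f}(\rho(f(x))))$, and since $x\leq\widetilde{f}(f(x))$ (the unit of the adjunction, from $f(x)\leq f(x)$), monotonicity of $\widetilde{f}$ and $\rho$ plus extensivity give $\rho(x)\leq\rho(\widetilde{f}(f(x)))\leq\rho(\widetilde{f}(\rho(f(x)))) = \widetilde{f}(\rho(f(x)))$, as required.

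The computations are all short; the only point demanding care — and the main obstacle — is keeping straight which of the two triangular identities ($f\widetilde{f}\leq\id$ versus $\id\leq\widetilde{f}f$) is needed in each direction, and making sure each invocation of Equation~\eqref{equation:abstractcheck} is applied in the legitimate form (one side must already be in the image of $\rho$, or we must apply $\rho$ to an inequality whose right side is a $\rho$-image). I expect no subtlety beyond this bookkeeping, since completeness lattices and the cited Corollary~1 of~\cite{gq01} already guarantee the result; the proof here is essentially a transcription of that duality in the notation of this chapter.
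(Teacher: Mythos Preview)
Your proof is correct in both directions. In the paper this lemma is not proved at all: it is stated with a citation to \cite[Corollary~1]{gq01} and left without argument, so there is no paper proof to compare against. Your direct unfolding of the adjunction together with Equation~\eqref{equation:abstractcheck}, the unit $\id\leq\widetilde{f}f$ and counit $f\widetilde{f}\leq\id$, is exactly the standard elementary verification, and the bookkeeping you flag (which triangular identity is used in which direction, and ensuring the right-hand side lies in $\rho(C)$ before invoking \eqref{equation:abstractcheck}) is handled correctly.
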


Thus, by Lemma~\ref{lemma:forwardbackwardtransfer}, in the following result instead of 
assuming the hypotheses implying that a closure $\rho$ is forward complete for the right-adjoint $\widetilde{\Pre}_{\cN_1}$ we
state some hypotheses which guarantee that $\rho$ is backward complete for its left-adjoint, which, by Lemma~\ref{lemma:FnAdjoint}, is ${\Pre}_{\cN_1}$. 

\begin{theorem}\label{theorem:dualalgorithm}
Let \(\cN_1 = \tuple{Q_1,\delta_1,I_1,F_1,\Sigma}\) be an NFA, let \(L_2\) be a regular language and let \(\rho \in \uco(\tuple{\wp(\Sigma^*),\subseteq})\). Let us assume that:  
\begin{myEnumA}
	\item \(\rho(L_2) = L_2\);
	\item \(\rho\) is backward complete for \(\lambda X\ldotp a X\) for all \(a\in \Sigma\).
\end{myEnumA}
\medskip
Then
\[\lang{\cN_1}\subseteq L_2 \Lra \vectarg{\epsilon}{F_1} \subseteq \gfp(\vect{X}\ldotp \rho(\vectarg{L_2}{I_1} \cap \widetilde{\Pre}_{\cN_1}(\rho(\vect{X}))))\enspace .\]
Moreover, the Kleene iterates computed by $\gfp(\vect{X}\ldotp \rho(\vectarg{L_2}{I_1} \cap \widetilde{\Pre}_{\cN_1}(\rho(\vect{X}))))$ coincide in lockstep with those of \(\gfp(\vect{X}\ldotp \vectarg{L_2}{I_1} \cap \widetilde{\Pre}_{\cN_1}(\vect{X}))\).
\end{theorem}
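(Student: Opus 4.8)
The plan is to reduce everything to the machinery already established for the least-fixpoint side via the adjunction and the forward/backward completeness duality. First I would set up the correspondence: by Lemma~\ref{lemma:FnAdjoint}, $\Pre_{\cN_1}$ admits $\widetilde{\Pre}_{\cN_1}$ as right-adjoint, so Lemma~\ref{lemma:forwardbackwardtransfer} applies componentwise on $\wp(\Sigma^*)^{|Q_1|}$ (extending $\rho$ pointwise as usual). Hypotheses (a) and (b) are exactly the hypotheses of Theorem~\ref{theorem:backComplete}, so the pointwise extension of $\rho$ is backward complete for $\Pre_{\cN_1}$; hence, by Lemma~\ref{lemma:forwardbackwardtransfer}, $\rho$ is \emph{forward} complete for $\widetilde{\Pre}_{\cN_1}$, i.e. $\rho\widetilde{\Pre}_{\cN_1}\rho = \widetilde{\Pre}_{\cN_1}\rho$.

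Next I would handle the meet with the constant vector $\vectarg{L_2}{I_1}$. Since $\rho(L_2)=L_2$ and $\rho(\Sigma^*)=\Sigma^*$ (as $\rho$ is a closure, $\Sigma^*$ is its top), every component of $\vectarg{L_2}{I_1}$ is in the image of $\rho$, so $\vectarg{L_2}{I_1}\in\rho$. A closure on a complete lattice is always forward complete for $\lambda X.\,c\wedge X$ when $c\in\rho$ (because $\rho(c\wedge \rho(X)) = c\wedge\rho(X)$, using that the image of a closure is a Moore family and hence closed under meets); combining this with the forward completeness for $\widetilde{\Pre}_{\cN_1}$ gives forward completeness of $\rho$ for the whole operator $g \ud \lambda\vect{X}.\,\vectarg{L_2}{I_1}\cap\widetilde{\Pre}_{\cN_1}(\vect{X})$, that is $\rho g\rho = g\rho$. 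Standard greatest-fixpoint completeness (the dual of Equation~\eqref{eqn:lfpcompleteness}, valid here because $\wp(\Sigma^*)^{|Q_1|}$ is a complete lattice) then yields $\gfp(g) = \gfp(g\rho) = \gfp(\rho g\rho)$. Feeding this into Equivalence~\eqref{eq:inclgfplfp} gives
\[
\lang{\cN_1}\subseteq L_2 \Lra \vectarg{\epsilon}{F_1} \subseteq \gfp(\lambda\vect{X}.\,\rho(\vectarg{L_2}{I_1} \cap \widetilde{\Pre}_{\cN_1}(\rho(\vect{X})))) \enspace ,
\]
which is the first claim. For the ``coincide in lockstep'' part, I would argue by induction on the Kleene iterates starting from $\vect{\Sigma^*}$: the base iterate $\vect{\Sigma^*}$ lies in the image of $\rho$; and if the $n$-th concrete iterate $g^n(\vect{\Sigma^*})$ lies in $\rho$, then forward completeness gives $g^{n+1}(\vect{\Sigma^*}) = g(\rho(g^n(\vect{\Sigma^*}))) = \rho g\rho(g^n(\vect{\Sigma^*}))$, so it is again in the image of $\rho$ and equals the corresponding abstract iterate; hence every concrete iterate of $\gfp(g)$ already belongs to $\rho$ and the two descending chains are identical termwise.

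The only subtlety — and the step I would be most careful about — is making sure the pointwise/vector extension of $\rho$ interacts correctly with both the adjunction and the constant-meet: one needs that $\widetilde{\Pre}_{\cN_1}$ really is the right-adjoint of $\Pre_{\cN_1}$ on the product lattice (which is Lemma~\ref{lemma:FnAdjoint}) and that forward completeness is preserved under composition and under pointwise extension. These are routine but must be stated; the genuinely new content is just the transfer via Lemma~\ref{lemma:forwardbackwardtransfer} together with the observation that the Kleene iterates never leave $\rho$, which is what ultimately justifies (in the surrounding text) that finitely many iterates suffice when $\rho_{\qo_{\cN_2}}$ is chosen with the appropriate DCC/wqo property.
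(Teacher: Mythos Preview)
Your proposal is correct and follows essentially the same route as the paper: use Theorem~\ref{theorem:backComplete} to lift backward completeness to $\Pre_{\cN_1}$, apply Lemma~\ref{lemma:forwardbackwardtransfer} to obtain forward completeness for $\widetilde{\Pre}_{\cN_1}$, extend this to the meet with $\vectarg{L_2}{I_1}$ (which lies in $\rho$ by hypothesis~(a)), and conclude via the dual of~\eqref{eqn:lfpcompleteness} and Equivalence~\eqref{eq:inclgfplfp}. Your inductive argument for the lockstep claim is just an unrolled version of what the paper states tersely via the identity $\rho(\vectarg{L_2}{I_1}\cap\widetilde{\Pre}_{\cN_1}(\rho(\vect{X}))) = \vectarg{L_2}{I_1}\cap\widetilde{\Pre}_{\cN_1}(\rho(\vect{X}))$; one small slip is that hypothesis~(a) is not an input to Theorem~\ref{theorem:backComplete} (only~(b) is needed there), but you use~(a) correctly later for $\vectarg{L_2}{I_1}\in\rho$.
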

\begin{proof}
	Theorem~\ref{theorem:backComplete} shows that if \(\rho\) is backward complete for \(\lambda X\ldotp a X\) for every \(a\in\Sigma\) then it is backward complete for \(\Pre_{{\cN_1}}\). 
	Thus, by Lemma~\ref{lemma:forwardbackwardtransfer}, \(\rho\) is forward complete for \(\widetilde{\Pre}_{\cN_1}\), hence it is forward complete for \(\lambda \vect{X}\ldotp \vectarg{L_2}{I_1} \cap \widetilde{\Pre}_{\cN_1} (\vect{X})\) since: 
	\begin{align*}
	\rho (\vectarg{L_2}{I_1} \cap  \widetilde{\Pre}_{\cN_1} (\rho(\vect{X}))) & = 
	\quad\text{[By forward comp. for $\widetilde{\Pre}_{\cN_1}$ and \(\rho(L_2)=L_2\)]}\\
	\rho (\rho( \vectarg{L_2}{I_1}) \cap  \rho(\widetilde{\Pre}_{\cN_1} (\rho(\vect{X})))) & = 
	\quad\text{[Since \(\rho(\cap \rho(X)) = \cap\rho(X)\)]}\\
	\rho (\vectarg{L_2}{I_1}) \cap  \rho (\widetilde{\Pre}_{\cN_1} (\rho(\vect{X}))) & = 
	\quad\text{[By forward comp. for $\widetilde{\Pre}_{\cN_1}$ and \(\rho (L_2)=L_2\)]}\\
	\vectarg{L_2}{I_1} \cap  \widetilde{\Pre}_{\cN_1} (\rho (\vect{X}))&\enspace .
	\end{align*}
Since, by forward completeness,	we have that
\[\gfp(\vect{X}\ldotp \vectarg{L_2}{I_1} \cap \widetilde{\Pre}_{\cN_1}(\vect{X})) = \gfp(\vect{X}\ldotp \rho(\vectarg{L_2}{I_1} \cap \widetilde{\Pre}_{\cN_1}(\rho(\vect{X}))))\enspace, \] by Equivalence \eqref{eq:inclgfplfp}, we conclude 
that 
\[\lang{\cN_1}\subseteq L_2 \Lra \vectarg{\epsilon}{F_1} \subseteq \gfp(\vect{X}\ldotp \rho(\vectarg{L_2}{I_1} \cap \widetilde{\Pre}_{\cN_1}(\rho(\vect{X})))) \enspace . \]

\noindent
Finally, we observe that the Kleene iterates computing \(\gfp(\lambda \vect{X}\ldotp \vectarg{L_2}{I_1} \cap  \widetilde{\Pre}_{\cN_1} (\vect{X}))\) and those computing $\gfp(\vect{X}\ldotp \rho(\vectarg{L_2}{I_1} \cap \widetilde{\Pre}_{\cN_1}(\rho(\vect{X}))))$ coincide in lockstep since 
$\rho (\vectarg{L_2}{I_1} \cap  \widetilde{\Pre}_{\cN_1} (\rho(\vect{X}))) =
\vectarg{L_2}{I_1} \cap  \widetilde{\Pre}_{\cN_1} (\rho (\vect{X}))$ and 
\(\rho(\vectarg{L_2}{I_1})=\vectarg{L_2}{I_1}\). %
\end{proof}

We can now establish that the sequence of Kleene iterates computed by \(\gfp(\vect{X}\ldotp \vectarg{L_2}{I_1} \cap \widetilde{\Pre}_{\cN_1}(\vect{X}))\) is finite.
Let \(L_2=\lang{\cN_2}\), for some NFA \(\cN_2\), and consider the corresponding left
state-based quasiorder \(\mathord{\qo_{\cN_2}^{l}}\) on $\Sigma^*$ as defined by~\eqref{eqn:state-qo}.

Lemma~\ref{lemma:LAconsistent} tells us that \(\mathord{\qo_{\cN_2}^{l}}\) is a left \(L_2\)-consistent wqo.  
Furthermore, since \(Q_2\) is finite we have that both \(\mathord{\qo_{\cN_2}^{l}}\) and \((\mathord{\qo_{\cN_2}^{l}})^{-1}\) are wqos, so that, in turn, \( \tuple{\rho_{\qo_{\cN_2}^{l}},\subseteq}\) is a poset which is both ACC and DCC.  
In particular, the definition of \(\mathord{\qo_{\cN_2}^{l}}\) implies that every chain in \( \tuple{\rho_{\qo_{\cN_2}^{l}},\subseteq}\) has at most \(2^{|Q_2|}\) elements, so that
if we compute \(2^{|Q_2|}\) Kleene iterates then we have surely computed the greatest fixpoint.

Moreover, as a consequence of the DCC, the Kleene iterates of
\(\gfp(\lambda \vect{X}\ldotp\rho_{\leq_{\cN_2}}(\vectarg{L_2}{I_1} \cap \widetilde{\Pre}_{\cN_1}(\rho_{\leq_{\cN_2}}(\vect{X}))))\) are finitely many, hence so are the iterates of \(\gfp(\lambda \vect{X}\ldotp \vectarg{L_2}{I_1} \cap \widetilde{\Pre}_{\cN_1} (\vect{X}))\) because they go in lockstep as stated by Theorem~\ref{theorem:dualalgorithm}. 

\begin{corollary}
Let \(\cN_1\) be an NFA and let \(L_2\) be a regular language.
Then, Algorithm \AlgRegularGfp decides the inclusion \(\lang{\cN_1} \subseteq L_2\)
\end{corollary}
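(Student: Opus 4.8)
The final statement to prove is the corollary asserting that Algorithm \AlgRegularGfp decides the inclusion $\lang{\cN_1} \subseteq L_2$ for any NFA $\cN_1$ and regular language $L_2$. The plan is to assemble the pieces already developed in the surrounding text. First I would establish \emph{soundness}: by Equivalence~\eqref{eq:inclgfplfp}, which follows from Equivalences~\eqref{eq:lfp} and~\eqref{eqn:duality} together with the adjunction of Lemma~\ref{lemma:FnAdjoint}, we have
\[\lang{\cN_1} \subseteq L_2 \Lra \vectarg{\epsilon}{F_1} \subseteq \gfp(\lambda \vect{X}\ldotp \vectarg{L_2}{I_1} \cap \widetilde{\Pre}_{\cN_1}(\vect{X})) \enspace .\]
Since the $\Kleene$ procedure called at line~1 of \AlgRegularGfp, when it terminates, returns exactly this greatest fixpoint (Knaster--Tarski--Kleene, noting that $\lambda \vect{X}\ldotp\vectarg{L_2}{I_1} \cap \widetilde{\Pre}_{\cN_1}(\vect{X})$ is monotone and $\vect{\Sigma^*}$ is the top element, with $f(\vect{\Sigma^*}) \subseteq \vect{\Sigma^*}$), and since checking $\vectarg{\epsilon}{F_1} \subseteq \vect{Y}$ amounts exactly to checking $\epsilon \in Y_q$ for all $q \in F_1$ (the non-final components of $\vectarg{\epsilon}{F_1}$ being $\varnothing$), lines~2--4 return \emph{true} if and only if $\lang{\cN_1} \subseteq L_2$, \emph{provided} the Kleene computation halts.

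The core of the argument is therefore \emph{termination}, and here I would invoke Theorem~\ref{theorem:dualalgorithm}. Fix an NFA $\cN_2$ with $\lang{\cN_2} = L_2$ (possible since $L_2$ is regular) and take $\rho = \rho_{\qo^{\ell}_{\cN_2}}$, the closure induced by the left state-based quasiorder of~\eqref{eqn:state-qo}. By Lemma~\ref{lemma:LAconsistent}, $\mathord{\qo^{\ell}_{\cN_2}}$ is a left $L_2$-consistent wqo, so by Lemma~\ref{lemma:properties} the closure $\rho$ satisfies $\rho(L_2) = L_2$ and is backward complete for $\lambda X\ldotp aX$ for all $a \in \Sigma$ — exactly the two hypotheses of Theorem~\ref{theorem:dualalgorithm}. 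That theorem then gives that the Kleene iterates of $\gfp(\lambda \vect{X}\ldotp \vectarg{L_2}{I_1} \cap \widetilde{\Pre}_{\cN_1}(\vect{X}))$ coincide in lockstep with those of the abstracted computation $\gfp(\lambda \vect{X}\ldotp \rho(\vectarg{L_2}{I_1} \cap \widetilde{\Pre}_{\cN_1}(\rho(\vect{X}))))$, whose iterates all lie in the image of $\rho$.

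It remains to argue finiteness of the iterates in $\rho$. Since $Q_2$ is finite, both $\mathord{\qo^{\ell}_{\cN_2}}$ and its inverse $(\mathord{\qo^{\ell}_{\cN_2}})^{-1}$ are wqos, so the poset $\tuple{\rho_{\qo^{\ell}_{\cN_2}}, \subseteq}$ satisfies both ACC and DCC; indeed, by the definition of $\qo^{\ell}_{\cN_2}$ via $\pre^{\cN_2}_u(F_2) \subseteq \pre^{\cN_2}_v(F_2)$, every $\subseteq$-chain in that poset has length at most $2^{|Q_2|}$. The decreasing Kleene sequence $\vect{\Sigma^*} \supseteq f(\vect{\Sigma^*}) \supseteq f^2(\vect{\Sigma^*}) \supseteq \cdots$ for the abstract function therefore stabilizes after at most $2^{|Q_2|}$ steps (componentwise), so by the lockstep property the concrete iterates computed by \AlgRegularGfp are finitely many and the procedure halts. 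The main obstacle here is not any single hard step but making sure the hypotheses of Theorem~\ref{theorem:dualalgorithm} are discharged cleanly — in particular that the abstraction chosen ($\rho_{\qo^{\ell}_{\cN_2}}$) is simultaneously (i) backward complete for left symbol concatenation and (ii) such that its image is a DCC poset, which is precisely where regularity of $L_2$ (finiteness of $Q_2$) is essential; without it the greatest fixpoint iteration need not terminate. Combining soundness with termination yields that \AlgRegularGfp decides $\lang{\cN_1} \subseteq L_2$, as claimed.
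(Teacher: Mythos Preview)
Your proposal is correct and follows essentially the same approach as the paper: soundness via Equivalence~\eqref{eq:inclgfplfp}, then termination by instantiating $\rho = \rho_{\qo^{\ell}_{\cN_2}}$ for an NFA $\cN_2$ with $\lang{\cN_2}=L_2$, invoking Lemma~\ref{lemma:LAconsistent} (and Lemma~\ref{lemma:properties}) to discharge the hypotheses of Theorem~\ref{theorem:dualalgorithm}, and finally using the DCC on $\tuple{\rho_{\qo^{\ell}_{\cN_2}},\subseteq}$ (via finiteness of $Q_2$, with chains of length at most $2^{|Q_2|}$) together with the lockstep property to conclude that the concrete Kleene iterates are finitely many. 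The paper presents exactly this argument in the paragraphs surrounding the corollary; your write-up is in fact slightly more explicit about discharging the hypotheses of Theorem~\ref{theorem:dualalgorithm} and about why lines~2--4 implement the inclusion check $\vectarg{\epsilon}{F_1}\subseteq\vect{Y}$.
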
 

Finally, it is worth citing that \citet{fiedor2019nested} put forward an algorithm for deciding WS1S formula which relies on the same lfp computation used in \AlgRegularA.
Then, they derive a dual gfp computation by relying on Park's duality~\cite{park1969fixpoint}: \(\lfp (\lambda X \ldotp f(X)) = (\gfp (\lambda X \ldotp (f(X^c))^c))^c\).
Their approach differs from ours since we use the Equivalence~\eqref{eqn:duality} to compute a gfp, different from the lfp, which still allows us to decide the inclusion problem.
Furthermore, their algorithm decides whether a given automaton accepts \(\epsilon\) and it is not clear how their algorithm could be extended for deciding language inclusion.
\clearpage{}%
\clearpage{}%
%
\chapter{Searching on Compressed Text}
\label{chap:zearch}

In this chapter, we show how to instantiate the quasiorder-based framework from Chapter~\ref{chap:LangInc} to search on compressed text.
Specifically, we adapt Algorithm \AlgGrammarA to report the number of lines in a grammar-compressed text containing a match for a given regular expression.

The problem of searching in compressed text is of practical interest due the growing amount of information handled by modern systems, which demands efficient techniques both for compression, to reduce the storage cost, and for regular expression searching, to speed up querying. 
As an evidence of the importance of this problem, note that state of the art tools for searching with regular expressions, such as \tool{grep} and \tool{ripgrep}, provide a method to search on compressed files by decompressing them on-the-fly. 

In the following, we focus on the problem of \emph{counting}, i.e. finding the number of lines of the input text that contain a match for the expression.   
This type of query is supported out of the box by many tools\footnote{Tools such as \tool{grep}, \tool{ripgrep}, \tool{awk} and \tool{ag}, among others, can be used to report the number of matching lines in a text.}, which evidences its practical interest.
However, when the text is given in compressed form, the fastest approach in practice is the \emph{decompress and search approach}, i.e. querying the uncompressed text as it is recovered by the decompressor.
In this chapter, we challenge this approach.

Lossless compression of textual data is achieved by finding repetitions in the input text and replacing them by references.
We focus on grammar-based compression schemes in which each tuple ``reference → repeated text'' is considered as a rule of a context-free grammar.
The resulting grammar, produced as the output of the compression, generates a language consisting of a single word: the uncompressed text.
Figure~\ref{fig:compress} depicts the output of a grammar-based compression algorithm.

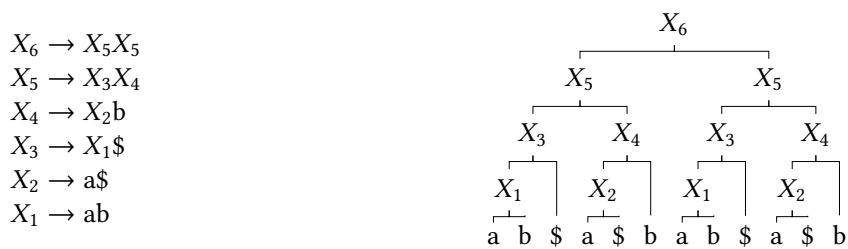
\begin{figure}[!ht]
\centering
\resizebox{!}{95pt}{
\begin{minipage}[b]{.40\textwidth}
\vspace{0pt}\hspace{0pt}$X_6→ X_5 X_5$\\
\vspace{0pt}\hspace{0pt}$X_5→ X_3 X_4$\\
\vspace{0pt}\hspace{0pt}$X_4 → X_2 \mathrm{b}$ \\
\vspace{0pt}\hspace{0pt}$X_3 → X_1 \$ $\\
\vspace{0pt}\hspace{0pt}$X_2 → \mathrm{a} \$ $\\
\vspace{0pt}\hspace{0pt}$X_1 → \mathrm{a} \mathrm{b}$ \\
\end{minipage}
\begin{minipage}[b]{.4\textwidth}
\begin{tikzpicture}
  \tikzset{%
    every node/.style={align=center},
    edge from parent/.style={%
      draw,edge from parent
      path={(\tikzparentnode.south)- +(0,-8pt)-| (\tikzchildnode)}
    },
    level distance=22pt,
    sibling distance=0pt,
    frontier/.style={distance from root=85pt} %
  }
  \Tree[.$X_6$ [.$X_5$ [.$X_3$ [.$X_1$ a b ]\$ ] [.$X_4$ [.$X_2$ a \$ ] b ]] [.$X_5$ [.$X_3$ [.$X_1$ a b ]\$ ]  [.$X_4$ [.$X_2$ a \$ ] b ] ] ]
\end{tikzpicture}
\end{minipage}}
\caption{List of grammar rules (left) generating the string
  ``\textrm{a\hspace{1pt}b\hspace{1pt}\$\hspace{1pt}a\hspace{1pt}\$\hspace{1pt}b\hspace{1pt}a\hspace{1pt}b\hspace{1pt}\$\hspace{1pt}a\hspace{1pt}\$\hspace{1pt}b}'' (and no other)
  as evidenced by the parse tree (right).}
\label{fig:compress}
\end{figure}

Intuitively, the decompress and search approach prevents the searching algorithm from taking advantage of the repetitions in the data found by the compressor.
For instance, in the grammar shown in Figure~\ref{fig:compress}, the decompress and search approach results in processing the subsequence ``$\textrm{ab\$a\$b}$'' twice.
By working on the compressed data, our algorithm would process that subsequence once and reuse the information each time it finds the variable \(X_5\).

Given a grammar-compressed text and a regular expression, deciding whether the compressed text matches the expression amounts to deciding the emptiness of the intersection of the languages generated by the grammar and an automaton built for the regular expression.

In order to solve this emptiness problem, we reduce it to an inclusion problem.
Note that this reduction is possible since the grammar generates a single word and, therefore, \(\{w\} \cap L \neq \varnothing \Lra w \in L\), where \(L\) is the language generated by the regular expression.
Then, we could instantiate the quasiorder-based framework described in Chapter~\ref{chap:LangInc} with different quasiorders to decide the inclusion.

However, in order to go beyond a yes/no answer and report or count the exact matches, we need to compute some extra information for each variable of the grammar.
This extra information is computed for the terminals of the grammar and then propagated through the variables according to the grammar rules in a bottom-up fashion.
To do that, we iterate thorough the grammar rules and compose, for each of them, the information previously computed for the variables on the right hand side.
For example, when processing rule \(X_3 {\to} X_1\$\) of Figure~\ref{fig:compress} our algorithm composes the information for \(X_1\) with the one for \(\$\).
The information computed for the string ``$\textrm{ab\$}$'', will be reused every time the variable \(X_3\) appears in the right hand side of a rule.

Following this idea, we present an algorithm for counting the lines in a grammar-compressed text containing a match for a regular expression whose runtime does not depend on the size $T$ of the uncompressed text.
Instead, it runs in time \emph{linear} in the size of its \emph{compressed version}.
Furthermore, the information computed for counting can be used to perform an \emph{on-the-fly}, \emph{lazy} decompression to recover the matching lines from the compressed text.
Note that, for reporting the matching lines, the dependency on $T$ in unavoidable.

The salient features of our approach are:
\paragraph{Generality} Our algorithm is not tied to any particular grammar-based compressor.
Instead, we consider the compressed text is given by a straight line program (SLP for short), i.e. a grammar generating the uncompressed text and nothing else.

Finding the smallest SLP $g$ generating a text of length $T$ is an NP-hard problem, as shown by \citet{Charikar2005Smallest}, for which grammar-based compressors such as LZ78~\cite{ziv1978compression}, LZW~\cite{welch1984technique}, RePair~\cite{larsson2000off} and Sequitur~\cite{nevill1997compression} produce different approximations.
For instance, \citet{Hucke2016Smallest} showed that the LZ78 algorithm produces a representation of size $Ω\bigl(\len{g}{\cdot} (T/\log{T})^{2/3}\bigr)$ and the representation produced by the RePair algorithm has size $Ω\bigl(\len{g}{\cdot} \log{T}/\log(\log{T})\bigr)$.

Since it is defined over SLPs, our algorithm applies to all such approximations, including $g$ itself.

\paragraph{Nearly optimal data structures} We define data structures enabling the algorithm to run in time linear in the size of the compressed text. 
With these data structures our algorithm runs in $\mathcal{O}(t {\cdot} s^3)$ time using $\mathcal{O}(t {\cdot} s^2)$ space where $t$ is the size of the compressed text, i.e. the grammar, and $s$ is the size of the automaton built from the regular expression. 
When the automaton is deterministic, the complexity drops to $\mathcal{O}(t{\cdot}s)$ time and $\mathcal{O}(t{\cdot}s)$ space.

As shown by \citet{Amir2018FineGrained}, there is no combinatorial\footnote{Interpreted as
any \emph{practically efficient} algorithm that does not suffer from the issues of Fast Matrix Multiplication such as large constants and inefficient memory usage.} algorithm improving these time complexity bounds beyond \emph{polylog} factors, hence our algorithm is \emph{nearly optimal}.

\paragraph{Efficient implementation} We present \tool{zearch}, a purely \emph{sequential} implementation of our algorithm which uses the above mentioned data structures.\footnote{\tool{zearch} can optionally report the matching lines.}

The experiments show that \tool{zearch} requires up to $25\pct$ less time than the state of the art: running \tool{hyperscan} on the uncompressed text as it is recovered by \tool{lz4} (in \emph{parallel}).
Furthermore, when the grammar-based compressor achieves high compression ratio (above 13:1), running \tool{zearch} on the compressed text is as fast as running \tool{hyperscan} directly on the uncompressed text.
Such compression ratios are achieved, for instance, when working with automatically generated log files.

\section{Finding the Matches}\label{sec:findingMatches}
Recall that the problem of deciding whether a grammar-compressed text contains a match for a regular expression can be reduced to an emptiness problem for the intersection of the languages generated by a grammar and an automaton.
Indeed, given an SLP \(\cP\) generating a text \(T\) over an alphabet \(Σ\), i.e. \(\lang{\cP}=\{T\}\) where \(T \in Σ^*\), and an automaton \(\cN = \tuple{Q, Σ, δ, I, F}\) representing a regular expression, we find that:
\[\text{There exists a substring of } T \text{ in }\lang{\cN}  \Lra \lang{\cP} \cap \bigl(Σ^* \cdot \lang{\cN}\cdot Σ^*\bigr) \neq \varnothing \enspace .\]

\noindent On the other hand, since \(\lang{\cP}\) contains exactly one word we have that
\[\lang{\cP} \cap \bigl(Σ^* \cdot \lang{\cN}\cdot Σ^*\bigr) \neq \varnothing \Lra \lang{\cP} \subseteq \bigl(Σ^* \cdot \lang{\cN}\cdot Σ^*\bigr) \enspace .\]

As a consequence, the problem of deciding whether a grammar-compressed text contains a match for a regular expression can be solved by using Algorithm \AlgGrammarA with the quasiorder \(\qo_{\cN}\) as described in Chapter~\ref{chap:LangInc}.

Observe that, as the following example evidences, when restricting to SLPs the iteration of the \(\Kleene\) procedure updates the abstraction for each variable of the grammar \emph{exactly once} since there are no loops in SLPs.
As a consequence, it is enough to process the rules in an orderly manner and compute the abstraction for each variable, i.e. \(α(X)\), exactly once.

\begin{example}\label{example:search}
Let \(\cP\) be the SLP from Figure~\ref{fig:compress} and let \(\cN\) and \(\cN'\) be the automata from Figure~\ref{fig:NFAsearch}.
Next, we show the Kleene iterates computed by Algorithm \AlgGrammarA which, as shown in Chapter~\ref{chap:LangInc}, works on the abstract domain \(\tuple{\AC_{\tuple{\wp(Q\times Q),\subseteq}},\sqsubseteq}\) with the abstraction function defined as \(α(X) = \minor{\{\ctx_{\cN}(u) \mid u \in X\}}\).

To simplify the notation, we denote the pair \((q_i,q_j)\) by \(ij\).
{\small
\begin{align*}
\left(\hspace{-5pt}\begin{array}{c}
α(W_{X_6}^{\cP}) \\[2pt]
α(W_{X_5}^{\cP}) \\[2pt]
α(W_{X_4}^{\cP}) \\[2pt]
α(W_{X_3}^{\cP})\\[2pt]
α(W_{X_2}^{\cP})\\[2pt]
α(W_{X_1}^{\cP})\end{array}\hspace{-5pt}\right) \,{=}\, \left(\hspace{-5pt}\begin{array}{c}
\varnothing \\[2pt]
\varnothing \\[2pt]
\varnothing \\[2pt]
\varnothing \\[2pt]
\minor{\{{11},{33}\}} \\[2pt]
\minor{\{{11},{33},{13}\}}\end{array}\hspace{-5pt}\right) \,{\Ra}\, 
\left(\hspace{-5pt}\begin{array}{c}
\varnothing \\[2pt]
\varnothing \\[2pt]
\minor{\{{11},{33}\}} \\[2pt]
\minor{\{{11},{33},{13}\}}\\[2pt]
\minor{\{{11},{33}\}} \\[2pt]
\minor{\{{11},{33},{13}\}}\end{array}\hspace{-5pt}\right) \,{\Ra}\, 
\left(\hspace{-5pt}\begin{array}{c}
\varnothing \\[2pt]
\minor{\{{11},{33},{13}\}} \\[2pt]
\minor{\{{11},{33}\}} \\[2pt]
\minor{\{{11},{33},{13}\}}\\[2pt]
\minor{\{{11},{33}\}} \\[2pt]
\minor{\{{11},{33},{13}\}}\end{array}\hspace{-5pt}\right) \,{\Ra}\,
\left(\hspace{-5pt}\begin{array}{c}
\minor{\{{11},{33},{13}\}} \\[2pt]
\minor{\{{11},{33},{13}\}} \\[2pt]
\minor{\{{11},{33}\}} \\[2pt]
\minor{\{{11},{33},{13}\}}\\[2pt]
\minor{\{{11},{33}\}} \\[2pt]
\minor{\{{11},{33},{13}\}}\end{array}\hspace{-5pt}\right)
\end{align*}}
Since for every variable \(X_n\) the value of \(α(X_n)\) is computed by combining the values of \(α(X_i)\) and \(α(X_j)\) for some \(i,j < n\), the \(\Kleene\) procedure is equivalent to computing, sequentially, the values of \(α(X_1), α(X_2), \ldots, \linebreak α(X_5)\).
In this case, since \((q_1,q_3) \in α(X_5)\) then Algorithm~\AlgGrammarA concludes that the language inclusion \(\lang{\cP} \subseteq \lang{\cN}\) holds, i.e. there exists a substring \(w\) of the uncompressed such that \(w \in \lang{\cN'}\). \eox
\end{example}

\begin{figure}[t]
    \centering
\begin{minipage}[l]{0.49\textwidth}
  \begin{tikzpicture}[->,>=stealth',shorten >=1pt,auto,node distance=5mm and 1cm,thick,initial text=]
  \tikzstyle{every state}=[scale=0.75,fill=customblue!60,draw=blue!60,text=black]
  
  \node[initial,state] (1) {\(1\)};
  \node[state] (2) [right=of 1] {\(2\)};
  \node[state, accepting] (3) [right=of 2] {\(3\)};
  
  \path (1) edge[loop above, color=white] node [color=white] {\(a,b,\$\)} (1)
        (1) edge node {\(a,b\)} (2)
        (2) edge node {\(b\)} (3);
  \end{tikzpicture}
\end{minipage}
\begin{minipage}[r]{0.49\textwidth}
  \begin{tikzpicture}[->,>=stealth',shorten >=1pt,auto,node distance=5mm and 1cm,thick,initial text=]
  \tikzstyle{every state}=[scale=0.75,fill=customblue!60,draw=blue!60,text=black]
  
  \node[initial,state] (1) {\(1\)};
  \node[state] (2) [right=of 1] {\(2\)};
  \node[state, accepting] (3) [right=of 2] {\(3\)};
  
  \path (1) edge[loop above] node {\(a,b,\$\)} (1)
        (1) edge node {\(a,b\)} (2)
        (2) edge node {\(b\)} (3)
        (3) edge[loop above] node {\(a,b,\$\)} (3);
  \end{tikzpicture}
\end{minipage}
  \caption{NFAs \(\cN'\) (left) and \(\cN\) (right) on \(Σ = \{a,b,\$\}\) with \(\lang{\cN'} = \{ab,bb\}\) and \(\lang{\cN}= Σ^* \cdot \lang{\cN'} \cdot Σ^*\).}\label{fig:NFAsearch}
\end{figure}

Furthermore, in an SLP each variable generates exactly one word and, therefore, the abstraction of a variable consists of a single set, i.e. \(α(X) \in \AC_{\tuple{\wp(Q\times Q),\subseteq}}\) is a singleton as shown in Example~\ref{example:search}.
As a consequence, we can drop the \(\minor{\cdot}\) from function \(\Fn^{\cN}_{\cP}\) defined in Section~\ref{sec:ACGrammar}, since \(\minor{\{\ctx_{\cN}(w)\}} = \{\ctx_{\cN}(w)\}\) for any word, and write:
\[\Fn_{\cP }^{\cN}(\tuple{X_i}_{i\in[0,n]}) \ud \langle \{X_j \comp X_k \mid X_i {\to} X_j X_k \in P\} \rangle_{i \in [0,n]}\]
Recall that, by definition, for all \(X_j, X_k \in \wp(Q \times Q)^{|\cV|}\),
\[X_j \comp X_k = \{(q_1,q_2) \mid \exists q' \in Q, \; (q_1,q') \in X_j \land (q',q_2) \in X_k\} \enspace .\]

Finally, given an NFA \(\cN'\) it is straightforward to build an automaton \(\cN\) generating the language \(Σ^* \cdot \lang{\cN} \cdot Σ^*\) by adding self-loops reading each letter of the alphabet to every initial and every final state of \(\cN'\) as shown in Figure~\ref{fig:NFAsearch}.
Instead of adding these transitions to \(\cN\), which, as shown in Example~\ref{example:search}, results in adding the pairs \(\{(q,q) \mid q \in I \cup F\}\) to \(\ctx_{\cN}(w)\) for every word \(w \in Σ^*\), we consider them as implicit.

As a consequence, when the input grammar is an SLP and we are interested in deciding whether \(\lang{\cP} \subseteq Σ^* \cdot \lang{\cN} \cdot Σ^*\), Algorithm~\AlgGrammarA can be written as Algorithm~\AlgSLPIncS.
Observe that Algorithm~\AlgSLPIncS uses the transition function \(δ\) to store and manipulate the sets \(\ctx_{\cN}(X_i)\) for each variable \(X_i\) of the grammar, i.e.
\[(q_1,X_i,q_2) \in δ \Lra (q_1,q_2) \in \ctx_{\cN}(X_i)\enspace .\]

\begin{figure}[!ht]
\RemoveAlgoNumber
\begin{algorithm}[H]
\SetAlgorithmName{\AlgSLPIncS}{}

\caption{Algorithm for deciding $\lang{\cP} \subseteq Σ^*\cdot \lang{\cN}\cdot Σ^*$.}\label{alg:SLPIncS}

\KwData{An SLP $\cP=\tuple{V,Σ,\pr}$ and an NFA $\cN=\tuple{Q,Σ,δ,I,F}$.}

\SetKwProg{myproc}{Procedure}{}{}

\myproc{\textsc{main}}{
  \ForAll{$\ell = 1,2,\ldots,\len{V}{-}1$}{\label{alg:algorithmTheoryDecide:step}
    \textbf{let} $(X_\ell → α_\ell β_\ell) \in \pr$; \label{alg:algorithmTheoryDecide:loopl}
    
    \ForAll{$q_1,q'\in Q$ s.t. $(q_1,α_\ell,q')\in δ$ or $q_1=q'\in I$}{\label{alg:algorithmTheoryDecide:loopa}
      \ForAll{$q_2 \in Q$ s.t. $(q',β_\ell,q_2)∈ δ$ or $q'=q_2 \in F$}{\label{alg:algorithmTheoryDecide:loopb}
        $δ:= δ \cup \{(q_1,X_\ell,q_2)\}$; \label{alg:algorithmTheoryDecide:add}
      }
    }
  }\label{alg:algorithmTheoryDecide:looplend}
  \Return \(((q_1,X_{\len{\cV}},q_2)\in δ \land q_1 \in I \land q_2 \in F\ \mathbin{?}\ \True \colon \False)\);\label{alg:algorithmTheoryDecide:return}
}
\end{algorithm}
\end{figure}

\section{Counting Algorithm}\label{sec:provingCorrectness}
State of the art tools for regular expression search are equipped with a number of features\footnote{\url{https://beyondgrep.com/feature-comparison/}} to perform different operations beyond deciding the existence of a match in the text.
Among the most relevant of these features we find \emph{counting}.
Tools like \tool{grep}\footnote{\url{https://www.gnu.org/software/grep}}, \tool{rg}\footnote{\url{https://github.com/BurntSushi/ripgrep}}, \tool{ack}\footnote{\url{https://github.com/beyondgrep/ack2}} or \tool{ag}\footnote{\url{https://geoff.greer.fm/ag/}} report the number of lines containing a match, ignoring matches across lines.
Next we extend Algorithm~\AlgSLPIncS to perform this sort of counting.

Let $\NL$ denote the new-line delimiter and let $\widehat{Σ} = Σ {\setminus} \{\NL\}$.
Given a string $w \in Σ^+$ compressed as an SLP $\cP=\tuple{V,Σ,\pr}$ and an automaton $\cN=\tuple{Q,\widehat{Σ},δ,I,F}$ built from a regular expression, Algorithm \AlgCountLines reports the number of lines in $w$ containing a match for the expression.
Note that, as the tools mentioned in the previous paragraph, we deliberately ignore matches across lines.

As an overview, our algorithm computes some \emph{counting information} for each alphabet symbol of the grammar (procedure \textsc{init\_automaton}) which is then propagated, in a bottom-up manner, to the axiom rule.
Such propagation is achieved by iterating through the grammar rules (loop in line~\ref{alg:algorithmTheoryCount:step}) and combining, for each rule, the information for the symbols on the right hand side to obtain the information for the variable on the left (procedure \textsc{count}).
Finally, the output of the algorithm is computed from the information propagated to the axiom symbol (line~\ref{alg:algorithmTheoryCount:return}).

\begin{figure}[!ht]
\RemoveAlgoNumber
\begin{algorithm}[H]
\SetAlgorithmName{\AlgCountLines}{}

\caption{Algorithm for counting the lines in $\lang{\cP}$ that contain a word in \(\lang{\cN}\).}\label{alg:CountLines}

\KwData{An SLP $\cP=\tuple{V,Σ,\pr}$ and an NFA $\cN=\tuple{Q,\widehat{Σ},δ,I,F}$.}

\SetKwProg{myproc}{Procedure}{}{}

\vspace{3pt}

\myproc{\textsc{count}($X$, $α$, $β$, $m$)}{
$\algvarnl{X}:=\algvarnl{α} \lor \algvarnl{β}$;

\(\algvarleft{X} := ( \neg \algvarnl{α} \mathbin{?} \algvarleft{α} \lor \algvarleft{β} \lor m \colon \algvarleft{α} )\);

\(\algvarright{X} := ( \neg \algvarnl{β} \mathbin{?} \algvarright{α}\lor \algvarright{β}\lor m \colon \algvarright{β} )\);

\(\algvarcount{X} := \algvarcount{α} + \algvarcount{β} +  \bigl( \algvarnl{α} \land \algvarnl{β} \land (\algvarright{α} \lor \algvarleft{β} \lor m) \;{\mathbin{?}}\; {1}{\colon}{0}\bigr)\);
}

\vspace{3pt}

\myproc{\textsc{init\_automaton}()}{
  \ForAll{$a \in Σ$}{
    \(\algvarnl{a} := (a = \NL) \)\;

    \(\algvarcount{a} := 0 \)\;

    \(\algvarleft{a} := \bigl((q_0,a,q_f) \in δ, \; q_0 \in I,\; q_f \in F\bigr)\)\;

    \(\algvarright{a} := \algvarleft{a}\)\;
  }
}

\vspace{3pt}

\myproc{\textsc{main}}{
  \textsc{init\_automaton()}\;

  \ForAll{$\ell = 1,2,\ldots,\len{V}$}{\label{alg:algorithmTheoryCount:step}
    \textbf{let} $(X_\ell → α_\ell β_\ell) \in \pr$; \label{alg:algorithmTheoryCount:loopl}
    
    new\_match := $\False$;

    \ForAll{$q_1,q'\in Q$ s.t. $(q_1,α_\ell,q')\in δ$ or $q_1{=}q'\in I$}{\label{alg:algorithmTheoryCount:loopa}
      \ForAll{$q_2 \in Q$ s.t. $(q',β_\ell,q_2)∈ δ$ or $q'{=}q_2 \in F$}{\label{alg:algorithmTheoryCount:loopb}
        $δ:= δ \cup \{(q_1,X_\ell,q_2)\}$; \label{alg:algorithmTheoryCount:add}

        \parbox{\dimexpr\textwidth-2\algomargin\relax}{new\_match := new\_match $\lor \bigl( q_1\in I\land q'\notin \bigl(I\cup F\bigr) \land q_2\in F\bigr)$;}\label{alg:algorithmTheoryCount:nm}
      }
    }
    \textsc{count}{($X_\ell,α_\ell,β_\ell,$new\_match)};
  }\label{alg:algorithmTheoryCount:looplend}
  \Return \(\algvarcount{X_{\len{V}}} + (\algvarnl{X_{\len{V}}} \ \mathbin{?}\ \algvarleft{X_{\len{V}}}+\algvarright{X_{\len{V}}} \colon \algvarleft{X_{\len{V}}})\);\label{alg:algorithmTheoryCount:return}
}
\end{algorithm}
\end{figure}

Define a \demph{line} as a maximal factor of $w$ each symbol of which belongs to $\widehat{Σ}$, a \demph{closed line} as a line which is not a prefix nor a suffix of $w$ and a \demph{matching line} as a line in $\widehat{\lang{\cN}}$, where $\widehat{\lang{\cN}} = \widehat{Σ}^*\cdot\lang{\cN}\cdot\widehat{Σ}^*$.

\begin{example}
Consider the word \(w = ``\textrm{a\hspace{1pt}b\hspace{1pt}\NL\hspace{1pt}a\hspace{1pt}\NL\hspace{1pt}b\hspace{1pt}a\hspace{1pt}b\hspace{1pt}\NL}"\) and an NFA \(\cN\) with \(\lang{\cN} = \{ba\}\).
Then the strings ``\(ab\)'', ``\(a\)'' and ``\(bab\)'' are \emph{lines} of which only the strings ``\(ab\)'' and ``\(a\)'' are \emph{closed lines} and ``\(bab\)'' is the only \emph{matching line}.\eox
\end{example}

\begin{definition}[Counting Information]
Let \(\cN\) be an NFA and let \(\cP=\tuple{\cV,Σ,P}\) be an SLP.
The \emph{counting information of} $τ \in (V \cup Σ)$, with $τ\produces^* u$ and $u \in Σ^+$, is the tuple $\counting{τ}\ud\tuple{\varnl{τ},\varleft{τ},\varright{τ},\varcount{τ}}$ where
\begin{align*}
\varnl{τ} &\ud \exists k\,; (u)_k = \NL & 
\varleft{τ} & \ud \exists i, \; (u)_{1,i} \in \widehat{Σ}^*\cdot\lang{\cN} \\
\varright{τ} &\ud \exists j, \; (u)_{j,\dag} \in \lang{\cN}\cdot\widehat{Σ}^*& 
\varcount{τ} & \ud \len{\{(i{+}1,j{-}1) \mid (u)_{i,j} \in \NL\cdot\widehat{\lang{\cN}}\cdot\NL\}}\tag*{\rule{0.5em}{0.5em}}
\end{align*}
\end{definition}

Note that $\varnl{τ}$, $\varleft{τ}$ and $\varright{τ}$ are boolean values while $\varcount{τ}$ is an integer.
It follows from the definition that the number of \emph{matching lines} in $u$, with $τ \produces^*u$, is given by the number of \emph{closed matching lines} ($\varcount{τ}$) plus the prefix of $u$ if{}f it is a \emph{matching line} ($\varleft{τ}$) and the suffix of $u$ if{}f it is a \emph{matching line} ($\varright{τ}$) different from the prefix ($\varnl{τ})$.
Since whenever $\varnl{τ} = \False$ we have $\varleft{τ} = \varright{τ}$ , it follows that
\[\sharp\text{\emph{matching lines} in } u = \varcount{τ} + \left\{ 
\begin{array}{ll}
  1 & \text{if } \varleft{τ}\\
  0 & otherwise \end{array} \\
   \right. + \left\{ 
\begin{array}{ll}
  1 & \text{if } \varnl{τ} \land \varright{τ}\\
  0 & otherwise \end{array} \\
   \right.
\]

Computing the counting information of $τ$ requires deciding membership of certain factors of $u$ in $\widehat{\lang{A}}$.
As explained before, we reduce these membership queries to language inclusion checks which are solved by Algorithm~\AlgSLPIncS. 
This operation corresponds to lines~\ref{alg:algorithmTheoryCount:loopa} to \ref{alg:algorithmTheoryCount:add} of Algorithm \AlgCountLines.
As a result, after processing the rule for $τ$, we have $(q_1,τ,q_2) \in δ$ if{}f the automaton moves from $q$ to $q'$ reading\begin{myEnumAL}
\item $u$,
\item a suffix of $u$ and $q_1 \in I$, or
\item a prefix of $u$ and $q_2 \in F$.
\end{myEnumAL}

Procedures \textsc{count} and \textsc{init\_automaton} are quite straightforward, the main difficulty being the computation of $\algvarcount{X}$ which we explain next.
Let $x,y \in Σ^+$ be the strings generated by $α$ and $β$, respectively.
Given rule $X \to α β$, $X$ generates all the matching lines generated by $α$ and $β$ plus, possibly, a ``new'' matching line of the form $z = (x)_{i,\dag}(y)_{1,j}$ with $1< i \leq \len{x}$ and $1 \leq j < \len{y}$.
Such an extra matching line appears if{}f both $α$ and $β$ generate a $\NL$\, symbol and one of the following holds:

\begin{myEnumA}
\item The suffix of $x$ matches the expression.
\item The prefix of $y$ matches the expression.
\item There is a new match $m \in z$ with $m \notin x$, $m \notin y$ (line~\ref{alg:algorithmTheoryCount:nm}).
\end{myEnumA}

\begin{example}
Let $\cN$ be an automaton with $\lang{\cN}=\{ab,ba\}$ and let $X\to α β$ be a grammar rule with $α\produces^* ba\NL a$ and $β\produces^*b \NL aba$. 
Then $X\produces^* ba\NL ab \NL aba$.

The matching lines generated by $α$, $β$ and $X$ are, respectively, $\{ba\}$, $\{aba\}$ and $\{ba,ab,aba\}$.
Moreover 
\[\counting{α}  = \tuple{\True,\True,\False,0} \quad \text{ and } \quad \counting{β}  = \tuple{\True,\False,\True,0}\enspace .\]
Therefore, applying function \textsc{count} we find that $\counting{X}=\tuple{\True,\True,\True,1}$ so the number of matching lines is $1{+}1{+}1{=}3$, as expected. \eox
\end{example}

Note that the counting information computed by Algorithm \AlgCountLines can be used to uncompress \emph{only} the matching lines by performing a top-down processing of the SLP. 
For instance, given $X \to α β$ with $\counting{X}=\tuple{\True,\True,\False,0}$ and $\counting{α}=\tuple{\True,\True,\False,0}$, there is no need to decompress the string generated by $β$ since we are certain it is not part of any matching line (otherwise we should have $\varcount{X}>0$ or $\varright{X}=\True$).

Next, we describe the data structures that we use to implement Algorithm \AlgCountLines with \emph{nearly optimal} complexity.

\subsection{Data Structures}\label{sec:DataStructures}
We assume the alphabet symbols, variables and states are indexed and use the following data structures, illustrated in Figure~\ref{fig:datastructure}: an array $\mathcal{A}$ with $t{+}\len{Σ}$ elements, where $t$ is the number of rules of the SLP, and two $s \times s$ matrices $\mathcal{M}$ and $\mathcal{K}$ where $s$ is the number of states of the automaton.

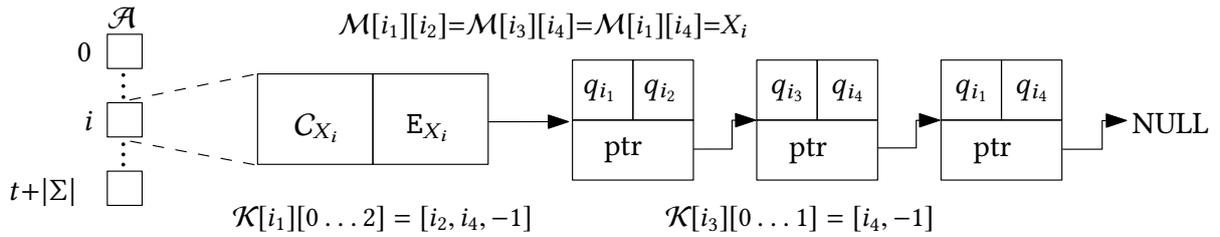
\begin{figure}[!ht]
\centering
\resizebox{\textwidth}{!}{
\begin{tikzpicture}[ipe stylesheet]
%
\draw[shift={(285.368, 716.58)}, scale=0.6563]
    (0, 0) rectangle (32, -32);
  \draw[shift={(306.37, 716.58)}, scale=0.6563]
    (0, 0) rectangle (32, -32);
  \draw[shift={(285.368, 695.577)}, scale=0.6563]
    (0, 0) rectangle (64, -32);
  \node[ipe node]
     at (188.695, 690.754) {$\counting{X_i}$};
  \draw[shift={(176, 711.737)}, xscale=0.8341, yscale=0.9892]
    (0, 0) rectangle (48, -32);
  \draw[shift={(123.816, 725.609)}, xscale=0.7558, yscale=0.7487]
    (0, 0) rectangle (16, -16);
  \draw[shift={(123.816, 701.651)}, xscale=0.7558, yscale=0.7487]
    (0, 0) rectangle (16, -16);
  \draw[shift={(123.816, 677.693)}, xscale=0.7558, yscale=0.7487]
    (0, 0) rectangle (16, -16);
  \node[ipe node]
     at (113.401, 716.068) {$0$};
  \node[ipe node]
     at (115.793, 692.139) {$i$};
  \node[ipe node]
     at (89.623, 667.626) {$t{+}\len{Σ}$};
  \draw[ipe dash dashed]
    (131.5102, 702.6101)
     -- (175.901, 711.756);
  \draw[ipe dash dashed]
    (131.7285, 688.1998)
     -- (175.901, 679.767);
  \node[ipe node]
     at (227.757, 690.888) {$\edges{X_i}$};
  \node[ipe node]
     at (296.463, 683.341) {ptr};
  \node[ipe node]
     at (289.156, 705.124) {$q_{i_1}$};
  \node[ipe node]
     at (310.158, 705.124) {$q_{i_2}$};
  \node[ipe node]
     at (479.878, 690.956) {NULL};
  \pic[ipe mark tiny]
     at (129.3755, 711.3358) {ipe disk};
  \pic[ipe mark tiny]
     at (129.3876, 707.4932) {ipe disk};
  \pic[ipe mark tiny]
     at (129.4054, 703.6406) {ipe disk};
  \pic[ipe mark tiny]
     at (129.3667, 687.1024) {ipe disk};
  \pic[ipe mark tiny]
     at (129.3876, 683.5351) {ipe disk};
  \pic[ipe mark tiny]
     at (129.3684, 679.9975) {ipe disk};
  \node[ipe node, font=\small]
     at (167.443, 658.533) {$\mathcal{K}[i_1][0…2] = [i_2,i_4,{-}1]$};
  \draw[shift={(349.367, 716.58)}, scale=0.6563]
    (0, 0) rectangle (32, -32);
  \draw[shift={(370.37, 716.58)}, scale=0.6563]
    (0, 0) rectangle (32, -32);
  \draw[shift={(349.367, 695.577)}, scale=0.6563]
    (0, 0) rectangle (64, -32);
  \draw[shift={(327.381, 685.411)}, xscale=0.4477, yscale=0.4841, ->]
    (0, 0)
     -- (29.958, 0.037)
     -- (29.958, 20.86)
     -- (48.343, 20.954);
  \node[ipe node]
     at (360.735, 683.116) {ptr};
  \node[ipe node]
     at (354.359, 704.974) {$q_{i_3}$};
  \node[ipe node]
     at (375.362, 704.974) {$q_{i_4}$};
  \draw[shift={(413.367, 716.58)}, scale=0.6563]
    (0, 0) rectangle (32, -32);
  \draw[shift={(434.37, 716.58)}, scale=0.6563]
    (0, 0) rectangle (32, -32);
  \draw[shift={(413.367, 695.577)}, scale=0.6563]
    (0, 0) rectangle (64, -32);
  \node[ipe node]
     at (424.735, 683.116) {ptr};
  \node[ipe node]
     at (418.359, 704.974) {$q_{i_1}$};
  \node[ipe node]
     at (439.362, 704.974) {$q_{i_4}$};
  \draw[shift={(391.504, 685.815)}, xscale=0.4429, yscale=0.4629, ->]
    (0, 0)
     -- (29.958, 0.037)
     -- (29.958, 20.86)
     -- (48.343, 20.954);
  \draw[shift={(455.566, 685.358)}, xscale=0.46, yscale=0.4804, ->]
    (0, 0)
     -- (29.958, 0.037)
     -- (29.958, 20.86)
     -- (48.343, 20.954);
  \node[ipe node, font=\small]
     at (317.726, 658.465) {$\mathcal{K}[i_3][0…1] = [i_4,{-}1]$};
  \node[ipe node, font=\small]
     at (204.039, 725.877) {$\mathcal{M}[i_1][i_2] {=}\mathcal{M}[i_3][i_4] {=}
\mathcal{M}[i_1][i_4] {=} X_i$};
  \node[ipe node]
     at (124.061, 727.992) {$\mathcal{A}$};
  \draw[->]
    (256.2657, 694.9724)
     -- (283.9817, 695.0154);
  \draw[shift={(216.062, 711.739)}, xscale=0.8341, yscale=0.9892]
    (0, 0) rectangle (48, -32); 
 \end{tikzpicture}
}
\caption{Data structures enabling nearly optimal running time for Algorithm \AlgCountLines. The image shows the contents of $\mathcal{M}$ after processing rule $X_i → α_i β_i$ and the contents of $\mathcal{K}$ after processing $X_\ell → α_\ellβ_\ell$ with $β_\ell = X_i$.}\label{fig:datastructure}
\end{figure}

Each element $\mathcal{A}[i]$ contains the information related to variable $X_i$, i.e. $\counting{X_i}$ and the list of transitions labeled with $X_i$, denoted $\edges{X_i}$.
We store $\counting{X}$ using one bit for each $\algvarnl{X}$, $\algvarleft{X}$ and $\algvarright{X}$ and an integer for $\algvarcount{X}$.

For each rule $X_\ell → α_\ell β_\ell$ the matrix $\mathcal{K}$ is set so that row $i$ contains the set of states reachable from the state $q_i$ by reading the string generated by $β_\ell$, i.e. \(\mathcal{K}[i] = \{q_j \mid (q_i,β_\ell,q_j) \in δ\}\).
If there are less than $s$ such states we use a sentinel value (${-}1$ in Figure~\ref{fig:datastructure}).

Finally, each element $\mathcal{M}[i][j]$ stores the index $\ell$ of the last variable for which $(q_i,X_\ell,q_j)$ was added to $δ$.
Note that since rules are processed one at a time, matrices $\mathcal{K}$ and $\mathcal{M}$ can be reused for all rules.

Observe that it is straightforward to update the matrices \(\cM\) and \(\cK\) in \(\mathcal{O}(s^2)\) time for each rule $X_\ell → α_\ell β_\ell$ since there are up to \(s^2\) transitions \((q_i,β_\ell,q_j) \in δ\).
These data structures provide $\mathcal{O}(1)$ runtime for the following operations: 
\begin{myEnum}
\item[-] Accessing the information corresponding to $α_\ell$ and $β_\ell$ at line~\ref{alg:algorithmTheoryCount:loopl} (using $\mathcal{A}$).
\item[-] Accessing the list of pairs $(q,q')$ with $(q,α_\ell,q') \in δ$ at line~\ref{alg:algorithmTheoryCount:loopa} (using $\edges{X_i}$).
\item[-] Accessing the list of states $q_2$ with $(q',β_\ell,q_2) \in δ$ at line~\ref{alg:algorithmTheoryCount:loopb} (using $\mathcal{K}$).
\item[-] Inserting a pair $(q,q')$ in $\edges{X_i}$ (avoiding duplicates) at line~\ref{alg:algorithmTheoryCount:add} (using $\mathcal{M}$).
\end{myEnum}

As a result, Algorithm \AlgCountLines runs in $\mathcal{O}(t{\cdot}s^3)$\footnote{The algorithm performs \(t\) iterations of loop in line~\ref{alg:algorithmTheoryCount:loopl}, up to \(s^2\) iterations of loop in line~\ref{alg:algorithmTheoryCount:loopa} and up to \(s\) iterations for loop in line~\ref{alg:algorithmTheoryCount:loopb}.} time using $\mathcal{O}(t{\cdot}s^2)$ space when the automaton built from the regular expression is an NFA and it runs in $\mathcal{O}(t{\cdot}s)$ time and $\mathcal{O}(t{\cdot}s)$ space when the automaton is a DFA (each row of $\mathcal{K}$ stores up to one state, hence the loop in line~\ref{alg:algorithmTheoryCount:loopb} results in, at most, one iteration).

\citet[Thm.~3.2]{Amir2018FineGrained} proved that, under the Strong Exponential Time Hypothesis, there is no combinatorial algorithm for deciding whether a grammar-compressed text contains a match for a DFA running in $\mathcal{O}((t {\cdot} s)^{1-ε})$ time  with $ε{>}0$.
For NFAs, they proved~\citep[Thm.~4.2]{Amir2018FineGrained} that, under the $k$-Clique Conjecture, there is no combinatorial algorithm running in $\mathcal{O}((t {\cdot} s^3)^{1-ε})$ time.
Therefore, our algorithm is \emph{nearly optimal} in both scenarios.

\section{Implementation}\label{sec:implementation}
We implemented Algorithm \AlgCountLines, using the data structures described in the previous section, in a tool named \tool{zearch}\footnote{\url{https://github.com/pevalme/zearch}}.
Our tool works on \tool{repair}\footnote{\url{https://storage.googleapis.com/google-code-archive-downloads/v2/code.google.com/re-pair/repair110811.tar.gz}}-compressed text and, beyond counting the matching lines, it can also report them by partially decompressing the input file.
The implementation consists of less than 2000 lines of C code.

The choice of this particular compressor, which implements the RePair algorithm of \citet{larsson2000off}, is due to the little effort required to adapt Algorithm \AlgCountLines to the specifics of the grammar built by \tool{repair} and the compression it achieves (see Table~\ref{table:compression}).
However \tool{zearch} can handle any grammar-based compression scheme by providing a way to recover the SLP from the input file.

Recall that we assume the alphabet symbols, variables and states are indexed. 
For text compressed with \tool{repair}, the indexes of the alphabet symbols are $0…255$ ($Σ$ is fixed\footnote{Our algorithm also applies to larger alphabets, such as UTF8, without altering its complexity.}) and the indexes of the variables are $256…t{+}256$.
Typically, grammar-based compressors such as \tool{repair} encode the grammar so that rule $X \to α β$ appears always after the rules with $α$ and $β$ on the left hand side.
Thus, each iteration of the loop in line~\ref{alg:algorithmTheoryCount:loopl} reads a subsequent rule from the compressed input.

We translate the input regular expression into an $ε$-free NFA using the automata library \tool{libfa}\footnote{\url{http://augeas.net/libfa/index.html}} which applies Thompson's algorithm~\cite{thompson1968programming} with on-the-fly $ε$-removal.

\section{Empirical Evaluation}\label{sec:experimental}
Next we present a \emph{summary} of the experiments carried out to assess the performance of \tool{zearch}.
The details of the experiments, including the runtime and number of matching lines reported for each expression on each file and considering more tools, file sizes and regular expressions are available on-line\footnote{\url{https://pevalme.github.io/zearch/graphs/index.html}}, where we report graphs as the ones shown in Figure~\ref{fig:webGraphs}.
The following explanations about how the experiments reported in this thesis were carried out also apply to the larger set of experiments available on-line.

\begin{figure}[!htp]
\centering
\includegraphics[width=0.95\textwidth]{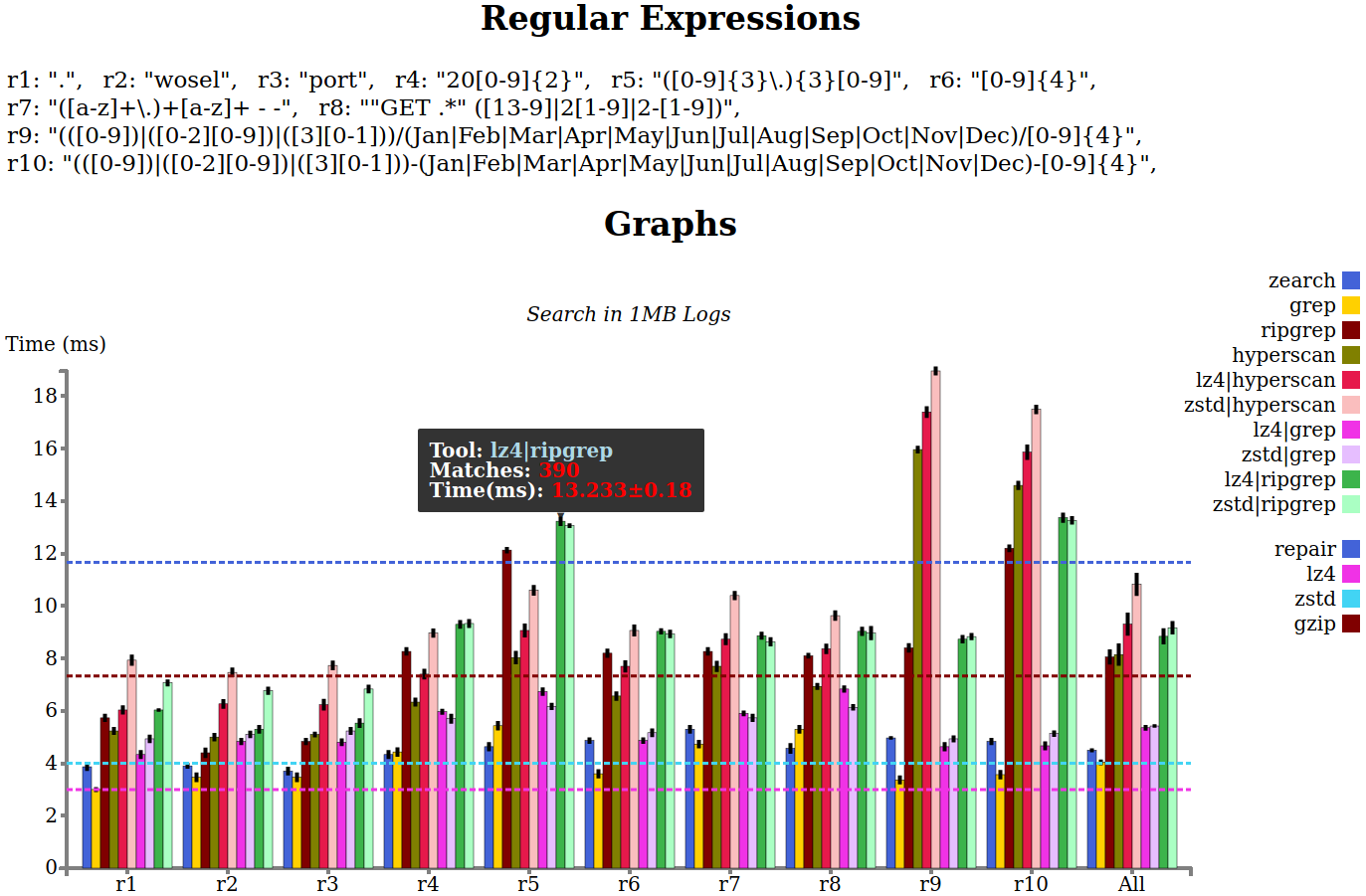}
\includegraphics[width=0.9\textwidth]{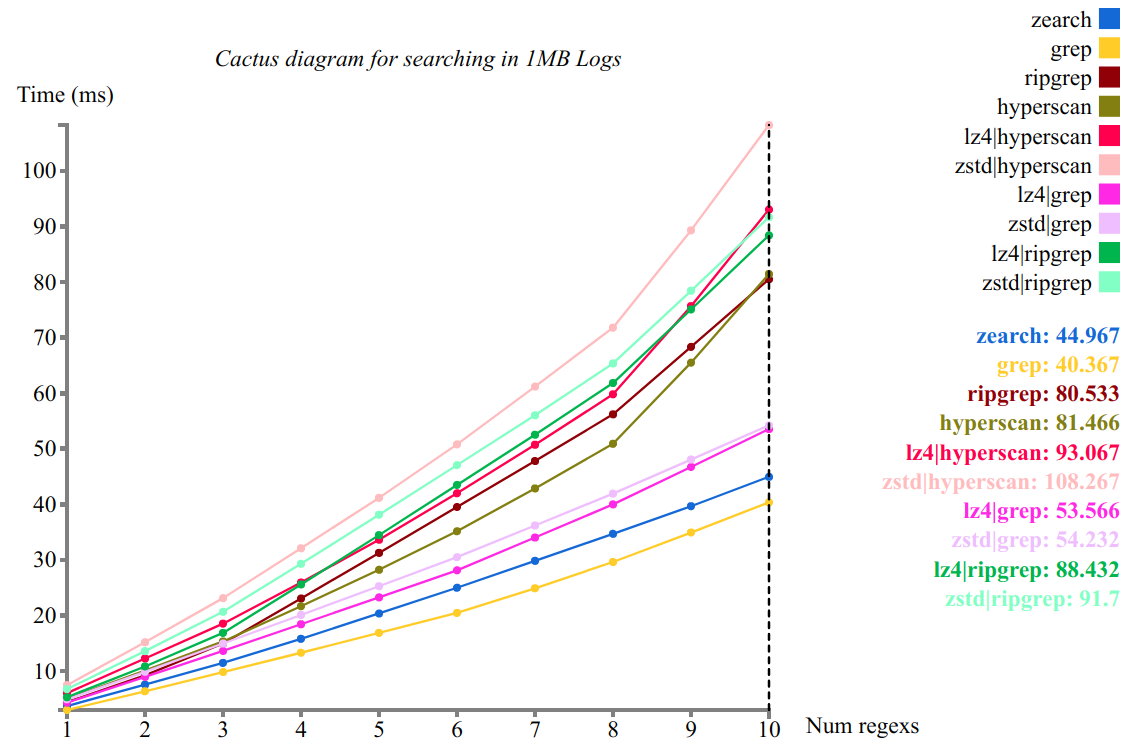}
\caption{The \emph{first graph} shows the time required to report the number of lines in a log file matching a regular expression.
All tools are fed with the same regular expression.
The decompress and search approach is implemented in parallel i.e. searching on the output uncompressed text as it is recovered by the decompressor.
As a reference, we show the time required for decompressing the file with different tools (horizontal lines).
The \emph{second graph} is the \emph{cactus plot} corresponding the data from the first graph. 
In this case, we observe that \tool{zearch} is faster than any other tool, except \tool{grep}.}\label{fig:webGraphs}
\end{figure}

All tools for regular expression searching considered in this benchmark are used to count the matching lines without reporting them.
As expected, all tools report the exact same result for all benchmarks.
To simplify the terminology, we refer to counting the matching lines as \emph{searching}, unless otherwise stated.

\subsection{Tools}
Our benchmark compares the performance of \tool{zearch} against the fastest implementations we found for the following operations:

\begin{myEnumI}
\item Searching the compressed text without decompression.
\item Searching the uncompressed text.
\item Decompressing the text without searching. 
\item Searching the uncompressed text as it is recovered by the decompressor.
\end{myEnumI}
\medskip

For searching the compressed text we consider \tool{GNgrep}, the tool developed by \citet{navarro2003regular} for searching on text compressed with the grammar-based compressor \tool{LZW} defined by \citet{welch1984technique}.
To the best of our knowledge, this is the only existing tool departing from the \emph{decompress and search} approach.

For searching uncompressed text we consider \tool{grep} and \tool{hyperscan}.
We improve the performance of \tool{grep} by compiling it without \emph{perl regular expression} compatibility, which is not supported by \tool{zearch}.
We used the library \tool{hyperscan} by means of the tool (provided with the library) \tool{simplegrep}, which we modified\footnote{\url{https://gist.github.com/pevalme/f94bedc9ff08373a0301b8c795063093}} to \emph{efficiently} read data either from stdin or an input file.
These tools are top of the class\footnote{\url{https://rust-leipzig.github.io/regex/2017/03/28/comparison-of-regex-engines/}} for regular expression searching.

For (de)compressing the files we use \tool{zstd} and \tool{lz4} which are among the best lossless compressors\footnote{\url{https://quixdb.github.io/squash-benchmark/}}, being \tool{lz4} considerably faster while \tool{zstd} achieves better compression.
We use both tools with the highest compression level, which has little impact on the time required for decompression.

We use versions \tool{grep v3.3}, \tool{hyperscan v5.0.0}, \tool{lz4 v1.8.3} and \tool{zstd v1.3.6} running in an Intel Xeon E5640 CPU 2.67 GHz with 20 GB RAM which supports SIMD instructions up to SSE4-2.
We restrict to ASCII inputs and set \verb!LC_ALL=C! for all experiments, which significantly improves the performance of \tool{grep}.
Since both \tool{hyperscan} and \tool{GNgrep} count positions of the text where a match ends, we extend each regular expression (when used with these tools) to match the whole line.
We made this decision to ensure all tools solve the same counting problem and produce the \emph{same output}.

\subsection{Files and Regular Expressions}
Our benchmark consists of an automatically generated \emph{Log}\footnote{\url{http://ita.ee.lbl.gov/html/contrib/NASA-HTTP.html}} of HTTP requests, English \emph{Subtitles}~\cite{openSubtitles}, and a concatenation of English \emph{Books}\footnote{\url{https://web.eecs.umich.edu/~lahiri/gutenberg_dataset.html}}.
Table~\ref{table:compression} shows how each compressor behaves on these files.
\begin{table}[!ht]
\centering
\renewcommand{\arraystretch}{0.8}%
\setlength{\tabcolsep}{5pt}
\setlength{\extrarowheight}{.2ex}
\resizebox{\textwidth}{!}{
\begin{tabular}{rr|r?r|r|r|r?r|r|r|r?r|r|r|r}
\toprule
& & & \multicolumn{4}{c?}{\textbf{Compressed size}} & \multicolumn{4}{c?}{\textbf{Compression time}} & \multicolumn{4}{c}{\textbf{Decompression time}} \\
& & \multicolumn{1}{c?}{\textbf{File}} & \multicolumn{1}{c|}{\tool{LZW}} & \multicolumn{1}{c|}{\tool{repair}} & \multicolumn{1}{c|}{\tool{zstd}} & \multicolumn{1}{c?}{\tool{lz4}} & \multicolumn{1}{c|}{\tool{LZW}} & \multicolumn{1}{c|}{\tool{repair}} & \multicolumn{1}{c|}{\tool{zstd}} & \multicolumn{1}{c?}{\tool{lz4}} & \multicolumn{1}{c|}{\tool{LZW}} & \multicolumn{1}{c|}{\tool{repair}} & \multicolumn{1}{c|}{\tool{zstd}} & \multicolumn{1}{c}{\tool{lz4}} \\
\midrule

\parbox[t]{2mm}{\multirow{6}{*}{\rotatebox[origin=c]{90}{\resizebox{65pt}{!}{\textbf{Uncompressed}}}}} & \parbox[t]{2mm}{\multirow{3}{*}{\rotatebox[origin=c]{90}{\small\textbf{1 MB}}}} & \textit{Logs} & \loser{0.19} & \second{0.08} & \winner{0.07} & 0.12 & \second{0.04} & 0.19 & \loser{0.51} & \winner{0.03} & \loser{0.02} & 0.01 & \second{0.01} & \winner{0.004}  \\
& & \textit{Subtitles} & \loser{0.36} & \second{0.13} & \winner{0.11} & 0.15 & \second{0.04} & 0.25 & \loser{0.3} & \winner{0.03} & \loser{0.02} & 0.01 & \second{0.01} & \winner{0.004}  \\
& & \textit{Books} & 0.42 & \second{0.34} & \winner{0.27} & \loser{0.43} & \winner{0.04} & 0.29 & \loser{0.42} & \second{0.08} & \loser{0.02} & 0.02 & \second{0.01} & \winner{0.004}  \\
\cmidrule{2-15}
 & \parbox[t]{2mm}{\multirow{3}{*}{\rotatebox[origin=c]{90}{\small\textbf{500 MB}}}} & \textit{Logs} & \loser{96} & \second{38} & \winner{33} & 65 & \second{16.9} & 123.2 & \loser{819.1} & \winner{13.3} & \loser{7.8} & 5.5 & \second{1.1} & \winner{0.64}  \\
& & \textit{Subtitles} & \loser{191} & \second{66} & \winner{55} & 114 & \winner{19.9} & 169.3 & \loser{415.2} & \second{22.8} & \loser{8.6} & 8.2 & \second{1.2} & \winner{0.81}  \\
& & \textit{Books} &206 & \second{153} & \winner{129} & \loser{216} & \winner{20.2} & 198.6 & \loser{646.3} & \second{40.6} & 8.6 & \loser{9.7} & \second{2.0} & \winner{0.8}  \\
\bottomrule
\end{tabular}}
\caption{Sizes (in MB) of the compressed files and (de)compression times (in seconds). Maximum compression levels enabled.
(Blue = best; bold black = second best; red = worst).}
\label{table:compression}
\end{table}

We first run each experiment 3 times as warm up so that the files are loaded in memory.
Then we measure the running time 30 times and compute the \emph{confidence interval} (with 95\% confidence) for the running time required to count the number of matching lines for a regular expression in a certain file using a certain tool.

We consider the \emph{point estimate} of the confidence interval and omit the \emph{margin of error} which never exceeds the $9\pct$ of the point estimate for the reported experiments.
The on-line version of these experiment \emph{does report} the margin of error as a black mark on the top of each bar.
The height of the bar is the point estimate computed for the given experiment while the black mark denotes the confidence interval (see Figure~\ref{fig:webGraphs}).
Figure~\ref{fig:comparison} summarizes the obtained results when considering, for all files, the regular expressions: ``\verb!what!'', ``\verb!HTTP!'', ``\verb!.!'', ``\verb!I .* you !'', ``\verb! [a-z]{4} !'', ``\verb! [a-z]*[a-z]{3} !'', ``\verb![0-9]{4}!'', ``\verb![0-9]{2}/(Jun|Jul|Aug)/[0-9]{4}!''.

For clarity, we report only on the most relevant tools among the ones considered.
For \tool{lz4} and \tool{zstd}, we report the time required to decompress the file and send the output to \tool{/dev/null}.

\begin{figure}[!ht]
\resizebox{\textwidth}{!}{
\begin{tikzpicture}
  \begin{groupplot}[
      group style={
          group name=my plots,
          group size=3 by 1,
          xlabels at=edge bottom,
          ylabels at=edge left,
          horizontal sep=0.6cm,
          vertical sep=1cm,
          },
      xtick={1,5,10,25,50,100,250,500},
      ymajorgrids=true,
      xlabel near ticks,
      ylabel near ticks,
      xlabel={Uncompressed size (MB)},
      ylabel={Time (ms)},
      grid style=dashed,
      width=0.5\textwidth,
      xmode=log,
      ymode=log,
      log ticks with fixed point,
      tick label style={font=\tiny},
      legend columns=-1,
      legend style={column sep=0.7ex},
  ]
\nextgroupplot[xmin=20, xmax=600, ymax=1000, ymin=30, yticklabels={15,25,50,100,250,500,1K,2K,4K,8K}, ytick={15,25,50,100,250,500,1000,2000,4000,8000}, legend to name=testLegend, title={\small \emph{Logs}}, title style={yshift=-7pt}]
\addplot[color=black, mark=*, mark size=1.3pt] coordinates{(1,1)};\addlegendentry{{\small\tool{zearch}}};
\addplot[color=blue, mark=o] coordinates{(1,1)};\addlegendentry{{\small\tool{grep}}};
\addplot[color=blue, mark=square] coordinates{(1,1)};\addlegendentry{{\small\tool{hyperscan}}};
\addplot[color=green, mark=square] coordinates{(1,1)};\addlegendentry{{\small\tool{zstd|hyperscan}}};
\addplot[color=green, mark=o] coordinates{(1,1)};\addlegendentry{{\small\tool{lz4|hyperscan}}};
\addplot[color=red, mark=o] coordinates{(1,1)};\addlegendentry{{\small\tool{lz4}}};
\addplot[color=red, mark=square] coordinates{(1,1)};\addlegendentry{{\small\tool{zstd}}};
\addplot[color=black, mark=square] coordinates{(1,1)};\addlegendentry{{\small\tool{GNgrep}}};
\addplot[color=black, mark=*, mark size=1.3pt] coordinates{(1,4.162)(5,9.442)(10,16.238)(25,33.638)(50,65.592)(100,131.692)(250,336.371)(500,718.913)};
\addplot[color=blue, mark=o] coordinates{(1,3.621)(5,8.654)(10,15.333)(25,35.058)(50,67.537)(100,131.483)(250,326.021)(500,669.133)};
\addplot[color=blue, mark=square] coordinates{(1,6.05)(5,11.262)(10,18.108)(25,37.467)(50,69.746)(100,132.85)(250,324.925)(500,633.954)};
\addplot[color=green, mark=square] coordinates{(1,8.696)(5,20.567)(10,35.438)(25,81.492)(50,159.208)(100,314.675)(250,735.458)(500,1430.696)};
\addplot[color=green, mark=o] coordinates{(1,7.083)(5,15.754)(10,25.133)(25,53.775)(50,100.746)(100,195.75)(250,479.75)(500,953.683)};
\addplot[color=red, mark=o] coordinates{(1,3.667)(5,7.667)(10,11.667)(25,23.333)(50,43.0)(100,81.333)(250,196.0)(500,400.0)};
\addplot[color=red, mark=square] coordinates{(1,4.667)(5,10.0)(10,18.333)(25,43.667)(50,87.667)(100,173.0)(250,381.333)(500,734.667)};
\addplot[color=black, mark=square] coordinates{(1,27.908)(5,57.292)(10,93.279)(25,202.146)(50,382.688)(100,734.075)(250,1810.258)(500,3802.442)};

\nextgroupplot[xmin=20, xmax=600, ymax=2000, ymin=40, yticklabels={15,25,50,100,250,500,1K,2K,4K,8K}, ytick={15,25,50,100,250,500,1000,2000,4000,8000}, title={\small \emph{Subtitles}}, title style={yshift=-5pt}]
\addplot[color=black, mark=*, mark size=1.3pt] coordinates{(1,5.608)(5,17.029)(10,33.975)(25,80.425)(50,154.429)(100,307.608)(250,802.433)(500,1629.858)};
\addplot[color=blue, mark=o] coordinates{(1,4.917)(5,13.783)(10,24.942)(25,59.725)(50,116.021)(100,227.662)(250,563.188)(500,1119.308)};
\addplot[color=blue, mark=square] coordinates{(1,6.592)(5,13.571)(10,22.408)(25,48.421)(50,90.763)(100,174.846)(250,427.65)(500,843.35)};
\addplot[color=green, mark=square] coordinates{(1,9.213)(5,23.221)(10,41.746)(25,97.817)(50,190.213)(100,375.288)(250,897.696)(500,1749.292)};
\addplot[color=green, mark=o] coordinates{(1,7.754)(5,17.833)(10,29.475)(25,63.15)(50,118.667)(100,230.312)(250,568.638)(500,1124.304)};
\addplot[color=red, mark=o] coordinates{(1,3.667)(5,8.0)(10,12.667)(25,24.0)(50,44.667)(100,84.0)(250,203.0)(500,395.333)};
\addplot[color=red, mark=square] coordinates{(1,4.333)(5,10.667)(10,21.0)(25,50.333)(50,97.667)(100,194.333)(250,446.333)(500,847.333)};
\addplot[color=black, mark=square] coordinates{(1,35.446)(5,96.308)(10,175.008)(25,409.642)(50,797.646)(100,1578.1)(250,3895.021)(500,7771.454)};

\nextgroupplot[xmin=20, xmax=600, ymax=4000, ymin=40, yticklabels={15,25,50,100,250,500,1K,2K,4K,8K}, ytick={15,25,50,100,250,500,1000,2000,4000,8000}, title={\small \emph{Books}}, title style={yshift=-5pt}]
\addplot[color=black, mark=*, mark size=1.3pt] coordinates{(1,9.688)(5,34.879)(10,67.888)(25,173.675)(50,342.996)(100,708.104)(250,1864.617)(500,3887.012)};
\addplot[color=blue, mark=o] coordinates{(1,3.967)(5,10.675)(10,19.717)(25,45.85)(50,88.692)(100,172.05)(250,427.342)(500,845.658)};
\addplot[color=blue, mark=square] coordinates{(1,5.796)(5,10.838)(10,17.504)(25,35.638)(50,65.487)(100,124.412)(250,302.933)(500,593.975)};
\addplot[color=green, mark=square] coordinates{(1,9.8)(5,26.608)(10,49.242)(25,121.112)(50,237.808)(100,477.337)(250,1178.146)(500,2351.9)};
\addplot[color=green, mark=o] coordinates{(1,6.875)(5,17.425)(10,28.337)(25,60.808)(50,113.312)(100,218.917)(250,536.704)(500,1061.717)};
\addplot[color=red, mark=o] coordinates{(1,4.0)(5,10.333)(10,15.667)(25,32.667)(50,58.0)(100,112.667)(250,267.667)(500,530.667)};
\addplot[color=red, mark=square] coordinates{(1,6.0)(5,17.0)(10,33.667)(25,87.333)(50,172.667)(100,349.667)(250,847.333)(500,1697.0)};
\addplot[color=black, mark=square] coordinates{(1,37.821)(5,103.808)(10,189.812)(25,454.75)(50,866.688)(100,1690.279)(250,4179.625)(500,8372.242)};
\end{groupplot}
\end{tikzpicture}
}
\begin{nscenter}
\resizebox{\textwidth}{!}{
\pgfplotslegendfromname{testLegend} 
}
\end{nscenter}

\caption{Average running time required to count the lines matching a regular expression in a file and time required for decompression.
Colors indicate whether the tool performs the search on the uncompressed text (blue); the compressed text (black); the output of the decompressor (green); or decompresses the file without searching (red).
}\label{fig:comparison}
\vspace{-10pt}
\end{figure}
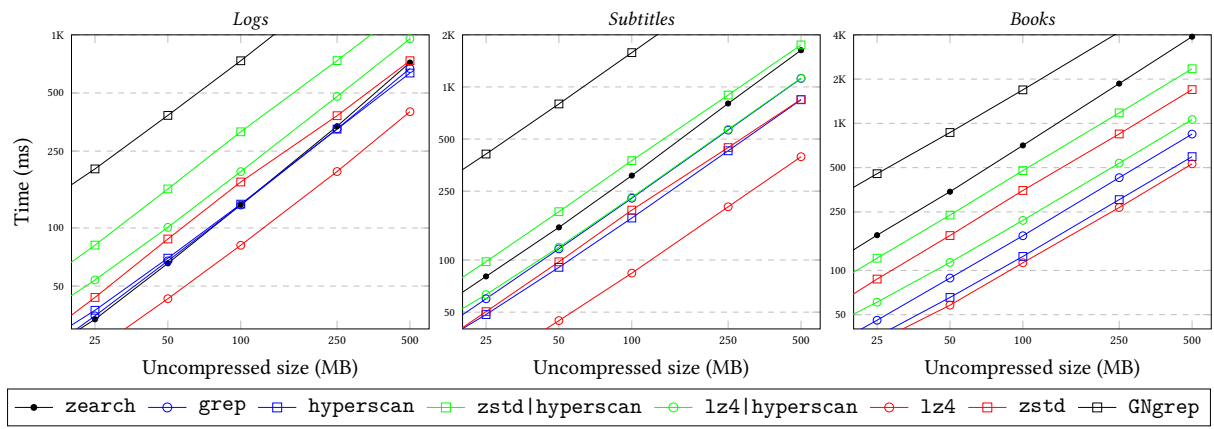

\subsection{Analysis of the Results.}
Figure~\ref{fig:comparison} and Table~\ref{table:compression} show that the performance of \tool{zearch} improves with the compression ratio.
This is to be expected since \tool{zearch} processes each grammar rule exactly once and better compression results in less rules to be processed.
In consequence, \tool{zearch} is the fastest tool for counting matching lines in compressed \emph{Log} files while it is the second slowest one for the \emph{Books}.

In particular, \tool{zearch} is more than $25\pct$ faster than any other tool working on compressed \emph{Log} files.
Actually \tool{zearch} is competitive with \tool{grep} and \tool{hyperscan}, even though these tools operate on the uncompressed text.
These results are remarkable since \tool{hyperscan}, unlike \tool{zearch}, uses algorithms specifically designed to take advantage of SIMD parallelization.\footnote{According to the documentation, \tool{hyperscan} \emph{requires}, at least, support for SSSE3.}

Finally, the fastest tool for counting matching lines in compressed \emph{Subtitles} and \emph{Books}, i.e. \linebreak\tool{lz4|hyperscan}, applies to files larger than the ones obtained when compressing the data with \tool{repair} (see Table~\ref{table:compression}).
However, when considering a better compressor such as \tool{zstd}, which achieves slightly more compression than \tool{repair}, the decompression becomes slower.
As a result, \tool{zearch} outperforms \tool{zstd|hyperscan} by more than $7\pct$ for \emph{Subtitles} files and $50\pct$ for \emph{Logs}.

\paragraph{Contrived Example}
Next, we discharge the full potential of our approach by considering a contrived experiment in which the data is highly repetitive.
In particular, we consider a file where all lines are identical and consist of the sentence ``\textrm{This is a contrived experiment.\NL}''.
Table~\ref{table:compressionContrived} shows the compression achieved on this data for each of the compressors.

\begin{table}[!ht]
\centering
\renewcommand{\arraystretch}{0.8}%
\setlength{\tabcolsep}{4pt}
\setlength{\extrarowheight}{.2ex}
\resizebox{\textwidth}{!}{
\begin{tabular}{r?r|r|r|r?r|r|r|r?r|r|r|r}
\toprule
 & \multicolumn{4}{c?}{\textbf{Compressed size}} & \multicolumn{4}{c?}{\textbf{Compression time}} & \multicolumn{4}{c}{\textbf{Decompression time}} \\
\multicolumn{1}{c?}{\textbf{File size}} & \multicolumn{1}{c|}{\tool{LZW}} & \multicolumn{1}{c|}{\tool{repair}} & \multicolumn{1}{c|}{\tool{zstd}} & \multicolumn{1}{c?}{\tool{lz4}} & \multicolumn{1}{c|}{\tool{LZW}} & \multicolumn{1}{c|}{\tool{repair}} & \multicolumn{1}{c|}{\tool{zstd}} & \multicolumn{1}{c?}{\tool{lz4}} & \multicolumn{1}{c|}{\tool{LZW}} & \multicolumn{1}{c|}{\tool{repair}} & \multicolumn{1}{c|}{\tool{zstd}} & \multicolumn{1}{c}{\tool{lz4}} \\
\midrule

1MB & \loser{13} & \winner{0.072} & \second{0.135} & 4.1 & 0.01 & \loser{0.08} & \second{0.01} & \winner{0.004} & \loser{0.01} & 0.01 & \second{0.003} & \winner{0.003}   \\
\midrule
500MB & 950 & \second{0.09} & \winner{44} & \loser{2000} & 14.5 & \loser{53.1} & \second{0.99} & \winner{0.28} & \loser{3.9} & 3.3 & \second{0.24} & \winner{0.2}  \\
\bottomrule
\end{tabular}}
\caption{Sizes (in KB) of the compressed files and (de)compression times (in seconds). Maximum compression levels enabled.
(Blue = best; bold black = second best; red = worst).}
\label{table:compressionContrived}
\end{table}

As expected this contrived file results in really high compression ratios.
As we show next, this scenario evidences the virtues of \tool{zearch} which is capable of searching in 500MB of data by processing a grammar consisting of 57 rules.

Table~\ref{table:comparisonContrived} summarizes the results obtained when searching the 500 MB contrived file for different regular expressions.\footnote{We run each experiment 30 times and report the point estimate of the confidence interval with 95\pct\, confidence.}
For each expression, we report the time required to\begin{myEnumIL}
\item search on the compressed data without decompression,
\item search on the uncompressed data and 
\item search with the best implementation of the parallel decompress and search approach.\footnote{The best implementation might vary depending on the expression.}
\end{myEnumIL}

\begin{table}[!ht]
\centering
\renewcommand{\arraystretch}{0.8}%
\setlength{\tabcolsep}{4pt}
\setlength{\extrarowheight}{.2ex}
\resizebox{!}{50pt}{
\begin{tabular}{l?r|r?r|r?r}
\toprule
\multicolumn{1}{c?}{\textbf{Expression}} & \multicolumn{1}{c|}{\tool{zearch}} & \multicolumn{1}{c?}{\tool{GNgrep}} & \multicolumn{1}{c|}{\tool{grep}} & \multicolumn{1}{c?}{\tool{hyperscan}} & \multicolumn{1}{c}{\tool{decompress and search}} \\
\midrule
``\tool{experiment}'' & \winner{2.267} & \loser{14K} & \second{1352} & \loser{1784} & 1652 \\
\cmidrule{1-6}
``\tool{This}'' & \winner{2.533} & \loser{14K} & \second{764} & 2166 & 959 \\
\cmidrule{1-6}
``\tool{.}'' & \winner{2.467} & \loser{14K} & \second{703} & \loser{1276} & 886  \\
\cmidrule{1-6}
``\tool{[a-z]\{4\}}'' & \winner{2.667} & \loser{14K} & \second{1138} & 1270 & 1360  \\
\cmidrule{1-6}
``\tool{[a-z]\{11\}}'' & \winner{2.233} & \second{37} & \loser{1690} & 1312 & 1397  \\
\cmidrule{1-6}
``\tool{That}'' & \winner{2.433} & \second{37.2} & \loser{607} & 239 & 444  \\
\bottomrule
\end{tabular}}
\caption{Time (ms) required to report the number of lines matching a regular expression in the 500 MB large contrived file. 
(Blue = fastest; bold black = second fastest; red = slowest).}
\label{table:comparisonContrived}
\end{table}

As shown by Tables~\ref{table:compressionContrived} and~\ref{table:comparisonContrived}, our tool is about \emph{10 times faster} at searching than \tool{lz4} at decompression.
Therefore, \tool{zearch} clearly outperforms any decompress and search approach, even if decompression and search are done in parallel.
This is to be expected since \tool{zearch} only needs to process 90 Bytes of data (the size of the grammar) while the rest of the tools need to process 500 MB.

Similarly, \tool{GNgrep} processes 950 KB of data (the size of the \tool{LZW}-compressed data).
As a consequence, when there are no matches of the expression, \tool{GNgrep} is faster than decompression as evidenced by the last two rows of Table~\ref{table:comparisonContrived}.
However, \tool{GNgrep} reports the number of matching lines by explicitly finding the positions in the data where the match begins, which results rather inefficient when all lines of the file contain a match, as evidenced by the first 4 rows of Table~\ref{table:comparisonContrived}.

\section{Fine-Grained Analysis of the Implementation}\label{sec:complexity}

The grammars produced by \tool{repair} break the definition of SLP in behalf of compression by allowing the axiom rule to have more than two symbols on the right hand side.
This is due to the fact that the axiom rule is built with the remains of the input text after creating all grammar rules.

Typically, the length of the axiom is larger or equal than the number of rules in the SLP so the way in which the axiom is processed heavily influences the performance of \tool{zearch}.

On the other hand, our experiments show that the performance of \tool{zearch} is typically far from its worst case complexity.
This is because the worst case scenario assumes each string generated by a grammar variable labels a path between each pair of states of the automaton.
However, we only observed such behavior in contrived examples.

\subsection{Processing the Axiom Rule.}%

Algorithm \AlgCountLines could process the axiom rule $X_{\len{V}} \to σ$ by building an SLP with the set of rules
\[\{S_{1} \to (σ)_1(σ)_2\} \cup \{S_i \to S_{i{-}1}(σ)_{i{+}1} \mid i = 2\ldots \len{σ}{-}2\} \cup \{X_{\len{V}} \to S_{\len{σ}{-}2}(σ)_\dag\} \enspace .\]
However it is more efficient to compute the set of states reachable from the initial ones when reading the string generated with $S_1$ and update this set for each symbol $(σ)_i$.
To perform the counting note that $\counting{S_i}$ is only used to compute $\counting{S_{i+1}}$ and can be discarded afterwards.
This yields an algorithm running in $\mathcal{O}\left(\len{V}\cdot s^3{+}\len{σ}\cdot s^2\right)$ time using $\mathcal{O}\left(\len{V}\cdot s^2\right)$ space where $\len{V}$ is the number of rules of the input grammar and $X_{\len{V}} \to σ$ its axiom.

\subsection{Number of Operations Performed by the Algorithm}
Define $s_{τ,q}=\len{\{q' \mid (q,τ,q') \in δ\}}$ and $s_τ = \sum_{q \in Q}s_{τ,q}$ and let us recall the complexity of Algorithm \AlgCountLines according to the data structures described in Section~\ref{sec:DataStructures}.
The algorithm iterates over the $\len{V}$ rules of the grammar and, for each of them: 
\begin{myEnumI}
\item Initializes matrix $\mathcal{K}$ with $s_{β_\ell}$ elements\footnote{We need to set up to $s$ sentinel values for the rows in $\mathcal{K}$ not used for storing $s_{β_\ell}$}
\item Iterates through $\mathcal{K}[q'][0…s_{β_\ell,q'}]$ for each pair $(q_1,q') \in \edges{α_\ell}$.  
\end{myEnumI}
\smallskip
Then it processes the axiom rule iterating, for each symbol $(σ)_i$, through $s_{(σ)_i}$ transitions.

These are all the operations performed by the algorithm with running time dependent on the size of the input.
Hence, Algorithm \AlgCountLines runs in 
\[\mathcal{O}\left(\sum_{\ell=1}^{\len{V}}\tilde{s}_\ell+\sum_{i=1}^\len{σ}s_{(σ)_i} \right) \text{ time, where } \tilde{s}_\ell=s_{β_\ell}+s+\hspace{-10pt}\sum_{(q_1,q') \in \edges{α_\ell}}\hspace{-10pt}\left(1{+}s_{β_\ell,q'}\right)\enspace p.\]

Note that $\tilde{s}_\ell \leq s^3$, $s_{(σ)_i} \leq s^2$ so the worst case time complexity of the algorithm is $\mathcal{O}(\len{V}\cdot s^3+\len{σ}\cdot s^2)$.
However, in the experiments we observed that $\tilde{s}_\ell$ and $s_{(σ_i)}$ are usually much smaller than $s^3$ and $s^2$, respectively, as reported in Table~\ref{table:Behavior}.

\begin{table}[!ht]
\centering
\resizebox{\textwidth}{!}{
\renewcommand{\arraystretch}{0.60}%
\setlength{\tabcolsep}{4pt}
\setlength{\extrarowheight}{1ex}
\small
\begin{tabular}{lr?r|rrrrr?r|rrrrr}
\toprule
\multicolumn{1}{c}{\multirow{2}{*}{\textbf{Expression}}} & \multicolumn{1}{c?}{\multirow{2}{*}{$s$}}  & \multicolumn{1}{c|}{\multirow{2}{*}{$s^3$}} & \multicolumn{5}{c?}{percentiles for $\tilde{s}_{\ell}$} & \multicolumn{1}{c|}{\multirow{2}{*}{$s^2$}} & \multicolumn{5}{c}{percentiles for $s_{(σ)_i}$} \\
\multicolumn{1}{c}{} & \multicolumn{1}{c?}{} & \multicolumn{1}{c|}{} & {\footnotesize $50$\pct} & {\footnotesize $75$\pct} & {\footnotesize $95$\pct} & {\footnotesize $98$\pct} & {\footnotesize$100$\pct} & \multicolumn{1}{c|}{} & {\footnotesize $50$\pct} & {\footnotesize $75$\pct} & {\footnotesize $95$\pct} & {\footnotesize $98$\pct} & {\footnotesize$100$\pct}\\
\midrule
{\small ``\tool{what}''} & 5 & 125 & 0 & 0 & 1 & 1 & 9 & 25 & 0 & 0 & 1 & 1 & 2 \\
{\small ``\tool{HTTP}''} & 5 & 125 & 0 & 0 & 0 & 0 & 10 & 25 & 0 & 0 & 0 & 0 & 2 \\
{\small ``\tool{.}''} & 2 & 8 & 0 & 0 & 0 & 0 & 4 & 4 & 0 & 0 & 0 & 0 & 1 \\
{\small ``\tool{I .* you} ''} & 9 & 729 & 3 & 13 & 16 & 18 & 29 & 81 & 3 & 3 & 5 & 5 & 9 \\
{\small ``\tool{ [a-z]{4} }''} & 7 & 343 & 2 & 10 & 11 & 12 & 18 & 49 & 1 & 1 & 2 & 2 & 4 \\
{\small ``\tool{ [a-z]*[a-z]{3} }''} & 7 & 343 & 3 & 11 & 14 & 18 & 31 & 49 & 1 & 3 & 3 & 4 & 8 \\
{\small ``\tool{ [0-9]\{4\}}''} & 6 & 216 & 8 & 8 & 8 & 8 & 18 & 36 & 1 & 1 & 1 & 1 & 5 \\
{\small ``\tool{.*[A-Za-z ]\{5\}}''} & 7 & 343 & 14 & 25 & 48 & 48 & 48 & 49 & 11 & 14 & 14 & 14 & 14 \\
{\small ``\tool{.*[A-Za-z ]\{10\}}''} & 12 & 1728 & 29 & 51 & 86 & 95 & 98 & 144 & 16 & 26 & 29 & 29 & 29 \\
{\small ``\tool{.*[A-Za-z ]\{20\}}''} & 22 & 10648 & 57 & 87 & 132 & 153 & 198 & 484 & 23 & 38 & 52 & 58 & 59 \\
{\small ``\tool{((((.)*.)*.)*.)*}''} & 6 & 216 & 12 & 29 & 209 & 209 & 209 & 36 & 29 & 29 & 29 & 29 & 29 \\
{\small ``\tool{(((((.)*.)*.)*.)*.)*}''} & 7 & 343 & 14 & 34 & 249 & 249 & 249 & 49 & 34 & 34 & 34 & 34 & 34 \\
\bottomrule
\end{tabular}
}
\caption{Analysis of the values $\tilde{s}_\ell$ and $s_{(σ)_i}$ obtained when considering different regular expressions to search \emph{Subtitles} (100 MB uncompressed long).
The fifth column of the fourth row indicates that when considering the expression ``\tool{I .* you}'', for 75\% of the grammar rules we have $\tilde{s}_\ell \leq 13$ while $s^3=729$.}
\label{table:Behavior}
\end{table}

As the experiments show, \tool{zearch} exhibits almost linear behavior with respect to the size of the automaton built from the expression.
Nevertheless, there are regular expressions that trigger the worst case behavior (last two rows in Table~\ref{table:Behavior}), which cannot be avoided due to the result of \citet{Amir2018FineGrained} described before.

\section{Fine-Grained Complexity}
In Section~\ref{sec:provingCorrectness} we obtained upper bounds for the worst-case time complexity of our algorithm depending on whether the automata built from the expression is an NFA or a DFA.

However, we observed in Section~\ref{sec:complexity} that the actual behavior of our implementation is, in general, far from its worst-case scenario (see Table~\ref{table:Behavior}).
This is due to the fact that the worst-case scenario assumes an NFA where each pair of states are connected by a transition for each symbol in the alphabet but this is rarely the type of automata obtained from non-contrived regular expressions.

This difference between the worst-case time complexity of the algorithm and its behavior in practice also appears when considering the problem of searching with regular expressions on plain text.
Indeed, this problem led \citet{Backurs2016Hard} to analyze the complexity of searching on plain text for different classes of regular expressions.

In their work, \citet{Backurs2016Hard} restrict themselves to \demph{homogeneous regular expressions}, i.e. regular expressions in which operators at the same level of the formula are equal\footnote{Write the regular expression as a tree. The expression is homogeneous if all non-leaf nodes at the same depth have the same label.}, which are grouped in classes depending on the sequence of the operators involved.
Then, they obtain a lower bound for the search complexity for each class of expressions by building reductions from the \demph{Orthogonal Vector Problem}\index{OVP} (OVP for short) which, given two sets of vectors \(A,B \subseteq \{0,1\}^d\) in \(d\) dimensions, with \(N\) and \(M\) elements respectively, asks whether there exists \(a\in A\) and \(b \in B\) such that \(a\cdot b =0\).

\begin{conjecture*}[OV Conjecture~\cite{bringmann2015quadratic}]
There are no reals \(\varepsilon, d > 0\) such that the OVP in \(d < N^{o(1)}\) dimensions with \(M = \Theta(N^α)\) for \(α \in (0,1]\) can be solved in \(\mathcal{O}((N\cdot M)^{1-\varepsilon})\) time.
\end{conjecture*}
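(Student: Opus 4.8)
The statement to be established is the \emph{Orthogonal Vectors (OV) Conjecture}, a central and currently \emph{open} hardness hypothesis of fine-grained complexity; I do not believe it admits an unconditional proof with present-day techniques, and the honest plan is to explain why rather than to manufacture an argument. Discharging the conjecture as worded means proving a near-quadratic time \emph{lower bound} of the form \(\Omega\bigl((N\cdot M)^{1-o(1)}\bigr)\) for a problem that trivially lies in \(\mathcal{P}\): the naive algorithm inspects all \(N\cdot M\) pairs in \(\mathcal{O}(N\cdot M\cdot d)\) time. No unconditional polynomial-time lower bound of this strength is known for \emph{any} natural problem in \(\mathcal{P}\) in a general model such as the word-RAM; obtaining one would be a breakthrough far beyond the current frontier. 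I would therefore not attempt a direct combinatorial, algebraic, or information-theoretic argument, since each would have to circumvent the standard lower-bound barriers (relativization, natural proofs, algebrization) that obstruct super-linear lower bounds for problems inside \(\mathcal{P}\).

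\textbf{Why the usual routes do not close the gap.} The only rigorous evidence for the conjecture that I know how to produce is \emph{conditional}. One can reduce \(k\)-SAT to OV (Williams' reduction, together with the sparsification lemma), obtaining that a truly subquadratic OV algorithm would refute the Strong Exponential Time Hypothesis. Crucially, this yields only an \emph{implication}, SETH \(\Rightarrow\) OV Conjecture, and not the unconditional statement asserted here: it merely relocates the difficulty to an equally unproven assumption. Hence this route, which is exactly how the conjecture is justified in practice, does \emph{not} constitute a proof of the final statement as worded, and I regard the chasm between ``conditional on SETH'' and ``unconditional'' as the essential obstacle. Any attempt to upgrade the implication to an unconditional conclusion would require proving SETH itself, which is no more tractable than the target.

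\textbf{What a realistic plan actually is.} Since the conjecture cannot be proved, the sensible programme is to \emph{corroborate} it and to make its role precise. First I would record the sustained failure to find algorithms running in \(\mathcal{O}\bigl((N\cdot M)^{1-\varepsilon}\bigr)\) time for a fixed \(\varepsilon>0\): the best known methods improve on brute force only by sub-polynomial factors once \(d\) exceeds \(\mathcal{O}(\log N)\), and in particular yield no fixed \(\varepsilon>0\) uniformly across the regime \(d<N^{o(1)}\) named in the statement. Second I would assemble the web of fine-grained reductions that make OV-hardness implied by, or equivalent to, several independent conjectures, so that a subquadratic OV algorithm would simultaneously contradict many well-tested beliefs. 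Third I would stress that the conjecture is used in this work only as an \emph{assumption}, under which the optimality claim of Section~\ref{sec:complexity} (via \citet{Amir2018FineGrained}) is derived, so that no result of this thesis depends on proving it. The main obstacle is thus not technical within this paper but foundational: proving unconditional polynomial lower bounds for problems in \(\mathcal{P}\) is one of the great open questions of complexity theory, and until it is resolved the OV Conjecture must remain a hypothesis rather than a theorem.
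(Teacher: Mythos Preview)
Your assessment is correct and aligns with the paper's treatment: the statement is explicitly labeled a \emph{conjecture}, the paper offers no proof, and immediately after stating it the paper gives precisely the SETH-based justification you describe (``any algorithm defying it would yield an algorithm for SAT violating the Strong Exponential Time Hypothesis''). There is nothing further to compare.
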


The idea behind the conjecture is that any algorithm defying it would yield an algorithm for SAT violating the Strong Exponential Time Hypothesis.

\citet{Backurs2016Hard} relied on the OV conjecture to determine whether a search problem is \emph{easy}\index{easy searching problem}, i.e. there is an algorithm running in $\mathcal{O}(T+s)$ time where $T$ is the size of the input text and $s$ is the number of states of the automaton, or \emph{hard}\index{hard searching problem}, i.e. assuming the Strong Exponential Time Hypothesis (SETH) any algorithm has $Ω((T\cdot s)^{1-ε})$ time complexity with $ε > 0$.

This analysis can be extended to consider searching on compressed text and decide whether our implementation is optimal on different classes of homogeneous regular expressions.
To do that, we apply the following remark, inherited from~\citet{Amir2018FineGrained}, who used the OVP to analyze whether the decompress and solve approach can be outperformed by manipulating the compressed text for different problems.

\begin{remark}\label{remark:OVP}
Let $A=\{a_1,…,a_{N}\}\subseteq\{0,1\}^d$ and $B=\{b_1,…,b_{M}\} \subseteq\{0,1\}^d$ be an instance of the OVP in $d \leq N^{o(1)}$ dimensions with \(M = \mathcal{O}(N)\).
We define a string \(T\) with a representation as an SLP of size $t = \mathcal{O}(N\cdot d)$ and a regular expression \(π\) of size \(s = \mathcal{O}(M\cdot d)\) such that the string contains a match for the expression if{}f we have a solution for the OVP.

If there is an algorithm for regular expression searching on compressed text that operates, for a class of regular expressions that includes \(π\), in $\mathcal{O}((t\cdot s)^{1{-}ε})$ with \(\epsilon > 0\) then it would solve the OVP in $\mathcal{O}((N\cdot M)^{1{-}ε})$ (since the dimension is fixed) which contradicts the OV conjecture.
\end{remark}

\subsection{Complexity of Searching on Compressed Text}

Given a regular expression, we say it is \emph{homogeneous of type}\index{homogeneous expression} ``\verb!|+!'' if{}f the regular expression is a disjunction of \verb!+! operators and terminals.
We extend this notation to any combination of operators.
For instance, the expressions ``\tool{a+b+}'' and ``\tool{a+b}'' are homogeneous of type ``\tool{·+}'' while ``\tool{a+b*}'' is not homogeneous.
Recall that the \emph{size} of a regular expression is the number of operators and terminals used to define the expression.
For instance, ``\tool{a+b+}'' and ``\tool{a+b}'' have size 4 and 3, respectively.

The following three results use Remark~\ref{remark:OVP} to show that the time complexity of regular expression searching on compressed text is \(Ω(t\cdot s)\), where \(t\) is the size of the SLP and \(s\) is the size of the expression, when the regular expression is homogeneous of type ``\verb!·+!'', ``\verb!·*!'' or ``\verb!·|!''.

\begin{theorem}\label{theorem:HomogeneousDotPlus}
There is no algorithm for searching with a regular expression on grammar-compressed text that operates in \(\mathcal{O}((t\cdot s)^{1-\epsilon})\) time with \(\varepsilon > 0\), where \(t\) is the size of the compressed text and \(s\) is the size of the regular expression, when the expression is homogeneous of type ``\verb!·+!''.
\end{theorem}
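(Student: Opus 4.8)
The plan is to invoke Remark~\ref{remark:OVP}, so the whole argument reduces to exhibiting, from an arbitrary OVP instance $A = \{a_1,\dots,a_N\} \subseteq \{0,1\}^d$ and $B = \{b_1,\dots,b_M\} \subseteq \{0,1\}^d$ with $M = \mathcal{O}(N)$ and $d \leq N^{o(1)}$, a text $T$ with an SLP of size $t = \mathcal{O}(N \cdot d)$ and a regular expression $\pi$ of size $s = \mathcal{O}(M \cdot d)$, both homogeneous of type ``\verb!·+!'', such that $T$ contains a match for $\pi$ if and only if there exist $a_i, b_j$ with $a_i \cdot b_j = 0$. Once this reduction is in place, an algorithm running in $\mathcal{O}((t\cdot s)^{1-\varepsilon})$ would solve OVP in $\mathcal{O}((N\cdot M)^{1-\varepsilon})$ (the dimension $d$ being subpolynomial and hence absorbed), contradicting the OV conjecture; this last deduction is exactly the content of Remark~\ref{remark:OVP} and needs only to be cited.

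First I would fix a small working alphabet (say $\{0,1,\#\}$ plus a separator) and encode the text as the concatenation $T = w(a_1)\,\$\,w(a_2)\,\$\cdots\$\,w(a_N)$, where $w(a_i)$ is a fixed-length block of length $\mathcal{O}(d)$ spelling out the coordinates of $a_i$ in some redundant code convenient for the next step. Building an SLP for $T$ of size $\mathcal{O}(N\cdot d)$ is routine: one rule per block, glued by a balanced binary tree of $\mathcal{O}(\log N)$ auxiliary rules, and even a trivial chain of $N$ rules stays within the $\mathcal{O}(N\cdot d)$ budget. The delicate part is the expression. For a single vector $b_j$ I would design a homogeneous ``\verb!·+!'' expression $\pi_j$ of size $\mathcal{O}(d)$ that matches a block $w(a_i)$ precisely when $a_i \cdot b_j = 0$: reading the coordinates left to right, on coordinates $k$ with $b_j[k] = 0$ the expression should accept either symbol of $a_i[k]$ (which is where the \verb!+! over a grouping of the two symbols, or over a ``wildcard block'', comes in), and on coordinates $k$ with $b_j[k] = 1$ it should force $a_i[k] = 0$. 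Then $\pi$ should be the disjunction over $j$ of the $\pi_j$'s — except that a plain disjunction would make $\pi$ of type ``\verb!|·+!'', not ``\verb!·+!''. So the real content of the proof is to flatten this ``for some $j$, block $w(a_i)$ is orthogonal to $b_j$'' query into a single concatenation-of-pluses pattern, presumably by laying the $M$ tests out sequentially inside the text's block structure (padding each $w(a_i)$ with $M$ copies, or interleaving $\$$ separators so that the \verb!+! operators can skip over the irrelevant parts) so that a single ``\verb!·+!'' pattern sweeping across one padded block realizes the existential over $j$. This is the step I expect to be the main obstacle: getting the padding/separator bookkeeping right so that the pattern is genuinely homogeneous of type ``\verb!·+!'', has size $\mathcal{O}(M\cdot d)$, and has no spurious matches that would break soundness of the reduction.

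The remaining steps are then bookkeeping. I would verify \emph{completeness} — if $a_i \cdot b_j = 0$ for some $i,j$, the substring of $T$ aligned with block $i$ (and the $j$-th sub-slot of its padding) is matched by $\pi$ — and \emph{soundness} — any match of $\pi$ in $T$ must sit inside one padded block and witnesses a concrete orthogonal pair, which is where the fixed block lengths and the $\$$ separators are used to prevent a match from straddling two blocks. Finally I would check the size accounting: $t = \mathcal{O}(N\cdot d)$ for the SLP (including the $M$-fold padding, which is still $\mathcal{O}(N\cdot M\cdot d) = \mathcal{O}(N^2 \cdot d)$ in the worst naming — here I must be careful, and if padding blows $t$ up past $\mathcal{O}(N\cdot d)$ I would instead put the $M$-fold repetition on the \emph{expression} side via an outer \verb!+!, keeping $t$ linear and $s = \mathcal{O}(M\cdot d)$), and confirm $d = N^{o(1)}$ makes $(N\cdot M \cdot d)^{1-\varepsilon} = (N\cdot M)^{1-\varepsilon'}$ for a slightly smaller $\varepsilon' > 0$. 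With both objects and the iff in hand, the statement follows immediately from Remark~\ref{remark:OVP}.
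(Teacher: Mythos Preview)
Your skeleton is right---reduce from OVP via Remark~\ref{remark:OVP}, put the $a_i$'s in the text, put the $b_j$'s in the pattern---and you correctly identify the central obstacle: realizing the existential over $j$ without a top-level disjunction. But the proposal stops just short of the two ideas that actually make this work, and your fallback (``an outer \verb!+!'' on the expression side) would both break the ``\verb!·+!'' type and fail to express a disjunction.

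The paper's solution is to \emph{concatenate} all $M$ tests in the pattern, $\pi = F(b_1)\,z\,F(b_2)\,z\cdots z\,F(b_M)\,z$, and then pad the text so that each encoded vector $\tilde F(a_i)$ is flanked by $M{-}1$ copies of a \emph{universal} block $U = (xxyy)^{d/2}$ that matches every $F(b_j)$. A match of $\pi$ then aligns exactly one $F(b_{j_0})$ with $\tilde F(a_i)$ while the neighboring $M{-}1$ copies of $U$ absorb the remaining $F(b_j)$'s; this is how the existential over $j$ is realized by a pure concatenation. The size worry you raise dissolves because the $M{-}1$ identical copies of $Uz$ compress to an SLP fragment of size $\mathcal O(d+\log M)$, so the whole text still has an SLP of size $\mathcal O(N\cdot d)$.

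The second missing piece is the coordinate gadget that makes the pattern genuinely ``\verb!·+!''. The paper alternates letters $x,y$ by the parity of the coordinate index and sets $f(b,j)$ to be $x^+$ (resp.\ $xx^+$) when $(b)_j=0$ (resp.\ $=1$), while on the text side $\tilde f(a,j)$ is $xx$ (resp.\ $x$) when $(a)_j=0$ (resp.\ $=1$); the odd positions use $y$ analogously. Then $\tilde F(a_i)$ matches $F(b_j)$ iff no coordinate has $a_i=b_j=1$, and the $x$/$y$ alternation together with the $z$ separators rules out off-by-one spurious matches. Your ``wildcard block via \verb!+!'' intuition was pointing here, but without this count-based encoding the orthogonality test cannot be expressed inside a single ``\verb!·+!'' factor.
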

\begin{proof}
Let $A=\{a_1,…,a_{N}\}\subseteq\{0,1\}^d$ and $B=\{b_1,…,b_{M}\} \subseteq\{0,1\}^d$ be an instance of the OVP in $d \leq N^{o(1)}$ dimensions with \(M = \mathcal{O}(N)\).
Without loss of generality, assume the dimension is even.
Consider the regular expression 
\[π \ud \text{``} F(b_1)zF(b_2)z…zF(b_M)z\text{''}\]
on the alphabet \(Σ = \{x,y,z\}\) with
\(\def\arraystretch{0.3}
F(b_i) \ud f(b_i,1)f(b_i,2),…,f(b_i,d)\) and
\[f(b,j) \ud \left\{ \begin{array}{lcc}
     xx^+ & \text{if}  & (b)_j = 1 \text{ and \(j\) is even}  \\
     x^+ & \text{if}  & (b)_j = 0 \text{ and \(j\) is even} \\
     yy^+ & \text{if}  & (b)_j = 1 \text{ and \(j\) is odd} \\
     y^+ & \text{if}  & (b)_j = 0 \text{ and \(j\) is odd}
     \end{array}
\right. \enspace ,\]
where \((b)_j\) is the \(j\)-th component of the vector \(b\).
Clearly, \(π\) is homogeneous of type ``\verb!·+!'' and has size \(s = \mathcal{O}(M \cdot d)\).

Now, we define an SLP $\cP$ on \(Σ = \{x,y,z\}\) such that $\lang{\cP}=\{w\}$ with
\[w \ud \left((xxyy)^{\frac{d}{2}}z\right)^{M{-}1}\tilde{F}(a_1)z…\left((xxyy)^{\frac{d}{2}}z\right)^{M{-}1}\tilde{F}(a_N)z\left((xxyy)^{\frac{d}{2}}z\right)^{M{-}1} \enspace\]
where \(\tilde{F}(a_i) \ud \tilde{f}(a_i,1)\tilde{f}(a_i,2),…,\tilde{f}(a_i,d)\) and 
\[\tilde{f}(a,j) \ud \left\{ \begin{array}{lcc}
     x & if  & (a)_j = 1 \text{ and \(j\) is even} \\
     xx & if  & (a)_j = 0 \text{ and \(j\) is even} \\
     y & if  & (a)_j = 1 \text{ and \(j\) is odd} \\
     yy & if  & (a)_j = 0 \text{ and \(j\) is odd}
     \end{array}
\right. \enspace .\]
The substring $\left((xxyy)^{\frac{d}{2}}z\right)^{M{-}1}$ can be generated with an SLP of size $\mathcal{O}(d+\log M)$, hence $w$ can be compressed as an SLP $\cP$ of size $t =\mathcal{O}(N\cdot d + d + \log M)$ and, since \(d \leq N^{o(1)}\) is a constant and \(M = \mathcal{O}(N)\), we find that $t = \mathcal{O}(N)$.

Clearly, \(π\) and \(\cP\) can be built in \(\mathcal{O}(M \cdot d)\) and \(\mathcal{O}(N \cdot d)\) time, respectively.

Finally, we show that there exists $a \in A$, $b \in B$ such that $a⋅b = 0$ if{}f there is a factor of $w$ that matches $π$.
Let $a_{i_1} \in A$ and $b_{i_2} \in B$.
Then $a_{i_1}⋅b_{i_2} = 0$ if{}f 
\begin{myEnumI}
\item The factor \(\left((xxyy)^{\frac{d}{2}}z\right)^{i_2-1}\) of \(w\) that \emph{precedes} the factor \(\tilde{F}(a_{i_1})z\) matches the subexpression \linebreak``\(F(b_1)z\ldots F(b_{i_2{-}1})z\)''.
\item The factor \(\tilde{F}(a_{i_1})z\) of \(w\) matches the subexpression ``\(F(b_{i_2})z\)''.
\item The factor \(\left((xxyy)^{\frac{d}{2}}z\right)^{M{-}i_2}\) of \(w\) that \emph{succeeds} the factor \(\tilde{F}(a_{i_1})z\) matches the subexpression ``\(F(b_{i_2{+}1})z\ldots F(b_M)z\)''.
\end{myEnumI}
It follows from Remark~\ref{remark:OVP} that there is no algorithm for searching with an homogeneous regular expression of type ``\verb!·+!'' working on $\mathcal{O}((t\cdot s)^{1{-}ε})$ time.

Finally, note that if the dimension of the OVP is odd then it suffices to replace the \(\left((xxyy)^{\frac{d}{2}}z\right)^{M{-}1}\) factors from \(w\) by \(\left((xxyy)^{\frac{d{-}1}{2}}xxz\right)^{M{-}1}\).
\end{proof}

Note that for any homogeneous regular expression of type ``\verb!·+!'' of size \(s\), we can build in \(\mathcal{O}(s)\) time an equivalent homogeneous regular expression of type ``\verb!·*!'' and size \(\mathcal{O}(s)\).
Therefore, we obtain the following corollary from Theorem~\ref{theorem:HomogeneousDotPlus}.

\begin{corollary}\label{corol:HomogeneousDotStar}
There is no algorithm for searching with a regular expression on grammar-compressed text that operates in \(\mathcal{O}((t\cdot s)^{1-\epsilon})\) with \(\varepsilon > 0\), where \(t\) is the size of the compressed text and \(s\) is the size of the regular expression, when the expression is homogeneous of type ``\verb!·*!''.
\end{corollary}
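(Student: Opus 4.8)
\textbf{Proof proposal for Corollary~\ref{corol:HomogeneousDotStar}.}
The plan is to reduce to Theorem~\ref{theorem:HomogeneousDotPlus} by a simple syntactic translation on the expression side, leaving the compressed text untouched. First I would make precise the observation stated in the paragraph preceding the corollary: given a homogeneous expression $\pi$ of type ``\verb!·+!'' of size $s$, I build an expression $\pi'$ of type ``\verb!·*!'' and size $\mathcal{O}(s)$ with $\lang{\pi} = \lang{\pi'}$. Recall that $\pi$ is (by the definition of homogeneous of type ``\verb!·+!'') a concatenation of $+$-subexpressions and bare terminals, where each $+$-subexpression has the form $a^+$ for a terminal $a$. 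The standard identity $a^+ = a a^*$ rewrites each such factor, and bare terminals are left alone; the resulting $\pi'$ is a concatenation of terminals and $*$-subexpressions of the form $a^*$, hence homogeneous of type ``\verb!·*!'' (after, if one is pedantic about the homogeneity tree having uniform arity, padding with trivial $*$'s or viewing a bare terminal as the degenerate concatenation). Each rewrite adds a bounded number of symbols, so $|\pi'| = \mathcal{O}(s)$, and the rewrite is carried out in $\mathcal{O}(s)$ time.

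Next I would run the contrapositive argument. Suppose, for contradiction, there were an algorithm $\mathcal{B}$ for searching with homogeneous regular expressions of type ``\verb!·*!'' on grammar-compressed text running in $\mathcal{O}((t\cdot s)^{1-\varepsilon})$ time for some $\varepsilon > 0$, where $t$ is the size of the SLP and $s$ the size of the expression. Given an instance of searching a ``\verb!·+!''-expression $\pi$ (size $s$) in an SLP $\cP$ (size $t$), I translate $\pi$ to $\pi'$ as above in $\mathcal{O}(s)$ time, with $|\pi'| = \mathcal{O}(s)$, and then call $\mathcal{B}$ on $\cP$ and $\pi'$. Since $\lang{\pi} = \lang{\pi'}$, the text contains a match for $\pi$ iff it contains a match for $\pi'$, so the answer is correct. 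The running time is $\mathcal{O}(s) + \mathcal{O}((t\cdot |\pi'|)^{1-\varepsilon}) = \mathcal{O}((t\cdot s)^{1-\varepsilon'})$ for a suitable $\varepsilon' > 0$ (absorbing the constant blow-up in the size of $\pi'$ and the additive $\mathcal{O}(s) = \mathcal{O}(t\cdot s)$ term). This gives an algorithm for ``\verb!·+!''-expression searching on grammar-compressed text running in $\mathcal{O}((t\cdot s)^{1-\varepsilon'})$ time, contradicting Theorem~\ref{theorem:HomogeneousDotPlus}.

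I do not expect any serious obstacle here; the only delicate point is bookkeeping. I would want to be careful that the translation $\pi \mapsto \pi'$ genuinely lands in the class ``homogeneous of type ``\verb!·*!''\,'' as that class is defined in the paper (the homogeneity of a regular expression being defined via the tree of operators with equal labels at equal depth), since a naive rewrite might leave a bare terminal as a sibling of a $*$-node at the top level. The clean fix is to observe that $a = a a^* \cap (\text{length-}1)$ is not available, but $a$ can simply be kept: the paper's notion of homogeneous of type ``\verb!·X!'' already allows bare terminals as concatenation factors alongside $X$-subexpressions (witness the examples ``\tool{a+b+}'' \emph{and} ``\tool{a+b}'' both being of type ``\tool{·+}''), so $\pi'$ with its mix of terminals and $a^*$ factors is legitimately of type ``\verb!·*!''. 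With that settled, the size and time bounds are immediate and the corollary follows.
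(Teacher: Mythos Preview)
Your proposal is correct and follows exactly the approach the paper takes: the paper derives the corollary in one sentence from Theorem~\ref{theorem:HomogeneousDotPlus} via the observation that any homogeneous ``\verb!·+!'' expression of size $s$ can be converted in $\mathcal{O}(s)$ time into an equivalent homogeneous ``\verb!·*!'' expression of size $\mathcal{O}(s)$, and you have simply spelled out that reduction (including the contrapositive and the homogeneity bookkeeping) in full.
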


\begin{theorem}\label{theorem:HomogeneousDotOr}
There is no algorithm for searching with regular expressions on grammar-compressed text that operates in \(\mathcal{O}((t\cdot s)^{1-\epsilon})\) with \(\varepsilon > 0\), where \(t\) is the size of the compressed text and \(s\) is the size of the regular expression, when the expression is homogeneous of type ``\verb!·|!''.
\end{theorem}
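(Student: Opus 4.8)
Following the pattern established in the proof of Theorem~\ref{theorem:HomogeneousDotPlus} and its corollary, the plan is to exhibit, for an arbitrary instance of the OVP, an SLP-compressed text \(\cP\) and a homogeneous regular expression \(π\) of type ``\verb!·|!'' such that the text contains a match for the expression if and only if the OVP instance has a solution, and then invoke Remark~\ref{remark:OVP}. Concretely, given \(A = \{a_1,\ldots,a_N\} \subseteq \{0,1\}^d\) and \(B = \{b_1,\ldots,b_M\} \subseteq \{0,1\}^d\) with \(d \leq N^{o(1)}\) and \(M = \mathcal{O}(N)\), I would encode each vector \(b_i\) by a block \(F(b_i)\) that is a concatenation of \(d\) small gadgets, where the gadget for coordinate \(j\) is a disjunction of the two alphabet symbols that are \emph{compatible} with \((b_i)_j\) — that is, the gadget must accept exactly the encodings of those coordinate-values \(c\) with \((b_i)_j \cdot c = 0\). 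Since the only forbidden combination is \((b_i)_j = c = 1\), each coordinate gadget is of the form ``\(x\)'' (forcing \(0\)) when \((b_i)_j = 1\), and ``\((x\mathbin{|}y)\)'' (accepting \(0\) or \(1\)) when \((b_i)_j = 0\); interleaving with a separator symbol \(z\) keeps the blocks aligned. The top-level expression \(π \ud F(b_1)zF(b_2)z\ldots F(b_M)z\) is then a concatenation whose leaves are disjunctions and single symbols, hence homogeneous of type ``\verb!·|!'', and has size \(s = \mathcal{O}(M \cdot d)\).

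The compressed text is built symmetrically to the earlier proof: \(w\) consists of \(N\) blocks \(\tilde F(a_i)\), each encoding one vector of \(A\), separated and padded by highly repetitive filler factors of the form \(\left((\ldots)z\right)^{M-1}\) so that a match of \(π\) can ``slide'' to line up \(F(b_{i_2})\) against \(\tilde F(a_{i_1})\) while the \(M-1\) other \(F(b_k)\) blocks are absorbed by the maximally permissive filler. The filler needs to be chosen so that every coordinate gadget ``\((x\mathbin{|}y)\)'' and every ``\(x\)'' or ``\(y\)'' of \(π\) matches it; picking the filler to be the string of all the ``neutral'' (zero-encoding) symbols works, since a ``\(0\)'' coordinate is compatible with everything. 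Because the filler is a fixed power, it compresses to size \(\mathcal{O}(d + \log M)\), and \(w\) overall compresses to an SLP \(\cP\) of size \(t = \mathcal{O}(N \cdot d + d + \log M) = \mathcal{O}(N)\) using \(d \leq N^{o(1)}\) and \(M = \mathcal{O}(N)\). Both \(π\) and \(\cP\) are constructible in \(\mathcal{O}(M \cdot d)\) and \(\mathcal{O}(N \cdot d)\) time respectively.

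The heart of the argument is the correctness equivalence: there exist \(a_{i_1} \in A\), \(b_{i_2} \in B\) with \(a_{i_1} \cdot b_{i_2} = 0\) if and only if some factor of \(w\) matches \(π\). The forward direction places the match so that \(F(b_{i_2})z\) aligns with \(\tilde F(a_{i_1})z\) — this alignment succeeds precisely because, coordinate by coordinate, whenever \((b_{i_2})_j = 1\) the gadget ``\(x\)'' forces \((a_{i_1})_j = 0\), which orthogonality guarantees — while the preceding \(F(b_1)z\ldots F(b_{i_2-1})z\) and succeeding \(F(b_{i_2+1})z\ldots F(b_M)z\) portions of \(π\) are matched against the flanking filler factors. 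The reverse direction requires checking that the separator symbol \(z\) and the block structure rigidly constrain where a match can begin and end, so that any match necessarily identifies one \(\tilde F(a_{i_1})\) block and one \(F(b_{i_2})\) sub-expression in correspondence, forcing orthogonality of the witnesses. I expect this reverse direction — pinning down the rigidity of the alignment so that no ``spurious'' match can arise from the filler or from a misaligned block boundary — to be the main obstacle, as with the analogous step in Theorem~\ref{theorem:HomogeneousDotPlus}; the remedy is the same, namely a careful choice of separator count and filler block length (and the usual even/odd dimension fix) so that lengths cannot accidentally conspire. Once the equivalence holds, an \(\mathcal{O}((t \cdot s)^{1-\varepsilon})\) search algorithm would solve the OVP in \(\mathcal{O}((N \cdot M)^{1-\varepsilon})\) time (the dimension being fixed), contradicting the OV conjecture via Remark~\ref{remark:OVP}.
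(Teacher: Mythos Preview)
Your proposal is correct and follows essentially the same approach as the paper, which uses alphabet \(\{0,1,z\}\), gadgets ``\(0\)'' versus ``\(0\mathbin{|}1\)'', and filler \((0^{d}z)^{M-1}\). One small point: the paper notes that, unlike the ``\verb!·+!'' case, no even/odd dimension fix is needed here, since each coordinate gadget matches exactly one symbol and so block alignment is already rigid.
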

\begin{proof}
The proof is identical to that of Theorem~\ref{theorem:HomogeneousDotPlus} but considering the expression
\[π \ud \text{``}F(b_1)zF(b_2)z…zF(b_M)z\text{''}\]
on the alphabet \(Σ=\{0,1,z\}\) with
\[F(b_i)=f(b_i,1)f(b_i,2),…,f(b_i,d) \;\text{ and }\; f(b, j) = \left\{ \begin{array}{lcc}
     0 & if  & (b)_j = 1 \\
     0|1 & if  & (b)_j = 0
     \end{array}
\right.\]
and the word
\[w \ud \left(0^dz\right)^{M{-}1}a_1z\left(0^dz\right)^{M{-}1}a_2z…\left(0^dz\right)^{M{-}1}a_Nz\left(0^dz\right)^{M{-}1} \enspace .\]
Note that, unlike the proof of Theorem~\ref{theorem:HomogeneousDotPlus}, this proof does not depend on the parity of the dimension of the OVP.
\end{proof}

\subsection{Complexity of Our Implementation}
In the following, we analyze the complexity of the implementation of Algorithm \AlgCountLines described in Section~\ref{sec:implementation} when the input regular expression is homogeneous of type ``\verb!·+!'', ``\verb!·*!'' or ``\verb!·|!''

As explained in Section~\ref{sec:implementation}, \tool{zearch} uses \tool{libfa}, which applies Thompson's algorithm~\cite{thompson1968programming} with on-the-fly $ε$-removal, to build an NFA for the input regular expression.

However, given a regular expression of size \(s\) we can decide in \(\mathcal{O}(s)\) time whether a expression is homogeneous of type ``\verb!·+!'', ``\verb!·*!'' or ``\verb!·|!'' and, as we show next, use a specialized algorithm for building a DFA with \(\mathcal{O}(s)\) states in \(\mathcal{O}(s)\) time for the given expression.
Therefore, \tool{zearch} admits a straightforward modification that allows it to search on grammar-compressed text with homogeneous regular expressions of type ``\verb!·+!'', ``\verb!·*!'' or ``\verb!·|!'' in \(\mathcal{O}(t \cdot s)\) time, where \(s\) is the size of the expression.
Next, we show how to build such DFAs from the given regular expressions.

First, observe that every homogeneous regular expression of type ``\verb!·+!'' of size \(s\) such that it contains no concatenation of the form ``\verb!a+a+!'' can be captured by a DFA with \(s{+}1\) states as we show next.
Let \(a_1,\ldots,a_n\) be the sequence of letters that appear in an homogeneous expression of type ``\verb!·+!''.
Then, 
\[\cD = \tuple{\{q_i \mid 0 \leq i \leq n\}, Σ, \{(q_i,a_i,q_i), (q_{i{-}1},a_{i},q_{i}) \mid 1 \leq i \leq n\}, \{q_1\}, \{q_{n}\}}\]
is a DFA for the given expression.

If the expression contains a concatenation of the form ``\tool{a+a+}'' then \(\cD\) is no longer deterministic.
In that case, we can replace ``\tool{a+a+}'' by ``\tool{aa+}'' and, therefore, remove from \(\cD\) the self-loop corresponding to the first \(a\).
It is straightforward to check that this change results in a deterministic automaton and it does not alter the generated language, hence it does not alter the result of the search. 
Figure~\ref{fig:DFAc+} shows the DFA for an homogeneous regular expression of type ``\verb!·+!''.

\begin{figure}[!ht]
\centering
\begin{tikzpicture}[->,>=stealth',shorten >=1pt,auto,node distance=5mm and 1cm,thick,initial text=]
\tikzstyle{every state}=[scale=0.75,fill=customblue!60,draw=blue!60,text=black]
\node[initial,state] (0) {\(q_0\)};
\node[state] (1) [right=of 0] {\(q_1\)};
\node[state] (2) [right=of 1] {\(q_2\)};
\node[state] (3) [right=of 2] {\(q_3\)};
\node[state] (4) [right=of 3] {\(q_4\)};
\node[state, accepting] (5) [right=of 4] {\(q_5\)};

\path 
(0) edge node {\(a\)} (1)
(1) edge node {\(b\)} (2)
(2) edge node {\(b\)} (3)
(3) edge node {\(a\)} (4)
(4) edge node {\(c\)} (5)
(1) edge [loop above] node {\(a\)} (1)
(3) edge [loop above] node {\(b\)} (3)
(4) edge [loop above] node {\(a\)} (4)
(5) edge [loop above] node {\(c\)} (5);
\end{tikzpicture}
\caption{DFA for the regular expression ``\tool{a+b+b+a+c+}''=``\tool{a+bb+a+c+}''.}\label{fig:DFAc+}
\end{figure}
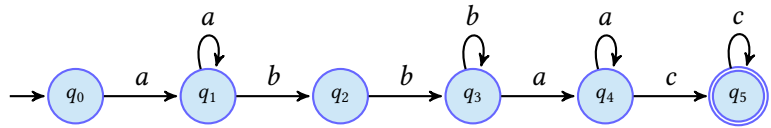

On the other hand, let \(a_1,\ldots,a_n\) be the sequence of letters that appear in an homogeneous expression of type ``\verb!·*!''.
For every \(a\), let \(j_a^k\) be the smallest index such that \(k \leq j_a^k \leq n \) and \(a = a_{j_a^k}\).
Then, the DFA obtained by making every state of \(\cD\) final and adding the transitions \(\{(q_{i-1},a_k,q_{j_{a_k}^i}) \mid 1 \leq i \leq k \leq n\}\), is an automaton for the given expression.
Note that, if the expression contains a concatenation of the form ``\tool{a*a*}'', which will break the determinism of our automata, then we can safely replace it by ``\tool{a*}''.
Figure~\ref{fig:DFAc*} shows the DFA for an homogeneous regular expression of type ``\verb!·*!''.

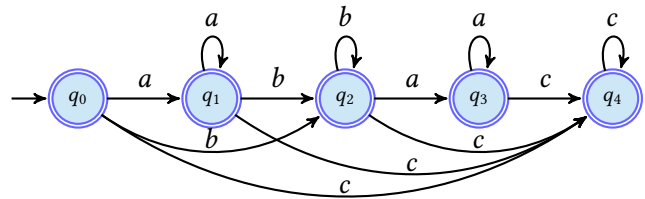
\begin{figure}[!ht]
\centering
\begin{tikzpicture}[->,>=stealth',shorten >=1pt,auto,node distance=5mm and 1cm,thick,initial text=]
\tikzstyle{every state}=[scale=0.75,fill=customblue!60,draw=blue!60,text=black]
      
\node[initial, state, accepting] (0) {\(q_0\)};
\node[state, accepting] (1) [right=of 0] {\(q_1\)};
\node[state, accepting] (2) [right=of 1] {\(q_2\)};
\node[state, accepting] (3) [right=of 2] {\(q_3\)};
\node[state, accepting] (4) [right=of 3] {\(q_4\)};

\path 
(0) edge node {\(a\)} (1)
(1) edge node {\(b\)} (2)
(2) edge node {\(a\)} (3)
(3) edge node {\(c\)} (4)
(1) edge [loop above] node {\(a\)} (1)
(2) edge [loop above] node {\(b\)} (2)
(3) edge [loop above] node {\(a\)} (3)
(4) edge [loop above] node {\(c\)} (4)
(0) edge [bend right=35] node [yshift=-3pt] {\(b\)} (2)
(0) edge [bend right=35] node [yshift=-3pt] {\(c\)} (4)
(1) edge [bend right=35] node [yshift=-3pt] {\(c\)} (4)
(2) edge [bend right=35] node [yshift=-3pt] {\(c\)} (4);
\end{tikzpicture}
\caption{DFA for the regular expression ``\tool{a*b*b*a*c*}''=``\tool{a*b*a*c*}''.}\label{fig:DFAc*}
\end{figure}

Finally, it is straightforward to build a DFA for an homogeneous regular expression of the type ``\verb!·|!'' with \(n{+}1\) states where \(n\) is the number of concatenations.
Figure~\ref{fig:DFAco} shows the DFA for an homogeneous regular expression of type ``\verb!·|!''.

\begin{figure}[!ht]
\centering
\begin{tikzpicture}[->,>=stealth',shorten >=1pt,auto,node distance=5mm and 1cm,thick,initial text=]
\tikzstyle{every state}=[scale=0.75,fill=customblue!60,draw=blue!60,text=black]
      
\node[initial,state] (0) {\(q_0\)};
\node[state] (1) [right=of 0] {\(q_1\)};
\node[state] (2) [right=of 1] {\(q_2\)};
\node[state] (3) [right=of 2] {\(q_3\)};
\node[state, accepting] (4) [right=of 3] {\(q_4\)};

\path 
(0) edge node {\(a,b\)} (1)
(1) edge node {\(a,c\)} (2)
(2) edge node {\(b,c\)} (3)
(3) edge node {\(a,c\)} (4);
\end{tikzpicture}
\caption{DFA for the regular expression ``\tool{(a|b)(a|c)(b|c)(a|c)}''.}\label{fig:DFAco}
\end{figure}
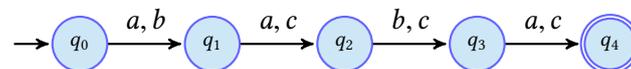

It is worth to remark that the DFA of Figure~\ref{fig:DFAco} is the result of applying Thompson's construction on the input expression.
As a consequence, \tool{zearch} already builds a DFA when the input expression is homogeneous of type ``\verb!·|!'' and, therefore, it performs the search in \(\mathcal{O}(t \cdot s)\).

We conclude that \tool{zearch} admits a straightforward modification to exhibit \(\mathcal{O}(t\cdot s)\) time complexity when working on homogeneous regular expressions of types ``\verb!·+!'', ``\verb!·*!'' and ``\verb!·|!'' and, therefore, be nearly optimal for these classes of regular expressions.

%
%
%
%
%
%
%
%
%
%
%
%
%
%
%
%
%
%
%
%
%
%
%

%
%
\clearpage{}%
\clearpage{}%
%

\chapter{Building Residual Automata}
\label{chap:RFA}
As shown in Chapter~\ref{chap:prel}, residual automata (RFAs for short) are a class of automata that lies between deterministic (DFAs) and nondeterministic automata (NFAs).
They share with DFAs a significant property: the existence of a canonical minimal form for any regular language.
On the other hand, they share with NFAs the existence of automata that are exponentially smaller (in the number of states) than the corresponding minimal DFA for the language.
These properties make RFAs specially appealing in certain areas of computer science such as Grammatical Inference~\cite{denis2004learning,Kasprzik2011Inference}.

RFAs were first introduced by \citet{denis2000residual,denis2002residual} who defined an algorithm for \emph{residualizing} an automaton (see Section~\ref{sec:FSA}), showed that there exists a \emph{unique} \emph{canonical} RFA for every regular language and proved that the residual-equivalent of double-reversal method for DFAs~\cite{brzozowski1962canonical} holds for RFAs, i.e.\ residualizing an automaton \(\cN\) whose reverse is residual yields the canonical RFA for \(\lang{\cN}\).
Later, \citet{tamm2015generalization} generalized the double-reversal method for RFAs in the same lines as that of \citet{Brzozowski2014} for the double-reversal method for DFAs.

The similarities between the determinization and residualization (see Section~\ref{sec:FSA}) operations and between the double-reversal methods for DFAs and RFAs evidence the existence of a relationship between these two classes of automata.
However, the connection between them is not clear and, as a consequence, the relation between the generalization by \citet{Brzozowski2014} of the double-reversal method for DFAs and the one by \citet{tamm2015generalization} for RFAs is not immediate.

In this chapter, we show that \emph{quasiorders} are fundamental to RFAs as \emph{congruences} are for DFAs, which evidences the relation between these two classes of automata.
To do that, we define a framework of finite-state automata constructions based on \emph{quasiorders} over words.

As explained in Chapter~\ref{chap:related}, \citet{ganty2019congruence} studied the problem of building DFAs using congruences, i.e., equivalence relations over words with good properties w.r.t. concatenation, and derived several well-known results about minimization of DFAs, including the double-reversal method and its generalization by \citet{Brzozowski2014}.
While the use of congruences over words suited for the construction of a subclass of residual automata, namely, \emph{deterministic} automata, these are no longer useful to describe the more general class of \emph{nondeterministic} residual automata.
By moving from \emph{congruences} to \emph{quasiorders}, we are able to introduce nondeterminism in our automata constructions.

We consider quasiorders with good properties w.r.t. \emph{right} and \emph{left} concatenation.
In particular, we define the so-called right \emph{language-based} quasiorder, whose definition relies on a given regular language; and the right \emph{automata-based} quasiorder, whose definition relies on a finite representation of the language, i.e., an automaton.
We also give counterpart definitions for quasiorders that behave well with respect to \emph{left} concatenation.

When instantiating our automata constructions using the right language-based quasiorder, we obtain the canonical RFA for the given language; while using the right automata-based quasiorder yields an RFA for the language generated by the automaton that has, at most, as many states as the RFA obtained by the residualization operation defined by \citet{denis2002residual}.
Similarly, \emph{left} automata-based and language-based quasiorders yield co-residual automata, i.e., automata whose reverse is residual.

Our quasiorder-based framework allows us to give a simple correctness proof of the double-reversal method for building the canonical RFA.
Moreover, it allows us to generalize this method in the same fashion as \citet{Brzozowski2014} generalized the double-reversal method for building the minimal DFA.
Specifically, we give a characterization of the class of automata for which our automata-based quasiorder construction yields the canonical RFA.

We compare our characterization with the class of automata, defined by \citet{tamm2015generalization}, for which the residualization operation of \citet{denis2002residual} yields the canonical RFA and show that her class of automata is strictly contained in the class we define.
Furthermore, we highlight the connection between the generalization of \citet{Brzozowski2014} and the one of \citet{tamm2015generalization} for the double-reversal methods for DFAs and RFAs, respectively.

Finally, we revisit the problem of learning RFAs from a quasiorder-based perspective.
Specifically, we observe that the NL\(^*\) algorithm defined by \citet{bollig2009angluin}, inspired by the popular Angluin's L\(^*\) algorithm for learning DFAs~\cite{angluin1987learning}, can be seen as an algorithm that starts from a quasiorder and refines it at each iteration.
At the end of each iteration, the automaton built by NL\(^*\) coincides with our quasiorder-based automata construction applied to the refined quasiorder.

\section{Automata Constructions from Quasiorders}%
\label{sec:automataConstructions}
In this chapter, we consider monotone quasiorders on \(\Sigma^*\) (and their corresponding closures) and we use them to define RFAs constructions for regular languages.
The following lemma gives a characterization of right and left quasiorders.

\begin{lemma}
\label{lemma:QObwComplete}
The following properties hold:
\begin{myEnumA}
\item \(\qr\) is a right quasiorder if{}f \(ρ_{\qr}(u)\, v \subseteq ρ_{\qr}(uv)\), for all \(u,v \in \Sigma^*\).
\item\(\ql\) is a left quasiorder if{}f \(v\, ρ_{\ql}(u) \subseteq ρ_{\ql}(vu)\), for all \(u,v \in \Sigma^*\).
\end{myEnumA}
\end{lemma}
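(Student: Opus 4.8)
The plan is to prove each of the two equivalences directly from the definitions of right (resp.\ left) monotonicity and of the closure operator \(ρ_{\leqslant}\). Since the two parts are symmetric, I would write the argument for part~(a) in full and note that part~(b) follows by the obvious left/right mirroring. So the core task is to show, for a quasiorder \(\leqslant\) on \(\Sigma^*\),
\[
u \qr v \Ra ua \qr va \ \text{for all } a \in \Sigma
\qquad\Longleftrightarrow\qquad
ρ_{\qr}(u)\, v \subseteq ρ_{\qr}(uv) \ \text{for all } u,v \in \Sigma^*.
\]

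For the forward direction, I would first lift letterwise right-monotonicity to wordwise right-monotonicity: if \(u \qr v\) then \(uw \qr vw\) for every \(w \in \Sigma^*\), proved by a routine induction on \(|w|\) exactly as the excerpt already does for left concatenation right after Equation~\eqref{def-leftmon}. Then, to show \(ρ_{\qr}(u)\,v \subseteq ρ_{\qr}(uv)\), I would take an arbitrary element of the left-hand side, i.e.\ a word of the form \(x v\) with \(x \in ρ_{\qr}(u)\), so that \(u \qr x\); by wordwise right-monotonicity \(uv \qr xv\), which by definition of \(ρ_{\qr}\) means \(xv \in ρ_{\qr}(uv)\). For the backward direction, assume \(ρ_{\qr}(u)\,v \subseteq ρ_{\qr}(uv)\) for all \(u,v\), and suppose \(u \qr v\); then \(v \in ρ_{\qr}(u)\), hence \(va \in ρ_{\qr}(u)\,a \subseteq ρ_{\qr}(ua)\) for every \(a \in \Sigma\), which unfolds to \(ua \qr va\). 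This establishes right-monotonicity.

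Part~(b) is handled symmetrically: wordwise left-monotonicity (\(u \ql v \Ra wu \ql wv\)) by induction on \(|w|\), then the inclusion \(v\,ρ_{\ql}(u) \subseteq ρ_{\ql}(vu)\) follows by taking \(v x\) with \(u \ql x\) and applying left-monotonicity to get \(vu \ql vx\); conversely, from the inclusion one recovers \(au \ql av\) for each letter \(a\).

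I do not expect any genuine obstacle here — the statement is essentially a reformulation of monotonicity in terms of the induced closure, and everything reduces to an induction on word length plus an unfolding of the definition of \(ρ_{\leqslant}\). The only point requiring a little care is the lifting from letterwise to wordwise monotonicity in the forward direction, but this is exactly the induction already carried out in the excerpt for the left case, so I would simply invoke or replicate it.
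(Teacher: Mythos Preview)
Your proposal is correct and follows essentially the same approach as the paper: unfold the definition of \(ρ_{\qr}\), use (wordwise) right-monotonicity for the forward direction, and recover letterwise monotonicity from the inclusion for the backward direction, with part~(b) handled symmetrically. The only cosmetic difference is that you make the induction lifting letterwise to wordwise monotonicity explicit, whereas the paper invokes it directly (relying on the argument already given after Equation~\eqref{def-leftmon}).
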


\begin{proof}\hfill
\begin{myEnumA}
\item To simplify the notation, we denote \(ρ_{\qr}\), the closure induced by \(\qr\), by \(ρ\).%
\smallskip
\begin{myEnumA}
\item[(\(\Ra\))]
Let \(x \in ρ(v)u\), i.e. \(x = \tilde{v}u\) with \(v \qr \tilde{v}\). 
Since \(\qr\) is a right quasiorder and \(v \qr \tilde{v}\) then \(vu \qr \tilde{v}u\).
Therefore \(x \in ρ(vu)\).

\item[(\(\La\))]
Assume that for each \(u,v \in \Sigma^*\) and \(\tilde{v} \in ρ(v)\) we have that \(\tilde{v}u \in ρ(vu)\).
Then, \(v \qr \tilde{v} \Ra vu \qr \tilde{v}u\).
\end{myEnumA}
\medskip

\item To simplify the notation we denote \(ρ_{\ql}\), the closure induced by \(\ql\), by \(ρ\).
\smallskip
\begin{myEnumA}
\item[(\(\Ra\))]
Let \(x \in uρ(v)\), i.e. \(x = u\tilde{v}\) with \(v \ql \tilde{v}\). 
Since \(\ql\) is a left quasiorder and \(v \ql \tilde{v}\) then \(uv \ql u\tilde{v}\).
Therefore \(x \in ρ(uv)\).

\item[(\(\La\))]
Assume that for each \(u,v \in \Sigma^*\) and \(\tilde{v} \in ρ(v)\) we have that \(u\tilde{v} \in ρ(uv)\).
Then \(v \ql \tilde{v} \Ra uv \ql u\tilde{v}\).
\end{myEnumA}
\end{myEnumA}
\end{proof}

Given a regular language \(L\), we are interested in left and right \(L\)-consistent quasiorders.
We use the principals of these quasiorders as states of automata constructions that yield RFAs and co-RFAs generating the language \(L\).
Therefore, in the sequel, we only consider quasiorders that induce a finite number of principals, i.e., quasiorders \(\qo\) such that the equivalence \(\mathord{\sim} \ud \mathord{\qo} \cap (\mathord{\qo})^{-1}\) has finite index.

Next, we introduce the notion of \emph{\(L\)-composite principals} which, intuitively, correspond to states of our automata constructions that can be removed without altering the generated language.

\begin{definition}[\(L\)-Composite Principal]%
\label{def:CompositeClosed}
Let \(L\) be a regular language and let \(\qr\) (resp. \(\ql\)) be a right (resp.\ left) quasiorder on \(Σ^*\).
Given \(u \in \Sigma^*\), the principal \(ρ_{\qr}(u)\) (resp. \(ρ_{\ql}(u)\)) is \(L\)-\emph{composite} if{}f %
\begin{align*}
u^{-1}L & = \hspace{-10pt}\bigcup_{x\in\Sigma^*,\; x \qrn u }\hspace{-10pt} x^{-1}L &
\text{\emph{(}resp. }Lu^{-1} & = \hspace{-10pt}\bigcup_{x\in\Sigma^*,\; x \qln u }\hspace{-10pt} Lx^{-1}\text{\emph{)}}
\end{align*}
If \(ρ_{\qr}(u)\) (resp. \(ρ_{\ql}(u)\)) is not \(L\)-composite then it is \emph{\(L\)-prime}.\eod
\end{definition}

We sometimes use the terms \emph{composite} and \emph{prime principal} when the language \(L\) is clear from the context.
Observe that, if \(ρ_{\qr}(u)\) is \(L\)-composite, for some \(u \in \Sigma^*\), then so is \(ρ_{\qr}(v)\), for every \(v \in \Sigma^*\) such that \(u \rr v\).
The same holds for a left quasiorder \(\ql\).

Given a regular language \(L\) and a right \(L\)-consistent quasiorder \(\qr\), the following automata construction yields an RFA that generates exactly \(L\).

\begin{definition}[Automata construction \(\cH^{r}(\qr, L)\)]%
\label{def:right-const:qo}
Let \(\qr\) be a right quasiorder and let \(L \subseteq \Sigma^*\) be a language.
Define the automaton \(\cH^{r}(\qr, L) \ud \tuple{Q, \Sigma, \delta, I, F}\) where \(Q = \{ρ_{\qr}(u) \mid u \in Σ^*, \; ρ_{\qr}(u) \text{ is \(L\)-prime}\}\), \(I = \{ρ_{\qr}(u) \in Q \mid \varepsilon \in ρ_{\qr}(u)\}\), \(F = \{ρ_{\qr}(u) \in Q \mid u \in L\}\) and \( \delta(ρ_{\qr}(u), a) = \{ ρ_{\qr}(v) \in Q \mid ρ_{\qr}(u) \cdot a \subseteq ρ_{\qr}(v)\}\) for all \(ρ_{\qr}(u) \in Q, a \in Σ\). \hfill\rule{0.5em}{0.5em}
\end{definition}

\begin{lemma}\label{lemma: HrGeneratesL}
Let \(L\subseteq \Sigma^*\) be a regular language and let \(\qr\) be a right \(L\)-consistent quasiorder.
Then, \(\cH^r(\qr,L)\) is an RFA such that \(\lang{\cH^r(\qr,L)} = L\).
\end{lemma}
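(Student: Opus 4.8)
The plan is to prove two things: first, that $\lang{\cH^r(\qr,L)} = L$, and second, that $\cH^r(\qr,L)$ is an RFA, i.e.\ that the right language of every state is a residual (left quotient) of $L$. I would actually prove the second claim first, since it is the more structural one and it will help organize the language equality.

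For the RFA property, I would show that for every $L$-prime principal $ρ_{\qr}(u)$ appearing as a state, its right language $W_{ρ_{\qr}(u),F}$ equals $u^{-1}L$. The key observation is that $\qr$ is right $L$-consistent, so by Lemma~\ref{lemma:properties}\ref{lemma:properties:L}–\ref{lemma:properties:bw} (taking $\rho_{\qr}$ in place of $\rho_{\leqslant_L}$), we have $\rho_{\qr}(L)=L$ and backward completeness for right concatenation; and by Lemma~\ref{lemma:QObwComplete} monotonicity gives $ρ_{\qr}(u)\,a \subseteq ρ_{\qr}(ua)$. From the first of these, $x \qr y$ with $x \in L$ forces $y \in L$, hence $x\qr y$ implies $x^{-1}L \subseteq y^{-1}L$ (this is essentially $\qr$ being finer than the right Nerode quasiorder, cf.\ the discussion after Lemma~\ref{lemma:leftrightnerodegoodqo}; concretely, if $x \qr y$ then by right monotonicity $xw \qr yw$ for all $w$, so $xw \in L \Ra yw\in L$). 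Therefore $ρ_{\qr}(u) \subseteq \{v \mid v^{-1}L \supseteq u^{-1}L\}$, and in particular all words in $ρ_{\qr}(u)$ agree on membership in $L$ with representatives dominating $u$. I would then prove $W_{ρ_{\qr}(u),F} = u^{-1}L$ by double inclusion, unfolding the definition of $\delta$ and using an induction on word length: reading a word $w$ from $ρ_{\qr}(u)$ we can always reach a state $ρ_{\qr}(v)$ with $ρ_{\qr}(u)\cdot w \subseteq ρ_{\qr}(v)$, so in particular $uw \in ρ_{\qr}(v)$, and $ρ_{\qr}(v)$ is final iff $v \in L$; combined with the monotonicity chain $ρ_{\qr}(u)\cdot w \subseteq ρ_{\qr}(uw)$ this pins down reachability of a final state exactly to $uw \in L$. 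Care is needed because a state need only exist for \emph{$L$-prime} principals, not all of them --- but if $ρ_{\qr}(uw)$ is $L$-composite then by Definition~\ref{def:CompositeClosed} its quotient is the union of strictly smaller ones, and one descends (using DCC of the wqo, or simply that the number of principals is finite and the descent is strict) to an $L$-prime $ρ_{\qr}(v)$ with $v \qrn uw$, hence $v^{-1}L \supseteq (uw)^{-1}L$ and $\delta$ puts $ρ_{\qr}(v)$ among the successors; this is the step I expect to be the main obstacle and where the $L$-prime/$L$-composite bookkeeping must be done carefully.

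Given the claim $W_{ρ_{\qr}(u),F} = u^{-1}L$ for every state, the language equality is short. By definition $\lang{\cH^r(\qr,L)} = \bigcup_{ρ_{\qr}(u) \in I} W_{ρ_{\qr}(u),F}$, and $ρ_{\qr}(u) \in I$ means $\varepsilon \in ρ_{\qr}(u)$, i.e.\ $u \qr \varepsilon$ is not what we want --- rather $\varepsilon \in ρ_{\qr}(u)$ means $u \qr \varepsilon$, giving $u^{-1}L \subseteq \varepsilon^{-1}L = L$, so every such $W_{ρ_{\qr}(u),F} \subseteq L$ and hence $\lang{\cH^r(\qr,L)} \subseteq L$. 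For the reverse inclusion, $ρ_{\qr}(\varepsilon)$ is a principal with $\varepsilon \in ρ_{\qr}(\varepsilon)$; if it is $L$-prime it is an initial state with right language $\varepsilon^{-1}L = L$, done; if it is $L$-composite, descend as above to an $L$-prime $ρ_{\qr}(v)$ with $v \qrn \varepsilon$, which still satisfies $\varepsilon \in ρ_{\qr}(v)$... here I need to double-check the direction: $v \qrn \varepsilon$ means $v \qr \varepsilon$ and $\varepsilon \not\qr v$, so $\varepsilon \notin ρ_{\qr}(v)$ in general, so this particular $ρ_{\qr}(v)$ need not be initial. Instead I would use that $L = \varepsilon^{-1}L = \bigcup_{x \qrn \varepsilon} x^{-1}L$ when $ρ_{\qr}(\varepsilon)$ is composite, iterate until reaching prime principals, and observe each word of $L$ lies in $x^{-1}L$ for some prime $ρ_{\qr}(x)$ with $x^{-1}L \subseteq L$; such $ρ_{\qr}(x)$ need not contain $\varepsilon$, so to get it as a genuine initial state I would instead argue directly: take $w \in L$; then $\varepsilon \cdot w \in L$, run the reachability argument from $ρ_{\qr}(\varepsilon)$ if prime, or from the appropriate prime successor chain, and conclude $w$ is accepted. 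The cleanest route is probably to prove a slightly stronger reachability lemma first: for every $u$ there is an $L$-prime state $ρ_{\qr}(v)$ reachable from the initial states by reading some prefix manipulation, with $u^{-1}L = W_{ρ_{\qr}(v),F}$; then $L = \varepsilon^{-1}L$ is the right language of a state reachable from $I$, giving $L \subseteq \lang{\cH^r(\qr,L)}$. I would state and prove that auxiliary reachability lemma carefully, as it is the technical heart of both inclusions.
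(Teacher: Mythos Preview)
Your approach is essentially the paper's: first prove \(W_{\rho_{\qr}(u),F} = u^{-1}L\) for every state by induction on word length (handling \(L\)-composite principals by descent to \(L\)-prime ones), then read off \(\lang{\cH^r(\qr,L)} = L\) from the description of initial states. The paper does exactly this, with the same case split on whether \(\rho_{\qr}(\varepsilon)\) is \(L\)-prime.

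Two places where you tangle yourself up unnecessarily. First, in the descent step you write ``\(v \qrn uw\), hence \(v^{-1}L \supseteq (uw)^{-1}L\)'': the inclusion goes the other way (\(v \qr uw\) gives \(v^{-1}L \subseteq (uw)^{-1}L\) since \(\qr\) refines Nerode), but what you actually need is that the particular \(x\) lies in \(v^{-1}L\), which comes directly from the composite decomposition, and that \(\rho_{\qr}(v) \supseteq \rho_{\qr}(uw) \supseteq \rho_{\qr}(u)\cdot w\), so the transition exists. Second, and more importantly, your first instinct about \(\varepsilon \in \rho_{\qr}(v)\) was \emph{correct}: \(v \qrn \varepsilon\) means \(v \qr \varepsilon\), which by definition of the principal says \(\varepsilon \in \rho_{\qr}(v)\). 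Your ``double-check'' confused \(\varepsilon \in \rho_{\qr}(v)\) (which holds) with \(v \in \rho_{\qr}(\varepsilon)\) (which need not). So when \(\rho_{\qr}(\varepsilon)\) is composite, every \(v \qrn \varepsilon\) with \(\rho_{\qr}(v)\) prime \emph{is} an initial state, and \(L = \bigcup_{v \qrn \varepsilon} v^{-1}L = \bigcup_{\rho_{\qr}(v) \in I} v^{-1}L\) follows directly by descent; the ``slightly stronger reachability lemma'' you propose at the end is not needed.
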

\begin{proof}
To simplify the notation, we denote \(ρ_{\qr}\), the closure induced by the quasiorder \(\qr\), simply by \(ρ\).
Let \(\mathcal{H} = \cH^{r}(\qr, L) = \tuple{Q, \Sigma, \delta, I, F}\).
We first show that \(\mathcal{H}\) is an RFA, i.e.
\begin{equation}\label{eq:right-langsUco}
W_{ρ(u), F}^{\mathcal{H}} = u^{-1}L, \quad \text{for each } ρ(u)\in Q \enspace .
\end{equation}

Let us prove that \(w \in u^{-1}L \Ra w \in W_{ρ(u), F}^{\mathcal{H}}\).
We proceed by induction on the length of \(w\).
\begin{myItem}
\item \emph{Base case:}
Assume \(w = \varepsilon\).
Then, 
\[\varepsilon \in u^{-1}L \Ra u \in L \Ra ρ(u) \in F \Ra \varepsilon \in W_{ρ(u), F}^{\mathcal{H}}\enspace .\]

\item \emph{Inductive step:}
Assume that the hypothesis holds for each word \(x \in \Sigma^*\) with \(\len{x} \leq n\), where \(n \geq 1\), and let \(w \in \Sigma^*\) be such that \(\len{w} = n{+}1\).
Then \(w = a  x\) with \(\len{x} = n\) and \(a \in Σ\).
\begin{adjustwidth}{-0.8cm}{}
\begin{myAlign}{0pt}{}
ax \in u^{-1} L & \Ra \quad \text{[By definition of quotient]} \\
x \in (ua)^{-1}L & \Ra \; \\
\hspace{-5pt}\span \text{[By Def.~\ref{def:CompositeClosed}, \(ρ(ua)\) is \(L\)-prime (so \(z \ud ua\)) or \((ua)^{-1}L = \hspace{-5pt}\bigcup_{x_i \qrn ua}\hspace{-5pt}x_i^{-1}L\) (so \(z \ud x_i\))]}\\
\exists ρ(z) \in Q, \; x \in z^{-1}L \land ρ(ua) \subseteq ρ(z) & \Ra \quad \text{[By I.H., Lemma~\ref{lemma:QObwComplete} and Def.~\ref{def:right-const:qo}]} \\
x \in W_{ρ(z), F}^{\mathcal{H}} \land ρ(z) \in δ(ρ(u), a) & \Ra \quad \text{[By definition of \(W_{S,T}\)]}\\
ax \in W_{ρ(u), F}^{\mathcal{H}} \enspace .
\end{myAlign}
\end{adjustwidth}
\end{myItem}

We now prove the other side of the implication, \(w \in W_{ρ(u), F}^{\mathcal{H}} \Ra w \in u^{-1}L\).

\begin{myItem}
\item \emph{Base case:}
Let \(w = \varepsilon\).
By Definition~\ref{def:right-const:qo}, 
\[\varepsilon \in W_{ρ(u), F}^{\mathcal{H}} \Ra \exists ρ(x) \in Q, \; x \in L \land ρ(u)  \varepsilon \subseteq ρ(x) \enspace .\]
Since \(ρ(L) = L\), we have that \(u\,\varepsilon \in L \), hence \(\varepsilon \in u^{-1}L\).

\item \emph{Inductive step:}
Assume the hypothesis holds for each \(x \in \Sigma^*\) with \(\len{x} \leq n\), where \(n \geq 1\), and let \(w \in \Sigma^*\) be such that \(\len{w} = n{+}1\).
Then \(w = a  x\) with \(\len{x} = n\) and \(a \in Σ\).
\begin{align*}
ax \in W_{ρ(u), F}^{\mathcal{H}} & \Ra \quad \text{[By Definition~\ref{def:right-const:qo}]} \\
x \in W_{ρ(y), F}^{\mathcal{H}} \land ρ(u) a \subseteq ρ(y)  & \Ra \quad \text{[By I.H. and since \(ρ\) is induced by \(\qr\)]} \\
x \in y^{-1}L \land y \qr ua & \Ra \quad \text{[By \citet{deLuca1994}]} \\
x \in y^{-1}L \land y^{-1}L \subseteq (ua)^{-1}L & \Ra \quad \text{[Since \(x \in (ua)^{-1} L \Ra ax \in u^{-1}L\)]} \\
ax \in u^{-1}L \enspace .
\end{align*}
\end{myItem}

We have shown that \(\mathcal{H}\) is an RFA. 
Finally, we show that \(\lang{\mathcal{H}} = L\). 
First note that
\[\lang{\mathcal{H}} = \bigcup_{ρ(u) \in I} W_{ρ(u), F}^{\mathcal{H}} = \bigcup_{ρ(u) \in I} u^{-1}L \enspace ,\]
where the first equality holds by definition of \(\lang{\mathcal{H}}\) and the second by Equation~\eqref{eq:right-langsUco}.
On one hand, we have that \(\bigcup_{ρ(u) \in I} u^{-1}L \subseteq L\) since, by Definition~\ref{def:right-const:qo}, \(\varepsilon \in ρ(u)\) for each \(ρ(u) \in I\), hence \(u \qr \varepsilon\) which, as shown by \citet{deLuca1994}, implies that \(u^{-1}L \subseteq \varepsilon^{-1}L = L\).

Let us now show that \(L \subseteq \bigcup_{ρ(u) \in I} u^{-1}L\).
First, let us assume that \(ρ(\varepsilon) \in I\).
Then, 
\[L = \varepsilon^{-1}L \subseteq \bigcup_{ρ(u) \in I} u^{-1}L \enspace .\]
Now suppose that \(ρ(\varepsilon)\notin I\), i.e. \(ρ(\varepsilon)\) is \(L\)-composite.
Then,
\[ L = \varepsilon^{-1}L = \bigcup_{u \qrn \varepsilon} u^{-1}L = \bigcup_{ρ(u) \in I} u^{-1}L \enspace .\]
where the last equality follows from \(ρ(u) \in I \Lra \varepsilon \in ρ(u)\).
\end{proof}

Given a regular language \(L\) and a left \(L\)-consistent quasiorder \(\ql\), we can give a similar automata construction of a co-RFA that recognizes exactly \(L\)

\begin{definition}[Automata construction \(\cH^{\ell}(\ql, L)\)]%
\label{def:left-const:qo}
Let \(\ql\) be a left quasiorder and let \(L \subseteq \Sigma^*\) be a language.
Define the automaton \(\cH^{\ell}(\ql, L)\ud \tuple{Q, \Sigma, \delta, I, F}\) where \(Q = \{ρ_{\ql}(u) \mid u \in Σ^*, \; ρ_{\ql}(u) \text{ is \(L\)-prime}\}\), \(I = \{ ρ_{\ql}(u) \in Q \mid u \in L \}\), \(F = \{ρ_{\ql}(u) \in Q \mid \varepsilon \in ρ_{\ql}(u)\}\), and \(\delta(ρ_{\ql}(u), a) = \{ρ_{\ql}(v)\in Q \mid a\cdot ρ_{\ql}(v) \subseteq ρ_{\ql}(u)\}\) for all \(ρ_{\ql}(u) \in Q, a \in \Sigma\). \hfill\rule{0.5em}{0.5em}
\end{definition}

\begin{lemma}%
\label{lemma:HlgeneratesL}
Let \(L\subseteq \Sigma^*\) be a language and let \(\ql\) be a left \(L\)-consistent quasiorder.
Then \(\cH^{\ell}(\ql,L)\) is a co-RFA such that \(\lang{\cH^{\ell}(\ql,L)} = L\).
\end{lemma}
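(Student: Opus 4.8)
The plan is to exploit the symmetry between Definition~\ref{def:left-const:qo} and Definition~\ref{def:right-const:qo}, together with the already-established Lemma~\ref{lemma: HrGeneratesL}, rather than repeating the induction from scratch. The key observation is that reversing words turns left concatenation into right concatenation: if \(\ql\) is a left \(L\)-consistent quasiorder, then the relation \(\qr\) defined by \(u \qr v \udiff u^R \ql v^R\) should be a right \(L^R\)-consistent quasiorder, and its principals satisfy \(\rho_{\qr}(u) = (\rho_{\ql}(u^R))^R\). The first step is therefore to verify these translation facts carefully: that \(\qr\) is indeed reflexive, transitive, right-monotone, and \(L^R\)-consistent (using that \(x^{-1}(L^R) = ((Lx^{-1})^R\)-type identities hold, i.e. quotients on one side of \(L\) correspond to quotients on the other side of \(L^R\) under reversal), and that it induces finitely many principals because \(\ql\) does.

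Next I would show that the automaton \(\cH^{\ell}(\ql, L)\) is precisely the reverse of \(\cH^{r}(\qr, L^R)\), i.e. \(\cH^{\ell}(\ql,L) = (\cH^{r}(\qr, L^R))^R\). This requires matching up the four components. The state sets agree because \(\rho_{\ql}(u)\) is \(L\)-prime if{}f \(\rho_{\qr}(u^R)\) is \(L^R\)-prime (unfolding Definition~\ref{def:CompositeClosed} and applying the reversal identities for quotients). The initial states of \(\cH^{\ell}\), namely \(\{\rho_{\ql}(u) \mid u \in L\}\), correspond to the final states of \(\cH^{r}(\qr, L^R)\), namely \(\{\rho_{\qr}(v) \mid v \in L^R\}\), since \(u \in L \Lra u^R \in L^R\); dually the final states of \(\cH^{\ell}\) (those principals containing \(\varepsilon\)) match the initial states of \(\cH^{r}\). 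For the transitions, \(a \cdot \rho_{\ql}(v) \subseteq \rho_{\ql}(u)\) is equivalent, after reversing, to \(\rho_{\qr}(v^R) \cdot a \subseteq \rho_{\qr}(u^R)\), which is exactly the condition for \(\rho_{\qr}(u^R) \in \delta^r(\rho_{\qr}(v^R), a)\) — i.e. the transition is flipped, as required by the definition of the reverse automaton.

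Once that identification is in place, the conclusion is immediate: by Lemma~\ref{lemma: HrGeneratesL}, \(\cH^{r}(\qr, L^R)\) is an RFA with \(\lang{\cH^{r}(\qr, L^R)} = L^R\); hence its reverse \(\cH^{\ell}(\ql, L)\) is a co-RFA (by definition, an NFA whose reverse is residual), and \(\lang{\cH^{\ell}(\ql, L)} = \lang{(\cH^{r}(\qr,L^R))^R} = (\lang{\cH^{r}(\qr, L^R)})^R = (L^R)^R = L\), using the fact recalled in Chapter~\ref{chap:prel} that \(\lang{\cN^R} = \lang{\cN}^R\). As a fallback, if the reversal bookkeeping proves too delicate to present cleanly, one can instead mirror the proof of Lemma~\ref{lemma: HrGeneratesL} directly: prove \(W^{\mathcal{H}}_{I, \rho_{\ql}(u)} = Lu^{-1}\) for every state \(\rho_{\ql}(u)\) by induction on word length (using Lemma~\ref{lemma:QObwComplete}(b) for left quasiorders in the inductive step), then take the union over final states to get \(\lang{\cH^{\ell}(\ql,L)} = \bigcup_{\varepsilon \in \rho_{\ql}(u)} Lu^{-1} = L\).

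The main obstacle I anticipate is getting the reversal dictionary exactly right: one must be confident that "left \(L\)-consistent" transports to "right \(L^R\)-consistent" (monotonicity is the easy part; the \(\mathord{\ql} \cap (L \times \neg L) = \varnothing\) condition needs the observation \(u \in L, v \notin L \Lra u^R \in L^R, v^R \notin L^R\)), and that \(L\)-compositeness of a left-principal corresponds under reversal to \(L^R\)-compositeness of the associated right-principal, since the quotient sets appearing in Definition~\ref{def:CompositeClosed} are on opposite sides. If any of these translations fails to be perfectly symmetric, the clean "reverse of an earlier lemma" argument collapses and one is forced into the direct inductive proof, which is longer but entirely routine given the tools already developed.
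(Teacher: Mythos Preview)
Your proposal is correct, and your fallback is in fact what the paper does: it mirrors the proof of Lemma~\ref{lemma: HrGeneratesL} directly, showing \(W^{\mathcal H}_{I,\rho_{\ql}(u)} = Lu^{-1}\) by a two-directional induction on word length and then summing over final states. Your primary approach via reversal is a genuinely different route: you construct \(\qr\) from \(\ql\), invoke Lemma~\ref{lemma: HrGeneratesL} on \(\cH^{r}(\qr,L^R)\), and then reverse. All the translation facts you list are correct --- in particular \((Lu^{-1})^R = (u^R)^{-1}L^R\) gives the prime/composite correspondence, and the transition match \(a\cdot\rho_{\ql}(v)\subseteq\rho_{\ql}(u) \Lra \rho_{\qr}(v^R)\cdot a \subseteq \rho_{\qr}(u^R)\) goes through exactly as you say. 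The paper does establish precisely this reversal isomorphism, but only \emph{after} the present lemma, as Lemma~\ref{lemma:leftRightReverse}; so what you are doing is essentially reordering the development to prove the duality first and harvest Lemma~\ref{lemma:HlgeneratesL} as a corollary. Your approach avoids repeating a page of induction at the cost of the reversal bookkeeping; the paper's approach is more self-contained at this point in the exposition but duplicates effort that is later subsumed by Lemma~\ref{lemma:leftRightReverse} anyway.
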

\begin{proof}
To simplify the notation we denote \(ρ_{\ql}\), the closure induced by the quasiorder \(\ql\), simply by \(ρ\).
Let \(\mathcal{H} = \cH^{\ell}(\ql, L) = \tuple{Q, \Sigma, \delta, I, F}\).
We first show that \(\mathcal{H}\) is a co-RFA.
\begin{equation}\label{eq:left-langsUco}
W_{I, ρ(u)}^{\mathcal{H}} = Lu^{-1}, \quad \text{for each } ρ(u)\in Q \enspace .
\end{equation}

Let us prove that \(w \in Lu^{-1} \Ra w \in W_{I, ρ(u)}^{\mathcal{H}}\).
We proceed by induction.
\begin{myItem}
\item \emph{Base case:}
Let \(w = \varepsilon\).
Then 
\[\varepsilon \in Lu^{-1} \Ra u \in L \Ra ρ(u) \in I \Ra \varepsilon \in W_{I, ρ(u)}^{\mathcal{H}}\enspace .\]
\item \emph{Inductive step:}
Assume the hypothesis holds for all \(x \in \Sigma^*\) with \(\len{x} \leq n\), where \(n \geq 1\), and let \(w \in \Sigma^*\) be such that \(\len{w} = n{+}1\).
Then \(w = x a\) with \(\len{x} = n\) and \(a \in Σ\).
\begin{adjustwidth}{-0.8cm}{}
\begin{myAlign}{0pt}{}
xa \in Lu^{-1} & \Ra \quad \text{[By definition of quotient]} \\
x \in L(au)^{-1} & \Ra \;  \\
\hspace{-5pt}\span \text{[By Def.~\ref{def:CompositeClosed}, \(ρ(ua)\) is \(L\)-prime (so \(z \ud au\)) or \(L(au)^{-1} = \hspace{-5pt}\bigcup_{x_i \qln au}\hspace{-5pt}Lx_i^{-1}\) (so \(z \ud x_i\))]}\\
\exists ρ(z) \in Q, \; x \in L z^{-1} \land ρ(au) \subseteq ρ(z) & \Ra \quad \text{[By I.H., Lemma~\ref{lemma:QObwComplete} and Def.~\ref{def:left-const:qo}]} \\
x \in W_{I, ρ(z)}^{\mathcal{H}} \land ρ(u) \in δ(ρ(z), a) & \Ra \quad \text{[By definition of \(W_{S,T}\)]}\\
xa \in W_{I, ρ(u)}^{\mathcal{H}} \enspace .
\end{myAlign}
\end{adjustwidth}
\end{myItem}
We now prove the other side of the implication, \(w \in W_{I, ρ(u)}^{\mathcal{H}} \Ra w \in Lu^{-1}\).

\begin{myItem}
\item \emph{Base case:}
Let \(w = \varepsilon\).
Then 
\[\varepsilon \in W_{I, ρ(u)}^{\mathcal{H}} \Ra \exists ρ(x) \in Q,\; x \in L \land \varepsilon  ρ(u) \subseteq ρ(x)\enspace . \]
Since \(ρ(L) = L\), we have that \( \varepsilon  u \in L\), hence \(\varepsilon \in Lu^{-1}\).

\item \emph{Inductive step:}
Assume the hypothesis holds for each \(x \in \Sigma^*\) with \(\len{x} \leq n\), where \(n \geq 1\), and let \(w \in \Sigma^*\) be such that \(\len{w} = n{+}1\).
Then \(w = x\cdot a\) with \(\len{x} = n\) and \(a \in Σ\).
\begin{align*}
xa \in W_{I, ρ(u)}^{\mathcal{H}} & \Ra \quad \text{[By Definition~\ref{def:left-const:qo}]} \\
a\cdot ρ(u) \subseteq ρ(y) \land x \in W_{I, ρ(y)}^{\mathcal{H}} & \Ra \quad \text{[By I.H. and since \(ρ\) is induced by \(\ql\)]} \\
y \ql au \land x \in Ly^{-1} & \Ra \quad \text{[By \citet{deLuca1994}]} \\ 
Ly^{-1} \subseteq L(au)^{-1} \land x \in Ly^{-1} & \Ra \quad \text{[Since \(x \in L(au)^{-1} \Ra xa \in Lu^{-1}\)]} \\
xa \in u^{-1}L \enspace .
\end{align*}
\end{myItem}

We have shown that \(\mathcal{H}\) is a co-RFA.
Finally, we show that \(\lang{\mathcal{H}} = L\). 
First note that
\[\lang{\mathcal{H}} = \bigcup_{ρ(u) \in F} W_{I, ρ(u)}^{\mathcal{H}} = \bigcup_{ρ(u) \in F} Lu^{-1} \enspace ,\]
where the first equality holds by definition of \(\lang{\mathcal{H}}\) and the second by Equation~\eqref{eq:left-langsUco}.
On one hand, we have that \(\bigcup_{ρ(u) \in F} Lu^{-1} \subseteq L\) since, by Definition~\ref{def:left-const:qo}, \(\varepsilon \in ρ(u)\) for each \(ρ(u) \in F\), hence \(u \ql \varepsilon\) which, as shown by \citet{deLuca1994}, implies that \(Lu^{-1} \subseteq L\varepsilon^{-1} = L\).

Let us now show that \(L \subseteq \bigcup_{ρ(u) \in F} Lu^{-1}\).
First, let us assume that \(ρ(\varepsilon) \in F\).
Then, 
\[L = L\varepsilon^{-1} \subseteq \bigcup_{ρ(u) \in F} Lu^{-1} \enspace .\]
Now suppose that \(ρ(\varepsilon)\notin F\), i.e. \(ρ(\varepsilon)\) is \(L\)-composite.
Then, 
\[ L = L\varepsilon^{-1} = \bigcup_{u \qln \varepsilon} Lu^{-1} = \bigcup_{ρ(u) \in F} u^{-1}L\enspace .\]
where the last equality follows from \(ρ(u) \in F \Lra \varepsilon \in ρ(u)\).
\end{proof}

Observe that the automaton \(\mathcal{H}^r = \cH^{r}(\qr, L)\) (resp. \(\mathcal{H}^{\ell} = \cH^{\ell}(\ql, L)\)) is \emph{finite}, since we assume \(\qr\) (resp. \(\ql\)) induces a finite number of principals.
Note also that \(\mathcal{H}^{r}\) (resp. \(\mathcal{H}^{\ell}\)) possibly contains empty (resp.\ unreachable) states but no state is unreachable (resp.\ empty).

Moreover, notice that by keeping all principals of \(\qr\) (resp. \(\ql\)) as states, instead of only the \(L\)-prime ones as in Definition~\ref{def:right-const:qo} (resp. Definition~\ref{def:left-const:qo}), we would obtain an RFA (resp.\ a co-RFA) with (possibly) more states that also recognizes \(L\).

Finally, Lemma~\ref{lemma:leftRightReverse} shows that \(\mathcal{H}^{\ell}\) and \(\mathcal{H}^r\) inherit the left-right duality between \(\ql\) and \(\qr\) through the reverse operation.

\begin{lemma}\label{lemma:leftRightReverse}
Let \(\qr\) and \(\ql\) be a right and a left quasiorder, respectively, and let \(L \subseteq \Sigma^*\) be a language.
If the following property holds
\begin{equation}%
\label{eq:leftRightReverse}
u \qr v \Lra u^R \ql v^R
\end{equation}
then \(\cH^{r}(\qr, L) \) is isomorphic to \( \left(\cH^{\ell}(\ql, L^R)\right)^R\).
\end{lemma}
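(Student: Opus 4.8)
The plan is to exhibit an explicit isomorphism between $\cH^{r}(\qr, L)$ and $\bigl(\cH^{\ell}(\ql, L^R)\bigr)^R$ by sending the state $\rho_{\qr}(u)$ to the state $\rho_{\ql}(u^R)$, and then to check that this map is well-defined, bijective, and compatible with the initial states, final states and transition relation once we account for the reversal. First I would record the elementary dictionary between the two sides: for every word $u\in\Sigma^*$ one has $u^{-1}L = \bigl(L^R (u^R)^{-1}\bigr)^R$, and hence, using the hypothesis~\eqref{eq:leftRightReverse}, $\rho_{\qr}(u) = \{v \mid u \qr v\}$ is mapped bijectively onto $\rho_{\ql}(u^R) = \{w \mid u^R \ql w\}$ via $v\mapsto v^R$. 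From this dictionary it follows immediately that $\rho_{\qr}(u)\subseteq\rho_{\qr}(v)$ iff $\rho_{\ql}(u^R)\subseteq\rho_{\ql}(v^R)$, so the map on principals is a well-defined bijection of the underlying sets of principals, and it also follows that $\rho_{\qr}(u)$ is $L$-prime iff $\rho_{\ql}(u^R)$ is $L^R$-prime (using $u^{-1}L\subsetneq x^{-1}L \Lra L^R(u^R)^{-1}\subsetneq L^R(x^R)^{-1}$ together with~\eqref{eq:leftRightReverse} applied to the strict orders, and the definition of $L$-composite in Definition~\ref{def:CompositeClosed}). Hence the state set $Q$ of $\cH^{r}(\qr,L)$ is in bijection with the state set of $\cH^{\ell}(\ql,L^R)$, which is also the state set of its reverse.

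Next I would verify that the bijection $\phi:\rho_{\qr}(u)\mapsto\rho_{\ql}(u^R)$ matches up the three structural ingredients. For the initial/final states: in $\cH^{r}(\qr,L)$, $\rho_{\qr}(u)$ is initial iff $\varepsilon\in\rho_{\qr}(u)$, and this holds iff $\varepsilon = \varepsilon^R \in \rho_{\ql}(u^R)$, i.e.\ iff $\rho_{\ql}(u^R)$ is \emph{final} in $\cH^{\ell}(\ql,L^R)$, hence \emph{initial} in its reverse; symmetrically, $\rho_{\qr}(u)$ is final iff $u\in L$ iff $u^R\in L^R$ iff $\rho_{\ql}(u^R)$ is initial in $\cH^{\ell}(\ql,L^R)$, hence final in the reverse. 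For the transitions: in $\cH^{r}(\qr,L)$ we have $\rho_{\qr}(v)\in\delta(\rho_{\qr}(u),a)$ iff $\rho_{\qr}(u)\cdot a\subseteq\rho_{\qr}(v)$. Applying the reverse to this inclusion gives $a\cdot\bigl(\rho_{\qr}(u)\bigr)^R\subseteq\bigl(\rho_{\qr}(v)\bigr)^R$, and by the dictionary $\bigl(\rho_{\qr}(u)\bigr)^R=\rho_{\ql}(u^R)$ and $\bigl(\rho_{\qr}(v)\bigr)^R=\rho_{\ql}(v^R)$, so this is exactly $a\cdot\rho_{\ql}(u^R)\subseteq\rho_{\ql}(v^R)$, which by Definition~\ref{def:left-const:qo} says $\rho_{\ql}(u^R)\in\delta(\rho_{\ql}(v^R),a)$ in $\cH^{\ell}(\ql,L^R)$, i.e.\ $\rho_{\ql}(v^R)\in\delta^{R}(\rho_{\ql}(u^R),a)$ in the reverse automaton. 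Thus $\phi$ transports transitions, initial states and final states correctly, so it is an automaton isomorphism.

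The only genuinely delicate point, and the step I would be most careful about, is the equivalence of the \emph{$L$-primality} conditions on the two sides: one must be sure that the quantifier ``$\bigcup_{x: x\qrn u} x^{-1}L$'' in Definition~\ref{def:CompositeClosed} transports cleanly under reversal into ``$\bigcup_{y: y\qln u^R} L^R y^{-1}$''. This needs~\eqref{eq:leftRightReverse} at the level of the \emph{strict} orders, i.e.\ $x\qrn u \Lra x^R\qln u^R$, which follows from~\eqref{eq:leftRightReverse} applied to both $\qr$ and its failure; I would spell this out explicitly since it is where the hypothesis is essentially used. Everything else — the set-theoretic identity $u^{-1}L = (L^R(u^R)^{-1})^R$, the fact that reversal is an involution commuting with union and with concatenation in the sense $(XY)^R = Y^R X^R$ — is routine and I would invoke it without detailed calculation. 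Once primality is handled, the remaining verifications are the mechanical checks sketched above, and assembling them yields the claimed isomorphism.
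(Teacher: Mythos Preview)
Your proposal is correct and follows essentially the same approach as the paper: both define the isomorphism $\phi:\rho_{\qr}(u)\mapsto\rho_{\ql}(u^R)$, verify that $L$-primality is preserved via the reversal identity $u^{-1}L=(L^R(u^R)^{-1})^R$ together with the strict version of~\eqref{eq:leftRightReverse}, and then check initial states, final states, and transitions. The only cosmetic difference is that you argue the transition step via the set-level identity $(\rho_{\qr}(u))^R=\rho_{\ql}(u^R)$ and reversal of the inclusion $\rho_{\qr}(u)\cdot a\subseteq\rho_{\qr}(v)$, whereas the paper argues it directly at the quasiorder level via $v\qr ua \Lra v^R\ql au^R$; these are the same computation.
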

\begin{proof}
Let \(\cH^{r}(\qr, L) = \tuple{Q, \Sigma, \delta, I, F}\) and \((\cH^{\ell}(\ql, L^R))^R = \tuple{\wt{Q}, \Sigma, \wt{\delta}, \wt{I}, \wt{F}}\).
We will show that \(\cH^{r}(\qr, L)\) is isomorphic to \((\cH^{\ell}(\ql, L^R))^R\).

Let \(\varphi: Q \rightarrow \wt{Q}\) be a mapping assigning to each state \(ρ_{\qr}(u) \in Q\) with \(u \in \Sigma^*\), the state \(ρ_{\ql}(u^R) \in \wt{Q}\).
Next, we show that \(\varphi\) is an NFA isomorphism between \(\cH^{r}(\qr, L)\) and \((\cH^{\ell}(\ql, L^R))^R\).

Observe that:
\begin{align*}
u^{-1}L = \bigcup_{x \qrn u}x^{-1}L & \Lra \quad\text{[Since \(\left(\bigcup S_i\right)^R = \bigcup S_i^R\)]} \\
(u^{-1}L)^R = \bigcup_{x \qrn u}(x^{-1}L)^R & \Lra \quad\text{[Since \((u^{-1}L)^R = L^R(u^R)^{-1} \)]}\\
L^R(u^R)^{-1} = \bigcup_{x \qrn u} L^R(x^R)^{-1} & \Lra \quad\text{[By Equation~\eqref{eq:leftRightReverse}]}\\
L^R(u^R)^{-1} = \bigcup_{x^R \qln u^R} L^R(x^R)^{-1} & \enspace . 
\end{align*}
Therefore \(ρ_{\qr}(u)\) is \(L\)-composite if{}f \(ρ_{\ql}(u^R)\) is \(L^R\)-composite, hence \(\varphi(Q) = \wt{Q}\).

Since 
\[\varepsilon \in ρ_{\qr}(u) \Lra u \qr \varepsilon \Lra u^r \ql \varepsilon \Lra \varepsilon \in ρ_{\ql}(u^R) \enspace ,\]
we have that \(ρ_{\qr}(u)\) is an initial
state of \(\cH^{r}(\qr, L)\) if{}f \(ρ_{\ql}(u^R)\) is a
final state of \(\cH^{\ell}(\ql, L^R)\), i.e.\ an initial state of \((\cH^{\ell}(\ql, L^R))^R\).
Therefore, \(\varphi(I) = \wt{I}\).

Since 
\[ρ_{\qr}(u) \subseteq L \Lra u \in L \Lra u^r \in L^R \enspace ,\]
we have that \(ρ_{\qr}(u)\) is a final state of \(\cH^{r}(\qr, L)\) if{}f \(ρ_{\ql}(u^R)\) is an initial state of \(\cH^{\ell}(\ql, L^R)\), i.e. a final state of \((\cH^{\ell}(\ql, L^R))^R\).
Therefore, \(\varphi(F) = \wt{F}\).

It remains to show that \(q' \in \delta(q, a)\Lra \varphi(q') \in \wt{\delta}(\varphi(q),a)\), for all \(q, q' \in Q\) and \(a \in \Sigma\).
Assume that \(q = ρ_{\qr}(u)\) for some \(u \in \Sigma^*\), \(q' = ρ_{\qr}(v)\) for some \(v \in Σ^*\) and \(q' \in \delta(q, a)\) with \(a \in \Sigma\).
Then,
\begin{myAlignEP}
ρ_{\qr}(v) \in δ(ρ_{\qr}(u), a) & \Lra \quad \text{[By Definition~\ref{def:right-const:qo}]} \\
ρ_{\qr}(u)a \subseteq ρ_{\qr}(v) & \Lra \quad \text{[By definition of \(ρ_{\qr}\) and Lemma~\ref{lemma:QObwComplete}]} \\
v \qr ua & \Lra \quad \text{[By Equation~\eqref{eq:leftRightReverse} and \((ua)^R = au^R\)]} \\
v^r \ql au^R & \Lra \quad \text{[By definition of \(ρ_{\ql}\) and Lemma~\ref{lemma:QObwComplete}]} \\
aρ_{\ql}(u^R) \subseteq ρ_{\ql}(v^R) & \Lra \quad \text{[By Definition~\ref{def:left-const:qo}]} \\
ρ_{\ql}(v^R) \in \wt{δ}(ρ_{\ql}(u^R), a) & \Lra \quad \text{[Definition of \(q, q'\) and \(\varphi\)]} \\
\varphi(q') \in \wt{δ}(\varphi(q), a) \enspace .\tag*{\qedhere}
\end{myAlignEP}
\end{proof}

\subsection{On the Size of \texorpdfstring{\(\cH^{r}(\qr,L)\)}{Hr} and \texorpdfstring{\(\cH^{\ell}(\ql,L)\)}{Hl}}

We conclude this section with a note on the sizes of the automata constructions \(\cH^{r}(\qr,L)\) and \(\cH^{\ell}(\ql,L)\) when applied to quasiorders satisfying \(\mathord{\qr_1} \subseteq \mathord{\qr_2}\) and \(\mathord{\ql_1} \subseteq \mathord{\ql_2}\), respectively.

The following result establishes a relationship between the \(L\)-composite principals for two comparable right quasiorders \(\mathord{\qr_1} \subseteq \mathord{\qr_2}\).
This result is used in Theorem~\ref{theorem:numLPrimePrincipals} to show that the number of \(L\)-prime principals induced by \(\qr_1\) is greater or equal than the number of \(L\)-prime principals induced by \(\qr_2\).

As a consequence, if \(\mathord{\qr_1} \subseteq \mathord{\qr_2}\) then the automaton \(\cH^{r}(\qr_1,L)\) has, at least, as many states as \(\cH^{r}(\qr_2,L)\).
The same holds for left quasiorders and \(\cH^{\ell}\).

\begin{lemma}\label{lemma:numprincipals}
Let \(L \subseteq Σ^*\) be a regular language and let \(u \in Σ^*\). 
Let \(\qr_1\) and \(\qr_2\) be two \(L\)-consistent right quasiorders such that \(\mathord{\qr_1} \subseteq \mathord{\qr_2}\).
Then
\[ρ_{\qr_1}(u) \text{ is \(L\)-composite} \Ra \left( ρ_{\qr_2}(u) \text{ is \(L\)-composite} \lor  \exists x \qrn_1 u, \; ρ_{\qr_2}(u) = ρ_{\qr_2}(x)\right)\enspace .\]
Similarly holds for left quasiorders.
\end{lemma}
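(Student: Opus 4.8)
Here is the plan for proving Lemma~\ref{lemma:numprincipals}.

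\textbf{Approach.} The statement says: if $\rho_{\qr_1}(u)$ is $L$-composite, then either $\rho_{\qr_2}(u)$ is $L$-composite, or there is some $x$ with $x \qrn_1 u$ whose $\qr_2$-principal coincides with that of $u$. The natural strategy is to assume the hypothesis and also assume the \emph{first} disjunct of the conclusion fails, i.e.\ $\rho_{\qr_2}(u)$ is $L$-prime, and then exhibit the required $x$. The key quantitative ingredient I would use is the characterization of $L$-compositeness in terms of quotients (Definition~\ref{def:CompositeClosed}): $\rho_{\qr_1}(u)$ being $L$-composite means $u^{-1}L = \bigcup_{x \qrn_1 u} x^{-1}L$. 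In particular $u^{-1}L \subseteq \bigcup_{x \qrn_1 u} x^{-1}L$, and since each such $x$ satisfies $x \qr_1 u$, hence $x^{-1}L \subseteq u^{-1}L$ (by the fact from \citet{deLuca1994} that $\qr_1$ being $L$-consistent gives $x \qr_1 u \Ra x^{-1}L \subseteq u^{-1}L$, used repeatedly in the excerpt), we actually have equality with at least one of the strictly-smaller-principal witnesses contributing the ``full'' quotient $u^{-1}L$. More precisely, since the index of $\qr_1$ is finite, the union $\bigcup_{x \qrn_1 u} x^{-1}L$ is a finite union, so there must exist a single $x_0$ with $x_0 \qrn_1 u$ and $x_0^{-1}L = u^{-1}L$ --- wait, that is not quite forced by a finite union of sets whose union equals $u^{-1}L$; what \emph{is} forced is that $u^{-1}L = \bigcup_{x \qrn_1 u, x^{-1}L \subsetneq u^{-1}L} x^{-1}L$ together with possibly some $x$ with $x^{-1}L = u^{-1}L$. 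I would split on whether such an $x$ with $x \qrn_1 u$ and $x^{-1}L = u^{-1}L$ exists.

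\textbf{Key steps.} First, suppose there exists $x \qrn_1 u$ with $x^{-1}L = u^{-1}L$. Since $\qr_2$ is also $L$-consistent and $x \qr_1 u$ implies $x \qr_2 u$ (as $\mathord{\qr_1}\subseteq\mathord{\qr_2}$ means $\rho_{\qr_1}(S)\subseteq\rho_{\qr_2}(S)$ for all $S$, hence $x\qr_1 u \Ra x \qr_2 u$), we get $x^{-1}L \subseteq u^{-1}L$ from $x\qr_2 u$; combined with the reverse inclusion $u^{-1}L \subseteq x^{-1}L$ --- which I need to derive. Here I would use that $x^{-1}L = u^{-1}L$ is given, so the two principals $\rho_{\qr_2}(x)$ and $\rho_{\qr_2}(u)$: do we have $u \qr_2 x$? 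Not obviously. Instead I should observe that when $x^{-1}L = u^{-1}L$, both $x$ and $u$ are maximal in their $\qr_2$-class only if... Let me reconsider: the cleanest route is that $x^{-1}L=u^{-1}L$ together with $x\qr_2 u$ gives, via the maximality property of the right Nerode quasiorder $\qr_L$ (Lemma~\ref{lemma:leftrightnerodegoodqo}), that $x \qr_L u$ and $u \qr_L x$, but that does \emph{not} transfer back to $\qr_2$. So I will instead directly target the conclusion: I want $\rho_{\qr_2}(u)=\rho_{\qr_2}(x)$ for \emph{some} $x\qrn_1 u$, which requires $x\qr_2 u$ and $u\qr_2 x$. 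The second disjunct of the conclusion is thus quite strong, so the proof must show that when $\rho_{\qr_2}(u)$ is $L$-prime, the compositeness of $\rho_{\qr_1}(u)$ forces one of the strictly-smaller $\qr_1$-principals to ``collapse'' onto $\rho_{\qr_2}(u)$. I would argue: $\rho_{\qr_2}(u)$ $L$-prime means $u^{-1}L \neq \bigcup_{y\qrn_2 u} y^{-1}L$; pick $w\in u^{-1}L$ witnessing this, so $w\notin y^{-1}L$ for all $y\qrn_2 u$. But $\rho_{\qr_1}(u)$ composite gives $w\in x^{-1}L$ for some $x\qrn_1 u$; this $x$ satisfies $x\qr_2 u$ (finer implies coarser direction) and $w\in x^{-1}L$, so $x$ is \emph{not} strictly below $u$ for $\qr_2$, forcing $u\qr_2 x$, hence $\rho_{\qr_2}(u)=\rho_{\qr_2}(x)$. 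This is the argument; then the left-quasiorder case is dual via Lemma~\ref{lemma:leftRightReverse} or by a symmetric direct repetition.

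\textbf{Main obstacle.} The delicate point is the inference ``$x\qr_2 u$ and ($x$ not strictly $\qr_2$-below $u$) $\Ra$ $u \qr_2 x$'': this is just the definition of the strict order $\qrn_2$, so it is immediate once I have established $x\qr_2 u$ and ruled out $u\not\qr_2 x$; the real content is producing the witness $x$ with $w\in x^{-1}L$ and $x\qrn_1 u$ from compositeness of $\rho_{\qr_1}(u)$, and ensuring $w$ can be chosen in $u^{-1}L$ but outside every $y^{-1}L$ with $y\qrn_2 u$ --- i.e.\ correctly unpacking $L$-primeness of $\rho_{\qr_2}(u)$. I expect the bookkeeping around the $\subsetneq$ versus $\neq$ in Definition~\ref{def:CompositeClosed} and making sure the quantifiers line up (the union in the definition of composite is over $x\qrn u$, \emph{not} over $x$ with $x^{-1}L\subsetneq u^{-1}L$) to be where care is needed, but no deep idea is required beyond the finer-implies-coarser monotonicity $\mathord{\qr_1}\subseteq\mathord{\qr_2}$ and the quotient characterization.
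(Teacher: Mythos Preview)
Your argument is correct and matches the paper's approach in spirit: the paper does a direct case split on whether every $x$ with $x\qrn_1 u$ also satisfies $x\qrn_2 u$ (if yes, the $\qr_1$-compositeness union is contained in the $\qr_2$-one, yielding $L$-compositeness of $\rho_{\qr_2}(u)$; if no, the offending $x$ has $x\qr_2 u$ but not $x\qrn_2 u$, forcing $\rho_{\qr_2}(x)=\rho_{\qr_2}(u)$), while you phrase the same dichotomy contrapositively by assuming $\rho_{\qr_2}(u)$ is $L$-prime and picking a witness $w$ to locate such an $x$. The exploratory detours in your write-up (finite index, the attempt via $x^{-1}L=u^{-1}L$) are unnecessary and should be dropped---your ``Key steps'' paragraph already contains the complete, clean argument.
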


\begin{proof}
Let \(u \in Σ^*\) be such that \(ρ_{\qr_1}(u)\) is \(L\)-composite.
Then we have that \(u^{-1}L = \bigcup_{x \in Σ^*, x \qrn_1 u} x^{-1}L\).
On the other hand, since \(\qr_2\) is a right \(L\)-consistent quasiorder, we have that \(\mathord{\qr_2} \subseteq \mathord{\qrL}\), as shown by \citet{deLuca1994}.
Therefore \(u^{-1}L \supseteq \bigcup_{x \in Σ^*, x \qrn_2 u} x^{-1}L\). 
There are now two possibilities:
\begin{myItem}
\item For all \(x \in Σ^*\) such that \(x \qrn_1 u\) we have that \(x \qrn_2 u\).
In that case we have that \(u^{-1}L = \bigcup_{x\inΣ^*, \; x \qrn_2 u} x^{-1}L\), hence \(ρ_{\qr_2}(u)\) is \(L\)-composite.
\item There exists \(x \in Σ^*\) such that \(x \qrn_1 u\), hence \(x \qr_2 u\), but \(x \not\qrn_2u\).
In that case, it follows that \(ρ_{\qr_2}(x) = ρ_{\qr_2}(u)\).
\end{myItem}
The proof for left quasiorders is symmetric.
\end{proof}

\begin{theorem}\label{theorem:numLPrimePrincipals}
Let \(L\subseteq Σ^*\) and let \(\qo_1\) and \(\qo_2\) be two right or two left \(L\)-consistent quasiorders such that \(\mathord{\qo_1} \subseteq \mathord{\qo_2}\).
Then
\[\len{\{ρ_{\qo_1}(u) \mid u \in Σ^* \land ρ_{\qo_1}(u) \text{ is \(L\)-prime}\}} \geq \len{\{ρ_{\qo_2}(u) \mid u \in Σ^* \land ρ_{\qo_2}(u) \text{ is \(L\)-prime}\}}\]
\end{theorem}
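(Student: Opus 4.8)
The plan is to reduce Theorem~\ref{theorem:numLPrimePrincipals} to Lemma~\ref{lemma:numprincipals} by exhibiting an injective map from the set of $L$-prime principals of $\qo_2$ into the set of $L$-prime principals of $\qo_1$. First I would treat the right-quasiorder case explicitly; the left case is symmetric (or can be derived via the reverse operation as in Lemma~\ref{lemma:leftRightReverse}, although a direct symmetric argument is cleaner here). Write $P_i \ud \{\rho_{\qo_i}(u) \mid u \in \Sigma^* \land \rho_{\qo_i}(u) \text{ is } L\text{-prime}\}$ for $i \in \{1,2\}$, and recall that since $\mathord{\qo_1} \subseteq \mathord{\qo_2}$ each $\qo_2$-equivalence class is a union of $\qo_1$-equivalence classes, so there is a well-defined surjection at the level of \emph{all} principals sending $\rho_{\qo_1}(u)$ to $\rho_{\qo_2}(u)$.

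The key step is the following claim: for every $u$ with $\rho_{\qo_2}(u)$ being $L$-prime, there exists $v$ with $v \sim_{\qo_2} u$ and $\rho_{\qo_1}(v)$ being $L$-prime. Granting this, I would pick for each element $\rho_{\qo_2}(u) \in P_2$ such a witness $v$ and map $\rho_{\qo_2}(u) \mapsto \rho_{\qo_1}(v)$; this is injective because distinct elements of $P_2$ are distinct $\qo_2$-classes, and the $\qo_1$-class of the witness determines the $\qo_2$-class (as $\mathord{\qo_1}\subseteq\mathord{\qo_2}$). Hence $\len{P_1} \geq \len{P_2}$. To prove the claim, suppose toward a contradiction that \emph{every} $v \sim_{\qo_2} u$ has $\rho_{\qo_1}(v)$ being $L$-composite. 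Among the finitely many $\qo_1$-classes contained in the $\qo_2$-class of $u$, consider a $\qo_1$-minimal one, i.e.\ pick $v \sim_{\qo_2} u$ such that no $w \qon_1 v$ satisfies $w \sim_{\qo_2} u$. Applying Lemma~\ref{lemma:numprincipals} to $v$: since $\rho_{\qo_1}(v)$ is $L$-composite, either $\rho_{\qo_2}(v)$ is $L$-composite — but $\rho_{\qo_2}(v) = \rho_{\qo_2}(u)$, contradicting $L$-primeness — or there exists $x \qon_1 v$ with $\rho_{\qo_2}(x) = \rho_{\qo_2}(v) = \rho_{\qo_2}(u)$, i.e.\ $x \sim_{\qo_2} u$, contradicting the $\qo_1$-minimality of $v$. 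Either way we reach a contradiction, establishing the claim.

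I expect the main obstacle to be getting the minimality argument airtight: one must check that a $\qo_1$-minimal element among $\{v : v \sim_{\qo_2} u\}$ genuinely exists, which uses that $\qo_1$ induces finitely many principals (so there is no infinite strictly descending chain to worry about — a standard finiteness argument over the finitely many $\qo_1$-classes inside a fixed $\qo_2$-class), and that $x \qon_1 v$ in Lemma~\ref{lemma:numprincipals} does indeed entail $x \sim_{\qo_2} u$: from $x \qon_1 v$ we get $x \qo_2 v$, and combined with $\rho_{\qo_2}(x) = \rho_{\qo_2}(v)$ this gives $x \sim_{\qo_2} v \sim_{\qo_2} u$. A secondary point worth stating is that $\mathord{\qo_1}\subseteq\mathord{\qo_2}$ really does imply each $\qo_2$-class refines into $\qo_1$-classes — this follows immediately from the definition of finer/coarser via $\rho_{\qo_1}(S) \subseteq \rho_{\qo_2}(S)$ for all $S$, applied to singletons. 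For the left-quasiorder case I would simply invoke the left version of Lemma~\ref{lemma:numprincipals} (already stated there as ``Similarly holds for left quasiorders'') and run the identical argument with $\qln$ and $L u^{-1}$ in place of $\qrn$ and $u^{-1} L$.
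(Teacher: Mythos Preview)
Your proposal is correct and follows essentially the same approach as the paper: both show that for every $L$-prime $\rho_{\qo_2}(u)$ there is some $v$ with $\rho_{\qo_2}(v)=\rho_{\qo_2}(u)$ and $\rho_{\qo_1}(v)$ $L$-prime, relying on Lemma~\ref{lemma:numprincipals} and the finiteness of $\qo_1$-principals. The only cosmetic difference is that the paper phrases the finiteness step as an iterative descent $u,x,x_1,\dots,x_n$ (terminating because there are no infinite strictly descending $\qo_1$-chains) whereas you phrase it as picking a $\qo_1$-minimal $v$ inside the $\qo_2$-class and deriving a contradiction; these are the same argument, and your explicit verification that the resulting map $P_2\to P_1$ is injective is a small expository improvement over the paper's ``Clearly, this entails\ldots''.
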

\begin{proof}
We proceed by showing that for every \(L\)-prime \(ρ_{\qo_2}(u)\) there exists an \(L\)-prime \(ρ_{\qo_1}(x)\) such that \(ρ_{\qo_2}(x) = ρ_{\qo_2}(u)\).
Clearly, this entails that there are, at least, as many \(L\)-prime principals for \(\qo_1\) as there are for \(\qo_2\).

Let $ρ_{\qo_2}(u)$ be $L$-prime.

If \(ρ_{\qo_1}(u)\) is \(L\)-prime, we are done.
Otherwise, by Lemma~\ref{lemma:numprincipals}, we have that there exists $x \qon_1 u$ such that 
$ρ_{\qo_2}(u) = ρ_{\qo_2}(x)$.

We repeat the reasoning with $x$. If $ρ_{\qo_1}(x)$ is $L$-prime, we are done. Otherwise, there exists $x_1 \qon_1 x$ such that 
$ρ_{\qo_2}(u) = ρ_{\qo_2}(x) = ρ_{\qo_2}(x_1)$.

Since \(\qo_1\) induces finitely many principals, there are no infinite strictly descending chains and, therefore, there exists \(x_n\) such that \(ρ_{\qo_2}(u)=ρ_{\qo_2}(x)=ρ_{\qo_2}(x_1)=\ldots = ρ_{\qo_2}(x_n)\) and \(ρ_{\qo_1}(x_n)\) is \(L\)-prime.%
\end{proof}

\section{Language-based Quasiorders and their Approximation using NFAs}%
\label{sec:Instantiation}
In this section we instantiate our automata constructions using two classes of quasiorders, namely, the so-called \emph{Nerode's} quasiorders~\cite{deLuca1994}, whose definition is based on a given regular language; and the \emph{automata-based} quasiorders, whose definition is based on a finite representation of the language, i.e, an automaton.

Both quasiorders have been used previously in Chapter~\ref{chap:LangInc} in order to derive algorithms for solving the language inclusion problem between regular languages.
We recall their definitions next:
\begin{align}
u \qrL v  & \udiff u^{-1}L \subseteq v^{-1}L & \quad \text{\emph{Right-}language-based Quasiorder}\label{eq:Rlanguage} \\
u \qlL v  & \udiff Lu^{-1} \subseteq Lv^{-1}  & \quad \text{\emph{Left-}language-based Quasiorder}%
\label{eq:Llanguage} \\
u \qrN v & \udiff \post^{\cN}_u(I) \subseteq \post^{\cN}_v(I) & \quad \text{\emph{Right-}Automata-based Quasiorder}\label{eq:RState} \\
u \qlN v & \udiff \pre^{\cN}_u(F) \subseteq \pre^{\cN}_v(F) & \quad \text{\emph{Left-}Automata-based Quasiorder} \label{eq:LState} 
\end{align}

As explained in Chapter~\ref{chap:LangInc}, \citet{deluca2011} showed that for every regular language \(L\) there exists a finite number of quotients \(u^{-1}L\) and, therefore, \(\qrL\) and \(\qlL\) are well-quasiorders.
On the other hand, the automata-based quasiorders are also well-quasiorders.
Therefore, all the quasiorders defined above induce a finite number of principals.

\begin{remark} \label{remark:LRDual}
The pairs of quasiorders \(\qrL\)~-~\(\qlL\) and \(\qrN\)~-~\(\qlN\) are dual, i.e.  
\[u \qrL v \Lra u^R \qlL v^R \quad \text{and} \quad u \qrN v \Lra u^R \qlN v^R\enspace .\]
\end{remark}

The following result shows that the principals of \(\qrN\) and \(\qlN\) can be described, respectively, as intersections of left and right languages of the states of \(\cN\) while the principals of \(\qrL\) and \(\qlL\) can be described as intersections of left and right quotients of \(L\).

\begin{lemma}\label{lemma:positive_atoms}
Let \(\cN = \tuple{Q,Σ,δ,I,F}\) be an NFA with \(\lang{\cN}=L\).  
Then, for every \(u \in \Sigma^*\),
\begin{align*}
ρ_{\qrN}(u) &= \bigcap\textstyle{_{q \in \post_u^{\cN}(I)}} W_{I,q}^{\cN} & ρ_{\qrL}(u)  &= \bigcap\textstyle{_{w \in \Sigma^*, \; w \in u^{-1}L}} Lw^{-1} \\
ρ_{\qlN}(u)  & = \bigcap\textstyle{_{q \in \pre_u^{\cN}(I) }} W_{q,F}^{\cN} & ρ_{\qlL}(u)  &= \bigcap\textstyle{_{w \in \Sigma^*, \; w \in Lu^{-1}L }} w^{-1}L\enspace .
\end{align*}
\end{lemma}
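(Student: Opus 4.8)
The statement asserts four equalities between principals and intersections of (left/right) languages of states, or (left/right) quotients of $L$. The two automata-based equalities and the two language-based equalities are symmetric to each other via the left-right duality (Remark~\ref{remark:LRDual} together with the observation that $W_{I,q}^{\cN^R} = (W_{q,F}^{\cN})^R$ and $(u^{-1}L)^R = L^R(u^R)^{-1}$), so I would prove one representative of each kind --- say $\rho_{\qrN}(u) = \bigcap_{q\in\post_u^{\cN}(I)} W_{I,q}^{\cN}$ and $\rho_{\qrL}(u) = \bigcap_{w\in u^{-1}L} Lw^{-1}$ --- and note that the other two follow by applying the already-proved case to $\cN^R$ and $L^R$ respectively, using Remark~\ref{remark:LRDual}.

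For the automata-based case, I would start from the definition $\rho_{\qrN}(u) = \{v \in \Sigma^* \mid u \qrN v\} = \{v \mid \post_u^{\cN}(I) \subseteq \post_v^{\cN}(I)\}$. The key elementary fact to establish is that for a single state $q \in Q$, the set $\{v \in \Sigma^* \mid q \in \post_v^{\cN}(I)\}$ equals $W_{I,q}^{\cN}$, which is immediate from the definitions of $\post$ and $W_{I,q}^{\cN}$: $q \in \post_v^{\cN}(I)$ iff there is an initial state reaching $q$ by reading $v$, i.e.\ iff $v \in W_{I,q}^{\cN}$. Then $\post_u^{\cN}(I) \subseteq \post_v^{\cN}(I)$ holds iff for every $q \in \post_u^{\cN}(I)$ we have $q \in \post_v^{\cN}(I)$, i.e.\ iff $v \in \bigcap_{q\in\post_u^{\cN}(I)} W_{I,q}^{\cN}$. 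This gives the desired equality directly.

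For the language-based case, I would argue analogously: $\rho_{\qrL}(u) = \{v \mid u^{-1}L \subseteq v^{-1}L\}$, and I would use the characterization that $w \in v^{-1}L$ iff $v \in Lw^{-1}$ (both say $vw \in L$). Hence $u^{-1}L \subseteq v^{-1}L$ iff every $w \in u^{-1}L$ satisfies $w \in v^{-1}L$, i.e.\ $v \in Lw^{-1}$ for every such $w$, i.e.\ $v \in \bigcap_{w\in u^{-1}L} Lw^{-1}$. Writing the index set as $\{w \in \Sigma^* \mid w \in u^{-1}L\}$ matches the statement.

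\textbf{Main obstacle.} There is no deep obstacle here --- the proof is a chain of definitional unfoldings --- but the one point requiring a little care is the bookkeeping in the left-right duality step: one must check that reversing $\cN$ genuinely turns $\post$-principals into $\pre$-principals and left languages into right languages, and similarly that reversing $L$ swaps left and right quotients compatibly with Remark~\ref{remark:LRDual}. Since these facts ($\lang{\cN^R} = \lang{\cN}^R$, $W_{q,F}^{\cN^R} = (W_{I,q}^{\cN})^R$, $(u^{-1}L)^R = L^R(u^R)^{-1}$) are either stated in the Background chapter or entirely routine, I expect the whole argument to be short.
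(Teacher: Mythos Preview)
Your proposal is correct and matches the paper's proof essentially step for step: the paper also proves only the two right-quasiorder equalities, deferring the left ones by symmetry, and in each case unfolds the principal's definition and pivots on exactly the equivalences you name ($q \in \post_v^{\cN}(I) \Leftrightarrow v \in W_{I,q}^{\cN}$ and $w \in v^{-1}L \Leftrightarrow v \in Lw^{-1}$) to obtain the intersection description. The only cosmetic difference is that the paper does not spell out the duality bookkeeping you flag as the main care point; it simply asserts the left cases are symmetric.
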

\begin{proof}
We prove the lemma for the principals induced by \(\qrL\) and \(\qrN\).
The proofs for the left quasiorders are symmetric. 

For each \(u \in \Sigma^*\) we have that 
\begin{align*}
ρ_{\qrN}(u) &= \quad \text{[By definition of \(ρ_{\qrN}\)]}\\
\{v \in Σ^* \mid \post_u^{\cN}(I) \subseteq \post_v^{\cN}(I)\} & = \quad \text{[By definition of set inclusion]} \\
\{v \in \Sigma^* \mid  \forall q \in \post^{\cN}_u(I), \; q\in \post^{\cN}_v(I)\} & = \quad \text{[Since \(q \in \post^{\cN}_v(I) \Lra v \in W^{\cN}_{I,q}\)]}\\
\{v \in \Sigma^* \mid  \forall q \in \post^{\cN}_u(I), \; v \in W^{\cN}_{I,q}\} & = \quad \text{[By definition of intersection]}\\
\bigcap\textstyle{_{q \in \post^{\cN}_u(I)}} W_{I,q}^{\cN} & \enspace .
\end{align*}

On the other hand, 
\begin{align*}
v \in  \bigcap\textstyle{_{w \in \Sigma^*, \; w \in u^{-1}L}} Lw^{-1} & \Lra \quad \text{[By definition of intersection]} \\
\forall w \in Σ^*, \; w \in u^{-1}L \Ra v \in Lw^{-1} & \Lra \quad \text{[Since \(\forall x,y \in Σ^*, \; x \in Ly^{-1} \Lra y \in x^{-1}L\)]} \\
\forall w \in Σ^*, \; w \in u^{-1}L \Ra w \in v^{-1}L & \Lra \quad \text{[By definition of set inclusion]} \\
u^{-1}L \subseteq v^{-1}L & \Lra \quad \text{[By definition of \(ρ_{\qlL}(u)\)]} \\
v \in ρ_{\qrL}(u) 
\end{align*}
\end{proof}

As shown by Lemma~\ref{lemma:LAconsistent}, given an NFA \(\cN\) with \(L = \lang{\cN}\), the quasiorders \(\qrL\) and \(\qrN\) are right \(L\)-consistent, while the quasiorders \(\qlL\) and \(\qlN\) are left \(L\)-consistent.
Therefore, by Lemma~\ref{lemma: HrGeneratesL} and~\ref{lemma:HlgeneratesL}, our automata constructions applied to these quasiorders yield automata for \(L\).

Finally, recall that, as shown by \citet{deLuca1994}, \(\qrN\) is finer than \(\qrL\), i.e., \(\mathord{\qrN} \subseteq \mathord{\qrL}\).
In that sense we say \(\qrN\) \emph{approximates}  \(\qrL\).
As the following lemma shows, the approximation is precise, i.e., \(\mathord{\qrN} = \mathord{\qrL}\), whenever \(\cN\) is a co-RFA with no empty states\@.

\begin{lemma}\label{lemma:coResidual_qrL=qrN}
Let \(\cN = \tuple{Q, Σ, δ, I, F}\) be a co-RFA with no empty states such that \(L = \lang{\cN}\).
Then \(\mathord{\qrL} = \mathord{\qrN}\).
Similarly, if \(\cN\) is an RFA with no unreachable states and \(L = \lang{\cN}\) then \(\mathord{\qlL} = \mathord{\qlN}\).
\end{lemma}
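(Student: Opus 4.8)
Since, as recalled just above, \citet{deLuca1994} showed that \(\mathord{\qrN}\subseteq\mathord{\qrL}\) for any NFA \(\cN\) with \(\lang{\cN}=L\), the only thing to prove is the converse inclusion \(\mathord{\qrL}\subseteq\mathord{\qrN}\); the statement \(\mathord{\qlL}=\mathord{\qlN}\) for the RFA case will then follow by reversal. The plan for \(\mathord{\qrL}\subseteq\mathord{\qrN}\) is to combine the description of \(\rho_{\qrN}(u)\) as an intersection of left languages (Lemma~\ref{lemma:positive_atoms}) with the co-residual hypothesis, which rewrites each of these left languages as a right quotient of \(L\).

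Concretely, I would take \(u,v\in\Sigma^*\) with \(u\qrL v\), i.e.\ \(u^{-1}L\subseteq v^{-1}L\), and show \(v\in\rho_{\qrN}(u)\), i.e.\ \(u\qrN v\). By Lemma~\ref{lemma:positive_atoms} this amounts to showing \(v\in W_{I,q}^{\cN}\) for every \(q\in\post_u^{\cN}(I)\). Fix such a \(q\). First, \(q\in\post_u^{\cN}(I)\) gives \(u\in W_{I,q}^{\cN}\). Second, \(\cN\) being a co-RFA yields \(W_{I,q}^{\cN}=Lw^{-1}\) for some \(w\in\Sigma^*\). Hence \(u\in Lw^{-1}\), that is \(uw\in L\), that is \(w\in u^{-1}L\); the hypothesis \(u^{-1}L\subseteq v^{-1}L\) then gives \(w\in v^{-1}L\), i.e.\ \(vw\in L\), i.e.\ \(v\in Lw^{-1}=W_{I,q}^{\cN}\). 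Intersecting over all \(q\in\post_u^{\cN}(I)\) we obtain \(v\in\rho_{\qrN}(u)\), as desired, and therefore \(\mathord{\qrL}=\mathord{\qrN}\). I note in passing that this argument never uses the ``no empty states'' assumption (any co-RFA works); keeping it in the statement is harmless and parallels the other results in the section.

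For \(\mathord{\qlL}=\mathord{\qlN}\) I would not repeat the argument but transfer it through the reverse automaton: \(\cN\) is an RFA with no unreachable states precisely when \(\cN^{R}\) is a co-RFA with no empty states and \(\lang{\cN^{R}}=L^{R}\), and under the bijection \(w\mapsto w^{R}\) one has \(u\qlL v\Lra u^{R}\qr_{L^{R}}v^{R}\) and \(u\qlN v\Lra u^{R}\qr_{\cN^{R}}v^{R}\) (using \((Lu^{-1})^{R}=(u^{R})^{-1}L^{R}\), \(\pre_u^{\cN}(F)=\post_{u^{R}}^{\cN^{R}}(I)\), and the left--right dualities of Remark~\ref{remark:LRDual}). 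The already-established first part applied to \(\cN^{R}\) then gives \(\mathord{\qlL}=\mathord{\qlN}\).

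I expect the only subtle point to be bookkeeping: one must invoke the co-residual definition so as to obtain an \emph{equality} \(W_{I,q}^{\cN}=Lw^{-1}\) (rather than the trivial inclusion \(W_{I,q}^{\cN}\subseteq Lw^{-1}\) that holds for any \(w\) in the right language of \(q\)), and one must keep the adjunction \(u\in Lw^{-1}\Leftrightarrow w\in u^{-1}L\) pointing in the correct direction at each of its two uses. Once these are pinned down, the verification reduces to a short chain of implications and no genuine obstacle remains.
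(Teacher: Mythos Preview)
Your proof is correct and follows essentially the same route as the paper: both arguments fix \(q\in\post_u^{\cN}(I)\), invoke the co-RFA property to write \(W_{I,q}^{\cN}=Lw^{-1}\), and then use the adjunction \(u\in Lw^{-1}\Leftrightarrow w\in u^{-1}L\) twice to transport \(v\) into \(W_{I,q}^{\cN}\). Your presentation via Lemma~\ref{lemma:positive_atoms} is marginally tidier than the paper's, which unfolds the same reasoning through \(W_{\post_u^{\cN}(I),F}^{\cN}\) directly; your remark that the ``no empty states'' hypothesis is not actually exercised in this argument is also accurate (the paper's own proof does not visibly use it either).
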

\begin{proof}
It is straightforward to check that the following holds for every NFA \(\cN\) and \(u, v \in Σ^*\).
\[\post_u^{\cN}(I) \subseteq \post_v^{\cN}(I) \Ra W_{\post_{u}^{\cN}(I),F}^{\cN} \subseteq W_{\post_{v}^{\cN}(I),F}^{\cN}\]
Next we show that the reverse implication also holds when \(\cN\) is a co-RFA with no empty states.
Let \(u, v\in Σ^*\) be such that \(W_{\post_{u}^{\cN}(I),F}^{\cN} \subseteq W_{\post_{v}^{\cN}(I),F}^{\cN}\).
Then,
\begin{align*}
q \in \post_{u}^{\cN}(I) & \Ra \quad \text{[Since \(\cN\) is co-RFA with no empty states]}\\
\exists x \in Σ^*, \; u \in W_{I,q} = Lx^{-1} & \Ra \quad \text{[Since \(u \in Lx^{-1} \Ra x \in u^{-1}L\)]} \\
x \in W_{\post_{u}^{\cN}(I), F} & \Ra \quad \text{[Since \(W_{\post_{u}^{\cN}(I),F}^{\cN} \subseteq W_{\post_{v}^{\cN}(I),F}^{\cN}\)]} \\
x \in W_{\post_{v}^{\cN}(I), F} & \Ra \quad \text{[By definition of \(W_{S,T}^{\cN}\)]} \\
\exists q' \in Q, \; x \in W_{q',F} \land v \in W_{I, q'} & \Ra \quad \text{[Since \(x \in W_{q',F} \Ra W_{I,q'} \subseteq Lx^{-1}\)]}\\
v \in  Lx^{-1} & \Ra \quad \text{[Since \(Lx^{-1} = W_{I,q}\)]} \\
v \in W_{I, q} & \Ra \quad \text{[By definition of \(\post_v^{\cN}(I)\)]} \\
q \in \post_v^{\cN}(I) \enspace .
\end{align*}

Therefore, \(W_{\post_{u}^{\cN}(I),F}^{\cN} \subseteq W_{\post_{v}^{\cN}(I),F}^{\cN} \Ra \post_u^{\cN}(I) \subseteq \post_v^{\cN}(I)\).

The proof for RFAs with no unreachable states and left quasiorders is symmetric.
\end{proof}

Finally, the following lemma shows that, for the Nerode's quasiorders, the \(L\)-composite principals can be described as intersections of \(L\)-prime principals.

\begin{lemma}\label{lemma:CompositeIntersection}
Let \(\cN = \tuple{Q, Σ, δ, I, F}\) be an NFA with \(\lang{\cN} = L\).
Then,
\begin{equation}\label{eq:rhoCompRaIntersection}
u^{-1}L = \hspace{-10pt}\bigcup_{x\in\Sigma^*,\; x \qrn_L u}\hspace{-10pt} x^{-1}L \implies ρ_{\qrL}(u) = \hspace{-10pt}\bigcap_{x\in\Sigma^*,\; x \qrn_L u}\hspace{-10pt} ρ_{\qrL}(x) \enspace .
\end{equation}
Similarly holds for the left Nerode's quasiorder \(\qlL\).
\end{lemma}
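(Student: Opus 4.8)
My plan is to prove the displayed implication directly from the characterization of principals given in Lemma~\ref{lemma:positive_atoms}, which tells us that for the right Nerode quasiorder we have \(ρ_{\qrL}(u) = \bigcap_{w \in u^{-1}L} Lw^{-1}\). The key observation is that set-theoretic union on the left side of the implication (a statement about quotients \(x^{-1}L\)) should translate into set-theoretic intersection on the right side (a statement about principals), because a principal is itself an intersection of right quotients indexed by the left quotient, and swapping a union for an intersection under the indexing set is exactly the kind of duality that quotient operations enjoy.

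First I would write \(ρ_{\qrL}(u) = \bigcap_{w \in u^{-1}L} Lw^{-1}\) using Lemma~\ref{lemma:positive_atoms}. Then, assuming the hypothesis \(u^{-1}L = \bigcup_{x \qrn_L u} x^{-1}L\), I substitute this decomposition of the index set, obtaining \(ρ_{\qrL}(u) = \bigcap_{w \in \bigcup_{x \qrn_L u} x^{-1}L} Lw^{-1}\). Since an intersection indexed by a union of index sets equals the intersection of the intersections indexed by each piece, this rewrites as \(\bigcap_{x \qrn_L u} \bigl(\bigcap_{w \in x^{-1}L} Lw^{-1}\bigr)\), and applying Lemma~\ref{lemma:positive_atoms} again to each inner intersection gives \(\bigcap_{x \qrn_L u} ρ_{\qrL}(x)\), which is exactly the conclusion. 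The left-quasiorder case follows by the same argument, using the symmetric statement of Lemma~\ref{lemma:positive_atoms} for \(ρ_{\qlL}\) together with the dual decomposition in terms of \(Lu^{-1}\).

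The main obstacle, such as it is, is making sure the index-set manipulation \(\bigcap_{w \in \bigcup_i A_i} f(w) = \bigcap_i \bigcap_{w \in A_i} f(w)\) is applied legitimately and that the indexing in Lemma~\ref{lemma:positive_atoms} is read correctly — in particular that \(w\) ranges over all of \(\Sigma^*\) subject to \(w \in u^{-1}L\), so that the pieces \(x^{-1}L\) in the hypothesis are genuinely subsets of \(u^{-1}L\) (which they are, since \(x \qrn_L u\) implies \(x^{-1}L \subseteq u^{-1}L\)). There is a minor subtlety: the hypothesis only asserts the union \emph{equals} \(u^{-1}L\), not that each \(x^{-1}L\) is contained in it, but containment is immediate from the definition of \(\qrn_L\), so re-indexing the big intersection over the union is valid with no side conditions. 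I do not anticipate any genuine difficulty; this is essentially a one-line computation once the two applications of Lemma~\ref{lemma:positive_atoms} are in place, and the bulk of the write-up will be a short \texttt{align*} display chaining these equalities.

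\begin{proof}
We prove the claim for the right Nerode's quasiorder; the case of \(\qlL\) is symmetric, using the symmetric statement of Lemma~\ref{lemma:positive_atoms} and the decomposition in terms of \(Lu^{-1}\).

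Assume \(u^{-1}L = \bigcup_{x \in \Sigma^*,\, x \qrn_L u} x^{-1}L\).
Note that for every \(x \qrn_L u\) we have \(x^{-1}L \subseteq u^{-1}L\), so each \(x^{-1}L\) in the union is a subset of \(u^{-1}L\).
Then, by Lemma~\ref{lemma:positive_atoms},
\begin{align*}
ρ_{\qrL}(u) &= \hspace{-5pt}\bigcap_{w\in\Sigma^*,\; w \in u^{-1}L}\hspace{-5pt} Lw^{-1} = \hspace{-14pt}\bigcap_{w\in\Sigma^*,\; w \in \bigcup_{x \qrn_L u} x^{-1}L}\hspace{-14pt} Lw^{-1}\\
&= \bigcap_{x\in\Sigma^*,\; x \qrn_L u}\Bigl(\;\bigcap_{w\in\Sigma^*,\; w \in x^{-1}L}\hspace{-5pt} Lw^{-1}\Bigr) = \bigcap_{x\in\Sigma^*,\; x \qrn_L u} ρ_{\qrL}(x)\enspace,
\end{align*}
where the first equality is Lemma~\ref{lemma:positive_atoms}, the second uses the hypothesis, the third is the fact that an intersection indexed by a union of sets equals the intersection of the intersections over each set, and the last equality is again Lemma~\ref{lemma:positive_atoms} applied to each \(x\).
\end{proof}
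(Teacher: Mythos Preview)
Your proof is correct but takes a different route from the paper's. The paper argues by double inclusion: the containment \(ρ_{\qrL}(u) \subseteq \bigcap_{x \qrn_L u} ρ_{\qrL}(x)\) is immediate from \(x \qrn_L u \Rightarrow ρ_{\qrL}(u) \subseteq ρ_{\qrL}(x)\), and for the reverse it takes \(w\) in the intersection, deduces \(x^{-1}L \subseteq w^{-1}L\) for every \(x \qrn_L u\), and then uses the hypothesis to conclude \(u^{-1}L \subseteq w^{-1}L\), i.e.\ \(w \in ρ_{\qrL}(u)\). Your argument instead pushes the hypothesis through the explicit formula of Lemma~\ref{lemma:positive_atoms} and the set-theoretic identity \(\bigcap_{w \in \bigcup_i A_i} f(w) = \bigcap_i \bigcap_{w \in A_i} f(w)\), collapsing the whole thing to a single chain of equalities. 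Your approach is more computational and leans on the earlier lemma, which makes the duality between union-of-quotients and intersection-of-principals completely transparent; the paper's approach is self-contained (it does not invoke Lemma~\ref{lemma:positive_atoms}) and perhaps makes clearer why only one inclusion needs the hypothesis.
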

\begin{proof}
Observe that the inclusion \(ρ_{\qrL}(u) \subseteq \bigcap_{x\in\Sigma^*, x \qrn_L u} ρ_{\qrL}(x)\) always holds since \(x \qrn_L u \Ra ρ_{\qrL}(u) \subseteq ρ_{\qrL}(x)\).
Next, we prove the reverse inclusion.

Let \(w \in \bigcap_{x\in\Sigma^*, x \qrn_L u} ρ_{\qrL}(x)\) and assume that the left hand side of Equation~\eqref{eq:rhoCompRaIntersection} holds.
Then, by definition of intersection and \(ρ_{\qrL}\), we have that \(x \qrL w\) for every \(x \in Σ^*\) such that \(x \qrn_L u\), i.e., \(x^{-1}L \subseteq w^{-1}L\) for every \(x \in Σ^*\) such that \(x^{-1}L \subsetneq u^{-1}L\).
Since, by hypothesis, \(u^{-1}L = \bigcup_{x\in\Sigma^*,\; x \qrn_L u} x^{-1}L\), it follows that \(u^{-1}L \subseteq w^{-1}L\) and, therefore, \(w \in ρ_{\qr}(u)\).

We conclude that \(\bigcap_{x\in\Sigma^*, x \qrn_L u} ρ_{\qrL}(x) \subseteq ρ_{\qrL}(u)\).
\end{proof}

\subsection{Automata Constructions}

In what follows, we will use \(\cF{}\) and \(\cG{}\) to denote the construction \(\cH\) when applied, respectively, to the language-based quasiorders induced by a regular language and the automata-based quasiorders induced by an NFA\@.

\begin{definitionNI}[\(\cG{}\) and \(\cF{}\)]%
\label{def:FG}
Let \(\cN\) be an NFA with \(L = \lang{\cN}\).
Define:
\begin{align*}
\mindex{\cF{r}}(L) & \ud  \cH^{r}(\qrL, L) & \mindex{\cG{r}}(\cN) & \ud \cH^{r}(\qrN, L) \\
\mindex{\cF{\ell}}(L) & \ud  \cH^{\ell}(\qlL, L) & \mindex{\cG{\ell}}(\cN) & \ud  \cH^{\ell}(\qlN, L) \enspace . \tag*{\rule{0.5em}{0.5em}}
\end{align*}
\end{definitionNI}

Given an NFA \(\cN\) generating the language \(L=\lang{\cN}\), all constructions in the above definition yield automata generating \(L\).
However, while the constructions using the right quasiorders result in RFAs, those using left quasiorders result in co-RFAs.
Furthermore, it follows from Remark~\ref{remark:LRDual} and Lemma~\ref{lemma:leftRightReverse} that \(\cF{\ell}(L)\) is isomorphic to \((\cF{r}(L^R))^R\) and \(\cG{\ell}(\cN)\) is isomorphic to \((\cG{r}(\cN^R))^R\).

It follows from Theorem~\ref{theorem:numLPrimePrincipals} that the automata \(\cG{r}(\cN)\) and \(\cG{\ell}(\cN)\) have, at least, as many states as \(\cF{r}(L)\) and \(\cF{\ell}(L)\), respectively.
Intuitively, \(\cF{r}(L)\) is the minimal RFA for \(L\), i.e. it is isomorphic to the canonical RFA for \(L\), since, as shown by~\citet{deLuca1994}, \(\qrL\) is the coarsest right \(L\)-consistent quasiorder.
On the other hand, as we shall see in Example~\ref{example:residualization}, \(\cG{r}(\cN)\) is a sub-automaton of \(\cN^{\text{res}}\)~\cite{denis2002residual} for every NFA \(\cN\).

Finally, it follows from Lemma~\ref{lemma:coResidual_qrL=qrN} that residualizing (\(\cG{r}\)) a co-RFA with no empty states (for instance, \(\cG{\ell}(\cN)\)) results in the canonical RFA for \(\lang{\cN}\) (\(\cF{r}(\lang{\cN})\)).

We formalize all these notions in Theorem~\ref{theoremF}.
\pagebreak
\begin{theorem}\label{theoremF}
Let \(\cN\) be an NFA with \(L = \lang{\cN}\).
Then the following hold:
\smallskip
\begin{myEnumA}
\item \(\lang{\cF{r}(L)} =\lang{\cF{\ell}(L)} = L = \lang{\cG{r}(\cN)} = \lang{\cG{\ell}(\cN)}\).%
\label{lemma:language-F}
\item \(\cF{\ell}(L)\) is isomorphic to \((\cF{r}(L^R))^R\).%
\label{lemma:FlisomorphicRfrR}
\item \(\cG{\ell}(\cN)\) is isomorphic to \((\cG{r}(\cN^R))^R\).%
\label{lemma:AlRequalArNR}
\item \(\cF{r}(L)\) is isomorphic to the canonical RFA for \(L\).%
\label{theorem:CanonicalRFAlanguage}
\item \(\cG{r}(\cN)\) is isomorphic to a sub-automaton of \(\cN^{\text{res}}\).%
\label{lemma:rightNRes}
\item \(\cG{r}(\cG{\ell}(\cN))\) is isomorphic to \(\cF{r}(L)\).\label{lemma:LS+RS=RN}
\end{myEnumA}
\end{theorem}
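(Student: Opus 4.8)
The plan is to prove Theorem~\ref{theoremF} by establishing each item in order, leaning heavily on the lemmas already proved in Section~\ref{sec:automataConstructions} and Section~\ref{sec:Instantiation}. Items \ref{lemma:language-F}, \ref{lemma:FlisomorphicRfrR} and \ref{lemma:AlRequalArNR} are essentially bookkeeping. For \ref{lemma:language-F}, I would invoke Lemma~\ref{lemma: HrGeneratesL} applied to the right \(L\)-consistent quasiorders \(\qrL\) and \(\qrN\) (both right \(L\)-consistent by Lemma~\ref{lemma:leftrightnerodegoodqo} and Lemma~\ref{lemma:LAconsistent}), and Lemma~\ref{lemma:HlgeneratesL} applied to the left \(L\)-consistent quasiorders \(\qlL\) and \(\qlN\); all four quasiorders induce finitely many principals since they are well-quasiorders. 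For \ref{lemma:FlisomorphicRfrR} and \ref{lemma:AlRequalArNR}, I would check that the pairs \(\qrL\)~-~\(\qlL\) and \(\qrN\)~-~\(\qlN\) satisfy the duality hypothesis~\eqref{eq:leftRightReverse} of Lemma~\ref{lemma:leftRightReverse} — this is exactly Remark~\ref{remark:LRDual} — and then the conclusion \(\cF{\ell}(L) \cong (\cF{r}(L^R))^R\) and \(\cG{\ell}(\cN) \cong (\cG{r}(\cN^R))^R\) follows directly, noting that \(\lang{\cN^R} = L^R\).

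For \ref{theorem:CanonicalRFAlanguage} I would argue that \(\cF{r}(L) = \cH^{r}(\qrL, L)\) is isomorphic to the canonical RFA \(\cC\) for \(L\) (Definition in Chapter~\ref{chap:prel}). The key facts are: (i) by Lemma~\ref{lemma:positive_atoms}, \(\rho_{\qrL}(u) = \bigcap_{w \in u^{-1}L} L w^{-1}\), so the map \(\rho_{\qrL}(u) \mapsto u^{-1}L\) is a bijection between principals of \(\qrL\) and left quotients of \(L\); and (ii) by Lemma~\ref{lemma:CompositeIntersection} together with Definition~\ref{def:CompositeClosed}, a principal \(\rho_{\qrL}(u)\) is \(L\)-composite exactly when the quotient \(u^{-1}L\) is composite in the sense of the canonical RFA definition (i.e.\ it is the union of the quotients strictly contained in it). Under the identification \(\rho_{\qrL}(u) \leftrightarrow u^{-1}L\), the definitions of initial states (\(\varepsilon \in \rho_{\qrL}(u) \Lra u \qrL \varepsilon \Lra u^{-1}L \subseteq L\)), final states (\(u \in L \Lra \varepsilon \in u^{-1}L\)), and transitions (\(\rho_{\qrL}(u)\, a \subseteq \rho_{\qrL}(v) \Lra v \qrL ua \Lra v^{-1}L \subseteq a^{-1}(u^{-1}L)\), using Lemma~\ref{lemma:QObwComplete}) match the canonical RFA clause by clause. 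So the identity on quotients is the desired isomorphism.

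For \ref{lemma:rightNRes} I would show \(\cG{r}(\cN) = \cH^{r}(\qrN, L)\) is isomorphic to a sub-automaton of \(\cN^{\text{res}}\). The natural map sends \(\rho_{\qrN}(u)\) to \(\post^{\cN}_u(I)\); by the definition of \(\qrN\) this is well-defined and injective, and by Lemma~\ref{lemma:positive_atoms} principals of \(\qrN\) correspond to sets \(\post^{\cN}_u(I)\) up to the equivalence \(\sim_{\qrN}\). One must check that an \(L\)-prime principal \(\rho_{\qrN}(u)\) maps to a non-coverable set \(\post^{\cN}_u(I)\) — here I expect to use that \(\post^{\cN}_x(I) \subsetneq \post^{\cN}_u(I)\) implies \(x \qrn_N u\) implies \(x^{-1}L \subsetneq u^{-1}L\), so coverability of \(\post^{\cN}_u(I)\) forces \(L\)-compositeness of \(\rho_{\qrN}(u)\) (the converse need not hold, which is why we only get a sub-automaton, not equality); then verify that the initial/final/transition clauses of \(\cG{r}(\cN)\) are precisely the restrictions of those of \(\cN^{\text{res}}\) to the image. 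Example~\ref{example:residualization} (referenced but not in the excerpt) presumably illustrates strictness.

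Finally, item \ref{lemma:LS+RS=RN} — the double-reversal statement — follows by composing the earlier pieces. By \ref{lemma:language-F}, \(\cG{\ell}(\cN)\) is a co-RFA generating \(L\); I would additionally observe it has no empty states (every state \(\rho_{\qlN}(u)\) is reachable, and by Lemma~\ref{lemma:HlgeneratesL} its left language is \(Lu^{-1} \neq \varnothing\) for the relevant \(u\), so dually no empty states after reversal — this point needs a small argument, ruling out empty states via \(L\)-primeness). Then Lemma~\ref{lemma:coResidual_qrL=qrN}, applied to the co-RFA \(\cG{\ell}(\cN)\) with no empty states, gives \(\mathord{\qr_{\cG{\ell}(\cN)}} = \mathord{\qr_{L}}\), hence \(\cG{r}(\cG{\ell}(\cN)) = \cH^{r}(\qr_{\cG{\ell}(\cN)}, L) = \cH^{r}(\qrL, L) = \cF{r}(L)\), which by \ref{theorem:CanonicalRFAlanguage} is the canonical RFA. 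I expect the main obstacle to be the subtle handling of empty/unreachable states in \ref{lemma:rightNRes} and \ref{lemma:LS+RS=RN}: the definitions of \(\cH^r\) and \(\cH^\ell\) tolerate empty resp.\ unreachable states, and Lemma~\ref{lemma:coResidual_qrL=qrN} has the hypothesis "no empty states", so I need to confirm carefully that restricting to \(L\)-prime principals in \(\cG{\ell}(\cN)\) indeed eliminates empty states before applying that lemma.
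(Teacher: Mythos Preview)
Your proposal is correct and follows essentially the same route as the paper for all six items. Two small clarifications: in \ref{lemma:rightNRes} your intermediate claim that \(x \qrn_{\cN} u\) forces \(x^{-1}L \subsetneq u^{-1}L\) is too strong (you only get \(\subseteq\)), but you do not actually need strictness --- the paper argues directly that coverability of \(\post_u^{\cN}(I)\) yields \(u^{-1}L = W^{\cN}_{\post_u^{\cN}(I),F} = \bigcup_{x \qrn_{\cN} u} x^{-1}L\); and in \ref{lemma:LS+RS=RN} your worry about empty states is already handled by the remark immediately after Lemma~\ref{lemma:HlgeneratesL}, which states (and the paper's proof of \ref{lemma:LS+RS=RN} simply cites) that \(\cH^{\ell}(\ql,L)\) never has empty states.
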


\begin{proof} 
In the following, let \(\cN = \tuple{Q, \Sigma, \delta, I, F}\).

\begin{myEnumA}
\item 

By Definition~\ref{def:FG}, \(\cF{r}(L) = \cH^{r}(\qrL, L)\) and \(\cG{r}(\cN) = \cH^{r}(\qrN, L)\).
On the other hand, by Lemma \ref{lemma: HrGeneratesL}, \(\lang{\cH^{r}(\qrL, L)} = \lang{\cH^{r}(\qrN, L)} = L\).
Therefore, \(\lang{\cF{r}(L)} = \lang{\cG{r}(L)} = L\).

Similarly, it follows from Lemma~\ref{lemma:HlgeneratesL} that \(\lang{\cF{\ell}(L)} = \lang{\cG{\ell}(L)} =L\).

\item For each \(u, v \in Σ^*\):
\begin{align*}
u \qlL v & \Lra \quad \text{[By Definition~\eqref{eq:Llanguage}]} \\
u^{-1}L \subseteq v^{-1}L & \Lra \quad \text{[\(A \subseteq B\Lra A^R \subseteq B^R\)]}\\
(u^{-1}L)^R \subseteq (v^{-1}L)^R & \Lra \quad\text{[Since \((u^{-1}L)^R = L^R(u^R)^{-1}\)]} \\
L^R(u^R)^{-1} \subseteq L^R(v^R)^{-1} & \Lra\quad\text{[By Definition~\eqref{eq:Rlanguage}]} \\
u^R \qr_{L^R} v^R \enspace . 
\end{align*}
Therefore, by Lemma~\ref{lemma:leftRightReverse}, \(\cF{\ell}(L)\) is isomorphic to \((\cF{r}(L^R))^R\).

\item %

For each \(u,v \in \Sigma^*\):
\begin{align*}
u \qlN v & \Lra \quad\text{[By Definition~\eqref{eq:LState}]}\\
\pre_u^{\cN^R}(F) \subseteq \pre_v^{\cN^R}(F) & \Lra\quad \text{[Since \(q \in \pre^{\cN^R}_{x}(F)\) if{}f \(q \in \post^{\cN}_{x^R}(I) \)]}\\
\post_{u^R}^{\cN}(I) \subseteq \post_{v^R}^{\cN}(I) & \Lra \quad\text{[By Definition~\eqref{eq:RState}]}\\
u^R \qlN v^R \enspace .
\end{align*}
It follows from Lemma~\ref{lemma:leftRightReverse} that \(\cG{\ell}(\cN)\) is isomorphic to \(\cG{r}(\cN^R)^R\).

\item %

Let \(ρ\) be the closure induced by \(\qrL\).
Let \(\cC = \tuple{\widetilde{Q}, \Sigma, \eta, \widetilde{I}, \widetilde{F}}\) be the canonical RFA for \(L\) and let \(\cF{r}(L) = \tuple{Q, \Sigma, \delta, I, F}\). 
Let \(\varphi: \widetilde{Q} \rightarrow Q\) be the mapping assigning to each state \(\widetilde{q}_i \in \widetilde{Q}\) of the form \(u^{-1}L\), the state \(ρ(u) \in Q\), with \(u \in \Sigma^*\).

We show that \(\varphi\) is an NFA isomorphism between \(\cC\) and \(\cF{r}(L)\).

Since 
\[u^{-1}L \subseteq L \Lra u \qrL \varepsilon \Lra \varepsilon \in ρ(u)\enspace ,\]
we have that \(u^{-1}L\) is an initial state of \(\mathcal{C}\) if{}f \(ρ(u)\) is an initial state of \(\cF{r}(L)\), hence \(\varphi(\widetilde{I}) = I\).

On the other hand, %
since 
\[\varepsilon \in u^{-1}L \Lra u \in L \enspace ,\]
we have that \(u^{-1}L\) is a final state of \(\mathcal{C}\) if{}f \(ρ(u)\) is a final state of \(\cF{r}(L)\), hence \(\varphi(\widetilde{F}) = F\).

Moreover, since 
\[ρ(u)\cdot a \subseteq ρ(v) \Lra v \qrL ua \Lra v^{-1}L \subseteq (ua)^{-1}L\enspace ,\]
we have that \(v^{-1}L = \eta(u^{-1}L, a)\) if and only if \(ρ(v) \in δ(ρ(u),a)\), for all \(u^{-1}L, v^{-1}L \in \wt{q}\) and \(a \in \Sigma\).

Finally, we need to show that \(\forall u \in Σ^*, \;  ρ(u) \in Q \Lra \exists q_i \in \widetilde{Q}, \; q_i = u^{-1}L\).
Observe that:
\begin{align*}
u^{-1}L = \bigcup_{x \qrn_L u} u^{-1}L & \Lra \quad \text{[By Definition~\eqref{eq:Rlanguage}]} \\
u^{-1}L = \bigcup_{x^{-1}L \subset u^{-1}L} x^{-1}L \enspace .
\end{align*}
Therefore, \(\forall u \in Σ^*, ρ(u) \text{ is \(L\)-prime} \Lra u^{-1}L \text{ is prime}\), hence \(\varphi(\wt{Q}) = Q\).

\item %

Recall that \(\cN^{\text{res}} = \tuple{Q_r, Σ, δ_r, I_r, F_r}\) is the RFA built by the residualization operation defined by \citet{denis2002residual} (see Chapter~\ref{chap:prel}).
Let \(\cG{r}(\cN) = \tuple{\widetilde{Q}, \Sigma, \widetilde{\delta}, \widetilde{I}, \widetilde{F}}\).

Next, we show that there is a surjective mapping \(\varphi\) that associates states and transitions of \(\cG{r}(\cN)\) with states and transitions of \(\cN^{\text{res}}\).
Moreover, if \(q \in \widetilde{Q}\) is initial (resp.\ final) then \(\varphi(q) \in Q_r\) is initial (resp.\ final) and \(q' \in \wt{δ}(q,a) \Lra \varphi(q') \in δ_r(\varphi(q),a)\).
In this way, we conclude that \(\cG{r}(\cN)\) is isomorphic to a sub-automaton of \(\cN^{\text{res}}\).

Finally, since \(\lang{\cN^{\text{res}}} = \lang{\cN}\) then it follows from Lemma~\ref{lemma: HrGeneratesL} that \(\lang{\cN^{\text{res}}} = \lang{\cN} = \lang{\cG{r}(\cN)}\).

Let \(ρ\) be the closure induced by \(\qrN\) and let \(\varphi: \widetilde{Q} \rightarrow Q_{r}\) be the mapping assigning to each state \(ρ(u) \in \widetilde{Q}\), the set \(\post_u^{\cN}(I) \in Q_{r}\) with \(u \in \Sigma^*\).

Since 
\[\varepsilon \in ρ(u) \Lra u \qrN \varepsilon \Lra \post_u^{\cN}(I) \subseteq \post_{\varepsilon}^{\cN}(I)\]
The initial states of \(\cG{r}(\cN)\) are mapped into the set the initial states of \(\cN^{\text{res}}\), hence \(\varphi(\wt{I}) = I_r\).

On the other hand, since
\[ρ(u) \subseteq L \Lra u \in L \Lra (\post_u^{\cN}(I) \cap F) \neq \varnothing \enspace ,\]
we have that the final states of \(\cG{r}(\cN)\), are mapped to the final states of \(\cN^{\text{res}}\), hence \(\varphi(\wt{F}) = F_r\).

Moreover, since
\[ρ(u)\cdot a \subseteq ρ(v) \Lra v \qrN ua \Lra \post_{v}^{\cN}(I) \subseteq \post_{ua}^{\cN}(I)\enspace ,\]
it follows that \(\forall u, v \in Σ^*\) such that \(\post_u^{\cN}(I), \post_v^{\cN}(I) \in Q_r\), we have 
\[\post_v^{\cN}(I) \in δ_r(\post_u^{\cN}(I),a) \Lra ρ(v) \in \wt{δ}(ρ(u), a)\enspace . \]

Finally, we show that \(\forall u \in Σ^*, \;  ρ(u) \in \widetilde{Q} \Ra \post_u^{\cN}(I) \in Q_r\).
By definition of \(\widetilde{Q}\) and \(Q_r\), this is equivalent to showing that for every word \(u \in Σ^*\), if \(\post_u^{\cN}(I)\) is coverable then \(ρ(u)\) is \(L\)-composite.
Observe that:
\begin{align*}
\post_u^{\cN}(I) = \hspace{-10pt}\bigcup_{\post_x^{\cN}(I) \subset \post_u^{\cN}(I)}\hspace{-10pt} \post_x^{\cN}(I) & \Lra \quad \text{[\(x \qrn_{\mathcal{N}} u \Lra \post_x^{\cN}(I) \subset \post_u^{\cN}(I)\)]} \\
\post_u^{\cN}(I) = \bigcup_{x \qrn_{\mathcal{N}} u} \post_x^{\cN}(I) & \Ra \quad \text{[Since \(W_{\post_u^{\cN}(I),T}^{\cN} = u^{-1}L\)]} \\
u^{-1}L = \bigcup_{x \qrn_{\mathcal{N}} u} x^{-1}L \enspace .
\end{align*}
It follows that if \(\post_u^{\cN}(I)\) is coverable then \(ρ(u)\) is \(L\)-composite, hence \(\varphi(\wt{Q}) \subseteq Q_r\).

\item %

As shown by Lemma~\ref{lemma:HlgeneratesL}, \(\cG{\ell}(\cN)\) is a co-RFA with no empty states and \(\lang{\cG{\ell}(\cN)} = \lang{\cN}\).
Therefore, it follows from Lemma~\ref{lemma:coResidual_qrL=qrN} that \(\cG{r}(\cG{\ell}(\cN))\) is isomorphic to \linebreak\(\cF{r}(\lang{\cG{\ell}(\cN)}) = \cF{r}(\lang{\cN})\).\qedhere
\end{myEnumA}
\end{proof}

Figure~\ref{Figure:diagramAutomata} summarizes all the connections between the automata constructions from Definition~\ref{def:FG}.

\begin{figure}[!ht]
\begin{minipage}[l]{0.45\textwidth}
\begin{tikzcd}[column sep=scriptsize, row sep=large]
\cN\ar[d, description, "R"',leftrightarrow] \ar[r, "\cG{\ell}"] \ar[rr, start anchor=70, bend left=25, "\cF{r}"] & \cG{\ell}(\cN) \ar[d, description, "R"',leftrightarrow] \ar[r, "\cG{r}"] & \cF{r}(\lang{\cN}) \ar[d, description, "R"',leftrightarrow]\\
\cN^R \ar[r, "\cG{r}"] \ar[rr, start anchor=290, bend right=25, "\cF{\ell}"] & \cG{r}(\cN^R) \ar[r, "\cG{\ell}"] & \cG{\ell}(\cG{r}(\cN^R))
\end{tikzcd}
\end{minipage}\hfill
\begin{minipage}[r]{0.54\textwidth}
The upper part of the diagram follows from Theorem~\ref{theoremF}~\ref{lemma:LS+RS=RN}, the squares follow from Theorem~\ref{theoremF}~\ref{lemma:AlRequalArNR} and the bottom curved arc follows from 
Theorem~\ref{theoremF}~\ref{lemma:FlisomorphicRfrR}.
Incidentally, the diagram shows a new relation which is a consequence of the left-right dualities between \(\qlL\) and \(\qrL\), and \(\qlN\) and \(\qrN\): \(\cF{\ell}(\lang{\cN^R})\) is isomorphic to \(\cG{\ell}(\cG{r}(\cN^R))\).
\end{minipage}

\caption{Relations between the constructions \(\cG{\ell},\cG{r},\cF{\ell}\) and \(\cF{r}\).
Note that constructions \(\cF{r}\) and \(\cF{\ell}\) are applied  to the language generated by the automaton in the origin of the labeled arrow while constructions \(\cG{r}\) and \(\cG{\ell}\) are applied directly to the automaton.}%
\label{Figure:diagramAutomata}
\end{figure}

It is well-known that determinizing a deterministic automata yields the same automaton, i.e. \(\cD^D = \cD\) for every DFA \(\cD\).
As a consequence, determinizing twice and automaton is the same as doing it once, i.e. \((\cN^{D})^{D} = \cN^D\).
However, it is not clear that the same holds for our residualization operation, i.e. it is not clear whether \(\cG{r}(\cG{r}(\cN)) = \cG{r}(\cN)\).

The following lemma gives a sufficient condition on an RFA \(\mathcal{H}\) built with our right automata construction so that applying our residualization operation yields the same automaton, i.e. \(\cG{r}(\mathcal{H}) \!=\! \mathcal{H}\).
In particular, we find that \(\cF{r}(L)\) is invariant to our residualization operation \(\cG{r}\).

\begin{lemma}
\label{lemma:qrHEqualqrifHsc}
Let \(L\) be a regular language and let \(\qr\) be a right \(L\)-consistent quasiorder.
Let \(\mathcal{H}\!=\!\cH^{r}(\qr,L)\).
If \(\mathcal{H}\) is a strongly consistent RFA then \(\mathord{\qr_{\mathcal{H}}} = \mathord{\qr}\).
\end{lemma}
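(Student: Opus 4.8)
The plan is to reduce the statement to an explicit description of the forward–reachable sets of \(\mathcal{H}\). Write \(\rho\ud\rho_{\qr}\) for the closure induced by \(\qr\) and recall \(\mathcal{H}=\cH^{r}(\qr,L)=\tuple{Q,\Sigma,\delta,I,F}\), where \(Q\) is the set of \(L\)-prime principals of \(\qr\) and, by Lemma~\ref{lemma: HrGeneratesL}, \(\mathcal{H}\) is an RFA with \(\lang{\mathcal{H}}=L\) and right languages \(W^{\mathcal{H}}_{\rho(w),F}=w^{-1}L\). The key claim I would prove is
\[
\post_{u}^{\mathcal{H}}(I)=\{\rho(w)\in Q\mid w\qr u\}\qquad\text{for every }u\in\Sigma^{*},
\]
equivalently that the left language \(W^{\mathcal{H}}_{I,\rho(w)}\) of each state equals the principal \(\rho(w)\). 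Granting this, \(u\mathbin{\qr_{\mathcal{H}}}v\) (i.e.\ \(\post_{u}^{\mathcal{H}}(I)\subseteq\post_{v}^{\mathcal{H}}(I)\)) unfolds to ``\(w\qr u\Rightarrow w\qr v\) for every state \(\rho(w)\)''; the implication \(u\qr v\Rightarrow u\mathbin{\qr_{\mathcal{H}}}v\) is then immediate from transitivity of \(\qr\), and taking the special case \(\qr=\qrL\) would give that \(\cF{r}(L)\) is invariant under \(\cG{r}\).

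For the inclusion ``\(\post_{u}^{\mathcal{H}}(I)\subseteq\{\rho(w)\in Q\mid w\qr u\}\)'' I would induct on \(|u|\). The base case is \(\post_{\varepsilon}^{\mathcal{H}}(I)=I=\{\rho(w)\in Q\mid\varepsilon\in\rho(w)\}=\{\rho(w)\in Q\mid w\qr\varepsilon\}\). For \(u=u'a\), a state \(\rho(w)\) lies in \(\post_{u'a}^{\mathcal{H}}(I)\) only if \(\rho(w)\in\delta(\rho(w'),a)\) for some \(\rho(w')\in\post_{u'}^{\mathcal{H}}(I)\), i.e.\ \(\rho(w')\,a\subseteq\rho(w)\); by the induction hypothesis \(w'\qr u'\), hence \(u'a\in\rho(w')\,a\subseteq\rho(w)\) and so \(w\qr u'a\). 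This uses only Lemma~\ref{lemma:QObwComplete} and transitivity, not strong consistency.

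The reverse inclusion ``\(\{\rho(w)\in Q\mid w\qr u\}\subseteq\post_{u}^{\mathcal{H}}(I)\)'' is the crux, and the point where strong consistency is essential: one must show that a state \(\rho(w)\) is reached by \emph{every} \(u\in\rho(w)\), not merely by its characterizing words \(\{x\mid x^{-1}L=w^{-1}L\}\). I would induct on \(|u|\) again. In the step \(u=u'a\) with \(w\qr u'a\): if \(\rho(u')\) is \(L\)-prime, then by the induction hypothesis \(\rho(u')\in\post_{u'}^{\mathcal{H}}(I)\), and since \(\rho(u')\,a\subseteq\rho(u'a)\subseteq\rho(w)\) (Lemma~\ref{lemma:QObwComplete} together with \(w\qr u'a\)) there is a transition \(\rho(u')\ggoes{a}\rho(w)\), so \(\rho(w)\in\post_{u'a}^{\mathcal{H}}(I)\). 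If \(\rho(u')\) is \(L\)-composite it is \emph{not} a state, and I would instead use that \(\mathcal{H}\) is \emph{strongly} consistent — every state is reached by all its characterizing words — combined with the basic NFA identity \(\bigcup_{\rho(z)\in\post_{u'}^{\mathcal{H}}(I)}z^{-1}L=u'^{-1}L\) and the decomposition of the composite quotient \(u'^{-1}L\) into the prime quotients it strictly contains, to route the run on \(u'a\) through a suitable prime state above \(\rho(u')\).

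I expect this composite‑principal bookkeeping to be the main obstacle: one has to reconcile the prime/composite structure of the \emph{principals} of \(\qr\) with that of the \emph{left quotients} of \(L\), and check that it is strong consistency (rather than mere consistency, or merely being an RFA) that makes the matching go through. The same subtlety reappears in the forward direction \(u\mathbin{\qr_{\mathcal{H}}}v\Rightarrow u\qr v\): when \(\rho(u)\) is \(L\)-prime one may take \(w=u\) in the unfolded condition above and conclude \(u\qr v\) directly, whereas when \(\rho(u)\) is \(L\)-composite one again descends to the prime states lying above \(\rho(u)\) and uses that they generate \(\rho(u)\) in the appropriate sense. Once the formula for \(\post_{u}^{\mathcal{H}}(I)\) is in hand, assembling \(\mathord{\qr_{\mathcal{H}}}=\mathord{\qr}\) from the two implications is routine.
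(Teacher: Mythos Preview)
Your plan is the paper's own approach: establish the identity
\(\post_u^{\mathcal H}(I)=\{\rho(w)\in Q\mid w\qr u\}\)
(the paper's Equation~\eqref{eq:PostQuotient}) and read off \(\mathord{\qr_{\mathcal H}}=\mathord{\qr}\). Your inductive forward inclusion just repackages the paper's chain argument, and your reverse step in the \(L\)-prime case mirrors the paper's: the paper writes ``Let \(\rho(u),\rho(x)\in\wt Q\)'', uses strong consistency to obtain \(u\in W^{\mathcal H}_{I,\rho(u)}\), and then reroutes the last transition to \(\rho(x)\).

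The composite-principal case you single out as the main obstacle is in fact \emph{also} not treated in the paper: the paper's reverse-inclusion argument explicitly assumes \(\rho(u)\in\wt Q\), and the concluding sentence ``\(\post_u\subseteq\post_v\Leftrightarrow u\qr v\)'' is asserted without addressing words \(u\) whose principal is \(L\)-composite (and, as you also note, the same issue recurs when extracting \(u\qr v\) from \(\post_u\subseteq\post_v\)). So you are being more careful than the paper here; your sketch via the NFA identity \(u'^{-1}L=\bigcup_{\rho(z)\in\post_{u'}(I)}z^{-1}L\) and the prime decomposition of \(u'^{-1}L\) points at the right bookkeeping but would still need to be made precise. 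One organizational difference worth noting: in your induction the prime case goes through from the induction hypothesis alone, so strong consistency is confined to the composite case, whereas in the paper's direct argument strong consistency drives the entire reverse inclusion.
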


\begin{proof}
Let \(\cN \!=\! \tuple{Q, Σ, δ, I, F}\) and \(\mathcal{H} \!=\! \tuple{\wt{Q}, Σ, \wt{δ}, \wt{I}, \wt{F}}\).
As shown by Lemma~\ref{lemma: HrGeneratesL}, \(\mathcal{H} \!=\! \cH^{r}(\qr, L)\) is an RFA generating \(L\), hence each state of \(\mathcal{H}\) is an \(L\)-prime principal \(ρ_{\qr}(u)\) whose right language is the quotient \(u^{-1}L\) for some \(u \in Σ^*\).

Observe that, by definition, 
\(\mathord{\qr_{\mathcal{H}}} = \mathord{\qr} \Lra \left( \forall u,v \in Σ^*, \; \post_u^{\mathcal{H}}(\wt{I}) \subseteq \post_v^{\mathcal{H}}(\wt{I}) \Lra u \qr v\right)\).
Next we prove that:
\begin{equation}%
\label{eq:PostQuotient}
\post_u^{\mathcal{H}}(\wt{I}) = \{ρ_{\qr}(x) \in \wt{Q} \mid x \qr u\} \enspace .
\end{equation}
First, we show that \(\post_u^{\mathcal{H}}(\wt{I}) \subseteq \{ρ_{\qr}(x) \in \wt{Q} \mid x \qr u\}\). 
To simplify the notation, let \(ρ\) denote \(ρ_{\qr}\).
\begin{align*}
ρ(x) \in \post_u^{\mathcal{H}}(\wt{I}) & \Lra \quad \text{[By definition of \(\post_u^{\mathcal{H}}(\wt{I})\)]}\\
\exists ρ(x_0) \in \wt{I}, \; u \in W^{\mathcal{H}}_{ρ(x_0),ρ(x)} & \Ra \quad \text{[By Definition~\ref{def:right-const:qo}]} \\
\exists ρ(x_0) \in \wt{Q}, \;\varepsilon \in ρ(x_0) \land ρ(x_0) \cdot u \subseteq ρ(x) & \Lra \quad \text{[By definition of \(ρ\)]} \\
\exists ρ(x_0) \in \wt{Q}, \; x_0 \qr \varepsilon \land x \qr u\cdot x_0 & \Ra \quad \text{[By mon. and trans. of \(\qr\)]} \\
x \qr u \enspace .
\end{align*}
We now prove the reverse inclusion. 
Let \(ρ(u), ρ(x) \in \wt{Q}\) be such that \(x \qr u\).
Then,
\begin{align*}
ρ(u) \in \wt{Q} & \Ra \quad \text{[By Lemma~\ref{lemma: HrGeneratesL}]}\\
W^{\mathcal{H}}_{ρ(u),F} = u^{-1}L & \Ra \quad \text{[Since \(\mathcal{H}\) is str. cons.]} \\
u \in W_{I,ρ(u)}^{\mathcal{H}} & \Ra \quad \text{[By def. \(W_{S,T}^{\mathcal{H}}\), \(u = za\)]} \\
\exists ρ(y) \in \wt{Q}, ρ(u_0) \in \wt{I},\; z \in W_{ρ(u_0), ρ(y)}  \land a \in W_{ρ(y), ρ(u)}& \Ra \quad \text{[By Definition~\ref{def:right-const:qo}]} \\
z \in W_{ρ(u_0), ρ(y)} \land ρ(y) \cdot a \subseteq ρ(u)& \Ra \quad \text{[By def. \(ρ = ρ_{\qr}\)]} \\
z \in W_{ρ(u_0), ρ(y)} \land u\qr y\cdot a & \Ra \quad \text{[Since \(x \qr u\)]} \\
z \in W_{ρ(u_0), ρ(y)} \land x \qr ya & \Ra \quad \text{[By Def.~\ref{def:right-const:qo}]} \\
z \in W_{ρ(u_0), ρ(y)} \land ρ(x) \in δ(ρ(y), a) & \Ra \quad \text{[By def. \(\post_u(I)\)]} \\
ρ(x) \in \post_u(I) \enspace .
\end{align*}
It follows from Equation~\eqref{eq:PostQuotient} that \(\post_u^{\mathcal{H}}(I) \subseteq \post^{\mathcal{H}}_v(I) \Lra u \qr v\), i.e. \(\mathord{\qr_{\mathcal{H}}} = \mathord{\qr}\).
\end{proof}

Finally, note that if \(\mathord{\qrL}=\mathord{\qrN}\) then clearly the automata \(\cF{r}(L)\) and \(\cG{r}(\cN)\) coincide for any NFA \(\cN\) with \(L = \lang{\cN}\).
The following result shows that the reverse implication also holds.

\begin{lemma}\label{lemma:qrlEqualqrNResEqualCan}
Let \(\cN\) be an NFA with \(L = \lang{\cN}\).
Then \(\mathord{\qrL} = \mathord{\qrN}\) if{}f \(\cG{r}(\cN)\) is isomorphic to \(\cF{r}(L)\).
\end{lemma}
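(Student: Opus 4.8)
The statement is an ``if and only if''. The direction ``\(\mathord{\qrL} = \mathord{\qrN} \Rightarrow \cG{r}(\cN)\) isomorphic to \(\cF{r}(L)\)'' is immediate: if the two quasiorders coincide, then \(\cH^r(\qrN, L) = \cH^r(\qrL, L)\) literally by Definition~\ref{def:right-const:qo}, and by Definition~\ref{def:FG} these are exactly \(\cG{r}(\cN)\) and \(\cF{r}(L)\). So the real content is the converse, and the plan is to prove the contrapositive: if \(\mathord{\qrN} \subsetneq \mathord{\qrL}\) (recall \(\mathord{\qrN} \subseteq \mathord{\qrL}\) always holds by \citet{deLuca1994}, as noted after Lemma~\ref{lemma:CompositeIntersection}), then \(\cG{r}(\cN)\) and \(\cF{r}(L)\) cannot be isomorphic.

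The cleanest route, I expect, is to compare the number of states. By Theorem~\ref{theoremF}~\ref{theorem:CanonicalRFAlanguage}, \(\cF{r}(L)\) is the canonical RFA for \(L\), so its states are precisely the \(L\)-prime principals of \(\qrL\), i.e.\ the prime quotients of \(L\); and by Theorem~\ref{theoremF}~\ref{lemma:rightNRes}, \(\cG{r}(\cN)\) has as states the \(L\)-prime principals of \(\qrN\). Theorem~\ref{theorem:numLPrimePrincipals} (applied with \(\mathord{\qo_1} = \mathord{\qrN} \subseteq \mathord{\qo_2} = \mathord{\qrL}\)) tells us the former count is \(\geq\) the latter. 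If the two automata are isomorphic, these counts are equal. So the key step is: \emph{strict} containment \(\mathord{\qrN} \subsetneq \mathord{\qrL}\) forces the count of \(L\)-prime principals of \(\qrN\) to be \emph{strictly} greater than that of \(\qrL\). Suppose not; then the surjection-like correspondence built in the proof of Theorem~\ref{theorem:numLPrimePrincipals} (each \(L\)-prime \(ρ_{\qrL}(u)\) comes from some \(L\)-prime \(ρ_{\qrN}(x)\) with \(ρ_{\qrL}(x) = ρ_{\qrL}(u)\)) would have to be a bijection between the two sets of prime principals; I would then argue that this bijection, together with the fact that every principal of \(\qrN\) is an intersection of \(L\)-prime ones (for \(\qrL\) this is Lemma~\ref{lemma:CompositeIntersection}; for \(\qrN\) one uses that \(\post_u^{\cN}(I)\) determines \(ρ_{\qrN}(u)\) and the composite case of Theorem~\ref{theoremF}~\ref{lemma:rightNRes}), forces \(ρ_{\qrN}(u) = ρ_{\qrL}(u)\) for all \(u\), contradicting strictness.

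Alternatively — and this may be simpler to write — I would go through \(\cG{r}(\cG{\ell}(\cN))\). By Theorem~\ref{theoremF}~\ref{lemma:LS+RS=RN}, \(\cG{r}(\cG{\ell}(\cN))\) is isomorphic to \(\cF{r}(L)\); and \(\cG{\ell}(\cN)\) is a co-RFA with no empty states (Lemma~\ref{lemma:HlgeneratesL}), so by Lemma~\ref{lemma:coResidual_qrL=qrN} applied to \(\mathcal{M} = \cG{\ell}(\cN)\) we get \(\mathord{\qrL} = \mathord{\qr_{\mathcal{M}}}\). Now if \(\cG{r}(\cN)\) is isomorphic to \(\cF{r}(L)\), then \(\cG{r}(\cN)\) is (isomorphic to) the canonical RFA, which is strongly consistent; hence by Lemma~\ref{lemma:qrHEqualqrifHsc} applied to \(\mathcal{H} = \cG{r}(\cN) = \cH^r(\qrN, L)\) we obtain \(\mathord{\qr_{\cG{r}(\cN)}} = \mathord{\qrN}\). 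But \(\cG{r}(\cN)\) being isomorphic to \(\cF{r}(L)\) also means its state-based quasiorder equals that of \(\cF{r}(L)\), which is \(\cH^r(\qrL, L)\) and is strongly consistent, so again by Lemma~\ref{lemma:qrHEqualqrifHsc} that quasiorder is \(\qrL\). Chaining these equalities of quasiorders (being careful that ``isomorphic automata induce equal \(\qr_{(-)}\)'', which is routine since \(\post_u\) is preserved up to the isomorphism) gives \(\mathord{\qrN} = \mathord{\qrL}\).

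The main obstacle I anticipate is the bookkeeping around ``isomorphic automata induce the same automata-based quasiorder'': one has to verify that \(u \qr_{\mathcal{H}_1} v \Leftrightarrow u \qr_{\mathcal{H}_2} v\) when \(\mathcal{H}_1 \cong \mathcal{H}_2\), which amounts to checking the isomorphism \(\varphi\) satisfies \(\varphi(\post_u^{\mathcal{H}_1}(I_1)) = \post_u^{\mathcal{H}_2}(I_2)\) for all \(u\) — true but needs a small induction on \(|u|\). A secondary subtlety is ensuring Lemma~\ref{lemma:qrHEqualqrifHsc} genuinely applies: we must know \(\cG{r}(\cN)\) is strongly consistent, which follows because the canonical RFA is strongly consistent and strong consistency transfers across isomorphism. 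Given the battery of earlier lemmas, the second route seems the more economical to write up, so that is the one I would pursue, keeping the state-counting argument as a remark or fallback.
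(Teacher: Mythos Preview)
Your second route is essentially the paper's proof. The paper argues exactly as you do in that paragraph: \(\cF{r}(L)\) is the canonical RFA, hence strongly consistent, so Lemma~\ref{lemma:qrHEqualqrifHsc} gives \(\mathord{\qr_{\cF{r}(L)}} = \mathord{\qrL}\); if \(\cG{r}(\cN)\cong\cF{r}(L)\) then \(\cG{r}(\cN)\) is also strongly consistent and \(\mathord{\qr_{\cG{r}(\cN)}} = \mathord{\qr_{\cF{r}(L)}}\), while Lemma~\ref{lemma:qrHEqualqrifHsc} applied to \(\cG{r}(\cN)=\cH^r(\qrN,L)\) gives \(\mathord{\qr_{\cG{r}(\cN)}}=\mathord{\qrN}\); chaining yields \(\mathord{\qrN}=\mathord{\qrL}\). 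Your detour through \(\cG{\ell}(\cN)\) and Lemma~\ref{lemma:coResidual_qrL=qrN} is unnecessary---you never use the equality \(\mathord{\qrL}=\mathord{\qr_{\mathcal{M}}}\) afterwards---so just drop that sentence. The two bookkeeping points you flag (isomorphic automata induce the same \(\qr_{(-)}\), and strong consistency transfers across isomorphism) are indeed the only things to check, and the paper takes them as evident.

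Your first route (state counting) is best left as the remark you intended. The step ``strict containment \(\mathord{\qrN}\subsetneq\mathord{\qrL}\) forces strictly more \(L\)-prime principals'' is not established by Theorem~\ref{theorem:numLPrimePrincipals}, and your sketch of how a bijection between prime-principal sets would force \(ρ_{\qrN}(u)=ρ_{\qrL}(u)\) for all \(u\) is where the real work would lie; it is not obviously recoverable from Lemma~\ref{lemma:CompositeIntersection} (which is stated only for \(\qrL\)) and the coverability argument in Theorem~\ref{theoremF}\ref{lemma:rightNRes}. Since the Lemma~\ref{lemma:qrHEqualqrifHsc} route already closes the proof in two lines, there is no reason to pursue this.
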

\begin{proof}
As shown by Theorem~\ref{theoremF}~\ref{theorem:CanonicalRFAlanguage}, \(\cF{r}(L)\) is the canonical RFA for \(L\), hence it is strongly consistent and, by Lemma~\ref{lemma:qrHEqualqrifHsc}, we have that \(\mathord{\qr_{\cF{r}(L)}} = \mathord{\qrL}\).
On the other hand, if \(\cG{r}(\cN)\) is isomorphic to \(\cF{r}(L)\) we have that \(\mathord{\qr_{\cG{r}(\cN)}} = \mathord{\qr_{\cF{r}(L)}}\), and by Lemma~\ref{lemma:qrHEqualqrifHsc}, \(\mathord{\qr_{\cG{r}(\cN)}} =~ \qrN\).
It follows that if \(\cG{r}(\cN)\) is isomorphic to \(\cF{r}(L)\) then \(\mathord{\qrL} = \mathord{\qrN}\).

Finally, if \(\mathord{\qrL} = \mathord{\qrN}\) then \(\cH^{r}(\qrL, L) = \cH^{r}(\qrN, \lang{\cN})\), in other words,  \( \cF{r}(L) = \cG{r}(\cN)\).
\end{proof}

The following example illustrates the differences between our residualization operation, \linebreak\(\cG{r}(\cN)\), and the one defined by~\citet{denis2001residual}, \(\cN^{\text{res}}\), on a given NFA \(\cN\): the automaton \(\cG{r}(\cN)\) has, at most, as many states as \(\cN^{\text{res}}\).
This follows from the fact that for every \(u \in Σ^*\), if \(\post_u^{\cN}(I)\) is coverable then \(ρ_{\qrN}(u)\) is composite but \emph{not} vice-versa.

\begin{example}%
\label{example:residualization}
Let \(\cN = \tuple{Q, Σ, δ, I, F}\) be the automata on the left of Figure~\ref{fig:Residuals} and let \(L = \lang{\cN}\).
In order to build \(\cN^{\text{res}}\) we compute \(\post_u^{\cN}(I)\), for all \(u \in Σ^*\).
Let \(C \ud L^c \setminus \{\varepsilon, a, b, c\}\).
\begin{align*}
\post_{\varepsilon}^{\cN}(I)  & = \{0\} & \post_a^{\cN}(I)  & = \{1,2\}& \forall w \in L, \;\post_{w}^{\cN}(I)  & = \{5\} \\
\post_c^{\cN}(I)  & = \{1, 2, 3, 4\} & \post_b^{\cN}(I)  & = \{1,3\} &\forall w \in C,\; \post_{w}^{\cN}(I) & = \varnothing
\end{align*}
Since none of these sets is coverable by the others, they are all states of \(\cN^{\text{res}}\).
The resulting RFA \(\cN^{\text{res}}\) is shown in the center of Figure~\ref{fig:Residuals}.

On the other hand, let us denote \(ρ_{\qrN}\) simply by \(ρ\).
In order to build \(\cG{r}(\cN)\) we need to compute the principals \(ρ(u)\), for all \(u \!\in\! Σ^*\).
By definition of \(\qrN\), we have that \(w \!\in\! ρ(u) \Lra \post_u^{\cN}(I) \!\subseteq\! \post_w^{\cN}(I)\).
Therefore, we obtain:
\begin{align*}
ρ(\varepsilon)  & = \{\varepsilon\}  & ρ(a)  &= \{a,c\} & ρ(c)  & = \{c\} & ρ(b)  & = \{b,c\} & \forall w \in L,\;  ρ(w)  & = L & \forall w \in C,\;ρ(w) & = \Sigma^* 
\end{align*}

Since \(a \qrn_{\cN} c\), \(b \qrn_{\cN} c\) and \(\forall w\in Σ^*, \; cw \subseteq L \Lra \big(aw \subseteq L \lor bw \subseteq L\big)\), it follows that \(ρ(c)\) is \(L\)-composite.
The resulting RFA \(\cG{r}(\cN)\) is shown on the right of Figure~\ref{fig:Residuals}.
\eox%
\end{example}

\begin{figure}[!ht]%
    \centering
\begin{minipage}{0.3\textwidth}
  \begin{tikzpicture}[->,>=stealth',shorten >=1pt,auto,node distance=3mm and 5mm,thick,initial text=]
   \tikzstyle{every state}=[scale=0.75,fill=customblue!60,draw=blue!60,text=black,style={draw,ellipse,inner sep=0pt}]

  \node[initial, state] (0) {\(0\)};
  \node[state] (2) [right=of 0, yshift=0.6cm] {\(2\)};
  \node[state] (1) [above=of 2, yshift=-0.2cm] {\(1\)};
  \node[state] (3) [right=of 0, yshift=-0.6cm]{\(3\)};
  \node[state] (4) [below=of 3, yshift=0.2cm] {\(4\)};
  \node[accepting, state] (5) [right=of 0, xshift=1.5cm] {\(5\)};
  
  \path (0) edge [bend left] node {\small\(a,b,c\)} (1)
  		(0) edge node [xshift=10pt, yshift=0pt] {\small\(a,c\)} (2)
  		(0) edge node [pos=0.25, yshift=-5pt] {\small\(b,c\)} (3)
  		(0) edge [bend right] node {\small\(c\)} (4);

  \path (1) edge [bend left] node {\small\(a\)} (5)
  		(2) edge node {\small\(b\)} (5)
  		(3) edge node {\small\(c\)} (5)
  		(4) edge [bend right] node [yshift=-5pt, xshift=4pt] {\small\(a,b,c\)} (5);
  \end{tikzpicture}
  \end{minipage}%
\begin{minipage}{0.39\textwidth}
  \begin{tikzpicture}[->,>=stealth',shorten >=1pt,auto,node distance=3mm and 5mm,thick,initial text=]
 \tikzstyle{every state}=[scale=0.75,fill=customblue!60,draw=blue!60,text=black,style={draw,ellipse,inner sep=0pt}]

  \node[initial, state] (0) {\(\left\{\hspace{-5pt}\begin{array}{l}0\end{array}\hspace{-5pt}\right\}\)};
  \node[state] (2) [right=of 0] {\(\left\{1,2,3,4\right\}\)};
  \node[state] (1) [above=of 2] {\(\left\{1, 2\right\}\)};
  \node[state] (3) [below=of 2] {\(\left\{1, 3\right\}\)};
  \node[accepting, state] (4) [right=of 2, xshift=0.2cm] {\(\left\{5\right\}\)};
  
  \path (0) edge [bend left] node {\small\(a,c\)} (1)
  		(0) edge node {\small\(c\)} (2)
  		(0) edge [bend right] node [yshift=-5pt] {\small\(b,c\)} (3);

  \path (1) edge [bend left] node {\small\(a,b\)} (4)
  		(2) edge node {\small\(a,b,c\)} (4)
  		(3) edge [bend right] node [xshift=4pt, yshift=-4pt] {\small\(a,c\)} (4);
  \end{tikzpicture}
  \end{minipage}%
\begin{minipage}{0.3\textwidth}
  \begin{tikzpicture}[->,>=stealth',shorten >=1pt,auto,node distance=3mm and 4.2mm,thick,initial text=]
   \tikzstyle{every state}=[scale=0.75,fill=customblue!60,draw=blue!60,text=black,style={draw,ellipse,inner sep=0pt}]
  
  \node[initial, state] (0) {\(ρ(\varepsilon)\)};
  \node[state] (1) [right=of 0, yshift=1cm] {\(ρ(a)\)};
  \node[state] (2) [right=of 0, yshift=-1cm] {\(ρ(b)\)};
  \node[accepting, state] (3) [right=of 0, xshift=1.3cm] {\(ρ(aa)\)};
  
  \path (0) edge [bend left] node {\small\(a,c\)} (1)
  		(0) edge [bend right] node [yshift=-2pt, xshift=-4pt] {\small\(b,c\)} (2);

  \path (1) edge [bend left] node [xshift=-5pt]{\small\(a,b\)} (3)
  		(2) edge [bend right] node [xshift=5pt, yshift=-3pt] {\small\(a,c\)} (3);
  \end{tikzpicture}
  \end{minipage}
\caption{Left to right: an NFA \(\cN\) and the RFAs \(\cN^{\text{res}}\) and \(\cG{r}(\cN)\). We omit the empty states for clarity.}
\label{fig:Residuals}
\end{figure}
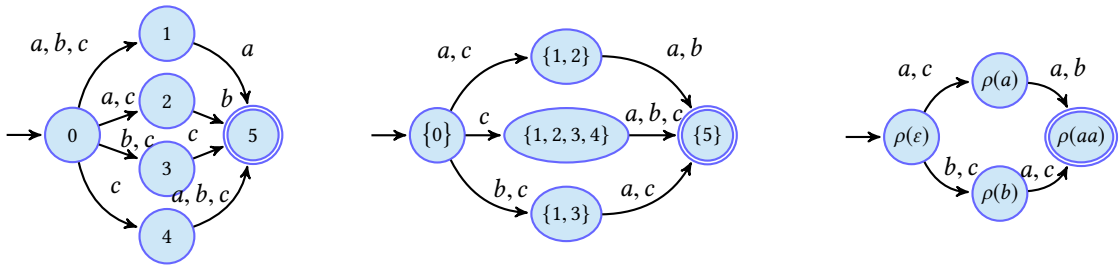

\section{Double-Reversal Method for Building the Canonical RFA}%
\label{sec:Novel}
\citet{denis2002residual} show that their residualization operation satisfies the residual-equivalent of the double-reversal method for building the minimal DFA\@.
More specifically, they prove that if an NFA \(\cN\) is a co-RFA with no empty states then their residualization operation applied to \(\cN\) results in the canonical RFA for \(\lang{\cN}\).
As a consequence, 
\[(((\cN^R)^{\text{res}})^R)^{\text{res}} \text{ is the canonical RFA for } \lang{\cN}\enspace .\]

In this section we show that the residual-equivalent of the double-reversal method works when using our automata constructions based on quasiorders, i.e.
\[\cG{r}((\cG{r}(\cN^R))^R) \text{ is isomorphic to } \cF{r}(\cN)\enspace .\]
Then, we generalize this method along the lines of the generalization of the double-reversal method for building the minimal DFA given by \citet{Brzozowski2014}.

To this end, we extend the work of \citet{ganty2019congruence} where they use congruences to offer a new perspective on the generalization of \citet{Brzozowski2014}.
By switching from congruences to monotone quasiorders, we are able to give a \emph{necessary} and \emph{sufficient} condition on an NFA \(\cN\) that guarantees that our residualization operation yields the canonical RFA for \(\lang{\cN}\).
Finally, we compare our generalization with the one given by \citet{tamm2015generalization}.

\subsection{Double-reversal Method}
We give a simple proof of the double-reversal method for building the canonical RFA for the language generated by a given NFA \(\cN\)\@.

\begin{theorem}[Double-Reversal]%
\label{theorem:DoubleReversal}
Let \(\cN\) be an NFA\@.
Then \(\cG{r}((\cG{r}(\cN^R))^R)\) is isomorphic to the canonical RFA for \(\lang{\cN}\).
\end{theorem}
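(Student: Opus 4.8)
The plan is to derive the statement by unwinding the definitions of $\cG{r}$, $\cF{r}$ and $\cH^{r}$ and invoking the properties of $\cH^{r}$ and of the right automata‑based quasiorder that are already in place. Write $L \ud \lang{\cN}$ and set $\cM \ud (\cG{r}(\cN^R))^R$, so that what must be shown is that $\cG{r}(\cM)$ is isomorphic to the canonical RFA for $L$.

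First I would verify that $\cM$ is a co‑RFA with no empty states and that $\lang{\cM} = L$. By Definition~\ref{def:FG}, $\cG{r}(\cN^R) = \cH^{r}(\qr_{\cN^R}, \lang{\cN^R})$, which by Lemma~\ref{lemma: HrGeneratesL} is an RFA; moreover the observations following Lemma~\ref{lemma:HlgeneratesL} note that an automaton of the form $\cH^{r}(\cdot,\cdot)$ never has unreachable states. Reversing an RFA yields a co‑RFA (by definition of co‑RFA), and reversing an automaton with no unreachable states yields one with no empty states, since $W^{\cM'}_{I,q}\neq\varnothing$ for all $q$ turns into $W^{(\cM')^R}_{q,F}\neq\varnothing$ for all $q$. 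Hence $\cM$ is a co‑RFA with no empty states. Its language is $\lang{\cM} = (\lang{\cG{r}(\cN^R)})^R = (\lang{\cN^R})^R = (\lang{\cN}^{R})^{R} = L$, using Theorem~\ref{theoremF}\,\ref{lemma:language-F} for the second equality.

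Next I would apply Lemma~\ref{lemma:coResidual_qrL=qrN} to $\cM$: since $\cM$ is a co‑RFA with no empty states generating $L$, it gives $\qrL = \qr_{\cM}$ as binary relations on $\Sigma^*$. Consequently, by Definition~\ref{def:FG},
\[
\cG{r}(\cM) = \cH^{r}(\qr_{\cM}, \lang{\cM}) = \cH^{r}(\qrL, L) = \cF{r}(L),
\]
and by Theorem~\ref{theoremF}\,\ref{theorem:CanonicalRFAlanguage} the automaton $\cF{r}(L)$ is isomorphic to the canonical RFA for $L$, which completes the argument.

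An equivalent route, matching Figure~\ref{Figure:diagramAutomata}, is to observe that $\cM$ is isomorphic to $\cG{\ell}(\cN)$ by Theorem~\ref{theoremF}\,\ref{lemma:AlRequalArNR}, apply $\cG{r}$ to both sides, and then invoke Theorem~\ref{theoremF}\,\ref{lemma:LS+RS=RN} together with Theorem~\ref{theoremF}\,\ref{theorem:CanonicalRFAlanguage}; this variant additionally needs the easy remark that $\cG{r}$ sends isomorphic NFAs to isomorphic (in fact identical, up to renaming of states) RFAs, because an NFA isomorphism $\varphi\colon \cN_1\to\cN_2$ carries $\post^{\cN_1}_u(I_1)$ bijectively onto $\post^{\cN_2}_u(I_2)$ and hence identifies $\qr_{\cN_1}$ with $\qr_{\cN_2}$. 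The only genuinely substantive point in either version is the bookkeeping that the reverse of $\cG{r}(\cN^R)$ is a co‑RFA with no empty states, so that Lemma~\ref{lemma:coResidual_qrL=qrN} (equivalently the characterization $\cG{r}(\cG{\ell}(\cN)) \cong \cF{r}(L)$) applies; I do not expect any real obstacle beyond this, since the heavy lifting is already contained in Theorem~\ref{theoremF}.
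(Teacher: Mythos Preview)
Your proposal is correct, and the ``equivalent route'' you sketch at the end is exactly the paper's proof: it simply cites Theorem~\ref{theoremF}\,\ref{lemma:AlRequalArNR}, \ref{theorem:CanonicalRFAlanguage} and~\ref{lemma:LS+RS=RN}. Your first route is also fine; it amounts to inlining the proof of Theorem~\ref{theoremF}\,\ref{lemma:LS+RS=RN} (which itself uses Lemmas~\ref{lemma:HlgeneratesL} and~\ref{lemma:coResidual_qrL=qrN}) applied directly to $\cM$ rather than first identifying $\cM$ with $\cG{\ell}(\cN)$.
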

\begin{proof}
It follows from Theorem~\ref{theoremF}~\ref{lemma:AlRequalArNR}, \ref{theorem:CanonicalRFAlanguage} and \ref{lemma:LS+RS=RN}.
\end{proof} 

Note that Theorem~\ref{theorem:DoubleReversal} can be inferred from Figure~\ref{Figure:diagramAutomata} by following the path starting at \(\cN\), labeled with \(R-\cG{r}-R-\cG{r}\) and ending in \(\cF{r}(\lang{\cN})\).

\subsection{Generalization of the Double-reversal Method}
Next we show that residualizing an automaton yields the canonical RFA if{}f the left language of every state is closed w.r.t. the right Nerode quasiorder.

\begin{theorem}%
\label{theorem:canonicalreverserestic}
Let \(\cN = \tuple{Q,\Sigma,\delta,I,F}\) be an NFA with \(L=\lang{\cN}\).
Then \(\cG{r}(\cN)\) is the canonical RFA for \(L\) if{}f \(\forall q \in Q,\;  ρ_{\qrL}(W_{I,q}^{\cN}) = W_{I,q}^{\cN}\).
\end{theorem}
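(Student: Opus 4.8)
The plan is to prove the equivalence $\cG{r}(\cN)$ is the canonical RFA for $L$ iff $\forall q \in Q,\; \rho_{\qrL}(W_{I,q}^{\cN}) = W_{I,q}^{\cN}$ by relating the two conditions to a statement about the quasiorders $\qrN$ and $\qrL$. By Lemma~\ref{lemma:qrlEqualqrNResEqualCan}, $\cG{r}(\cN)$ is isomorphic to $\cF{r}(L)$ (the canonical RFA, by Theorem~\ref{theoremF}~\ref{theorem:CanonicalRFAlanguage}) if and only if $\mathord{\qrL} = \mathord{\qrN}$. So the whole statement reduces to showing
\[
\mathord{\qrL} = \mathord{\qrN} \quad\Longleftrightarrow\quad \forall q \in Q,\; \rho_{\qrL}(W_{I,q}^{\cN}) = W_{I,q}^{\cN}\enspace .
\]
Since $\qrN$ is always finer than $\qrL$ (shown by de~Luca and Varricchio, cited after Lemma~\ref{lemma:coResidual_qrL=qrN}), the only nontrivial direction in the quasiorder equality is $\mathord{\qrL} \subseteq \mathord{\qrN}$, i.e.\ $u \qrL v \Rightarrow u \qrN v$, which by the definitions~\eqref{eq:RState},~\eqref{eq:Rlanguage} means $u^{-1}L \subseteq v^{-1}L \Rightarrow \post_u^{\cN}(I) \subseteq \post_v^{\cN}(I)$.

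For the direction ($\Leftarrow$), I would assume every left language $W_{I,q}^{\cN}$ is $\qrL$-upward closed and prove $u \qrL v \Rightarrow u \qrN v$. Take $q \in \post_u^{\cN}(I)$, so $u \in W_{I,q}^{\cN}$. Since $u \qrL v$ means $v \in \rho_{\qrL}(u)$, and since $W_{I,q}^{\cN} = \rho_{\qrL}(W_{I,q}^{\cN})$ is upward closed and contains $u$, it must also contain $v$; hence $v \in W_{I,q}^{\cN}$, i.e.\ $q \in \post_v^{\cN}(I)$. This gives $\post_u^{\cN}(I) \subseteq \post_v^{\cN}(I)$, as desired. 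For the direction ($\Rightarrow$), I would assume $\mathord{\qrL} = \mathord{\qrN}$ and show each $W_{I,q}^{\cN}$ is $\qrL$-upward closed. Let $u \in W_{I,q}^{\cN}$ and $u \qrL v$; then $u \qrN v$ by hypothesis, so $\post_u^{\cN}(I) \subseteq \post_v^{\cN}(I)$, and since $q \in \post_u^{\cN}(I)$ we get $q \in \post_v^{\cN}(I)$, i.e.\ $v \in W_{I,q}^{\cN}$. Thus $\rho_{\qrL}(W_{I,q}^{\cN}) \subseteq W_{I,q}^{\cN}$; the reverse inclusion is automatic by extensivity of the closure.

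As an alternative (and perhaps cleaner) route, I could use the characterization of the principals from Lemma~\ref{lemma:positive_atoms}: $\rho_{\qrN}(u) = \bigcap_{q \in \post_u^{\cN}(I)} W_{I,q}^{\cN}$ and $\rho_{\qrL}(u) = \bigcap_{w \in u^{-1}L} Lw^{-1}$. The equality $\mathord{\qrN} = \mathord{\qrL}$ is equivalent to $\rho_{\qrN}(u) = \rho_{\qrL}(u)$ for all $u$. The inclusion $\rho_{\qrL}(u) \subseteq \rho_{\qrN}(u)$ always holds; and $\rho_{\qrN}(u) \subseteq \rho_{\qrL}(u)$ for all $u$ is equivalent, via the closure operator machinery (a closure $\rho'$ is coarser than $\rho$ iff $\rho'$ fixes the image of $\rho$; see the discussion around Equation~\eqref{equation:abstractcheck}), to the statement that every set in the image of $\rho_{\qrN}$ — in particular every intersection of left languages, and hence, by a small argument, every individual left language $W_{I,q}^{\cN}$ — lies in the image of $\rho_{\qrL}$, i.e.\ is $\qrL$-closed. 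I expect the main obstacle to be the careful bookkeeping in this reduction: making sure that ``every $W_{I,q}^{\cN}$ is $\qrL$-closed'' is genuinely equivalent to ``every element of $\rho_{\qrN}(\Sigma^*)$ is $\qrL$-closed'' (the nontrivial direction being that a closed-under-$\qrL$ family is closed under intersections, which follows since upward-closed sets are closed under arbitrary intersection, so closure of each generator transfers to the intersections that form the $\qrN$-principals). Given Lemma~\ref{lemma:qrlEqualqrNResEqualCan}, though, the first, direct argument is short and self-contained, so that is the one I would write up.
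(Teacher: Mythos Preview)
Your proposal is correct and follows essentially the same strategy as the paper: reduce via Lemma~\ref{lemma:qrlEqualqrNResEqualCan} to the equivalence $\mathord{\qrL}=\mathord{\qrN}\Leftrightarrow\forall q,\,\rho_{\qrL}(W_{I,q}^{\cN})=W_{I,q}^{\cN}$, then argue each direction by translating between $\post_u^{\cN}(I)\subseteq\post_v^{\cN}(I)$ and membership in left languages. The only stylistic difference is that for the sufficient direction the paper takes precisely your ``alternative route'' through Lemma~\ref{lemma:positive_atoms} (writing $\rho_{\qrN}(u)=\bigcap_{q\in\post_u^{\cN}(I)}W_{I,q}^{\cN}$ and using closedness of each intersectand), whereas your first, element-wise argument avoids that detour and is slightly more direct; either works.
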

\begin{proof}
We first show that \(\forall q \in Q,\ ρ_{\qrL}(W_{I,q}^{\cN}) = W_{I,q}^{\cN}\) is a \emph{necessary} condition, i.e.\ if \(\cG{r}(\cN)\) is the canonical RFA for \(L\) then \(\forall q \in Q,\ ρ_{\qrL}(W_{I,q}^{\cN}) = W_{I,q}^{\cN}\) holds.

By Lemma~\ref{lemma:qrlEqualqrNResEqualCan} we have that if \(\cG{r}(\cN)\) is the canonical RFA for \(L\) then \(\mathord{\qrL} = \mathord{\qrN}\).
Moreover:
\begin{align*}
ρ_{\qrL}(W_{I,q}^{\cN}) & = \; \text{[By definition of \(ρ_{\qrL}\)]}\\
\{w \in \Sigma^* \mid \exists u \in W_{I,q}^{\cN}, \; u^{-1}L \subseteq w^{-1}L\} & = \; \text{[Since \(\mathord{\qrL} = \mathord{\qrN}\)]} \\
\{w \in \Sigma^* \mid \exists u \in W_{I,q}^{\cN}, \; \post_u^{\cN}(I) \subseteq \post_w^{\cN}(I)\} & \subseteq  \; \text{[Since \( u \in W_{I,q}^{\cN} \Lra q \in \post_u^{\cN}(I)\)]}\\
\{w \in \Sigma^* \mid q \in \post_w^{\cN}(I)\} & = \; \text{[By definition of \(W_{I,q}^{\cN}\)]}\\
W_{I,q}^{\cN} \enspace .
\end{align*}
By reflexivity of \(\qrL, \) we conclude that \(ρ_{\qrL}(W_{I,q}^{\cN}) = W_{I,q}^{\cN} \).

Next, we show that \(\forall q \in Q,\;  ρ_{\qrL}(W_{I,q}^{\cN}) = W_{I,q}^{\cN}\) is also a \emph{sufficient} condition.
By Lemma~\ref{lemma:positive_atoms} and condition \(\forall q \in Q,\;  ρ_{\qrL}(W_{I,q}^{\cN}) = W_{I,q}^{\cN}\), we have that
\begin{equation}\label{eq:rhoNIntersectOfrhoL}
ρ_{\qrN}(u) = \bigcap{\textstyle_{q \in \post^{\cN}_u(I)}} W_{I,q}^{\cN} = \bigcap{\textstyle_{q \in \post^{\cN}_u(I)}} ρ_{\qrL}(W_{I,q}^{\cN})\enspace .
\end{equation}

Since \(u \in ρ_{\qrL}(W_{I,q}^{\cN})\) for all \(q \in \post_u^{\cN}(I)\), it follows that \(ρ_{\qrL}(u) \subseteq ρ_{\qrL}(W_{I,q}^{\cN})\) for all \(q \in \post_u^{\cN}(I)\) and, since \(ρ_{\qrN}(u) = \bigcap\textstyle{_{q \in \post^{\cN}_u(I)}} ρ_{\qrL}(W_{I,q}^{\cN})\), we have that \(ρ_{\qrL}(u) \subseteq ρ_{\qrN}(u)\) for every \(u \in Σ^*\), i.e., \(\mathord{\qrL} \subseteq \mathord{\qrN}\).

On the other hand, as shown by \citet{deLuca1994}, we have that \(\mathord{\qrN}\subseteq \mathord{\qrL}\).
We conclude that \(\mathord{\qrN} = \mathord{\qrL}\), hence \(\cG{r}(\cN) = \cF{r}(L)\).
\end{proof}

It is worth to remark that Theorem~\ref{theorem:canonicalreverserestic} does not hold when considering the residualization operation \(\cN^{\text{res}}\)~\cite{denis2002residual}.
As a counterexample we have the automaton \(\cN\) in Figure~\ref{fig:Residuals} where \(\cG{r}(\cN)\) is the canonical RFA for \(\lang{\cN}\), hence \(\cN\) satisfies the condition of Theorem~\ref{theorem:canonicalreverserestic}, while \(\cN^{\text{res}}\) is not canonical.

\subsubsection{Co-atoms and co-rests}
The condition of Theorem~\ref{theorem:canonicalreverserestic} is analogue to the one \citet[Theorem 16]{ganty2019congruence} give for building the minimal DFA, except that the later is formulated in terms of congruences instead of quasiorders.
In that case they prove that, given an NFA \(\cN=\tuple{Q,Σ,δ,I,F}\) with \(L = \lang{\cN}\), 
\[\cN^D \text{ is the minimal DFA for \(L\) if{}f} \forall q \in Q, \; ρ_{\rrL}(W^{\cN}_{I,q}) = W^{\cN}_{I,q}\enspace ,\]
where \(\mathord{\rrL} \ud \mathord{\qrL} \cap \mathord{(\qrL)^{-1}}\) is the right Nerode's congruence.

Moreover, \citet{ganty2019congruence} show that the principals of \(\rrL\) coincide with the so-called \emph{co-atoms}, which are non-empty intersections of complemented and uncomplemented right quotients of the language. 
This allowed them to connect their result with the generalization of the double-reversal method for DFAs of \citet{Brzozowski2014}, who establish that determinizing an NFA \(\cN\) yields the minimal DFA for \(\lang{\cN}\) if{}f the left languages of the states of \(\cN\) are unions of co-atoms of \(\lang{\cN}\).

Next, we give a formulation of the condition from Theorem~\ref{theorem:canonicalreverserestic} along the lines of the one given by \citet{Brzozowski2014} for their generalization of the double-reversal method for building the minimal DFA.

To do that, let us call the intersections used in Lemma~\ref{lemma:positive_atoms} to describe the principals of \(\qlL\) and \(\qrL\) as \emph{rests} and \emph{co-rests} of \(L\), respectively.

\begin{definitionNI}[Rest and Co-rest]\index{Rest}\index{co-Res}%
\label{def:rest}
Let \(L\) be a regular language.
A \emph{rest} (resp. \emph{co-rest}) is any non-empty intersection of left (resp.\ right) quotients of \(L\). \hfill\rule{0.5em}{0.5em}
\end{definitionNI}

As shown by Theorem~\ref{theorem:canonicalreverserestic}, residualizing an NFA \(\cN\) yields the canonical RFA for \(\lang{\cN}\) if{}f the left language of every state of \(\cN\) satisfies \(ρ_{\qrL}(W_{I,q}^{\cN}) = W_{I,q}^{\cN}\).
By definition, \(ρ_{\qrL}(S) = S\) if{}f \(S\) is a union of principals of \(\qrL\).

Therefore we derive the following statement, equivalent to Theorem~\ref{theorem:canonicalreverserestic}, that we consider as the residual-equivalent of the generalization of the double-reversal for building the minimal DFA~\cite{Brzozowski2014}.

\begin{corollary}%
Let \(\cN\) be an NFA with \(L=\lang{\cN}\).
Then \(\cG{r}(\cN)\) is the canonical RFA for \(L\) if{}f the left languages of \(\cN\) are union of co-rests.
\end{corollary}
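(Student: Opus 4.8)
The plan is to obtain the corollary as a direct reformulation of Theorem~\ref{theorem:canonicalreverserestic}. That theorem already says that $\cG{r}(\cN)$ is the canonical RFA for $L=\lang{\cN}$ exactly when $\rho_{\qrL}(W_{I,q}^{\cN}) = W_{I,q}^{\cN}$ for every state $q$, and the paragraph preceding the corollary records the standard fact that, for an upper closure induced by a quasiorder, $\rho_{\qrL}(S)=S$ holds iff $S$ is $\qrL$-upward closed, i.e. a (possibly empty) union of principals of $\qrL$. Hence everything reduces to the purely order-theoretic claim: for an arbitrary $S\subseteq\Sigma^*$, $S$ is a union of principals of $\qrL$ if and only if $S$ is a union of co-rests of $L$. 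I would prove this by matching the two families of generating blocks, principals of $\qrL$ versus co-rests of $L$, against each other.

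First I would show that every principal $\rho_{\qrL}(u)$ is a co-rest. This is immediate from Lemma~\ref{lemma:positive_atoms}, which gives $\rho_{\qrL}(u) = \bigcap_{w\in u^{-1}L} Lw^{-1}$: this is an intersection of right quotients of $L$, and it is non-empty because it contains $u$, so it fits Definition~\ref{def:rest}. Consequently any union of principals of $\qrL$ is a union of co-rests. Conversely, I would show that every co-rest $C=\bigcap_{w\in W}Lw^{-1}$ is itself $\qrL$-upward closed, hence a union of principals. The inclusion $C\subseteq\rho_{\qrL}(C)$ is extensivity; for the reverse, if $y\in\rho_{\qrL}(C)$ then $x\qrL y$ for some $x\in C$, i.e. $x^{-1}L\subseteq y^{-1}L$, and for each $w\in W$ we get $xw\in L$, so $w\in x^{-1}L\subseteq y^{-1}L$, whence $yw\in L$, i.e. $y\in Lw^{-1}$; thus $y\in C$. (Equivalently one may invoke the adjunction $XY\subseteq L\Lra X\subseteq LY^{-1}$ recalled in the Background.) Since $\rho_{\qrL}$ distributes over unions, a union of co-rests is again a fixed point of $\rho_{\qrL}$, i.e. a union of principals. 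Combining the two directions and instantiating with $S=W_{I,q}^{\cN}$ in Theorem~\ref{theorem:canonicalreverserestic} yields the corollary.

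The main obstacle is not this short argument but the bookkeeping around the degenerate case $u^{-1}L=\varnothing$, where the intersection $\bigcap_{w\in u^{-1}L}Lw^{-1}$ is empty and equals $\Sigma^*$. I would handle it in one of two ways: either by reading Definition~\ref{def:rest} so that the empty intersection counts as the co-rest $\Sigma^*$ (which is also the only principal that can arise in that case), or by observing that $W_{q,F}^{\cN}\subseteq u^{-1}L$ whenever $u$ reaches $q$, so that after trimming $\cN$ — which changes neither $\lang{\cN}$ nor the relevant left languages being a union of co-rests — no $u^{-1}L$ occurring in the argument is empty and the case simply does not arise. Once Definition~\ref{def:rest} is read consistently with Lemma~\ref{lemma:positive_atoms}, the corollary follows with essentially no further work.
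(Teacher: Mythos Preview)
Your proposal is correct and follows essentially the same route as the paper: reduce via Theorem~\ref{theorem:canonicalreverserestic} to the condition $\rho_{\qrL}(W_{I,q}^{\cN})=W_{I,q}^{\cN}$, then identify ``union of principals of $\qrL$'' with ``union of co-rests'' through Lemma~\ref{lemma:positive_atoms}. The paper actually treats the corollary as an immediate restatement (its entire justification is the short paragraph preceding the statement), so your explicit verification of both directions of the principals/co-rests equivalence and your handling of the degenerate $u^{-1}L=\varnothing$ case are more thorough than what the paper provides.
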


\subsubsection{Tamm's Generalization of the Double-reversal Method for RFAs}\label{sec:TammGen}

\citet{tamm2015generalization} generalized the double-reversal method of \citet{denis2002residual} by showing that \(\cN^{\text{res}}\) is the canonical RFA for \(\lang{\cN}\) if{}f the left languages of \(\cN\) are union of the left languages of the canonical RFA for \(\lang{\cN}\).

In this section, we compare the generalization of \citet{tamm2015generalization} with ours.
The two approaches differ in the definition of the residualization operation they consider and, as we show next, the sufficient and necessary condition from Theorem~\ref{theorem:canonicalreverserestic} is more general than that of \citet[Theorem 4]{tamm2015generalization}.

\begin{lemma}\label{lemma:WeRaTamm}
Let \(\cN = \tuple{Q, Σ, δ, I, F}\) be an NFA with \(L = \lang{\cN}\) and let \(\cC = \cF{r}(\qrL, L) = \tuple{\wt{Q}, Σ, \wt{δ}, \wt{I}, \wt{F}}\) be the canonical RFA for \(L\).
Then
\[W_{I,q}^{\cN} = \bigcup_{q \in \wt{Q}}W_{\wt{I},q}^{\cC} \implies ρ_{\qrL}(W_{I,q}^{\cN}) = W_{I,q}^{\cN}\enspace .\]
\end{lemma}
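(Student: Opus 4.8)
\textbf{Proof plan for Lemma~\ref{lemma:WeRaTamm}.} The goal is to show that Tamm's condition on an NFA \(\cN\) — namely that each left language \(W_{I,q}^{\cN}\) is a union of left languages of the canonical RFA \(\cC\) — implies the condition from Theorem~\ref{theorem:canonicalreverserestic}, i.e.\ that \(W_{I,q}^{\cN}\) is closed for the right Nerode quasiorder \(\qrL\). The natural approach is to first understand what the left languages of the canonical RFA look like, then show that each such left language is itself \(\qrL\)-closed, and finally use the fact that an arbitrary union of \(\qrL\)-closed sets is again \(\qrL\)-closed (since \(ρ_{\qrL}\) is an upper closure operator and hence additive on unions — more precisely, \(ρ_{\qrL}(\bigcup_i S_i) = \bigcup_i ρ_{\qrL}(S_i)\) because \(ρ_{\qrL}(S) = \bigcup_{u \in S}ρ_{\qrL}(u)\) by the definition of the closure induced by a quasiorder in Equation~\eqref{eq:qo-up-closure}).

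First I would identify the left languages of \(\cC = \cF{r}(\qrL, L)\). By Definition~\ref{def:right-const:qo} (specialized in Definition~\ref{def:FG}), the states of \(\cC\) are the \(L\)-prime principals \(ρ_{\qrL}(u)\), with \(ρ_{\qrL}(u) \in \wt{I}\) iff \(\varepsilon \in ρ_{\qrL}(u)\), i.e.\ iff \(u \qrL \varepsilon\). I claim that for each state \(q = ρ_{\qrL}(v)\) of \(\cC\) we have \(W_{\wt{I},q}^{\cC} = ρ_{\qrL}(v)\), or at least that \(W_{\wt{I},q}^{\cC} \subseteq ρ_{\qrL}(v)\), which is the inclusion I actually need. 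The forward direction I would prove by induction on the length of a word \(w \in W_{\wt{I},q}^{\cC}\): for \(w = \varepsilon\), reaching \(q = ρ_{\qrL}(v)\) from an initial state means \(ρ_{\qrL}(v) \in \wt{I}\), so \(v \qrL \varepsilon\), hence \(v \in ρ_{\qrL}(v)\) trivially and \(\varepsilon \in ρ_{\qrL}(v)\); for \(w = xa\), one reaches \(q\) via some state \(q' = ρ_{\qrL}(v')\) with \(ρ_{\qrL}(v') \cdot a \subseteq ρ_{\qrL}(v)\) and \(x \in W_{\wt{I},q'}^{\cC}\); by induction \(x \in ρ_{\qrL}(v')\), so \(xa \in ρ_{\qrL}(v') \cdot a \subseteq ρ_{\qrL}(v)\). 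This gives \(W_{\wt{I},q}^{\cC} \subseteq ρ_{\qrL}(v)\). Combined with monotonicity and extensivity of \(ρ_{\qrL}\), this yields \(ρ_{\qrL}(W_{\wt{I},q}^{\cC}) \subseteq ρ_{\qrL}(ρ_{\qrL}(v)) = ρ_{\qrL}(v)\); for the reverse inclusion \(ρ_{\qrL}(v) \subseteq ρ_{\qrL}(W_{\wt{I},q}^{\cC})\) I would note that \(v \in W_{\wt{I},q}^{\cC}\) because \(\cC\) is (strongly) consistent by Theorem~\ref{theoremF}~\ref{theorem:CanonicalRFAlanguage} together with the paragraph after Definition on canonical RFAs — every characterizing word reaches its state — so \(ρ_{\qrL}(v) = ρ_{\qrL}(\{v\}) \subseteq ρ_{\qrL}(W_{\wt{I},q}^{\cC})\). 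Hence each left language of \(\cC\) is \(\qrL\)-closed: \(ρ_{\qrL}(W_{\wt{I},q}^{\cC}) = W_{\wt{I},q}^{\cC}\).

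Then the conclusion is immediate: assuming \(W_{I,q}^{\cN} = \bigcup_{q' \in S}W_{\wt{I},q'}^{\cC}\) for some set \(S \subseteq \wt{Q}\), I compute
\[
ρ_{\qrL}(W_{I,q}^{\cN}) = ρ_{\qrL}\Bigl(\bigcup_{q' \in S}W_{\wt{I},q'}^{\cC}\Bigr) = \bigcup_{q' \in S} ρ_{\qrL}(W_{\wt{I},q'}^{\cC}) = \bigcup_{q' \in S} W_{\wt{I},q'}^{\cC} = W_{I,q}^{\cN},
\]
where the second equality uses additivity of \(ρ_{\qrL}\) on arbitrary unions (a direct consequence of \eqref{eq:qo-up-closure}) and the third uses the \(\qrL\)-closedness of each \(W_{\wt{I},q'}^{\cC}\) established above. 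This finishes the proof.

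\textbf{Main obstacle.} The crux is establishing the characterization of the left languages of the canonical RFA, specifically that \(W_{\wt{I},q}^{\cC}\) equals (or is contained in) the principal \(ρ_{\qrL}(v)\) labeling \(q\). The \(\subseteq\) inclusion needed for the argument is the routine induction sketched above, but care is needed in identifying the initial states correctly (the condition \(\varepsilon \in ρ_{\qrL}(u)\) translating to \(u \qrL \varepsilon\), equivalently \(u^{-1}L \subseteq L\)) and in invoking strong consistency of the canonical RFA to get \(v\) itself into the left language. An alternative, possibly cleaner route would be to bypass the explicit description of \(\cC\)'s left languages entirely: observe that Theorem~\ref{theoremF}~\ref{theorem:CanonicalRFAlanguage} identifies \(\cC\) with the canonical RFA whose left languages are known (from Tamm's or Denis et al.'s work) to be exactly the \(L\)-prime left principals of \(\qrL\), which are \(\qrL\)-closed by construction; but since the excerpt does not state that fact explicitly, I would prefer to prove the needed inclusion directly by the induction above to keep the argument self-contained.
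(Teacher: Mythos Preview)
Your overall strategy is sound and matches the paper's structure (show that each left language of \(\cC\) is \(\qrL\)-closed, then use additivity of \(\rho_{\qrL}\) over unions). However, there is a genuine gap in the step where you conclude closedness.

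You correctly establish two facts: (A) \(W_{\wt{I},q}^{\cC} \subseteq \rho_{\qrL}(v)\) via the induction, and (B) \(v \in W_{\wt{I},q}^{\cC}\) via strong consistency, hence \(\rho_{\qrL}(v) \subseteq \rho_{\qrL}(W_{\wt{I},q}^{\cC})\). Together these give \(\rho_{\qrL}(W_{\wt{I},q}^{\cC}) = \rho_{\qrL}(v)\). But from this and (A) you \emph{cannot} conclude \(\rho_{\qrL}(W_{\wt{I},q}^{\cC}) = W_{\wt{I},q}^{\cC}\): you would need the reverse inclusion \(\rho_{\qrL}(v) \subseteq W_{\wt{I},q}^{\cC}\), which you have not shown. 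Concretely, you need that every \(w\) with \(v \qrL w\) (i.e.\ \(v^{-1}L \subseteq w^{-1}L\)) reaches the state \(\rho_{\qrL}(v)\) in \(\cC\). Strong consistency as you invoke it only says that every \emph{characterizing} word reaches the state, and a characterizing word for \(\rho_{\qrL}(v)\) is a \(w\) with \(w^{-1}L = v^{-1}L\), not merely \(w^{-1}L \supseteq v^{-1}L\). So your appeal to strong consistency gives \(v \in W_{\wt{I},q}^{\cC}\) but not the full principal.

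The paper sidesteps this by a different route: from strong consistency of \(\cC\) it invokes Lemma~\ref{lemma:qrHEqualqrifHsc} to get \(\mathord{\qr_{\cC}} = \mathord{\qrL}\), hence \(\cG{r}(\cC)\) is isomorphic to \(\cF{r}(L)\); then Theorem~\ref{theorem:canonicalreverserestic} (applied to \(\cC\) in the role of \(\cN\)) immediately yields \(\rho_{\qrL}(W_{\wt{I},q}^{\cC}) = W_{\wt{I},q}^{\cC}\) for every state \(q\). If you want to stay with your more direct approach, you must actually prove \(\rho_{\qrL}(v) \subseteq W_{\wt{I},q}^{\cC}\); this is essentially the \(\supseteq\) direction of Equation~\eqref{eq:PostQuotient} in the proof of Lemma~\ref{lemma:qrHEqualqrifHsc}, and its argument is more involved than ``\(v\) itself reaches the state'' --- it requires an inductive construction using both strong consistency and the saturation of the transition relation in \(\cC\).
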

\begin{proof}
Since the canonical RFA, \(\cC\), is strongly consistent then it follows from Lemma~\ref{lemma:qrHEqualqrifHsc} that \(\mathord{\qr_{\mathcal{C}}} = \mathord{\qrL}\) and, consequently, \(\cG{r}(\cC)\) is isomorphic to \(\cF{r}(L)\).
It follows from Theorem~\ref{theorem:canonicalreverserestic} that \(ρ_{\qrL}(W_{\wt{I},q}^{\cC}) = W_{\wt{I},q}^{\cC}\) for every \(q \in \wt{Q}\).
Therefore,
\begin{align*}
ρ_{\qrL}(W_{I,q}^{\cN}) & = \quad \text{[Since \(W_{I,q}^{\cN} = {\textstyle\bigcup_{q \in \wt{Q}}W_{\wt{I},q}^{\cC}}\) and \(ρ_{\qrL}(\cup S_i) = \cup ρ_{\qrL}(S_i)\)]} \\
{\textstyle\bigcup_{q \in \wt{Q}}ρ_{\qrL}(W_{\wt{I},q}^{\cC})} & = \quad \text{[Since \(ρ_{\qrL}(W_{\wt{I},q}^{\cC}) = W_{\wt{I},q}^{\cC}\) for every \(q \in \wt{Q}\)]}\\
{\textstyle\bigcup_{q \in \wt{Q}}W_{\wt{I},q}^{\cC}}\enspace .\tag*{\qedhere} 
\end{align*}
\end{proof}

Observe that, since the canonical RFA \(\cC = \tuple{\wt{Q}, Σ, \wt{δ}, \wt{I}, \wt{F}}\) for a language \(L\) is strongly consistent, the left language of each state is a principal of \(\qrL\).
In particular, if the right language of a state is \(u^{-1}L\) then its left language is the principal \(ρ_{\qrL}(u)\).
Therefore, if \(W_{I,q}^{\cN} = \bigcup_{q \in \wt{Q}}W_{\wt{I},q}^{\cC}\) then \(W_{I,q}^{\cN}\) is a closed set in \(ρ_{\qrL}\).
However, the reverse implication does not hold since \emph{only the \(L\)-prime principals are left languages of states of \(\cC\)}.

On the other hand, Lemma~\ref{lemma:CompositeIntersection} shows that \(L\)-composite principals can be described as intersections of \(L\)-prime principals when we consider the Nerode's quasiorder \(\qrL\).
As a consequence, our residualization operation applied on an NFA \(\cN\) yields the canonical RFA for \(\lang{\cN}\) if{}f the left languages of states of \(\cN\) are \emph{union of non-empty intersections of left languages of the canonical RFA} while \citet{tamm2015generalization} proves that \(\cN^{\text{res}}\) yields to the canonical RFA if{}f the left languages of states of \(\cN\) are \emph{union of left languages of the canonical RFA}.

\section{Learning Residual Automata}\label{sec:LearningNL:qo}
\citet{bollig2009angluin} devised the NL\(^*\) algorithm for learning the canonical RFA for a given regular language.
The algorithm describes the behavior of a \demph{Learner} that infers a language \(L\) by performing membership queries on \(L\) (which are answered by a \demph{Teacher}) and equivalence queries between the language generated by a candidate automaton and \(L\) (which are answered by an \demph{Oracle}).
The algorithm terminates when the \emph{Learner} builds an RFA generating the language \(L\).

For the shake of completeness, we offer an overview of the NL\(^*\) algorithm as presented by~\citet{bollig2009angluin}.

\subsection{The \texorpdfstring{NL\(^*\)}{NL*} Algorithm~\texorpdfstring{\cite{bollig2009angluin}}{}}
The \emph{Learner} maintains a prefix-closed finite set \(\Pref \subseteq Σ^*\) and a suffix-closed finite set \(\Suf \subseteq Σ^*\).
The \emph{Learner} groups the words in \(\Pref\) by building a \demph{table} \(T = (\cT, \Pref, \Suf)\) where \(T: (\Pref \cup \Pref \cdot Σ) \times \Suf \to \{{+}, {-}\}\) is a function such that for every \(u \in \Pref \cup \Pref \cdot Σ\) and \(v \in \Suf\) we have that \(T(u,v) = {+} \Lra uv \in L\).
Otherwise \(T(u,v) = {-}\).

For every word \(u \in \Pref \cup \Pref \cdot Σ\), define the function \(\row(u): \Suf \to \{{+}, {-}\}\) as \(\row(u)(v) \ud T(u,v)\).
The set of all rows of a table \(\cT\) is denoted by \(\Rows(\cT)\).

The algorithm uses the table \(\cT = (T, \Pref, \Suf)\) to build an automaton whose states are some of the rows of \(\cT\). 
In order to do that, it is necessary to define the notions of \emph{union} of rows, \emph{prime} row and \emph{composite} row.

\begin{definition}[Join Operator]%
\label{def:join}
Let \(\cT = (T, \Pref, \Suf)\) be a table.
For every pair of rows \(r_1, r_2 \in \Rows(\cT)\), define the \emph{join} \(r_1 \sqcup r_2: \Suf \to \{{+},{-}\}\) as:
\[\forall x \in \Suf, \; (r_1 \sqcup r_2)(x) \ud \left\{\begin{array}{ll}
{+} & \text{if } r_1(x) = + \lor r_2(x) = + \\
{-} & \text{otherwise}\end{array}\right. \tag*{\rule{0.5em}{0.5em}}\]
\end{definition}

Note that the join operator is associative, commutative and idempotent.
However, the join of two rows is not necessarily a row of \(\cT\).

\begin{definition}[Covering Relation]%
\label{def:coverRow}
Let \(\cT = (T, \Pref, \Suf)\) be a table.
Then, for every pair of rows \(r_1, r_2 \in \Rows(\cT)\) we have that
\[r_1 \sqsubseteq r_2 \udiff \forall x \in \Suf, \; r_1(x) = {+} \Ra r_2(x) = {+}\enspace .\]
We write \(r_1 \sqsubset r_2\) to denote \(r_1 \sqsubseteq r_2 \) and \(r_1 \neq r_2\). \hfill\rule{0.5em}{0.5em}
\end{definition}

\begin{definition}[Composite and Prime Rows]%
\label{def:PrimeRow}
Let \(\cT = (T, \Pref, \Suf)\) be a table.
We say a row \(r \in \Rows(\cT)\) is \(\cT\)-\emph{composite} if it is the join of all the rows that it strictly covers, i.e. \(r = \bigsqcup_{r' \in \Rows(\cT), \; r' \sqsubset r} r'\).
Otherwise, we say \(r\) is \(\cT\)-\emph{prime}. \hfill\rule{0.5em}{0.5em}
\end{definition}

\begin{definition}[Closed and Consistent Table]%
\label{def:Table}
Let \(\cT = (T, \Pref, \Suf)\) be a table.
Then
\begin{myEnumA}
\item \(\cT\) is \emph{closed} if{}f \label{def:Table:closed}
\(\forall u \in \Pref, a \in Σ, \; \row(ua) = \bigsqcup \{\row(v) \mid v \in \Pref, \; \row(v) \sqsubseteq \row(ua) \land \row(v) \text{ is \(\cT\)-prime}\}\).
\item \(\cT\) is \emph{consistent} if{}f \label{def:Table:Consistent} \(\row(u) \sqsubseteq \row(v) \Ra \row(ua) \sqsubseteq \row(va) \text{ for every \(u,v \in \Pref\) and \(a \in Σ\)}\)\eod
\end{myEnumA}
\end{definition}

At each iteration of the algorithm, the \emph{Learner} checks whether the current table \(\cT = (T, \Pref, \Suf)\) is closed and consistent.

If \(\cT\) is not closed, then it finds \(\row(ua)\) with \(u \in \Pref, a \in Σ\) such that \(\row(ua)\) is \(\cT\)-prime and it is not equal to some \(\row(v)\) with \(v \in \Pref\).
Then the \emph{Learner} adds \(ua\) to \(\Pref\) and updates the table \(\cT\).

Similarly, if \(\cT\) is not consistent, the \emph{Learner} finds \(u, v \in \Pref, a \in Σ, x \in \Suf\) such that \(\row(u) \subseteq \row(v)\) but \(\row(ua)(x) = {+} \land \row(va)(x) = {-}\).
Then the \emph{Learner} adds \(ax\) to \(\Suf\) and updates \(\cT\).

When the table \(\cT\) is closed and consistent, the \emph{Learner} builds the RFA \(\cR(\cT)\).

\begin{definition}[Automata Construction \(\cR(\cT)\)]
Let \(\cT = (T, \Pref, \Suf)\) be a closed and consistent table.
Define the automaton \(\cR(\cT) \ud \tuple{Q, Σ, δ, I, F}\) with \(Q = \{\row(u) \mid u \in \Pref \land \row(u) \text{ is \(\cT\)-prime}\}\), \(I {=} \{\row(u) \in Q \mid \row(u) \sqsubseteq \row(\varepsilon)\}\), \(F {=} \{\row(u) \in Q\mid \row(u)(\varepsilon) = +\}\) and \(\row(v) \in δ(\row(u),a) = \{\row(v) \in Q \mid \row(v) \sqsubseteq \row(ua)\}\) for all \(\row(u) \in Q, a \in Σ\). \hfill\rule{0.5em}{0.5em}
\end{definition}

The \emph{Learner} asks the \emph{Oracle} whether \(\lang{\cR(\cT)} = L\).
If the \emph{Oracle} answers \emph{yes} then the algorithm terminates.
Otherwise, the \emph{Oracle} returns a counterexample \(w\) for the language equivalence.
Then the \emph{Learner} adds every suffix of \(w\) to \(\Suf\), updates the table \(\cT\) and repeats the process.

\subsection{The \texorpdfstring{NL\(^{\qo}\)}{NLqo} Algorithm}
In this section we present a quasiorder-based perspective on the NL\(^*\) algorithm in which the \emph{Learner} iteratively refines a quasiorder on \(Σ^*\) by querying the \emph{Teacher} and uses and adaption of the automata construction from Definition~\ref{def:right-const:qo} to build an RFA that is used to query the \emph{Oracle}.
We capture this approach in the so-called \emph{NL\(^{\qo}\) algorithm}.

\RemoveAlgoNumber%
\begin{algorithm}[!ht]
\caption{NL\(^\qo\): A quasiorder-based version of NL\(^*\)}\label{alg:NL-star}
\SetAlgorithmName{Algorithm NL\(^{\qo}\)}{}
\SetSideCommentRight%
\KwData{A \emph{Teacher} that answers membership queries in \(L\)}
\KwData{An \emph{Oracle} that answers equivalence queries between the language generated by an RFA and \(L\)}
\KwResult{The canonical RFA for the language \(L\).}
\(\cP, \cS := \{\varepsilon\}\)\;
\While{True\label{step:teacher-yes}}{%
\label{step:loop}
	\While{\(\qA\) not closed or consistent:}{
		\If{\(\qA\) is not closed}{
			Find \(u \in \Pref, a \in \Sigma\) with \(ρ_{\qA}(u)\) \(L_{\Suf}\)-prime for \(\Pref\) and \(\forall v \in \Pref, \; ρ_{\qA}(u) \neq ρ_{\qA}(v)\)\;
			Let \(\cP := \cP \cup \{ua\}\)\;
		}
		\If{\(\qA\) is not consistent}{
			Find \(u, v \in \Pref, a \in \Sigma\) with \(u \qA v\) s.t. \(u a \not\qA v a  \)\;
			Find \(x \in (ua)^{-1}L \cap ((va)^{-1}L)^c \cap \Suf\) \;
			Let \(\cS := \cS \cup \{ax\}\)\;
		}
	}
	\label{step:DFA-const}Build \(\cR(\qA, \Pref)\)\;
	Ask the \emph{Oracle} whether \(L = \lang{\cR(\qA,\Pref)}\)\;
	\If{the \emph{Oracle} replies with a counterexample \(w\)}{
	Let \(\Suf := \Suf \cup \{x \in \Sigma^* \mid w = w'x \text{ with }w \in \Suf, w' \in \Sigma^*\}\)\;
	}\Else{\Return{\(\cR(\qA, \Pref)\)}\;}
} 
\end{algorithm}

Next we explain the behavior of algorithm NL\(^{\qo}\) and give the necessary definitions in order to understand it and its relation with the algorithm  NL\(^*\).

The \emph{Learner} maintains a prefix-closed finite set \(\Pref \subseteq Σ^*\) and a suffix-closed finite set \(\Suf \subseteq Σ^*\).
The set \(\Suf\) is used to \emph{approximate} the principals in \(\qrL\) for the words in \(\Pref\).
In order to manipulate these approximations, we define the following two operators.

\begin{definitionNI}\label{def:subsetS}
Let \(L\) be a language, \(\Suf \subseteq \Sigma^*\) and \(u \in Σ^*\).
Define: 
\begin{align*}
u^{-1}L =_{\Suf} v^{-1}L  & \udiff  \left(u^{-1}L \cap \Suf\right) = \left(v^{-1}L \cap \Suf\right) &
u^{-1}L \subseteq_{\Suf} v^{-1}L & \udiff  \left(u^{-1}L \cap \Suf\right) \subseteq \left(v^{-1}L \cap \Suf\right) .\tag*{\eod}
\end{align*}
\end{definitionNI}

These operators allow us to define an over-approximation of Nerode's quasiorder that can be decided with finitely many membership tests. 

\vspace{-4pt}

\begin{definition}[Right-language-based quasiorder w.r.t. \(\Suf\)]%
\label{def:finiteNerode}
Let \(L\) be a language, \(\Suf \subseteq \Sigma^*\) and \(u,v \in Σ^*\).
Define \(u \qA v \udiff u^{-1}L \subseteq_{\Suf} v^{-1}L\).\eod
\end{definition}

\vspace{-4pt}

Recall that the \emph{Learner} only manipulates the principals for the words in \(\Pref\).
Therefore, we need to adapt the notion of composite principal for \(\qA\).

\vspace{-4pt}

\begin{definition}[\(L_{\Suf}\)-Composite Principal w.r.t. \(\Pref\)]
Let \(\Pref, \Suf \subseteq \Sigma^*\) with \(u \in \Pref\) and let \(L \subseteq Σ^*\) be a language. 
We say \(ρ_{\qA}(u)\) is \emph{\(L_{\Suf}\)-composite w.r.t. \(\Pref\)} if{}f
\[u^{-1}L =_{\Suf} \bigcup_{x \in \Pref, \; x \qAn u} x^{-1}L\enspace .\]
Otherwise, we say it is \(L_{\Suf}\)-\emph{prime} w.r.t. \(\Pref\).\eod
\end{definition}

The \emph{Learner} uses the quasiorder \(\qA\) to build an automaton by adapting the construction from Definition~\ref{def:right-const:qo} in order to use only the information that is available by means of the sets \(\Suf\) and \(\Pref\).
Building such an automaton requires the quasiorder to satisfy two conditions: it must be \emph{closed} and \emph{consistent} w.r.t. \(\Pref\).

\begin{definitionNI}[Closedness and Consistency of \(\qA\) w.r.t. \(\Pref\)]\index{Closedness w.r.t. \(\Pref\)}\index{Consistency w.r.t. \(\Pref\)}\label{def:ClosedCons}\hfill
\begin{myEnumA}
\item \(\qA\) is \emph{closed w.r.t. \(\Pref\)} if{}f \label{def:ClosedCons:Closed}
\begin{adjustwidth}{-0.5cm}{}
\begin{myAlign}{0pt}{}
\forall u \in \cP, a \in \Sigma,\;  ρ_{\qA}(ua) \text{ is \(L_{\Suf}\)-prime w.r.t. \(\Pref\)}\Ra \exists v \in \cP,  ρ_{\qA}(ua) = ρ_{\qA}(v) .
\end{myAlign}
\end{adjustwidth}
\item \(\qA\) is \emph{consistent w.r.t. \(\Pref\)} if{}f \label{def:ClosedCons:Cons} 
\(\forall u, v \in \Pref, a \in Σ: \; u \qA v \Ra ua \qA va\). \eod
\end{myEnumA}
\end{definitionNI}

At each iteration, the \emph{Learner} checks whether the quasiorder \(\qA\) is closed and consistent w.r.t. \(\Pref\).
If \(\qA\) is not closed w.r.t. \(\Pref\), then it finds \(ρ_{\qA}(ua)\) with \(u \in \Pref, a \in Σ\) such that \(ρ_{\qA}(ua)\) is \(L_{\Suf}\)-prime w.r.t. \(\Pref\) and it is not equal to some \(ρ_{\qA}(v)\) with \(v \in \Pref\).
Then it adds \(ua\) to \(\Pref\).

Similarly, if \(\qA\) is not consistent w.r.t. \(\Pref\) then the \emph{Learner} finds \(u, v \in \Pref\), \(a \in Σ, x \in \Suf\) such that \(u \qA v\) but \(uax \in L \land vax \notin L\).
Then the \emph{Learner} adds \(ax\) to \(\Suf\).
When the quasiorder \(\qA\) is closed and consistent w.r.t. \(\Pref\), the \emph{Learner} builds the RFA \(\cR(\qA, \Pref)\).

Definition~\ref{def:right-const:qo:S} is an adaptation of the automata construction \(\cH^{r}\) from Definition~\ref{def:right-const:qo}.
Instead of considering all principals, it considers only those that correspond to words in \(\Pref\).
Moreover, the notion of \(L\)-primality is replaced by \(L_{\Suf}\)-primality w.r.t. \(\Pref\) since the algorithm does not manipulate quotients of \(L\) by words in \(Σ^*\) but the approximation through \(\Suf\) of the quotients of \(L\) by words in \(\Pref\) (see Definition~\ref{def:subsetS}).

\begin{definition}[Automata construction \(\cL(\qA, \Pref)\)]%
\label{def:right-const:qo:S}
Let \(L\!\subseteq\! Σ^*\) be a regular language and let \(\Pref,\Suf\!\subseteq\! Σ^*\).
Define the automaton \(\cL(\qA, \Pref)\!=\! \tuple{Q, \Sigma, \delta, I, F}\) with \(Q \!=\! \{ρ_{\qA}(u) \mid u\!\in\! \Pref, ρ_{\qA}(u) \text{ is \(L_{\Suf}\)-prime w.r.t. \(\Pref\)}\}\), \(I = \{ρ_{\qA}(u) \in Q \mid \varepsilon \in ρ_{\qA}(u)\}\), \(F = \{ρ_{\qA}(u) \in Q \mid u \in L\}\) and \( \delta(ρ_{\qA}(u), a) = \{ ρ_{\qA}(v) \in Q \mid ρ_{\qA}(u)   a \subseteq ρ_{\qA}(v)\}\) for all \(ρ_{\qA}(u) \in Q\) and \(a \in Σ\).\eod
\end{definition}

Finally, the \emph{Learner} asks the \emph{Oracle} whether \(\lang{\cR(\qA, \Pref)} = L\).
If the \emph{Oracle} answers \emph{yes} then the algorithm terminates.
Otherwise, the \emph{Oracle} returns a counterexample \(w\) for the language equivalence.
Then, the \emph{Learner} adds every suffix of \(w\) to \(\Suf\) and repeats the process.

Theorem~\ref{theorem:NLqo} shows that the NL\(^{\qo}\) algorithm exactly coincides with NL\(^*\).

\begin{theorem}\label{theorem:NLqo}
NL\(^{\qo}\) builds the same sets \(\Pref\) and \(\Suf\), performs the same queries to the \emph{Oracle} and the \emph{Teacher} and returns the same RFA as NL\(^*\), provided that both algorithms perform the same non-deterministic choices. 
\end{theorem}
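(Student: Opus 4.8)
The plan is to prove the theorem by induction on the number of iterations of the main \textbf{while} loop (line~\ref{step:loop} of NL$^\qo$), maintaining as invariant that after the same number of iterations the two algorithms hold identical sets $\Pref$ and $\Suf$, and that the table $\cT = (T,\Pref,\Suf)$ of NL$^*$ and the quasiorder $\qA$ of NL$^\qo$ are two presentations of the same data. First I would prove the dictionary lemma that makes this precise: for every $u \in \Pref \cup \Pref\cdot\Sigma$ and $x \in \Suf$ we have $\row(u)(x) = {+} \Lra ux \in L \Lra x \in u^{-1}L \cap \Suf$, so the row $\row(u)$ is merely a concrete encoding of the finite set $u^{-1}L \cap \Suf$. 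From this I would read off, by unfolding the relevant definitions, the three basic correspondences: (i) $\row(u) \sqsubseteq \row(v) \Lra u \qA v$, and hence $\row(u) = \row(v)$ iff the principals $\rho_{\qA}(u)$ and $\rho_{\qA}(v)$ coincide; (ii) for a finite family $u_1,\dots,u_n$, the join $\row(u_1)\sqcup\cdots\sqcup\row(u_n)$ encodes $(\bigcup_i u_i^{-1}L)\cap\Suf$; and (iii) $\row(u)(\varepsilon) = {+} \Lra u \in L$ and $\row(u)\sqsubseteq\row(\varepsilon) \Lra \varepsilon \in \rho_{\qA}(u)$. Together (i)--(iii) give a bijection $\row(u)\mapsto\rho_{\qA}(u)$ between the states, initial states, final states and transitions of $\cR(\cT)$ and of $\cL(\qA,\Pref)$, so the two automata are isomorphic; this is what lets me conclude that, once both are closed and consistent, the two \emph{Learners} submit the same equivalence query to the \emph{Oracle}, receive the same counterexample $w$, and perform the same update of $\Suf$.

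Next I would handle the inner repair loop. For consistency this is immediate from (i): ``$\qA$ not consistent'', i.e.\ $\exists u,v\in\Pref, a\in\Sigma, x\in\Suf$ with $\row(u)\sqsubseteq\row(v)$, $\row(ua)(x)={+}$, $\row(va)(x)={-}$, unfolds to $u \qA v$, $ua \not\qA va$ witnessed by $x$, which is exactly the test performed by NL$^\qo$; so the same $ax$ is appended to $\Suf$. For closedness I must match ``$\row(ua)$ is $\cT$-prime and differs from every $\row(v)$, $v\in\Pref$'' with ``$\rho_{\qA}(ua)$ is $L_{\Suf}$-prime w.r.t.\ $\Pref$ and differs from every $\rho_{\qA}(v)$, $v\in\Pref$''. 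The ``differs from'' halves are already identified by (i), so the entire difficulty of the proof is concentrated in reconciling the two notions of primality.

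Hence I expect the main obstacle to be exactly this discrepancy: NL$^*$ calls $\row(u)$ $\cT$-prime when it is not the join of \emph{all} rows of $\Rows(\cT)$ that it strictly covers --- a condition ranging over $\Pref \cup \Pref\cdot\Sigma$ --- whereas NL$^\qo$ calls $\rho_{\qA}(u)$ $L_{\Suf}$-prime w.r.t.\ $\Pref$ when $u^{-1}L \neq_{\Suf} \bigcup_{x\in\Pref,\, x\qAn u} x^{-1}L$, a condition ranging over $\Pref$ only. I would bridge this by first using (ii) to rewrite joins as unions, turning $\cT$-primality of $\row(u)$ into $u^{-1}L \neq_{\Suf} \bigcup_{x\in\Pref\cup\Pref\cdot\Sigma,\, x\qAn u} x^{-1}L$, and then showing that under the running invariants --- $\Pref$ prefix-closed, $\Suf$ suffix-closed, $\varepsilon$ in both, and the fact that a closedness violation for $ua$ is acted upon only when it genuinely introduces a principal not already named by a word of $\Pref$ --- the extra rows coming from $\Pref\cdot\Sigma$ add nothing to that union that a row of $\Pref$ does not already add. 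The same invariant argument shows that whenever either algorithm detects a closedness defect the witnessing word is prime in \emph{both} senses, so the two repair steps agree and append the same $ua$ to $\Pref$. Once the repair phases run in lockstep and the constructed automata are isomorphic, the base case ($\Pref = \Suf = \{\varepsilon\}$ in both algorithms) together with the induction step closes the argument; termination need not be re-proved, as it follows from that of NL$^*$ \cite{bollig2009angluin} transported along the correspondence.
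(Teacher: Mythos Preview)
Your plan is essentially the paper's: establish the dictionary $\row(u)\leftrightarrow\rho_{\qA}(u)$ via $\row(u)\sqsubseteq\row(v)\Lra u\qA v$ and $\sqcup\leftrightarrow\cup$, derive that closedness and consistency coincide, that the same repair steps are taken, and that $\cR(\cT)$ and $\cL(\qA,\Pref)$ are isomorphic via $\varphi(\rho_{\qA}(u))=\row(u)$.

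Where you differ is in your treatment of the primality mismatch, and here you are in fact more scrupulous than the paper. Definition~\ref{def:PrimeRow} takes the join over all of $\Rows(\cT)$, i.e.\ over $\Pref\cup\Pref\cdot\Sigma$, while $L_{\Suf}$-primality w.r.t.\ $\Pref$ quantifies over $\Pref$ only; the paper's chain of equivalences simply starts from $\row(u)=\bigsqcup_{v\in\Pref,\,\row(v)\sqsubset\row(u)}\row(v)$ without explaining why the restriction to $\Pref$ is harmless, so it does not address the gap you flag. Your proposed ``invariant'' fix, however, is vague, and the claim that rows from $\Pref\cdot\Sigma$ contribute nothing new to the join is not generally true before the table is closed. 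A cleaner route: the easy direction is that $L_{\Suf}$-composite w.r.t.\ $\Pref$ implies $\cT$-composite (enlarging the index set of a join bounded above by $\row(u)$ preserves the equality), hence every $\cT$-prime witness for NL$^*$ is automatically an $L_{\Suf}$-prime witness for NL$^{\qo}$. For the converse, argue directly at the level of the ``not closed'' predicates rather than at the level of individual primality: if $\rho_{\qA}(ua)$ is $L_{\Suf}$-prime w.r.t.\ $\Pref$ and equals no $\rho_{\qA}(v)$ with $v\in\Pref$, then the join over \emph{all} upper rows $\sqsubseteq\row(ua)$ is strictly below $\row(ua)$, so a fortiori the sub-join over $\cT$-prime upper rows is, and the NL$^*$ closedness test fails at $ua$ as well. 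This shows the two closedness tests agree and the two algorithms can pick the same witness, which is all the theorem needs.
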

\begin{proof}
Let \(\Pref, \Suf \subseteq Σ^*\) be a prefix-closed and a suffix-closed finite set, respectively, and let \(\cT = (T, \Pref, \Suf)\) be the table built by algorithm NL\(^*\).
Observe that for every \(u,v \in \Pref\):
\begin{align}
u \qA v & \Lra \quad \text{[By Definition~\ref{def:finiteNerode}]} \nonumber\\
{u}^{-1}L \subseteq_{\Suf} {v}^{-1}L & \Lra \quad \text{[By definition of quotient w.r.t \(S\)]} \nonumber\\
\forall x \in S, \; ux \in L \Ra vx \in L & \Lra \quad \text{[By definition of \(\cT\)]} \nonumber\\
\forall x \in S, \; (\row(u)(x) = {+}) \Ra (\row(v)(x) = {+}) & \Lra \quad \text{[By Definition~\ref{def:coverRow}]} \nonumber\\
\row(u) \sqsubseteq \row(v) \enspace .
\label{eq:QOIffRowsSubset}
\end{align}
Moreover, for every \(u,v \in \Pref\) we have that \({u}^{-1}L =_{\Suf} {v}^{-1}L\) if{}f \(\row(u) = \row(v)\).

Next, we show that the join operator applied to rows corresponds to the set union applied to quotients w.r.t \(S\).
Let \(u,v \in \Pref\) and let \(x \in \Suf\).
Then,
\begin{align}
(\row(u) \sqcup \row(v))(x) = {+} & \Lra \quad \text{[By Definition~\ref{def:join}]} \nonumber\\
(\row(u)(x) = {+}) \lor (\row(v)(x) = {+}) & \Lra \quad \text{[By definition of row]} \nonumber\\
(ux \in L )\lor (vx \in L) & \Lra \quad \text{[By definition of quotient w.r.t \(\Suf\)]} \nonumber\\
(x \in {u}^{-1}L) \lor (x \in {v}^{-1}L )& \Lra \quad \text{[By definition of \(\cup\)]}\nonumber \\ 
x \in {u}^{-1}L \cup {v}^{-1}L \enspace .
\label{eq:joinUnion}
\end{align}
Therefore, we can prove that \(\row(u)\) is \(\cT\)-\emph{prime} if{}f \(ρ_{\qA}(u)\) is \(L_{\Suf}\)-prime w.r.t. \(\Pref\).
\begin{align*}
\row(u) = {\textstyle\bigsqcup_{v \in \Pref, \; \row(v) \sqsubset \row(u)}} \row(v) & \Lra \quad \text{[By Equation~\eqref{eq:QOIffRowsSubset}]} \\
\row(u) = {\textstyle\bigsqcup_{v \in \Pref, \; {v}^{-1}L \subsetneq_{\Suf} {u}^{-1}L}} \row(v) & \Lra \quad \text{[By Equation~\eqref{eq:joinUnion}]} \\
{u}^{-1}L = {\textstyle\bigcup_{v \in \Pref, \; {v}^{-1}L \subsetneq_{\Suf} {u}^{-1}L}} {v}^{-1}L & \Lra \quad \text{[\({v}^{-1}L \subsetneq_{\Suf} {u}^{-1}L \Lra u \qAn v\)]} \\
{u}^{-1}L = {\textstyle\bigcup_{v \in \Pref, \; u \qAn v}} {v}^{-1}L  \enspace .
\end{align*}
It follows from Definitions~\ref{def:ClosedCons} \ref{def:ClosedCons:Closed} and~\ref{def:Table} \ref{def:Table:closed} and Equation~\eqref{eq:joinUnion} that \(\cT\) is closed if{}f \(\qA\) is closed.

Moreover, it follows from Definitions~\ref{def:ClosedCons} \ref{def:ClosedCons:Cons} and~\ref{def:Table} \ref{def:Table:Consistent} that \(\cT\) is consistent if{}f \(\qA\) is consistent.

On the other hand, for every \(u,v \in \Pref, a \in Σ\) and \(x \in \Suf\) we have that:
\begin{align*}
(\row(u) \subseteq \row(v)) \land (\row(ua)(x) = {+}) \land (\row(va)(x) = {-} )& \Lra \quad \text{[By Equation~\eqref{eq:QOIffRowsSubset}]} \\
(u \qA v )\land (uax \in L) \land (vax \notin L) & %
\end{align*}
It follows that if \(\cT\) and \(\qA\) are not consistent then both NL\(^*\) and NL\(^{\qo}\) can find the same word \(ax \in Σ   \Suf\) and add it to \(\Suf\).
Similarly, it is straightforward to check that if \(\row(ua)\) with \(u \in \Pref\) and \(a \in Σ\) break consistency, i.e.\ it is \(\cT\)-prime and it is not equal to any \(\row(v)\) with \(v \in \Pref\), then \(ρ_{\qA}(ua)\) is \(L_{\Suf}\)-prime for \(\Pref\) and not equal to any \(ρ_{\qA}(v)\) with \(v \in \Pref\).
Thus, if \(\cT\) and \(\qA\) are not closed then both NL\(^*\) and NL\(^{\qo}\) can find the same word \(ua\) and add it to \(\Pref\).

It remains to show that both algorithms build the same automaton modulo isomorphism, i.e., \(\cR(\cT) = \tuple{\widetilde{Q}, Σ, \wt(δ), \widetilde{I}, \wt{F}}\) is isomorphic to \(\cR(\qA, \Pref) = \tuple{Q, Σ, δ, I, F}\).
Define the mapping \(\varphi: Q \to \wt{Q}\) as \(\varphi(ρ_{\qA}(u)) = \row(u)\).
Then:
\begin{align*}
\varphi(Q) & = \{\varphi(ρ_{\qA}(u)) \mid u \in \cP \land ρ_{\qA}(u) \text{ is \(L_{\Suf}\)-prime w.r.t. \(\Pref\)}\} \\
& = \{\row(u) \mid u \in \cP \land \row(u) \text{ is \(\cT\)-prime}\} = \wt{Q} \enspace .\\
\varphi(I) & = \{\varphi(ρ_{\qA}(u)) \mid \varepsilon \in ρ_{\qA}(u)\} = \{\row(u) \mid u \qA \varepsilon\} = \{\row(u) \mid \row(u) \sqsubseteq \row(\varepsilon)\} = \wt{I} \enspace .\\
\varphi(F) & = \{\varphi(ρ_{\qA}(u)) \mid u \in L \cap \cP\} = \{\row(u) \mid u \in L \cap \cP\} = \{\row(u) \mid \row(u)(\varepsilon) = {+}\} = \wt{F}\enspace .\\
\varphi(\delta(ρ_{\qA}(u),a)) &= \varphi(ρ_{\qA}(ua)) = \{\row(v) \mid ρ_{\qA}(u) \in Q \land ρ_{\qA}(u)a \subseteq ρ_{\qA}(v)\} \\
& = \{\row(v) \mid \row(v) \in \wt{Q} \land v \qA ua\}  = \{\row(v) \mid \row(v) \in \wt{Q} \land \row(v) \sqsubseteq \row(ua)\} \\
& = \wt{\delta}(\row(u),a) = \wt{\delta}(\varphi(ρ_{\qA}(u)),a) \enspace .
\end{align*}

Finally, we show that \(\varphi\) is an isomorphism.
Clearly, the function \(\varphi\) is surjective since, for every \(u \in \Pref\), we have that \(\row(u) = \varphi(ρ_{\qA}(u))\).
Moreover \(\varphi\) is injective since for every \(u,v \in \Pref\), \(\row(u) = \row(v) \Lra {u}^{-1}L =_{\Suf} {v}^{-1}L\), hence \(\row(u) = \row(v) \Lra ρ_{\qA}(u) = ρ_{\qA}(v)\).

We conclude that \(\varphi\) is an NFA isomorphism between \(\cR(\qA,\Pref))\) and \(\cR(\cT)\).
Therefore NL\(^*\) and NL\(^{\qo}\) exhibit the same behavior, provided that both algorithms perform the same non-deterministic choices, as they both maintain the same sets \(\Pref\) and \(\Suf\) and build the same automata at each step.
\end{proof}

\paragraph*{Termination of NL\(^*\) and NL\(^{\qo}\)}

At each iteration of the NL\(^{\qo}\) algorithm, it either terminates or the counterexample \(w\) given by the \emph{Oracle} refines the quasiorder \(\qA\) which results in having, at least, one new principal \(ρ_{\qA}(w)\).

Since 
\[ρ_{\qA}(u) \neq ρ_{\qA}(v) \Ra \exists s \in \Suf, \; us \in L \land vs \notin L \Ra ρ_{\qrL}(u) \neq ρ_{\qrL}(v)\enspace ,\]
we conclude that the number of principals for \(\qA\) is smaller o equal than the number of principals for \(\qrL\).
Given that \(\qrL\) induces finitely many principals, algorithm NL\(^{\qo}\) can only add finitely many principals to \(\qA\) and, therefore, the algorithm terminates.

It is worth to remark that, in order to prove the termination of the NL\(^*\) algorithm, \citet{bollig2009angluin} first had to show that the number of rows built during the computation of the NL\(^*\) algorithm is a lower bound for the number of rows computed during an execution of the L\(^*\) algorithm of \citet{angluin1987learning}.
Then, the termination of the NL\(^*\) algorithms follows from the termination of L\(^*\).

Finally, observe that, by replacing the right quasiorder \(\qA\) by its corresponding right congruence \(\mathord{\sim_{L_{\Suf}}} \ud \mathord{\mathord{\qA}} \cap \mathord{\mathord{(\qA)^{-1}}}\) in the above algorithm (precisely, in Definitions~\ref{def:ClosedCons} and \ref{def:right-const:qo:S}), the resulting algorithm corresponds to the L\(^*\) algorithm of \citet{angluin1987learning}.
Note that, in that case, all principals \(ρ_{\sim_{L_{\Suf}}}(u)\), with \(u\in\Sigma^*\), are \(L_{\Suf}\)-prime w.r.t. \(\Pref\). 
\clearpage{}%
\clearpage{}%
%

\chapter{Future Work}
\label{chap:future}

We believe that we have only scratched the surface on the use of \emph{well-quasiorders} on words for solving problems from \emph{Formal Language Theory}.

In this section, we present some directions for further developments that show how our work can be extended to\begin{myEnumIL}\item take full advantage of simulation relations, \item better understand, and possibly improve, the performance of \tool{zearch} and \item develop new algorithms for building smaller residual automata.\end{myEnumIL}

\section{The Language Inclusion Problem}
Consider the inclusion problem \(\lang{\cN_1} \subseteq \lang{\cN_2}\), where \(\cN_1\) and \(\cN_2\) are NFAs.
Even though we have shown in Chapter~\ref{chap:LangInc} that simulations can be used to derive an algorithm for solving this language inclusion problem, we are not on par with the thoughtful use of simulation relations made by \citet{Abdulla2010} and \citet{DBLP:conf/popl/BonchiP13}. 
The main reason for which we are not able to accommodate within our framework their use of simulations is that our abstraction only manipulates sets of states of \(\cN_2\).
As a consequence, any use of simulations that involves states of \(\cN_1\) is out of reach.

However, it is possible to overcome this limitation by using \emph{alternating automata} as we show next.
Intuitively, since alternating automata can be complemented without altering their number of states, we can reduce any language inclusion problem \(\lang{\cA_1} \subseteq \lang{\cA_2}\), where \(\cA_1\) and \(\cA_2\) are alternating automata, into a universality problem \(Σ^* \subseteq \lang{\cA_3}\), where \(\cA_3 = \cA_1^c \cup \cA_2\).
Since \(\cA_3\) is built by combining the two input automata its states are the union of the states of \(\cA_1\) and \(\cA_2\).
Therefore, simulations applicable within our framework to decide \(Σ^* \subseteq \lang{\cA_3}\), which only involve states of \(\cA_3\), now involve states of \(\cA_1\) and \(\cA_2\). 

\subsection{Language Inclusion Through Alternating Automata}
Let \(S\) be a set.
We denote by \(\cBp(S)\) the set of \demph{positive Boolean formulas} over \(S\) which are of the form \(\Phi \ud s \;|\; \Phi_1 \lor \Phi_2 \;|\; \Phi_1 \land \Phi_2 \;|\; \False\), where \(s \in S\) and \(\Phi_1,\Phi_2 \in \cBp(S)\).
We say \(S' \subseteq S\) \emph{satisfies} a formula \(\Phi \in \cB^{+}(S)\) if{}f \(\Phi\) is \(\True\) when assigning the value \(\True\) to all elements in \(S'\) and \(\False\) to the elements in \(S \setminus S'\).
Given \(\Phi \in \cBp(S)\), we denote \(\eval{\Phi}\) the set of all subsets of \(S\) that satisfy \(\Phi\).
Clearly, if \(S'\) satisfies a formula \(\Phi\), any set \(S'' \subseteq S\) such that \(S' \subseteq S''\) also satisfies \(\Phi\).
Therefore, the set \(\eval{\Phi}\) is an \(\subseteq\)-upward closed set, i.e. \(ρ_{\subseteq}(\eval{\Phi}) = \eval{\Phi}\).
Finally, if a formula \(\Phi\) is not satisfiable, i.e. no set \(S' \subseteq S\) satisfies \(\Phi\), then \(\eval{\Phi} = \varnothing\). 

\begin{definition*}[AFA]
An \emph{alternating finite-state automata}\index{alternating automata} (AFA for short) is a tuple \(\cA \ud \tuple{Q,\Sigma,δ,I,F}\) where \(Q\) is the finite set of \emph{states}, \(\Sigma\) is the finite alphabet, \(\delta \colon Q \times \Sigma \to \cBp(Q)\) is the transition function, \(I \subseteq Q\) are the initial states and \(F \subseteq Q\) are the final states.\eod
\end{definition*}

Intuitively, given an active state \(q \in Q\) and an alphabet symbol \(a \in Σ\) an AFA can activate any set of states in \(\eval{δ(q,a)}\). 
Figure~\ref{fig:AFA} shows an example of an AFA. 

\begin{figure}[!ht]
  \centering
\begin{minipage}[l]{0.34\textwidth}
\begin{tikzpicture}[->,>=stealth',shorten >=1pt,auto,node distance=3mm and 8mm,thick,initial text=]
\tikzstyle{every state}=[scale=0.855,fill=customblue!60,draw=blue!60,text=black,style={draw,ellipse,inner sep=0pt}]

\node[initial, state] (1) {\(0\)};
\node[state] (2) [right=of 1, yshift=1.8cm] {\(1\)};
\node[state] (3) [right=of 1] {\(2\)};
\node[accepting, state] (4) [right=of 3] {\(3\)};

\path (1) edge node[inner sep=0mm,pos=0.2] (a1) {} node {\(a\)} (2);
\path (1) edge node[inner sep=0mm,pos=0.2] (b1) {} node {\(a\)} (3);
\path (1) edge[bend right=45] node {\(a\)} (4);

\path (2) edge node[inner sep=0mm,pos=0.5] (a2) {} node[left] {\(b\)} (3);
\path (2) edge node[inner sep=0mm,pos=0.2] (b2) {} node {\(b\)} (4);

\path (3) edge node {\(a\)} (4);
\path (4) edge[loop above] node {\(a,b\)} (4);

\path[-,shorten <=0mm,shorten >=-1.5pt] (a1) edge [bend left]  (b1) ;
\path[-,shorten <=-1.5pt,shorten >=0mm] (a2) edge [bend right]  (b2) ;
\end{tikzpicture}
\end{minipage}\hfill %
\begin{minipage}[r]{0.65\textwidth}
\(\cA \!=\! \tuple{Q,\Sigma,δ,I,F}\) with \(Q\!=\!\{q_0, q_1, q_2, q_3\}\), \(\Sigma  \!=\! \{a,b\}\), \(I  \!=\! \{q_0\}\), \(F  = \{q_3\}\) and 
\begin{myAlign}{2pt}{0pt}
δ(q_0,a) & = (q_1 \land q_2) \lor q_3 & 
δ(q_2,a) & = δ(q_3,a) = δ(q_3,b) = q_3 \\
δ(q_1,b) & = q_2 \land q_3 &
δ(q_0,b) & = δ(q_1,a) = δ(q_2,b) = \False
\end{myAlign}
\end{minipage}

\caption{Alternating automaton \(\cA\) generating the language \(\lang{\cA}= a(a+b)^*\).}
  \label{fig:AFA}
\end{figure}

Given an AFA \(\cA = \tuple{Q,\Sigma,δ,I,F}\), we extend the transition function \(\delta\) to sets of states obtaining \(\Delta: \wp(Q)\times \Sigma \to \cBp\) defined as \(Δ(S,a) \ud \bigwedge_{s \in S} δ(s,a)\).
Intuitively, \(\Delta(S,a)\) indicates the states that will be activated after reading \(a\) when all states in \(S\) are active.
Let \(X \uplus Y \ud \{x \cup y \mid x \in X, y \in Y\}\).
Then
\begin{equation}\label{eq:Delta}
\eval{\Delta(S,a)} = \left\{\begin{array}{ll}
\varnothing & \text{ if } \exists s \in S \text{ s.t. } \delta(s,a) = \False \\
\biguplus_{s \in S} \eval{\delta(s,a)} & \text{ otherwise }
\end{array}\right.
\end{equation}

We say a word \(w\) is accepted by an AFA \(\cA=\tuple{Q,Σ,δ,I,F}\) if{}f there exists a sequence of sets of active states \(S_0,\ldots,S_{\len{w}}\) such that \(S_0 = \{q_i\}\) with \(q_i \in I\), \(S_n \subseteq F\), \(S_n \neq \varnothing\) and \(S_{i} \in \eval{Δ(S_{i{-}1},(w)_i)}\) for \(1 \leq i \leq \len{w}\).

\begin{example}
Let us consider the alternating automaton \(\cA\) in Figure~\ref{fig:AFA}. 	
Then, we have that
\begin{align*}
\Delta(\{q_0\},a) & = δ(q_0,a) = (q_1 \land q_2) \lor q_3 \enspace .\\
\eval{\Delta(\{q_0\},a)} & = \eval{δ(q_0,a)} = ρ_{\subseteq}\left(\{\{q_1,q_2\},\{q_3\}\}\right) \enspace .\\[8pt]
\Delta(\{q_1,q_2\},b) & =  δ(q_1,b) \land δ(q_2,b) = q_2\land q_3 \land \False = \False\\
\eval{\Delta(\{q_1,q_2\},b)} & =  \eval{δ(q_1,b)} \biguplus \eval{δ(q_2,b)} = ρ_{\subseteq}\left(\{\{q_2,q_3\}\}\right) \uplus \varnothing = \varnothing \enspace . \\[8pt]
\Delta(\{q_1,q_2\},a) & =  δ(q_1,a) \land δ(q_2,a)  = \False \land q_3 = \False\\
\eval{\Delta(\{q_1,q_2\},a)} & =  \eval{δ(q_1,a)} \biguplus \eval{δ(q_2,a)} = \varnothing \uplus ρ_{\subseteq}\left(\{\{q_3\}\}\right) = \varnothing \enspace . \\[8pt]
Δ(\{q_3\},a) & = Δ(\{q_3\},b) = q_3 \\
\eval{Δ(\{q_3\},a)} & = \eval{Δ(\{q_3\},b)} = ρ_{\subseteq}\left(\{\{q_3\}\}\right)\enspace .
\end{align*}
Since \(q_0\) is the only initial state and \(F = \{q_3\}\), it follows that the language generated by the automaton is \(\lang{\cA} = a(a+b)^*\).\eox
\end{example}

We denote the reflexo-transitive closure of \(\eval{\Delta}\) as \(\goes{}\).
Thus, the \emph{language} of an AFA, \(\cA\), is  \(\lang{\cA} = \{w \in \Sigma^* \mid \exists q_i \in I, S \subseteq F, \; S \neq \varnothing \land \{q_i\} \goes{w} S\}\).

One of the most interesting properties of AFAs is that their complement, i.e. an AFA generating the complement language, can be built in polynomial time.

\begin{definition}[Complement of an AFA]
Let \(\cA = \tuple{Q,\Sigma,δ,I,F}\) be an AFA with \(L = \lang{\cA}\).
Its \emph{complement AFA}, denoted \(\cA^c\) is the AFA \(\cA^c \ud \tuple{Q,\Sigma,δ^c,I,Q \setminus F}\) where
\(δ^c(q,a)\) is the result of switching \(\land\) and \(\lor\) operators in \(δ(q,a)\).\eod
\end{definition}

The simplicity of the computation of the complement for AFAs, \emph{which does not alter the number of states of the automaton}, allows us to use them in order to solve the language inclusion problem \(\lang{\cN_1} \subseteq \lang{\cN_2}\), where \(\cN_1\) and \(\cN_2\) are NFAs, by reducing it to universality of alternating automata as follows:
\begin{align}
\lang{\cN_1} \subseteq \lang{\cN_2} & \Lra \quad \text{[Since NFAs \(\subseteq\) AFAs]}\nonumber \\
\lang{\cA_1} \subseteq \lang{\cA_2} & \Lra \quad \text{[\(A \subseteq B \Lra A \cap B^c = \varnothing\)]}\nonumber\\
\lang{\cA_1} \cap (\lang{\cA_2})^c = \varnothing & \Lra \quad \text{[\((A\cap B)^c = A^c \cup B^c\) and \(\varnothing^c = Σ^*\)]}\nonumber\\
(\lang{\cA_1})^c \cup \lang{\cA_2} = \Sigma^* & \Lra \quad\text{[AFAs are closed under complement]}\nonumber\\
\lang{\cA_1^c} \cup \lang{\cA_2} = \Sigma^* & \Lra \quad \text{[\(A = \Sigma^* \Lra \Sigma^* \subseteq A\)]} \nonumber\\
\Sigma^* \subseteq \lang{\cA_1^c} \cup \lang{\cA_2} & \Lra \quad \text{[With \(\cA_3 = \cA_1^c \cup \cA_2\)]} \nonumber\\
Σ^* \subseteq \lang{\cA_3}\label{eq:universality} 
\end{align}

On the other hand, \(\Sigma^*\) is the \(\lfp\) of the equation \(\lambda X. \{\varepsilon\} \cup \bigcup_{a \in \Sigma}aX\).
Therefore 
\[\Sigma^* \subseteq \lang{\cA_3} \Lra \lfp (\lambda X. \{\varepsilon\} \cup {\textstyle\bigcup_{a \in \Sigma} aX}) \subseteq \lang{\cA_3}\enspace .\]

We are now in position to leverage our quasiorder-based framework from Chapter~\ref{chap:LangInc} to derive an algorithm for deciding the universality of a regular language given by an AFA \(\cA\).
To do that, we adapt our right state-based quasiorder from Equation~\ref{eqn:state-qo}, which requires defining the successor operator for AFAs \(\post_w^{\cA}: \wp(\wp(Q)) \to \wp(\wp(Q))\), where \(w \in \Sigma^*\), as follows:

\begin{equation}\label{eq:postAFAwords}
\post_{w}^{\cA}(X) \ud \{S' \in \wp(Q) \mid \exists S \in X, S \goes{w} S'\} \enspace .
\end{equation}

It is straightforward to check that \(\post_{wa}^{\cA}(X) = \post_a^{\cA}(\post_w^{\cA}(X))\).
The following example illustrates the behavior of the function \(\post_w^{\cA}\) on the AFA from Figure~\ref{fig:AFA}.

\begin{example}
Consider again the AFA \(\cA\) from Figure~\ref{fig:AFA}.
We have that
\begin{align*}
\post_{a}^{\cA}(\{\{q_0\}\}) & = ρ_{\subseteq}\left(\{\{q_1,q_2\},\{q_3\}\}\right) \\
\post_{aa}^{\cA}(\{\{q_0\}\}) & = \post_{a}^{\cA}(\{\{q_1,q_2\},\{q_3\}\}) = ρ_{\subseteq}\left(\{\{q_3\}\}\right) \\
\post_{ab}^{\cA}(\{\{q_0\}\}) & = \post_b^{\cA}(\{\{q_1,q_2\},\{q_3\}\}) = ρ_{\subseteq}\left(\{\{q_3\}\}\right)\enspace . \tag*{\eox}
\end{align*}
\end{example}

Similarly to what we did in Section~\ref{subsec:state-qos} for NFAs, we next define a sate-based quasiorder for AFAs, \(\qo_{\cA}\).
To do that, let \(I_{\{\}}\) be the set of singleton subsets of \(I\), i.e. \(I_{\{\}} \ud \{\{q\} \mid q \in I\}\).
Then
\begin{equation}\label{eq:qoAFAState}
u \qo_{\cA} v \Lra \post_u^{\cA}(I_{\{\}}) \subseteq \post_v^{\cA}(I_{\{\}})
\end{equation}

\begin{lemma}\label{lemma:leqStateLcWQP}
Let \(\cA = \tuple{Q,\Sigma,\delta, I,F}\) be an AFA with \(L=\lang{\cA}\).
Then \(\qo_{\cA}\) is a right \(L\)-consistent well-quasiorder.
\end{lemma}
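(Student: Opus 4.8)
\textbf{Proof plan for Lemma~\ref{lemma:leqStateLcWQP}.}
The plan is to mirror the structure of the proofs of Lemma~\ref{lemma:LAconsistent} and Lemma~\ref{lemma:simulationLConsistent}, establishing three things in turn: that \(\qo_{\cA}\) is a well-quasiorder, that it is right monotone, and that it is precise with respect to \(L\), i.e.\ \(\mathord{\qo_{\cA}} \cap (L \times L^c) = \varnothing\). Since the order is defined by \(u \qo_{\cA} v \udiff \post_u^{\cA}(I_{\{\}}) \subseteq \post_v^{\cA}(I_{\{\}})\), where \(\post_u^{\cA}(I_{\{\}}) \in \wp(\wp(Q))\), the wqo property should follow from the fact that \(\wp(\wp(Q))\) is finite (as \(Q\) is finite), hence \(\tuple{\wp(\wp(Q)), \subseteq}\) is a finite poset and in particular a wqo; this also immediately gives decidability, since each \(\post_u^{\cA}(I_{\{\}})\) is a computable finite set. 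For monotonicity, I would first record the composition identity \(\post_{wa}^{\cA} = \post_a^{\cA} \circ \post_w^{\cA}\) (already noted right after Equation~\eqref{eq:postAFAwords}) together with the fact that \(\post_a^{\cA}\) is monotone with respect to \(\subseteq\) on \(\wp(\wp(Q))\) — this last point needs a small verification from the definition of \(\post_a^{\cA}\) via \(\eval{\Delta(\cdot,a)}\): if \(X \subseteq Y\) then \(\{S' \mid \exists S \in X, S \goes{a} S'\} \subseteq \{S' \mid \exists S \in Y, S \goes{a} S'\}\), which is immediate. Then the monotonicity argument is the exact analogue of the one in Lemma~\ref{lemma:LAconsistent}: \(u \qo_{\cA} v\) gives \(\post_u^{\cA}(I_{\{\}}) \subseteq \post_v^{\cA}(I_{\{\}})\), apply the monotone \(\post_a^{\cA}\) to both sides, use the composition identity, and conclude \(ua \qo_{\cA} va\); by Equation~\eqref{def-leftmon} this is equivalent to right monotonicity.

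For the precision condition, I would pick \(u \in L\) and \(v \notin L\) and show \(u \not\qo_{\cA} v\). Since \(u \in \lang{\cA}\), there is an initial state \(q_i \in I\) and a nonempty \(S \subseteq F\) with \(\{q_i\} \goes{u} S\), so \(S \in \post_u^{\cA}(I_{\{\}})\) with \(S \subseteq F\), \(S \neq \varnothing\). On the other hand, since \(v \notin \lang{\cA}\), no set in \(\post_v^{\cA}(I_{\{\}})\) is a nonempty subset of \(F\); in particular \(S \notin \post_v^{\cA}(I_{\{\}})\), so \(\post_u^{\cA}(I_{\{\}}) \not\subseteq \post_v^{\cA}(I_{\{\}})\), i.e.\ \(u \not\qo_{\cA} v\). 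This establishes \(\mathord{\qo_{\cA}} \cap (L \times L^c) = \varnothing\), which is condition~\ref{eq:LConsistentPrecise} of Definition~\ref{def:LConsistent}, and combined with right monotonicity gives that \(\qo_{\cA}\) is a right \(L\)-consistent wqo.

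The main obstacle I anticipate is a subtle point in the precision argument: one must be careful that \(\post_w^{\cA}(I_{\{\}})\) genuinely captures acceptance, i.e.\ that \(w \in \lang{\cA}\) holds \emph{if and only if} some element of \(\post_w^{\cA}(I_{\{\}})\) is a nonempty subset of \(F\). The ``only if'' direction is just unfolding the definition of \(\lang{\cA}\) and of \(\goes{}\), but the ``if'' direction requires knowing that if \(\{q_i\} \goes{w} S\) with \(\varnothing \neq S \subseteq F\) then \(w\) is accepted — this should be exactly the definition of acceptance given after Equation~\eqref{eq:Delta}, so no real difficulty, but it is worth stating explicitly so that the membership query \(u \in L\) used when instantiating Theorem~\ref{theorem:quasiorderAlgorithm} is clearly decidable via the same \(\post\) computation. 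A secondary, purely bookkeeping concern is to make sure the empty-set bookkeeping in Equation~\eqref{eq:Delta} (a conjunct collapsing to \(\False\) yields \(\eval{\cdot} = \varnothing\)) does not interfere with monotonicity of \(\post_a^{\cA}\); it does not, since a larger input set \(X\) only adds more reachable \(S'\), and sets that reach \(\varnothing\) simply contribute nothing on both sides.
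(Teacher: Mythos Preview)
Your proposal is correct and follows essentially the same approach as the paper's proof: the paper also establishes right monotonicity via the composition identity \(\post^{\cA}_{ua} = \post^{\cA}_{a}\comp\post^{\cA}_{u}\) together with monotonicity of \(\post^{\cA}_a\), derives \(L\)-consistency from the characterization \(w\in L \Lra \exists S\in\post^{\cA}_w(I_{\{\}}),\ S\neq\varnothing\wedge S\subseteq F\), and concludes the wqo property from finiteness of \(\wp(\wp(Q))\). The only cosmetic difference is that the paper phrases the precision step as ``\(u\in L\) and \(u\qo_{\cA} v\) imply \(v\in L\)'' rather than the contrapositive form you use, and it presents the three ingredients in a different order.
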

\begin{proof}
First, we show that \(\qo_{\cA}\) is right monotone.
Let \(u,v \in \Sigma^*\) and \(a \in \Sigma\).
Recall that $\post^\cA_a$ is a monotonic function and that 
\begin{equation}
\post^{\cA}_{uv} = \post^{\cA}_{v} \comp \post^{\cA}_u \enspace .\label{eq:postpost}
\end{equation}
Then
\begin{myAlign}{0pt}{0pt}
u \qo_{\cA} v & \Ra  \quad\text{[By definition of \(\qo_{\cA}\)]} \\
\post^{\cA}_{u}(I_{\{\}}) \subseteq \post^{\cA}_{v}(I_{\{\}}) & \Ra  \quad\text{[Since $\post^\cA_a$ is monotonic]} \\
\post^{\cA}_{a}(\post^{\cA}_{u}(I_{\{\}})) \subseteq \post^{\cA}_{a}(\post^{\cA}_{v}(I_{\{\}})) & \Lra  \quad\text{[By Equation~\eqref{eq:postpost}]} \\
\post^{\cA}_{ua}(I_{\{\}}) \subseteq \post^{\cA}_{va}(I_{\{\}}) & \Lra  \quad\text{[By definition of \(\qo_{\cA}\)]} \\
ua \qo_{\cA} va \enspace . 
\end{myAlign}

On the other hand, \(\qo_{\cA}\) is \(L\)-consistent since, by definition
\[\forall w \in Σ^*, \; w \in L \Lra \exists S \in \post_w^{\cA}(I_{\{\}}), \; S \neq \varnothing \land S \subseteq F\enspace .\]
Therefore, if \(u \in L\) and \(u \qo_{\cA} v\) then it follows that \(v \in L\).

Finally, it is straightforward to check that \(\qo_{\cA}\) is a well-quasiorder since \(\wp(\wp(Q))\) is finite.
\end{proof}

Since membership in AFAs is decidable, it follows from Lemma~\ref{lemma:leqStateLcWQP} and Theorem~\ref{theorem:quasiorderAlgorithmR} that Algorithm \AlgRegularWr instantiated with the wqo \(\qo_{\cA}\) decides the inclusion \(Σ^* \!\subseteq\! \lang{\cA}\), where \(\cA\) is an AFA.

Following the developments of Chapter~\ref{chap:LangInc}, given an AFA \(\cA = \tuple{Q,Σ,δ,I,F}\), we could define a Galois Connection \(\tuple{\wp(\Sigma^*),\subseteq}\galois{\alpha}{\gamma}\tuple{\AC_{\tuple{\wp(\wp(Q)),\subseteq}},\sqsubseteq}\) that yields an antichains algorithm for deciding the universality of AFAs by manipulating sets of sets of states.
By doing so, we would obtain an algorithm that computes the set \(Y = \minor{\{\post_w^{\cA}(I_{\{\}}) \mid w \in Σ^*\}}\) and checks whether all elements \(y \in Y\) satisfy \(\exists s \in y, \; s \neq \varnothing \land s \subseteq F\).

Moreover, we could enhance the state-based quasiorder for AFAs by using simulations between the states of \(\cA\) which, recall, are the union of the states of the input automata \(\cN_1\) and \(\cN_2\).
This would allow us to use the simulations that relate states of both automata, similarly to \citet{Abdulla2010} and \citet{DBLP:conf/popl/BonchiP13}.

Therefore, we believe that the full development of an antichains algorithm for AFAs is an interesting line for future work since it will allow us to understand how close our framework can get to the results of \citet{Abdulla2010} and \citet{DBLP:conf/popl/BonchiP13}.

\section{The Complexity of Searching on Compressed Text}
We believe the good results obtained during the evaluation of \tool{zearch} (see Figure~\ref{fig:comparison}) invite for a deeper study of our algorithm in order to better understand its behavior and improve its performance.

For instance, it is yet to be considered how the performance of \tool{zearch} is affected by the choice of the grammar-based compression algorithm.
By using different heuristics to build the grammar, the resulting SLP will have different properties, such as depth, width or length of the rules, which would definitely affect \tool{zearch}'s performance.
Figure~\ref{fig:grammars} shows the grammars built by different compression algorithms for the same string.

\begin{figure}[!ht]
\begin{minipage}[t]{0.33\textwidth}
	\centering
\caption*{\tool{Sequitur}}
\vspace{-10pt}
\begin{tikzpicture}[every node/.style={align=center}]
  \tikzset{
  	level distance=20pt,
    edge from parent/.style={
      draw,edge from parent
      path={(\tikzparentnode.south)+(0,3pt)-- +(0,-1pt)-| (\tikzchildnode)}
    },
    sibling distance=-5pt,
    frontier/.style={distance from root=60pt} %
  }

   \Tree[.$S$ x [.$S_2$ a [.$S_1$ b c ] d [.$S_1$ b c ] ] [.$S_2$ a [.$S_1$ b c ] d [.$S_1$ b c ] ] y ]
\end{tikzpicture}
\end{minipage}%
\begin{minipage}[t]{0.33\textwidth}
	\centering
\caption*{\tool{Re-Pair}}
\vspace{-10pt}
\begin{tikzpicture}[every node/.style={align=center}]
  \tikzset{
  	level distance=20pt,
    edge from parent/.style={
      draw,edge from parent
      path={(\tikzparentnode.south)+(0,3pt)-- +(0,-1pt)-| (\tikzchildnode)}
    },
    sibling distance=-5pt,
    frontier/.style={distance from root=100pt} %
  }

   \Tree[.$S$ x [.$S_4$ [.$S_3$ [.$S_2$ a [.$S_1$ b c ] ] d ] [.$S_1$ b c ] ] [.$S_4$ [.$S_3$ [.$S_2$ a [.$S_1$ b c ] ] d ] [.$S_1$ b c ] ] y ]
\end{tikzpicture}
\end{minipage}%
\begin{minipage}[t]{0.30\textwidth}
\centering
\caption*{\tool{LZW}}
\vspace{-10pt}
\begin{tikzpicture}[every node/.style={align=center}]
 \tikzset{
  	level distance=20pt,
    edge from parent/.style={
      draw,edge from parent
      path={(\tikzparentnode.south)+(0,3pt)-- +(0,-1pt)-| (\tikzchildnode)}
    },
    sibling distance=-5pt,
    frontier/.style={distance from root=40pt} %
  }
   \Tree[.$S$ x a b c d [.$S_1$ b c ] [.$S_2$ a b ] [.$S_3$ c d ] [.$S_1$ b c ] y ]
\end{tikzpicture}
\end{minipage}
\caption{From left to right, grammars built by the compression algorithms \tool{sequitur}~\cite{nevill1997compression}, \tool{repair}~\cite{larsson2000off} and \tool{LZW}~\cite{welch1984technique} for “{\rm xabcdbcabcdbcy}”.}
\label{fig:grammars}
\end{figure}

In particular, there are grammar-based compression algorithms such as \tool{Sequitur}~\cite{nevill1997compression} that produce SLPs which are not in CNF, i.e. in which rules might have more than two symbols on the right hand side.
Processing such a grammar, instead of the one built by \tool{repair} reduces the number of rules to be processed at the expense of a greater cost for processing each rule.
It is worth considering whether adapting \tool{zearch} to work on such SLPs will have a positive impact on its performance.

On the other hand, Algorithm \AlgCountLines allows for a conceptually simple parallelization since any set of rules such that no variable appearing on the left hand side of a rule appears on the right hand side of another, can be processed simultaneously.
Indeed, a theoretical result by \citet{ullman1988parallel} on the parallelization of Datalog queries can be used to show that counting the number of lines in a grammar-compressed text containing a match for a regular expression is in $\mathcal{NC}^2$, i.e. it is solvable in \emph{polylogarithmic time} on \emph{parallel} computer with a polynomial number of processors, when the automaton built from the expression is acyclic.
Therefore, we are optimistic about the possibilities of a parallel version of \tool{zearch}.

Finally, \emph{patterns} are a commonly used subclass of regular expressions for which specific searching algorithms have been developed~\cite{kida1998multipattern,navarro2005lzgrep,gawrychowski2013optimal,gawrychowski2014simple}.
Since the standard automata construction from regular expressions yields a DFA when the expression is a pattern, our algorithm allows us to search for patterns in \(\mathcal{O}(t\cdot s)\) time, where \(t\) is the size of the grammar and \(s\) is the length of the pattern.
However, as shown by \citet{gawrychowski2013optimal}, it is possible to decide the existence of a pattern in an \tool{LZW}-compressed text in \(\mathcal{O}(t + s)\) time.
It is yet to be considered whether the algorithm of \citet{gawrychowski2013optimal} can be adapted to the more general scenario of searching on grammar-compressed text and whether it can be extended to report the number of matching lines without altering its complexity as we did with Algorithm \AlgCountLines.

\section{The Performance of Residualization}

In Chapter~\ref{chap:RFA} we presented the automata construction \(\cG{r}(\cN)\) as an alternative to \(\cN^{\text{res}}\), the residualization operation defined by \citet{denis2002residual}.
We have shown in Theorem~\ref{theoremF}~\ref{lemma:rightNRes} that given an NFA \(\cN\), the automaton \(\cG{r}(\cN)\) is a sub-automaton of \(\cN^{\text{res}}\), meaning that our construction yields smaller automata.

On the other than, it is clear that, given an NFA \(\cN=\tuple{Q,Σ,δ,I,F}\), finding the coverable sets in \(\{\post_u^{\cN}(I) \mid u \in Σ^*\}\) is easier than finding the \(L\)-composite principals in \(\{ρ_{\qrN}(u) \mid u \in Σ^*\}\).
However, it is yet to be considered the performance of both algorithms and the actual difference in size between the RFAs \(\cG{r}(\cN)\) and \(\cN^{\text{res}}\).

\subsection{Reducing RFAs with Simulations}
Let \(\cN\) be an NFA with \(L = \lang{\cN}\).
As shown by Lemma~\ref{lemma:simulationLConsistent}, the simulation-based quasiorder \(\preceq^r_{\cN}\) is an \(L\)-consistent right well-quasiorder.
Therefore, it follows from Lemma~\ref{lemma: HrGeneratesL} that \(\cH^r(\preceq^r_{\cN},L)\) is an RFA generating the language \(L\).
Moreover, as shown in Section~\ref{sec:simulation_basedQO}, we have the following relation between the state-based, the simulation-based and the Nerode's right quasiorders:
\[
\mathord{\qr_{\cN}} \,\subseteq\, \mathord{\preceq^r_{\cN}} \,\subseteq\, \mathord{\qr_{\lang{\cN}}}\enspace .
\]
Therefore, by Theorem~\ref{theorem:numLPrimePrincipals}, we have that
\[\begin{array}{c}
 \len{\{ρ_{\qrN}(u) \mid u \in Σ^* \text{ and } ρ_{\qrN}(u) \text{ is \(L\)-prime}\}} \\
\rotatebox[origin=c]{270}{\(\geq\)} \\
 \len{\{ρ_{\preceq^r_{\cN}}(u) \mid u \in Σ^* \text{ and } ρ_{\preceq^r_{\cN}}\text{ is \(L\)-prime}\}}  \\
\rotatebox[origin=c]{270}{\(\geq\)} \\
\len{\{ρ_{\qrL}(u) \mid u \in Σ^* \text{ and } ρ_{\qrL}(u)\text{ is \(L\)-prime}\}} \enspace .
\end{array}\]

One promising direction for future work is to fully develop this idea of using simulation-based quasiorders to build even smaller RFAs.
Such technique should be implemented and evaluated in practice in comparison with the residualization operations \(\cG{r}(\cN)\) and \(\cN^{\text{res}}\). 
\clearpage{}%
\clearpage{}%
%

\chapter{Conclusions}
\label{chap:conclusions}

In this thesis, we have shown that well-quasiorders are the right tool for addressing different problems from \emph{Formal Language Theory}.
Indeed, we presented two quasiorder-based frameworks in Chapters~\ref{chap:LangInc} and~\ref{chap:RFA} that allowed us to offer a new perspective on \emph{The Language Inclusion Problem} and \emph{Residual Automata}, respectively.
In both cases, our frameworks allowed us to\begin{myEnumIL}
\item offer a \emph{new perspective on known algorithms} that facilitates their understanding and evidences the relationships between them and
\item \emph{systematically derive new algorithms}, some of which proved to be of practical interest due to their performance. 
\end{myEnumIL}

\paragraph{The Language Inclusion Problem}
We have been able to systematically derive well-known algorithms such as the antichains algorithms for regular languages of \citet{DBLP:conf/cav/WulfDHR06}, with its multiple variants (see Section~\ref{sec:novel_perspective_AC}), and the antichains algorithm for grammars of \citet{Holk2015}.
These systematic derivations result in a simpler presentation of the antichains algorithm for grammars of \citet{Holk2015} as a straightforward extension of the antichains algorithm for regular languages.
Indeed, we have shown that the antichains algorithm for regular languages and for grammars are conceptually identical and correspond to two instantiations of our framework with different quasiorders.
Recall that, previously, the use of antichains for grammars was justified through a reduction to data flow analysis.

Our framework has also allowed us to derive algorithms for deciding the inclusion of a regular language in the trace set of a one-counter net.
In doing so, we have shown that the right Nerode quasiorder for the trace set of a one-counter net is an undecidable well-quasiorder, thereby closing a conjecture made by \citet[Section 6]{deLuca1994}.

Finally, our quasiorder-based framework also allowed us to derive novel algorithms, such as gfp-based Algorithm \AlgRegularGfp, for deciding the inclusion between regular languages.
It is yet to be considered the performance of this algorithm in order to decide whether it is of practical interest.

\paragraph*{Searching on Compressed Text}
We then adapted the antichains algorithm for grammars to the problem of \emph{searching with regular expressions in grammar compressed text}.
As a result, we have presented the first algorithm for \emph{counting} the number of lines in a grammar-compressed text containing a match for a regular expression.
It is worth to remark that our algorithm applies to any grammar-based compression scheme while being nearly optimal.

Together with the presentation of our algorithm, we described in Chapter~\ref{chap:zearch} the data structures required to achieve nearly optimal complexity for searching in compressed text and used them to implement a \emph{sequential} tool --\tool{zearch}-- that significantly outperforms the \emph{parallel} state of the art to solve this problem.
Indeed, when the grammar-based compressor achieves high compression ratio, which is the case, for instance, for automatically generated \emph{Log} files, \tool{zearch} uses up to $25\pct$ less time than \tool{lz4|hyperscan}, even outperforming \tool{grep} and being competitive with \tool{hyperscan}.

Our results evidence that compression of textual data and regular expression searching, two problems considered independent in practice, are connected.
Intuitively, the search can take advantage of the information about repetitions in the text, highlighted by the compressor, to skip parts of the uncompressed text.

\paragraph{Residual Automata}
\citet{denis2002residual} introduced the notion of RFA and canonical RFA for a regular language and devised a procedure, similar to the subset construction for DFAs, to build the RFA \(\cN^{\text{res}}\) from a given automaton \(\cN\).
Furthermore, they showed that the \emph{double-reversal method} holds for RFAs with their \emph{residualization} operation, i.e. \(\cN^{\text{res}}\) is isomorphic to the canonical RFA \(\cC\) for \(\lang{\cN}\) for every co-RFA \(\cN\).

Later, \citet{tamm2015generalization} proved the following result:
\begin{lemmaC}[\citet{tamm2015generalization}]\label{lemma:tammGen}
Let \(\cN\) be an NFA and let \(\cC\) be the \emph{canonical RFA} for \(\lang{\cN}\).
Then, \(\cN^{\text{res}}\) is\linebreak isomorphic to \(\cC\) if{}f the \emph{left language} of every state of \(\cN\) is a \emph{union of left languages} of states of \(\cC\).
\end{lemmaC}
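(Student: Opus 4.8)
The statement to prove is Lemma~\ref{lemma:tammGen} (Tamm's generalization), which characterizes when the residualization operation \(\cN^{\text{res}}\) of \citet{denis2002residual} yields the canonical RFA. My strategy is to route this through the quasiorder-based machinery already developed, rather than reproving it from scratch in automata-theoretic terms. The key observation is that \citet{tamm2015generalization}'s condition --- ``the left language of every state of \(\cN\) is a union of left languages of states of \(\cC\)'' --- must be reconciled with our own characterization in Theorem~\ref{theorem:canonicalreverserestic}, which says that \emph{our} construction \(\cG{r}(\cN)\) is the canonical RFA iff \(\forall q,\; ρ_{\qrL}(W_{I,q}^{\cN}) = W_{I,q}^{\cN}\), i.e. the left languages of \(\cN\) are unions of \emph{co-rests} (non-empty intersections of right quotients, equivalently arbitrary principals of \(\qrL\)). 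Since \(\cG{r}(\cN)\) is a sub-automaton of \(\cN^{\text{res}}\) (Theorem~\ref{theoremF}~\ref{lemma:rightNRes}), these two residualizations can differ, and so the two characterizing conditions will differ --- Tamm's being the stronger one. The plan is to prove the lemma directly in the same style, using the correspondence between states of \(\cN^{\text{res}}\) and coverable/non-coverable sets \(\post_u^{\cN}(I)\), and between states of \(\cC\) and prime quotients.

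\textbf{Step 1: unfold the definitions of \(\cN^{\text{res}}\) and \(\cC\) in terms of reachable sets and quotients.} Recall from Chapter~\ref{chap:prel} that the states of \(\cN^{\text{res}}\) are the non-coverable sets \(\post_u^{\cN}(I)\), and the left language of such a state is \(\{v \mid \post_u^{\cN}(I) \subseteq \post_v^{\cN}(I)\} = ρ_{\qrN}(u)\) by Lemma~\ref{lemma:positive_atoms} (left-language version of the argument). Similarly, the states of \(\cC\) are the prime quotients \(u^{-1}L\), and the left language of the state \(u^{-1}L\) is \(ρ_{\qrL}(u)\), since \(\cC\) is strongly consistent (this is used implicitly in the proof of Theorem~\ref{theoremF}~\ref{theorem:CanonicalRFAlanguage}). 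So Tamm's condition ``\(W_{I,q}^{\cN} = \bigcup_{q \in \wt{Q}} W_{\wt{I},q}^{\cC}\) for every \(q\)'' translates to: every left language of a state of \(\cN\) is a union of \emph{prime} principals \(ρ_{\qrL}(u)\).

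\textbf{Step 2: prove the ``if'' direction.} Assume the left languages of \(\cN\) are unions of prime principals of \(\qrL\). This is a special case of being a union of arbitrary principals of \(\qrL\) (prime ones are in particular principals), i.e. of the condition \(ρ_{\qrL}(W_{I,q}^{\cN}) = W_{I,q}^{\cN}\). By Theorem~\ref{theorem:canonicalreverserestic}, \(\cG{r}(\cN) = \cF{r}(L) = \cC\). But I now need to upgrade this to \(\cN^{\text{res}} = \cC\), not just \(\cG{r}(\cN) = \cC\). The extra input is that under Tamm's stronger hypothesis, no principal \(ρ_{\qrN}(u)\) arising as \(\post_u^{\cN}(I)\) non-coverable can be composite: if the left languages are unions of \emph{prime} principals only, then every \(\post_u^{\cN}(I)\) that is non-coverable corresponds (via the intersection formula \(ρ_{\qrN}(u) = \bigcap_{q \in \post_u^{\cN}(I)} ρ_{\qrL}(W_{I,q}^{\cN})\), cf. Equation~\eqref{eq:rhoNIntersectOfrhoL} in the proof of Theorem~\ref{theorem:canonicalreverserestic}) to an intersection of prime principals which, by the hypothesis and a covering/minimality argument, must itself already be a single prime principal --- so the extra states of \(\cN^{\text{res}}\) over \(\cG{r}(\cN)\) (the composite principals) simply do not appear. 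Hence \(\cN^{\text{res}}\) and \(\cG{r}(\cN)\) coincide and both equal \(\cC\).

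\textbf{Step 3: prove the ``only if'' direction, which I expect to be the main obstacle.} Assume \(\cN^{\text{res}}\) is isomorphic to \(\cC\). I must show each \(W_{I,q}^{\cN}\) is a union of left languages of \(\cC\), i.e. of prime principals \(ρ_{\qrL}(u)\). The cleanest route: Lemma~\ref{lemma:WeRaTamm} already gives one implication (Tamm's condition \(\Rightarrow\) \(ρ_{\qrL}\)-closedness), and its proof shows that the canonical RFA is strongly consistent with \(\mathord{\qr_{\cC}} = \mathord{\qrL}\); so I want the converse packaging. From \(\cN^{\text{res}} \cong \cC\) I get that \emph{every} non-coverable \(\post_u^{\cN}(I)\) gives a state of \(\cC\), hence its left language \(ρ_{\qrN}(u)\) equals some prime principal \(ρ_{\qrL}(v)\); and that every state of \(\cN\) embeds (via the subset-construction step inside residualization) into \(\cN^{\text{res}}\) so that \(W_{I,q}^{\cN}\) is a union of left languages of \(\cN^{\text{res}}\)-states, which are now all prime principals of \(\qrL\). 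The delicate point is verifying that the residualization construction really does express \(W_{I,q}^{\cN}\) as a union over the surviving (non-coverable) reachable sets rather than over all of them --- i.e. that discarding coverable sets and redirecting transitions does not lose coverage of the left language of the original state \(q\). This requires carefully tracking the transition redirection in \citet{denis2001residual}'s construction (which is ``determinization followed by removal of coverable states and addition of new transitions'') and checking that for each original state \(q\), the set \(\{u \mid q \in \post_u^{\cN}(I)\}\) decomposes as a union of \(\{u \mid \post_w^{\cN}(I) \subseteq \post_u^{\cN}(I)\}\) ranging over non-coverable \(\post_w^{\cN}(I)\) contained in \(\post_q^{\text{something}}\) --- a routine but bookkeeping-heavy lemma. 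Once this is in place, both directions close, and I would conclude by remarking (as the excerpt already anticipates in Section~\ref{sec:TammGen}) that Tamm's condition is strictly stronger than the condition of Theorem~\ref{theorem:canonicalreverserestic}, with the automaton \(\cN\) of Figure~\ref{fig:Residuals} serving as the separating example.
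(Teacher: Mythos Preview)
The paper does not prove this lemma. It is stated in the Conclusions chapter (and alluded to in Section~\ref{sec:TammGen}) as a result of \citet{tamm2015generalization}, cited for context and comparison with the thesis's own Theorem~\ref{theorem:canonicalreverserestic}; no proof is given or attempted in the thesis. So there is no ``paper's own proof'' to compare your proposal against.

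That said, your proposal as it stands is a plan, not a proof, and both directions have real gaps. In Step~2, the crucial claim is that under Tamm's hypothesis every non-coverable set \(\post_u^{\cN}(I)\) yields an \(L\)-prime principal \(ρ_{\qrN}(u)\), so that \(\cN^{\text{res}}\) and \(\cG{r}(\cN)\) have the same states. You wave at ``a covering/minimality argument'' but do not give one; the inference from ``\(ρ_{\qrN}(u)\) is an intersection of unions of prime \(\qrL\)-principals'' to ``\(ρ_{\qrN}(u)\) is a single prime principal'' is not automatic, and Example~\ref{example:residualization} shows precisely that non-coverable need not imply \(L\)-prime in general --- you must actually use Tamm's hypothesis here, and the argument is not supplied. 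In Step~3 you correctly identify the delicate point (that \(W_{I,q}^{\cN}\) decomposes as a union of left languages of the \emph{surviving} non-coverable states after the transition redirection in \(\cN^{\text{res}}\)), but you explicitly defer it as ``routine but bookkeeping-heavy'' without doing the bookkeeping. Neither direction is closed.

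If you want a self-contained argument within the thesis framework, the cleanest route is probably not to go through \(\cG{r}\) at all, but to work directly with the isomorphism \(\cN^{\text{res}}\cong\cC\): it identifies non-coverable \(\post_u^{\cN}(I)\) with prime quotients, and then for each original state \(q\) one shows \(W_{I,q}^{\cN}=\bigcup\{ρ_{\qrN}(u)\mid q\in\post_u^{\cN}(I),\ \post_u^{\cN}(I)\text{ non-coverable}\}\) using the coverability decomposition. That is essentially what Tamm does, and it avoids the detour through \(\cG{r}\) that creates the extra obligation in your Step~2.
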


The result of \citet{tamm2015generalization} generalizes the double-reversal method for RFAs along the lines of the generalization by \citet{Brzozowski2014} of the double-reversal method for DFAs which we estate next.
\begin{lemmaC}[\citet{Brzozowski2014}]
Let \(\cN\) be an NFA and let \(\cM\) be the \emph{minimal DFA} for \(\lang{\cN}\).
Then \(\cN^D\) is isomorphic to \(\cM\) if{}f the \emph{left language} of each state of \(\cN\) is a \emph{union of co-atoms} of \(\lang{\cN}\).
\end{lemmaC}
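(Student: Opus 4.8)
The plan is to reduce the claim to the congruence-based characterisation of minimality recalled in Section~\ref{sec:Novel}: by \citet[Theorem~16]{ganty2019congruence}, for an NFA $\cN = \tuple{Q,\Sigma,\delta,I,F}$ with $L = \lang{\cN}$, the subset construction $\cN^D$ is the minimal DFA for $L$ if{}f $ρ_{\rrL}(W^{\cN}_{I,q}) = W^{\cN}_{I,q}$ for every $q \in Q$, where $\mathord{\rrL} \ud \mathord{\qrL} \cap \mathord{(\qrL)^{-1}}$ is the right Nerode congruence. First I would record the trivial observation that $\cN^D$ has no unreachable states and accepts $L$, so, since the minimal DFA is unique up to isomorphism, $\cN^D$ is isomorphic to $\cM$ exactly when it is the minimal DFA for $L$. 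It therefore suffices to show that the condition ``$ρ_{\rrL}(W^{\cN}_{I,q}) = W^{\cN}_{I,q}$ for all $q$'' is equivalent to ``every left language $W^{\cN}_{I,q}$ of $\cN$ is a union of co-atoms of $L$''.

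That equivalence rests on two elementary facts. The first is a general property of the closure induced by an equivalence relation: if $\sim$ is an equivalence on $\Sigma^*$ then $ρ_{\sim}(S)$ is the union of the $\sim$-classes that meet $S$, so $ρ_{\sim}(S) = S$ holds precisely when $S$ is a union of $\sim$-classes; instantiating $\sim$ with $\rrL$ turns ``$ρ_{\rrL}(W^{\cN}_{I,q}) = W^{\cN}_{I,q}$'' into ``$W^{\cN}_{I,q}$ is a union of principals (equivalently, classes) of $\rrL$''. The second fact, also established by \citet{ganty2019congruence}, is that the principals of $\rrL$ coincide with the co-atoms of $L$, i.e.\ with the non-empty languages obtained as intersections of right quotients of $L$ and of complements of right quotients of $L$. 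Combining these two facts with Theorem~16 of \citet{ganty2019congruence} yields exactly the stated biconditional; note that this argument is the congruence specialisation of the residual-automata result Theorem~\ref{theorem:canonicalreverserestic}, obtained by replacing the quasiorder $\qrL$ (and its closed sets, the co-rests) by the congruence $\rrL$ (and its closed sets, the co-atoms).

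The step I expect to require the most care is the alignment of the Brzozowski--Tamm notion of co-atom with the principals of $\rrL$. Concretely, one has to check that, in the finite Boolean algebra generated by the finitely many right quotients $\{Lx^{-1} \mid x \in \Sigma^*\}$ of the regular language $L$, the atom through a word $u$ is $\bigcap_{x : u \in Lx^{-1}} Lx^{-1} \cap \bigcap_{x : u \notin Lx^{-1}} (Lx^{-1})^c$, and that two words lie in the same atom if{}f they belong to exactly the same right quotients, which unwinds to $u^{-1}L = v^{-1}L$, i.e.\ $u \rrL v$ --- so one must not confuse the left/right duality ($\rrL$ versus $\rlL$, right quotients versus left quotients) at any point. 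This bookkeeping is routine Boolean-algebra manipulation and is moreover already carried out in \citet{ganty2019congruence}, so in the write-up it can simply be cited rather than reproved.
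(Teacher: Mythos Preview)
The paper does not give its own proof of this lemma: it is stated in the Conclusions chapter as a citation of \citet{Brzozowski2014}, and the surrounding discussion merely sketches the link to the congruence-based formulation of \citet{ganty2019congruence}. Your proposal is correct and is precisely the argument the paper alludes to when it writes that ``\citet{ganty2019congruence} show that the principals of \(\rrL\) coincide with the so-called co-atoms\ldots This allowed them to connect their result with the generalization of the double-reversal method for DFAs of \citet{Brzozowski2014}''; you have simply filled in the two routine steps (closure-fixedness equals union-of-classes, and \(\rrL\)-classes equal co-atoms) that the paper leaves implicit.
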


Although the two generalizations have a common foundation, the connection between the two resulting characterizations is not immediate.
Our work, together with the work of \citet{ganty2019congruence} allows us to clarify the relation between these two results and our Theorem~\ref{theorem:canonicalreverserestic}.
Indeed, \citet{ganty2019congruence} offered a congruence-based perspective of the generalized double-reversal me-thod for building the minimal DFA which lead to the following result.

\begin{lemmaC}[\citet{ganty2019congruence}]\label{lemma:congruenceDFA}
Let \(\cN\) be an NFA and let \(\cM\) be the minimal DFA for \(\lang{\cN}\).
Then \(\cN^D\) is isomorphic to \(\cM\) if{}f 
\[ρ_{\rrL}(W_{I,q}^{\cN}) = W_{I,q}^{\cN}\enspace ,\]
where \(\mathord{\rrL} \ud \mathord{\qrL} \cap \mathord{(\qrL)^{-1}}\) is the right Nerode's congruence.
\end{lemmaC}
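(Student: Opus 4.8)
The plan is to mirror the proof of Theorem~\ref{theorem:canonicalreverserestic}, replacing the right Nerode quasiorder \(\qrL\) by the right Nerode congruence \(\rrL\), the right automata-based quasiorder \(\qrN\) by its induced congruence \(\rrN \ud \mathord{\qrN} \cap \mathord{(\qrN)}^{-1}\), and the residualization operation \(\cG{r}\) by the subset construction \(\cN^D\). Recall that, with \(L = \lang{\cN}\), we have \(u \mathrel{\rrN} v\) iff \(\post_u^{\cN}(I) = \post_v^{\cN}(I)\) and \(u \mathrel{\rrL} v\) iff \(u^{-1}L = v^{-1}L\). The first step I would carry out is to establish the congruence analogue of Lemma~\ref{lemma:qrlEqualqrNResEqualCan}: \(\cN^D\) is isomorphic to \(\cM\) if and only if \(\mathord{\rrN} = \mathord{\rrL}\). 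This is the classical characterization of when the subset construction already yields the minimal DFA; I would prove it by observing that \(\post_u^{\cN}(I) \mapsto W_{\post_u^{\cN}(I),F}^{\cN} = u^{-1}L\) is a surjective homomorphism from \(\cN^D\) onto \(\cM\), which is an isomorphism exactly when it is injective, i.e.\ when \(u^{-1}L = v^{-1}L\) forces \(\post_u^{\cN}(I) = \post_v^{\cN}(I)\), that is, when \(\mathord{\rrL} \subseteq \mathord{\rrN}\); since \(\mathord{\qrN} \subseteq \mathord{\qrL}\) always holds \citep{deLuca1994}, so does \(\mathord{\rrN} \subseteq \mathord{\rrL}\), and the two inclusions together give the claim.

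Given this, I would prove necessity exactly as in Theorem~\ref{theorem:canonicalreverserestic}: if \(\cN^D\) is isomorphic to \(\cM\) then \(\mathord{\rrN} = \mathord{\rrL}\), and for each state \(q\) of \(\cN\),
\begin{align*}
\rho_{\rrL}(W_{I,q}^{\cN}) &= \{w \mid \exists u \in W_{I,q}^{\cN},\ u \mathrel{\rrL} w\} = \{w \mid \exists u \in W_{I,q}^{\cN},\ \post_u^{\cN}(I) = \post_w^{\cN}(I)\} \\
&\subseteq \{w \mid q \in \post_w^{\cN}(I)\} = W_{I,q}^{\cN},
\end{align*}
using that \(u \in W_{I,q}^{\cN}\) gives \(q \in \post_u^{\cN}(I) = \post_w^{\cN}(I)\); extensivity of the closure \(\rho_{\rrL}\) then yields equality. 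For sufficiency I would use the observation that \(\rho_{\rrL}(W_{I,q}^{\cN}) = W_{I,q}^{\cN}\) for all \(q\) means precisely that each left language \(W_{I,q}^{\cN}\) is a union of \(\rrL\)-classes; hence if \(u \mathrel{\rrL} v\) then \(u\) and \(v\) lie in exactly the same sets \(W_{I,q}^{\cN}\), so \(\post_u^{\cN}(I) = \{q \mid u \in W_{I,q}^{\cN}\} = \{q \mid v \in W_{I,q}^{\cN}\} = \post_v^{\cN}(I)\), i.e.\ \(u \mathrel{\rrN} v\). Thus \(\mathord{\rrL} \subseteq \mathord{\rrN}\), and with the always-valid \(\mathord{\rrN} \subseteq \mathord{\rrL}\) we obtain \(\mathord{\rrN} = \mathord{\rrL}\), whence \(\cN^D\) is isomorphic to \(\cM\) by the first step.

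The hard part will be the first step: the excerpt states Lemma~\ref{lemma:qrlEqualqrNResEqualCan} and Lemma~\ref{lemma:coResidual_qrL=qrN} only in the residual/quasiorder setting, so the DFA counterpart has to be argued carefully, in particular keeping track of the empty state — an \(\rrL\)-class collapses all words \(w\) with \(w^{-1}L = \varnothing\), whereas the corresponding subsets \(\post_w^{\cN}(I)\) need not coincide, and this is exactly the non-minimality the condition rules out. Everything after that is a routine transcription, into the symmetric language of congruences, of the quasiorder argument used for Theorem~\ref{theorem:canonicalreverserestic}. As a closing remark I would note that, since \citet{ganty2019congruence} identify the classes of \(\rrL\) with the co-atoms of \(L\), the condition \(\rho_{\rrL}(W_{I,q}^{\cN}) = W_{I,q}^{\cN}\) says that every left language of \(\cN\) is a union of co-atoms, recovering the generalized double-reversal method of \citet{Brzozowski2014} for DFAs and exhibiting it as the exact DFA mirror of our Theorem~\ref{theorem:canonicalreverserestic}.
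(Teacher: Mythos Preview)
The paper does not prove this statement: Lemma~\ref{lemma:congruenceDFA} is quoted in the Conclusions chapter as a result of \citet{ganty2019congruence}, with no proof supplied here. So there is nothing in the present paper to compare your argument against directly.

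That said, your proposal is exactly the proof the surrounding text invites: you transcribe the proof of Theorem~\ref{theorem:canonicalreverserestic} from the quasiorder/RFA setting to the congruence/DFA setting, replacing \(\qrL,\qrN,\cG{r}\) by \(\rrL,\rrN,(\cdot)^D\). The argument is correct. The step you flag as ``hard'' --- that \(\cN^D\cong\cM\) iff \(\mathord{\rrN}=\mathord{\rrL}\) --- is in fact routine under the paper's definitions: both \(\cN^D\) and \(\cM\) are taken over \emph{all} reachable subsets, respectively quotients (the empty one included), so \(\post_u^{\cN}(I)\mapsto u^{-1}L\) is a total surjective DFA morphism and injectivity is literally \(\mathord{\rrL}\subseteq\mathord{\rrN}\). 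Your remark about the empty state is a correct description of one way injectivity can fail, but no special bookkeeping is needed. The necessity and sufficiency paragraphs are then, as you say, a straightforward congruence-level rewriting of the corresponding passages in Theorem~\ref{theorem:canonicalreverserestic}, and your closing remark linking \(\rrL\)-classes to co-atoms is precisely the connection the paper draws between Lemma~\ref{lemma:congruenceDFA} and \citet{Brzozowski2014}.
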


We believe that the similarity between the generalizations of the double-reversal methods for the minimal DFA (Lemma~\ref{lemma:congruenceDFA}) and for the canonical RFA (Theorem~\ref{theorem:canonicalreverserestic}), which says that
\[\cG{r}(\cN) \text{ is isomorphic to \(\cC\)} \Lra ρ_{\qrL}(W_{I,q}^{\cN}) = W_{I,q}^{\cN}\enspace ,\]
evidences that \emph{quasiorders are for RFAs as congruences are for DFAs}.
Figure~\ref{fig:conclusions} summarizes the existing results about these double-reversal methods.

\begin{figure}[!ht]
\begin{minipage}[l]{.5\textwidth}
\small
\begin{tabu}{@{}c|c@{}}
   \textbf{{\citet{Brzozowski2014}}}& \textbf{{ Theorem~\ref{theorem:canonicalreverserestic}}}\\[3pt]  
   \begin{tabular}{c} \(\cN^{D} \equiv \cM\) \\ if{}f \\ \(\forall q, W_{I,q}^{\cN} \text{ is a union of co-atoms} \)\end{tabular} & \begin{tabular}{c}\(\cN^{D} \equiv \cM\) \\ if{}f \\ \(ρ_{\rrL}(W_{I,q}^{\cN}) = W_{I,q}^{\cN}\)\end{tabular}  \\
   \vspace{-5pt}\\
   \tabucline[1pt off 2pt]
   \\
   \vspace{-5pt}\\
   \textbf{{\citet{ganty2019congruence}}} & \textbf{{\citet{tamm2015generalization}}}\\[3pt] 
   \begin{tabular}{c}\(\cN^{\text{res}} \equiv \cC\) \\ if{}f \\ \(\forall q, W_{I,q}^{\cN} \text{ is a union of }  W_{I,q'}^{\cC}\)\end{tabular} & \begin{tabular}{c} \(\cG{r}(\cN) \equiv \cC\) \\ if{}f \\ \( ρ_{\qrL}(W_{I,q}^{\cN}) = W_{I,q}^{\cN}\)\end{tabular} \\
\end{tabu}
\end{minipage}\hfill
\begin{minipage}[r]{.4\textwidth}
In the diagram: \(\cN\) is an NFA with \(L = \lang{\cN}\); \(\cN^{D}\) is the result of determinizing \(\cN\) with the standard subset construction; \(\cM\) is the minimal DFA for \(L\); \(\cC = \cF{r}(L)\) is the canonical RFA for \(L\) and \(\cN_1 \equiv \cN_2\) denotes that automaton \(\cN_1\) is isomorphic to \(\cN_2\).
\end{minipage}
\caption{Summary of the existing results about the generalized double-reversal method for building the minimal DFA (first row) and the canonical RFA (second row) for a given language. The results on the first column are based on the notion of \emph{atoms} of a language while the results on the second column are based on \emph{quasiorders}.}%
\label{fig:conclusions}  
\end{figure}

Moreover, as shown by Lemma~\ref{lemma:qrlEqualqrNResEqualCan}, our residualization operation \(\cG{r}(\cN)\) offers a desirable property that \(\cN^{\text{res}}\) lacks: residualizing \(\cN\) yields the canonical RFA for \(\lang{\cN}\) if{}f \(\mathord{\qrL} = \mathord{\qrN}\).
Again, this property is equivalent to the one presented by \citet{ganty2019congruence} for DFAs.
\begin{lemmaC}[\citet{ganty2019congruence}]
Let \(\cN\) be an NFA and let \(\cM\) be the minimal DFA for \(\lang{\cN}\).
Then \(\cN^D\) is isomorphic to \(\cM\) if{}f \(\mathord{\rrN} = \mathord{\rrL}\), where \(\mathord{\rrN} \ud \mathord{\qrN} \cap (\qrN)^{-1}\) is the right state-based congruence.
\end{lemmaC}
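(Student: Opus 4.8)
The plan is to realize both $\cN^D$ and $\cM$ as quotients of $\Sigma^*$ and to exhibit a canonical surjective automaton morphism from $\cN^D$ onto $\cM$ whose injectivity is exactly the equality $\mathord{\rrN} = \mathord{\rrL}$. The argument mirrors the RFA case: it is the deterministic counterpart of Lemma~\ref{lemma:qrlEqualqrNResEqualCan} and follows the same template as Theorem~\ref{theoremF}~\ref{theorem:CanonicalRFAlanguage}.

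First I would record the two bookkeeping facts that make everything go through. On one hand, the states of $\cN^D$ are exactly the sets $\post_u^{\cN}(I)$ for $u \in \Sigma^*$ (all reachable, since $I = \post_\epsilon^{\cN}(I)$), and $\post_u^{\cN}(I) = \post_v^{\cN}(I) \Lra u \rrN v$, so they are in bijection with the classes of the right state-based congruence $\rrN$. On the other hand, the states of $\cM$ are the quotients $u^{-1}L$, which are in bijection with the classes of the right Nerode congruence $\rrL$. Moreover, the right language of the state $\post_u^{\cN}(I)$ in $\cN^D$ is $W_{\post_u^{\cN}(I),F}^{\cN} = u^{-1}L$ (the pointwise strengthening of $\lang{\cN^D} = \lang{\cN}$, cf.\ \citet{Ullman2003}).

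Next I would define $\varphi$ from the states of $\cN^D$ to the states of $\cM$ by $\varphi(\post_u^{\cN}(I)) \ud u^{-1}L$. By the bijections above $\varphi$ is well defined and surjective; it sends the initial state $I$ to $\epsilon^{-1}L = L$; it sends a state meeting $F$ to a quotient containing $\epsilon$ and conversely; and it commutes with the transition functions since $a^{-1}(u^{-1}L) = (ua)^{-1}L$. Hence $\varphi$ is a surjective DFA morphism. Since $\cN^D$ has no unreachable states and generates $L$, it is the minimal DFA for $L$ — i.e.\ isomorphic to $\cM$ — precisely when no two of its states share the same right language, which is exactly when $\varphi$ is injective. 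So $\cN^D \cong \cM$ iff $\varphi$ is injective iff $\bigl(u^{-1}L = v^{-1}L \Ra \post_u^{\cN}(I) = \post_v^{\cN}(I)\bigr)$, that is, iff $\mathord{\rrL} \subseteq \mathord{\rrN}$. Finally I would invoke $\mathord{\qrN} \subseteq \mathord{\qrL}$ of \citet{deLuca1994}, which gives $\mathord{\rrN} \subseteq \mathord{\rrL}$ unconditionally; combining the two inclusions yields $\cN^D \cong \cM$ iff $\mathord{\rrN} = \mathord{\rrL}$.

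The main obstacle will be the care needed around "dead"/sink states and the precise meaning of "minimal DFA": I would have to verify that $\varphi$ not only preserves but also reflects finality (so that, together with surjectivity, injectivity really upgrades $\varphi$ to an isomorphism rather than a mere quotient map), and to justify cleanly the step "surjective DFA morphism between two complete DFAs with no unreachable states, both recognizing $L$, that is injective $\Rightarrow$ isomorphism", equivalently that injectivity of the right-language map characterizes minimality. A secondary point worth spelling out — and the concrete witness for the "only if" direction — is that $\post_u^{\cN}(I)$ may be a nonempty set of empty states of $\cN$ while $u^{-1}L = \varnothing$; this is exactly the failure of injectivity that occurs when $\cN$ has empty states, matching $\mathord{\rrN} \subsetneq \mathord{\rrL}$.
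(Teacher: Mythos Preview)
The paper does not prove this lemma: it is stated in the Conclusions chapter as a result quoted from \citet{ganty2019congruence}, with no proof given. So there is no ``paper's own proof'' to compare against.

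That said, your proposal is correct and is exactly the natural argument. The map \(\varphi\colon \post_u^{\cN}(I)\mapsto u^{-1}L\) is well defined because \(\mathord{\rrN}\subseteq\mathord{\rrL}\) (which you derive from \(\mathord{\qrN}\subseteq\mathord{\qrL}\)), it is a surjective DFA morphism preserving and reflecting initiality and finality, and \(\cN^D\) is minimal precisely when \(\varphi\) is injective, i.e.\ when \(\mathord{\rrL}\subseteq\mathord{\rrN}\). Your identification of this as the congruence analogue of Lemma~\ref{lemma:qrlEqualqrNResEqualCan} is apt, and the caveat about sink states is the right one to flag: with the paper's definitions both \(\cN^D\) and \(\cM\) are complete (they may contain \(\varnothing\) as a state), so the morphism argument goes through without further case analysis.
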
 

On the other hand, since \citet{ganty2019congruence} showed that the left languages of the minimal DFA for a regular language are the blocks of the partition \(ρ_{\rrL}\), Lemma~\ref{lemma:congruenceDFA} can be equivalently stated as follows.

\begin{lemmaC}[\citet{ganty2019congruence}]\label{lemma:congruenceTamm}
Let \(\cN\) be an NFA and let \(\cM\) be the minimal DFA for \(\lang{\cN}\).
Then \(\cN^D\) is isomorphic to \(\cM\) if{}f the left language of each state of \(\cN\) is a union of left languages of states of the minimal DFA.
\end{lemmaC}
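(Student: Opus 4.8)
The plan is to derive this statement directly from Lemma~\ref{lemma:congruenceDFA} by reformulating its hypothesis in terms of left languages of the minimal DFA, exactly as the paragraph preceding the statement suggests. Concretely, Lemma~\ref{lemma:congruenceDFA} says that \(\cN^D\) is isomorphic to \(\cM\) if and only if \(\rho_{\rrL}(W_{I,q}^{\cN}) = W_{I,q}^{\cN}\) for every state \(q\) of \(\cN\), where \(\mathord{\rrL} \ud \mathord{\qrL} \cap \mathord{(\qrL)^{-1}}\) is the right Nerode congruence; so it suffices to show that, for a set \(S \subseteq \Sigma^*\), the equality \(\rho_{\rrL}(S) = S\) holds if and only if \(S\) is a union of left languages of states of \(\cM\).

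First I would identify the left languages of \(\cM\) with the equivalence classes of \(\rrL\). Since the minimal DFA is unique modulo isomorphism and an automaton isomorphism preserves the family of left languages, it is enough to argue with the canonical representative \(\cD = \tuple{Q^D,\Sigma,\delta^D,I^D,F^D}\) from the Background, whose states are the right quotients \(u^{-1}L\). In a deterministic automaton the left language of a state is precisely the set of words reaching it, so the left language of the state \(u^{-1}L\) is \(\{v \in \Sigma^* \mid v^{-1}L = u^{-1}L\}\); and by definition of \(\qrL\) one has \(v^{-1}L = u^{-1}L\) if and only if \(v \rrL u\), so this left language is the \(\rrL\)-class \(\rho_{\rrL}(u)\). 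As \(u\) ranges over \(\Sigma^*\), these classes exhaust the pairwise disjoint, non-empty blocks of the partition of \(\Sigma^*\) induced by \(\rrL\).

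Second I would use the elementary fact that, for any \(S \subseteq \Sigma^*\), \(\rho_{\rrL}(S) = \bigcup_{w \in S}\rho_{\rrL}(w)\) is the smallest \(\rrL\)-saturated superset of \(S\); hence \(\rho_{\rrL}(S) = S\) exactly when \(S\) is already a union of \(\rrL\)-classes, i.e.\ a union of left languages of states of \(\cM\). Feeding this equivalence into Lemma~\ref{lemma:congruenceDFA}, applied to each state \(q\) of \(\cN\), yields the claim. The only delicate point, and the part I would write out most carefully, is the first step: verifying that the left languages of the minimal DFA are exactly the \(\rrL\)-blocks. This rests on the right-quotient characterization of the minimal DFA, on left languages in a DFA being reachability sets, and on the quotient–congruence correspondence \(v^{-1}L = u^{-1}L \Lra v \rrL u\); none of these is difficult, but they must be assembled with care, and one must also check that the notion ``union of left languages of states of \(\cM\)'' is insensitive to the choice of minimal DFA.
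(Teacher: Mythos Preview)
Your proposal is correct and follows exactly the route the paper indicates: the paper simply states (citing \citet{ganty2019congruence}) that the left languages of the minimal DFA are the blocks of the partition \(\rho_{\rrL}\), and then observes that Lemma~\ref{lemma:congruenceDFA} can be equivalently restated as Lemma~\ref{lemma:congruenceTamm}. You have filled in the details of that identification carefully, but the logical structure is the same.
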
 

Therefore, Lemma~\ref{lemma:congruenceTamm} can be seen as the DFA-equivalent of Tamm's condition for RFAs (Lemma~\ref{lemma:tammGen}).
Therefore, Lemma~\ref{lemma:congruenceTamm} together with Lemma~\ref{lemma:congruenceDFA}, evidence the connection between the generalization of the double reversal for RFAs of \citet{tamm2015generalization} and the one for DFAs of \citet{Brzozowski2014}.

Finally, we further support the idea that quasiorders are natural to residual automata by observing that the NL\(^*\) algorithm proposed by \citet{bollig2009angluin} for learning RFAs can be interpreted within our framework as an algorithm that, at each step, refines an approximation of the Nerode's quasiorder and builds an RFA using our automata construction. 
\clearpage{}%

\newpage
\pagestyle{plain}
\chapter*{Funding Acknowledgments}
This research was partially supported by:
\begin{myItem}
\item The Spanish Ministry of Economy and Competitiveness project No.\ PGC2018-102210-B-I00.
\item The Spanish Ministry of Science and Innovation project No. TIN2015-71819-P.
\item The Madrid Regional Government project No. S2018/TCS-4339 .
\item The Madrid Regional Government project No. S2013/ICE-2731
\item The Ramón y Cajal fellowship RYC-2016-20281.
\item German Academic Exchange Service (DAAD) program ``Research Grants - Short-Term Grants 2018 (57378443)''.
\end{myItem}

\backmatter
\newpage
\pagestyle{plain}
\addcontentsline{toc}{chapter}{Bibliography}
\printindex

\end{document}